\definecolor{green}{rgb}{0,0.8,0} % Redefines the color green.
\newtheorem{theorem}{Theorem}[section]
\newtheorem{corollary}[theorem]{Corollary}
\newtheorem{lemma}[theorem]{Lemma}
\newtheorem{proposition}[theorem]{Proposition}
\theoremstyle{definition}
\newtheorem{definition}[theorem]{Definition}
\theoremstyle{remark}
\newtheorem{remark}[theorem]{Remark}
\theoremstyle{conjecture}
\numberwithin{equation}{section}
\newcommand{\nrm}[1]{\Vert#1\Vert}
\newcommand{\abs}[1]{\vert#1\vert}
\newcommand{\brk}[1]{\langle#1\rangle}
\newcommand{\set}[1]{\{#1\}}
\newcommand{\ep}{\epsilon}
\def\beaa{\begin{eqnarray*}}
\def\eeaa{\end{eqnarray*}}
\def\bea{\begin{eqnarray}}
\def\eea{\end{eqnarray}}
\def\be{\begin{equation}}
\def\ee{\end{equation}}
\newcommand{\ud}{\mathrm{d}}
\newcommand{\rd}{\partial}
\newcommand{\nb}{\nabla}
\newcommand{\alp}{\alpha}
\newcommand{\bt}{\beta}
\newcommand{\gmm}{\gamma}
\newcommand{\Gmm}{\Gamma}
\newcommand{\dlt}{\delta}
\newcommand{\Dlt}{\Delta}
\newcommand{\eps}{\epsilon}
\newcommand{\veps}{\varepsilon}
\newcommand{\kpp}{\kappa}
\newcommand{\lmb}{\lambda}
\newcommand{\Lmb}{\Lambda}
\newcommand{\sgm}{\sigma}
\newcommand{\Sgm}{\Sigma}
\newcommand{\Tht}{\Theta}
\newcommand{\vtht}{\vartheta}
\newcommand{\omg}{\omega}
\newcommand{\Omg}{\Omega}
\newcommand{\ups}{\upsilon}
\newcommand{\bfe}{{\bf e}}
\newcommand{\bfU}{{\bf U}}
\newcommand{\bfV}{{\bf V}}
\newcommand{\bbR}{\mathbb R}
\newcommand{\bbS}{\mathbb S}
\newcommand{\bbZ}{\mathbb Z}
\newcommand{\calA}{\mathcal A}
\newcommand{\calB}{\mathcal B}
\newcommand{\calC}{\mathcal C}
\newcommand{\calD}{\mathcal D}
\newcommand{\calE}{\mathcal E}
\newcommand{\calH}{\mathcal H}
\newcommand{\calI}{\mathcal I}
\newcommand{\calM}{\mathcal M}
\newcommand{\calN}{\mathcal N}
\newcommand{\calO}{\mathcal O}
\newcommand{\calQ}{\mathcal Q}
\newcommand{\calR}{\mathcal R}
\newcommand{\calX}{\mathcal X}
\newcommand{\PD}{\calQ}
\newcommand{\uC}{\underline{C}}
\newcommand{\dur}{\nu}
\newcommand{\dvr}{\lmb}
\newcommand{\f}{\frac}
\newcommand{\de}{\delta}
\newcommand{\nab}{\nabla}
\newcommand{\CH}{\calC \calH^{+}}		% Cauchy Horizon
\newcommand{\EH}{\calH^{+}}
\newcommand{\pfstep}[1]{\vspace{.5em} {\it \noindent #1.}}
\newcommand{\e}{\bfe}							% charge
\newcommand{\NI}{\calI^{+}}						% null infinity
\newcommand{\M}{\calM}							% underlying 4-manifold
\newcommand{\gbg}{\overline{g}}					% background  $g$
\newcommand{\Fbg}{\overline{F}}					% background  $F$
\newcommand{\ubg}{\overline{u}}					% $u$-coordinate normalized at $\NI$ of the background
\newcommand{\Ubg}{\overline{U}}					% inverse of $U \mapsto \ubg(U)$
\newcommand{\bfUbg}{\overline{\bfU}}				% normalized $U$ coordinate with respect to $r(\cdot, 1)$
\newcommand{\vbg}{\overline{v}}					% $v$-coordinate normalized at $\EH$ of the background
\newcommand{\Vbg}{\overline{V}}					% inverse of $V \mapsto \ubg(V)$
\newcommand{\bfVbg}{\overline{\bfV}}				% normalized $V$ coordinate with respect to $\kpp(1, \cdot)$
\newcommand{\Omgbg}{\overline{\Omg}}				% background $\Omg$
\newcommand{\phibg}{\overline{\phi}}				% background $\phi$
\newcommand{\psibg}{\overline{\psi}}				% background $\psi = r \phi$
\newcommand{\dvrbg}{\overline{\dvr}}				% background $\dvr$
\newcommand{\durbg}{\overline{\dur}}				% background $\dur$
\newcommand{\rbg}{\overline{r}}					% background $r$
\newcommand{\varpibg}{\overline{\varpi}}				% background  $\varpi$
\newcommand{\mubg}{\overline{\mu}}				% background  $\mu = \frac{2 \varpi}{r} - \frac{e^{2}}{r^{2}}$
\newcommand{\kppbg}{\overline{\kpp}}					% background $\kpp = \frac{\dvr}{1-\mu}$
\newcommand{\gmmbg}{\overline{\gmm}}					% background $\gmm = \frac{\dur}{1-\mu}$
\newcommand{\ebg}{\overline{\e}}					% background  $\e$
\newcommand{\Phibg}{\overline{\Phi}}				% background  $\Phi$
\newcommand{\fbg}{\overline{f}}
\newcommand{\hbg}{\overline{h}}
\newcommand{\ellbg}{\overline{\ell}}
\newcommand{\dphibg}{\overline{\dot{\phi}}}
\newcommand{\phidf}{\widetilde{\phi}}				% difference  $\phi - \phibg$
\newcommand{\psidf}{\widetilde{\psi}}				% difference  $\psi - \psibg$
\newcommand{\dvrdf}{\widetilde{\dvr}}				% difference  $\dvr - \dvrbg$
\newcommand{\durdf}{\widetilde{\dur}}				% difference  $\dur - \durbg$
\newcommand{\rdf}{\widetilde{r}}					% difference  $r - \rbg$
\newcommand{\varpidf}{\widetilde{\varpi}}				% difference  $varpi - \varpibg$
\newcommand{\Phidf}{\widetilde{\Phi}}				% difference  $\Phi - \Phibg$
\newcommand{\dvrd}{\underline{\dvr}}				% $\rd_{V} r$ restricted to $\Sgm_{0}$
\newcommand{\durd}{\underline{\dur}}				% $\rd_{U} r$ restricted to $\Sgm_{0}$
\newcommand{\alpd}{\underline{\alp}}				% $\rd_{V} \psi$ restricted to $\Sgm_{0}$
\newcommand{\btd}{\underline{\bt}}					% $\rd_{U} \psi$ restricted to $\Sgm_{0}$
\newcommand{\ub}{\underline{u}}
\newcommand{\Vb}{\underline{V}}
\newcommand{\Int}{\mathscr{B}}
\newcommand{\Ext}{\mathscr{E}}
\newcommand{\Abs}[1]{\left\vert #1 \right\vert}
\begin{document}

\title[]{Strong cosmic censorship in spherical symmetry for two-ended asymptotically flat initial data II. The exterior of the black hole region}%: Title of the article
\author{Jonathan Luk}
\address{Department of Mathematics, Stanford University, Palo Alto, CA, USA}
\email{jluk@stanford.edu}

\author{Sung-Jin Oh}
\address{Korea Institute for Advanced Study, Seoul, Korea}
\email{sjoh@kias.re.kr}

%\thanks{}%
%\subjclass{}%
%\keywords{}%

%\date{\today}%
%\dedicatory{}%
%\commby{}%
% ----------------------------------------------------------------
\begin{abstract}
This is the second and last paper of a two-part series in which we prove the $C^2$-formulation of the strong cosmic censorship conjecture for the Einstein--Maxwell--(real)--scalar--field system in spherical symmetry for two-ended asymptotically flat data. In the first paper, we showed that the maximal globally hyperbolic future development of an admissible asymptotially flat Cauchy initial data set is $C^2$-future-inextendible provided that an $L^2$-averaged (inverse) polynomial lower bound for the derivative of the scalar field holds along each horizon. In this paper, we show that this lower bound is indeed satisfied for solutions arising from a generic set of Cauchy initial data. Roughly speaking, the generic set is open with respect to a (weighted) $C^1$ topology and is dense with respect to a (weighted) $C^\infty$ topology. The proof of the theorem is based on extensions of the ideas in our previous work on the linear instability of Reissner--Nordstr\"om Cauchy horizon, as well as a new large data asymptotic stability result which gives good decay estimates for the difference of the radiation fields for small perturbations of an arbitrary solution.
\end{abstract}

\maketitle

\tableofcontents

\section{Introduction}\label{sec.intro}

This is the second (and last) paper of our series on proof of the strong cosmic censorship conjecture 
for the Einstein--Maxwell--(real)--scalar--field system in spherical symmetry for two-ended asymptotically flat initial data. More precisely, we study a quadruple $(\mathcal M,g,\phi,F)$, where $\mathcal M$ is a $4$-dimensional manifold, $g$ is a Lorentzian metric on $\mathcal M$, $\phi$ is a real-valued function on $\mathcal M$ and $F$ is a $2$-form on $\mathcal M$. The system of equations is given by
\begin{equation}\label{EMSFS}
\begin{cases}
Ric_{\mu\nu}-\f12 g_{\mu\nu} R=2(T^{(sf)}_{\mu\nu}+T^{(em)}_{\mu\nu}),\\
T^{(sf)}_{\mu\nu}=\rd_\mu\phi\rd_\nu\phi-\f 12 g_{\mu\nu} (g^{-1})^{\alp\beta}\rd_\alp\phi\rd_{\beta}\phi,\\
T^{(em)}_{\mu\nu}=(g^{-1})^{\alp\bt}F_{\mu\alp}F_{\nu\bt}-\f 14 g_{\mu\nu}(g^{-1})^{\alp\bt}(g^{-1})^{\gamma\sigma}F_{\alp\gamma}F_{\bt\sigma},
\end{cases}
\end{equation}
where $\phi$ and $F$ satisfy
$$\Box_g\phi=0,\,\quad dF=0,\,\quad (g^{-1})^{\alpha\mu}\nab_\alpha F_{\mu\nu}=0.$$
Here, $\Box_g$ and $\nab$ respectively denote the Laplace--Beltrami operator and the Levi--Civita connection associated to the metric $g$. We consider the class of solutions that arise from spherically symmetric data, so that the solutions are themselves spherically symmetric.

One of the most puzzling problems in general relativity can already be seen in the Reissner--Nordstr\"om spacetimes, which constitute an explicit two-parameter family of spherically symmetric, static solutions to \eqref{EMSFS}. When the charge does not vanish and the spacetime is subextremal, the maximal globally hyperbolic future developments of Reissner--Nordstr\"om data are extendible non-uniquely (!) as smooth solutions to \eqref{EMSFS}. This breakdown of uniqueness challenges the deterministic nature of Einstein's theory. A proposed resolution by Penrose, known under the name of the strong cosmic censorship conjecture, is that this phenomenon does \emph{not} occur for solutions to \eqref{EMSFS} arising from \emph{generic} initial data. Together with our companion paper \cite{LO.interior}, we complete in this paper the proof of this conjecture when the initial data are spherically symmetric and have two asymptotically flat ends. We state our theorem roughly as follows and refer the reader to \cite{LO.interior} for a more precise statement, in particular for the definition of the relevant topologies. For simplicity, we will only state the theorem for initial data with smooth and compactly supported initial scalar field: this is in fact the most difficult case; we will again refer the reader to \cite{LO.interior} for versions of the theorem in the larger class of initial data with polynomially decaying initial scalar field.
\begin{theorem}[$C^2$ formulation of strong cosmic censorship conjecture, rough version]\label{main.theorem.intro}
There exists a subset $\mathcal G$ of admissible smooth two-ended asymptotically flat spherically symmetric initial data with compactly supported scalar field such that the maximal globally hyperbolic future development of any element of $\mathcal G$ is $C^2$-future-inextendible. The set $\mathcal G$ satisfies the following genericity conditions:
\begin{enumerate}
\item The generic set $\mathcal G$ is open in a weighted $C^1$ topology,
\item The non-generic set $\mathcal G^c$ has co-dimension at least $1$ in a weighted $C^\infty$ topology. In particular, $\mathcal G$ is dense in a weighted $C^\infty$ topology.
\end{enumerate}
\end{theorem}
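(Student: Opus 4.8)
By the main theorem of the companion paper \cite{LO.interior}, it suffices to produce a set $\mathcal G$ with the asserted topological properties such that, for every solution arising from data in $\mathcal G$, the $L^2$-averaged inverse-polynomial lower bound for $\partial_v\phi$ holds along \emph{each} of the two event horizons $\mathcal H^+_1$, $\mathcal H^+_2$. The plan is to attach to each end a real-valued functional $\mathfrak L^\pm$ on the space of admissible two-ended asymptotically flat data — heuristically, the coefficient of the leading late-time term of $r\phi$ restricted to that horizon — and to set $\mathcal G:=\{\,\mathfrak L^+\neq 0\,\}\cap\{\,\mathfrak L^-\neq 0\,\}$. It then remains to prove: (i) if $\mathfrak L^\pm\neq 0$ then the $L^2$-averaged lower bound holds along $\mathcal H^\pm$, so that \cite{LO.interior} applies and data in $\mathcal G$ have $C^2$-future-inextendible maximal development; (ii) $\mathfrak L^\pm$ depends on the data in a $C^1$ fashion in the weighted $C^1$ topology, whence $\mathcal G$ is open; and (iii) $\{\mathfrak L^\pm=0\}$ has codimension $\geq 1$ in the weighted $C^\infty$ topology (in the sense of \cite{LO.interior}), whence $\mathcal G^c=\{\mathfrak L^+=0\}\cup\{\mathfrak L^-=0\}$ has codimension $\geq 1$ and $\mathcal G$ is dense.

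\textbf{The late-time expansion on the horizon and verification of (i).}
Fix one exterior region. I would first invoke the large-data global theory for the exterior to record that it settles down to a subextremal Reissner--Nordstr\"om exterior of definite final mass and charge, with the expected Price's-law decay for $\phi$, the Hawking mass, and the difference of the geometry from Reissner--Nordstr\"om. The new step is to upgrade this to a one-term asymptotic expansion along the horizon: extending the physical-space iteration from our work on the linear instability of the Reissner--Nordstr\"om Cauchy horizon, one obtains $(r\phi)|_{\mathcal H^\pm}(v)=\mathfrak L^\pm\, v^{-p}+(\text{error lower-order in the }L^2\text{-averaged sense})$, with $p$ (equal to $3$ for compactly supported data) the Price's-law exponent. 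Here $\mathfrak L^\pm$ is a Newman--Penrose-type constant of a ``time integral'' of the scalar field: equivalently a weighted integral of the scalar-field data, or of the radiation field along $\mathcal I^+$, against a profile built from the final Reissner--Nordstr\"om geometry, up to nonlinear corrections that are quadratically small in the relevant norms. Granting this expansion, (i) is immediate: if $\mathfrak L^\pm\neq 0$ then $|\partial_v(r\phi)|^2\gtrsim v^{-2p-2}$ on $\mathcal H^\pm$ in the averaged sense, whence $\int_v^{2v}|\partial_{v'}\phi|^2\,\ud v'\gtrsim v^{-2p-1}$, which is the lower bound postulated in \cite{LO.interior}.

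\textbf{Openness and nondegeneracy via large-data asymptotic stability.}
The engine for (ii) and (iii) is the new large-data asymptotic stability theorem: for the solution arising from an arbitrary admissible data set together with a small, constraint-compatible perturbation of its free (scalar-field) data, it supplies good decay for the \emph{difference} of the two solutions throughout the exterior, out to $\mathcal I^+$ and in to $\mathcal H^\pm$, with the difference of the horizon traces again obeying the one-term expansion above and with leading coefficient depending continuously, and to leading order linearly, on the perturbation. This gives the $C^1$-dependence of $\mathfrak L^\pm$ on the data in the weighted $C^1$ topology, hence the openness of $\mathcal G$. For (iii), I would compute the derivative of $\mathfrak L^\pm$ along a one-parameter family $\lambda\mapsto D_0+\lambda f$ of data, where $D_0$ is an arbitrary base data set with $\mathfrak L^\pm(D_0)=0$ and $f$ is a smooth compactly supported (constraint-compatible) perturbation of the scalar-field data: linearizing the system about the $D_0$-solution, this derivative is, to leading order, the Newman--Penrose-type functional of the \emph{linearized} flow evaluated on $f$ — a bounded linear functional whose integral kernel is a nonvanishing profile inherited from the final Reissner--Nordstr\"om state — and is therefore nonzero for a suitable (indeed generic) $f$. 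Hence $\lambda=0$ is a simple zero of $\lambda\mapsto\mathfrak L^\pm(D_0+\lambda f)$, the family leaves $\mathcal G^c$ for all small $\lambda\neq 0$, and $\{\mathfrak L^\pm=0\}$ — hence $\mathcal G^c$ — has codimension $\geq 1$; in particular $\mathcal G$ is dense.

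\textbf{Main obstacle.}
The heart of the matter, and the bulk of the work, is the large-data asymptotic stability statement: extracting \emph{precise} (leading-order, not merely rate) late-time asymptotics for the difference of two solutions of the coupled Einstein--Maxwell--scalar-field system in the exterior, for \emph{arbitrarily large} data, with no symmetry or spectral methods available and with the effective potential of the scalar wave equation itself solution-dependent. One has to propagate an $r^p$-weighted energy hierarchy together with time-integrated (Newman--Penrose) quantities adapted to the difference equations, control the nonlinear back-reaction on the geometry, and carry enough precision through the redshift region to pin down the coefficient on $\mathcal H^\pm$, uniformly in the (large) background solution and for both ends simultaneously. A secondary but nontrivial point is organizing the weighted function spaces and the spherically symmetric constraint structure so that the perturbations used in the nondegeneracy step are genuinely admissible and the derivative of $\mathfrak L^\pm$ is computed rigorously.
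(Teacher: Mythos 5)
Your high-level scaffolding matches the paper: attach a quantity $\mathfrak L$ to each end, set $\mathcal G = \{\mathfrak L \neq 0 \text{ on both ends}\}$, show (i) $\mathfrak L \neq 0$ gives the $L^2$-averaged lower bound needed for \cite{LO.interior}, (ii) $\mathfrak L$ is stable for openness, (iii) $\mathfrak L=0$ is unstable for codimension. But your implementation of (i) has a genuine gap that the rest of your plan inherits, and the paper goes out of its way to avoid it.

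You propose to \emph{define} $\mathfrak L^\pm$ as the leading late-time coefficient of $r\phi$ on the horizon and to prove a one-term expansion $(r\phi)|_{\mathcal H}(v)=\mathfrak L^\pm v^{-3}+o(v^{-3})$ (even only in an $L^2$-averaged sense). This is the Angelopoulos--Aretakis--Gajic route, and the paper explicitly discusses why it is not available nonlinearly: to pin down a leading $v^{-3}$ coefficient with a smaller error requires \emph{sharp} Price's-law decay, whereas the Dafermos--Rodnianski theorem only furnishes $v^{-3+\eta}$ decay (any $\eta>0$); and the AAG method leans on staticity of the background metric, which is lost in the coupled problem. Your remark that the expansion should follow ``by extending the physical-space iteration from our work on linear instability'' is not right: that linear paper \cite{LO.instab} proves no horizon expansion at all — it establishes the lower bound by contradiction. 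Instead, the paper defines $\mathfrak L$ as the integral $\int_{\mathcal I^+} 2 M \Phi \Gmm\,\ud u$ along null infinity, and proves the implication $\mathfrak L_{(\omg_0)\infty}\neq 0 \Rightarrow$ lower bound by a \emph{contradiction} argument (Theorem~\ref{thm:blowup}, \S\ref{sec:blowup}): if the lower bound fails, one propagates the resulting upper bound through a red-shift region, a bounded-geometry region and a near-infinity region to deduce $\mathfrak L_{(\omg_0)\infty}=0$. This route needs only the $v^{-3+\eta}$ decay already available.

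The same overreach appears in your stability step: you say the main obstacle is ``extracting precise (leading-order, not merely rate) late-time asymptotics for the difference of two solutions''. The paper needs, and proves, much less: an \emph{integrable} decay rate $|\Phi-\overline\Phi|(u)\lesssim u^{-\beta}\eps$ for some $\beta>1$ (Proposition~\ref{prop:psidf-decay}), which is morally at the level of an $L^1$-norm of the radiation field difference — far below its expected sharp decay. That is precisely what makes a large-data stability theorem tractable. Similarly, for codimension you want to linearize $\mathfrak L$ and exhibit a nonvanishing derivative functional. The paper avoids this: it constructs a one-parameter perturbation of the \emph{Cauchy} data compactly supported near a very large radius $R_{pert}$, computes its $O(\eps)$ contribution to $\mathfrak L$ explicitly in the near-Minkowski region (where geometry is essentially trivial), and bounds the remaining contribution by $O(\eps/R_{pert})$ via the stability theorem (Theorem~\ref{thm:instability}, \S\ref{sec:instability}). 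This sidesteps any claim about the derivative of $\mathfrak L$ against general perturbations or about large background solutions being ``close to linear''.

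In short: the gap is that your version of step (i) (and the resulting choice of topology on which $\mathfrak L$ is $C^1$) presupposes precise horizon asymptotics that are out of reach for large nonlinear data; the paper's decisive structural choice is to replace the one-term expansion by a soft contradiction argument, and to replace precise asymptotics of the difference by an integrable-rate bound, which together are both achievable and sufficient.
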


Our theorem builds on many remarkable works in the past three decades. In the chargeless ({\bf e}=0) case, a stronger result of $C^0$-future-inextendibility\footnote{It can be shown at least that it is $C^0$-future-inextendible within the class of spherical symmetric metrics. Without the symmetry assumption on the extension, only $C^2$-future-inextendibility is known \cite{Kommemi}, although the methods in \cite{Sbie.C0} are probably relevant for understanding $C^0$-future-inextendibility.} \footnote{It is also for the reason of Christodoulou's theorem that we did not specify that the charge is non-zero in the rough statement of Theorem~\ref{main.theorem.intro}, even though this paper mainly concerns that case.} follows as a consequence of the seminal series of works of Christodoulou on the weak cosmic censorship conjecture \cite{Chr.instab, Christodoulou:1991yfa}. In the case where the charge ${\bf e}\neq 0$, the situation is clearly more subtle as one at least has to understand the instability of Reissner--Nordstr\"om, which as we mentioned above is smoothly extendible. This problem was first studied in the works of Dafermos \cite{D1, D2}, who showed that in a neighborhood of Reissner--Nordstr\"om, the solutions exhibit both stability and instability features. From the point of view of the strong cosmic censorship conjecture, the most important (and perhaps somewhat surprising) consequence of the celebrated works of Dafermos \cite{D2} and Dafermos--Rodnianski \cite{DRPL} is that when ${\bf e}\neq 0$, $C^0$-future-inextendibility is \emph{always false}! Thus, in the formulation of the strong cosmic censorship conjecture, future-inextendibility can only hold in terms of a higher regularity. Indeed, in \cite{LO.interior} and this paper, we show\footnote{See also a remarkable conditional result of Dafermos \cite{D2}, which is to be discussed after the statement of Theorem~\ref{thm.main.thispaper.intro}.} in the main theorem that solutions arising from generic data are $C^2$-future-inextendible\footnote{For the purpose of our proof, $C^2$ is a useful regularity class as it allows us to discuss geodesics and curvature. On the other hand, from a PDE point of view, it is perhaps more natural to ask the question of extendibility in $W^{1,2}_{loc}$, which is at the threshold where one can define weak solutions. See discussions in \cite{Chr, D3, LO.interior}.}. We refer the reader to our companion paper \cite{LO.interior} and the references therein for further discussions on the motivation and history of the problem, as well as the contributions from the physics literature.

By results in \cite{Kommemi, D3}, the maximal globally hyperbolic future development of an arbitrary admissible initial data set is depicted by one of the two Penrose diagrams\footnote{For an introduction to Penrose diagrams, see \cite[Appendix~C]{DRPL}.} in Figure~\ref{fig:Kommemi.intro} (for the notation, we refer the reader to Theorem~\ref{thm:kommemi}). In particular, it can be divided into the black hole exterior region\footnote{Note that the black hole exterior region has two components $\Ext_{1}$ and $\Ext_{2}$ corresponding to the two asymptotically flat ends of the initial hypersurface.} $\Ext = \Ext_{1} \cup \Ext_{2}$ and the black hole interior region $\Int$. This paper is mainly focused on the black hole exterior region $\Ext$. In our companion paper \cite{LO.interior}, where we mainly treated the black hole interior region $\Int$, we proved the strong cosmic censorship conjecture \emph{assuming} that there exists a generic set of admissible initial data such that a polynomial lower bound on an $L^2$-averaged quantity holds along each of the event horizons. The results of \cite{LO.interior} can be rephrased and summarized as follows:

\begin{theorem}[Main result in \cite{LO.interior}, rough version]\label{main.result.interior}
Let $(\mathcal M, g, \phi, F)$ be the maximal globally hyperbolic future development of an admissible $2$-ended asymptotically flat smooth spherically symmetric Cauchy initial data set with smooth and compactly supported initial scalar field and with non-vanishing charge. Then $(\mathcal M, g)$ is depicted by one of the two Penrose diagrams in Figure~\ref{fig:Kommemi.intro}, and the solution approaches the event horizons of (potentially two different) Reissner--Nordstr\"om spacetime along $\EH_1$ and $\EH_2$. Assume that along $\EH_1$, with respect to an Eddington-Finkelstein type coordinate\footnote{This can be understood as requiring the normalization $\f{\rd_v r}{1-\mu}=1$ on $\EH_1$, cf. Theorem~\ref{thm:blowup}.} $v$, the following lower bound holds for some\footnote{$\alp'$ is restricted to this range when the initial scalar field is compactly supported, but not in the general case. For details, see \cite{LO.interior}.} $\alp'\in [3, 10)$:
\begin{equation}\label{interior.condition}
\int_1^\infty v^{\alp'}(\rd_v\phi\restriction_{\EH_1})^2(v)\, dv =\infty.
\end{equation}
Moreover, assume that a similar lower bound as \eqref{interior.condition} holds on $\EH_2$ (with potentially different $E$ and $\alp'$).
Then $(\mathcal M, g)$ is future-inextendible with a $C^2$ Lorentzian metric.
\end{theorem}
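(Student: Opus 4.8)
The plan is to reduce the assertion to a mass-inflation phenomenon in the black hole interior $\Int$, and then to a curvature argument. First I would assemble the structure theory already at hand. Kommemi's classification \cite{Kommemi} shows that the maximal globally hyperbolic future development is depicted by one of the two Penrose diagrams of Figure~\ref{fig:Kommemi.intro}; in particular it decomposes as $\Ext_1\cup\Ext_2\cup\Int$, and its future boundary includes an a priori possibly empty Cauchy horizon $\CH$ emanating from the timelike infinities $i^+_1,i^+_2$, plus possibly a spacelike portion along which $r\to 0$. Because $\mathbf e\neq 0$ and the data are two-ended and asymptotically flat, a trapped surface is present and null infinity is complete, so the exterior analyses of Dafermos and Dafermos--Rodnianski apply and yield the convergence claim of the theorem: the solution converges along each of $\EH_1,\EH_2$ to a subextremal Reissner--Nordstr\"om event horizon, and in an Eddington--Finkelstein-type gauge on $\EH_1$ one has the Price-law upper bound $|\partial_v\phi\restriction_{\EH_1}|(v)\lesssim v^{-3}$, with matching bounds on $r$, $1-\mu$, $\varpi$ (and likewise on $\EH_2$). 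By \cite{D2, DRPL}, $\CH$ is moreover non-empty and $g$ extends continuously across it, so $C^0$-future-inextendibility fails and it suffices to exclude a $C^2$ extension of $g$ across the future boundary. Near a spacelike portion where $r\to 0$ this is standard: $\varpi$ is non-decreasing towards the future in $\Int$, hence bounded below by a positive constant, so the Kretschmann scalar $R_{\alpha\beta\gamma\delta}R^{\alpha\beta\gamma\delta}$ blows up as $r\to 0$. The substance of the theorem is the behavior near $\CH$.

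Second, I would pose a characteristic initial value problem in $\Int$ with data on outgoing and incoming segments of $\EH_1$ and $\EH_2$, and carry out a quantitative stability analysis up to $\CH$. Writing \eqref{EMSFS} in double-null coordinates $(u,v)$ with unknowns $(r,\Omega^2,\varpi,\mathbf e,\phi)$, the equation $\Box_g\phi=0$ becomes $\partial_u\partial_v(r\phi)=(\textrm{lower-order terms})$, while the Raychaudhuri equations give $\partial_u(\Omega^{-2}\partial_u r)=-r\,\Omega^{-2}(\partial_u\phi)^2$ and its $v$-analogue. Using the Price-law decay of the characteristic data I would run a bootstrap/energy argument in each characteristic rectangle approaching $\CH$, propagating suitably weighted bounds on $\partial_v\phi$, $\partial_u\phi$, $\partial_v r$, $\partial_u r$ and on $\log\Omega^2$, and showing that $r$ extends continuously and remains bounded away from $0$ up to $\CH$ (away from the corner where its two components would meet). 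This is the Dafermos-type stability scheme, but it must now be performed for large data on the event horizons --- with no smallness parameter, using only polynomial decay in the two null directions.

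Third, and at the core of the theorem, I would establish mass inflation from the hypothesis \eqref{interior.condition}. Transporting the $L^2$-averaged lower bound along $\EH_1$ into $\Int$ and combining it with the two Raychaudhuri equations and the evolution equation for $\varpi$, one deduces that $\varpi\to+\infty$ along the component of $\CH$ emanating from $i^+_1$. Since $r$ is bounded away from $0$ near $\CH$ by the second step, the Kretschmann scalar $R_{\alpha\beta\gamma\delta}R^{\alpha\beta\gamma\delta}$ (which, modulo terms controlled by the second step, is bounded below by a positive multiple of $\varpi^2/r^6$) is unbounded in every neighborhood of every point of $\CH$, whereas a $C^2$ extension of $g$ --- not assumed spherically symmetric --- would force this scalar to be locally bounded near every boundary point. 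Hence no $C^2$ extension exists. The symmetric hypothesis on $\EH_2$ handles the other component of $\CH$, and the first two paragraphs account for the remaining boundary points, so $(\mathcal M,g)$ is $C^2$-future-inextendible.

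The main obstacle I anticipate is the large-data stability analysis of the second step: there is no perturbative smallness, only the sharp polynomial decay of the event-horizon data, and one must close a bootstrap that simultaneously keeps $r$ away from $0$ and propagates the decay of $\partial\phi$ all the way to $\CH$; controlling the null lapse $\Omega^2$ --- equivalently, the blueshift exponent --- uniformly up to $\CH$ is the delicate step, since the relevant effective surface gravity varies along $\CH$ and cannot be linearized away. A secondary difficulty is extracting the mass-inflation blow-up from the weak, $L^2$-averaged hypothesis \eqref{interior.condition} rather than from a pointwise lower bound on $\partial_v\phi\restriction_{\EH_1}$: this requires tracking carefully how the late-time energy along $\EH_1$ is transported and amplified inside $\Int$, and it is the reason the admissible exponent $\alp'$ is confined to the finite window dictated by the Price-law decay rate and by compatibility with the upper-bound estimates of the second step.
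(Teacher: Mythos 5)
This theorem is cited from the companion paper \cite{LO.interior} and is not proved in the present paper, so there is no local proof to compare against; but your outline correctly reproduces the architecture of the known argument: Kommemi's structure theory for the Penrose diagram, Dafermos--Rodnianski Price's law on the event horizons, $C^0$-stability of the black hole interior (non-triviality of $\CH$, boundedness of $r$ near it), mass inflation from hypothesis \eqref{interior.condition}, and $C^2$-inextendibility from curvature blowup. The step you compress into a single sentence --- deriving mass inflation from an $L^2$-averaged rather than pointwise lower bound --- is precisely the advance of \cite{LO.interior} over Dafermos's conditional result \cite{D2}, and the paper emphasizes that \eqref{interior.condition} was chosen so that this can be done \emph{without} sharp upper-bound estimates for $\partial_v\phi$ along $\EH$; your ``transport the lower bound into $\Int$ via Raychaudhuri and the $\varpi$-equation'' sketch glosses over exactly how that is accomplished, which is the technical heart of the interior paper and is also what fixes the admissible window of $\alp'$. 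A minor imprecision: near the spacelike $r\to 0$ portion of the boundary, the Kretschmann blowup is driven by $\e\neq 0$ (the $\e^4/r^8$ contribution) rather than by $\varpi$ being bounded below --- indeed $m=\varpi-\e^2/(2r)$ turns negative as $r\to 0$ --- though the conclusion is unaffected.
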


\begin{figure}[h]
\begin{center}
\def\svgwidth{400px}
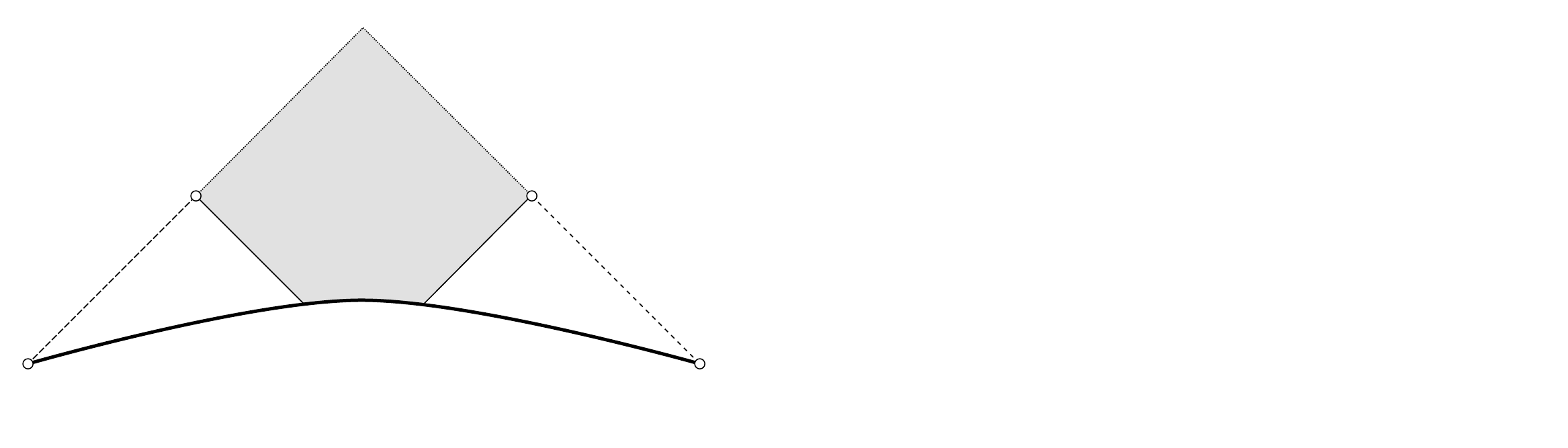 
\caption{The two possible Penrose diagrams of the maximal globally hyperbolic future development of an admissible initial data set.} \label{fig:Kommemi.intro}
\end{center}
\end{figure}

Given the above result, the remaining task, which is achieved in this paper, is to show that for a generic (in the sense of Theorem~\ref{main.theorem.intro}) set of initial data $\mathcal G$, \eqref{interior.condition} holds on each of the event horizons. In rough terms, our results in this paper can be summarized as follows:
\begin{theorem}[Main result in this paper, rough version]\label{thm.main.thispaper.intro}
There exists a subset $\mathcal G$ of future admissible smooth two-ended asymptotically flat spherically symmetric initial data with compactly supported scalar field such that the lower bound \eqref{interior.condition} holds on each of the event horizons for every\footnote{We remark that Price's heuristic (cf.~the linear case in Theorem~\ref{thm.AAG} below) suggests that \eqref{interior.condition} may hold for $\alp'=7$. However, this end-point case does not seem to follow from our arguments.} $\alp'>7$. Moreover, the set $\mathcal G$ satisfies the genericity assumptions (1) and (2) in Theorem~\ref{main.theorem.intro}.
\end{theorem}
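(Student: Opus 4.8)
The plan is to reduce Theorem~\ref{thm.main.thispaper.intro} to a \emph{precise} description of the late-time behaviour of $\rd_v\phi$ along each event horizon $\EH_i$, $i=1,2$. Concretely, for an arbitrary admissible solution I would aim to establish an expansion of the form
\begin{equation*}
\rd_v\phi\restriction_{\EH_i}(v) = \frac{\mathcal L_i}{v^{4}} + \mathrm{Err}_i(v),
\end{equation*}
where $\mathcal L_i = \mathcal L_i[\text{data}]$ is a real number depending on the Cauchy data --- through the limiting subextremal Reissner--Nordstr\"om parameters $(\varpi_i,\mathbf e_i)$ and a Newman--Penrose-type constant (equivalently, a scattering integral of the scalar-field radiation on $\NI_i$) --- and $\mathrm{Err}_i$ is of lower order in an $L^2$-averaged sense, say $\int_{V}^{2V}\mathrm{Err}_i(v)^2\,dv = o(V^{-7})$ as $V\to\infty$. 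Granting such an expansion, if $\mathcal L_i\neq 0$ then
\begin{equation*}
\int_{V}^{2V} v^{\alp'}\big(\rd_v\phi\restriction_{\EH_i}\big)^2(v)\,dv \gtrsim \mathcal L_i^2\, V^{\alp'-7}
\end{equation*}
for every $\alp'\geq 7$ and all sufficiently large $V$, so summing over dyadic scales $V=2^k$ forces the integral in \eqref{interior.condition} to diverge; in particular \eqref{interior.condition} holds for every $\alp'>7$. One then sets $\mathcal G = \{\,\text{admissible data}:\ \mathcal L_1\neq 0\ \text{and}\ \mathcal L_2\neq 0\,\}$, and the theorem is reduced to (i) the validity of the expansion above, and (ii) the genericity of $\mathcal G$.

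For (i), the starting point is the sharp decay theory in the exterior region: every admissible solution converges along each $\EH_i$ to the event horizon of a subextremal Reissner--Nordstr\"om exterior, with Price-law upper bounds for $\phi$, $\rd_v\phi$ and the relevant geometric quantities (building on \cite{DRPL} and the convergence statements used in \cite{LO.interior}). To upgrade these upper bounds to the asymptotic expansion with an identified coefficient, I would adapt the methods from our work on the linear instability of the Reissner--Nordstr\"om Cauchy horizon: propagate the sharp decay of the scalar-field radiation on $\NI_i$ inward to $\EH_i$ and isolate the leading backscattering contribution, which is controlled by the nonvanishing charge $\mathbf e_i$ together with the leading asymptotics of the radiation field. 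The new difficulty compared with the linear problem is that the background here is only \emph{asymptotically} Reissner--Nordstr\"om, so the clean hierarchy that produces the $v^{-4}$ tail is contaminated by the (decaying, but nonzero) deviation of the solution from Reissner--Nordstr\"om; this is exactly where the new large-data asymptotic stability result is needed, since its quantitative decay estimates for the \emph{difference} of the radiation fields of the given solution and suitable comparison solutions allow one to show that these corrections are genuinely lower order and do not affect $\mathcal L_i$.

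For (ii), openness in the weighted $C^1$ topology follows once one knows that $\mathcal L_i$ depends continuously on the Cauchy data in that topology: a small $C^1$ perturbation changes the limiting Reissner--Nordstr\"om parameters and the scattering integral defining $\mathcal L_i$ only slightly --- again a consequence of the difference estimates for the radiation fields --- so $\{\mathcal L_i\neq 0\}$, and hence $\mathcal G$, is open. For the codimension-$1$ (and hence density) statement, it suffices to produce, around any admissible datum, an explicit one-parameter family $s\mapsto(\text{data})_s$ of admissible perturbations --- for instance adding a small bump to the initial scalar field localized near the $i$-th asymptotically flat end --- for which $s\mapsto \mathcal L_i[(\text{data})_s]$ is real-analytic and not identically zero; its zero set is then discrete in $s$, and transversality of such families to $\{\mathcal L_i=0\}$ upgrades ``$\mathcal G^c$ is nowhere dense'' to ``$\mathcal G^c$ has codimension at least $1$''. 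Since a perturbation localized near one end can be arranged to leave the coefficient associated to the other end essentially unchanged (or one simply perturbs near both ends at once), the conditions $\mathcal L_1\neq 0$ and $\mathcal L_2\neq 0$ can be imposed independently, so $\mathcal G$ as defined is the desired generic set.

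I expect step (i) to be the main obstacle: producing the precise $v^{-4}$ tail with an explicitly identified, generically nonvanishing coefficient on a background that only converges to --- rather than equals --- Reissner--Nordstr\"om. Both the propagation of sharp decay for the fully nonlinear Einstein--Maxwell--scalar-field system in the exterior and the extraction of the leading backscattering coefficient require pushing the linear-instability analysis well beyond the exactly-Reissner--Nordstr\"om setting, and the ``difference of radiation fields'' stability estimate is the essential new ingredient that makes this feasible.
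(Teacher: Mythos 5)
Your plan for the genericity part (ii) — openness via continuity of the scattering quantity and nontrivial codimension via a one-parameter family of bump perturbations near the asymptotically flat end — matches the paper in spirit (it is carried out in Theorems~\ref{thm:L-stability} and \ref{thm:instability} using the stability theorem in $\calR_4$ and a perturbation of the reduced Cauchy data in $\Sgm_{0,pert}^+$). The one caveat is that the paper does not invoke real-analyticity of $\eps\mapsto\mathfrak{L}[\Tht_\eps]$: it instead shows directly that the perturbation contributes a term of size $\sim\eps$ near spatial infinity and then uses the quantitative stability result to show the remaining contribution is $o(\eps)$, so $\mathfrak{L}[\Tht_\eps]\neq 0$ for all $\eps\neq 0$ small without needing any analytic structure.

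The genuine gap is in step (i). You are aiming for a pointwise (or $L^2$-averaged) asymptotic \emph{expansion} $\rd_v\phi\restriction_{\EH}(v)=\mathcal L/v^4+o$ with an identified coefficient. That is the Angelopoulos--Aretakis--Gajic route (Theorem~\ref{thm.AAG}), and the paper explicitly declines to follow it in Section~\ref{linear.lower.bound.comments} because it requires (a) sharp Price-law bounds for derivatives of the scalar field, which are stronger than what the nonlinear theory of \cite{DRPL} supplies, and (b) heavy use of staticity of the background, which fails for a merely asymptotically Reissner--Nordstr\"om spacetime. Neither \cite{LO.instab} nor Theorem~\ref{thm:blowup} here produces such an expansion. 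What they prove is the much weaker contrapositive implication: \emph{if} $\int v^{\alp}(\rd_v\phi)^2\,\ud v<\infty$ on $\EH$, then the decay encoded by this finiteness can be propagated outward to the asymptotically flat region (Propositions~\ref{red.shift.prop}, \ref{contra.large.r}, \ref{prop.contra.unif}) until it contradicts the lower bound near null infinity coming from $\mathfrak{L}_{(\omg_0)\infty}\neq 0$ (Proposition~\ref{contra.lower}). Logically this gives a dichotomy, not an asymptotic; crucially it only needs the ``soft'' $O(u^{-3+\eta})$-type upper bounds from Price's law and none of the sharp lower-order structure. If you proceed as written, you would be attempting to prove a substantially stronger theorem than is needed, and the hard new inputs that would require (sharp nonlinear Price's law for derivatives, a non-static Newman--Penrose framework) are not provided by the stability theorem, which only controls $L^1$/$L^2$-level quantities of the radiation field difference. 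You should instead set up the contradiction argument from \cite{LO.instab}, assume the $L^2$-averaged lower bound fails, and run the decay estimates outward.
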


We will in fact not prove Theorem~\ref{thm.main.thispaper.intro} as stated, but will prove Theorems~\ref{thm:blowup}, \ref{thm:L-stability} and \ref{thm:instability} in Section~\ref{sec:main-thm}, which are stronger and together imply Theorem~\ref{thm.main.thispaper.intro}. These results are precisely what is assumed in \cite{LO.interior} to complete the proof of (more general versions of) Theorem~\ref{main.theorem.intro}. We again refer the reader to \cite{LO.interior} for details.

Prior to Theorem~\ref{main.result.interior}, a slightly weaker conditional result was known in the seminal work of Dafermos \cite{D2}. Namely, instead of the averaged lower bound \eqref{interior.condition}, \cite{D2} requires a \emph{pointwise} lower bound along the event horizon. While it is still an open problem to construct a \emph{single} regular solution obeying this pointwise lower bound, this pointwise lower bound conjecturally holds for solutions arising from generic initial data. This conjecture is in particular supported by the very recent result \cite{AAG} on a similar generic lower bound for solutions to the linear wave equation on fixed Reissner--Nordstr\"om background, which is obtained by first proving very precise upper bound estimates, see discussions in Section~\ref{sec.linear}. On the other hand, in this paper, we make use of the fact that the required lower bound in Theorem~\ref{main.result.interior} is much less precise than that in \cite{D2}, and it can be proved even without having precise upper bound estimates of the scalar field\footnote{In addition to this, when compared to \cite{D2}, Theorem~\ref{main.result.interior} also has the advantage that the $L^2$-type lower bound conjecturally holds for a larger class of models and the ideas in establishing Theorem~\ref{main.result.interior} may therefore be applicable in other settings. We refer the reader to \cite{LO.interior} for further discussions.}.

Theorem~\ref{thm.main.thispaper.intro} is proved by identifying a constant\footnote{The reader may notice that in the precise statements of the theorems, say, Theorem~\ref{thm:blowup}, the relevant quantity is called $\mathfrak L_{(\omg_0)\infty}$ instead of $\mathfrak L$. Nevertheless, according to \eqref{eq:adm-id-limits} and \eqref{eq:Linfty-def}, $\mathfrak L_{(\omg_0)\infty}=\mathfrak L$ when the scalar field is initially compactly supported.} $\mathfrak L$ associated to each connected component of future null infinity so that the non-vanishing of this constant implies the lower bound \eqref{interior.condition}. It is then shown that $\mathfrak L$ is generically non-vanishing. The constant $\mathfrak L$, which was introduced in \cite{LO1} for a different class of spacetimes, can be computed as an integral along null infinity involving the radiation field of the scalar field and the Bondi mass. Intuitively, it measures the leading order contribution of the incoming scalar field radiation along future null infinity. We note that $\mathfrak L$ cannot be computed directly from the initial data alone, but it nevertheless turns out to be a useful conceptual tool for proving Theorem~\ref{thm.main.thispaper.intro}. This is particularly the case for proving that the generic condition, characterized by $\mathfrak L\neq 0$, is stable: while in general it is difficult to directly prove that an asymptotic lower bound is stable under small perturbations, the identification of $\mathfrak L$ reduces this to the much more familiar task of proving upper bounds for the perturbations of the radiation field. 

We emphasize that in the above scheme, the lower bound along the event horizon is completely generated by back-scattering near the asymptotically flat end. In particular, the lower bound does \underline{not} occur due to some polynomial tail of the initial data. Rather, it is generated by the interaction between the scalar field and the geometry of the spacetime.

The proof of generic lower bounds using the quantity $\mathfrak L$ can be viewed as a nonlinear generalization of \cite{LO.instab}, where similar generic lower bounds were proven for the solutions to the \emph{linear} scalar wave equation on a \emph{fixed} subextremal Reissner--Nordstr\"om spacetime; see Section~\ref{sec.intro.LO.linear}. In order to apply the ideas in \cite{LO.instab}, we need an additional \emph{stability} result. In particular, the stability result quantifies the ``effect'' of small perturbations of a particular solution, which is important for showing that the generic set of initial data is open and dense in appropriate topologies. We will describe these ideas further in Section~\ref{sec.ingredients}, where we discuss the main ingredients of the proof.

The remainder of the introduction is organized as follow: In Section~\ref{sec.ingredients}, we will discuss some ingredients of the proof. In Sections~\ref{sec.previous.works} and \ref{sec.previous.lower.bound}, we will review some previous works related to mathematical analysis of the exterior regions of the black holes. In Section~\ref{sec.previous.works}, we discuss works related to the stability of the exterior region. In Section~\ref{sec.previous.lower.bound}, we discuss works on the lower bounds of solutions to the wave equation in the black hole exterior. Finally, in Section~\ref{outline}, we end the introduction with an outline of the remainder of the paper.

\subsection{Ingredients of the proof}\label{sec.ingredients}

In this subsection, we give a high-level overview of the structure of the proof of Theorem~\ref{thm.main.thispaper.intro}. First, our proof relies heavily on the following known analytic results:
\begin{enumerate}
\item (Dafermos--Rodnianski Price's law \cite{DRPL}) In a seminal work, Dafermos--Rodnianski \cite{DRPL} showed that for \emph{any} (potentially large) spherically symmetric solutions to \eqref{EMSFS} such that the event horizon asymptotes to a subextremal\footnote{Recall that this holds in our setting due to an observation of Kommemi, see Proposition~\ref{prop.subextremality}.} event horizon, then in an appropriately normalized double null coordinate system, the scalar field $\phi$, together with its first derivatives, decay with a quantitative inverse polynomial rate (see Section~\ref{sec:bg} for a precise statement). One can infer from this, as we show in this paper, that in an appropriately normalized gauge, appropriate geometric quantities also approach the corresponding values of a Reissner--Nordstr\"om spacetime with a quantitative rate.
\item (Linear instability \cite{LO.instab}) In \cite{LO.instab}, we proved that for the linear wave equation $\Box_{g_{RN}}\phi=0$ on a fixed Reissner--Nordstr\"om background, solutions arising from \emph{generic} smooth and compactly supported initial data fail to be in $W^{1,2}_{loc}$ at the Cauchy horizon. One key step of the proof in \cite{LO.instab} was to establish an $L^2$-averaged lower bound on the event horizon similar to that in \eqref{interior.condition} (see Theorem~\ref{linear.thm} for a precise statement of this linear result).
\end{enumerate}

The rough program is therefore to use the Price's law decay result to establish that at least after a sufficiently long time, the spacetime is sufficiently close to Reissner--Nordstr\"om and therefore one can apply (a suitable modification) of the linear instability argument in \cite{LO.instab} to show that the lower bound required in Theorem~\ref{main.result.interior} holds for solutions arising from generic data. 

The argument in \cite{LO.instab} is based on analysis of the quantity $\mathfrak L$ and it consists of two parts: namely, (1) show that $\mathfrak L\neq 0$ implies that the lower bound on the event horizon is achieved, and (2) show that $\mathfrak L\neq 0$ is a generic condition. In the nonlinear setting, using the Price's law decay (with appropriate extensions), part (1) can indeed be carried out. That is, it can be shown that $\mathfrak L\neq 0$ implies the desired lower bound. Moreover, this is achieved with a similar proof as in \cite{LO.instab} except that now we also have to handle some decaying error term. More precisely, we use a contradiction argument to show that if the lower bound on the event horizon is not satisfied, then one can propagate decay estimates which are sufficiently strong to contradict $\mathfrak L\neq 0$.

However, the genericity of $\mathfrak L\neq 0$ is much more subtle in the nonlinear setting. To begin with, in the linear setting, genericity amounts to proving (1) $\mathfrak L$ is bounded in terms of some initial norm of the scalar field, and (2) the existence of a \emph{single} solution such that $\mathfrak L\neq 0$. This immediately implies that the generic set is open\footnote{In fact, this statement is so obvious in the linear setting that it was not explicitly stated in \cite{LO.instab}!} and the non-generic set has co-dimension at least $1$. In the nonlinear setting, however, one at least needs to understand the perturbations of an arbitrary large data solution.

It is quite fortunate that \underline{both} the openness of the generic set and the nontrivial co-dimensionality of the non-generic set turn out to rely on the same main technical result, which is a stability statement for the quantity $\mathfrak L$ for perturbations of any (potentially large data) solution. To prove this stability result, we need to compare two nearby solutions and prove that their difference remains small and decays sufficiently fast. For this, it is important to carefully choose a future-normalized gauge (for each solution) in order to control the scalar field, the geometric quantities and their differences. To control the difference of the scalar fields, we rely on many ideas developed in recent years on linear wave equation on black hole spacetimes, in particular the Dafermos--Rodnianski $r^p$-weighted estimates \cite{DRNM}; see Section~\ref{sec.linear} for further discussions.

It is obvious that the stability result implies the openness of $\mathfrak L\neq 0$. However, it is a bit more subtle to see how the stability result is useful for showing that $\mathfrak L=0$ is unstable. A simpler form of the key idea is already used in \cite{LO.instab} to construct a solution with $\mathfrak L\neq 0$. Roughly speaking, given a solution with $\mathfrak L=0$, one constructs a one-parameter family of perturbations (parametrized by $\ep$) of such that the perturbation of the scalar field is compactly supported near the asymptotically flat end. The initial perturbation can be chosen such that 
\begin{enumerate}
\item it gives an $\ep$-perturbation of $\mathfrak L$ near spatial infinity, to the past of an outgoing null hypersurface $\{u=U_{pert}\}$;
\item the perturbation of the spacetime is essentially supported to the past of $\{u=U_{pert}\}$ in the sense that the deviation of the solution with the background on $\{u=U_{pert}\}$ is of size $\ll \ep$.
\end{enumerate}
See Figure~\ref{fig:L-inst-key} in Section~\ref{sec:instability} for a depiction of the relevant regions. The key point is that this can be achieved just by analyzing the region to the past of $\{u=U_{pert}\}$, without having to track the global dynamics of the perturbated spacetime. Nevertheless, once this is achieved, one can rely on the stability result of $\mathfrak L$ to conclude that the contribution to $\mathfrak L$ to the future of $\{u=U_{pert}\}$ is much smaller than the main term to the past of $\{u=U_{pert}\}$. Thus, in the perturbed spacetime, $\mathfrak L\sim \ep \neq 0$.

Let us recap by outlining the main steps of the proof. We include here a pointer to the more precise statements of these steps, which can be found in Section~\ref{sec:main-thm}, after the necessary notations are introduced.

\begin{enumerate}
\item \pfstep{Step 1, Theorem~\ref{thm:blowup}: Relating the condition on the the event horizon to $\mathfrak L\neq 0$} To derive this result, we prove the contrapositive: We assume that the $L^2$-averaged lower bound fails and this can be viewed as an upper bound which we can then propagated all the way up to null infinity and show that we must have $\mathfrak L=0$. This argument, which was introduced in \cite{LO.instab}, is very robust and the decay estimates in \cite{DRPL} is more than sufficient to show that this also applies in the nonlinear setting.
\item \pfstep{Step 2, Theorem~\ref{thm:L-stability}: Stability theorem in the exterior region} This step is the technical heart of this paper. We prove that for every fixed solution arising from admissible data, $\epsilon$-size perturbations of the initial data lead to $C\epsilon$-size perturbations (with appropriate weights) of the radiation field along future null infinity, with $C$ \emph{depending only on the fixed solution}. 
\item \pfstep{Step 3: Openness of $\mathfrak L\neq 0$} Once the stability theorem is proved, the openness of $\mathfrak L\neq 0$ follows as an immediate corollary.
\item \pfstep{Step 4, Theorem~\ref{thm:instability}: Instability of $\mathfrak L=0$} To show the instability of the condition $\mathfrak L=0$, we construct a one-parameter family of perturbations of the background data such that the difference of the scalar fields behaves in a similar manner as for the linear case \cite{LO.instab}. As described in the paragraph above, we construct perturbations so that the main term near spatial infinity can be well-controlled and such that the remaining contribution can be shown to be small error terms using Step 2 above.
\end{enumerate}

\subsection{Previous works on the stability of black hole exteriors}\label{sec.previous.works}

We will not review all the relevant literature regarding the general study of the strong cosmic censorship conjecture for asymptotically flat initial data in spherical symmetry or the instability of Cauchy horizons. For our particular problem at hand, the most relevant previous results by Dafermos \cite{D1, D2, D3}, Dafermos--Rodnianski \cite{DRPL} and Kommemi \cite{KomThe, Kommemi} have already been discussed in other parts of this paper. For a further discussion of the relevant literature, we refer the reader to our companion paper \cite{LO.interior}.

Instead, in this and the next subsections, we will review some works related to the analysis of the black hole \emph{exterior} region. In this subsection, we in particular focus on stability results, in the spherically symmetric or in the linear setting. We will also take the opportunity to point out the relevance of these works to our present paper.

\subsubsection{Stability of black hole exterior regions in spherically symmetric problems}

In spherical symmetry, the study of black hole exterior regions is much simpler than outside spherical symmetry since the area radius function is strictly positive and the (monotonic) Hawking mass is coercive and subcritical. In particular, for a large class of physical matter models including \eqref{EMSFS}, Dafermos showed that in the presence of a trapped surface, null infinity is complete \cite{DafTrapped}. Roughly speaking, this implies that even for large data, as long as the presence of a trapped surface is guaranteed, the solution has a ``regularly behaved'' exterior region. Using moreover the monotonicity of the Hawking mass, it can be easily shown that the explicit Schwarzschild, Reissner--Nordstr\"om black hole exteriors are orbitally stable. Similar results also hold for the vacuum equations in $(4+1)$-dimensions under triaxial Bianchi-IX symmetry \cite{DafHol}.

Even in spherical symmetry, the question of asymptotic stability is considerably harder than that of orbital stability. In the setting\footnote{See also \cite{Hol.biaxial} for a closely related problem on the asymptotic stability of the Schwarzschild---Tangherlini family of solutions for the vacuum equations in $(4+1)$-dimensions under triaxial Bianchi-IX symmetry.} of \eqref{EMSFS}, since (by a generalization of Birkhoff's theorem) all spherically symmetric asymptotically flat solutions in the absence of the scalar field are given by Reissner--Nordstr\"om, asymptotic stability of Reissner--Nordstr\"om can be captured by the decay rate of the scalar field. This was achieved in the remarkable work of Dafermos--Rodnianski \cite{DRPL}, who proved inverse polynomial pointwise decay upper bound estimates for the scalar field, which are consistent with the decay rate suggested by the heuristic study of Price \cite{Price}. Even more remarkably, the result in \cite{DRPL} holds for \emph{large data} solutions as long as a trapped surface is present! See Section~\ref{subsec:DR-full} for the precise statement of the result in \cite{DRPL} and further discussions of its consequences. As mentioned before, in the context of this paper, the result in \cite{DRPL} is an important starting point of the analysis, since it gives very strong asymptotic control of any solution arising from admissible initial data. 

Finally, let us note that in the analysis of this paper, we also prove an asymptotic stability result, for perturbations around \emph{any} fixed (potentially large data) solution. We show that the difference of the scalar fields between the perturbed solution and the background solution decays. However, in contrast to \cite{DRPL}, in our setting there are no gauge-invariant ways to measure the difference between two solutions with non-vanishing scalar fields, and the proof therefore requires a careful choice of gauges normalized in the future for each solution. In fact, one can easily convince oneself that for some unwise choices of coordinate system, the stability result no longer holds.

\subsubsection{Boundedness and decay estimates for solutions to the linear wave equation in black hole exterior regions}\label{sec.linear}
 
Another line of works which is relevant to our approach in this paper concerns the study of the linear scalar wave equation
\begin{equation}\label{linear.wave}
\Box_g\phi=0
\end{equation}
in the black hole exterior region with a fixed background metric. This problem has received a lot of attention in the past decade, see \cite{Are11a, Are12, AB, BSt, DRS, DRK, DRL, DRNM, DRSR, DHR, MMTT, MTT, Ta, TT} and the references therein for a sample of results. A culmination of these results is a theorem of Dafermos--Rodnianski--Shlapentokh-Rothman, which proves that for all subextremal Kerr spacetimes, solutions to \eqref{linear.wave} (arising from sufficiently regular initial data) obey both energy decay and pointwise decay estimates.

A reason that the study of \eqref{linear.wave} is relevant to our paper is that (as mentioned in Section~\ref{sec.ingredients}) a crucial ingredient of the proof is to show that small perturbations of \emph{any} (potentially large) data lead to globally small perturbations of the solution. In order to prove such a result, one needs to control the long time behavior of the difference of two solutions. The main contribution in the equations for the difference comes from the linear terms, which can be controlled using methods from the study of \eqref{linear.wave}.

A particularly relevant technique is the Dafermos--Rodnianski $r^p$-weighted estimates \cite{DRNM} (see also recent extensions by Moschidis \cite{Mos.rp}). This is a physical space technique of proving decay estimates on a very general class of asymptotically flat spacetimes, and it has been previously implemented for nonlinear applications \cite{Yang}. This method plays a crucial role in one of the steps in the stability argument. We will defer a more detailed discussion to Section~\ref{sec:L-stability}.

\subsection{Previous works on lower bounds for solutions to the linear wave equation in black hole exteriors}\label{sec.previous.lower.bound}

Perhaps even more relevant to the present paper than the stability results in the previous subsection are theorems on lower bounds for solutions to linear wave equations. Those results should be thought of as the analogues of the lower bound \eqref{interior.condition} (which is proven in Theorem~\ref{thm.main.thispaper.intro}) but in a linear setting. There are in the literature three different approaches to obtain such lower bounds, which we will review in the next few subsubsections. This in particular includes the work \cite{LO.instab}, to be described in Section~\ref{sec.intro.LO.linear}, which forms part of basis of the proof of Theorem~\ref{thm.main.thispaper.intro}. After reviewing the relevant literature, in Section~\ref{linear.lower.bound.comments}, we will comment on the choice in this paper to attack the nonlinear problem using the approach of \cite{LO.instab}.

\subsubsection{Lower bound via time-translation symmetry}

The simplest result that can be proven regarding lower bounds for solutions to linear wave equations would be one such that \emph{an inverse polynomial lower bound is imposed in the initial data}. Though this is not explicitly available in the literature, using the methods of Dafermos--Shlapentokh-Rothman \cite{DafShl} (see also McNamara \cite{McN}), it can be proven by using the time-translation symmetry of the spacetime and the scattering theory developed in \cite{DRSR} that 

\begin{theorem}[Dafermos--Shlapentokh-Rothman]\label{thm.McN}
Generic smooth and spherically symmetric initial data with a (fixed but otherwise arbitrarily fast) inverse polynomial rate to the linear wave equation $\Box_{g_{RN}}\phi=0$ on a fixed Reissner--Nordstr\"om spacetime with parameters satisfying $0<|{\bf e}|<M$ give rise to solutions $\phi$ which obey the following lower bound along the event horizon:
$$\int_{\EH} v^{\alp} (\rd_{v} \phi)^{2} \, \ud v = \infty$$
for some sufficiently large $\alp$, where $v$ is the Eddington--Finkelstein coordinate defined by $v=\f 12(t+r^*)$ with $r^*=r+(M+\frac{2M^2-{\bf e}^2}{2\sqrt{M^2-{\bf e}^2}})\log (r-r_+) +(M-\frac{2M^2-{\bf e}^2}{2\sqrt{M^2-{\bf e}^2}})\log (r-r_-)$.
\end{theorem}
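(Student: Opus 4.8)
The plan is to combine the scattering theory for the linear wave equation on a fixed subextremal Reissner--Nordstr\"om exterior --- which can be extracted from \cite{DRSR}, or developed directly since in spherical symmetry the equation reduces to a $1+1$-dimensional wave equation with a short-range potential, using the time-translation Killing field $\partial_t$ as in \cite{DafShl, McN} --- with a soft Baire category argument exploiting linearity. Fix the polynomial decay rate: let $\calX$ be the Fr\'echet space of smooth spherically symmetric data $(\phi_0,\phi_1)$ on $\Sigma_0$ with $|\partial^j\phi_0| + r\,|\partial^j\phi_1| \lesssim_j r^{-k}$ for all $j\ge 0$, $k$ a fixed (large) integer; since $\calX$ is Fr\'echet it is a Baire space, so ``generic'' means residual. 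I claim it suffices to produce, for some single $\alpha$ depending on $k$, a residual subset $\calG\subseteq\calX$ on which $\int_{\EH} v^\alpha(\partial_v\phi)^2\,\ud v=\infty$; this is the assertion of the theorem for that $\alpha$.

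\emph{Step 1: reduction to a single example.} Let $L\colon\calX\to L^2(\EH,\ud v)$ be the linear solution map $L(\phi_0,\phi_1)=\partial_v\phi\restriction_{\EH}$; it is continuous, and it maps into $L^2(\ud v)$ because the degenerate $\partial_t$-energy flux through $\EH$ is bounded by the conserved energy of the data. For $\alpha>0$ put $V_\alpha:=\{g\in L^2(\EH,\ud v):\int_{\EH} v^\alpha g^2\,\ud v<\infty\}$ and $\calB_\alpha:=L^{-1}(V_\alpha)$. As $L$ is linear and $V_\alpha$ is a linear subspace, $\calB_\alpha$ is a linear subspace of $\calX$, and $\calB_\alpha=\bigcup_{N\in\bbN}C_N$ with $C_N:=\{d\in\calX:\int_{\EH} v^\alpha(Ld)^2\,\ud v\le N\}$, each $C_N$ closed (continuity of $L$ and Fatou) and convex. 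If there exists $d_*\in\calX$ with $Ld_*\notin V_\alpha$, then no $C_N$ can have an interior point: if $d_0$ were interior to $C_N$, then $d_0+\ep d_*\in C_N\subseteq\calB_\alpha$ for all small $\ep>0$, forcing $Ld_*=\ep^{-1}(L(d_0+\ep d_*)-Ld_0)\in V_\alpha$ (as $d_0,\,d_0+\ep d_*\in\calB_\alpha$ and $V_\alpha$ is a subspace) --- a contradiction. Thus each $C_N$ is nowhere dense, $\calB_\alpha$ is meager, and $\calG:=\calX\setminus\calB_\alpha$ is residual. So everything reduces to exhibiting such a $d_*$.

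\emph{Step 2: construction of $d_*$ by backward scattering.} Here one uses that the forward map sending Cauchy data on $\Sigma_0$ to the radiation fields $(\phi\restriction_{\EH},\,(r\phi)\restriction_{\NI})$ is an isomorphism of the relevant energy spaces, and --- crucially --- that it restricts to smooth, weighted data, proved by commuting the equation with $\partial_t$ and with $r^p$-weighted multipliers and running the energy estimates backward in time, which is exactly the machinery of \cite{DRSR, DafShl}. Prescribe $\underline\phi_{\EH}(v):=\chi(v)\,v^{-k}$ on $\EH$, with $\chi$ smooth, $\chi\equiv1$ on $\{v\ge2\}$ and $\chi\equiv0$ on $\{v\le1\}$, and $(r\phi)\restriction_{\NI}:=0$, and let $d_*=(\phi_0^*,\phi_1^*)$ be the data on $\Sigma_0$ obtained by solving backward. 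By the backward-regularity statement $d_*\in\calX$: near spatial infinity the solution is transported along ingoing rays $v=\mathrm{const}$ from the values of $\underline\phi_{\EH}$ at $v\sim r^*\sim r$, so that $r\phi\lesssim r^{-k}$ there, while the outgoing contribution vanishes since $(r\phi)\restriction_{\NI}=0$; higher derivatives are handled by commutation. On the other hand $Ld_*=\partial_v\underline\phi_{\EH}$, so $\int_{\EH}v^\alpha(Ld_*)^2\,\ud v\sim\int_2^\infty v^{\alpha-2k-2}\,\ud v=\infty$ once $\alpha\ge 2k+1$; taking $\alpha=2k+2$ finishes the construction, hence the proof.

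\emph{Main difficulty.} The soft part is Step 1; the real work is the backward-regularity claim in Step 2 --- that backward evolution of the explicitly prescribed (smooth but only polynomially decaying) final data produces data on $\Sigma_0$ that is genuinely smooth and has the prescribed decay at spatial infinity. This requires verifying the compatibility conditions at $i^+$ and near $i^0$ and propagating weighted higher-order energies backward, i.e.\ the scattering-theoretic analysis of \cite{DRSR} together with the techniques of \cite{DafShl, McN}. Once that input is available, the time-translation symmetry enters only through the construction of the radiation fields (as limits under the flow of $\partial_t$), and the genericity statement is then purely formal.
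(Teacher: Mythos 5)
Your proposal takes a genuinely different route from the one the paper sketches. The paper's construction (attributed to Dafermos--Shlapentokh-Rothman, and the point of the subsection title ``Lower bound via time-translation symmetry'') starts from a \emph{single} solution whose data on $\calI^{-}$ is smooth and \emph{compactly supported} and whose restriction to $\EH$ is nontrivial; if that single solution does not already produce the divergent integral, one superposes time-translated and rescaled copies, choosing the rescaling weights to decay polynomially so that the superposed data on $\calI^{-}$ --- and consequently on $\Sigma_{0}$ --- carry an inverse-polynomial tail of whatever rate is wanted. You instead drop the superposition entirely: you prescribe the polynomial tail directly on $\EH$, impose $(r\phi)\restriction_{\NI} = 0$, and solve one backward Goursat problem. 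Your Step~1 is correct and usefully makes explicit (via the closed, convex, nowhere-dense decomposition of $L^{-1}(V_{\alpha})$) the passage from one counterexample to a residual set, a step the paper leaves implicit.

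The concern is with the assertion in Step~2 that $d_{*} \in \calX$, i.e.\ that the backward-solved Cauchy data decays like $r^{-k}$. The ``transported along ingoing rays'' heuristic does not address the main danger. Writing $\psi = r\phi$ and using that $\psi\restriction_{\NI} = 0$ and $\psi \to 0$ at $i^{+}$, the backward Goursat representation is
\begin{equation*}
\psi(u_{p}, v_{p}) = \psi\restriction_{\EH}(v_{p}) + \iint_{\{u \ge u_{p},\, v \ge v_{p}\}} \rd_{u}\rd_{v}\psi\,\ud u\,\ud v ,
\end{equation*}
where the integrand is the long-range $O(r^{-3})$ potential times $\psi$ over an unbounded region. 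It is not a priori clear that this bulk integral decays like $v_{p}^{-k}$ rather than at some fixed backscattering rate (a Price-type tail near $i^{0}$) --- and if the latter, $d_{*}$ would fall out of $\calX$ for large $k$ and the construction would collapse. The paper's superposition is engineered precisely to sidestep this: each building block has compactly supported radiation data and a definite backscattering tail, and the polynomial decay of the superposed Cauchy data is governed by the freely chosen weights $\lambda_{n}$ (tunable to any $k$), while the nondegenerate $\EH$ flux of each copy is guaranteed by the initial choice. To close your Step~2 one would need a \emph{backward} scattering statement in polynomially weighted, smooth classes --- strictly more than the degenerate-energy isomorphism of \cite{DRSR} plus $\rd_{t}$-commutation --- and this is where a genuine gap remains.
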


The idea underlying this result is very simple: Consider a smooth and compactly supported initial data set on past null infinity such that the solution is non-trivial on the future event horizon. Such an initial data set exists by considerations in \cite{DRSR}. If this solution does not already obey the estimate in Theorem~\ref{thm.McN}, then one can sum a series of solutions arising from time-translating and rescaling the initial data on past null infinity. This new solution will then satisfy the estimate in Theorem~\ref{thm.McN}. Notice that this construction requires the possibility that the data on past null infinity only decay polynomially, which as a consequence implies that the data on a Cauchy hypersurface also only decay polynomially. 

The advantage of this approach, as we see above, is that it requires very little information about the specific form of the metric. Thus, the result generalizes easily to Kerr and Kerr-Newman spacetimes. On the other hand, the solution that is constructed has a very special profile. Though in the linear setting, this is already sufficient to deduce genericity of solutions with a lower bound on the event horizon, it is difficult even to study stability properties of these solutions.

\subsubsection{Lower bound via a contradiction argument in the exterior region}\label{sec.intro.LO.linear}

A second approach of establishing lower bounds for solutions to the linear wave equation on Reissner--Nordstr\"om was given as a part of our previous work \cite{LO.instab} on the linear instability of the Reissner--Nordstr\"om Cauchy horizon. We will only state the result for spherically symmetric solutions (in particular in view of their special relevance to the present paper), but it of course follows easily that the following bound holds for the spherical mean of any (potentially non-spherically symmetric) solutions.

\begin{theorem}[Luk--Oh \cite{LO.instab}]\label{linear.thm}
Generic smooth, compactly supported and spherically symmetric initial data to the linear wave equation $\Box_{g_{RN}}\phi=0$ on a fixed Reissner--Nordstr\"om spacetime with parameters satisfying $0<|{\bf e}|<M$ give rise to solutions $\phi$ which obey the following lower bound along the event horizon:
$$\int_{\EH} v^{\alp} (\rd_{v} \phi)^{2} \, \ud v = \infty$$
for any $\alp>7$, where $v$ is as in Theorem~\ref{thm.McN}.
\end{theorem}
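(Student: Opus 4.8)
\emph{Proof strategy.} I would follow the scheme organised around the constant $\mathfrak L$ of \cite{LO1}. In the present linear problem on a fixed subextremal Reissner--Nordstr\"om background, $\mathfrak L$ is a renormalised integral along future null infinity $\NI$ of the radiation field $\Phi(u) := \lim_{v\to\infty}(r\phi)(u,v)$, weighted by the Bondi mass, which here is the constant $M$; one may think of $\mathfrak L$ as a renormalised Newman--Penrose-type constant governing the leading late-time tail. The theorem then reduces to two assertions: \textbf{(A)} if $\mathfrak L \ne 0$, then $\int_{\EH}v^{\alpha}(\rd_v\phi)^2\,\ud v = \infty$ for every $\alpha > 7$; and \textbf{(B)} the set of data in the stated class with $\mathfrak L \ne 0$ is generic in the sense claimed. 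For the set-up I would work in a double-null gauge and pass to $\psi := r\phi$, which solves a $1{+}1$-dimensional equation $\rd_u\rd_v\psi = -V\psi$ with the Reissner--Nordstr\"om potential $V \sim (1-\mu)\tfrac{2M}{r^3}$, where $1-\mu = 1 - \tfrac{2M}{r} + \tfrac{\e^2}{r^2}$; this $V$ is $\gtrsim r^{-3}$ near $\NI$ and decays exponentially, in the Eddington--Finkelstein coordinate, towards $\EH$. I would also record the transport and flux identities for $\rd_v\psi$ and $\rd_u\psi$ along the characteristics that make the definition of $\mathfrak L$ meaningful and tie it to the behaviour near $i^+$.

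\emph{Assertion (A).} I would prove the contrapositive. Assuming $\int_{\EH}v^{\alpha}(\rd_v\phi)^2\,\ud v < \infty$ for some admissible $\alpha$, I read this as an averaged upper bound on $\EH$; integrating in $v$ and invoking the Dafermos--Rodnianski redshift estimate upgrades it to pointwise decay of $\phi$ and of $\rd_v\phi$, $\rd_u\phi$ in a neighbourhood of $\EH$. This decay is then propagated outward to $\NI$ by combining the Dafermos--Rodnianski $r^p$-weighted energy hierarchy \cite{DRNM} with transport estimates for $\rd_u\rd_v\psi = -V\psi$, exploiting the $r^{-3}$ structure of $V$. The outcome is quantitative decay of $\Phi$ and of its relevant $\rd_u$-transforms along $\NI$ towards $i^+$, which --- since $\mathfrak L$ is precisely the coefficient of the leading tail read off from $\NI$ --- forces $\mathfrak L = 0$, the desired contrapositive. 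The threshold $\alpha > 7$ is pinned by the Price's-law rate: $\mathfrak L \ne 0$ produces $r\phi\restriction_{\EH} \sim c\,v^{-3}$, hence $\rd_v\phi\restriction_{\EH} \sim v^{-4}$ in a dyadic-averaged sense, and $\int v^{\alpha}v^{-8}\,\ud v$ diverges exactly for $\alpha > 7$ (with a logarithmic correction accounting for the strict inequality).

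\emph{Assertion (B).} In the linear setting the genericity of $\{\mathfrak L \ne 0\}$ reduces, as in \cite{LO.instab}, to two points. First, \emph{boundedness}: the very estimates used for (A) exhibit $\mathfrak L$ as a bounded linear functional of the Cauchy data in the relevant weighted $C^1$ topology, so $\{\mathfrak L \ne 0\}$ is open. Second, \emph{non-triviality}: it suffices to produce one smooth, compactly supported, spherically symmetric datum with $\mathfrak L \ne 0$. I would do this by a first Born iterate, $\psi = \psi_{\mathrm{free}} + \psi_1$ with $\psi_1$ obtained from Duhamel against $-V\psi_{\mathrm{free}}$, and compute $\mathfrak L$ to leading order as an explicit integral of $V$ against the free solution, which is manifestly nonzero for a generic choice of the free profile; alternatively one can use the scattering construction of \cite{DRSR} to produce a solution that is nontrivial on $\EH$ and evaluate $\mathfrak L$ directly. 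Boundedness and non-triviality together say that $\{\mathfrak L = 0\}$ is a proper closed hyperplane in the data space, so its complement is open, dense, and of codimension at least $1$, which is the claimed genericity.

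\emph{Main obstacle.} The crux is the outward propagation in Assertion (A): one must transport decay from $\EH$ through the redshift region and the large-$r$ $r^p$-hierarchy all the way to $\NI$ while bookkeeping the weights precisely enough that the decay delivered at $\NI$ is exactly what is needed to annihilate $\mathfrak L$, and so that the numerology closes for \emph{all} $\alpha > 7$ rather than only for $\alpha$ very large (the latter being the softer conclusion of the time-translation argument behind Theorem~\ref{thm.McN}). This interplay of the redshift effect, the $r^p$-estimates, and the $r^{-3}$ Reissner--Nordstr\"om potential is the technical heart of the argument.
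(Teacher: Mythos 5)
Your proposal correctly identifies the two-part structure that underlies Theorem~\ref{linear.thm} and which the present paper explicitly describes (see Sections~\ref{sec.ingredients} and \ref{sec.intro.LO.linear}): introduce the quantity $\mathfrak L$ along $\NI$, prove Assertion~(A) by a contrapositive/contradiction argument propagating an assumed averaged bound on $\EH$ out to $\NI$ to deduce $\mathfrak L = 0$, and prove Assertion~(B) by combining boundedness of the linear functional $\mathfrak L$ with the existence of a single datum with $\mathfrak L\neq 0$. This matches the paper's account of \cite{LO.instab} in broad strokes.

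Two points deserve a closer look, though. First, on the propagation step in Assertion~(A): you propose to use the Dafermos--Rodnianski $r^{p}$-weighted hierarchy, but that hierarchy delivers the \emph{generic} (Price's-law) decay rate regardless of what is assumed on $\EH$; it does not by itself feel the hypothesis $\int_{\EH}v^{\alpha}(\rd_v\phi)^2\,\ud v<\infty$ that must be exploited to conclude $\mathfrak L = 0$. The paper's nonlinear analogue (Section~\ref{sec:blowup}, Steps~1--4, Propositions~\ref{red.shift.prop}, \ref{contra.large.r}, \ref{prop.contra.unif}, \ref{contra.lower}) instead runs a redshift estimate near $\EH$, a Gr\"onwall ``sideway'' energy estimate to any fixed $\{r=\Lmb r_{\EH}\}$, and then integration along characteristics in the $(u,r)$ variables to push the conditionally improved decay up to the $\{r=R_{\NI}\}$ curve. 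The $r^{p}$-weighted method appears only in the separate \emph{stability} argument (Section~\ref{sec:L-stability}), not in the contrapositive. So ``$r^{p}$-hierarchy plus transport'' is the wrong split of labour: it is the transport/characteristics part that carries the conditional improvement, and using only that is both simpler and correct. Second, your heuristic for the threshold $\alpha>7$ via ``$\mathfrak L \ne 0$ produces $r\phi\restriction_{\EH}\sim c\,v^{-3}$, hence $\rd_v\phi\restriction_{\EH}\sim v^{-4}$'' overstates what the argument delivers. The contradiction approach does \emph{not} establish sharp pointwise asymptotics on $\EH$; the paper stresses this is precisely its advantage in robustness over the Angelopoulos--Aretakis--Gajic approach (Theorem~\ref{thm.AAG}). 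The threshold $\alpha>7$ emerges from the interplay between the lower bound $|\rd_v(r\phi)|\gtrsim v^{-3}$ on a constant-$r$ curve (if $\mathfrak L\neq 0$) and the upper bound propagated from $\EH$ (which gives $\int v^{5}(\rd_v(r\phi))^{2}\,\ud v<\infty$ only for $\alpha>7$), never via pointwise asymptotics at the horizon. Your Born-iterate suggestion for Assertion~(B) is a legitimate alternative to the ``outgoing compactly supported perturbation near spatial infinity'' construction the paper describes (and generalises in Section~\ref{sec:instability}); both work and both amount to evaluating the leading contribution to $\mathfrak L$ in a large-$r$ near-Minkowski region.
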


As discussed in Section~\ref{sec.ingredients}, Theorem~\ref{linear.thm} is proven by identifying a quantity $\mathfrak L$ at infinity which is generically nonvanishing and such that $\mathfrak L\neq 0$ implies the desired lower bound. In addition to the fact that Theorem~\ref{linear.thm} now proves a lower bound even for compactly supported initial data, the approach of Theorem~\ref{linear.thm} has the additional advantage that $\mathfrak L\neq 0$ is a \emph{stable} property.

The proof of Theorem~\ref{linear.thm} proceeds by a contradiction argument: it is shown that if $\int_{\EH} v^{\alp} (\rd_{v} \phi)^{2} \, \ud v$ is finite, then one can obtain strong enough decay estimates in the exterior region which are inconsistent with $\mathfrak L\neq 0$. As we have already indicated earlier, this theorem (or more precisely its proof) will play an important role in the present paper. This is in part due to the robustness of the proof, see Section~\ref{linear.lower.bound.comments}. On the other hand, partly because the proof is based on a contradiction argument, the result is relatively weak. For instance, it does not give any lower bounds away from the event horizon.

\subsubsection{Lower bound via precise late-time asymptotics}

A third approach of obtaining lower bound has very recently been introduced, which moreover gives a good description of the leading order asymptotics. In particular, it is shown that a pointwise lower bound holds true everywhere in the black hole exterior:
\begin{theorem}[Angelopoulos--Aretakis--Gajic \cite{AAG}]\label{thm.AAG}
Generic smooth and compactly supported initial data to the linear wave equation $\Box_{g_{RN}}\phi=0$ on a fixed Reissner--Nordstr\"om spacetime with parameters satisfying $0<|{\bf e}|<M$ give rise to solutions $\phi$ which obey the following \emph{pointwise} lower bound
$$|\rd_t^k \phi|(u,v) \geq c_k v^{-3-k} \qquad (c_{k} > 0) $$
in the region $\{r_+\leq r\leq R\}$ for any finite but fixed $R>r_+$ and $k \geq 1$, where $v$ and $r_+$ are as in Theorem~\ref{thm.McN}.
\end{theorem}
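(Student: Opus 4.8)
The plan is to follow the physical-space method of Angelopoulos--Aretakis--Gajic, which extracts the precise leading-order late-time behaviour of the solution rather than merely an upper bound. Since the data are spherically symmetric, only the $\ell = 0$ mode is present; writing $\psi = r\phi$ in a regular double-null gauge $(u,v)$ covering the exterior, the equation $\Box_{g_{RN}}\phi = 0$ becomes the $1+1$ wave equation $\partial_u\partial_v\psi = -\frac14 V\psi$, where the potential $V = (1-\mu)\,r^{-1}\,\partial_r(1-\mu)$ is nonnegative on $\{r \geq r_+\}$ in the subextremal range $0 < |{\bf e}| < M$ and decays like $2M r^{-3}$ as $r \to \infty$.

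First I would set up the relevant conserved and almost-conserved quantities at null infinity. One checks that the Newman--Penrose constant $I_0[\psi] := \lim_{v\to\infty} r^2 \partial_v \psi(u,v)$ exists and is independent of $u$ (integrate $\partial_u(r^2\partial_v\psi) = -r(1-\mu)\partial_v\psi - \frac14 r^2 V\psi = O(r^{-1})$ in $u$ at fixed large $v$, using $\partial_v \psi = O(r^{-2})$), and that $I_0[\psi] = 0$ for compactly supported data (choose $u$ so negative that $C_u$ lies entirely outside the causal future of the data, so that $\psi \equiv 0$ there). The leading behaviour is therefore governed by the next object in the hierarchy: one constructs the time integral $\psi^{(1)}$ --- the solution of the same equation, with regular data, for which $\partial_t \psi^{(1)} = \psi$ --- and sets $I_0^{(1)} := I_0[\psi^{(1)}]$. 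This is a bounded \emph{linear} functional of the Cauchy data of $\phi$; the generic set is $\mathcal G = \{ I_0^{(1)} \neq 0 \}$, whose complement is the hyperplane $\ker I_0^{(1)}$, and $I_0^{(1)} \not\equiv 0$ since one can exhibit a single admissible datum with $I_0^{(1)} \neq 0$ (for instance via the scattering construction of \cite{DRSR}).

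The analytic heart is to show that $I_0^{(1)} \neq 0$ forces the asserted pointwise decay. I would first run the Dafermos--Rodnianski $r^p$-weighted energy hierarchy, commuted with the Killing field $T = \partial_t$ and with the $r^p$-multipliers and combined with the near-horizon redshift, to obtain baseline quantitative decay: $|r\phi| \lesssim u^{-2}$ along $\NI$ and $|\phi| \lesssim v^{-3}$ on $\{ r_+ \leq r \leq R \} \cup \EH$, with one extra power of decay for each application of $T$. Then, bootstrapping off these bounds, I would integrate the wave equation along characteristics to upgrade the upper bounds to an asymptotic expansion: near $\NI$ one obtains $r\phi(u,v) = I_0^{(1)} (\mathfrak c\, u^{-2} + o(u^{-2}))$ with $\mathfrak c \neq 0$ explicit, by integrating the transport equation for $r^2 \partial_v \psi$ inward from $\NI$ while tracking the $O(r^{-1})$ contribution of the potential; one then propagates this radiation-field profile along the characteristic flow toward the horizon, where the uniform redshift of subextremal Reissner--Nordstr\"om converts the $u^{-2}$ tail into the rate $v^{-3}$, yielding $\phi(u,v) = I_0^{(1)} \, \mathfrak p(r) \, v^{-3}(1 + o(1))$ throughout $\{ r_+ \leq r \leq R \}$ with $\mathfrak p$ a fixed nowhere-vanishing function. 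Finally, since $T$ is Killing and the $(k+1)$-st time integral of $\partial_t^k \phi$ is again $\psi^{(1)}/r$ (so its Newman--Penrose-type constant is still $I_0^{(1)} \neq 0$), commuting the entire argument with $T$ shows the expansion is term-by-term differentiable and gives $|\partial_t^k \phi|(u,v) = |I_0^{(1)}| \, |\mathfrak p_k(r)| \, v^{-3-k}(1 + o(1)) \geq c_k v^{-3-k}$ on $\{ r_+ \leq r \leq R \}$ for every $k \geq 1$.

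I expect the main obstacle to be this last step, specifically the extraction of the \emph{sharp} leading coefficient rather than a mere upper bound. This forces one to control precisely the backscattering off the $O(r^{-3})$ potential, to handle the logarithmic corrections generated near spatial infinity $i^0$ (where $r^\ast - r \sim c \log r$, which a priori injects $\log v$ factors that must be shown not to reach the leading order), and to carry out the delicate matching between the wave zone near $\NI$, where the natural expansion variable is the retarded time $u$, and the near-horizon region, where the advanced time $v$ and the redshift govern the behaviour. Subextremality enters essentially throughout: it supplies the uniform redshift and the control of the potential up to $r_+$ that are needed to close these estimates without loss of derivatives or decay.
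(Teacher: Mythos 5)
The paper does not prove Theorem~\ref{thm.AAG}: it is a cited result of Angelopoulos--Aretakis--Gajic \cite{AAG}, included for comparison with the approach of the present work, and the only commentary the paper offers is the one-paragraph summary in Section~\ref{sec.previous.lower.bound}. There is therefore no internal proof to compare your attempt against. That said, your reconstruction tracks the actual structure of the AAG argument, and it matches the paper's own characterization of it --- namely, sharp decay via an extension of the $r^p$-weighted hierarchy of Dafermos--Rodnianski, combined with the conservation of the Newman--Penrose constant and the time-integral construction. The key points you identify --- that $I_{0}[\psi]$ vanishes for compactly supported data so that the leading behaviour is governed by $I_{0}^{(1)} := I_{0}[\psi^{(1)}]$ for a regular time integral $\psi^{(1)}$; that $\{I_{0}^{(1)} \neq 0\}$ is the generic set; and that the matching of the wave-zone expansion to the redshift-dominated near-horizon region converts the $u^{-2}$ radiation field into a $v^{-3}$ tail at bounded $r$ --- are exactly the ingredients.

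Two places where your account glosses over real work in \cite{AAG}. First, the existence, regularity and decay of the time integral $\psi^{(1)}$ is not automatic: its Cauchy data has a noncompactly-supported $r^{-1}$-type tail, so you cannot reuse the compact-support baseline estimates off the shelf, and the logarithmic corrections near $i^{0}$ that you flag are precisely injected through $\psi^{(1)}$, not through $\phi$ itself. Second, the step ``the $(k+1)$-st time integral of $\partial_{t}^{k}\phi$ is again $\psi^{(1)}/r$'' is loose: what one really shows is that the entire asymptotic expansion is $\partial_{t}$-differentiable term by term, so that each $\partial_{t}$ trades a factor of $v^{-1}$ while preserving the overall nonvanishing coefficient proportional to $I_{0}^{(1)}$; that is a statement about the expansion, not an identity relating time integrals of different orders. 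Neither of these is a fatal gap --- both are handled in \cite{AAG} --- but they are where the detailed estimates live.

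One minor mismatch with the statement as reproduced in the paper: Theorem~\ref{thm.AAG} (unlike Theorems~\ref{thm.McN} and \ref{linear.thm}) does not restrict to spherically symmetric data, so to justify reducing to the $\ell = 0$ mode as you do in your first paragraph you would need the companion upper bounds showing that the $\ell \geq 1$ contributions decay strictly faster (rate $v^{-2\ell - 3}$ or better) and hence do not affect the leading order, which is again part of the $r^p$-hierarchy output in \cite{AAG}.
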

The bounds in Theorem~\ref{thm.AAG} imply a fortiori that $\rd_v\phi$ (on the event horizon $\rd_v$ is parallel to $\rd_t$) obeys the estimates on the event horizon in Theorems~\ref{thm.McN} and \ref{linear.thm}. Combining with the ideas in \cite{D2}, the results in \cite{AAG} also give an alternative proof of the linear instability result of the Reissner--Nordstr\"om Cauchy horizon in \cite{LO.instab}. This result combines sharp upper bounds of the solutions together with a clever application of the conservation of the Newman--Penrose quantities. In particular, it requires very precise upper bounds of the solutions, which are in turn derived using an extension of the Dafermos--Rodnianski $r^p$-weighted estimates. 

\subsubsection{Comments on the nonlinear problem}\label{linear.lower.bound.comments}

In view of the three philosophically distinct approaches to obtaining lower bounds, one can in principle attempt to obtain instability results for the nonlinear problem using any of these approaches. In this subsection, we explain our choice of using ideas in \cite{LO.instab} to tackle the nonlinear problem.

First, simply because Theorem~\ref{thm.McN} constructs solutions to the linear wave equation with an $L^2$-averaged lower bound, Theorem~\ref{thm.McN} could in principle\footnote{Here, let us suppress the additional complication that \eqref{interior.condition} is in fact needed for a specific range of $\alp$.} yield a result on the existence of perturbations of Reissner--Nordstr\"om data such that \eqref{interior.condition} holds. This then implies the \emph{instability} of Reissner--Nordstr\"om Cauchy horizon. However, it is not immediately clear from the approach of Theorem~\ref{thm.McN} whether the $L^2$-average polynomial lower bound is a stable property. In particular, it would be more difficult to use this to prove a \emph{genericity} statement or even to obtain an instability result for (potentially existing) large data solutions with a $C^2$-regular Cauchy horizon.

In contrast, at least in the linear setting, the other two approaches both prove that the lower bound is a stable property. The approach of Theorem~\ref{thm.AAG} moreover has the advantage that the type of pointwise lower bound for $|\rd_t\phi|$ along the event horizon has a long history in the study of black hole interiors. In fact, that type of lower bound has long been assumed in the study of the instability of the Cauchy horizon, beginning with the pioneering works \cite{Hiscock, PI1, PI2} on the Einstein--null dust model. For \eqref{EMSFS} in spherical symmetry, Dafermos showed in \cite{D2} that a pointwise lower bound for $|\rd_t\phi|$ along the event horizon as in Theorem~\ref{thm.AAG} imply that the Hawking mass blows up at the Cauchy horizon and thus the spacetime is $C^2$-future-inextendible. It would therefore be an interesting problem to prove this pointwise lower bound in the nonlinear setting for generic Cauchy data, completing the program of Dafermos.

However, even in the linear setting, the proof of Theorem~\ref{thm.AAG} requires first proving precise upper bounds. This in particular include the sharp Price's law bounds \cite{Price} for the derivatives of the scalar field, and these are stronger than the known estimates for the nonlinear problem. Moreover, the method in \cite{AAG} requires quite heavily on the staticity of the spacetime metric, which would require some extensions in order to be applicable to the nonlinear problem.

In contrast, the proof of Theorem~\ref{linear.thm} is very robust, and has the advantage that the $L^2$-averaged lower bound on the event horizon can be obtained \emph{without proving sharp decay upper bound estimates}. To see this, recall from Section~\ref{sec.ingredients} that we need to show that (1) $\mathfrak L\neq 0$ implies the desired $L^2$-averaged lower bound and (2) $\mathfrak L\neq 0$ is generic. For (1), the ideas in \cite{LO.interior} in fact need very little geometric information, see discussions in Section~\ref{sec:blowup}. In particular, the decay results in \cite{DRPL} are more than sufficient for this purpose. While the proof of (2) is considerably more involved, it should be noted that statements about $\mathfrak L$ are morally at the level of the $L^1$ norm of the radiation field along future null infinity, which is again weaker than expected sharp decay of the radiation field\footnote{More precisely, it is known \cite{DRPL} that the radiation field $\Phi$ (cf. Definition~\ref{def.L}) decays like $|\Phi|(u) = O(u^{-2})$, while in our proof we only need to show that under small perturbations, the difference of the radiation fields decays with a rate $= O(\eps u^{-\bt})$ for some $\bt > 1$ where $\eps$ is the size of the initial data difference.}.

\subsection{Outline of the paper}\label{outline}
The remainder of the paper will be organized as follows:
\begin{itemize}
\item {\bf Section~\ref{sec.SS}.} We first introduce the notions of spherically symmetric solutions to the Einstein--Maxwell--(real)--scalar--field system. We then write down the symmetry-reduced equation and discuss the initial value problem in spherical symmetry.
\item {\bf Section~\ref{sec.review}.} We review various notions introduced in \cite{LO.interior}. These include the class of data we consider and some preliminary results regarding the maximal globally hyperbolic future development.
\item {\bf Section~\ref{sec:main-thm}.} We give precise statements of the three main theorems we prove in this paper. This can be viewed as the precise formulation of Theorem~\ref{thm.main.thispaper.intro} earlier in the introduction.
\item {\bf Section~\ref{sec:bg}.} We give (decay) estimates in the exterior region in the maximal globally hyperbolic future development arising from an admissible initial data set. These estimates rely heavily on the Price's law decay estimates of Dafermos--Rodnianski \cite{DRPL} (which we will recall).
\item {\bf Section~\ref{sec:blowup}.} This section contains the proof Theorem~\ref{thm:blowup}, which is the statement that if\footnote{$\mathfrak L_{(\omg_0)\infty}$ was defined in \cite{LO.interior} and the definition will be recalled in Definition~\ref{def.L} in Section~\ref{sec.review}. To relate this to the discussion so far, note that $\mathfrak L_{(\omg_0)\infty}=\mathfrak L$ for initial data with compactly supported scalar field} $\mathfrak L_{(\omg_0)\infty}\neq 0$, then a desired $L^2$-averaged lower bound holds along the event horizon.
\item {\bf Section~\ref{sec:extr}.} In this section, we begin the proof of the stability theorem (Theorem~\ref{thm:L-stability}, cf. Step~2 in Section~\ref{sec.ingredients}). We focus here first on three slightly ``easier'' regions, namely a compact region (which is treated using Cauchy stability), a neighborhood of spatial infinity (cf. Figure~\ref{fig:extr-st-cauchy} in Section~\ref{subsec:extr-st}), and a neighborhood of null infinity with finite retarded time range (cf. Figure~\ref{fig:extr-st} in Section~\ref{subsec:extr-st}).
\item {\bf Section~\ref{sec:L-stability}.} This is the technical heart of the paper, in which we treat the most difficult region for the stability theorem (Theorem~\ref{thm:L-stability}).
\item {\bf Section~\ref{sec:instability}.} In this final section, we prove Theorem~\ref{thm:instability}, which is the instability statement for $\mathfrak L=0$ (cf. Step~4 in Section~\ref{sec.ingredients}).
\item {\bf Appendix~A.} To assist the reader, an index of frequently used symbols is given in the appendix.
\end{itemize}

\subsection*{Acknowledgments} The authors thank Mihalis Dafermos for many stimulating discussions and for helpful comments on the manuscript. Much of this work was carried out while J. Luk was at Cambridge University and S.-J. Oh was at UC Berkeley. S.-J. Oh thanks Cambridge University for hospitality during several visits. The authors also thank the Chinese University of Hong Kong for hospitality while some of this work was pursued. 

J. Luk is supported in part by a Terman Fellowship. S.-J. Oh was supported by the Miller Research Fellowship from the Miller Institute, UC Berkeley and the TJ Park Science Fellowship from the POSCO TJ Park Foundation.

\section{Einstein--Maxwell--(real)--scalar--field system in spherical symmetry}\label{sec.SS}
\subsection{Equations in double null coordinates in spherical symmetry}
We begin with a precise definition of a spherically symmetric solution to the Einstein--Maxwell--(real)--scalar--field system.
\begin{definition}[Spherically symmetric solutions]\label{def.SS}
Let $(\mathcal M,g,\phi,F)$ be a suitably regular solution to the Einstein--Maxwell--(real)--scalar--field system \eqref{EMSFS}. We say that $(\mathcal M,g,\phi,F)$ is \emph{spherically symmetric} if the following properties hold:
\begin{enumerate}
\item The symmetry group $SO(3)$ acts on $(\mathcal M,g)$ by isometry with spacelike orbits.
\item The metric $g$ on $\mathcal M$ is given by
\begin{equation}\label{SS.metric.1}
g=g_{\mathcal Q}+r^2 d\sigma_{\mathbb S^2},
\end{equation}
where
\begin{equation}\label{SS.metric.2}
g_{\mathcal Q}=-\f{\Omg^2 }{2}(du\otimes dv+dv\otimes du)
\end{equation}
is a Lorentzian metric on the $2$-dimensional manifold $\mathcal Q=\mathcal M/SO(3)$ and $r$ is defined to be the area radius function of the group orbit, i.e.,
$$r=\sqrt{\f{\mbox{Area}({\boldsymbol \pi}^{-1}(p))}{4\pi}},$$
for every $p\in \mathcal Q$, where ${\boldsymbol \pi}$ is natural projection ${\boldsymbol \pi}:\mathcal M\to \mathcal Q$ taking a point to the group orbit it belongs to. Here, as in the introduction, $d\sigma_{\mathbb S^2}$ denotes the standard round metric on $\mathbb S^2$ with radius $1$.
\item The function $\phi$ at a point $x$ depends only on ${\boldsymbol \pi}(x)$, i.e., for $p\in \mathcal Q$ and $x,y\in {\boldsymbol \pi}^{-1}(p)$, it holds that $\phi(x)=\phi(y)$.
\item The Maxwell field $F$ is invariant under pullback by the action (by isometry) of $SO(3)$ on $\calM$.
%The Maxwell field $F$ at a point $x$ depends only on ${\boldsymbol \pi}(x)$. 
Moreover, there exists ${\bf e}:\mathcal Q\to \mathbb R$ such that
$$F=\f{\bfe}{2({\boldsymbol \pi}^* r)^2}{\boldsymbol \pi}^*(\Omg^2\,du\wedge dv).$$
\end{enumerate}
\end{definition}

\begin{remark}[Gauge freedom]
For the metric $g_{\mathcal Q}$ taking the form \eqref{SS.metric.2}, there remains a gauge freedom in choosing the null coordinates. More precisely, one can change variables
\begin{equation*}
	(u, v) \mapsto (u', v') = (u'(u), v'(v))
\end{equation*}
where $u'$ [resp. $v'$] is a strictly monotonic function of $u$ [resp. $u'$]. It will be important in this paper to exploit this freedom and perform the analysis in appropriately chosen gauges. In particular, we will consider more than one system of double null coordinate system in this paper. 
\end{remark}

Given a double null coordinate system, it will be convenient to use the following notation:
\begin{definition}[Null hypersurfaces $C_u$ and $\uC_v$]\label{def:C.uC}
Given a double null coordinate system $(u,v)$ in $\calQ$, we denote
$$C_u=\{(u',v')\in \calQ: u'=u \},\quad \uC_v=\{(u',v')\in \calQ: v'=v \}.$$
\end{definition}

Given the above setup, we now describe the Einstein--Maxwell--(real)--scalar--field system in spherical symmetry, written in terms of the variables and coordinate system introduced in Definition~\ref{def.SS}. First, it is easy to verify that as a consequence of the Maxwell's equations,
\begin{equation}
{\bf e}\mbox{ is a \underline{constant}.}
\end{equation}
The Einstein--Maxwell--(real)--scalar--field system in spherical symmetry then reduces to three wave equations for $r, \phi, \Omg$:
\begin{equation} \label{eq:EMSF-wave}
\left\{
\begin{aligned}
\rd_{u} \rd_{v} r = & - \frac{\Omg^{2}}{4 r} - \frac{\rd_{u} r \rd_{v} r}{r} + \frac{\Omg^{2} \e^{2}}{4 r^{3}}, \\
\rd_{u} \rd_{v} \phi = & - \frac{\rd_{v} r \rd_{u} \phi}{r} - \frac{\rd_{u} r \rd_{v} \phi}{r}, \\
\rd_{u} \rd_{v} \log \Omg = & - \rd_{u} \phi \rd_{v} \phi - \frac{\Omg^{2} \e^{2}}{2 r^{4}} + \frac{\Omg^{2}}{4 r^{2}} + \frac{\rd_{u} r \rd_{v} r}{r^{2}},
\end{aligned}
\right.
\end{equation}
complemented with the Raychaudhuri equations:
\begin{equation} \label{eq:EMSF-ray-orig}
\left\{
\begin{aligned}
	\rd_{v} \left( \frac{\rd_{v} r}{\Omg^{2}}\right) =& - \frac{r (\rd_{v} \phi)^{2}}{\Omg^{2}} ,\\
	\rd_{u} \left( \frac{\rd_{u} r}{\Omg^{2}}\right) =& - \frac{r (\rd_{u} \phi)^{2}}{\Omg^{2}} .
\end{aligned}
\right.
\end{equation}
We introduce the Hawking mass
\begin{equation}\label{eq:mass.def}
	m = \frac{r}{2} (1 - \nb_{a} r \nb^{a} r) = \frac{r}{2} \left( 1 + \frac{4 \rd_{u} r \rd_{v} r}{\Omg^{2}} \right)
\end{equation}
and define also $\mu = \frac{2m}{r}$. Then \eqref{eq:EMSF-ray-orig} can be rewritten as
\begin{equation} \label{eq:EMSF-ray}
\left\{
\begin{aligned} 
	\rd_{v} \left(\frac{\rd_{u} r}{1-\mu} \right) =& \frac{r}{\rd_{v} r} (\rd_{v} \phi)^{2} \frac{\rd_{u} r}{1-\mu}, \\
	\rd_{u} \left(\frac{\rd_{v} r}{1-\mu} \right) =& \frac{r}{\rd_{u} r} (\rd_{u} \phi)^{2} \frac{\rd_{v} r}{1-\mu}.
\end{aligned}
\right.
\end{equation}
Define the modified mass
\begin{equation*}
	\varpi = m + \frac{\e^{2}}{2 r}.
\end{equation*}
Then
\begin{equation} \label{eq:EMSF-r-phi-m}
\left\{
\begin{aligned}
\rd_{u} \rd_{v} r = & \frac{2(\varpi - \frac{\e^{2}}{r})}{r^{2}} \frac{\rd_{u} r \rd_{v} r}{1-\mu}, \\
\rd_{u} \rd_{v} \phi = & - \frac{\rd_{v} r \rd_{u} \phi}{r} - \frac{\rd_{u} r \rd_{v} \phi}{r}, \\
	\rd_{v} \varpi =& \frac{1}{2} \frac{1-\mu}{\rd_{v} r} r^{2} (\rd_{v} \phi)^{2}, \\
	\rd_{u} \varpi =& \frac{1}{2} \frac{1-\mu}{\rd_{u} r} r^{2} (\rd_{u} \phi)^{2}.
\end{aligned}
\right.
\end{equation}
Moreover, it is easy to check that as long as $1-\mu\neq 0$, the system of equations consisting of \eqref{eq:EMSF-ray} and \eqref{eq:EMSF-r-phi-m} is equivalent to the system consisting of \eqref{eq:EMSF-wave} and \eqref{eq:EMSF-ray-orig}.

\subsection{Cauchy problem and characteristic initial value problem}\label{sec:Cauchy}

To solve \eqref{EMSFS} in spherical symmetry (see Definition~\ref{def.SS}), we will consider both the Cauchy problem and the characteristic initial value problem. We begin with the Cauchy problem formulation of the Einstein--Maxwell--(real)--scalar--field system in spherical symmetry. An initial data set in this setting is defined as follows.
\begin{definition}[Cauchy data]\label{def.Cauchy.data}
A \emph{Cauchy initial data set} for the Einstein--Maxwell--(real)--scalar--field system in spherical symmetry consists of a curve $\Sgm_{0}$ (without boundary), a collection of six real-valued functions $(r, f, h, \ell, \phi, \dot{\phi})$ on $\Sgm_{0}$ and a real number ${\bf e}\neq 0$. We require $r\in C^2(\Sgm_0;\mathbb R)$, $f,h,\phi\in C^1(\Sgm_0;\mathbb R)$ and $\ell,\dot{\phi}\in C^0(\Sgm_0;\mathbb R)$. Moreover, $f, r$ are required to be strictly positive everywhere on $\Sgm_0$. For $\Sgm_0$ parametrized\footnote{At this point, we allow $\rho$ to have either finite or infinite range. We will require $\Sgm_0 =\mathbb R$ later in Definition~\ref{def:adm-data}.} by $\rho$, the collection of functions together with ${\bf e}$ give rise to \emph{geometric data} consisting of the following:

\begin{enumerate}
\item The initial hypersurface $\underline{\Sgm_{0}} = \Sgm_{0} \times \bbS^{2}$ is endowed with the intrinsic Riemannian metric
\begin{equation*}
\hat{g} = f^{2}(\rho) \, \ud \rho^2+ r^{2}(\rho) \, \ud \sgm_{\bbS^{2}}.
\end{equation*}

\item The symmetric $2$-tensor $\hat{k}$ on the initial hypersurface $\underline{\Sgm_{0}}$ (which will be the second fundamental form of the solution) given by
\begin{equation*}
	\hat{k} = h(\rho) \, \ud \rho^{2} + \ell(\rho) \, \ud \sgm_{\bbS^{2}}.
\end{equation*}

\item The initial data on $\underline{\Sgm_{0}}$ for the matter fields\footnote{We abuse notation slightly here, where $\phi$ is used to both denote the scalar field in the spacetime and its restriction to the initial slice $\underline{\Sgm_{0}}$.}
$$(\phi, \underline{n} \phi)\restriction_{\underline{\Sgm_{0}}}=(\phi, \dot{\phi}),\quad F(\underline{n}, \rd_\rho)\restriction_{\underline{\Sgm_{0}}}=\f{{\bf e}f}{r^2},$$
where $\underline{n}$ denotes the unique future-directed unit normal to $\underline{\Sgm_0}$ in $\mathcal M$.
\end{enumerate}
Moreover, the following constraint equations are satisfied: 
\begin{gather}
R_{\hat{g}}-|\hat{k}|^2_{\hat{g}}+(\mbox{tr}_{\hat{g}}\hat{k})^2=4T(\underline{n}, \underline{n})=2\dot{\phi}^2+\f{2}{f^2}(\rd_\rho\phi)^2+\f{2{\bf e}^2}{r^4},\label{ham.con}\\
\quad(\mbox{div}_{\hat{g}} \hat{k})_\rho-\rd_{\rho}(\mbox{tr}_{\hat{g}} \hat{k})=2T(\underline{n}, \rd_\rho)=2\dot{\phi}(\rd_{\rho}\phi).\label{mom.con}
\end{gather}

Here, $T = T^{(sf)}_{\mu\nu}+T^{(em)}_{\mu\nu}$ (cf.~\eqref{EMSFS}) and $R_{\hat{g}}$ is the scalar curvature of $\hat{g}$.
\end{definition}

We recall also a lemma from \cite{LO.interior}, which gives a relation between the Cauchy initial data defined in Definition~\ref{def.Cauchy.data} and the initial data for $(r, \phi, \log \Omg)$ in a double null coordinate system.
\begin{lemma} \label{lem:cauchy-to-char}
Consider a parametrization $\rho \mapsto \Sgm_{0}(\rho)$ of the initial curve $\Sgm_{0}$, and consider a double null coordinate system $(u, v)$ on $\calM$ normalized by the conditions
\begin{equation*}
\frac{\ud u}{\ud \rho} = -1, \quad 
\frac{\ud v}{\ud \rho} = 1 \quad \hbox{ on } \Sgm_{0}.
\end{equation*}
Then the following identities hold on $\Sgm_{0}$: 
\begin{gather*}
	\rd_{u} \restriction_{\Sgm_{0}} = \frac{1}{2} (-\rd_{\rho} + f n), \quad 
	\rd_{v} \restriction_{\Sgm_{0}} = \frac{1}{2} (\rd_{\rho} + f n), \\
	\rd_{u} r \restriction_{\Sgm_{0}} = - \frac{1}{2} \rd_{\rho} r + \frac{f}{2 r} \ell, \quad
	\rd_{v} r \restriction_{\Sgm_{0}} = \frac{1}{2} \rd_{\rho} r + \frac{f}{2 r} \ell, \\
	\rd_{u} \phi \restriction_{\Sgm_{0}} = \frac{1}{2} ( - \rd_{\rho} \phi + f \dot{\phi}), \quad
	\rd_{v} \phi \restriction_{\Sgm_{0}} = \frac{1}{2} ( \rd_{\rho} \phi + f \dot{\phi}), \\
	\rd_{u} \log \Omg \restriction_{\Sgm_{0}} = \frac{1}{2f} (- \rd_{\rho} f + h), \quad
	\rd_{v} \log \Omg \restriction_{\Sgm_{0}} = \frac{1}{2f} (\rd_{\rho} f + h),\\
	\Omg \restriction_{\Sgm_{0}}=f,
\end{gather*}
where $n$ denotes the unique future-directed unit normal to $\Sgm_0$ in $\mathcal Q$.
\end{lemma}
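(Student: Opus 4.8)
The plan is to carry out the entire computation along $\Sgm_{0}$ inside the quotient $\calQ$, first expressing the coordinate vector fields $\rd_u,\rd_v$ in terms of the tangent vector $\rd_\rho$ and the future-directed unit normal $\bfn$, and then applying the resulting expressions to $r$, $\phi$ and $\log\Omg$. This is essentially the short computation behind the corresponding statement in \cite{LO.interior}.

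First I would set up the null frame on $\Sgm_{0}$. The normalization $\frac{\ud u}{\ud\rho}=-1$, $\frac{\ud v}{\ud\rho}=1$ gives $\rd_\rho=-\rd_u+\rd_v$ along $\Sgm_{0}$. Using $g_{\calQ}(\rd_u,\rd_u)=g_{\calQ}(\rd_v,\rd_v)=0$ and $g_{\calQ}(\rd_u,\rd_v)=-\tfrac{\Omg^2}{2}$ from \eqref{SS.metric.2}, one computes $g_{\calQ}(\rd_\rho,\rd_\rho)=\Omg^2$; comparing with the $\ud\rho^2$-component $f^2$ of the induced metric of $\hat g$ yields $\Omg\restriction_{\Sgm_{0}}=f$ (taking $\Omg>0$). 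Writing $\bfn=a\rd_u+b\rd_v$, orthogonality $g_{\calQ}(\bfn,\rd_\rho)=0$ forces $a=b$, and the normalization $g_{\calQ}(\bfn,\bfn)=-1$ together with future-directedness of $\rd_u+\rd_v$ gives $a=\Omg^{-1}=f^{-1}$; hence $f\bfn=\rd_u+\rd_v$ on $\Sgm_{0}$. Inverting the pair $\{\rd_\rho=-\rd_u+\rd_v,\ f\bfn=\rd_u+\rd_v\}$ produces the first displayed identities $\rd_u\restriction_{\Sgm_{0}}=\tfrac12(-\rd_\rho+f\bfn)$, $\rd_v\restriction_{\Sgm_{0}}=\tfrac12(\rd_\rho+f\bfn)$, as well as $\Omg\restriction_{\Sgm_{0}}=f$. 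Applying $\rd_u,\rd_v$ to $\phi$ and using $\bfn\phi\restriction_{\Sgm_{0}}=\dot{\phi}$ from Definition~\ref{def.Cauchy.data} immediately gives the identities for $\rd_u\phi,\rd_v\phi$.

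It then remains to identify $\bfn r$ and $\bfn\log\Omg$ along $\Sgm_{0}$ in terms of $\ell$ and $h$. For $r$: the round spheres $\{p\}\times\bbS^2$ lie in $\boldsymbol{\Sgm_{0}}$, and from the warped-product form $\hat g=f^2\ud\rho^2+r^2\ud\sgm_{\bbS^2}$ the angular block of the second fundamental form is $\hat k_{AB}=r\,(\bfn r)\,\gamma_{AB}$ with $\gamma$ the unit round metric; matching with $\hat k_{\mathrm{ang}}=\ell\,\ud\sgm_{\bbS^2}$ gives $\bfn r=\ell/r$, and substituting into $\rd_u r=\tfrac12(-\rd_\rho r+f\,\bfn r)$ (and its $\rd_v$ analogue) produces the stated formulas for $\rd_u r,\rd_v r$. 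For $\log\Omg$: the only nonvanishing Christoffel symbols of $g_{\calQ}$ in the form \eqref{SS.metric.2} are $\Gmm^{u}_{uu}=2\rd_u\log\Omg$ and $\Gmm^{v}_{vv}=2\rd_v\log\Omg$, so $\nb_{\rd_\rho}\rd_\rho\restriction_{\Sgm_{0}}=2(\rd_u\log\Omg)\rd_u+2(\rd_v\log\Omg)\rd_v$; pairing with $\bfn=f^{-1}(\rd_u+\rd_v)$ gives $g_{\calQ}(\bfn,\nb_{\rd_\rho}\rd_\rho)=-f^2\,\bfn\log\Omg$, hence the $\rho\rho$-component of the second fundamental form, namely $\hat k(\rd_\rho,\rd_\rho)=-g_{\calQ}(\bfn,\nb_{\rd_\rho}\rd_\rho)=h$, equals $f^2\,\bfn\log\Omg$. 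Thus $\bfn\log\Omg=h/f^2$, and substituting into $\rd_u\log\Omg=\tfrac12(-\rd_\rho\log\Omg+f\,\bfn\log\Omg)$, with $\rd_\rho\log\Omg\restriction_{\Sgm_{0}}=\rd_\rho\log f$ (a tangential derivative, using $\Omg=f$ on $\Sgm_{0}$), yields the last two identities, and similarly for $\rd_v\log\Omg$.

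The computation is entirely routine, so there is no real obstacle; the only points requiring attention are (i) fixing the orientation of $\bfn$ and the sign convention for the second fundamental form $\hat k$ so that they match Definition~\ref{def.Cauchy.data} and \cite{LO.interior} (the constraint equations \eqref{ham.con}--\eqref{mom.con} pin these down), and (ii) keeping the tangential derivative $\rd_\rho\log\Omg$ (which only sees $\Omg\restriction_{\Sgm_{0}}=f$) distinct from the transversal derivatives $\rd_u\log\Omg,\rd_v\log\Omg$ (which genuinely involve $h=\hat k_{\rho\rho}$). Alternatively, the $r$- and $\Omg$-identities can be read off from the $\calQ$-versus-$\bbS^2$ splitting of $\nb^{g}$ (the Gauss formula for the warped product), but the short Christoffel computation above is the most economical route.
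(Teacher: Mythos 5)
Your proof is correct, and since the present paper simply cites \cite{LO.interior} for this lemma, the direct computation you carry out is essentially the only reasonable route and surely matches the cited proof. One small imprecision worth fixing: the identity $\hat k_{AB}=r\,(\bfn r)\,\gamma_{AB}$ follows from the warped-product structure of the \emph{ambient} metric $g = g_{\calQ}+r^2 d\sgm_{\bbS^2}$ (via the mixed Christoffel symbol $\Gamma^a_{AB}=-r(\nabla_{\calQ}^a r)\gamma_{AB}$ and the Gauss formula), not from the warped-product form of the induced metric $\hat g$ as your wording suggests; the rest of the argument, including the $\Omg\restriction_{\Sgm_0}=f$ computation, the identification $\bfn=f^{-1}(\rd_u+\rd_v)$, and the Christoffel computation giving $\bfn\log\Omg = h/f^2$, is clean and correct.
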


Alternatively, the evolution problem for \eqref{EMSFS} in spherical symmetry can be phrased as a characteristic initial value problem. We make precise below the notion of characteristic initial data in Definition~\ref{def.CID}. Although we will not be stating our main theorems in terms of a characteristic initial value problem, this point of view will nonetheless be useful for isolating certain subregions of the spacetime in the proof of the main theorems. 

\begin{definition}[Characteristic initial data]\label{def.CID}
A \emph{characteristic initial data set} for the Einstein--Maxwell--(real)--scalar--field system in spherical symmetry consists of two transversally intersecting null curves $\uC_{in}$ and $C_{out}$, parametrized by $\uC_{in}=\{(u, v_1): u\in [u_1, u_2 )\}$ and $C_{out}=\{(u_1, v): v\in [v_1, v_2 )\}$, where $u_1, v_1\in \mathbb R$ and $u_2, v_2\in \mathbb R\cup\{+\infty\}$ with $u_1<u_2$, $v_1<v_2$, together with
\begin{itemize}
\item a constant $\e$,
\item $(r,\phi,\Omg)\in C^2(\uC_{in})\times C^1(\uC_{in})\times C^1(\uC_{in})$ with $r>0$ and $\Omg>0$, and
\item $(r,\phi,\Omg)\in C^2(C_{out})\times C^1(C_{out})\times C^1(C_{out})$ with $r>0$ and $\Omg>0$
\end{itemize}
such that
\begin{itemize}
\item the values of $(r,\phi,\Omg)$ at $(u_1, v_1)$ coincide,
\item the Raychaudhuri equations \eqref{eq:EMSF-ray-orig} are satisfied on $\uC_{in}$ and $C_{out}$.
\end{itemize}
\end{definition}

Both the Cauchy problem and the characteristic initial value problem are \emph{locally well-posed}. In spherical symmetry, local well-posedness results hold at a level of regularity consistent with $\phi$ being $C^1$. Such results are well-known and we refer the readers to \cite{LO.interior} for precise statements.

\section{Review of geometric setup in \cite{LO.interior}}\label{sec.review}

\subsection{Admissible initial data and the distances on the space of initial data sets}\label{sec.review.basic}

In this subsection, we recall from \cite{LO.interior} the class of data we consider and the various topologies on the space of initial data sets.
\begin{definition} [Admissible Cauchy initial data] \label{def:adm-data}
Let $\omg_{0} > 2$. An \emph{$\omg_{0}$-future-admissible spherically symmetric $2$-ended asymptotically flat Cauchy initial data set with non-vanishing charge} (in short, an $\omg_{0}$-admissible initial data set) is a Cauchy initial data set $\Tht = (r, f, h, \ell, \phi, \dot{\phi}, \e)$ on $\Sgm_{0} = \bbR$ satisfying the following properties.
\begin{enumerate}
\item $\phi,\, f \in C^2(\Sigma_0;\mathbb R)$, $\dot{\phi},\, h \in C^1(\Sigma_0;\mathbb R)$ (i.e., are more regular than required in Definition~\ref{def.Cauchy.data}).
\item The following \emph{asymptotic flatness conditions} hold as $\rho \to \pm\infty$ (i.e., towards each end):
\begin{equation} \label{eq:adm-id-af}
\begin{gathered}
	f(\rho) - 1 = O_{2}(\abs{\rho}^{-1}),\quad
	h(\rho) = O_{1}(\abs{\rho}^{-2}), \\
	r(\rho) - \abs{\rho} = O_{2}(\log \abs{\rho}), \quad
	\ell(\rho) = O_{1}(1)
\end{gathered}
\end{equation}
Here the notation $O_{i}(\abs{\rho}^{-n})$ denotes that the function on the LHS is $O(|\rho|^{-n})$ and the $j$-th derivative is $O(|\rho|^{-n-j})$ for all $j\leq i$. In the case $n = 0$, we simply write $O_{i}(1) = O_{i}(| \rho |^{0})$. The notation $O_{i}(\log \abs{\rho})$ is defined similarly.
\item 
The following asymptotic flatness conditions hold for the scalar field: As $\rho \to \pm \infty$,
\begin{equation} \label{eq:adm-id-phi}
	\phi (\rho) = O_{2}(\abs{\rho}^{-\omg_{0}}), \quad 
	\dot{\phi}(\rho) = O_{1}(\abs{\rho}^{-\omg_{0}-1}). 
\end{equation}
Furthermore, the following limits exist:
\begin{equation} \label{eq:adm-id-limits}
\begin{aligned}
	\mathfrak{L}_{(\omg_{0}) 0}[\Tht] := & \lim_{\rho \to \infty} \frac{r^{\min\set{\omg_{0}, 3}}}{\rd_{\rho} r + \frac{f\ell}{r}} \left(\rd_{\rho} (r \phi) + \frac{f \ell}{r} \phi + f r \dot{\phi} \right), \\ 
	\mathfrak{L}_{(\omg_{0}) 0}'[\Tht] := & \lim_{\rho \to - \infty} \frac{r^{\min\set{\omg_{0}, 3}}}{-\rd_{\rho} r + \frac{f\ell}{r}} \left(-\rd_{\rho} (r \phi) + \frac{f \ell}{r} \phi + f r \dot{\phi} \right).
\end{aligned}
\end{equation}
\item $\e\neq 0$.
\item The following \emph{future admissibility condition} holds: There exist $\rho_1< \rho_2$ such that 
\begin{equation} \label{eq:adm-id-adm}
	\left(- \rd_{\rho} r + \frac{f \ell}{r}  \right) (\rho) < 0 \hbox{ for all } \rho \geq \rho_{1}, \quad \hbox{ and } \quad
	\left(\rd_{\rho} r + \frac{f \ell}{r}\right) (\rho) < 0 \hbox{ for all } \rho \leq \rho_{2}.
\end{equation}
\end{enumerate}
We denote the set of all $\omg_{0}$-admissible initial data sets by $\calA \calI \calD(\omg_{0})$.
\end{definition}

Next, we introduce a family of weighted-$C^{k}$-type distances $d_{k, \omg}$ on the class $\mathcal{AID}(\omg_{0})$.
\begin{definition} [Distances $d_{k,\omg}$ on $\calA \calI \calD(\omg_{0})$]\label{def:adm-top}
Given any positive integer $k$ and real numbers $\omg, \omg_{0} > 2$, we define the distance $d_{k, \omg}$ on the set $\calA \calI \calD(\omg_{0})$ of $\omg_{0}$-admissible initial data with two asymptotic ends (cf. Definition~\ref{def:adm-data}) as follows (we allow\footnote{This in particular happens when the initial data are not $k$-times differentiable.} $d_{k, \omg} (\Tht, \overline{\Tht}) = \infty$): 
\begin{equation} \label{eq:d-k-omg}
\begin{aligned}
	d_{k, \omg} (\Tht, \overline{\Tht}) := 
	& \nrm{\brk{\rho} \log (f/\fbg)(\rho)}_{C^{0}} 
	+ \sum_{i = 1}^{k} \left(  \nrm{\brk{\rho}^{1+i} \rd_{\rho}^{i} \log (f /\fbg)(\rho)}_{C^{0}} 
					+ \nrm{\brk{\rho}^{1+i} \rd_{\rho}^{i-1} (h - \hbg)(\rho)}_{C^{0}} \right) \\
	& + \nrm{\log^{-1}(1+\brk{\rho}) (r - \rbg) (\rho)}_{C^{0}} 
	+ \sum_{i=1}^{k} \left(  \nrm{\brk{\rho}^{i} \rd_{\rho}^{i} (r - \rbg) (\rho)}_{C^{0}} 
					+ \nrm{\brk{\rho}^{i} \rd_{\rho}^{i-1} (f\ell - \fbg \ellbg)(\rho)}_{C^{0}} \right) \\
	& + \nrm{\brk{\rho}^{\omg} (\phi - \phibg)(\rho)}_{C^{0}} 
	+ \sum_{i=1}^{k} \left(  \nrm{\brk{\rho}^{\omg+i} \rd_{\rho}^{i} (\phi - \phibg)}_{C^{0}} 
	+ \nrm{\brk{\rho}^{\omg+i} \rd_{\rho}^{i-1} (f \dot{\phi} - \fbg \dphibg)(\rho)}_{C^{0}} \right) \\
	& + |{\bf e}- \overline{\bf e}|												.
\end{aligned}\end{equation}
Here, $\brk{\rho} = (1 + \rho^{2})^{1/2}$ and $\rho_{\pm} = \max\set{0, \pm \rho}$.
\end{definition}
With the help of these distances, we define subclasses of $\mathcal{AID}(\omg_{0})$ with higher regularities and $r$-weights as follows.
\begin{definition}[$C^k_{\omg}$ initial data]\label{Ck.def}
For $k\in \mathbb N$ with $k\geq 2$ and $\omg, \omg_0>2$, we say that $\Theta\in \calA \calI \calD(\omg_{0})$ is $C^k_\omg$ if 
$d_{k,\omg}(\Theta,\Theta_{RN, M,\e})<\infty$ for some $\Theta_{RN, M,\e}$ which is an admissible smooth Cauchy initial data set for a fixed Reissner--Nordstr\"om solution with parameter $0<|\e|< M$ such that outside a compact set $[-R,R]$, $(r,f,h,\ell,\phi,\dot{\phi},\e)=(|\rho|,(1-\f{2M}{|\rho|}+\f{\e^2}{\rho^2})^{-\f 12},0,0,0,0,\e)$.
\end{definition}

\subsection{Preliminary results on the maximal globally hyperbolic future development}\label{sec:MGHFD.prelim}
By \cite{Kommemi, D3}, we have the following preliminary characterization of the future boundary of the maximal globally hyperbolic future development of an admissible Cauchy initial data set.
\begin{theorem} \label{thm:kommemi}
Let $(\calM, g, \phi, F)$ be the maximal globally hyperbolic future development of an admissible Cauchy initial data set (with arbitrary $\omg_{0}>2$, cf. Definition~\ref{def:adm-data}), and denote by $(\calQ = \calM / SO(3), g_{\calQ})$ the quotient Lorentzian manifold. Then the following statements hold:
\begin{enumerate}
\item $(\mathcal Q,g_{\mathcal Q})$ can be conformally embedded into a bounded subset of $\mathbb R^{1+1}$. 
\item Let $\mathcal Q^+$ be the closure of $\mathcal Q$ with respect to the topology induced by the conformal embedding described in part (1). Then the boundary\footnote{We abuse notation slightly to name the image of $\mathcal Q$ under the conformal embedding also as $\mathcal Q$. We will similarly do this for subsets of $\mathcal Q$, such as $\Sigma_{0}$.} of $\mathcal Q$ in $\mathcal Q^+$ has the following components: 
\begin{enumerate}
\item The initial hypersurface $\Sigma_{0}$.
\item Spatial infinities $i^0_1$ and $i^0_2$ which are the end-points of $\Sigma_{0}$ in $\mathcal Q^+$, with the convention that $i^0_1$ is the end-point with $\rho\to \infty$ and $i^0_2$ is the end-point with $\rho\to -\infty$.
\item Two connected components of null infinity, denoted by $\mathcal I^+_1$ and $\mathcal I^+_2$ respectively, each of which is an open null segment\footnote{The fact that it is open and that $r$ does not diverge to $\infty$ along $\EH_1$ and $\EH_2$ (see Definition~\ref{def.EH} below), follows from \cite{DafTrapped}.}, defined as the part of the boundary such that the $r$ diverges to $\infty$ along a transversal null curve towards $\mathcal I^+_1$ and $\NI_2$. 
\item Timelike infinities $i^+_1$ and $i^+_2$, which are defined to be future end-points of $\mathcal I^+_1$ and $\mathcal I^+_2$ respectively.
\item The Cauchy horizons\footnote{In the general setting of \cite{Kommemi}, $\CH_1$ and $\CH_2$ may be empty. Nevertheless, it is non-empty in our setting thanks to the result of Dafermos \cite{D2}, see also discussions in \cite{LO.interior}.} $\CH_1$ and $\CH_2$, which are defined to be half open\footnote{Both $\CH_1$ and $\CH_2$ are chosen to include their future endpoints. Therefore, in the case where $\mathcal S$ is empty (or contains only a single point of $\mathcal Q$), by our convention the bifurcation sphere is part of both $\CH_1$ and $\CH_2$.} null segments emanating from future null infinities $\mathcal I^+_1$ and $\mathcal I^+_2$ respectively such that the area-radius function $r$ extends continuously to $\CH_1\cup \CH_2$ and is strictly positive except possibly at the future endpoints of $\CH_1$ or $\CH_2$.
\item A (possibly empty) achronal set\footnote{In \cite{Kommemi}, Kommemi further distinguishes the sets for which $r$ extends to $0$ into null segments emanating from the endpoints of $\CH_1$ or $\CH_2$ and another piece which does not intersect any null rays emanating from future null infinity. We do not need such distinction here and will simply consider one achronal set $\mathcal S$ on which $r$ extends to $0$.} $\mathcal S$ which is defined to be the subset of the boundary on which $r$ extends continuously to $0$.
\end{enumerate}
Moreover, $\mathcal Q^+$ can be given by the Penrose diagram in Figure~\ref{fig:Kommemi}.
\end{enumerate}
\end{theorem}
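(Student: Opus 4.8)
The plan is to deduce the statement from the general classification of the future boundary of the maximal globally hyperbolic future development (MGHFD) in spherical symmetry due to Kommemi \cite{Kommemi} (see also \cite{D3}), after verifying that the present setting --- two-ended asymptotically flat data with $\e \neq 0$ satisfying the conditions of Definition~\ref{def:adm-data} --- lies within its scope, and then bringing in the additional inputs of \cite{DafTrapped} and \cite{D2} that are needed for the features specific to this setting (openness and boundedness of $r$ along the event horizons, and non-emptiness of $\CH_1$, $\CH_2$).

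First I would fix a global double null coordinate system $(u,v)$ on the quotient $\mathcal Q$ of the MGHFD, normalized along $\Sigma_0$ as in Lemma~\ref{lem:cauchy-to-char} and propagated by solving the characteristic system associated to \eqref{eq:EMSF-ray-orig}. Part (1) then follows from global hyperbolicity: composing $u$ and $v$ with bounded, strictly increasing reparametrizations (for instance $\arctan$) realizes $\mathcal Q$ as a bounded domain in $(\mathbb{R}^{1+1}, -\mathrm{d} u' \, \mathrm{d} v')$, and \eqref{SS.metric.2} shows that $g_{\mathcal Q}$ is conformal to the flat metric there. One must check that the embedding is injective and that its image is of ``interval $\times$ interval'' type, which is precisely the content of global hyperbolicity together with the absence of closed causal curves.

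The heart of the argument is part (2), and here I would exploit the monotonicity built into the symmetry-reduced equations. The Raychaudhuri equations \eqref{eq:EMSF-ray-orig} force $\partial_v r / \Omega^2$ (resp.\ $\partial_u r / \Omega^2$) to be non-increasing in $v$ (resp.\ $u$), so that the signs of $\partial_v r$ and $\partial_u r$ propagate in a controlled way, splitting $\mathcal Q$ into the regular region $\{\partial_u r < 0 < \partial_v r\}$, the trapped and anti-trapped regions, and the apparent horizons $\{\partial_v r = 0\}$ and $\{\partial_u r = 0\}$. The future admissibility condition \eqref{eq:adm-id-adm} together with the asymptotic flatness \eqref{eq:adm-id-af} places a neighborhood of each end of $\Sigma_0$ in the regular region with the correct orientation; analyzing the causal future of these neighborhoods, one defines the event horizons $\EH_1$, $\EH_2$ as past boundaries of the complements of the causal pasts of the two components of null infinity (cf.\ Definition~\ref{def.EH}), and then classifies each ideal boundary point by the limit --- well-defined thanks to the above monotonicity --- of $r$ along an appropriate transversal null curve: $r \to \infty$ gives a component of $\mathcal I^+$, $r \to 0$ gives $\mathcal S$, and what remains is $\CH_1 \cup \CH_2$. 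Monotonicity of $\varpi$ in both null directions (from \eqref{eq:EMSF-r-phi-m}) and the sub-extremality observation of Kommemi (Proposition~\ref{prop.subextremality}) are used to control the Hawking mass throughout. I expect this region-decomposition step --- ruling out exotic boundary components and showing the boundary decomposes precisely as listed --- to be the main obstacle, and for it I would ultimately invoke the full monotonicity/bootstrap analysis of \cite{Kommemi, D3} rather than reproduce it here.

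Finally, two external results pin down the remaining features. Since two-ended asymptotically flat data with $\e \neq 0$ necessarily contain a trapped surface, Dafermos's completeness theorem for null infinity \cite{DafTrapped} shows that $\mathcal I^+_1$ and $\mathcal I^+_2$ are open null segments and that $r$ does not diverge along $\EH_1$ and $\EH_2$. Dafermos's result \cite{D2} that the Cauchy horizon is non-empty in this setting then shows that $\CH_1$ and $\CH_2$ are genuine half-open null segments emanating from $i^+_1$ and $i^+_2$. Assembling these facts yields exactly the two Penrose diagrams of Figure~\ref{fig:Kommemi}, distinguished by whether the achronal set $\mathcal S$ is empty.
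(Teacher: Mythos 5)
Your proposal matches the paper's treatment: Theorem~\ref{thm:kommemi} is stated in the paper as a direct citation to Kommemi \cite{Kommemi} and Dafermos \cite{D3} for the boundary classification, with \cite{DafTrapped} supplying the openness of $\NI_1$, $\NI_2$ and boundedness of $r$ on $\EH_1$, $\EH_2$, and \cite{D2} supplying non-emptiness of the Cauchy horizons --- exactly the three external inputs you invoke. Your additional scaffolding on the conformal embedding and Raychaudhuri monotonicity is consistent with what those references prove, but the paper itself does not reproduce any of it.
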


\begin{figure}[h]
\begin{center}
\def\svgwidth{300px}
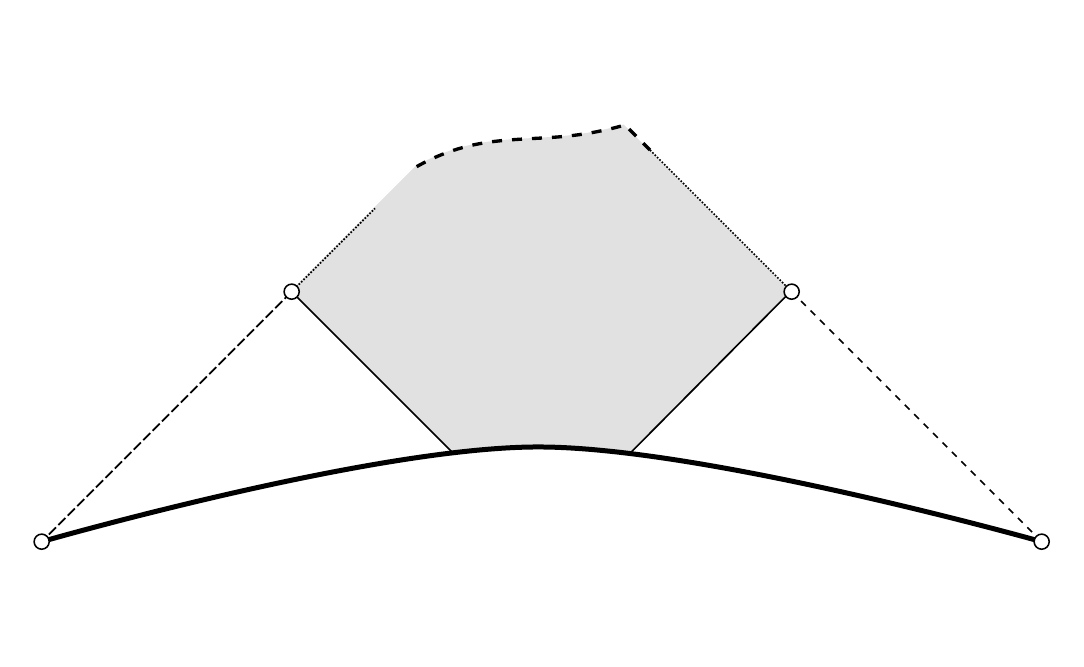 
\caption{Boundary of the maximal globally hyperbolic future development} \label{fig:Kommemi}
\end{center}
\end{figure}

Unless otherwise specified, from now on, as in Theorem~\ref{thm:kommemi}, we will use $(\calM, g, \phi, F)$ to denote the maximal globally hyperbolic future development of an admissible initial data set and $\calQ$ to denote the quotient manifold.

The above result in particular allows us to define the exterior region $\Ext = \Ext_{1} \cup \Ext_{2}$ of the spacetime, which is what we will focus on in the rest of the paper. Before that, we first define the event horizons $\EH_1$ and $\EH_2$. For the rest of this section, for concreteness, we use the double null coordinate system $(U, V)$ normalized as in Lemma~\ref{lem:cauchy-to-char}. We use the notation $\underline{V}(U)$ [resp. $\underline{U}(V)$] for the $V$- [resp. $U$-]coordinate of the point in the intersection $C_{U} \cap \Sgm_{0}$ [resp. $\uC_{V} \cap \Sgm_{0}$] (which is clearly unique).
\begin{definition}[Event horizons]\label{def.EH}
Given the maximal globally hyperbolic future development of an admissible Cauchy initial data set (with arbitrary $\omg_{0} > 2$, cf. Definition~\ref{def:adm-data}), define the \emph{event horizon} $\EH_1:=\{(U,V): U=U_{\EH_1},\, V \geq \underline{V}(U_{\EH_1}) \}$, where $U_{\EH_1}:=\sup \{U: \lim_{V \to \infty} r(U', V)=\infty \mbox{ for all }U'\leq U\}$, $(U_{\EH_1},\underline{V}(U_{\EH_1}))\in \Sigma_0$.

We also define the event horizon $\EH_2$ (and $V_{\EH_2}$) in a completely analogous manner, by switching the roles of $U$ and $V$.
\end{definition}

\begin{definition}[Exterior and interior regions]
Given the maximal globally hyperbolic future development of an admissible Cauchy initial data set (with arbitrary $\omg_{0}>2$, cf. Definition~\ref{def:adm-data}), define the \emph{exterior region} $\Ext$ to be $\Ext = \Ext_{1} \cup \Ext_{2}$, where
\begin{equation*}
\Ext_{1} = \set{(U, V) : U \leq U_{\EH_{1}}, \, V \geq \underline{V}(U)}, \quad
\Ext_{2} = \set{(U, V) : V \leq V_{\EH_{2}}, \, U \geq \underline{U}(V)}.
\end{equation*}
Note that each of the two sets $\Ext_{1}, \Ext_{2}$ is a connected component of the exterior region $\Ext$.
On the other hand, the black hole interior region $\Int$ is defined to be the complement of $\Ext$.
\end{definition}

\textbf{In the remainder of the paper, unless otherwise stated\footnote{Notice that since the event horizon is teleologically defined, i.e., it requires information about the maximal globally hyperbolic future development, when comparing two spacetimes and their initial data sets, one necessarily considers larger sets than the exterior regions themselves. We will return to this point in Section~\ref{sec:L-stability}, and point out carefully the domain that we study.}, we will focus on the connected component $\Ext_{1}$ of the exterior region. Because of this, we will also omit the subscript ${ }_1$ in $\EH_1$, $\NI_1$, $i^+_1$ and $i^0_1$.} By Proposition~\ref{prop.subextremality} below, this region is also characterized as the subset of the exterior region such that $\rd_V r (U,V)\geq 0$, $\rd_U r (U,V) < 0.$

We recall some simple geometric facts in the exterior region. We will only state the proposition in one connected component of the exterior region, a symmetric statement holds for the other connected component (with $u$ and $v$ interchanged). All of these statements (and their proofs) can be found in the appendix of \cite{LO.interior}. The subextremality statement (3) below was proven by Kommemi \cite{KomThe}. 
\begin{proposition}\label{prop.subextremality}
The following statements hold.
\begin{enumerate}
\item Let $(U,V) \in \Ext_{1} = \{(U,V) : U \leq U_{\EH},\, V \geq \underline{V}(U) \}$. Then
\begin{equation}\label{no.trapped.in.1}
\rd_V r (U,V)\geq 0,\quad 
\rd_U r (U, V) < 0.
\end{equation}
\item Define
$$r_{\EH}=\sup_{\EH} r,\quad \varpi_{\EH}=\sup_{\EH} \varpi.$$
Then 
\begin{equation}\label{r.varpi.limits}
r_{\EH}=\lim_{V \to\infty} r(U_{\EH}, V),\quad \varpi_{\EH}=\lim_{V \to\infty} \varpi(U_{\EH}, V).
\end{equation}
Moreover,
\begin{equation}\label{r.varpi.poly}
r_{\EH}=\varpi_{\EH} +\sqrt{\varpi_{\EH}^2-\e^2}.
\end{equation}
\item The following strict inequality holds:
\begin{equation}\label{eq:subextremality}
\varpi_{\EH}>|{\bf e}|.
\end{equation}
\end{enumerate}
\end{proposition}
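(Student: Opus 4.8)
My plan is to treat the three assertions in order, since they form a hierarchy: (1) is elementary, following from the Raychaudhuri equations together with the definition of $U_{\EH}$ and the future-admissibility condition; given~(1), the limits in~(2) exist by monotonicity and the algebraic relation \eqref{r.varpi.poly} is precisely the statement that the solution approaches a Reissner--Nordstr\"om event horizon along $\EH$; and~(3), due to Kommemi, is the only genuinely non-soft input and is the main obstacle, since everything else yields only $\varpi_{\EH}\geq|\e|$.

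\emph{Statement~(1).} I would first show $\rd_V r\geq 0$ on $\Ext_{1}$: fix $(U_{0},V_{0})\in\Ext_{1}$, so $U_{0}\leq U_{\EH}$; by \eqref{eq:EMSF-ray-orig} the quantity $\rd_v r/\Omg^{2}$ is non-increasing in $v$ along $C_{U_{0}}$, so if $\rd_V r(U_{0},V_{0})<0$ then $\rd_V r(U_{0},\cdot)<0$ on $[V_{0},\infty)$ and hence $\lim_{V\to\infty} r(U_{0},V)<\infty$; for $U_{0}<U_{\EH}$ this contradicts the definition of $U_{\EH}$, and for $U_{0}=U_{\EH}$ the same applies at a slightly smaller value of $U$ by continuity of $\rd_V r$. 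For $\rd_U r<0$, the future-admissibility condition \eqref{eq:adm-id-adm}, read through Lemma~\ref{lem:cauchy-to-char}, gives $\rd_U r<0$ on $\Sgm_{0}\cap\Ext_{1}$, and the other Raychaudhuri equation $\rd_u(\rd_u r/\Omg^{2})\leq 0$ then propagates this sign forward along the ingoing cones $\uC_{V}$, which sweep out $\Ext_{1}$ from $\Sgm_{0}\cap\Ext_{1}$; that $\Sgm_{0}\cap\Ext_{1}$ lies in the range where \eqref{eq:adm-id-adm} applies is part of the structure of the maximal development recorded in \cite{LO.interior}. The characterization \eqref{no.trapped.in.1} of $\Ext_{1}$ then follows at once.

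\emph{Statement~(2).} From~(1) we get $1-\mu=-4\rd_U r\,\rd_V r/\Omg^{2}\geq 0$ on $\Ext_{1}$, hence $m=\frac{r}{2}\mu\leq\frac{r}{2}$ there. Along $\EH$, $\rd_V r\geq 0$ makes $r$ non-decreasing; rewriting the $\varpi$-equation of \eqref{eq:EMSF-r-phi-m} via the identity $\frac{1-\mu}{\rd_v r}=-\frac{4\rd_u r}{\Omg^{2}}$ gives $\rd_v\varpi=-\frac{2r^{2}\rd_u r}{\Omg^{2}}(\rd_v\phi)^{2}\geq 0$, so $\varpi$ is non-decreasing; both are bounded above ($r$ by \cite{DafTrapped}, $\varpi$ by $m\leq r/2$ together with the positive lower bound on $r$ along $\EH$). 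Hence the monotone limits exist and coincide with the suprema, which is \eqref{r.varpi.limits}. For \eqref{r.varpi.poly} I would invoke the approach to Reissner--Nordstr\"om along $\EH$ (Theorem~\ref{main.result.interior}, based on \cite{DRPL}, recalled in Section~\ref{sec:bg}): this yields $1-\mu=1-\frac{2\varpi}{r}+\frac{\e^{2}}{r^{2}}\to 0$ along $\EH$, hence $r_{\EH}^{2}-2\varpi_{\EH}r_{\EH}+\e^{2}=0$, so $\varpi_{\EH}^{2}\geq\e^{2}$ and $r_{\EH}=\varpi_{\EH}\pm\sqrt{\varpi_{\EH}^{2}-\e^{2}}$; the positive branch is selected because $1-\mu\geq 0$ on the connected set $\Ext_{1}$, which contains a neighborhood of spatial infinity where $r\to\infty$, forcing $r\geq r_{+}(\varpi_{\EH})$ in the limit along $\EH$ (the details of this standard dichotomy being in \cite{LO.interior}).

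\emph{Statement~(3).} The strict inequality \eqref{eq:subextremality} is Kommemi's observation \cite{KomThe}, and is the crux: by~(2) we only know $\varpi_{\EH}\geq|\e|$, and ruling out the extremal case $\varpi_{\EH}=|\e|$ requires genuine information about the global causal structure. The scheme I would follow is that if $\varpi_{\EH}=|\e|$, then the bootstrap ``$\varpi\leq|\e|\Rightarrow 1-\mu=\frac{r^{2}-2\varpi r+\e^{2}}{r^{2}}\geq 0\Rightarrow\rd_u\varpi\leq 0\Rightarrow\varpi\leq|\e|$'' (using $\varpi\leq\varpi_{\EH}$ on $\EH$ and $\rd_u\varpi\leq 0$ to propagate the bound in the outgoing direction) shows $1-\mu\geq 0$ throughout a one-sided neighborhood of $\EH$ inside the black hole region, contradicting the trapped surfaces ($1-\mu<0$) that accumulate at the future endpoint $i^{+}$ of $\EH$; making this rigorous uniformly as one approaches $i^{+}$ is exactly what \cite{KomThe} carries out, and is the step I expect to require the most work.
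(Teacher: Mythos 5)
The paper does not give a proof of this proposition: the sentence immediately preceding it delegates parts (1) and (2) to the appendix of \cite{LO.interior} and part (3) to Kommemi \cite{KomThe}. So your reconstruction has to be judged on its own terms, and it contains a genuine circularity.

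Your derivation of \eqref{r.varpi.poly} rests on the assertion that $1-\mu\to 0$ along $\EH$, and you propose to get this from Theorem~\ref{main.result.interior}, \cite{DRPL} and Section~\ref{sec:bg}. That is circular. The proof of Theorem~\ref{thm:DR-full} explicitly verifies the hypothesis ${\bf\Sgm T'}$ of \cite{DRPL} (subextremality of the event horizon) by invoking \eqref{eq:subextremality}, which is part~(3) of the very proposition you are proving; and Theorem~\ref{main.result.interior} is the main theorem of \cite{LO.interior}, far downstream of this basic lemma. The fact you need, $1-\mu\to 0$ along $\EH$, is true and far softer than Price's law: it follows from the future completeness of $\NI$, the integrability of $\rd_v r$ along $\EH$ coming from $r_{\EH}<\infty$, and the monotonicity structure of the Hawking/renormalized mass, and is established in Dafermos' earlier works on the charged black hole interior (e.g.\ \cite{D1,D2}) and in the appendix of \cite{LO.interior}, without any quantitative decay. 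As written, however, your chain of dependencies has \eqref{eq:subextremality} $\Rightarrow$ Theorem~\ref{thm:DR-full} $\Rightarrow$ \eqref{r.varpi.poly}, while your part~(3) discussion uses the structure of part~(2) (in particular that $\varpi_{\EH}\geq|\e|$ is already forced by \eqref{r.varpi.poly}); this loop needs to be broken by citing the soft marginally-trapped-limit argument rather than Price's law. Your part~(1) is fine (monotonicity of $\rd_V r/\Omg^2$ and $\rd_U r/\Omg^2$ plus the admissibility condition and the definition of $U_{\EH}$), and your part~(3) is correctly attributed to Kommemi with an honest acknowledgment that the uniformity-near-$i^+$ step is where the real work lies; the bootstrap sketch you give is schematic but points in the right direction.
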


\subsection{Definition of $\mathfrak L$ and $\mathfrak L_{(\omg_0)\infty}$.}\label{sec:L.Li}

Finally, in this subsection, we define the quantities $\mathfrak L$ and $\mathfrak{L}_{(\omg_0)\infty}$, which will play important roles in the rest of the paper; see for instance the statements of the theorems in Section~\ref{sec:main-thm}. The two quantities $\mathfrak L$ and $\mathfrak{L}_{(\omg_0)\infty}$ will be defined on $\NI_1$. Following the convention introduced in the previous subsection, we will omit the subscript ${ }_1$.

\begin{definition}[Definition of $\mathfrak{L}_{(\omg_{0}) \infty}$]\label{def.L}
Let $\omg_0>2$. Consider the maximal globally hyperbolic future development of an $\omg_0$-admissible initial data set. Introduce
\begin{align} \label{eq:L-def}
	\mathfrak{L} :=& \int_{\NI} 2 M(u) \Phi(u) \Gmm(u) \, \ud u,
\end{align}
where for each $u$ such that $\sup_{C_{u}} r = \infty$ (which is what we mean by $u \in \NI$), we define
\begin{equation*}
M(u) = \lim_{r \to \infty} \varpi(u, \cdot), \quad
\Phi(u) = \lim_{r \to \infty} r \phi(u, \cdot), \quad 
\Gmm(u) = \lim_{r \to \infty} \frac{\rd_{u} r}{1- \mu}(u, \cdot),
\end{equation*}
using $u$ and $r$ as the coordinate system\footnote{Such a choice of coordinates is justified by the Raychaudhuri equation \eqref{eq:EMSF-ray-orig}, which ensures that $\rd_{v} r \geq 0$ on a curve $C_{u}$ such that $\sup_{C_{u}} r = \infty$.}.
Define
\begin{equation} \label{eq:Linfty-def}
\mathfrak{L}_{(\omg_{0}) \infty} := \begin{cases}\mathfrak{L}_{(\omg_{0}) 0} + \mathfrak{L}\quad &\mbox{if }\omg_{0}\geq 3\\
\mathfrak{L}_{(\omg_{0}) 0}\quad &\mbox{if } \omg_{0} \in (2,3)
\end{cases}
,
\end{equation}
where $\mathfrak{L}_{(\omg_{0}) 0} = \mathfrak{L}_{(\omg_{0}) 0}[\Tht]$ depends only on the initial data and is defined as in \eqref{eq:adm-id-limits}. 
\end{definition}

Corresponding to $\Ext_2$, we analogously define $\mathfrak{L}'$ and $\mathfrak{L}_{(\omg_{0}) \infty}'$ (switching the roles of $u$ and $v$). 

\begin{remark} 
Of course, at this point it is not a priori clear whether $\mathfrak{L}$ and $\mathfrak{L}_{(\omg_{0}) \infty}$ (as well as $\mathfrak{L}'$ and $\mathfrak{L}_{(\omg_{0}) \infty}'$ on the other end) are well-defined. In our setting, these definitions are justified by the Dafermos--Rodnianski Price's law theorem (cf. Section~\ref{sec:bg}).
\end{remark}

\section{Precise statements of the main theorems}  \label{sec:main-thm}

We state in this section the main theorems that are to be proven in this paper. All of these theorems are stated in \cite{LO.interior} and were used to obtain a proof of strong cosmic censorship.

As mentioned in the previous section, we will focus our attention on the connected component $\Ext_{1}$ of the exterior region such that $\rd_V r>0$ and $\rd_U r<0$. All the results in this section of course hold in the other connected component of the exterior region after the following replacements: $\mathfrak{L}_{(\omg_{0}) 0}\leftrightarrow \mathfrak{L}_{(\omg_{0}) 0}'$, $\mathfrak{L}\leftrightarrow \mathfrak{L}'$, $\mathfrak{L}_{(\omg_{0}) \infty}\leftrightarrow \mathfrak{L}_{(\omg_{0}) \infty}'$, $U \leftrightarrow V$, $u\leftrightarrow v$, $\rho \leftrightarrow -\rho$.

Our first main theorem states that nonvanishing of $\mathfrak{L}_{(\omg_{0}) \infty}$ implies some integrated lower bound for the incoming radiation $\rd_{v} \phi$ along $\EH$. 

\begin{theorem} \label{thm:blowup}
For $\omg_{0} > 2$, let $\Tht = (r, f, h, \ell, \phi, \dot{\phi}, \e)$ be an $\omg_{0}$-admissible data set, and let $(\calM, g, \phi, F)$ be the corresponding maximal globally hyperbolic future development. Suppose that
\begin{equation*}
	\mathfrak{L}_{(\omg_{0}) \infty} \neq 0.
\end{equation*}
Then for an advanced null coordinate $v$ such that
\begin{equation*}
	C^{-1} < \inf_{\EH} \frac{\rd_{v} r}{1-\mu} \leq \sup_{\EH} \frac{\rd_{v} r}{1-\mu} < C
\end{equation*}
for some $C > 0$, we have
\begin{equation*}
	\int_{\EH} v^{\alp} (\rd_{v} \phi)^{2} \, \ud v = \infty
\end{equation*}
for every $\alp > \min\{ 2 \omg_{0} + 1 , 7 \}$. 
\end{theorem}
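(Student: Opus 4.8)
The plan is to prove the contrapositive: assuming that $\int_{\EH} v^{\alp}(\rd_v\phi)^2\,\ud v < \infty$ for some $\alp > \min\{2\omg_0+1,7\}$, we propagate this bound as a family of weighted $L^2$-estimates outward from the event horizon all the way to future null infinity, and show that this forces $\mathfrak{L}_{(\omg_0)\infty} = 0$. The key analytic input is the Dafermos--Rodnianski Price's law decay (to be recalled in Section~\ref{sec:bg}), which already gives us pointwise polynomial decay for $\phi$, $\rd_u\phi$, $\rd_v\phi$ and convergence of the geometric quantities $r$, $\varpi$, $\f{\rd_u r}{1-\mu}$, $\f{\rd_v r}{1-\mu}$ to their Reissner--Nordstr\"om values throughout $\Ext_1$. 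The point of the argument is that we do \emph{not} need the sharp rates: the hypothesis $\int_{\EH} v^{\alp}(\rd_v\phi)^2 = \infty$ failing is a strong piece of information on the event horizon that is more than enough to seed the hierarchy.

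The main steps I would carry out are as follows. First, introduce an advanced null coordinate $v$ normalized so that $\f{\rd_v r}{1-\mu}$ is comparable to $1$ on $\EH$ (the hypothesis of the theorem), and set up a region $\{U_{\EH} - \dlt \le U \le U_{\EH}\} \cap \{v \ge v_0\}$ adjacent to the event horizon, foliated by the outgoing cones $C_U$. Second, using the wave equation $\rd_u\rd_v(r\phi) = \f{2(\varpi - \f{\e^2}{r})}{r^2}\f{\rd_u r}{1-\mu}(r\phi) \cdot(\text{something}) $ — more precisely the equation satisfied by $r\phi$ derived from \eqref{eq:EMSF-r-phi-m} — together with the Raychaudhuri equations \eqref{eq:EMSF-ray}, derive a weighted energy estimate: multiply $\rd_v(r\phi)$ by $v^{\alp}\rd_v(r\phi)$, integrate over a characteristic rectangle, and control the bulk and transversal error terms by the Price's law decay. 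This transports the finiteness of $\int v^{\alp}(\rd_v\phi)^2$ from $\EH$ to a neighborhood, i.e. $\sup_U \int_{C_U} v^{\alp}(\rd_v\phi)^2\,\ud v < \infty$. Third — and this is the heart — iterate/continue this estimate across the entire exterior region up to $\NI$; because the exterior is globally foliated by the $C_U$ and the geometry is uniformly close to Reissner--Nordstr\"om (with quantitative decaying corrections from Price's law), a Gr\"onwall-type argument in $U$ closes the weighted energy bound up to $\NI$, possibly after also propagating a complementary weighted bound for the incoming energy $\int_{\uC_v} |u|^{\text{something}}(\rd_u\phi)^2\,\ud u$. Fourth, translate the resulting bound at null infinity into a statement about the radiation field: one shows $\Phi(u) = \lim r\phi$ decays fast enough — faster than the critical rate — that the integral $\mathfrak{L} = \int_{\NI} 2M(u)\Phi(u)\Gmm(u)\,\ud u$ (and the initial-data piece $\mathfrak{L}_{(\omg_0)0}$ when $\omg_0 < 3$, handled separately via the asymptotic flatness conditions \eqref{eq:adm-id-af}--\eqref{eq:adm-id-phi}) is forced to vanish. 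The threshold $\alp > \min\{2\omg_0+1, 7\}$ arises precisely from balancing the weight needed to kill the $\NI$-integrand against the decay the initial data already supplies near $i^0$; the $7$ matches the linear result of Theorem~\ref{linear.thm}.

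The step I expect to be the main obstacle is the third one: propagating the weighted $L^2$-estimate globally across $\Ext_1$ up to $\NI$ with a constant that stays finite. Near the event horizon the geometry is essentially stationary and the red-shift helps, but as one moves toward null infinity the $r$-weights and the $v$-weights interact, and the error terms coming from the fact that $\f{\rd_u r}{1-\mu}$ and $\f{\rd_v r}{1-\mu}$ are only \emph{approximately} the Reissner--Nordstr\"om values (rather than exactly) must be shown to be integrable in the relevant sense. This is where the robustness emphasized in the introduction matters: the Price's law bounds of \cite{DRPL} give decay rates with room to spare, so the Gr\"onwall constants converge; one does not need the sharp $u^{-2}$ rate for $\Phi$ nor sharp rates for $\rd_v\phi$ on $\EH$. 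A secondary technical point is the careful treatment of the coordinate change between the Eddington--Finkelstein-type $v$ in which the theorem is stated and the $V$ normalized as in Lemma~\ref{lem:cauchy-to-char} used elsewhere, but this only affects the precise value of the constant $C$ and not the structure of the argument. Once the global weighted bound is in hand, the extraction of $\mathfrak{L}_{(\omg_0)\infty} = 0$ is a direct computation using the definitions in Definition~\ref{def.L} and the monotonicity of $\varpi$ along $\NI$.
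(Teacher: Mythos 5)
Your high-level plan --- argue by contradiction, seed from the assumed weighted $L^2$ bound on $\EH$, and propagate it outward to force $\mathfrak{L}_{(\omg_0)\infty}=0$ --- is the route the paper takes, and you correctly single out propagation up to $\NI$ as the critical difficulty. However, the mechanism you propose there does not close. A sideways weighted-energy Gr\"onwall for $v^{\alp}(\rd_v\phi)^2$, run in $u$ from $\EH$ outward, has bulk coefficients of order $\frac{\rd_v r}{r}$ and $\frac{|\rd_u r|}{r}$ (from the terms $\frac{\rd_v r}{r}\rd_u\phi$, $\frac{\rd_u r}{r}\rd_v\phi$ in the wave equation for $\phi$). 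Since $\rd_v r$ and $|\rd_u r|$ are bounded below away from the horizon, the total integrated potential out to $\{r=\Lmb r_{\EH}\}$ grows at least like $\log\Lmb$, and the Gr\"onwall constant diverges as $\Lmb\to\infty$. This is structural: $\frac{\rd_v r}{r}$ is the main coefficient, not a decaying error that Price's law ``room to spare'' could tame. This is exactly why the paper's Step~2 (Proposition~\ref{contra.large.r}) deliberately stops at a fixed $\{r=\Lmb r_{\EH}\}$ with $\Lmb$-dependent constants.

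To get uniformity up to $\NI$, the paper's Step~3 (Proposition~\ref{prop.contra.unif}) switches to the transport equation for $\frac{\rd_V(r\phi)}{\rd_V r}$, whose linear coefficient $\frac{2(\varpi-\e^2/r)}{r^2}\frac{\rd_u r\,\rd_V r}{1-\mu}$ has total $u$-integral of order $\varpi_i/r_{\min}$ --- not merely bounded but \emph{small} for $r\ge \Lmb r_{\EH}$ with $\Lmb$ large. The forcing again involves $\phi$, so one must close a bootstrap between $\sup|\phi|$ and $\sup|\rd_V(r\phi)/\rd_V r|$; the loop closes precisely because the self-consistency term carries the small factor $\sim\varpi_i/\Lmb$, while the $v^{-\alp/2+\frac12}$ decay of $\phi$ from Step~2 controls the forcing in the intermediate region (not a Gr\"onwall constant). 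A secondary imprecision is in your Step~4: fast decay of $\Phi$ only makes the integral defining $\mathfrak{L}$ converge --- it does not make it vanish. What actually forces $\mathfrak{L}_{(\omg_0)\infty}=0$ is that $\mathfrak{L}_{(\omg_0)\infty}$ (up to explicit constants) is the leading coefficient of the $v^{-\min\{\omg_0,3\}}$ tail of $\frac{\rd_v(r\phi)}{\rd_v r}$ on a large constant-$r$ curve near $\NI$; a nonzero coefficient yields $\int v^{2\min\{\omg_0,3\}-1}(\rd_v(r\phi))^2\,\ud v=\infty$ there, incompatible with the upper bound from Step~2 once $\alp>2\min\{\omg_0,3\}+1$.
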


The proof of this theorem will be given in Section~\ref{sec:blowup}.

Our next theorem asserts stability of the quantity $\mathfrak{L}$, which is the dynamically defined part of $\mathfrak{L}_{(\omg_{0}) \infty}$ in the case $\omg_{0} \geq 3$ (observe that $\mathfrak{L}_{(\omg_{0}) 0}$ is determined by the initial data). 
\begin{theorem} \label{thm:L-stability}
Fix $\omg_0>2$. Let $\Tht = (r, f, h, \ell, \phi, \dot{\phi}, \e)$ and $\overline{\Tht} = (\rbg, \fbg, \hbg, \ellbg, \phibg, \dphibg, \ebg)$ be $\omg_0$-admissible data sets (cf. Definition~\ref{def:adm-data}) such that $d_{1, \omg_0}^{+}(\Tht, \overline{\Tht}) < \eps$, where\footnote{Note that $d^{+}_{1, \omg}$ is essentially $d_{1,\omg}$, except that it has weights only near one asymptotically flat end; see Definition~\ref{def:adm-top}.}
\begin{equation*}
\begin{aligned}
	d_{1, \omg_0}^{+} (\Tht, \overline{\Tht}) := 
	& \nrm{\brk{\rho_{+}} \log (f / \fbg)(\rho)}_{C^{0}} 
	+ \nrm{\brk{\rho_{+}}^{2} \rd_{\rho} \log (f / \fbg)(\rho)}_{C^{0}} 
	+ \nrm{\brk{\rho_{+}}^{2}  (h - \hbg) (\rho)}_{C^{0}} \\
	& + \nrm{\log^{-1}(1+\brk{\rho_{+}}) (r - \rbg)(\rho)}_{C^{0}} 
	+ \nrm{\brk{\rho_{+}} \rd_{\rho} (r - \rbg)(\rho)}_{C^{0}} 
	+ \nrm{(f \ell - \fbg \ellbg)(\rho)}_{C^{0}}  \\
	& + \nrm{\brk{\rho_{+}}^{\omg_0} (\phi - \phibg)(\rho)}_{C^{0}} 
	+ \nrm{\brk{\rho_{+}}^{\omg_0+1} \rd_{\rho} (\phi- \phibg)(\rho)}_{C^{0}} 
	+ \nrm{\brk{\rho_{+}}^{\omg_0+1} (f \dot{\phi} - \fbg \dphibg)(\rho)}_{C^{0}}  + |{\bf e} - \ebg|.  
\end{aligned}
\end{equation*}
Here, $\brk{\rho_{+}} := (1 + \rho_{+}^{2})^{1/2}$ and $\rho_{+} := \max\set{0, \rho}$.

Then, for $\mathfrak L:=\mathfrak L[\Tht]$ and $\overline{\mathfrak L}:=\mathfrak L[\overline{\Tht}]$, there exists a constant $C_{\overline{\Tht}}$, which depends only on $\overline{\Tht}$, such that
\begin{equation*}
	\Abs{\mathfrak{L} - \overline{\mathfrak{L}}} \leq C_{\overline{\Tht}} \, \eps.
\end{equation*}
\end{theorem}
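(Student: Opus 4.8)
The plan is to estimate the difference $\mathfrak{L} - \overline{\mathfrak{L}}$ by carefully choosing, for each of the two solutions, a future-normalized double null gauge on $\Ext_1$ (and slightly beyond, to accommodate the teleological event horizon) in which the geometric quantities and the scalar field can be compared. Concretely, since $\mathfrak{L} = \int_{\NI} 2 M(u) \Phi(u) \Gmm(u) \, \ud u$, we write
\begin{equation*}
	\mathfrak{L} - \overline{\mathfrak{L}} = \int_{\NI} 2 \left( M \Phi \Gmm - \Mbg \Phibg \Gmmbg \right) (u) \, \ud u,
\end{equation*}
and the task reduces to (i) showing that the integrands are close pointwise in $u$ with an appropriate decay-in-$u$ weight, uniformly in $\eps$, and (ii) showing that the $u$-integral converges with room to spare so that the bound can be integrated. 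Step (i) in turn splits into controlling $|\Phi - \Phibg|$, $|M - \Mbg|$, and $|\Gmm - \Gmmbg|$ along $\NI$. The key structural point, already invoked in the footnote at the end of Section~\ref{linear.lower.bound.comments}, is that one only needs $|\Phi - \Phibg|(u) = O(\eps u^{-\bt})$ for \emph{some} $\bt > 1$, which is far weaker than the sharp $O(u^{-2})$ Price-law decay of $\Phi$ itself; this slack is what makes the nonlinear argument feasible without sharp upper bounds.

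The first genuine step is to split the exterior region, as announced in the outline (Sections~\ref{sec:extr}--\ref{sec:L-stability}), into four pieces: a compact region handled by Cauchy stability for the symmetry-reduced system \eqref{eq:EMSF-wave}--\eqref{eq:EMSF-ray-orig}; a neighborhood of spatial infinity $i^0$, where the smallness of $d_{1,\omg_0}^{+}(\Tht,\overline{\Tht}) < \eps$ on the initial data translates, via Lemma~\ref{lem:cauchy-to-char} and the constraint/Raychaudhuri equations, into smallness of the characteristic data on an ingoing cone; a neighborhood of null infinity with bounded retarded-time range (finite $u$-interval), handled again by a Cauchy-stability-type argument with the weights from Definition~\ref{def:adm-top} propagated; and finally the ``far'' region near $\NI$ with $u$ large, which is the technical heart. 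For the first three regions the mechanism is essentially a Grönwall/bootstrap argument on the difference equations obtained by subtracting the equations for $(\rbg, \phibg, \log\Omgbg, \varpibg)$ from those for $(r, \phi, \log\Omg, \varpi)$; the Dafermos--Rodnianski decay estimates recalled in Section~\ref{sec:bg} provide the background control needed to close these, and the $r$-weighted norms in $d_{1,\omg_0}^{+}$ are tailored so that the weights are consistent with the $\omg_0$-admissibility asymptotics \eqref{eq:adm-id-af}--\eqref{eq:adm-id-phi}.

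The main obstacle, and the reason Section~\ref{sec:L-stability} is singled out as the technical heart, is the far region: here one must propagate a \emph{small} difference over an \emph{unbounded} retarded-time interval, on a background solution that is only large-data-controlled (not close to Reissner--Nordstr\"om until late times), and extract a quantitative decay rate $\bt > 1$ in $u$ for the difference of radiation fields. My plan here is to run the Dafermos--Rodnianski $r^p$-weighted energy hierarchy \cite{DRNM} on the difference quantity: the difference of scalar fields satisfies a wave equation whose right-hand side consists of linear terms (with background-dependent coefficients that decay by Price's law) plus quadratic error terms in the difference; using the $r^p$-estimates for $p$ in a suitable range one obtains a sequence of decaying energy fluxes through the outgoing cones, and a standard interpolation/pigeonhole converts these into the pointwise decay $|\Phi - \Phibg|(u) \lesssim \eps \, u^{-\bt}$. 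The quadratic errors must be absorbed via a bootstrap assumption on the smallness of the difference, which is permissible precisely because the three easier regions already establish smallness on the boundary of the far region. Once $|\Phi - \Phibg|$, $|M - \Mbg|$, $|\Gmm - \Gmmbg|$ are controlled with such weights along $\NI$ and the background factors $M, \Phi, \Gmm$ are bounded (again by Section~\ref{sec:bg}), the integrand difference is $O(\eps u^{-1-\delta})$ for some $\delta > 0$, and integrating in $u$ yields $|\mathfrak{L} - \overline{\mathfrak{L}}| \leq C_{\overline{\Tht}}\, \eps$, with $C_{\overline{\Tht}}$ depending only on the background solution through the constants in the Price-law and $r^p$ estimates. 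The care needed in choosing the future-normalized gauge for each solution — so that, e.g., $\tfrac{\rd_v r}{1-\mu}$ is normalized on the respective event horizons and the two coordinate systems can be compared despite the teleological nature of $\EH$ — is a recurring subtlety threaded through all four regions.
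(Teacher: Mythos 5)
Your proposal matches the paper's proof structure quite closely: the same reduction via a four-region decomposition (compact Cauchy stability, large-$r$ Cauchy stability near $i^{0}$, large-$r$ characteristic stability on a finite $u$-range near $\NI$, and the far region near timelike infinity), the same appeal to Dafermos--Rodnianski $r^p$-weighted energy estimates on the difference, the same bootstrap to absorb quadratic errors, and the same emphasis on the gauge subtleties coming from future-normalized coordinates for each solution. You also correctly identify the crucial slack: only $|\Phi - \Phibg|(u) = O(\eps\, u^{-\bt})$ with $\bt > 1$ is needed, far weaker than the sharp Price-law decay of $\Phi$ itself.

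There is, however, one genuine gap in the last step. The $r^p$-weighted hierarchy plus the pigeonhole argument give you decay of the \emph{nondegenerate energy flux} $\dot{E}[\phidf](\tau) \lesssim \tau^{-3+\eta_0}\eps^{2}$, and then Hardy's inequality upgrades this to a pointwise bound of the form
\begin{equation*}
	r^{1/2}\,|\phi - \phibg|(u,v) \lesssim u^{-\frac{3}{2}+\frac{\eta_0}{2}}\, \eps .
\end{equation*}
Note the $r^{1/2}$ weight: as $r \to \infty$ this bound is \emph{vacuous} for the radiation field $\Phi - \Phibg = \lim_{r\to\infty}(r\phi - \rbg\phibg)$, since the correct weight there is $r$, not $r^{1/2}$. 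Your claim that ``interpolation/pigeonhole converts these into the pointwise decay $|\Phi - \Phibg|(u) \lesssim \eps u^{-\bt}$'' therefore does not follow. To remove the half-power loss one needs a genuinely additional step: one integrates the wave equation for $\psidf = r\phi - \rbg\phibg$ in the $u$-direction (from $\Sgm_0$ or from $\gmm_{R_0}$) along each outgoing cone, uses the $r^{1/2}$-weighted decay bound together with Price-law decay of the background to control the backscattering source terms, and closes a bootstrap on $\sup u^{\bt}|\psidf|$ in a large-$r$ region. This is the integration-along-characteristics mechanism from \cite{LO1} (and Section~\ref{sec.contra.unif} of this paper), carried out here in Proposition~\ref{prop:psidf-decay}. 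Without it the argument stops short of the radiation field estimate. A second, less serious omission is that the $r^p$-weighted argument cannot even be set up before a separate \emph{weak stability} result (Theorem~\ref{thm:weak-st}) is proved: this requires boundedness of nondegenerate energy, an integrated local energy decay estimate, and in particular control of the coordinate change from the initial-data-normalized gauge (in which $d_{1,\omg_0}^{+}$ is measured) to the future-normalized gauges of each solution, which is a nontrivial preliminary given the teleological definition of $\EH$. You gesture at this but do not explain how the gauges are compared; in the paper this requires its own bootstrap, with vector-field multipliers including a red-shift multiplier and an interaction Morawetz estimate to absorb the nonlinear terms without pointwise bounds on $\rd\phidf$.
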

Section~\ref{sec:L-stability} will be devoted to the proof of Theorem~\ref{thm:L-stability}. 

Our final theorem concerns instability of the condition $\mathfrak{L}_{(\omg_{0}) \infty} = 0$. 
\begin{theorem} \label{thm:instability}
For $\omg_{0} \geq 3$, let $\overline{\Tht} = (\rbg, \fbg, \hbg, \ellbg, \phibg, \dphibg, \ebg)$ be an $\omg_{0}$-admissible data set. 
Suppose that
\begin{equation*}
	\mathfrak{L}_{(\omg_{0}) \infty} [\overline{\Tht}] = 0.
\end{equation*}
Then for some $\eps_{\ast} = \eps_{\ast}(\overline{\Tht}) > 0$, there exists a one-parameter family $(\Tht_{\eps})_{\eps \in (-\eps_{\ast}, \eps_{\ast})}$ of $\omg_{0}$-admissible initial data sets such that
\begin{itemize}
\item $\Tht_{0} = \overline{\Tht}$, 
\item $\mathfrak{L}_{(\omg_{0}), \infty}[\Tht_{\eps}] \neq 0$ for all $\eps \in (-\eps_{\ast}, \eps_{\ast})\setminus\{0\}$,
\item if $\overline{\Theta}\in C^k_{\omg_0}$ for $k\in\mathbb N$, $k\geq 2$, then $\eps \mapsto \Tht_{\eps}$ is continuous with respect to $d_{k, \omg}$ for all $\omg > 2$, 
\item $\mathfrak{L}_{(\omg_{0}) 0}[\Tht_{\eps}] = \mathfrak{L}_{(\omg_{0}) 0}[\overline{\Tht}]$ and $\mathfrak{L}_{(\omg_{0}) 0}'[\Tht_{\eps}] = \mathfrak{L}_{(\omg_{0}) 0}'[\overline{\Tht}]$ for all $\eps \in (-\eps_{\ast}, \eps_{\ast})$.
\end{itemize}
In fact, there exist $\bar{\rho}_{2} > \bar{\rho}_{1} \gg 1$ such that 
\begin{itemize}
\item $\Tht_{\eps} = \overline{\Tht}$ in $\set{\rho \in \Sgm_{0} : \rho < \bar{\rho}_{1}}$ for all $\eps \in (-\eps_{\ast}, \eps_{\ast})$,
\item denoting $\Tht_\eps = (r_\eps, f_\eps, h_\eps, \ell_\eps, \phi_\eps, \dot{\phi}_\eps, \e_\eps)$, it holds that
$$\overline{\phi}=\phi_\eps,\quad\fbg\overline{\dot{\phi}}=f_\eps\dot{\phi}_\eps $$
in $\set{\rho \in \Sgm_{0} : \rho > \bar{\rho}_{2}}$ for all $\eps \in (-\eps_{\ast}, \eps_{\ast})$.
\end{itemize}
\end{theorem}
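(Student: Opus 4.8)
\emph{Proof plan (following the scheme of Section~\ref{sec.ingredients}).} The plan is to perturb $\overline{\Tht}$ by a small, \emph{purely outgoing} scalar-field pulse supported very far out near the end $i^{0}$, so that (i) the pulse radiates to $\NI$ in a bounded retarded-time window near $i^{0}$, producing there a definite $\eps$-size change in $\mathfrak{L}$, while (ii) all remaining effects — the back-scattered radiation and the shift in the Bondi mass — are negligible by comparison. Concretely, one fixes $\bar{\rho}_{1} \gg 1$ (to be chosen) and a bump $\chi \in C^{\infty}_{c}((\bar{\rho}_{1}, \bar{\rho}_{2}))$ with $\chi \geq 0$ and $\int \chi \, \ud \rho > 0$, and sets $\phi_{\eps} = \overline{\phi} + \eps \chi$. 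Using Lemma~\ref{lem:cauchy-to-char}, I would then prescribe the perturbation of $f \dot\phi$ so that $\rd_{u}(r \phi)\restriction_{\Sgm_{0}}$ is unperturbed (this is what makes the pulse purely outgoing); since $\chi$ is supported in $(\bar{\rho}_{1}, \bar{\rho}_{2})$, this forces $f_{\eps} \dot\phi_{\eps} = \fbg \dphibg$ for $\rho > \bar{\rho}_{2}$, as required by the theorem. With the residual gauge freedom one normalizes $r_{\eps} = \rbg$ and $f_{\eps} \ell_{\eps} = \fbg \ellbg$, and solves the Hamiltonian and momentum constraints \eqref{ham.con}--\eqref{mom.con} (a system of first-order ODEs in $\rho$) with background values imposed at some $\rho_{0} < \bar{\rho}_{1}$; for $\eps$ small this solution exists on all of $\Sgm_{0}$ and, by inspection, yields an $\omg_{0}$-admissible $\Tht_{\eps}$ that (a) equals $\overline{\Tht}$ for $\rho < \bar{\rho}_{1}$, (b) depends continuously on $\eps$ in $d_{k, \omg}$ when $\overline{\Tht} \in C^{k}_{\omg_{0}}$ (the perturbation being compactly supported, all weighted norms of $\Tht_{\eps} - \overline{\Tht}$ are finite and tend to $0$), and (c) satisfies $\mathfrak{L}_{(\omg_{0}) 0}[\Tht_{\eps}] = \mathfrak{L}_{(\omg_{0}) 0}[\overline{\Tht}]$ and $\mathfrak{L}_{(\omg_{0}) 0}'[\Tht_{\eps}] = \mathfrak{L}_{(\omg_{0}) 0}'[\overline{\Tht}]$ — the primed one because the data is literally unchanged for $\rho < \bar{\rho}_{1}$, and the unprimed one because, with $r_{\eps} = \rbg$, $f_{\eps} \ell_{\eps} = \fbg \ellbg$, $\phi_{\eps} = \overline{\phi}$ and $f_{\eps} \dot\phi_{\eps} = \fbg \dphibg$ for $\rho > \bar{\rho}_{2}$, the expression in \eqref{eq:adm-id-limits} is identical to that of $\overline{\Tht}$.

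Next, one chooses an outgoing null hypersurface $\{u = U_{pert}\}$ lying just to the future of the band occupied by the pulse, so that the ``wedge'' $\{u \leq U_{pert}\}$ is a neighborhood of $i^{0}$ containing the pulse together with the portion of $\NI$ into which it radiates. In this wedge the retarded time is bounded, so by Cauchy stability together with the decay/stability estimates near $i^{0}$ and near $\NI$-with-finite-retarded-time established in Section~\ref{sec:extr}, the map $\eps \mapsto \int_{\NI \cap \{u \leq U_{pert}\}} 2 M \Phi \Gmm \, \ud u$ (with $M, \Phi, \Gmm$ as in Definition~\ref{def.L} for the solution from $\Tht_{\eps}$) is $C^{1}$ in $\eps$, with derivative at $\eps = 0$ a linearized contribution $\mathfrak{c} = \mathfrak{c}(\bar{\rho}_{1}, \chi)$. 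Its dominant piece comes from the linearized radiation field of the pulse, which — the pulse living in the nearly-flat region $r \sim \bar{\rho}_{1}$ — is essentially the free radiation field, so that $\mathfrak{c} = 2 \overline{M}(i^{0}) \overline{\Gmm}(i^{0}) \int \rho \, \chi(\rho) \, \ud \rho + (\hbox{lower order in } \bar{\rho}_{1}^{-1})$. Since $\overline{M}(i^{0})$ is the ADM mass, which is positive (indeed $\geq \varpi_{\EH} > |\e|$ by monotonicity of $\varpi$ and subextremality, cf.\ Proposition~\ref{prop.subextremality}), $\overline{\Gmm}(i^{0}) \neq 0$ (as $\rd_{u} r$ and $1-\mu$ are each bounded away from $0$ near $i^{0}$), and $\int \rho \, \chi \, \ud \rho > 0$, the leading term is nonzero; so for $\bar{\rho}_{1}$ large $\mathfrak{c} \neq 0$, and — what matters below — the future error will be a vanishing \emph{fraction} of $|\eps \mathfrak{c}|$.

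For the remaining contribution $\int_{\NI \cap \{u > U_{pert}\}} 2 M \Phi \Gmm \, \ud u$, one notes that on $\{u = U_{pert}\}$ the solution from $\Tht_{\eps}$ differs from the background solution only through the part of the outgoing pulse that back-scattered off the weak-field potential $\sim \tfrac{\varpi}{r^{3}}$ in $\{r \gtrsim \bar{\rho}_{1}\}$, and through the (tiny, $O(\eps \bar{\rho}_{1}^{-\omg_{0}-1})$) mass shift; hence the deviation on $\{u = U_{pert}\}$ is smaller than the pulse by a factor $\eta_{\bar{\rho}_{1}} \to 0$ as $\bar{\rho}_{1} \to \infty$. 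Viewing $\{u = U_{pert}\} \cup (\Sgm_{0} \cap \{u \geq U_{pert}\})$ as (mixed, partly characteristic) data for the subregion $\{u \geq U_{pert}\}$ — on which the two solutions now have data differing by $\leq \eta_{\bar{\rho}_{1}} \times (\hbox{pulse size})$ — one invokes Theorem~\ref{thm:L-stability} (more precisely, the decay estimates proven in Section~\ref{sec:L-stability}, which apply with this data just as well) to conclude that the two corresponding contributions to $\mathfrak{L}$ differ by at most $\eta'_{\bar{\rho}_{1}} \times |\eps \mathfrak{c}|$ with $\eta'_{\bar{\rho}_{1}} \to 0$. Combining: since $\mathfrak{L}_{(\omg_{0}) \infty}[\overline{\Tht}] = 0$ and $\omg_{0} \geq 3$ give $\mathfrak{L}_{(\omg_{0}) 0}[\overline{\Tht}] = - \overline{\mathfrak{L}}$, property (c) yields $\mathfrak{L}_{(\omg_{0}) \infty}[\Tht_{\eps}] = \mathfrak{L}[\Tht_{\eps}] - \overline{\mathfrak{L}} = \eps \mathfrak{c} \, (1 + O(\eta'_{\bar{\rho}_{1}})) + o(\eps)$; fixing $\bar{\rho}_{1}$ so large that $|O(\eta'_{\bar{\rho}_{1}})| < \tfrac12$ and then $\eps_{\ast} = \eps_{\ast}(\overline{\Tht})$ so small that the $o(\eps)$ term is $< \tfrac12 |\eps \mathfrak{c}|$, one gets $\mathfrak{L}_{(\omg_{0}) \infty}[\Tht_{\eps}] \neq 0$ for all $0 < |\eps| < \eps_{\ast}$.

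\emph{Expected main obstacle.} The hard part is the last step. One needs not merely that the perturbed solution is $O(\eps)$-close to the background (which is all Theorem~\ref{thm:L-stability} as stated provides, and which would be useless here, the main term being itself $O(\eps)$), but that the \emph{future} contribution to $\mathfrak{L}$ is quantitatively a vanishing fraction of the main term as $\bar{\rho}_{1} \to \infty$. This forces a genuinely two-scale analysis — tracking the perturbation size $\eps$ and the ``distance to $i^{0}$'' scale $\bar{\rho}_{1}$ simultaneously — so the sharp retarded-time decay estimates of Section~\ref{sec:L-stability} must be applied with explicit dependence on where the pulse is supported, and one must also check that the partly characteristic data on $\{u = U_{pert}\}$ is compatible with re-running those estimates. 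A secondary difficulty, absent in the linear analysis of \cite{LO.instab} where $\mathfrak{L}$ is linear in the field and $\mathfrak{c} \neq 0$ is immediate, is controlling the linearization of the nonlinear map $\Tht \mapsto (M, \Phi, \Gmm)$ in the middle step and ruling out a cancellation that would annihilate the leading term.
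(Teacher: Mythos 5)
Your overall strategy matches the paper's: perturb $\overline{\Tht}$ by a localized pulse near $i^{0}$, choose it so that $\mathfrak{L}_{(\omg_0)0}$ and $\mathfrak{L}_{(\omg_0)0}'$ are unchanged, compute the main $O(\eps)$ effect on $\NI$ near spatial infinity, and use the stability machinery to show the remaining contribution to $\mathfrak{L}$ is a vanishing fraction of the main term as the pulse is pushed out. You also correctly flag the two genuine difficulties (the need for a ``vanishing fraction'' estimate, not merely an $O(\eps)$ one, and the compatibility with characteristic data). However there is a substantive error in the construction, and the key mechanism that resolves your stated ``main obstacle'' is missing. \emph{Direction of the pulse.} You prescribe the perturbation of $f\dot\phi$ so that $\rd_U(r\phi)\restriction_{\Sgm_0}$ is unperturbed and call the result outgoing. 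This is backwards for the end with $\rho\to+\infty$ (where $\rd_U r<0$, $\rd_V r>0$): an outgoing wave has $\psi-\psibg$ a function of $U$ alone, hence $\rd_V(\psi-\psibg)=0$, i.e.\ it is $\rd_V(r\phi)$ that must be kept unperturbed. With your prescription $\psi-\psibg$ becomes a function of $V$ alone, an \emph{ingoing} pulse that propagates toward $\EH$ and only reaches $\NI$ via back-scattering; its first-order contribution to $\Phi$ near $i^0$ is zero, so the computation of $\mathfrak{c}$ collapses. The paper (Section~\ref{subsec:inst-key}) perturbs $\rd_U\psi$ by $-\eps\chi(\rho-\rho_{pert})$ while leaving $\rd_V\psi$ fixed, so that $\psi-\psibg\approx\eps\tilde\chi(-U+U_{pert})$ propagates to $\NI$.

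\emph{Normalization and the error estimate.} The paper takes $\chi$ with $\int\chi=0$ (and $\int\tilde\chi>0$) and places it in $\rd_U\psi$, so that $\psi-\psibg=\eps\tilde\chi$ is compactly supported and of $O(1)$ amplitude, uniformly in $\rho_{pert}$; the main term in $\mathfrak{L}$ is then $\approx c_0\varpibg_i\eps$, bounded below independently of $R_{pert}$, while the initial data difference fed into the future stability estimate carries a factor $R_{pert}^{\omg-3}\to 0$ for any chosen $\omg\in(2,3)$. Your choice of perturbing $\phi$ directly by a fixed positive bump makes the amplitude of $\psi-\psibg$ (and hence of the radiation field) grow like $\rho_{pert}$, which is exactly the source of the two-scale bookkeeping you identify as the obstacle; the paper's normalization avoids it. Finally, the stability input is not Theorem~\ref{thm:L-stability} with modified data but Theorem~\ref{thm:L-st-ch}, the characteristic-IVP stability theorem with constants depending only on $(\omg,\eta_0,\varpibg_f^{-1}|\ebg|,\Lmb)$, applied on the regions $\calR_3,\calR_4$; the appeal to ``the decay estimates proven in Section~\ref{sec:L-stability}, which apply with this data just as well'' skips precisely the uniform-constant formulation that the paper set up deliberately for this application (cf.\ Remark~\ref{rem:L-stability-char}). (You also do not need the $C^1$ dependence of $\eps\mapsto\mathfrak{L}$ you propose; the paper establishes \eqref{eq:inst-key} by direct estimates in Lemmas~\ref{lem:inst-bg}--\ref{lem:inst-s}.)
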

This theorem will be proved in Section~\ref{sec:instability}.

\section{Consequences of Price's law decay} \label{sec:bg}

In this section, we derive decay estimates in the exterior region of the maximal globally hyperbolic future development of an \emph{arbitrary} admissible Cauchy data set. The estimates are based on the remarkable theorem of Dafermos--Rodnianski \cite{DRPL} on Price's law, which provides quantitative decay rates (towards timelike infinity) of the scalar field in the maximal globally hyperbolic future development of the characteristic initial value problem. In order to apply the Dafermos--Rodnianski theorem, we need to first prove decay estimates near spacelike infinity to reduce a Cauchy problem to a characteristic initial value problem. After stating the results in \cite{DRPL}, we also derive various ramifications of Price's law, including in particular decay estimates for the geometric quantities (in an appropriate gauge) and for higher derivatives of the scalar field, which will play a fundamental role in the remainder of the paper. In the process, we introduce the null coordinates $U, V$ (referred to as \emph{initial-data-normalized}) and $u, v$ (referred to as \emph{future-normalized}), and discuss the different purposes they serve in the subsequent analysis.

This section is structured as follows:
\begin{itemize}
\item {\bf Section~\ref{subsec:bg-coords}.} We begin with the introduction of the $(U,V)$, $(u,V)$ and $(u,v)$ coordinate systems. $U$ and $V$ are normalized with respect to the (Cauchy) initial data, whereas $u$ and $v$ are normalized with respect to the future null infinity and the event horizon respectively.
\item {\bf Section~\ref{subsec:bg-large-r}.} We prove some bounds in a neighborhood of spacelike infinity that are necessary to justify the application of the results in \cite{DRPL} in our setting. We work in the \emph{initial-data-normalized null coordinates} $(U, V)$.
\item {\bf Section~\ref{subsec:DR-full}.} We cite the theorem of Dafermos--Rodnianski (Theorem~\ref{thm:DR-full}) and discuss some immediate consequences of it (Corollaries~\ref{cor:DR-small-r}, \ref{cor:DR-large-r} and \ref{cor:DR-final}). To measure the decay towards timelike infinity in a large $r$ region, the \emph{future-normalized (retarded) null coordinate} $u$ is used.
\item {\bf Section~\ref{subsec:bg-uv}.} We prove bounds on fundamental geometric quantities using Price's law (Proposition~\ref{prop:bg-uv}). In order that the geometric quantities decay, we will use the \emph{future-normalized (retarded and advanced) null coordinates} $(u,v)$.
\end{itemize}

\subsection{Initial-data- and future-normalized null coordinates} \label{subsec:bg-coords}
Let\footnote{{Notice that all results obviously still hold for $\omg_{0}>3$ \underline{as long as} we replace $\omg_0$ by $\min\{\omg_0,3\}$ appropriately in the decay estimates.}} $2 < \omg_{0} \leq 3$. \textbf{In this section, we fix an $\omg_{0}$-admissible Cauchy data set $\Tht$ on $\Sgm_{0}$}. We consider its maximal globally hyperbolic future development $(\calM, g, \phi, F)$, as well as the projection of this development onto $\mathcal Q$. Let us emphasize that in this section, $\omg_0$ will be considered fixed; we will also use the parameters $\omg$ and $\omg'$, which will be compared to the fixed $\omg_0$.

In the remainder of this short subsection, we introduce and discuss various null coordinates used throughout this paper. 

\subsubsection*{Initial-data-normalized coordinates}
We say that a double null coordinate system $(U, V)$ on $\PD$ is \emph{initial-data-normalized} if the following normalization conditions hold on $\Sgm_{0}$:
\begin{equation} \label{eq:coord-UV}
	\frac{\ud}{\ud \rho} U \restriction_{\Sgm_{0}} = - 1, \quad 
	\frac{\ud}{\ud \rho} V \restriction_{\Sgm_{0}} = 1.
\end{equation}
These are precisely the null coordinates considered in Lemma~\ref{lem:cauchy-to-char}. 

Henceforth, we abuse the notation a bit and consider $\PD$ and $\Sgm_{0}$ as subsets of $(U, V) \in \bbR^{1+1}$. Given $(U, V) \in \PD$, let $\underline{U}(V)$ and $\underline{V}(U)$ be defined by $(\underline{U}(V), V) \in \Sgm_{0}$ and $(U, \underline{V}(U)) \in \Sgm_{0}$, respectively (i.e., $C_{U} \cap \Sgm_{0} = \set{(U, \underline{V}(U))}$ and $\uC_{V} \cap \Sgm_{0} = \set{(\underline{U}(V), V)}$, where $C_U$, $\uC_{V}$ are as in Definition~\ref{def:C.uC}). We also introduce the notation $\underline{\ups} (U) = r(U, \underline{V}(U))$ (i.e., $\underline{\ups}(U)$ is the value of $r$ at the point $C_{U} \cap \Sgm_{0}$). 

Although easy to define globally on $\PD$, the initial-data-normalized coordinates $(U, V)$ are ill-suited for describing the asymptotic behavior of solutions near timelike infinity (which exists for each asymptotically flat end). For this reason, when analyzing a single solution, it is often more advantageous to use future-normalized coordinates (introduced just below) adapted to the particular asymptotically flat end. Nevertheless, the initial-data-normalized coordinates are indispensible when we consider the difference of two solutions, since the difference of data (which is coordinate dependent!) is originally given in these coordinates. 
%In the proof of asymptotic stability (Section~\ref{sec:L-stability}), a major technical issue is to relate the difference of solutions measured in the future-normalized coordinates with the difference of data in the initial-data-normalized coordinates.

\subsubsection*{Future-normalized coordinates}
By Theorem~\ref{thm:kommemi}, there exists a connected component of the exterior region  corresponding to each asymptotically flat end, bounded to the future by null infinity $\NI$ and an event horizon $\EH$. As mentioned before, we focus on the end where $\rho \to \infty$ along $\Sgm_{0}$. In this context, we introduce the \emph{future-normalized retarded null coordinate} $u = u(U)$, so that the following condition holds on $\NI$:
\begin{equation}  \label{eq:coord-u-bg-0}
	\lim_{V \to \infty} \frac{-\rd_{u} r}{1-\mu}(u, V) = 1. 
\end{equation}
As a consequence of future completeness of null infinity $\NI$ (see \cite[\S~5.4]{DafTrapped} for the precise definition) and the normalization \eqref{eq:coord-u-bg-0}, we have $u_{\EH} = \lim_{U \to U_{\EH}^{-}} u(U) = \infty$, i.e., the event horizon $\EH$ now lies in the ideal curve $\set{(u, V) : u = \infty}$; we omit the simple proof of this statement. In particular, the $(u,V)$ coordinate system only covers the (connected component of) the exterior region. Moreover, null infinity corresponds to the ideal curve $\set{(u, V) : V = \infty}$.

The coordinate system $(u, V)$ is suitable for describing the decay of the scalar field near timelike infinity. Indeed, it is essentially the coordinate system used in Dafermos--Rodnanski \cite{DRPL}; see Theorem~\ref{thm:DR-full} below.

Finally, we also introduce the \emph{future-normalized advanced null coordinate} $v = v(V)$, so that the following condition holds on $\EH$:
\begin{equation} \label{eq:coord-v-bg-0}
	\frac{\rd_{v} r}{1-\mu}(u=\infty, v) = 1.
\end{equation}
Here, in addition to $u$, which is adapted to null infinity $\NI$, the advanced null coordinate $v$ is adapted to the event horizon $\EH$. When restricted to an exact Reissner--Nordstr\"om solution, note that $(u, v)$ coincides with the Eddington--Finkelstein coordinates. Accordingly, in  Proposition~\ref{prop:bg-uv} below, we show that the geometric quantities in the coordinate system $(u, v)$ asymptotes (in a suitable sense) to the corresponding quantities in the Reissner--Nordstr\"om spacetime with the parameters $(\varpi_{\EH}, \e)$.

When working in the coordinate systems $(u, V)$ or $(u, v)$, we again abuse notation and regard the exterior of the black hole in $\PD$ and $\Sgm_{0}$ as subsets of $(u, V) \in \bbR^{1+1}$ or $(u, v) \in \bbR^{1+1}$, respectively. In the coordinates $(u, V)$, we define $\underline{u}(V)$, $\underline{V}(u)$ and $\underline{\ups}(u)$ similarly as before.

\begin{remark}  \label{rem:coord-trans}
In each of the coordinate systems $(U, V)$, $(u, V)$ and $(u, v)$, note that there remains the freedom of translating the coordinates, i.e., $(U, V) \mapsto (U + U_{0}, V  +V_{0})$ etc.
\end{remark}

\subsection{Preliminary bounds near spacelike infinity} \label{subsec:bg-large-r}

From this subsection onwards, we will prove decay estimates in the exterior region of a fixed solution. To achieve this, we will in particular apply \cite{DRPL}. For this purpose, it is convenient to introduce the following quantities: the ADM mass, which is useful as it gives a priori control of the modified mass $\varpi$, and $\psi$, which is more convenient for deriving decay estimates near (spatial and null) infinity.

\begin{definition}\label{def.ADM.psi}
\begin{enumerate}
\item (Definition of ADM mass) Define the ADM mass as the following limit on the initial hypersurface:
$$\varpi_i:=\lim_{\rho\to \infty} \varpi\restriction_{\Sigma_0}(\rho).$$
Although it will not be explicitly used in this paper, we also define the ADM mass on the other asymptotically flat end by $\varpi_i':=\lim_{\rho\to -\infty} \varpi\restriction_{\Sigma_0}(\rho).$
\item (Definition of $\psi$) Define $\psi:=r\phi$.
\end{enumerate}
\end{definition}

\begin{remark}[Finiteness of $\varpi_i$]
By \eqref{eq:adm-id-af}, $\varpi_i$ is a well-defined limit and is finite. It can moreover be shown that $\varpi_i>0$.
\end{remark}

In preparation for the statement of the theorem of Dafermos--Rodnianski, we first prove some bounds for $r$, $\psi$ and their derivatives, which hold in a neighborhood of spacelike infinity, see Figure~\ref{fig:DR-full}. In this proposition, we use the initial-data normalized coordinates $(U,V)$ for concreteness, but we note that the estimates are in fact coordinate-invariant. In what follows, we will suppress the explicit dependence on $(U, V)$, $(u,V)$, etc. (e.g. in LHS of \eqref{eq:bg-large-r:rphi}, \eqref{eq:bg-large-r:dvrphi} and similarly Theorem~\ref{thm:DR-full}, Corollary~\ref{cor:DR-small-r}, \ref{cor:DR-large-r}, \ref{cor:DR-final} and Proposition~\ref{prop:bg-uv}), if there is no danger of confusion.
\begin{proposition}[Estimates near spatial infinity] \label{prop:bg-large-r}
Given $\eta_{i^{0}} > 0$, we define
\begin{equation*}
	\Sgm_{0, i^{0}} = \set{\rho \in \Sgm_{0} : \forall \rho' \geq \rho, \ r \restriction_{\Sgm_{0}}(\rho') \geq \eta_{i^{0}}^{-1} \max \set{\varpi_{i}, \e}, \ \rd_{U} r \restriction_{\Sgm_{0}}(\rho') < 0, \ \ \rd_{V} r \restriction_{\Sgm_{0}}(\rho') > 0}.
\end{equation*}
Let $\underline{\calN}$ be the future domain of dependence of $\Sgm_{0, i^{0}}$ in $\PD$. If $\eta_{i^{0}}$ is sufficiently small depending on $\omg_{0}-2$ and $\Tht$, then there exists $B_{i^{0}} > 0$, which depends on $\Tht \restriction_{\Sgm_{0, i^{0}}}$, such that the following bounds hold on $\underline{\calN}$:
\begin{align} 
	\frac{1}{2} \leq \frac{\rd_{V} r(U, V)}{\rd_{V} r(\underline{U}(V), V)} \leq & 2, \label{eq:bg-large-r:dvr}  \\
	\frac{1}{2} \leq \frac{-\rd_{U} r(U, V)}{-\rd_{U} r(U, \underline{V}(U))} \leq & 2, \label{eq:bg-large-r:dur}  \\
\Abs{\psi}
	\leq & \underline{\ups}^{-\omg_{0}+1} B_{i^{0}} ,			\label{eq:bg-large-r:rphi} \\
\Abs{\frac{1}{\rd_{V} r} \rd_{V} \psi }
	\leq & r^{-\omg_{0}} B_{i^{0}} .			\label{eq:bg-large-r:dvrphi}
\end{align}
Moreover, for $(U, \underline{V}(U)) \in \Sgm_{0} \cap \underline{\calN}$, we have
\begin{equation} \label{eq:bg-large-r:duups} 
	B_{i^{0}}^{-1} \leq \frac{-\rd_{U} \underline{\ups}(U)}{-\rd_{U} r(U, \underline{V}(U))} \leq B_{i^{0}}.
\end{equation}
\end{proposition}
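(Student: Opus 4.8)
The plan is to propagate estimates from the initial hypersurface $\Sgm_{0,i^0}$ into its future domain of dependence $\underline{\calN}$ using the monotonicity structure of the Einstein--Maxwell--scalar-field system in the large-$r$ region, where the charge term $\e^2/r$ and the mass term $\varpi/r$ are both small. The key observation is that on $\Sgm_{0,i^0}$ we have, by the definition of $\Sgm_{0,i^0}$, that $\mu = \frac{2\varpi}{r} - \frac{\e^2}{r^2} \leq C \eta_{i^0}$, so choosing $\eta_{i^0}$ small makes $1-\mu$ uniformly close to $1$. First I would use the Raychaudhuri equations \eqref{eq:EMSF-ray} in the form $\rd_v\big(\log \frac{-\rd_u r}{1-\mu}\big) = \frac{r}{\rd_v r}(\rd_v\phi)^2 \geq 0$ and $\rd_u\big(\log \frac{\rd_v r}{1-\mu}\big) = \frac{r}{\rd_u r}(\rd_u\phi)^2 \geq 0$ (note the sign: $\rd_u r < 0$, so $\frac{\rd_v r}{1-\mu}$ is \emph{decreasing} in $u$), together with the monotonicity of $\varpi$ (from \eqref{eq:EMSF-r-phi-m}, $\rd_v\varpi \geq 0$ and $\rd_u\varpi \geq 0$ wherever $1-\mu>0$). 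A continuity/bootstrap argument closes the region where $1-\mu$ stays, say, $\geq \frac12$: in that region $\varpi \leq \varpi_i + (\text{small})$ from integrating $\rd_u\varpi$ and $\rd_v\varpi$ against the flux, which by the asymptotic flatness of the data and \eqref{eq:bg-large-r:rphi}--\eqref{eq:bg-large-r:dvrphi} (proved in parallel) is controlled by $\eta_{i^0}$-dependent constants, and $r$ only increases towards the future along both null directions (since $\rd_U r < 0$ and $\rd_V r > 0$ persist, again by Raychaudhuri plus $1-\mu>0$), so $\mu$ stays small. This simultaneously yields \eqref{eq:bg-large-r:dvr} and \eqref{eq:bg-large-r:dur}: integrating the Raychaudhuri equations and using that the total scalar-field flux $\int \frac{r}{\rd_v r}(\rd_v\phi)^2\,\ud v$ and $\int \frac{r}{-\rd_u r}(\rd_u\phi)^2\,\ud u$ are small (bounded by a constant times $\eta_{i^0}$ or by $B_{i^0}$ times a decaying $\underline\ups$-power), the ratios $\frac{\rd_V r(U,V)}{\rd_V r(\underline U(V),V)}$ and $\frac{-\rd_U r(U,V)}{-\rd_U r(U,\underline V(U))}$ stay within a factor $2$ of the initial value, since $\exp$ of a small quantity is between $\frac12$ and $2$.

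For the scalar field bounds \eqref{eq:bg-large-r:rphi} and \eqref{eq:bg-large-r:dvrphi}, the natural variable is $\psi = r\phi$, which satisfies a good equation: from the $\phi$-equation in \eqref{eq:EMSF-r-phi-m} and the $r$-equation, one computes $\rd_u\rd_v\psi = \frac{\psi}{r}\rd_u\rd_v r = \frac{2\psi(\varpi - \e^2/r)}{r^3}\frac{\rd_u r\,\rd_v r}{1-\mu}$, so $\rd_u\rd_v\psi$ is quadratically small in $1/r$ relative to $\psi$. I would integrate $\rd_V\psi$ first in the $U$-direction from $\Sgm_{0}$: $\rd_V\psi(U,V) = \rd_V\psi(\underline U(V),V) + \int_{\underline U(V)}^U \rd_U\rd_V\psi\,\ud U'$, bound the integrand by $\frac{C}{r^2}\frac{|\rd_U r||\rd_V r|}{1-\mu}|\psi| \leq \frac{C}{r^2}|\rd_V r|\,|\psi|\cdot\sup\frac{|\rd_U r|}{1-\mu}$, and use \eqref{eq:bg-large-r:dvr} to pull out $\rd_V r$. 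Here one needs an a priori sup bound on $|\psi|$ and on $\frac{|\rd_U r|}{1-\mu}$, which the bootstrap supplies (the latter from \eqref{eq:bg-large-r:dur} and initial data); dividing through by $\rd_V r$ and using $\int \frac{\ud r}{r^2} \lesssim r^{-1}$ along $C_u$-type curves gives the improvement. Then $\psi$ itself is recovered by integrating $\rd_V\psi$ in $V$ from its value on $\Sgm_0$ (where $|\psi| \leq \underline\ups^{-\omg_0+1}B_{i^0}$ by the asymptotic condition \eqref{eq:adm-id-phi} combined with $r(\rho)-|\rho| = O_2(\log|\rho|)$); since $\ud r = \rd_V r\,\ud V$ along $\uC$... rather along $C_U$ and $|\rd_V\psi| \leq r^{-\omg_0}B_{i^0}\rd_V r$ after the first step, $\int_V^\infty$ is finite and bounded by $r^{-\omg_0+1}B_{i^0} \lesssim \underline\ups^{-\omg_0+1}B_{i^0}$, using $r \geq \underline\ups$ in $\underline\calN$ (as $r$ increases to the future). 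This is a standard "first integrate the transversal derivative, then the tangential one" argument; the weights are chosen precisely so that it closes with $\omg_0 > 2$.

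Finally, \eqref{eq:bg-large-r:duups}, comparing $-\rd_U\underline\ups(U)$ (the total $U$-derivative of $r$ \emph{along} $\Sgm_0$) with $-\rd_U r(U,\underline V(U))$ (the $U$-partial derivative), follows from the chain rule $\rd_U\underline\ups(U) = \rd_U r(U,\underline V(U)) + \rd_V r(U,\underline V(U))\,\underline V'(U)$ together with the relations in Lemma~\ref{lem:cauchy-to-char}, which express $\rd_U r\restriction_{\Sgm_0}$ and $\rd_V r\restriction_{\Sgm_0}$ in terms of $\rd_\rho r$ and $\frac{f\ell}{r}$, and the future-admissibility condition \eqref{eq:adm-id-adm}, which guarantees $-\rd_U r\restriction_{\Sgm_0} = \frac12(\rd_\rho r - \frac{f\ell}{r}) > 0$ and $\rd_V r\restriction_{\Sgm_0} = \frac12(\rd_\rho r + \frac{f\ell}{r})$ on $\Sgm_{0,i^0}$; since $\underline V'(U) < 0$ (moving along $\Sgm_0$ towards $i^0$ decreases $U$ and increases $V$, or rather the parametrization gives $\ud U/\ud\rho = -1$, $\ud V/\ud\rho = 1$ so $\underline V'(U) = -1$), one gets $-\rd_U\underline\ups = -\rd_U r + \rd_V r = \rd_\rho r$, and the ratio to $-\rd_U r = \frac12(\rd_\rho r - \frac{f\ell}r)$ is pinched between positive constants depending only on $\Tht\restriction_{\Sgm_{0,i^0}}$ because $\rd_\rho r > 0$ and $|\frac{f\ell}{r}| < \rd_\rho r$ there (this is exactly the content of \eqref{eq:adm-id-adm} restricted to $\Sgm_{0,i^0}$). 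I expect the main obstacle to be the bootstrap itself: one must set up the continuity argument so that the smallness of $\eta_{i^0}$ (controlling $\mu$) and the weighted bounds on $\psi$ close \emph{simultaneously} without circularity — in particular, ensuring that the constants $B_{i^0}$, which are allowed to depend on the data restricted to $\Sgm_{0,i^0}$ but \emph{not} on how small $\eta_{i^0}$ is beyond a fixed threshold, do not get inflated at each step; the careful bookkeeping of which estimates feed which is the only genuinely delicate part, the rest being routine ODE/transport integration in the large-$r$ regime.
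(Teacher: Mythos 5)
Your plan is correct and lands on essentially the same structure as the paper's proof: small $\mu$ in $\underline{\calN}$, metric-coefficient ratios close to $1$ by integrating a transport equation for $\log\rd_V r$ (resp.\ $\log\rd_U r$), a bootstrap for $|\psi|$ closed by integrating $\rd_U\bigl(\frac{1}{\rd_V r}\rd_V\psi\bigr)$ first in $U$ and then $\rd_V\psi$ in $V$, and \eqref{eq:bg-large-r:duups} from Lemma~\ref{lem:cauchy-to-char} and the admissibility sign conditions. The one place where you complicate matters unnecessarily is the continuity argument for $1-\mu\geq\frac12$ and for the sign conditions $\rd_U r<0$, $\rd_V r>0$: these are \emph{not} part of the bootstrap in the paper. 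They hold a priori in the whole exterior component $\Ext_1$ (Proposition~\ref{prop.subextremality}, which also gives $1-\mu>0$), and combined with the monotonicity of $\varpi$ and $r$ they immediately give $\varpi\leq\varpi_i$ and $r\geq\eta_{i^0}^{-1}\max\{\varpi_i,\e\}$ throughout $\underline{\calN}$, hence $1-2\eta_{i^0}\leq 1-\mu\leq 1+\eta_{i^0}^2$ with no bootstrap. This a priori smallness is what lets the paper obtain \eqref{eq:bg-large-r:dvr}--\eqref{eq:bg-large-r:duups} \emph{directly} by integrating $\rd_U\log\rd_V r=\frac{2(\varpi-\e^2/r)}{r^2}\frac{\rd_U r}{1-\mu}$ (a change of variables $U\mapsto r$ bounds the integral by $O(\eta_{i^0})$ using only $\varpi\leq\varpi_i$), so that only the $\psi$-bounds require a bootstrap; the metric estimates decouple from the scalar field entirely. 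Your alternative route through the Raychaudhuri equations for $\kappa$ and $\gamma$ is viable but ties the bounds on the null lapses to the scalar-field flux, pulling the scalar-field bootstrap into the metric estimate when it isn't needed. In short: correct idea, but recognizing that $\varpi\leq\varpi_i$ and the sign conditions are free lets you drop most of the continuity argument.
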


After the proof, {\bf we fix $\eta_{i^{0}} > 0$ and thus $\underline{\calN}$ so that Proposition~\ref{prop:bg-large-r} applies.}

\begin{proof}
In this proof, we abbreviate $R = \eta_{i^{0}}^{-1} \max \set{\varpi_{i}, \e}$.
Note that all (maximally-extended) past-directed null curves in $\underline{\calN}$ intersect $\Sgm_{0}$, i.e., if $(U, V) \in \underline{\calN}$, then $\set{(U', V) \in \PD : \underline{U}(V) \leq U' \leq U}, \set{(U, V') \in \PD : \underline{V}(U) \leq V' \leq V} \subseteq \underline{\calN}$. Moreover, by the monotonicity of $\varpi$ and $r$, in $\underline{\calN}$ we have
\begin{equation} \label{eq:bg-large-r:pf-me}
	\frac{\sup_{\underline{\calN}} \varpi}{r} + \frac{\abs{\e}}{r} \leq \eta_{i^{0}}.
\end{equation}
As a consequence, in $\underline{\calN}$ we have
\begin{equation} \label{eq:bg-large-r:pf-mu} 
	1 - 2 \eta_{i^{0}}
%	\leq
%	1 - 2 \frac{\varpi}{r}
	\leq
	1 - \mu 
%	\leq 1 + \frac{\e^{2}}{r^{2}} 
	\leq 1 + \eta_{i^{0}}^{2} . 
\end{equation}

\pfstep{Step~1: Bounds for $\rd_{V} r$, $\rd_{U} r$ and $\rd_{U} \underline{\ups}$}
Note that $\log \rd_{V} r$ obeys the equation
\begin{equation*}
	\rd_{U} \log \rd_{V} r = \frac{\rd_{U} \rd_{V} r}{\rd_{V} r} = \frac{2 (\varpi - \frac{\e^{2}}{r})}{r^{2}} \frac{\rd_{U} r}{1-\mu}.
\end{equation*}
By \eqref{eq:bg-large-r:pf-mu}, we have
\begin{equation} \label{eq:bg-large-r:pf0}
	\Abs{\frac{2 (\varpi - \frac{\e^{2}}{r})}{r^{2}} \frac{\rd_{U} r}{1-\mu}}
	\leq 2 (1-2\eta_{\NI})^{-1} \frac{\varpi_{i}}{r^{2}} (-\rd_{U} r).
\end{equation}
Integrating this bound over $[\underline{U}(V), U]$ (where we make the change of variables $U \mapsto r$, $\rd_{U} r \ud U \mapsto \ud r$), applying \eqref{eq:bg-large-r:pf-me} and taking $\eta_{i^{0}}$ sufficiently small, \eqref{eq:bg-large-r:dvr} follows. The bound \eqref{eq:bg-large-r:dur} follows in a similar manner; we omit the details. Finally, for \eqref{eq:bg-large-r:duups}, observe that (using Lemma~\ref{lem:cauchy-to-char})
\begin{align*}
\frac{- \rd_{U} r(U, \underline{V}(U))}{-\rd_{U} r(U, \underline{V}(U)) + \rd_{V} r(U, \underline{V}(U)) (-\frac{\ud }{\ud U} \underline{V})(U)} 
= \frac{\frac{1}{2} \rd_{\rho} r - \frac{f}{2 r} \ell}{\frac{1}{2} \rd_{\rho} r - \frac{f}{2 r} \ell + \frac{1}{2} \rd_{\rho} r + \frac{f}{2 r} \ell} 
= \frac{\rd_{\rho} r - \frac{f}{r} \ell}{2 \rd_{\rho} r} (\rho(U)),
\end{align*}
where $\rho(U)$ denotes the inverse of $\rho \mapsto U(\rho)$. By hypothesis, there exists some $B \geq 1$ depending on $\Tht \restriction_{\Sgm_{0, i^{0}}}$ such that
\begin{equation} \label{eq:bg-large-r:pf:duups}
B^{-1} \leq \Abs{\frac{-\rd_{U} r(U, \underline{V}(U))}{-\rd_{U} \underline{\ups}(U)}} \leq B,
\end{equation}
which proves \eqref{eq:bg-large-r:duups}.

\pfstep{Step~2: Bounds for $\psi$ and $\rd_{V} \psi$}
To prove \eqref{eq:bg-large-r:rphi} and \eqref{eq:bg-large-r:dvrphi}, we proceed by a bootstrap argument. Suppose that 
\begin{equation} \label{eq:bg-large-r:rphi-btstrp}
	\abs{\psi} \leq 2 \underline{\ups}^{-\omg_{0}+1} B_{i^{0}} 
\end{equation}
holds on $\underline{\calN} \cap \set{U \leq U_{f}}$ for some $B_{i^{0}}$ and $U_{f}$. We claim that, for sufficiently large $B_{i^{0}}$ and small enough $\eta_{i^{0}}$, \eqref{eq:bg-large-r:rphi} (which improves \eqref{eq:bg-large-r:rphi-btstrp}) and \eqref{eq:bg-large-r:dvrphi} hold on $\underline{\calN} \cap \set{U \leq U_{f}}$. Then by a standard continuous induction argument,  \eqref{eq:bg-large-r:rphi} and \eqref{eq:bg-large-r:dvrphi} on the whole region $\underline{\calN}$ would follow.

To prove the claim, we use the equation
\begin{equation} \label{eq:dUdvpsi}
	\rd_{U} \left(\frac{1}{\rd_{V} r} \rd_{V}  \psi \right) 
%	= \frac{1}{\rd_{V} r} \rd_{U} \rd_{V} \psi - \frac{\rd_{U} \rd_{V} r}{\rd_{V} r} \frac{1}{\rd_{V} r} \rd_{V} \psi 
%	= \frac{\rd_{U} \rd_{V} r}{r \rd_{V} r} \psi - \frac{\rd_{U} \rd_{V} r}{\rd_{V} r} \frac{1}{\rd_{V} r} \rd_{V} \psi 
	= - \frac{\rd_{U} \rd_{V} r}{\rd_{V} r} \frac{1}{\rd_{V} r} \rd_{V} \psi + \frac{\rd_{U} \rd_{V} r}{r \rd_{V} r} \psi.
\end{equation}
By \eqref{eq:bg-large-r:pf0}, $\int_{\underline{U}(V)}^{U} \frac{\rd_{U} \rd_{V} r}{\rd_{V} r} (U', V) \, \ud U' \leq C$ for some universal $C>0$. Thus by Gr\"onwall's inequality, we obtain
\begin{equation} \label{eq:bg-large-r:pf1}
\begin{aligned}
	r^{\omg_{0}} \Abs{\frac{1}{\rd_{V} r} \rd_{V} \psi} (U, V)
	\leq & C  r^{\omg_{0}}(U, V) \Abs{\frac{1}{\rd_{V} r} \rd_{V} \psi} (\underline{U}(V), V)  \\
		& +  C B_{i^{0}}  \frac{\varpi_{i}}{r^{3-\omg_{0}}(U, V)} \sup_{U' \in [\underline{U}(V), U]} \Abs{\frac{-\rd_{U} r(U', V)}{-\rd_{U} \underline{\ups}(U')}}  \int_{\underline{U}(V)}^{U} \frac{-\rd_{U} \underline{\ups}(U')}{\underline{\ups}^{\omg_{0}-1}(U')} \, \ud U'.
\end{aligned}
\end{equation}
Since $r(U, V) \leq r(\underline{U}(V), V)$, the first term on the RHS is bounded by some constant $B$, which depends on $\Tht \restriction_{\Sgm_{0, i^{0}}}$. For the second term, by \eqref{eq:bg-large-r:dur} and \eqref{eq:bg-large-r:pf:duups} we have
\begin{equation*}
\Abs{\frac{-\rd_{U} r(U, V)}{-\rd_{U} \underline{\ups}(U)}} \leq 2 B,
\end{equation*}
where we enlarge $B$ (depending on $\Tht \restriction_{\Sgm_{0, i^{0}}}$) if necessary. Since $R = \underline{\ups}(U) \leq r(U, V)$, the second term on the RHS of \eqref{eq:bg-large-r:pf1} is bounded by
\begin{equation*}
	\frac{CB}{\omg_{0}-2} \frac{\varpi_{i}}{R} B_{i^{0}} \leq\frac{CB}{\omg_{0}-2} \eta_{i^{0}} B_{i^{0}}.
\end{equation*}
As a consequence, we obtain
\begin{equation*}
	\Abs{\rd_{V} \psi} \leq r^{-\omg_{0}} \rd_{V} r \left(B + \frac{C B}{\omg_{0}-2} \eta_{i^{0}} B_{i^{0}} \right).
\end{equation*}
Hence \eqref{eq:bg-large-r:dvrphi} holds in the bootstrap domain $\underline{\calN} \cap \set{U \leq U_{f}}$.
Integrating $\rd_{V} \psi$ along $C_{U}$, we also obtain
\begin{equation*}
	\abs{\psi} (U, V) \leq \abs{\psi}(U, \underline{V}(U)) + \underline{\ups}^{-\omg_{0}+1} \left(B + \frac{C B}{\omg_{0}-2} \eta_{i^{0}} B_{i^{0}} \right) \leq \underline{\ups}^{-\omg_{0}+1} \left(2 B + \frac{C B}{\omg_{0}-2} \eta_{i^{0}} B_{i^{0}} \right)
\end{equation*}
by enlarging $B$ depending on $\Tht \restriction_{\Sgm_{0, i^{0}}}$ if necessary. Choosing $B_{i^{0}} \geq 4B$ and $\eta_{i^{0}}$ small enough, we obtain \eqref{eq:bg-large-r:rphi}. This closes the bootstrap, and thus completes the proof of \eqref{eq:bg-large-r:rphi} and \eqref{eq:bg-large-r:dvrphi}.  \qedhere
\end{proof}

\subsection{Price's law decay: A theorem of Dafermos--Rodnianski and its corollaries} \label{subsec:DR-full}
In the coordinates $(u, V)$, consider a characteristic rectangle $\calX_{0} = \set{(u, V) : u \geq u_{\calX_{0}}, \ V \geq V_{\calX_{0}}}$ with past null boundaries $\uC_{in} = \uC_{V_{\calX_{0}}} \cap \calX_{0}$ (incoming) and $C_{out} = C_{u_{\calX_{0}}} \cap \calX_{0}$ (outgoing), such that $C_{out} \subseteq \underline{\calN}$, see Figure~\ref{fig:DR-full}. Note that $\calX_0$ is by definition a subset of the connected component $\Ext_{1}$ of the exterior region with $\rd_V r \geq 0$ and $\rd_U r<0$. By translating the coordinates (see Remark~\ref{rem:coord-trans}), we may assume that $u_{\calX_{0}} = 1$ and $V_{\calX_{0}} = 1$.

\begin{figure}[h]
\begin{center}
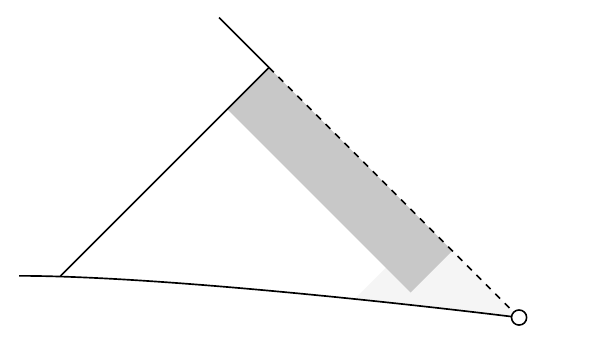 
\caption{The characteristic rectangle $\calX_{0}$ in Theorem~\ref{thm:DR-full}} \label{fig:DR-full}
\end{center}
\end{figure}

Thanks to Proposition~\ref{prop:bg-large-r}, we can apply the results of Dafermos--Rodnianski \cite{DRPL} to the solution on $\calX_{0}$. Recall from the beginning of the section that we have taken $2<\omg_0\leq 3$: note that for the theorem below, the decay rate does not improve even if $\omg_0$ is taken to be larger than $3$. We also remind the reader that, by future completeness of $\NI$ and Proposition~\ref{prop:bg-large-r}, $u$ takes values in the full real line $(-\infty, \infty)$.
\begin{theorem}[Dafermos--Rodnianski Price's law decay theorem] \label{thm:DR-full}
On the characteristic rectangle $\calX_{0} \subseteq \PD$  in the coordinates $(u, V)$ defined as above, the following conclusions hold.
\begin{enumerate}
\item There exists $B_{0} > 0$, which depends on $\Tht$ and the solution\footnote{Of course the solution in $\calX_{0}$ ultimately depends only on $\Tht$. We use this language to emphasize that $B_{0}$ depends \underline{not} only on the norms of $\Tht$, but depends on its particular profile.} in $\calX_{0}$, such that
\begin{align} 
	B_{0}^{-1} \leq \frac{\rd_{V} r}{1-\mu}  \leq & B_{0} \quad \hbox{ in } \calX_{0}. \label{eq:DR:kpp}
\end{align}
\item For any $\eta > 0$, there exist constants $c_{\omg_{0} - \eta}, B_{\omg_{0} - \eta} > 0$, which depend on $\omg_{0} - \eta$, $\Tht$ and the solution in $\calX_{0}$, such that the following bounds hold for $(u, V) \in \calX_{0} \cap \set{r \leq c_{\omg_{0} - \eta} V}$: 
\begin{align}
	\abs{\phi} \leq & \left( V^{-(\omg_{0} - \eta)} r^{-1} + V^{-\omg_{0}} \right) B_{\omg_{0} - \eta}  , \label{eq:DR:phi-med-r} \\
	r \abs{\rd_{V} \phi} \leq & \left( V^{-(\omg_{0} - \eta)} r^{-1} + V^{-\omg_{0}} \right) B_{\omg_{0} - \eta}, \label{eq:DR:dvphi-med-r}\\
	\abs{\rd_{V} (r \phi)} \leq & \left( V^{-(\omg_{0} - \eta)} r^{-1} + V^{-\omg_{0}} \right) B_{\omg_{0} - \eta}. \label{eq:DR:dvrphi-med-r}
\end{align}
\item Moreover, there exists a constant $B_{\eta} > 0$, which depends on $\eta$, $\Tht$ and the solution in $\calX_{0}$, such that the following bounds hold for $(u, V) \in \calX_{0} \cap \set{r \geq c_{\omg_{0} - \eta} V}$:
\begin{align}
	r \abs{\rd_{V} \phi} \leq &  V^{-1} u^{-(\omg_{0} - \eta) + 1} B_{\eta}, 	\label{eq:DR:dvphi-huge-r} \\
	r \abs{\rd_{u} \phi} \leq & u^{-(\omg_{0} - \eta)} B_{\eta} ,		\label{eq:DR:duphi-huge-r} \\
	r \abs{\phi} \leq & u^{-\omg_{0} + 1} B_{\eta}.				\label{eq:DR:phi-huge-r}
\end{align}
\end{enumerate}
\end{theorem}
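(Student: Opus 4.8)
The plan is to obtain Theorem~\ref{thm:DR-full} from the Price's law decay theorem of Dafermos--Rodnianski \cite{DRPL} by verifying its hypotheses on $\calX_{0}$ and reconciling the coordinate normalizations; the one ingredient beyond \cite{DRPL} is Proposition~\ref{prop:bg-large-r}, which supplies the decay of the data on the past outgoing cone $C_{out}$. First, I would record that $\calX_{0}$ is a characteristic rectangle whose past boundary consists of $\uC_{in}$, on which the data is smooth and satisfies $\rd_{V} r \geq 0$ (being the restriction of a maximal globally hyperbolic future development of admissible Cauchy data), and $C_{out} \subseteq \underline{\calN}$, on which Proposition~\ref{prop:bg-large-r} gives precisely the decay hypotheses of \cite{DRPL}: writing $\psi = r\phi$, one has $\abs{\psi} \lesssim r^{-\omg_{0}+1}$ and $\abs{\rd_{V}\psi} \lesssim r^{-\omg_{0}} \rd_{V} r$ by \eqref{eq:bg-large-r:rphi}--\eqref{eq:bg-large-r:dvrphi}, together with $\rd_{V} r \sim \rd_{V} r \restriction_{\Sgm_{0}}$ by \eqref{eq:bg-large-r:dvr} and $1-\mu \sim 1$. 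I would further note that throughout $\calX_{0}$ there are no (anti)trapped surfaces --- $\rd_{V} r \geq 0$ and $\rd_{U} r < 0$ by Proposition~\ref{prop.subextremality}(1) --- that $\varpi$ is bounded (monotonicity of $\varpi$ together with finiteness of the ADM mass $\varpi_{i}$, cf.\ Definition~\ref{def.ADM.psi}), and that the event horizon is subextremal, $\varpi_{\EH} > \abs{\e}$, by Proposition~\ref{prop.subextremality}(3). This is exactly the setting in which the results of \cite{DRPL} apply.

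Next I would quote \cite{DRPL}. In the region $\calX_{0} \cap \set{r \leq c_{\omg_{0}-\eta} V}$, their pointwise estimates yield part (2); in the region $\calX_{0} \cap \set{r \geq c_{\omg_{0}-\eta} V}$ near null infinity, their estimates there yield the bounds on $r\phi$, $r\rd_{u}\phi$, $r\rd_{V}\phi$ in terms of the retarded coordinate $u$ in part (3). Two conventions need to be matched: the $\eta$-loss reflects the finitely many $\eps$-losses in the iteration of \cite{DRPL}, and (consistently with the hypothesis $2 < \omg_{0} \leq 3$) the rate does not improve past $\omg_{0} = 3$; and the advanced coordinate $V$ used here is initial-data-normalized, whereas \cite{DRPL} uses a cone-normalized advanced coordinate, so that the two differ by a reparametrization $V \mapsto V'(V)$ whose Jacobian is bounded above and below (by Proposition~\ref{prop:bg-large-r}, the asymptotic flatness of $\Tht$ and Lemma~\ref{lem:cauchy-to-char}), leaving the power-law rates in $V$ unchanged; the retarded coordinate $u$, normalized by \eqref{eq:coord-u-bg-0} using future completeness of $\NI$ \cite{DafTrapped}, is the natural retarded coordinate of \cite{DRPL}.

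Finally, for the two-sided bound \eqref{eq:DR:kpp} I would argue directly (this is also essentially contained in \cite{DRPL}). The quantity $\frac{\rd_{V} r}{1-\mu}$ is invariant under reparametrization of $u$, and by the Raychaudhuri equation \eqref{eq:EMSF-ray} it satisfies $\rd_{u} \log \frac{\rd_{V} r}{1-\mu} = \frac{r(\rd_{u}\phi)^{2}}{\rd_{u} r} \leq 0$, hence is non-increasing in $u$; since it is bounded above on $C_{out}$ (from $\rd_{V} r \sim \rd_{V} r \restriction_{\Sgm_{0}}$ and $1-\mu \sim 1$), the upper bound follows on all of $\calX_{0}$. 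For the lower bound, integrating the same identity in $u$ and using $\abs{r\rd_{u}\phi} \lesssim u^{-(\omg_{0}-\eta)}$ from part (3) (with $\eta$ small, so $\omg_{0}-\eta > 1$), together with the facts that $r$ is bounded below (as $\calX_{0}$ lies in the exterior of a subextremal black hole) and $-\rd_{u} r$ is bounded below on $\calX_{0}$ (e.g.\ since $-\frac{\rd_{u} r}{1-\mu}$ is monotone in $V$ by \eqref{eq:EMSF-ray} and positive on $C_{out} \subseteq \underline{\calN}$ by \eqref{eq:bg-large-r:dur}), one sees that the $u$-integral of $\frac{(r\rd_{u}\phi)^{2}}{r \abs{\rd_{u} r}}$ is uniformly bounded, whence $\frac{\rd_{V} r}{1-\mu}$ stays bounded below. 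The main obstacle here is not any single estimate but the bookkeeping: pinning down exactly which statements of \cite{DRPL} to invoke under which normalization, confirming that the polynomial rates and the admissible range of $\omg_{0}$ transfer correctly, and in particular checking that the near-null-infinity estimates of part (3) are genuinely covered by \cite{DRPL} (or else supplying the short additional argument near null infinity, e.g.\ via $r^{p}$-weighted estimates).
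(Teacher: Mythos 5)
Your overall plan — verify the hypotheses of \cite{DRPL} on $\calX_{0}$ (the data-decay hypothesis ${\bf \Dlt'}$ via Proposition~\ref{prop:bg-large-r} on $C_{out}\subseteq\underline{\calN}$, the subextremality ${\bf \Sgm T'}$ via \eqref{eq:subextremality}) and then quote their results, matching the cone-normalized coordinate to the initial-data-normalized $V$ by comparing the Jacobian via \eqref{eq:bg-large-r:dvr} — is precisely the paper's argument. For parts (2) and (3) you are citing Theorems~9.4 and 9.5 of \cite{DRPL}, as the paper does, and your accounting of the $\eta$-loss and the cap at $\omg_{0}=3$ is correct.

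The one place where you deviate is your attempted direct proof of the lower bound in \eqref{eq:DR:kpp}, and it has a genuine gap. You want to bound $\int_{u_{\calX_{0}}}^{\infty}\frac{r(\rd_{u}\phi)^{2}}{-\rd_{u}r}\,\ud u$ by combining $|r\rd_{u}\phi|\lesssim u^{-(\omg_{0}-\eta)}$ from part (3) with lower bounds on $r$ and $-\rd_{u}r$. Two things go wrong near the event horizon. First, the decay bound from part (3) is only available in the region $r\geq c_{\omg_{0}-\eta}V$; at fixed $V$, the $u$-integral along $\uC_{V}$ enters the bounded-$r$ region near $\EH$, where this estimate says nothing (the relevant bound there, namely \eqref{eq:DR:duphi-small-r}, is derived later in Corollary~\ref{cor:DR-small-r} \emph{from} Theorem~\ref{thm:DR-full}, so one cannot use it here without circularity). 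Second, and more importantly, the claim that $-\rd_{u}r$ is bounded below on $\calX_{0}$ is false: your monotonicity argument via \eqref{eq:EMSF-ray} shows that $-\gmm = \frac{-\rd_{u}r}{1-\mu}$ is bounded below, but $-\rd_{u}r=(1-\mu)(-\gmm)$ and $1-\mu\to 0$ approaching $i^{+}$ along $\EH$, so $-\rd_{u}r$ does degenerate. (This is exactly why the red-shift argument in Corollary~\ref{cor:DR-small-r} works with the regular quantity $\frac{1}{\rd_{u}r}\rd_{u}\phi$ rather than $\rd_{u}\phi$.) As you parenthetically note, the statement is contained in \cite{DRPL}; the paper simply cites Proposition~5.3 there, which establishes the two-sided bound on $\kpp$ directly by a pigeonhole-type argument and is proved before the decay theorems, so there is no need (and it is not straightforward) to rederive it from parts (2)--(3). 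Citing that proposition closes the gap and brings your argument into exact agreement with the paper's.
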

\begin{proof}
Observe that all the assumptions in \cite[Section~2]{DRPL} are satisfied on the characteristic rectangle $\calX_{0}$; we highlight here two of the assumptions. ${\bf \Dlt'}$, which is the uniform boundedness of $r^2\left|\f{\rd_V\phi}{\rd_V r}\right|$ and $r^{\omg_0}\left|\f{\rd_V\psi}{\rd_V r}\right|$ on $C_{out}$, follows from Proposition~\ref{prop:bg-large-r} (since $C_{out} \subseteq \underline{\calN}$). ${\bf \Sgm T'}$, which is the condition of subextremality of the event horizon, is a consequence of \eqref{eq:subextremality}. Therefore, the results in \cite{DRPL} are applicable;
\eqref{eq:DR:kpp}, \eqref{eq:DR:phi-med-r}--\eqref{eq:DR:dvrphi-med-r} and \eqref{eq:DR:dvphi-huge-r}--\eqref{eq:DR:phi-huge-r} correspond to Proposition~5.3, Theorem~9.4 and Theorem~9.5 in \cite{DRPL}, respectively. We note that these results in \cite{DRPL} are stated in the $(u, \bfV)$ coordinates, where $\bfV$ is defined by the normalization condition $\frac{\rd_{\bfV} r}{1-\mu} = 1$ on $C_{out}$. Nevertheless, by \eqref{eq:bg-large-r:dvr}, we clearly have $c (\bfV -1) \leq V -1 \leq C (\bfV -1)$ for some $c, C$ that depend on $\Tht$. \qedhere
\end{proof}

As a corollary of Theorem~\ref{thm:DR-full}, we may establish the following decay estimates for $\phi$, $\frac{1}{\rd_{u} r}\rd_{u} \phi$ and $(\frac{1}{\rd_{u} r} \rd_{u} )^{2} \phi$ near the event horizon.
\begin{corollary}[Decay of $\phi$ and its $\frac{1}{\rd_{u} r}\rd_u$ derivatives in bounded $r$ region] \label{cor:DR-small-r}
Let $\Lmb \geq 1$ and $\eta > 0$. In the region $\calX_{0} \cap \set{r \leq \Lmb r_{\EH}}$, there exists $B_{\Lmb} >  0 $, which depends on $\Lmb$, $\omg_{0}-\eta$ and the solution in $\calX_{0}$, so that we have, in the $(u,V)$ coordinate system
\begin{align} 
	\Abs{\phi} \leq B_{\Lmb} V^{-(\omg_{0}-\eta)}, \label{eq:DR:phi-small-r} \\
	\Abs{\frac{1}{\rd_{u} r} \rd_{u} \phi} \leq B_{\Lmb} V^{-(\omg_{0}-\eta)}, \label{eq:DR:duphi-small-r} \\
	\Abs{\left(\frac{1}{\rd_{u} r} \rd_{u}\right)^{2} \phi} \leq B_{\Lmb} V^{-(\omg_{0}-\eta)},\label{eq:DR:duduphi-small-r} \\
	\abs{\varpi - \varpi_{\EH}} \leq B_{\Lmb} V^{- 2(\omg_{0} - \eta) + 1}, \label{eq:DR:varpi-small-r}
\end{align}
where $\varpi_{\EH}$ is as in Proposition~\ref{prop.subextremality}.
\end{corollary}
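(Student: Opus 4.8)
The plan is to show that $\calX_{0} \cap \set{r \leq \Lmb r_{\EH}}$ is, up to a relatively compact piece, contained in the range of applicability of the ``medium-$r$'' estimates of Theorem~\ref{thm:DR-full}(2), and then to upgrade those to \eqref{eq:DR:phi-small-r}--\eqref{eq:DR:varpi-small-r} by transport arguments. First I would fix $\eta > 0$, set $V_{\ast} = \max\set{1, \Lmb r_{\EH}/c_{\omg_{0}-\eta}}$, and split the region into $\set{V \leq V_{\ast}}$ and $\set{V \geq V_{\ast}}$. On $\calX_{0} \cap \set{r \leq \Lmb r_{\EH}} \cap \set{V \leq V_{\ast}}$ the closure in the conformal picture of Theorem~\ref{thm:kommemi} is compact and contained in $\calM$ (the event horizon $\EH$ being a regular null hypersurface in $\calM$), so all of $\phi$, $\tfrac{1}{\rd_{u}r}\rd_{u}\phi$, $(\tfrac{1}{\rd_{u}r}\rd_{u})^{2}\phi$, $\varpi-\varpi_{\EH}$ are bounded there (for the derivative quantities, because $\tfrac{1}{\rd_{u}r}\rd_{u}$ is a geometrically defined, hence non-degenerate, operator up to $\EH$); since the right-hand sides of \eqref{eq:DR:phi-small-r}--\eqref{eq:DR:varpi-small-r} are bounded below on this set, the estimates follow after enlarging $B_{\Lmb}$. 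On $\set{V \geq V_{\ast}}$ we have $r \leq \Lmb r_{\EH} \leq c_{\omg_{0}-\eta}V$, so Theorem~\ref{thm:DR-full}(2) applies; moreover $\rd_{U}r < 0$ (Proposition~\ref{prop.subextremality}) gives $r(u,V) \geq \lim_{u'\to\infty}r(u',V) = r\restriction_{\EH}(V) \geq r\restriction_{\EH}(V_{\ast}) > 0$, so $r$ is comparable to $1$. Then \eqref{eq:DR:phi-small-r} is immediate from \eqref{eq:DR:phi-med-r} using $r \gtrsim 1$ and $V \geq 1$ (so $V^{-\omg_{0}} \leq V^{-(\omg_{0}-\eta)}$).

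For \eqref{eq:DR:duphi-small-r} I would use the transport identity $\rd_{v}(r\rd_{u}\phi) = -\rd_{u}r\,\rd_{v}\phi$ (a direct consequence of the $\phi$-equation in \eqref{eq:EMSF-r-phi-m}) and work with the coordinate-invariant quantity $\tfrac{1}{\rd_{u}r}\rd_{u}\phi = \tfrac{r\rd_{u}\phi}{r\rd_{u}r}$. First I would bound $r\rd_{u}\phi$ on the interface $\set{r = \Lmb r_{\EH}}$ by integrating the identity along each $C_{u}$ from the ``huge-$r$'' region, where \eqref{eq:DR:duphi-huge-r} supplies the seed bound $u^{-(\omg_{0}-\eta)}B_{\eta}$, controlling the error $\int \abs{\rd_{u}r}\abs{\rd_{v}\phi}$ via $\abs{\rd_{u}r} = (1-\mu)\abs{\tfrac{\rd_{u}r}{1-\mu}}$, the structural bound \eqref{eq:DR:kpp} (which, after the change of variables $v \mapsto r$, turns the weight $\tfrac{1-\mu}{\rd_{v}r}$ into a bounded factor), and \eqref{eq:DR:dvphi-med-r}--\eqref{eq:DR:dvphi-huge-r}; since on $\set{r = \Lmb r_{\EH}}$ one has $1-\mu \gtrsim_{\Lmb} 1$ and $\abs{\tfrac{\rd_{u}r}{1-\mu}}$ bounded above and below (the latter by monotonicity of $\tfrac{\rd_{u}r}{1-\mu}$ in $v$ via \eqref{eq:EMSF-ray} and the normalization of $u$), and since $\tfrac{\ud V}{\ud u} \asymp 1$ along $\set{r = \Lmb r_{\EH}}$ by \eqref{eq:DR:kpp}, this yields $\abs{\tfrac{1}{\rd_{u}r}\rd_{u}\phi} \lesssim_{\Lmb} V^{-(\omg_{0}-\eta)}$ on the interface, modulo logarithmic losses absorbed by decreasing $\eta$. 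Finally I would integrate the transport identity once more, now from $\set{r = \Lmb r_{\EH}}$ toward $\EH$ along each $C_{u}$, and divide by $r\rd_{u}r$. The estimate \eqref{eq:DR:duduphi-small-r} follows the same pattern after commuting $\tfrac{1}{\rd_{u}r}\rd_{u}$ through the equations again: one obtains a transport equation for $\tfrac{1}{\rd_{u}r}\rd_{u}\bigl(\tfrac{1}{\rd_{u}r}\rd_{u}\phi\bigr)$ whose sources are built from $\phi$, $\tfrac{1}{\rd_{u}r}\rd_{u}\phi$, $\rd_{v}\phi$ and geometric quantities already controlled above and by Theorem~\ref{thm:DR-full} and \eqref{eq:DR:kpp}.

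For \eqref{eq:DR:varpi-small-r} I would integrate the mass equations $\rd_{v}\varpi = \tfrac12\tfrac{1-\mu}{\rd_{v}r}r^{2}(\rd_{v}\phi)^{2}$ and $\rd_{u}\varpi = \tfrac12\tfrac{1-\mu}{\rd_{u}r}r^{2}(\rd_{u}\phi)^{2}$ (both from \eqref{eq:EMSF-r-phi-m}) from the event horizon $\EH$, on which $\varpi \to \varpi_{\EH}$ by Proposition~\ref{prop.subextremality}, back to a point $(u,V)$ in the region, splitting $\varpi_{\EH}-\varpi(u,V)$ into an increment along $\EH$ in $v$ and an increment along $\uC_{V}$ in $u$. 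Using $r \lesssim 1$, the normalization $\tfrac{\rd_{v}r}{1-\mu}=1$ on $\EH$, the bound \eqref{eq:DR:kpp} together with $\abs{\tfrac{\rd_{u}r}{1-\mu}} \gtrsim 1$, and the already-established bounds \eqref{eq:DR:phi-small-r}--\eqref{eq:DR:duphi-small-r} for $\rd_{v}\phi$, $\rd_{u}\phi$ (plus $\int_{u}^{\infty}(\rd_{u}r)^{2}\,\ud u' \lesssim 1$ for the $\uC_{V}$-increment), each increment is $\lesssim V^{-2(\omg_{0}-\eta)+1}$, the exponent being positive since $\omg_{0} > 2$ and $\eta$ is small.

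The main obstacle is the transport argument for \eqref{eq:DR:duphi-small-r} and \eqref{eq:DR:duduphi-small-r}. The operator $\tfrac{1}{\rd_{u}r}\rd_{u}$ is the ``red-shift adapted'' derivative at the event horizon: it is comparable to an ordinary coordinate derivative away from $\EH$, but degenerates relative to one, by a factor of $1-\mu \to 0$, as one approaches $i^{+}$, so every intermediate estimate must keep precise track of powers of $1-\mu$ and of $\tfrac{\rd_{u}r}{1-\mu}$, and one must propagate the decay from the large-$r$ region (where it is manifest) down to $\EH$ without losing these weights. This bookkeeping is exactly what the structural bound \eqref{eq:DR:kpp} and the monotonicity properties of the exterior region (Proposition~\ref{prop.subextremality}) are designed to make tractable, but it is where essentially all of the work lies.
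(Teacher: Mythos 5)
Your treatment of \eqref{eq:DR:phi-small-r} and of the $\partial_v\varpi$-increment in \eqref{eq:DR:varpi-small-r} along $\EH$ is fine and matches the paper's. The genuine gap is in the argument for \eqref{eq:DR:duphi-small-r} (and hence \eqref{eq:DR:duduphi-small-r} and the $\partial_u\varpi$-increment off $\EH$), and it is precisely the ``bookkeeping'' you defer: the degeneracy of $\partial_u r$ is not merely a power of $1-\mu$ to track, it is an \emph{exponential} degeneracy in $V$, and the undamped transport equation for $r\partial_u\phi$ cannot see it.

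Concretely: your plan is to bound $r\partial_u\phi(u,V)$ by integrating $\partial_V(r\partial_u\phi) = -\partial_u r\,\partial_V\phi$ from the interface $\{r=\Lmb r_{\EH}\}$, i.e.\ from $(u,V_{\Lmb}(u))$ toward $(u,V)$ with $V<V_{\Lmb}(u)$, and then to divide by $r\partial_u r(u,V)$. But $\partial_V \log(-\partial_u r) = \frac{2(\varpi-\frac{\e^{2}}{r})}{r^{2}}\frac{\partial_V r}{1-\mu}$, which is bounded \emph{below} by a positive constant $b_{\Lmb}$ throughout $\calX_0\cap\set{r\le \Lmb r_{\EH}}$ (this is the red-shift; it uses $r\to r_{\EH}$, $\varpi\to\varpi_{\EH}$ along $\EH$, and the subextremality $|\e|<\varpi_{\EH}$ of Proposition~\ref{prop.subextremality}). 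Hence $|\partial_u r(u,V_{\Lmb}(u))|/|\partial_u r(u,V)| \ge e^{b_{\Lmb}(V_{\Lmb}(u)-V)}$, which is unbounded because $V_{\Lmb}(u)-V$ is unbounded in the region (take $u$ large and $V$ near $1$). Since the integral $\int_V^{V_{\Lmb}(u)}|\partial_u r||\partial_V\phi|\,\ud V'$ picks up $\sup_{V'}|\partial_u r(u,V')| = |\partial_u r(u,V_{\Lmb}(u))|$, dividing the resulting bound for $r\partial_u\phi(u,V)$ by $r\partial_u r(u,V)$ produces this exponential factor and destroys the estimate.

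The fix is structural, not bookkeeping. One must work with the equation for $\frac{r}{\partial_u r}\partial_u\phi$ \emph{itself}:
\begin{equation*}
\partial_V\!\left(\frac{r}{\partial_u r}\partial_u\phi\right)
= -\,\partial_V\phi \;-\; \frac{2(\varpi-\frac{\e^{2}}{r})}{r^{2}}\frac{\partial_V r}{1-\mu}\left(\frac{r}{\partial_u r}\partial_u\phi\right),
\end{equation*}
whose linear coefficient is $\le -b_{\Lmb}<0$ by the same red-shift bound, and integrate in \emph{increasing} $V$ from a fixed $V_0$ with the exponential integrating factor. The seed value at $V_0$ is merely bounded and then killed by $e^{-b_{\Lmb}(V-V_0)}$, and the forcing $\partial_V\phi$ is integrated against $e^{-b_{\Lmb}(V-V')}$, which (via a lemma of the type of Lemma~\ref{lem:exp-int}) preserves the rate $V^{-(\omg_0-\eta)}$. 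Note this is equivalent to integrating your equation for $r\partial_u\phi$ \emph{forward} in $V$ from $V_0$ and exploiting $|\partial_u r(u,V')|/|\partial_u r(u,V)|\le e^{-b_{\Lmb}(V-V')}$ in the division step --- so the direction of integration is as important as identifying the red-shift lower bound. Without establishing and invoking $b_{\Lmb}>0$ (which requires the subextremality $|\e|<\varpi_{\EH}$), the argument for \eqref{eq:DR:duphi-small-r} and its consequences cannot close.
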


The proof of Corollary~\ref{cor:DR-small-r} makes use of the celebrated red-shift effect along $\EH$. For this purpose, we need the following  elementary integration lemma. 
\begin{lemma} \label{lem:exp-int}
Let $\alp \in \bbR$, $b > 0$ and $0 < v_{0} \leq v_{1}$. Then we have
\begin{equation} \label{eq:exp-int}
	\int_{v_{0}}^{v_{1}} v^{\alp} e^{b (v - v_{1})} \, \ud v \leq C_{v_{0}, b, \alp} \, v_{1}^{\alp}.
\end{equation}
\end{lemma}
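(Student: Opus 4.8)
\textbf{Proof proposal for Lemma~\ref{lem:exp-int}.} The estimate \eqref{eq:exp-int} reflects the fact that the weight $e^{b(v-v_1)}$ decays exponentially as $v$ moves away from the right endpoint $v_1$ into the past; hence the integral should be governed by its contribution near $v = v_1$, where the algebraic factor $v^\alpha$ is comparable to $v_1^\alpha$. The plan is to turn this heuristic into a proof by splitting the domain of integration at the midpoint, writing $\int_{v_0}^{v_1} = \int_{v_0}^{v_1/2} + \int_{v_1/2}^{v_1}$, where the first piece is understood to be empty when $v_1 < 2 v_0$.

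For the near piece, I would estimate $v^\alpha \leq 2^{\abs{\alpha}} v_1^\alpha$ on $[v_1/2, v_1]$ --- this holds for every $\alpha \in \bbR$, since $v \mapsto v^\alpha$ is monotone and attains its maximum on this interval at $v_1$ (if $\alpha \geq 0$) or at $v_1/2$ (if $\alpha < 0$) --- and then bound $\int_{v_1/2}^{v_1} e^{b(v-v_1)} \, \ud v \leq \int_{-\infty}^{v_1} e^{b(v-v_1)} \, \ud v = b^{-1}$. This already yields the bound $2^{\abs{\alpha}} b^{-1} v_1^\alpha$ for the near piece, with a constant independent of $v_0$.

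For the far piece, I would use $e^{b(v-v_1)} \leq e^{-b v_1 / 2}$ on $[v_0, v_1/2]$ together with the elementary bound $\int_{v_0}^{v_1/2} v^\alpha \, \ud v \leq C_{v_0, \alpha} (1 + v_1^{\alpha+1})$ (with $\log v_1$ replacing $v_1^{\alpha+1}$ in the borderline case $\alpha = -1$). It then remains to observe that $(1 + v_1^{\alpha+1}) e^{-b v_1 / 2} \leq C_{v_0, b, \alpha} \, v_1^\alpha$ for all $v_1 \geq v_0$: this is equivalent to the boundedness of $t \mapsto (t^{-\alpha} + t) e^{-b t / 2}$ on $[v_0, \infty)$, which is clear since that function is continuous and decays to $0$ as $t \to \infty$. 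Adding the two contributions gives \eqref{eq:exp-int}.

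The argument involves no genuine obstacle; the only points demanding a little attention are the bookkeeping of how the constant depends on $v_0$ (through the factor $v_0^{\alpha+1}$ that appears in $\int_{v_0}^{v_1/2} v^\alpha \, \ud v$ when $\alpha < -1$), the degenerate configuration $v_1 < 2 v_0$ (for which only the near piece is present and the bound is immediate), and the borderline exponent $\alpha = -1$ (where a logarithm appears in place of a power, but the final estimate is unaffected).
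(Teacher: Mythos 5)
Your argument is correct and is essentially the same as the paper's: the paper first rescales by the substitution $s = v/v_1$ and then splits at $s = 1/2$, which is precisely your split of the original integral at $v = v_1/2$, with the near piece controlled by $\int e^{b(v-v_1)}\,\ud v \leq b^{-1}$ and the far piece by the exponential factor $e^{-bv_1/2}$. The bookkeeping you flag (degenerate case $v_1 < 2v_0$, borderline $\alpha = -1$, dependence of the constant on $v_0$) is handled implicitly in the paper in the same way.
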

\begin{proof}
We divide both sides of \eqref{eq:exp-int} by $v_{1}^{\alp}$, and make the change of variables $s = v / v_{1}$. Then splitting the integral into $\int_{v_{0}/v_{1}}^{1/2}(\cdots)$ and $\int_{1/2}^{1}(\cdots)$, we estimate:
\begin{align*}
	v_{1}^{-\alp} \int_{v_{0}}^{v_{1}} v^{\alp} e^{b (v - v_{1})} \, \ud v
	= & \int_{v_{0}/v_{1}}^{1} s^{\alp} e^{b v_{1} (s - 1)} v_{1} \, \ud s  \\
	\leq & e^{-\frac{b}{2} v_{1}} v_{1} \left(\sup_{s \in [v_{0} / v_{1}, 1/2]} s^{\alp} \right)\int_{v_{0}/v_{1}}^{1/2} \, \ud s
		+ \left( \sup_{s \in [1/2, 1]} s^{\alp} \right) \int_{1/2}^{1} e^{b v_{1}(s - 1)} v_{1} \, \ud s \\
	\leq & 2^{-1} e^{-\frac{b}{2} v_{1}} v_{1} \left(\sup_{s \in [v_{0} / v_{1}, 1/2]} s^{\alp} \right)
		+ b^{-1} \left( \sup_{s \in [1/2, 1]} s^{\alp} \right) .
\end{align*}
Clearly, each term on the last line is uniformly bounded in $v_{1}$, with the bound depending only on $\alp, b, v_{0}$. \qedhere
\end{proof}

\begin{proof}[Proof of Corollary~\ref{cor:DR-small-r}]
We proceed in several steps.

\pfstep{Step~1: Proofs of \eqref{eq:DR:phi-small-r} and \eqref{eq:DR:duphi-small-r}}
Since $\inf_{\calX_{0}} r > 0$, and only the region where $V$ is large is relevant, \eqref{eq:DR:phi-small-r} follows from \eqref{eq:DR:phi-med-r}. Moreover, to prove \eqref{eq:DR:duphi-small-r}, it suffices to prove that 
\begin{equation} \label{eq:DR:duphi-small-r-goal}
	\Abs{\frac{r}{\rd_{u} r} \rd_{u} \phi}(u, V) \leq B_{\Lmb} V^{-(\omg_{0}-\eta)} 
\end{equation}
in a region of the form $\calX_{0} \cap \set{r \leq \Lmb r_{\EH}} \cap \set{V \geq V_{0}}$. Note that $\frac{r}{\rd_{u} r} \rd_{u} \phi$ obeys the equation 
\begin{align}
	\rd_{V} \left(\frac{r}{\rd_{u} r} \rd_{u} \phi \right)
%=& 	\frac{\rd_{V} r}{\rd_{u} r} \rd_{u} \phi + \frac{r}{\rd_{u} r} (\rd_{u} \rd_{V} \phi) - \frac{2 (\varpi - \frac{\e^{2}}{r})}{r^{2}} \frac{\rd_{V} r}{1-\mu} \left( \frac{r}{\rd_{u} r} \rd_{u} \phi \right) \\
%=& 	\frac{\rd_{V} r}{\rd_{u} r} \rd_{u} \phi + \frac{r}{\rd_{u} r} \left( - \frac{\rd_{u} r}{r} \rd_{V} \phi - \frac{\rd_{V} r}{r} \rd_{u} \phi \right) - \frac{2 (\varpi - \frac{\e^{2}}{r})}{r^{2}} \frac{\rd_{V} r}{1-\mu} \left( \frac{r}{\rd_{u} r} \rd_{u} \phi \right) \\
=& 	- \rd_{V} \phi
	- \frac{2 (\varpi - \frac{\e^{2}}{r})}{r^{2}} \frac{\rd_{V} r}{1-\mu} \left( \frac{r}{\rd_{u} r} \rd_{u} \phi \right) . \label{eq:dVrduphi}
\end{align}
We claim that, for some $V_{0} \geq 1$ (depending on the solution in $\calX_{0}$) and $b_{\Lmb} > 0$ (depending on $\varpi_{f}, \e, \Lmb$ and $B_{0}$), we have
\begin{equation} \label{eq:DR-small-r:redshift}
	r \geq \frac{1}{2} r_{\EH}, \qquad 
	\frac{2 (\varpi-\frac{\e^{2}}{r})}{r^{2}} \frac{\rd_{V} r}{1-\mu} \geq b_{\Lmb} \quad \hbox{ in } \calX_{0} \cap \set{r \leq \Lmb r_{\EH}} \cap \set{V \geq V_{0}} .
\end{equation}
The second one-sided bound captures the red-shift effect along the event horizon $\EH$. We defer its proof until Step~4, and proceed with the proof of \eqref{eq:DR:duphi-small-r-goal}.

To use the method of an integrating factor, we rewrite \eqref{eq:dVrduphi} as
\begin{equation*}
	\rd_{V}\left( e^{I} \frac{r}{\rd_{u} r} \rd_{u} \phi \right)(u,V) = - e^{I} \rd_{V} \phi(u, V) , \quad \hbox{ where } \rd_{V} I = \frac{2 (\varpi-\frac{\e^{2}}{r})}{r^{2}} \frac{\rd_{V} r}{1-\mu}.
\end{equation*}
Note that if $(u, V)$ belongs to $\calX_{0} \cap \set{r \leq \Lmb r_{\EH}}$, then so does $(u, V')$ for $1 \leq V' \leq V$. Using \eqref{eq:DR-small-r:redshift} to bound $- (I(u, V) - I(u, V_{0})) \leq - b_{\Lmb} (V - V_{0})$, we then obtain
\begin{align*}
	\Abs{\frac{r}{\rd_{u} r} \rd_{u} \phi(u, V)}
	= & \Abs{e^{-(I(u, V) - I(u, V_{0}))} \frac{r}{\rd_{u} r} \rd_{u} \phi(u, V_{0}) - \int_{V_{0}}^{V} e^{- (I(u, V) - I(u, V'))} \rd_{V} \phi(u, V') \, \ud V'} \\
	\leq & e^{-b_{\Lmb}(V - V_{0})} \sup_{\uC_{V_{0}} \cap \calX_{0} \cap \set{r \leq \Lmb r_{\EH}}} \Abs{\frac{r}{\rd_{u} r} \rd_{u} \phi}
		 + \int_{V_{0}}^{V} e^{- b_{\Lmb}(V - V')} \abs{\rd_{V} \phi}(u, V') \, \ud V' .
\end{align*}
The first term decays exponentially in $V$, and is thus already acceptable. Taking $V_{0}$ larger (depending on $c_{\omg_{0} - \eta}$) if necessary, we may apply \eqref{eq:DR:dvphi-med-r} to obtain the bound 
\begin{equation} \label{eq:DR-small-r:dvphi}
\abs{\rd_{V} \phi}(u, V') \leq C_{r_{\EH}} B_{\omg_{0}-\eta} (V')^{-(\omg_{0} - \eta)}  \quad \hbox{ in } \calX_{0} \cap \set{r \leq \Lmb r_{\EH}} \cap \set{V \geq V_{0}} .
\end{equation}
Then by Lemma~\ref{lem:exp-int}, we have
\begin{equation*}
\int_{V_{0}}^{V} e^{- b_{\Lmb}(V - V')} \abs{\rd_{V} \phi}(u, V') \, \ud V' \leq C_{V_{0}, b_{\Lmb}, \omg_{0}-\eta, r_{\EH}} B_{\omg_{0} - \eta} V^{-(\omg_{0} - \eta)},
\end{equation*}
which is also acceptable for the proof of \eqref{eq:DR:duphi-small-r-goal}.

\pfstep{Step~2: Proof of \eqref{eq:DR:duduphi-small-r}}
Next, we establish \eqref{eq:DR:duduphi-small-r} assuming \eqref{eq:DR-small-r:redshift}. 
Let $V_{0}$ and $B_{\Lmb}$ be chosen as in Step~1.
Thanks to \eqref{eq:DR:duphi-small-r} and the fact that $\inf_{\calX_{0}} r > 0$, it suffices to show that 
\begin{equation} \label{eq:DR:duduphi-small-r-goal}
	\Abs{\frac{1}{\rd_{u} r} \rd_{u} \left( \frac{r}{\rd_{u} r} \rd_{u} \phi\right)} (u, V)
	\leq B_{\Lmb}' V^{-(\omg_{0} - \eta)}
\end{equation}
in $\calX_{0} \cap \set{r \leq \Lmb r_{\EH}} \cap \set{V \geq V_{0}}$, for some $B_{\Lmb}'$ depending on $r_{\EH}$, $\varpi_{i}$, $\e$, $b_{\Lmb}$ and $B_{\Lmb}$ as in \eqref{eq:DR:duphi-small-r}.

Taking $\frac{1}{\rd_{u} r} \rd_{u}$ of \eqref{eq:dVrduphi}, we obtain
\begin{align*}
	\rd_{V} \left(\frac{1}{\rd_{u} r} \rd_{u} \left(\frac{r}{\rd_{u} r} \rd_{u} \phi\right) \right)
%=& 	- \frac{1}{\rd_{u} r} \rd_{u} \rd_{V} \phi
%	- \left( \frac{1}{\rd_{u} r} \rd_{u} \left(\frac{2 (\varpi - \frac{\e^{2}}{r})}{r^{2}} \kpp \right) \right) \frac{r}{\rd_{u} r} \rd_{u} \phi 
%	- \frac{4 (\varpi - \frac{\e^{2}}{r})}{r^{2}} \kpp \left(\frac{1}{\rd_{u} r} \rd_{u} \left(\frac{r}{\rd_{u} r} \rd_{u} \phi\right) \right) \\
=& 	\frac{1}{r} \rd_{V} \phi
	+ \left( \frac{\rd_{V} r}{r^{2}} - \frac{1}{\rd_{u} r} \rd_{u} \left(\frac{2 (\varpi - \frac{\e^{2}}{r})}{r^{2}} \frac{\rd_{V} r}{1-\mu} \right) \right) 
		\left( \frac{r}{\rd_{u} r} \rd_{u} \phi \right) \\
	& - \frac{4 (\varpi - \frac{\e^{2}}{r})}{r^{2}} \frac{\rd_{V} r}{1-\mu} \left(\frac{1}{\rd_{u} r} \rd_{u} \left(\frac{r}{\rd_{u} r} \rd_{u} \phi\right) \right) .
\end{align*}
On the RHS, note the presence of a linear term, whose coefficient is strictly negative thanks again to \eqref{eq:DR-small-r:redshift}. Therefore, proceeding as in Step~1, it suffices to show that the first two terms on the RHS decay at the rate $V^{-(\omg_{0}-\eta)}$. For the first term, it is a clear consequence of $r \geq \frac{1}{2} r_{\EH}$ and \eqref{eq:DR-small-r:dvphi}. For the second term, we use \eqref{eq:DR:duphi-small-r} and the bound
\begin{equation} \label{eq:DR:duduphi-small-r:claim}
\Abs{ \frac{\rd_{V} r}{r^{2}} - \frac{1}{\rd_{u} r} \rd_{u} \left(\frac{2 (\varpi - \frac{\e^{2}}{r})}{r^{2}} \frac{\rd_{V} r}{1-\mu} \right) } (u, V') \leq C_{r_{\EH}, \varpi_{i}, \e, B_{\Lmb}} .
\end{equation}
To prove \eqref{eq:DR:duduphi-small-r:claim}, we rewrite the expression inside the absolute value on the LHS as
\begin{align*}
\frac{\rd_{V} r}{(1-\mu) r^{2}} (1-\mu)  
+ \frac{4 \varpi - 6 \frac{\e^{2}}{r}}{r^{3}} \frac{\rd_{V} r}{1-\mu} - \frac{2}{r^{2}} \frac{\rd_{V} r}{1-\mu} \frac{1}{\rd_{u} r} \rd_{u} \varpi 
- \frac{2 (\varpi - \frac{\e^{2}}{r})}{r^{2}} \frac{1}{\rd_{u} r} \rd_{u} \left( \frac{\rd_{V} r}{1-\mu} \right).
\end{align*}
Clearly, the first two terms are uniformly bounded in $\calX_{0} \cap \set{r \leq \Lmb r_{\EH}} \cap \set{V \geq V_{0}}$ thanks to \eqref{eq:DR:kpp}, $\varpi \leq \varpi_{i}$ and $r \geq \frac{1}{2} r_{\EH}$. For the last two terms, we use
\begin{align*}
\frac{1}{\rd_{u} r} \rd_{u} \varpi
= \frac{1}{2} (1-\mu) r^{2} \left( \frac{1}{\rd_{u} r} \rd_{u} \phi \right)^{2}, \qquad
\frac{1}{\rd_{u} r} \rd_{u} \left( \frac{\rd_{V} r}{1-\mu} \right)
=  \frac{\rd_{V} r}{1-\mu} r \left( \frac{1}{\rd_{u} r} \rd_{u} \phi \right)^{2}, 
\end{align*}
and the bounds \eqref{eq:DR:kpp}, \eqref{eq:DR:duphi-small-r}, $\varpi \leq \varpi_{i}$ and $r \in [\frac{1}{2} r_{\EH}, \Lambda r_{\EH}]$.

\pfstep{Step~3: Proof of \eqref{eq:DR:varpi-small-r}}
First, we work on $\set{(\infty, V) : V \geq 1}$, which lies on $\EH$. By Theorem~\ref{thm:DR-full}, we have
\begin{equation*}
	\abs{\varpi(\infty, V) - \varpi_{\EH}}
	\leq \int_{V}^{\infty} \frac{1}{2} \frac{1-\mu}{\rd_{V} r} \left(r \rd_{V} \phi \right)^{2} (\infty, V') \, \ud V'
%	\leq (1 + r^{-2}(\infty, V)) B_{0} B_{\omg_{0}-\eta}^{2} \int_{V}^{\infty} (V')^{-2 (\omg_{0} - \eta)} \, \ud V'
	\leq B'' V^{-2 (\omg_{0} - \eta) + 1},
\end{equation*}
for some $B'' > 0$ depending on $\inf_{\calX_{0}} r$, $B_{0}$ and $B_{\omg_{0} - \eta}$. Next, by \eqref{eq:DR:duphi-small-r}, we have
\begin{align*}
	\abs{\varpi(u, V) - \varpi(\infty, V)}
	\leq & \int_{u}^{\infty} \frac{1}{2} (1-\mu) r^{2} (- \rd_{u} r) \left(\frac{1}{(-\rd_{u} r)} \rd_{u} \phi\right)^{2} (u', V) \, \ud u' \\
	\leq & C \sup_{\calX_{0}} (1-\mu) (\Lmb r_{\EH})^{3} B_{\Lmb}^{2} V^{-2 (\omg_{0} - \eta)},
\end{align*}
where $B_{\Lmb}$ is from \eqref{eq:DR:duphi-small-r}. Combining the previous two bounds and readjusting $B_{\Lmb}$, \eqref{eq:DR:varpi-small-r} follows.

\pfstep{Step~4: Proof of \eqref{eq:DR-small-r:redshift}}
To complete the proof, it remains to find $V_{0} \geq 1$ and $b_{\Lmb} > 0$ such that \eqref{eq:DR-small-r:redshift} holds. By the relation $r_{\EH} = \varpi_{\EH} + \sqrt{\varpi_{\EH}^{2} - \e^{2}}$ in \eqref{r.varpi.poly} and the subextremality of $\EH$ ($\abs{\e} < \varpi_{\EH}$, see \eqref{eq:subextremality}), we have
\begin{equation*}
	\varpi_{\EH}  - \frac{\e^{2}}{r_{\EH}}
	= r_{\EH}^{-1}\left( \varpi_{\EH} + \sqrt{\varpi_{\EH}^{2} - \e^{2}} \right) \sqrt{\varpi_{\EH}^{2} - \e^{2}}
	= \sqrt{\varpi_{\EH}^{2} - \e^{2}} > 0.
\end{equation*}
Recall from \eqref{r.varpi.limits} that $r_{\EH} = \lim_{V \to \infty} r(\infty, V)$ and $\varpi_{\EH} = \lim_{V \to \infty} \varpi(\infty, V)$. Thus, there exists $V_{0} \geq 1$ such that
\begin{equation*}
	r(\infty, V) \geq \frac{1}{2} r_{\EH}, \quad
	\left( \varpi - \frac{\e^{2}}{r}\right)(\infty, V) \geq \frac{1}{2} \sqrt{\varpi_{\EH}^{2} - \e^{2}}, \qquad
	V \geq V_{0}.
\end{equation*}
By monotonicity of $r$ and $\varpi$ and \eqref{eq:DR:kpp}, we see that \eqref{eq:DR-small-r:redshift} holds with $b_{\Lmb} = \frac{1}{\Lmb^{2} r_{\EH}^{2}} \sqrt{\varpi_{\EH}^{2} - \e^{2}} B_{0}^{-1}$. \qedhere
\end{proof}

Next, as another corollary of Theorem~\ref{thm:DR-full}, we propagate $r$-weighted bounds for $\rd_{V} \psi$ and $\rd_{V}^{2} \psi$ to a neighborhood of null infinity. We refer the readers to Figure~\ref{fig:DR-full} for a depiction of the region $\mathcal N$ below.
\begin{corollary}[Decay in a full neighborhood of null infinity] \label{cor:DR-large-r}
Given $\eta_{\NI} > 0$, we define
\begin{equation*}
	\calN = \set{(u, V) \in \PD : \forall V' \geq V, \ r(u, V') \geq \eta_{\NI}^{-1} \max \set{ \varpi_{i}, \abs{\e} },\ \rd_{u} r (u, V') < 0, \ \rd_{V} r (u, V') > 0}.
\end{equation*}
If $\eta_{\NI}$ is sufficiently small depending on $\calX_{0}$, $\eta_{i^{0}}$, there exists $B_{\eta_{\NI}} >  0 $, which depends on $\eta_{\NI}$, $\varpi_{i}$, $\e$, $\omg_{0}$, $\eta$, $c_{\omg_{0}-\eta}$, $B_{\omg_{0}-\eta}$, $B_{\eta}$ and $B_{\Lmb}$ (where $\Lmb$ depends on $\eta_{\NI}$, $\varpi_{i}$, $\abs{\e}$), so that the following bounds in the $(u,V)$ coordinates hold in $\calN$:
\begin{align} 
	\frac{1}{2} \leq \frac{\rd_{V} r(u, V)}{\rd_{V} r (\underline{u}(V), V)}\leq & 2 , \label{eq:DR:dvr-large-r} \\
	\frac{1}{2} \leq -\rd_{u} r \leq & 2 , \label{eq:DR:dur-large-r} \\
	\f 12\leq (1-\mu) \leq & 2, \label{eq:DR:1-mu-large-r} \\
	\Abs{\frac{1-\mu}{\rd_{V} r} \rd_{V} \psi} \leq & r^{-\omg_{0}} B_{\eta_{\NI}} , \label{eq:DR:dvrphi-large-r} \\
	\Abs{\frac{1}{\rd_{u} r} \rd_{u} \psi} \leq & (V_{R}(u)^{-(\omg_{0}-\eta)} + u^{-(\omg_{0} - \eta)}) B_{\eta_{\NI}} \quad \hbox{ for } u \geq 1, \label{eq:DR:durphi-large-r} \\
	\Abs{\left(\frac{1-\mu}{\rd_{V} r} \rd_{V}\right)^{2} \psi} \leq & r^{-\omg_{0}-1} B_{\eta_{\NI}}.\label{eq:DR:dvdvrphi-large-r}
\end{align}
Here, $\underline{u}(V)$ is defined by $(\underline{u}(V), V) \in \Sgm_{0}$, and $(u, V_{R}(u))$ is the point in $C_u\cap \set{r = \eta_{\NI}^{-1} \max\set{\varpi_{i}, \abs{\e}}}$.
\end{corollary}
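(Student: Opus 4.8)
The plan is to obtain all the estimates by propagating known bounds into $\calN$ along characteristics, using Proposition~\ref{prop:bg-large-r} as the source of bounds on $\Sgm_{0, i^{0}}$ (hence on $C_{out} \subseteq \underline{\calN}$) and Theorem~\ref{thm:DR-full} together with Corollary~\ref{cor:DR-small-r} as the source of bounds inside $\calX_{0}$. First I would fix $\eta_{\NI}$ small enough that (a) for every $(u, V) \in \calN$ the past incoming segment $\uC_{V} \cap \set{u' \le u}$ reaches either $\Sgm_{0, i^{0}}$ or $C_{out}$ while staying in $\underline{\calN} \cup \calX_{0}$; and (b) using $\varpi \le \varpi_{i}$ in $\Ext_{1}$ (a consequence of $\rd_{v} \varpi \ge 0$, $\rd_{u} \varpi \le 0$ and $\varpi\restriction_{\Sgm_{0}} \le \varpi_{i}$, cf.\ \cite{LO.interior}) together with the defining property $r \ge \eta_{\NI}^{-1} \max\set{\varpi_{i}, \abs{\e}}$ of $\calN$, one has $\frac{\varpi}{r} + \frac{\abs{\e}}{r} \le \eta_{\NI}$ throughout $\calN$. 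Item (b) immediately gives $\abs{\mu} = \abs{\frac{2\varpi}{r} - \frac{\e^{2}}{r^{2}}} \le 2\eta_{\NI} + \eta_{\NI}^{2}$, hence \eqref{eq:DR:1-mu-large-r} for $\eta_{\NI}$ small.

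Next, \eqref{eq:DR:dur-large-r} follows by integrating the $\rd_{V}$-evolution equation for $\frac{\rd_{u} r}{1-\mu}$ from the Raychaudhuri system \eqref{eq:EMSF-ray} along $C_{u}$ from $(u, V)$ out to $\NI$, where the future-normalization \eqref{eq:coord-u-bg-0} supplies the boundary value $\frac{-\rd_{u} r}{1-\mu} \to 1$; the source integrates as $\int_{V}^{\infty} \frac{r (\rd_{V} \phi)^{2}}{\rd_{V} r} \, \ud V' = \int_{V}^{\infty} \frac{2 \rd_{V} \varpi}{r(1-\mu)} \, \ud V' \lesssim \frac{\varpi_{i}}{r(u,V)} \lesssim \eta_{\NI}$ by monotonicity of $\varpi$ and $r$, so $e^{-C \eta_{\NI}} \le \frac{-\rd_{u} r}{1-\mu} \le 1$, which with \eqref{eq:DR:1-mu-large-r} yields \eqref{eq:DR:dur-large-r}. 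Similarly, \eqref{eq:DR:dvr-large-r} follows by integrating $\rd_{u} \log \rd_{V} r = \frac{2(\varpi - \e^{2}/r)}{r^{2}} \frac{\rd_{u} r}{1-\mu}$ (from \eqref{eq:EMSF-r-phi-m}) along $\uC_{V}$ from $\Sgm_{0}$ to $(u,V)$, changing variables $u' \mapsto r$ and using $r \ge \eta_{\NI}^{-1}\max\set{\varpi_{i}, \abs{\e}}$ on the segment: the total variation of $\log \rd_{V} r$ is $\lesssim \frac{\varpi_{i}}{r(u,V)} \lesssim \eta_{\NI}$. In particular $\rd_{V} r$ is comparable to a constant, and using its explicit value on $\Sgm_{0}$ from Lemma~\ref{lem:cauchy-to-char} it is bounded above; integrating along $\uC_{V}$ this gives $r(u,V) \lesssim V$ throughout $\calN$, a fact I will use repeatedly below.

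For \eqref{eq:DR:dvrphi-large-r} I would run a transport/Gr\"onwall argument along $\uC_{V}$ for $\frac{1}{\rd_{V} r} \rd_{V} \psi$ using equation \eqref{eq:dUdvpsi}, starting from $C_{out}$ (or from $\Sgm_{0, i^{0}}$ for the part of $\calN$ near $i^{0}$), where $\Abs{\frac{1}{\rd_{V} r} \rd_{V} \psi} \le B_{i^{0}} r^{-\omg_{0}}$ by \eqref{eq:bg-large-r:dvrphi}; by \eqref{eq:DR:1-mu-large-r} the weight $\frac{1-\mu}{\rd_{V} r}$ is interchangeable with $\frac{1}{\rd_{V} r}$ in $\calN$. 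The integrating factor coming from the linear term integrates to $O(\eta_{\NI})$ (again via $u' \mapsto r$) and stays $\approx 1$; the inhomogeneity $\frac{\rd_{u} \rd_{V} r}{r \rd_{V} r} \psi$ is handled by splitting the incoming segment at $\set{r = c_{\omg_{0}-\eta} V}$, bounding $\abs{\psi} \lesssim V^{1-\omg_{0}}$ on the medium-$r$ part by \eqref{eq:DR:phi-med-r} and $\abs{\psi} \lesssim (u')^{-\omg_{0}+1}$ (which is integrable in $u'$) on the huge-$r$ part by \eqref{eq:DR:phi-huge-r}; converting all $V$-weights to $r$-weights via $r \lesssim V$ and using $2 < \omg_{0} \le 3$, the source contributes $\lesssim r(u,V)^{-\omg_{0}}$ and the estimate closes. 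The estimate \eqref{eq:DR:durphi-large-r} for $\frac{1}{\rd_{u} r} \rd_{u} \psi$ is obtained the same way, but propagating in the $V$-direction (equation obtained from \eqref{eq:dUdvpsi} by $u \leftrightarrow V$) \emph{outward} from the inner edge $\set{r = \eta_{\NI}^{-1}\max\set{\varpi_{i}, \abs{\e}}}$, where Corollary~\ref{cor:DR-small-r} — applied with $\Lmb$ chosen so that $\Lmb r_{\EH} \ge \eta_{\NI}^{-1}\max\set{\varpi_{i}, \abs{\e}}$ — provides the initial bound $\lesssim V_{R}(u)^{-(\omg_{0}-\eta)}$; here the linear term gives a (weak) damping factor and the source, controlled by \eqref{eq:DR:dvphi-med-r}–\eqref{eq:DR:dvphi-huge-r}, produces the $u^{-(\omg_{0}-\eta)}$ contribution. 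Finally, \eqref{eq:DR:dvdvrphi-large-r} is obtained by applying $\frac{1-\mu}{\rd_{V} r} \rd_{V}$ once more and propagating the resulting quantity exactly as in Step~2 of the proof of Corollary~\ref{cor:DR-small-r}: since $\frac{1-\mu}{\rd_{V} r} \rd_{V}$ sends $r \mapsto 1-\mu$, it is morally $\rd_{r}$, so each application costs one power of $r$, which is why the weight improves to $r^{-\omg_{0}-1}$; the needed second-order input on $C_{out}$ comes from differentiating the characteristic constraints and the first-order bounds there, and the Gr\"onwall structure is identical.

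The main obstacle is the bookkeeping of decay rates across the boundary $\set{r \sim V}$ separating the ``medium-$r$'' and ``huge-$r$'' regimes of Theorem~\ref{thm:DR-full}: one must check that in each regime the $V$- and $u$-decay of Price's law, once converted to $r$-decay via $r \lesssim V$, is strong enough to reproduce the claimed $r$-weights. This is precisely where the hypothesis $2 < \omg_{0} \le 3$ enters — for instance one needs $V^{-3} \lesssim r^{-\omg_{0}}$ (requiring $\omg_{0} \le 3$) and $V^{1-\omg_{0}} r^{\omg_{0}-2} \lesssim 1$ when $r \lesssim V$ (requiring $\omg_{0} > 2$), together with integrability of $\int (u')^{-\omg_{0}+1} \, \ud u'$. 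The second, more routine, point is verifying that the small choice of $\eta_{\NI}$ indeed makes $\calN$ covered by $\underline{\calN} \cup \calX_{0}$ so that every incoming characteristic from $\calN$ carries usable data.
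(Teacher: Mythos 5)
Your proposal follows the same route as the paper: use Proposition~\ref{prop:bg-large-r}, Theorem~\ref{thm:DR-full} and Corollary~\ref{cor:DR-small-r} as sources, first prove the geometric bounds \eqref{eq:DR:dvr-large-r}--\eqref{eq:DR:1-mu-large-r} by small total-variation estimates for $\log \rd_{V} r$, $\log(-\rd_{u} r)$ and $\mu$, then propagate the $\psi$-derivative bounds by Gr\"onwall along the appropriate characteristic family. The direction of propagation for each estimate (incoming from $C_{out}$ for \eqref{eq:DR:dvrphi-large-r}, outgoing from $\{r=R\}$ for \eqref{eq:DR:durphi-large-r}, incoming for \eqref{eq:DR:dvdvrphi-large-r}) matches the paper, as does the ``split at $r \sim V$'' bookkeeping using \eqref{eq:DR:phi-med-r}/\eqref{eq:DR:phi-huge-r}.

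One step in your sketch as written would fail, however. For \eqref{eq:DR:dvdvrphi-large-r} you assert that ``the needed second-order input on $C_{out}$ comes from differentiating the characteristic constraints and the first-order bounds there.'' That cannot work: a second-order pointwise bound on $\left(\frac{1-\mu}{\rd_{V} r}\rd_{V}\right)^{2}\psi$ cannot be obtained by differentiating first-order bounds, and Proposition~\ref{prop:bg-large-r} supplies nothing beyond first-order information on $C_{out}$. The actual source of the second-order initial bound is the Cauchy hypersurface $\Sgm_{0}$ itself: since $\calN \subseteq \underline{\calN}$ at its base, the incoming segment $\uC_{V}\cap\calN$ reaches a point $(\underline{u}(V),V) \in \Sgm_{0,i^{0}}$, where $r^{\omg_{0}+1}\bigl|\bigl(\frac{1-\mu}{\rd_{V} r}\rd_{V}\bigr)^{2}\psi\bigr|$ is bounded by virtue of the asymptotic flatness condition $\phi = O_{2}(|\rho|^{-\omg_{0}})$ in Definition~\ref{def:adm-data} (together with Lemma~\ref{lem:cauchy-to-char}); one then Gr\"onwalls in $u$ from $\Sgm_{0}$, not from $C_{out}$. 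A related minor imprecision: you invoke ``Step~2 of the proof of Corollary~\ref{cor:DR-small-r},'' but that step relies on the red-shift damping $\frac{2(\varpi - \e^{2}/r)}{r^{2}}\kpp \geq b_{\Lmb} > 0$, which is absent in the large-$r$ region; the correct Gr\"onwall structure here is the one from Proposition~\ref{prop:bg-large-r}, where the linear coefficient $\frac{4(\varpi - \e^{2}/r)}{r^{2}}\frac{(-\rd_{u} r)}{1-\mu}$ has no favorable sign but merely a uniformly bounded (indeed $O(\eta_{\NI})$) integral along $\uC_{V}$.
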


After the proof, {\bf we fix $\eta_{\NI} > 0$ and thus $\calN$ so that $\eta_{\NI} \leq \eta_{i^{0}}$ and Corollary~\ref{cor:DR-large-r} applies.}

\begin{proof}
As in the proof of Proposition~\ref{prop:bg-large-r}, we abbreviate $R = \eta_{\NI}^{-1} \max \set{\varpi_{i}, \e}$. Choosing $\eta_{\NI} > 0$ sufficiently small (depending on $\calX_{0}$ and $\eta_{i^{0}}$), we may assume\footnote{Recall that $\calX_{0}$ is chosen such that the initial outgoing curve lies in $\underline{\calN}$.} that $\calN \subset \calX_{0} \cup \underline{\calN}$. Note that all (maximally-past-extended) incoming null curves in $\calN$ meet $\Sgm_{0}$ in the past direction, i.e., if $(u, V) \in \calN$, then $\set{(u', V) \in \PD : \underline{u}(V) \leq u' \leq u} \subseteq \calN$. Moreover, the following bounds hold in $\calN$:
\begin{align} 
	\frac{\sup_{\calN} \varpi}{r} + \frac{\abs{\e}}{r} \leq & \eta_{\NI}, \label{eq:DR-large-r:pf-me} \\
	1 - 2 \eta_{\NI} \leq 1 - \mu \leq & 1 + \eta_{\NI}^{2} . \label{eq:DR-large-r:pf-mu}
\end{align}

Choosing $\eta_{\NI} > 0$ sufficiently small, \eqref{eq:DR:dvr-large-r} follows from exactly the same proof as \eqref{eq:bg-large-r:dvr}. Next, \eqref{eq:DR:dur-large-r} is proved in a similar manner as \eqref{eq:bg-large-r:dur}, except we use $\lim_{V \to \infty} \log (-\rd_{u} r) (u, V) = \lim_{V \to \infty} \log \frac{-\rd_{u} r}{1-\mu} (u, V)=0$ and integrate from $(u, \infty)$ to $(u, V)$. The bound \eqref{eq:DR:1-mu-large-r} follows immediately from \eqref{eq:DR-large-r:pf-mu} after choosing $\eta_{\NI}$ to be sufficiently small. For the proof of \eqref{eq:DR:dvrphi-large-r}, in view of \eqref{eq:bg-large-r:dvrphi} in $\underline{\calN} \cap \calX_{0}$, it suffices to consider the case $(u, V) \in \calN \cap \calX_{0}$ (i.e., $u \geq 1$). In this region, we have 
\begin{equation} \label{eq:DR-large-r:pf-psi}
\abs{\psi} (u, V) \leq \left\{
\begin{array}{cl}
	V^{-\omg_{0}+1} (1+c_{\omg_{0} - \eta}) B_{\omg_{0} - \eta} & \hbox{ for } u \geq 1,\ r(u, V) \leq c_{\omg_{0} - \eta} V \\
	u^{-\omg_{0}+1} B_{\eta} & \hbox{ for } u \geq 1,\ r(u, V) \geq c_{\omg_{0} - \eta} V 
\end{array} \right.
\end{equation}
by \eqref{eq:DR:phi-med-r} and \eqref{eq:DR:phi-huge-r}. Proceeding as in the proof of \eqref{eq:bg-large-r:dvrphi}, we estimate
\begin{align*}
	r^{\omg_{0}} \Abs{\frac{1}{\rd_{V} r} \rd_{V} \psi}(u, V)
	\leq & C r^{\omg_{0}} \Abs{\frac{1}{\rd_{V} r} \rd_{V} \psi}(1, V) 
	 + C \frac{\varpi_{i}}{R^{3-\omg_{0}}} \int_{1}^{\max\set{u, 1}} \abs{\psi(u', V)} (-\rd_{u} r)(u', V)\, \ud u' .
\end{align*}
The first term on the RHS is bounded by $C B_{i^{0}}$ in view of \eqref{eq:bg-large-r:dvrphi}, whereas the integral in the second term may be bounded by $C ((1+c_{\omg_0-\eta}) c_{\omg_{0}-\eta}^{\omg_{0}-1} R^{-\omg_{0}+2} B_{\omg_{0} - \eta} + \f{B_{\eta}}{\omg_0-2})$ using \eqref{eq:DR:dur-large-r} and \eqref{eq:DR-large-r:pf-psi}. Using \eqref{eq:DR-large-r:pf-mu} to bound $1-\mu$, we obtain \eqref{eq:DR:dvrphi-large-r}. Similarly, to prove \eqref{eq:DR:durphi-large-r}, we use the equation
\begin{equation*}
	\rd_{V} \left( \frac{1}{\rd_{u} r} \rd_{u} \psi\right)
	= - \frac{\rd_{u} \rd_{V} r}{\rd_{u} r} \frac{1}{\rd_{u} r} \rd_{u} \psi + \frac{\rd_{u} \rd_{V} r}{r \rd_{V} r} \psi
\end{equation*}
Given $(u, V) \in \calN$, let $V_{R}(u)$ be defined by $r (u, V_{R}(u)) = R$. Observe that $\int_{V_{R}(u)}^{V} \Abs{\frac{\rd_{V} \rd_{u} r}{\rd_{u} r}}(u, V') \, \ud V' \leq C$ by \eqref{eq:DR:dur-large-r}. Therefore, integrating the preceding equation from $(u, V_{R}(u))$ to $(u, V)$, then applying \eqref{eq:DR-large-r:pf-me}, \eqref{eq:DR-large-r:pf-mu} and Gr\"onwall's inequality, we obtain
\begin{align*}
	\Abs{\frac{1}{\rd_{u} r} \rd_{u} \psi}(u, V)
	\leq & C \Abs{\frac{1}{\rd_{u} r} \rd_{u} \psi}(u, V_{R}(u)) + C  \int_{V_{R}(u)}^{V} \frac{2 (\varpi - \frac{\e^{2}}{r})}{r^{3}} \frac{\rd_{V} r}{1-\mu} \abs{\psi} (u, V') \, \ud V' \\
	\leq & C \Abs{\frac{1}{\rd_{u} r} \rd_{u} \psi}(u, V_{R}(u)) + C \eta_{\NI} \sup_{V' \in [V_{R}(u), \infty)} r^{-1} \abs{\psi} (u, V') .
\end{align*}
On the last line, the first term is acceptable in view of \eqref{eq:DR:phi-small-r} and \eqref{eq:DR:duphi-small-r}. For the second term, we use the bound
\begin{equation*}
	r^{-1} \abs{\psi(u, V)} = \abs{\phi(u, V)} \leq (V^{-(\omg_{0}-\eta)} r^{-1} + V^{-\omg_{0}}) B_{\omg_{0}-\eta} + u^{-(\omg_{0}-1)} V^{-1} c_{\omg_{0} - \eta}^{-1}B_{\eta},
\end{equation*}
which follows by combining \eqref{eq:DR:phi-med-r} and \eqref{eq:DR:phi-huge-r}. Then using the simple bound $u^{-(\omg_{0} - 1)} V^{-1} \leq C_{\omg_{0}} (u^{- \omg_{0}} + V^{- \omg_{0}})$, we see that this term is acceptable, too.

It remains to prove \eqref{eq:DR:dvdvrphi-large-r}. By \eqref{eq:DR:dvrphi-large-r}, \eqref{eq:DR-large-r:pf-me}, \eqref{eq:DR-large-r:pf-mu} and
\begin{equation*}
	\frac{1-\mu}{\rd_{V} r} \rd_{V}(1-\mu)
	= (1-\mu) \left(\frac{2 \varpi}{r^{2}} - \frac{2 \e^{2}}{r^{3}}\right) - \frac{1}{r} \left( \frac{1-\mu}{\rd_{V} r} r \rd_{V} \phi \right)^{2} ,
\end{equation*}
it suffices to bound $r^{\omg_{0} + 1} \Abs{\left(\frac{1}{\rd_{V} r} \rd_{V}\right)^{2} \psi}$. We use the equation
\begin{align*}
	\rd_{u} \left(\frac{1}{\rd_{V} r} \rd_{V} \right)^{2} \psi 
	=& \left(\frac{1}{\rd_{V} r} \rd_{V} \right) \rd_{u} \left(\frac{1}{\rd_{V} r} \rd_{V} \right) \psi 
		- \frac{\rd_{u} \rd_{V} r} {\rd_{V} r}\left(\frac{1}{\rd_{V} r} \rd_{V} \right)^{2} \psi \\
%	= &\left(\frac{1}{\rd_{V} r} \rd_{V} \right) 
%		\left( - \frac{\rd_{u} \rd_{V} r}{\rd_{V} r} \frac{1}{\rd_{V} r} \rd_{V} \psi + \frac{\rd_{u} \rd_{V} r}{r \rd_{V} r} \psi \right) 
%		- \frac{\rd_{u} \rd_{V} r} {\rd_{V} r}\left(\frac{1}{\rd_{V} r} \rd_{V} \right)^{2} \psi \\
%	= &\left(\frac{1}{\rd_{V} r} \rd_{V} \left( \frac{\rd_{u} \rd_{V} r}{\rd_{V} r} \right) \right) \left(- \frac{1}{\rd_{V} r} \rd_{V} \psi + \frac{1}{r} \psi \right) 
%		+ \frac{\rd_{u} \rd_{V} r}{r \rd_{V} r} \left(- \frac{1}{r} \psi + \frac{1}{\rd_{V} r} \rd_{V} \psi \right) 
%		- 2 \frac{\rd_{u} \rd_{V} r} {\rd_{V} r} \left(\frac{1}{\rd_{V} r} \rd_{V} \right)^{2} \psi 
	= &\left(\frac{1}{r \rd_{V} r} \rd_{V} \left( \frac{2 (\varpi - \frac{\e^{2}}{r}) } {r^{2}} \frac{\rd_{u} r}{1-\mu} \right) - \frac{2 (\varpi - \frac{\e^{2}}{r}) } {r^{4}} \frac{\rd_{u} r}{1-\mu} \right) \left(- \frac{r}{\rd_{V} r} \rd_{V} \psi +  \psi \right) \\
		& - \frac{4 (\varpi - \frac{\e^{2}}{r}) } {r^{2}} \frac{\rd_{u} r}{1-\mu} \left(\frac{1}{\rd_{V} r} \rd_{V} \right)^{2} \psi .
\end{align*}
By \eqref{eq:DR-large-r:pf-me} and \eqref{eq:DR-large-r:pf-mu}, we have $\int_{\underline{u}(V)}^{u} \Abs{\frac{4 (\varpi - \frac{\e^{2}}{r}) } {r^{2}} \frac{\rd_{u} r}{1-\mu}} (u', V) \, \ud u' \leq C$. Moreover, by \eqref{eq:bg-large-r:rphi}, \eqref{eq:DR:dvrphi-large-r}, \eqref{eq:DR-large-r:pf-me}, \eqref{eq:DR-large-r:pf-mu} and the equations
\begin{align*}
\frac{1}{\rd_{V} r} \rd_{V} \varpi
= \frac{1}{2} (1-\mu) \left( \frac{1}{\rd_{V} r} \rd_{V} \psi - \frac{1}{r} \psi \right)^{2}, \qquad
\frac{1}{\rd_{V} r} \rd_{V} \left( \frac{\rd_{u} r}{1-\mu} \right)
=  \frac{\rd_{u} r}{1-\mu} r^{-1} \left( \frac{1}{\rd_{V} r} \rd_{V} \psi - \frac{1}{r} \psi \right)^{2}, 
\end{align*}
we have
\begin{equation*}
	\int_{\underline{u}(V)}^{u} r^{\omg_{0}+1} \Abs{\left(\frac{1}{r \rd_{V} r} \rd_{V} \left( \frac{2 (\varpi - \frac{\e^{2}}{r}) } {r^{2}} \frac{\rd_{u} r}{1-\mu} \right) - \frac{2 (\varpi - \frac{\e^{2}}{r}) } {r^{4}} \frac{\rd_{u} r}{1-\mu} \right) \left(- \frac{r}{\rd_{V} r} \rd_{V} \psi +  \psi \right)}(u', V) \, \ud u' \leq B',
\end{equation*}
where $B'$ depends on $\eta_{\NI}$, $R$, $\omg_{0}$, $c_{\omg_{0} - \eta}$, $B_{i^{0}}$, $B_{\omg_{0} - \eta}$, $B_{\eta}$ and $B_{\eta_\NI}$ (from \eqref{eq:DR:dvrphi-large-r} that we just established). Therefore, by Gr\"onwall's inequality and monotonicity of $r$ (i.e., $\rd_{u} r \leq 0$), we have
\begin{align*}
	r^{\omg_{0}+1} \Abs{\left( \frac{1}{\rd_{V} r} \rd_{V} \right)^{2} \psi}(u, V)
	\leq & C r^{\omg_{0}+1} \Abs{\left( \frac{1}{\rd_{V} r} \rd_{V} \right)^{2} \psi}(\underline{u}(V), V) 
	+ C B' ,
\end{align*}
which implies \eqref{eq:DR:dvdvrphi-large-r}. \qedhere
\end{proof}

\begin{remark} \label{rem:DR-large-r:dUr}
Although stated in the $(u, V)$ coordinates, note that \eqref{eq:DR:dvr-large-r}, \eqref{eq:DR:dvrphi-large-r} and \eqref{eq:DR:dvdvrphi-large-r} are \emph{invariant} under coordinate transforms of the form $(u, V) \mapsto (U(u), V)$. 
% (in fact, they are invariant under any coordinate transform between double null coordinate systems)
The bounds \eqref{eq:DR:dur-large-r} and \eqref{eq:DR:durphi-large-r} are specific to the coordinate system $(u, V)$. However, in the case of \eqref{eq:DR:dur-large-r}, there is a qualitative analogue in the $(U, V)$ coordinates. Namely, given some $U_{1} \leq U_{\EH}$, observe (by a straightforward compactness argument and Proposition~\ref{prop:bg-large-r}) that the Jacobian $\frac{\ud u(U)}{\ud U}$ is bounded from below on $\set{(U, V) \in \PD : U \leq U_{1}}$ by a positive constant. Therefore, \eqref{eq:DR:dur-large-r} implies
\begin{equation} \label{eq:DR-large-r:dUr}
	B_{cpt}^{-1} \leq - \rd_{U} r \leq B_{cpt} \quad \hbox{ in } \calN \cap \set{(U, V) \in \PD : U \leq U_{1}}
\end{equation}
where $B_{cpt} > 0$ depends on the background solution on $\set{(U, V) \in \PD : U \leq U_{1}}$.
\end{remark} 
After the proof, {\bf we fix $\eta_{\NI} > 0$ and thus $\calN$ so that $\eta_{\NI} \leq \eta_{i^{0}}$ and Corollary~\ref{cor:DR-large-r} applies.}

As an almost immediate consequence of Corollaries~\ref{cor:DR-small-r} and \ref{cor:DR-large-r}, we show that the limits of the mass on the event horizon and null infinity coincide.
\begin{corollary}[$\varpi_{\EH}=\varpi_f$]\label{varpi.same.limit}
Let $\varpi_{\EH}$ be as in Proposition~\ref{prop.subextremality} and\footnote{Notice that $\varpi_f$ is well-defined by the monotonicity of $\varpi$.} $\varpi_f=\lim_{u\to \infty} \lim_{V\to\infty} \varpi(u,V)$. Then
$$\varpi_{\EH}=\varpi_f.$$
\end{corollary}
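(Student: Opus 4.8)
The plan is to show that the mass $\varpi$ has the same limit whether one first sends $V \to \infty$ (reaching null infinity) and then $u \to \infty$, or one approaches the event horizon $\EH = \{u = \infty\}$ and then sends $V \to \infty$. The key structural facts are: $\varpi$ is monotone in both $u$ and $V$ in the exterior region (from \eqref{eq:EMSF-r-phi-m}, since $\rd_{v}\varpi \geq 0$ and $\rd_{u}\varpi \geq 0$ wherever $1-\mu > 0$ and $\rd_{u}r < 0$, $\rd_{v}r > 0$), so both iterated limits exist and in fact equal the supremum of $\varpi$ over $\Ext_{1}$; thus the real content is that these two suprema along the two ``edges'' of the quadrant agree. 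Both limits are finite because $\varpi \leq \varpi_{i}$ throughout (a consequence of the monotonicity and the ADM mass being the initial value along $\Sgm_{0}$, cf. the region $\underline{\calN}$ and Corollary~\ref{cor:DR-large-r}).

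The main estimates are already in hand. First, I would fix a large $\Lmb$ and work in the region $\calX_{0} \cap \{r \leq \Lmb r_{\EH}\}$, which contains a full neighborhood of $\EH$. By \eqref{eq:DR:varpi-small-r} of Corollary~\ref{cor:DR-small-r}, for every $(u, V)$ in this region we have $\abs{\varpi(u, V) - \varpi_{\EH}} \leq B_{\Lmb} V^{-2(\omg_{0}-\eta)+1}$. Since $2 < \omg_{0} \leq 3$ and $\eta$ can be taken small, the exponent $2(\omg_{0}-\eta) - 1 > 1 > 0$, so this bound tends to $0$ as $V \to \infty$, uniformly in $u$ (within the region considered). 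Next I would use Corollary~\ref{cor:DR-large-r}: for fixed $u \geq 1$, as $V \to \infty$ we approach null infinity, and one can follow the $C_{u}$ curve out to large $r$; combining the interior estimate \eqref{eq:DR:varpi-small-r} on the portion $\{r \leq \Lmb r_{\EH}\}$ with the $r$-weighted decay \eqref{eq:DR:dvrphi-large-r} and \eqref{eq:DR:1-mu-large-r} on the portion $\{r \geq \Lmb r_{\EH}\}$, and using $\rd_{V}\varpi = \tfrac12 \tfrac{1-\mu}{\rd_{V}r} r^{2}(\rd_{V}\phi)^{2}$, one sees that $M(u) := \lim_{V \to \infty}\varpi(u, V)$ exists and satisfies $\abs{M(u) - \varpi(u, V_{\Lmb}(u))} \lesssim \int_{V_{\Lmb}(u)}^{\infty} r^{-\omg_{0}}\,\rd_{V}r \, dV \lesssim r(u, V_{\Lmb}(u))^{-\omg_{0}+1}$, which is bounded and in fact $O(1)$ uniformly; more importantly, $\abs{M(u) - \varpi(u, V)} \to 0$ as $V \to \infty$.

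The conclusion then follows by a diagonal argument. Let $\eps > 0$. Choose $V_{1}$ so large that $B_{\Lmb} V_{1}^{-2(\omg_{0}-\eta)+1} < \eps$; then $\abs{\varpi(u, V_{1}) - \varpi_{\EH}} < \eps$ for all $u$ with $(u, V_{1}) \in \calX_{0} \cap \{r \leq \Lmb r_{\EH}\}$, which holds for all sufficiently large $u$ since $r(u, V_{1}) \to r(\infty, V_{1}) \leq r_{\EH}$ as $u \to \infty$. On the other hand, by monotonicity $\varpi(u, V_{1}) \leq M(u) \leq \varpi_{f}$, and $\varpi_{f} = \lim_{u \to \infty} M(u)$. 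Taking $u \to \infty$ in $\abs{\varpi(u, V_{1}) - \varpi_{\EH}} < \eps$ gives $\abs{\varpi_{\EH}' - \varpi_{\EH}} \leq \eps$ where $\varpi_{\EH}' := \lim_{u\to\infty}\varpi(u, V_{1})$; combined with $\varpi_{\EH}' \leq \varpi_{f}$ and the reverse inequality $\varpi_{f} \leq \varpi_{\EH}$ (which follows because $\varpi(u,V) \le \varpi_{\EH}$ for $V$ large enough that $(u,V)$ lies in the bounded-$r$ region, letting $V \to \infty$ then $u \to \infty$, using \eqref{eq:DR:varpi-small-r} again), we conclude $\abs{\varpi_{f} - \varpi_{\EH}} \leq \eps$. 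Since $\eps$ was arbitrary, $\varpi_{\EH} = \varpi_{f}$.

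The only mildly delicate point — and the step I expect to require the most care — is making sure the constant $B_{\Lmb}$ in \eqref{eq:DR:varpi-small-r} is genuinely uniform in $u$ over the bounded-$r$ region (which it is, as stated in Corollary~\ref{cor:DR-small-r}, depending only on $\Lmb$, $\omg_{0}-\eta$ and the fixed solution in $\calX_{0}$), and bookkeeping that the region $\calX_{0} \cap \{r \leq \Lmb r_{\EH}\}$ indeed contains $\{u = \infty\} \cap \{V \geq V_{0}\}$ together with a neighborhood; this is where one invokes $r(\infty, V) \leq r_{\EH}$ and continuity of $r$. Everything else is monotonicity and taking limits.
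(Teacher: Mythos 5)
There is a sign error that undermines the structure of your argument. From the last equation of \eqref{eq:EMSF-r-phi-m}, $\rd_{u}\varpi = \tfrac{1}{2}\frac{1-\mu}{\rd_{u}r}r^{2}(\rd_{u}\phi)^{2}$, and in $\Ext_{1}$ one has $1-\mu\geq 0$ together with $\rd_{u}r < 0$, so $\rd_{u}\varpi \leq 0$: the mass is \emph{non-increasing} in $u$, not non-decreasing as you claim. As a result the two iterated limits do not both equal $\sup_{\Ext_{1}}\varpi$ (which is the ADM mass, attained near $i^{0}$). Rather, $\varpi_{f}=\lim_{u\to\infty}M(u)=\inf_{u}\sup_{V}\varpi$ and $\varpi_{\EH}=\lim_{V\to\infty}\varpi(\infty,V)=\sup_{V}\inf_{u}\varpi$, so the trivial (min-max) direction is $\varpi_{f}\geq \varpi_{\EH}$, and the nontrivial content of the corollary is the inequality $\varpi_{f}\leq \varpi_{\EH}$. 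The sign error then propagates: the chain $\varpi(u,V_{1})\leq M(u)\leq \varpi_{f}$ is backwards ($M(u)\geq \varpi_{f}$, since $M$ is decreasing), and the parenthetical attempt to derive $\varpi_{f}\leq \varpi_{\EH}$ does not make sense: holding $u$ fixed and sending $V\to\infty$ takes you out of the bounded-$r$ region where \eqref{eq:DR:varpi-small-r} applies, while holding $V_{1}$ fixed and sending $u\to\infty$ lands at $\varpi(\infty, V_{1})$, not at $\varpi_{f}$.

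The fix — and the paper's actual argument — is a genuine diagonal: take a sequence $(u_{n},V_{n})$ with both $u_{n}\to\infty$ and $V_{n}\to\infty$, sitting on a fixed constant-$r$ curve inside both $\calN$ and the bounded-$r$ region. Then \eqref{eq:DR:varpi-small-r} controls $|\varpi(u_{n},V_{n})-\varpi_{\EH}|\to 0$, and integrating $\rd_{V}\varpi=\tfrac12 \tfrac{1-\mu}{\rd_{V}r}(r\rd_{V}\phi)^{2}$ along $C_{u_{n}}$ outward to $\NI$ using the Price's law decay of \eqref{eq:DR:phi-med-r}, \eqref{eq:DR:phi-huge-r} controls $|\varpi(u_{n},V_{n})-M(u_{n})|\to 0$; since $M(u_{n})\to \varpi_{f}$, the triangle inequality finishes. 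You had essentially both of these estimates in hand — indeed you wrote down $|M(u)-\varpi(u,V_{\Lmb}(u))|\lesssim r(u,V_{\Lmb}(u))^{-\omg_{0}+1}$, which is the correct radial estimate — but the assembly has to be along a diagonal sequence, not with $V_{1}$ frozen.
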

\begin{proof}
Take a sequence $\{(u_n, V_n)\}_{n=1}^\infty$ with $u_n \to \infty$, $V_n \to \infty$, $r(u_n, V_n)=\eta_{\NI}^{-1}\max\{\varpi_i,\e\}$ and $(u_n, V_n)\in \calX_{0}$. In particular, Theorem~\ref{thm:DR-full} and Corollary~\ref{cor:DR-large-r} are applicable in $\{(u_n, V):V\geq V_n \}$; meanwhile Corollary~\ref{cor:DR-small-r} is applicable in $\{(u,V_n): u\geq u_n \}$ for $\Lambda =\f{\eta_{\NI}^{-1}\max\{\varpi_i,\e\}}{r_\EH}$. By Lemma~\ref{lem:cauchy-to-char}, \eqref{eq:adm-id-af}, \eqref{eq:DR:phi-med-r}, \eqref{eq:DR:phi-huge-r}, \eqref{eq:DR:dvr-large-r} and \eqref{eq:DR:1-mu-large-r},
\begin{equation*}
\begin{split}
\left|\varpi(u_n,V_n)-\varpi(u_n,\infty)\right|\leq &\int_{V_n}^\infty \f 12 \f{1-\mu}{\rd_V r} (r\rd_V\phi)^2 (u_n, V')\, \ud V'\\
\leq &CB_{\omg_0-\eta}^2 \int_{V_n}^\infty(V')^{-\min\{2(\omg_0-\eta),4\}}\,\ud V'\leq CB_{\omg_0-\eta}^2 V_n^{-\min\{2(\omg_0-\eta),4\}+1}.
\end{split}
\end{equation*}
Combining this with \eqref{eq:DR:varpi-small-r} (to control $\varpi(u_n,V_n)-\varpi_{\EH}$), we have
$$\left|\varpi(u_n,\infty)-\varpi_{\EH}\right|\leq CB_{\omg_0-\eta}^2 V_{n}^{-\min\{2(\omg_0-\eta),4\}+1}+B_{\Lambda}V_{n}^{2(\omg_0-\eta)+1},$$
where $\Lambda =\f{\eta_{\NI}^{-1}\max\{\varpi_n,\e\}}{r_\EH}$. Taking $n \to \infty$ and using the definition of $\varpi_f$ yields the conclusion. \qedhere
\end{proof}

Putting together Theorem~\ref{thm:DR-full} and Corollaries~\ref{cor:DR-small-r} and \ref{cor:DR-large-r}, we restate the Price's law decay rates in a form that will be convenient for later applications.
\begin{corollary} \label{cor:DR-final}
For any $2 \leq \omg' < \omg_{0}$, there exists $B_{\omg'} > 0$, which depends on $\omg_{0}$, $\omg'$ and the solution in $\calX_{0}$, such that the following bounds in $(u,V)$ coordinates hold.
\begin{align}
	\Abs{\phi} \leq & \left\{
\begin{array}{ll}
 V^{-\omg'} B_{\omg'} & \hbox{ in } \calX_{0} \cap \set{r \leq 30 r_{\EH}} , \\
 u^{-(\omg'-1)} \min \set{ u^{-1}, r^{-1}} B_{\omg'}  & \hbox{ in } \calX_{0} \cap \set{r \geq 10 r_{\EH}} ,
\end{array}\right.  		\label{eq:DR-final:phi} \\
	\Abs{\frac{1}{\rd_{u} r}\rd_{u} \phi} \leq & \left\{
\begin{array}{ll}
V^{-\omg'} B_{\omg'}  & \hbox{ in } \calX_{0} \cap \set{r \leq 30 r_{\EH}} , \\
u^{-\omg'} r^{-1} B_{\omg'}   & \hbox{ in } \calX_{0} \cap \set{r \geq 10 r_{\EH}} ,
\end{array}\right. 	\label{eq:DR-final:duphi} \\
	\Abs{\frac{1-\mu}{\rd_{V} r}\rd_{V} \phi} \leq & \left\{
\begin{array}{ll}
V^{-\omg'}  B_{\omg'} & \hbox{ in } \calX_{0} \cap \set{r \leq 30 r_{\EH}} , \\
u^{-(\omg'-1)} r^{-1}  \min \set{ u^{-1}, r^{-1}} B_{\omg'}  & \hbox{ in } \calX_{0} \cap \set{r \geq 10 r_{\EH}} ,
\end{array}\right. 	\label{eq:DR-final:dvphi} \\
	\Abs{\frac{1-\mu}{\rd_{V} r}\rd_{V} (r \phi)} \leq & 
\min \set{V^{-\omg'}, r^{-\omg'}} B_{\omg'}	 \qquad \hbox{ in } \calX_{0}	. \label{eq:DR-final:dvrphi}
\end{align}
Moreover, let $v = v(V)$ be the future-normalized null coordinate defined by \eqref{eq:coord-v-bg-0} and $v(1) = 1$. We have, for $v,\, V\geq 1$,
\begin{equation} \label{eq:DR-final:vV}
	B_{0}^{-1} (V-1) \leq v-1 \leq B_{0} (V -1).
\end{equation}
\end{corollary}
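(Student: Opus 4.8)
The plan is to assemble the four families of bounds \eqref{eq:DR-final:phi}--\eqref{eq:DR-final:vV} out of what is already available --- Theorem~\ref{thm:DR-full}, Corollaries~\ref{cor:DR-small-r} and \ref{cor:DR-large-r}, and Proposition~\ref{prop:bg-large-r} --- together with a short list of geometric comparisons in the $(u,V)$ coordinates. First I would fix $\omg'$ and choose the auxiliary parameter $\eta=\eta(\omg_{0},\omg')>0$ once and for all so that $\omg_{0}-\eta>\omg'$; this turns every factor $V^{-(\omg_{0}-\eta)}$, $u^{-(\omg_{0}-\eta)}$, $u^{-(\omg_{0}-\eta)+1}$ appearing on the right-hand sides of those results into $\leq V^{-\omg'}$, $u^{-\omg'}$, $u^{-\omg'+1}$. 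All the constants produced below are allowed to depend on $\omg_{0}$, $\omg'$ (through $\eta$) and on the background solution in $\calX_{0}$, which is exactly what the statement permits.

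Second, I would record the geometric facts used throughout. From \eqref{eq:DR:kpp}, from the Raychaudhuri equation \eqref{eq:EMSF-ray} for $\frac{\rd_{u}r}{1-\mu}$ (monotone in $V$, equal to $-1$ on $\NI$ by \eqref{eq:coord-u-bg-0}), from the admissibility condition \eqref{eq:adm-id-adm}, and from compactness of the sub-arc $\Sgm_{0}\cap\calX_{0}$ (recall $\calX_{0}$ omits a neighborhood of $i^{0}$), one obtains in all of $\calX_{0}$: $\frac{\rd_{V}r}{1-\mu},\,\frac{-\rd_{u}r}{1-\mu}\in[c,C]$ --- hence $-\rd_{u}r\sim\rd_{V}r\sim1-\mu$ --- along with $r\geq r_{\min}>0$ and $0<1-\mu\leq1$. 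The one genuinely nontrivial input is $V\gtrsim u$ throughout $\{r\geq10r_{\EH}\}$, which I would get by integrating $\frac{\ud V}{\ud u}=-\frac{\rd_{u}r}{\rd_{V}r}\sim1$ along the level curve $\{r=10r_{\EH}\}$ (this curve stays a definite distance from $\EH$, where $r\to r_{\EH}$, so the comparison applies on it); this is precisely what lets one convert the $V$-decay in \eqref{eq:DR:phi-med-r}, \eqref{eq:DR:dvphi-med-r} into the $u$-decay demanded in the far region. Finally \eqref{eq:DR-final:vV} is immediate from \eqref{eq:DR:kpp}: since $v(V)$ is normalized by $\frac{\rd_{v}r}{1-\mu}=1$ on $\EH$ and $v(1)=1$, one has $\frac{\ud v}{\ud V}(V)=\frac{\rd_{V}r}{1-\mu}(\infty,V)\in[B_{0}^{-1},B_{0}]$, and integrating from $1$ to $V$ gives the claim.

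Third --- the technical heart --- I would prove \eqref{eq:DR-final:dvrphi} over \emph{all} of $\calX_{0}$, since the other three families follow from it via the identity $\frac{1-\mu}{\rd_{V}r}\rd_{V}\psi=(1-\mu)\phi+r\,\frac{1-\mu}{\rd_{V}r}\rd_{V}\phi$ and the pointwise $\phi$-bounds. Writing $R:=\eta_{\NI}^{-1}\max\{\varpi_{i},\abs{\e}\}$: in the near-infinity region $\calN\cup\underline{\calN}$ the bound is just \eqref{eq:DR:dvrphi-large-r} and \eqref{eq:bg-large-r:dvrphi} rewritten, using $1-\mu\leq2$ and $r\geq R$ in $\calN$ (to convert $r^{-\omg_{0}}$ into $\lesssim r^{-\omg'}$), and the extra fact $r\gtrsim V$ near $i^{0}$ (since $\rd_{V}r$ is bounded below on $\Sgm_{0}$ towards $i^{0}$), which makes $r^{-\omg'}$ the smaller of the two terms and dominates the $V^{-\omg'}$ part. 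On the complementary bounded-$r$ set $\{r\leq R\}\subseteq\{r\leq\Lmb r_{\EH}\}$ with $\Lmb:=R/r_{\EH}$ (note $r_{\EH}\leq R$), I would instead combine Corollary~\ref{cor:DR-small-r} (giving $\abs{\phi}\lesssim V^{-\omg'}$ there) with \eqref{eq:DR:dvphi-med-r} (which applies once $V$ is large, since then $r\leq R\leq c_{\omg_{0}-\eta}V$, and gives $\abs{\rd_{V}\phi}\lesssim V^{-\omg'}$ after dividing by $r\geq r_{\min}$); plugging into the identity and using $r_{\min}\leq r\leq R$ gives $\leq\min\{V^{-\omg'},r^{-\omg'}\}B'$. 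The leftover sliver --- bounded $V$ near the event horizon, where \eqref{eq:DR:dvphi-med-r} is unavailable --- I would handle by continuity: in the initial-data-normalized coordinates $(U,V)$ the set $\{U_{\EH}-\delta\leq U\leq U_{\EH}\}\cap\{\underline{V}(U)\leq V\leq V_{0}\}$ is compact ($\EH=\{U=U_{\EH}\}$ being a regular null curve at finite $U_{\EH}$), so all quantities are bounded there and the inequality holds after enlarging the constant. I expect this step --- matching the null-infinity estimate \eqref{eq:DR:dvrphi-large-r}, which is far too weak in the bounded-$r$, large-$V$ corner (it only yields a constant there while the claim demands $V^{-\omg'}$), against the bounded-$r$ estimates, and controlling the junction $\{r\sim R\}$ --- to be the main obstacle.

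Fourth, I would read off the rest. In $\{r\leq30r_{\EH}\}$: the $\phi$- and $\frac{1}{\rd_{u}r}\rd_{u}\phi$-bounds in \eqref{eq:DR-final:phi}, \eqref{eq:DR-final:duphi} are literally \eqref{eq:DR:phi-small-r} and \eqref{eq:DR:duphi-small-r} (with $\Lmb=30$), and the $\frac{1-\mu}{\rd_{V}r}\rd_{V}\phi$-bound in \eqref{eq:DR-final:dvphi} follows from \eqref{eq:DR-final:dvrphi}, $\abs{\phi}\lesssim V^{-\omg'}$ and $r\geq r_{\min}$ via the identity above. In $\{r\geq10r_{\EH}\}$ I would split along $\{r\leq c_{\omg_{0}-\eta}V\}$ versus $\{r\geq c_{\omg_{0}-\eta}V\}$: on the latter use \eqref{eq:DR:phi-huge-r}, \eqref{eq:DR:duphi-huge-r}, \eqref{eq:DR:dvphi-huge-r} directly; on the former use \eqref{eq:DR:phi-med-r}, \eqref{eq:DR:dvphi-med-r} and replace each $V^{-(\cdot)}$ by $u^{-(\cdot)}$ using $V\gtrsim u$; for $\frac{1}{\rd_{u}r}\rd_{u}\phi$ I would prefer to use \eqref{eq:DR:durphi-large-r} (in $\calN$, combined with $\rd_{u}\psi=\rd_{u}r\,\phi+r\rd_{u}\phi$) and \eqref{eq:DR:duphi-small-r} (in $\{r\leq R\}$), which already control that quantity directly, and for $\frac{1-\mu}{\rd_{V}r}\rd_{V}\phi$ use that $\frac{1-\mu}{\rd_{V}r}=(\frac{\rd_{V}r}{1-\mu})^{-1}\in[B_{0}^{-1},B_{0}]$ by \eqref{eq:DR:kpp}. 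In each case the passage to the $\min\{u^{-1},r^{-1}\}$-form is a routine two-case check ($r\leq u$ and $r\geq u$) exploiting $\omg_{0}-\eta>\omg'$ and $\omg_{0}\leq3$ (so $\omg_{0}-\omg'<1$ and stray powers of $u$ are harmless). Compact subregions are absorbed into the solution-dependent constant $B_{\omg'}$ as before.
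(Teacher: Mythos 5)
Your plan is sound and relies on the same set of inputs as the paper (Theorem~\ref{thm:DR-full}, Corollaries~\ref{cor:DR-small-r}, \ref{cor:DR-large-r}, Proposition~\ref{prop:bg-large-r}), but it is organized differently. The paper does not deduce \eqref{eq:DR-final:phi}--\eqref{eq:DR-final:dvphi} from \eqref{eq:DR-final:dvrphi}; instead it proves the small-$r$ cases first (directly from Corollary~\ref{cor:DR-small-r}), then establishes the comparison $u \leq CV$ in $\calX_{0} \cap \{r \geq 10 r_{\EH}\}$ as a separate lemma (the paper's \eqref{eq:DR-final:uV}), and then reads off the large-$r$ cases via $r\abs{(\rd_{u}r)^{-1}\rd_{u}\phi} \leq \abs{(\rd_{u}r)^{-1}\rd_{u}(r\phi)} + \abs{\phi}$ and the analogous $\rd_{V}$ identity. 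The bound \eqref{eq:DR-final:dvrphi} appears only as an intermediate step in proving \eqref{eq:DR-final:dvphi}, and the paper treats it as a one-line combination of \eqref{eq:DR:kpp}, \eqref{eq:DR:dvrphi-med-r}, \eqref{eq:DR:dvrphi-large-r}; the overlap of the domains $\{r \leq c_{\omg_{0}-\eta}V\}$ and $\calN$ together with the compactness of the leftover region makes the junction automatic, so what you flag as "the main obstacle" is really a routine case check, as you yourself end up observing in the fourth paragraph.

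The one genuine technical divergence is your route to $u \lesssim V$. You propose integrating $\ud V/\ud u = -\rd_{u}r/\rd_{V}r$ along $\{r=10 r_{\EH}\}$, which requires knowing $-\gmm = -\rd_{u}r/(1-\mu)$ is bounded \emph{below} along that curve (the upper bound is free from the Raychaudhuri monotonicity). That lower bound is true on $\{r=10r_{\EH}\}$ but it is not free; it uses $1-\mu$ bounded below there (via subextremality and \eqref{eq:DR:varpi-small-r}) plus a compactness argument for bounded $u$. The paper avoids this by integrating $\rd_{V}\log(-\rd_{u}r)$ in $u$ to get $u - 1 \leq C(r(1,V) - r(u,V))$ and then $r(1,V) \leq r(1,1) + C(V-1)$, which is cleaner. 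Both routes work.

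Two statements in your second paragraph should be corrected. First, "$\frac{\rd_{V}r}{1-\mu}, \frac{-\rd_{u}r}{1-\mu} \in [c,C]$ in all of $\calX_{0}$" and the ensuing "$-\rd_{u}r \sim \rd_{V}r \sim 1-\mu$" are over-claims: $1-\mu$ tends to $0$ towards $i^{+}$, and the lower bound on $-\gmm$ is not established near $\EH$ by Raychaudhuri monotonicity alone (which only gives $-\gmm \leq 1$); fortunately you only invoke the lower bound away from $\EH$, on $\{r=10r_{\EH}\}$, where it does hold. Second, "compactness of the sub-arc $\Sgm_{0} \cap \calX_{0}$" is not the right notion: $\calX_{0}$ is a characteristic rectangle whose past boundary $\uC_{in} \cup C_{out}$ lies strictly to the future of $\Sgm_{0}$ (see Figure~\ref{fig:DR-full}), so $\Sgm_{0} \cap \calX_{0}$ is empty. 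The compactness you want is of $\uC_{in}$ in the initial-data-normalized $(U,V)$ coordinates (finite $U$-range up to $U_{\EH}$) or of the relevant bounded-$V$ strips, as you correctly invoke in the third paragraph.
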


\begin{proof}
Observe that \eqref{eq:DR-final:phi}--\eqref{eq:DR-final:dvphi} in the case $r \leq 30 r_{\EH}$ follow from Theorem~\ref{thm:DR-full} and Corollary~\ref{cor:DR-small-r} with $\omg' = \omg_{0} - \eta$ and $\Lmb = 30$. For the proof of these bounds in the case $r \geq 10 r_{\EH}$, we claim that
\begin{equation} \label{eq:DR-final:uV}
	u \leq C V \quad \hbox{ in } \calX_{0} \cap \set{r \geq 10 r_{\EH}}
\end{equation}
for some $C > 0$ that depends on the solution in $\calX_{0}$. 

Assuming \eqref{eq:DR-final:uV} for the moment, \eqref{eq:DR-final:phi} in the case $r \geq 10 r_{\EH}$ follows immediately from Theorem~\ref{thm:DR-full}. 
Next, for \eqref{eq:DR-final:duphi}, given any $\Lmb \geq 1$, combining \eqref{eq:DR:duphi-small-r} and \eqref{eq:DR-final:uV} handles the case $r \leq \Lmb r_{\EH}$. On the other hand, by \eqref{eq:DR:durphi-large-r}, \eqref{eq:DR-final:phi} and
\begin{equation*}
	r \Abs{\frac{1}{\rd_{u} r} \rd_{u} \phi}
	\leq \Abs{\frac{1}{\rd_{u} r} \rd_{u} (r \phi)} + \abs{\phi},
\end{equation*}
we obtain \eqref{eq:DR-final:duphi} in $\calX_{0} \cap \calN$. Thus, choosing $\Lmb$ large enough, the full statement of \eqref{eq:DR-final:duphi} follows.
For \eqref{eq:DR-final:dvphi}, the case $u \geq r$ can be handled using \eqref{eq:DR-final:uV} and Theorem~\ref{thm:DR-full}. In the alternative case $r \geq u$, we note that
\begin{equation*}
	r \Abs{\frac{1-\mu}{\rd_{V} r} \rd_{V} \phi}
	\leq \Abs{\frac{1-\mu}{\rd_{V} r} \rd_{V} (r \phi)} + (1 - \mu)\Abs{\phi}.
\end{equation*}
Thus, in combination with \eqref{eq:DR-final:phi}, \eqref{eq:DR-final:dvphi} follows from \eqref{eq:DR-final:dvrphi}, which in turn is a straightforward consequence of \eqref{eq:DR:kpp}, \eqref{eq:DR:dvrphi-med-r} and \eqref{eq:DR:dvrphi-large-r}. Finally, \eqref{eq:DR-final:vV} follows from \eqref{eq:DR:kpp} and the relation
\begin{equation*}
	\frac{\ud v}{\ud V}(V) =  \frac{\rd_{V} r}{1-\mu}(\infty, V),
\end{equation*}
which is equivalent to \eqref{eq:coord-v-bg-0}.

To conclude the proof, it only remains to verify \eqref{eq:DR-final:uV}. By \eqref{eq:DR:varpi-small-r} and the fact that $|\e| < \varpi_{\EH} \leq r_{\EH} = \varpi_{\EH} + \sqrt{\varpi_{\EH}^{2} - \e^{2}}$ (by \eqref{r.varpi.poly} and \eqref{eq:subextremality}), for any $\Lmb \geq 10$ we can find $V_{\Lmb} \geq 1$ large enough so that 
\begin{equation*}
1-\mu \geq \frac{1}{2} \quad \hbox{ in } \calX_{0} \cap \set{10 r_{\EH} \leq r \leq \Lmb r_{\EH}} \cap \set{V \geq V_{\Lmb}}.
\end{equation*}
Fixing $\Lmb$ to be a large enough number, we may combine this with Corollary~\ref{cor:DR-large-r} (more precisely, \eqref{eq:DR-large-r:pf-mu} from its proof) to conclude that $1-\mu \geq \frac{1}{2}$ holds in $\calX_{0} \cap \set{r \geq 10 r_{\EH}} \cap \set{V \geq V_{\Lmb}}$. Then integrating the equation \begin{equation*}
\rd_{V} \log (-\rd_{u} r) = \frac{2 (\varpi - \frac{\e^{2}}{r})}{r^{2} (1-\mu)} \rd_{V} r
\end{equation*}
from $(u, \infty)$ to $(u, V)$, we obtain
\begin{equation*}
	C^{-1} \leq - \rd_{u} r \leq C \quad \hbox{ in } \calX_{0} \cap \set{r \geq 10 r_{\EH}} \cap \set{V \geq V_{\Lmb}}
\end{equation*}
for some $C > 0$ depending on the solution in $\calX_{0}$. For $(u, V) \in \calX_{0} \cap \set{r \geq 10 r_{\EH}} \cap \set{V \geq V_{\Lmb}}$, it follows that
\begin{equation*}
	u -1 \leq C \int_{1}^{u} (-\rd_{u} r) (u', V) \, \ud u' = C (r(1, V) - r(u, V)) \leq C (r(1, V) - 10 r_{\EH}).
\end{equation*}
On the other hand,
\begin{equation*}
	r(1, V) = r(1, 1) + \int_{1}^{V} \rd_{V} r(1, V') \, \ud V' \leq r(1, 1) + (V - 1) \sup_{C_{out}} \rd_{V} r,
\end{equation*}
where $\sup_{C_{out}} \rd_{V} r < \infty$ by Proposition~\ref{prop:bg-large-r}. Putting together these inequalities, we have \eqref{eq:DR-final:uV} for $V\geq V_{\Lmb}$. The remaining region is compact so that \eqref{eq:DR-final:uV} is immediate. \qedhere
\end{proof}
\subsection{Geometric bounds in the future-normalized coordinates} \label{subsec:bg-uv}
Let $\calX_{0} = \set{(u, V) : u \geq 1, \ V \geq 1}$ be the characteristic rectangle introduced in Section~\ref{subsec:DR-full}. Consider the future-normalized advanced null coordinate $v = v(V)$, where $v(1) = 1$ and \eqref{eq:coord-v-bg-0} holds along $\EH$.
In the coordinate system $(u, v)$, we introduce the following set of notation for geometric quantities:
\begin{align*}
	\dvr = \rd_{v} r, \quad \dur = \rd_{u} r, \quad
	\kpp = \frac{\rd_{v} r}{1-\mu}, \quad \gmm = \frac{\rd_{u} r}{1-\mu}.
\end{align*}
In the following proposition we show that, near timelike infinity, these geometric quantities obey similar bounds as the corresponding quantities in the Reissner--Nordstr\"om spacetime in the Eddington--Finkelstein coordinates. 

\begin{proposition} \label{prop:bg-uv}
Let $2 < \omg \leq 3$, $\dlt_{1} > 0$ and $\Lmb \geq 1$. Let $\calX_{1} = \set{(u, v) \in \calX_{0} : u \geq u_{1}, \ v \geq v_{1}}$ be a characteristic rectangle, on which the following bounds in $(u,v)$ coordinates hold:
\begin{align}
	r \geq & (1 - \dlt_{1}) r_{\EH} \quad \hbox{ in } \calX_{1}, \label{eq:bg-uv:r-bnd} \\
	\Abs{\phi} \leq & \left\{
\begin{array}{ll}
\brk{\frac{v-v_{1}}{r_{\EH}}}^{-\omg} \dlt_{1} & \hbox{ in } \calX_{1} \cap \set{r \leq 30 r_{\EH}} , \\
\brk{\frac{u-u_{1}}{r_{\EH}}}^{-(\omg-1)} \min \set{ \brk{\frac{u-u_{1}}{r_{\EH}}}^{-1}, (\frac{r}{r_{\EH}})^{-1}} \dlt_{1}  & \hbox{ in } \calX_{1} \cap \set{r \geq 10 r_{\EH}} ,
\end{array}\right. 		\label{eq:bg-uv:phi} \\
	\Abs{\frac{1}{\dur}\rd_{u} \phi} \leq & \left\{
\begin{array}{ll}
\brk{\frac{v-v_{1}}{r_{\EH}}}^{-\omg}  r_{\EH}^{-1} \dlt_{1} & \hbox{ in } \calX_{1} \cap \set{r \leq 30 r_{\EH}} , \\
\brk{\frac{u-u_{1}}{r_{\EH}}}^{-\omg} (\frac{r}{r_{\EH}})^{-1} \Lmb^{-1/2} r_{\EH}^{-1} \dlt_{1}  & \hbox{ in } \calX_{1} \cap \set{r \geq 10 r_{\EH}} ,
\end{array}\right. 	\label{eq:bg-uv:duphi} \\
	\Abs{\frac{1}{\kpp}\rd_{v} \phi} \leq & \left\{
\begin{array}{ll}
\brk{\frac{v-v_{1}}{r_{\EH}}}^{-\omg}  r_{\EH}^{-1} \dlt_{1} & \hbox{ in } \calX_{1} \cap \set{r \leq 30 r_{\EH}} , \\
\brk{\frac{u-u_{1}}{r_{\EH}}}^{-(\omg-1)} (\frac{r}{r_{\EH}})^{-1}  \min \set{ \brk{\frac{u-u_{1}}{r_{\EH}}}^{-1}, (\frac{r}{r_{\EH}})^{-1}} r_{\EH}^{-1} \dlt_{1}  & \hbox{ in } \calX_{1} \cap \set{r \geq 10 r_{\EH}} ,
\end{array}\right. 	\label{eq:bg-uv:dvphi}
\end{align}
where $\brk{\cdot} = (\cdot) + 1$.

For sufficiently small $\dlt_{1}$ depending on $\varpi_{f}^{-1} \abs{\e}$, there exists a universal constant $C > 0$ so that the following bounds hold.
\begin{enumerate}
\item In $\calX_{1}$, we have
\begin{align} 
	\sup_{(u, v), (u', v') \in \calX_{1}} \abs{\varpi(u, v) - \varpi(u', v')} 
	\leq & C \dlt_{1}^{2} r_{\EH},	\label{eq:bg-uv:m-bnd} \\
	1 \leq \kpp \leq 2 \quad \hbox{ in } \calX_{1}. \label{eq:bg-uv:kpp-bnd} 
\end{align}
\item Away from $\EH$, we have
\begin{align} 
	\frac{1}{2} \leq 1 - \mu \leq 2  \quad \hbox{ in } \calX_{1} \cap \set{r \geq 10 r_{\EH}} , 	\label{eq:bg-uv:mu-large-r} \\
	\frac{1}{4} \leq -\dur \leq 4  \quad \hbox{ in } \calX_{1} \cap \set{r \geq 10 r_{\EH}},	\label{eq:bg-uv:dur-large-r} \\
	\frac{1}{4} \leq \dvr \leq 4  \quad \hbox{ in } \calX_{1} \cap \set{r \geq 10 r_{\EH}}. 	\label{eq:bg-uv:dvr-large-r}
\end{align}
Moreover, let $(u_{30}, v_{30})$ be the intersection between $\gmm_{30 r_{\EH}} = \set{r = 30 r_{\EH}}$ and the past null boundary $\rd \calX_{1}$ of $\calX_{1}$ (i.e., $\set{(u_{30}, v_{30})} = \gmm_{30 r_{\EH}} \cap \rd \calX_{1}$). Then for $C_{30} = u_{30} - v_{30}$, we have
\begin{equation} \label{eq:bg-uv:u-v}
	\sup_{\gmm_{30} \cap \calX_{1}} \abs{u - v - C_{30}} \leq C \dlt_{1}.
\end{equation}

\item Near $\EH$, we have
\begin{align} 
	0 \leq 1 - \mu \leq &\, 1 \quad \hbox{ in } \calX_{1} \cap \set{r \leq 30 r_{\EH}}, \label{eq:bg-uv:mu-small-r} \\
	0 \leq \dvr \leq & \,2  \quad \hbox{ in } \calX_{1} \cap \set{r \leq 30 r_{\EH}}.	\label{eq:bg-uv:dvr-small-r}
\end{align}
Moreover, there exist $b_{1}, b_{1}', B_{1} > 0$, which depend on $|\e| / \varpi_{f}$, so that
\begin{equation} \label{eq:bg-uv:dur-small-r}
	B_{1}^{-1} e^{- b'_{1} r_{\EH}^{-1} (u - v - C_{30})} \leq -\dur \leq B_{1} e^{- b_{1} r_{\EH}^{-1} (u - v - C_{30})}  \quad \hbox{ in } \calX_{1} \cap \set{r \leq 30 r_{\EH}}.
\end{equation}
\item Finally, the following decay estimates hold for $\log (-\gmm)$ and $\log \kpp$:

\begin{align} 
	\abs{\log (-\gmm)} \leq & C \brk{\tfrac{u - u_{1}}{r_{\EH}}}^{-2 \omg + 2} (\tfrac{r}{r_{\EH}} )^{-2} \dlt_{1}^{2} \quad \hbox{ in } \calX_{1} \cap \set{r \geq 10 r_{\EH}},	\label{eq:bg-uv:gmm-decay} \\
	\abs{\log \kpp} \leq & \left\{
	\begin{array}{cl}
	C_{\varpi_{f}^{-1} \abs{\e}} \brk{\tfrac{u - C_{30}}{r_{\EH}}}^{-2 \omg} \dlt_{1}^{2} & \hbox{ in } \calX_{1} \cap \set{r \leq 30 r_{\EH}}	, \\
	C \left( \brk{\tfrac{v - v_{1}}{r_{\EH}}}^{-2 \omg} + \brk{\tfrac{u - u_{1}}{r_{\EH}}}^{-2 \omg + 1} \Lmb^{-1} \right) \dlt_{1}^{2} & \hbox{ in } \calX_{1} \cap \set{r \geq 10 r_{\EH}}	,
	\end{array}\right. 	\label{eq:bg-uv:kpp-decay} 
\end{align}
where $C_{\varpi_{f}^{-1} \abs{\e}} > 0$ depends on $\varpi_{f}^{-1} \abs{\e}$.
\end{enumerate}
\end{proposition}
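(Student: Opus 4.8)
The plan is to exploit the Raychaudhuri equations \eqref{eq:EMSF-ray} and the evolution equations \eqref{eq:EMSF-r-phi-m} together with the hypothesized decay of $\phi$ and its null derivatives \eqref{eq:bg-uv:phi}--\eqref{eq:bg-uv:dvphi}, to propagate smallness of $\varpi - \varpi_{f}$, and then use the resulting closeness to a Reissner--Nordstr\"om spacetime with parameters $(\varpi_{f}, \e)$ to deduce the pointwise bounds on $1-\mu$, $\dur$, $\dvr$, $\kpp$, $\gmm$. First I would establish \eqref{eq:bg-uv:m-bnd}: integrating $\rd_{v} \varpi = \frac12 \frac{1-\mu}{\dvr} r^{2} (\rd_{v}\phi)^{2} = \frac12 \kpp^{-1}(1-\mu)^{2} r^{2}(\frac{1}{\kpp}\rd_{v}\phi)^{2}\cdot\kpp$ --- more cleanly, $\rd_{v}\varpi = \frac12 \kpp r^{2}(\frac{1}{\kpp}\rd_{v}\phi)^{2}$ and $\rd_{u}\varpi = \frac12(-\gmm) r^{2}(\frac{1}{\dur}\rd_{u}\phi)^{2}\cdot(-\dur)^{-1}\cdot(-\dur)$; one must be slightly careful about which of these identities is cleanest, but the point is that along any null segment in $\calX_{1}$ the increment of $\varpi$ is controlled by $r^{2}$ times the square of $\frac{1}{\kpp}\rd_{v}\phi$ or $\frac{1}{\dur}\rd_{u}\phi$, which by \eqref{eq:bg-uv:duphi}, \eqref{eq:bg-uv:dvphi} is integrable with total integral $\lesssim \dlt_{1}^{2} r_{\EH}$ (the $r^{2}$-weight is compensated by the $r^{-1}$ or $r^{-\omg}$ decay in the large-$r$ and bounded-$r$ regions respectively; in the bounded-$r$ region $r \sim r_{\EH}$ and the $v$-decay is polynomial with power $\omg>2$, hence integrable; in the large-$r$ region one changes variables $u \mapsto r$ or $v\mapsto r$). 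This is most naturally run as a bootstrap: assume $\abs{\varpi - \varpi_{f}} \leq C\dlt_{1}^{2} r_{\EH}$ on a sub-rectangle, use it to control $1-\mu = 1 - \frac{2\varpi - \e^{2}/r}{r}$ (hence $\kpp \sim \dvr$, $\gmm \sim \dur$) and close.

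Next, with $\varpi$ essentially constant, the equations for $r$ become a small perturbation of the Reissner--Nordstr\"om ODEs. I would treat the large-$r$ region $\set{r \geq 10 r_{\EH}}$ and the bounded-$r$ region $\set{r \leq 30 r_{\EH}}$ separately, exploiting their overlap. In the large-$r$ region, \eqref{eq:DR:dur-large-r}, \eqref{eq:DR:1-mu-large-r} from Corollary~\ref{cor:DR-large-r} already give the qualitative bounds; one translates them into the $(u,v)$ coordinates using \eqref{eq:DR-final:vV} and the normalization \eqref{eq:coord-v-bg-0}, and uses $r$ as a coordinate to integrate $\rd_{V}\log(-\rd_{u}r) = \frac{2(\varpi - \e^{2}/r)}{r^{2}(1-\mu)}\rd_{V}r$ from null infinity inward, getting the sharp constants $\tfrac14, 4$ in \eqref{eq:bg-uv:dur-large-r}, \eqref{eq:bg-uv:dvr-large-r} after shrinking $\dlt_{1}$. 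For \eqref{eq:bg-uv:u-v}, I integrate $\rd_{u}(u - v) = 1$ trivially and instead track how $r$ changes along $\gmm_{30}$: since $\dvr, -\dur$ are both $\sim 1$ on $\set{r=30 r_{\EH}}$ with errors $O(\dlt_{1})$, the curve $\set{r = 30 r_{\EH}}$ is a graph $u = v + C_{30} + O(\dlt_{1})$.

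In the bounded-$r$ region the main point is the red-shift bound \eqref{eq:bg-uv:dur-small-r}, which is the analogue of \eqref{eq:DR-small-r:redshift} and is proved the same way: $\rd_{v}\log(-\dur) = \frac{\rd_{u}\rd_{v}r}{\dur} = \frac{2(\varpi - \e^{2}/r)}{r^{2}}\kpp$, and near $\EH$ the coefficient $\frac{2(\varpi - \e^{2}/r)}{r^{2}}\kpp$ is bounded above and below by positive constants $b_{1}, b_{1}'$ depending on $\abs{\e}/\varpi_{f}$ (using $\varpi_{f} - \e^{2}/r_{\EH} = \sqrt{\varpi_{f}^{2} - \e^{2}} > 0$ by subextremality, exactly as in Step~4 of the proof of Corollary~\ref{cor:DR-small-r}, together with the already-established $1 \leq \kpp \leq 2$); integrating in $v$ from $\gmm_{30}$ and using \eqref{eq:bg-uv:u-v} to rewrite $v - v_{30}$ as $u - v - C_{30} + O(\dlt_{1})$ on that curve yields \eqref{eq:bg-uv:dur-small-r}. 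The bounds \eqref{eq:bg-uv:mu-small-r}, \eqref{eq:bg-uv:dvr-small-r} follow from monotonicity ($\dvr \geq 0$ in the exterior by Raychaudhuri, $1-\mu \leq 1$ near $\EH$ since $r \leq 30 r_{\EH}$ forces $\frac{2\varpi - \e^{2}/r}{r}$ close to $1$) combined with $\kpp \leq 2$.

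Finally, for the decay of $\log(-\gmm)$ and $\log\kpp$ in item (4): from \eqref{eq:EMSF-ray}, $\rd_{v}\log(-\gmm) = \frac{r}{\dvr}(\rd_{v}\phi)^{2} = \kpp r \,(\frac{1}{\kpp}\rd_{v}\phi)^{2}$ and $\rd_{u}\log\kpp = \frac{r}{\dur}(\rd_{u}\phi)^{2} = (-\gmm) r\,(\frac{1}{\dur}\rd_{u}\phi)^{2}\cdot$(sign bookkeeping); one integrates these from the appropriate boundary (for $-\gmm$, from null infinity where $\log(-\gmm) \to 0$; for $\kpp$, from $\EH$ where $\log\kpp = 0$ by \eqref{eq:coord-v-bg-0}, or from the past boundary). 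Plugging in the quadratic decay from \eqref{eq:bg-uv:duphi}, \eqref{eq:bg-uv:dvphi} and using $\kpp \sim 1$, $-\gmm \sim 1$ (from item (1)-(2)) and the $u \sim v$ relation \eqref{eq:bg-uv:u-v} on $\gmm_{30}$ to relate the retarded-time weights, one reads off \eqref{eq:bg-uv:gmm-decay}, \eqref{eq:bg-uv:kpp-decay}; the $r$-change-of-variables $\rd_{v}r = \dvr \sim 1$ (large $r$) converts $\int \brk{\cdot}^{-2\omg} r\, dr$-type integrals into the stated powers. I expect the main obstacle to be the careful bookkeeping of the three overlapping regions and of the weights when passing between the $(u,V)$ coordinates of Section~\ref{subsec:DR-full} and the $(u,v)$ coordinates here (using \eqref{eq:DR-final:vV}), together with ensuring the $\dlt_{1}$-dependence of all thresholds is consistent so that the bootstrap for $\varpi$ genuinely closes; the red-shift argument near $\EH$ and the Raychaudhuri integrations are routine given Corollary~\ref{cor:DR-small-r}.
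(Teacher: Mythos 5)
Your overall strategy --- propagate a mass bound, then integrate the Raychaudhuri equations for $\kpp$, $\gmm$, and exploit the red-shift for $\dur$ near $\EH$ --- does correspond to the paper's broad architecture, and your handling of the red-shift bound \eqref{eq:bg-uv:dur-small-r} and the decay estimates of item~(4) essentially matches the paper's Steps~3 and~4. However, there are two genuine problems. The more serious one is the plan to import \eqref{eq:DR:dur-large-r}, \eqref{eq:DR:1-mu-large-r} from Corollary~\ref{cor:DR-large-r} and translate them via \eqref{eq:DR-final:vV}. The proof of Proposition~\ref{prop:bg-uv} must proceed \emph{solely} from its stated hypotheses \eqref{eq:bg-uv:r-bnd}--\eqref{eq:bg-uv:dvphi}: this is essential because the proposition is reapplied in the proof of Proposition~\ref{prop:bg-geom} to an abstract $(\omg,\dlt,\Lmb)$-admissible characteristic-data background (see Definition~\ref{def:dlt-adm} and the remark right after Remark~\ref{rem:bg-uv-dim0}), where the Cauchy-data machinery of Corollary~\ref{cor:DR-large-r} is simply not available; moreover the constant $C$ in the conclusion is supposed to depend only on $\varpi_f^{-1}\abs{\e}$, not on the background solution, and inheriting constants from Corollary~\ref{cor:DR-large-r} breaks this. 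The paper instead obtains \eqref{eq:bg-uv:mu-large-r}--\eqref{eq:bg-uv:dvr-large-r} directly: \eqref{eq:bg-uv:m-bnd} gives $\frac12\leq 1-\mu\leq 2$ for $r\geq 10 r_\EH$; then $\rd_v\log(-\gmm) = \frac{r}{\dvr}(\rd_v\phi)^2$ integrated from $\NI$ (where $-\gmm=1$ exactly by \eqref{eq:coord-u-bg-0}) and $\rd_u\log\kpp = \frac{r}{\dur}(\rd_u\phi)^2$ integrated from $\EH$ (where $\kpp=1$ exactly by \eqref{eq:coord-v-bg-0}) yield $|\log(-\gmm)|, |\log\kpp| \leq C\dlt_1^2$, hence $\frac12 \leq -\gmm, \kpp\leq 2$, hence $\frac14\leq-\dur,\dvr\leq 4$.

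The second issue is the proposed bootstrap for \eqref{eq:bg-uv:m-bnd}. The implication chain ``assume $\abs{\varpi-\varpi_f}\leq C\dlt_1^2 r_\EH$, use it to control $1-\mu$, hence $\kpp\sim\dvr$, $\gmm\sim\dur$'' does not close near $\EH$ where $1-\mu\to 0$: $\kpp = \dvr/(1-\mu)$ and $\gmm = \dur/(1-\mu)$ cannot be bounded this way when $1-\mu$ degenerates, and the bootstrap input (a bound on $1-\mu$, $\dvr$) is precisely what is lacking there. The paper avoids any bootstrap: by the monotonicity of $\varpi$ (increasing in $v$, decreasing in $u$, from \eqref{eq:EMSF-r-phi-m}), $\inf_{\calX_1}\varpi$ and $\sup_{\calX_1}\varpi$ are attained at the corners of $\calX_1$ on $\EH$ and $\NI$ respectively; one then compares each with $\varpi_f$ (via Corollary~\ref{varpi.same.limit}) by integrating the mass equations \emph{along} $\EH$ and $\NI$, where the integrands $\frac12\kpp r^2(\frac{1}{\kpp}\rd_v\phi)^2$ and $\frac12(\frac{r}{\dur}\rd_u\phi)^2$ involve only the hypothesized scalar-field decay and the exact normalizations $\kpp=1$, $\gmm = -1$, with no a priori geometric control needed.
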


\begin{remark} \label{rem:bg-uv-dim0}
Note that Proposition~\ref{prop:bg-uv} is formulated so that the assumptions and the estimates are dimensionless, i.e., they are invariant under the scaling $(\Omg, r, \phi, \e) \mapsto (a \Omg, a r, \phi, a \e)$ for any constant $a > 0$. Moreover, they are invariant under coordinate translations $(u, v) \mapsto (u+u_{0}, v + v_{0})$.
\end{remark}
\begin{remark} 
Let $\dlt_{1} > 0$. As a quick consequence of Corollary~\ref{cor:DR-final} (where we take $\omg' = \omg$), we may find $\calX_{1} \subseteq \calX_{0}$ such that the assumptions \eqref{eq:bg-uv:phi}--\eqref{eq:bg-uv:dvphi} hold with $\Lmb = 1$ for any $\omg < \omg_{0}$; the conclusions of Proposition~\ref{prop:bg-uv} in this case will be useful in Section~\ref{sec:blowup}. On the other hand, the freedom of having an extra parameter $\Lmb$ will be useful in Section~\ref{sec:L-stability}; see Definition~\ref{def:dlt-adm} and Proposition~\ref{prop:bg-geom}.
\end{remark}

\begin{proof}
By Remark~\ref{rem:bg-uv-dim0}, we may assume that $r_{\EH} = 1$ by scaling. By translating the coordinates (Remark~\ref{rem:coord-trans}), we also set\footnote{Notice that this convention in the proof is \underline{different} from that in the earlier parts of this section, where we have instead set $u_{\calX_0}=v_{\calX_0}=1$.} $u_{1} = v_{1} = 1$, so that $\brk{\tfrac{u - u_{1}}{r_{\EH}}} = u$ and $\brk{\tfrac{v - v_{1}}{r_{\EH}}} = v$. 

\pfstep{Step~1: Bounds on $\varpi$ and $\mu$}
In this step, we prove \eqref{eq:bg-uv:m-bnd}, \eqref{eq:bg-uv:mu-large-r} and the upper bound in \eqref{eq:bg-uv:mu-small-r}, which concern $\varpi$ and $\mu$.

We begin with \eqref{eq:bg-uv:m-bnd}. By the monotonicity properties of $\varpi$, note that $\inf_{\calX_{1}} \varpi = \varpi(\infty, 1)$ and $\sup_{\calX_{1}} \varpi = \varpi(1, \infty)$. By Corollary~\ref{varpi.same.limit}, $\varpi_{f} = \sup_{\EH} \varpi = \inf_{\NI} \varpi$. Thus, to prove \eqref{eq:bg-uv:m-bnd}, it suffices to show
\begin{equation*}
	\varpi_{f} - \varpi(\infty, 1) \leq C \dlt_{1}^{2}, \quad
	\varpi(1, \infty) - \varpi_{f} \leq C \dlt_{1}^{2}.
\end{equation*}
These bounds follow from \eqref{eq:bg-uv:duphi}, \eqref{eq:bg-uv:dvphi} and the equations
\begin{align*}
\varpi_{f} - \varpi(\infty, 1)
= \int_{1}^{\infty} \frac{1}{2} \kpp r^{2} \left( \frac{1}{\kpp} \rd_{v} \phi \right)^{2} (\infty, v) \, \ud v, \quad
\varpi(1, \infty) - \varpi_{f}
= \int_{1}^{\infty} \frac{1}{2} \left( \frac{r}{\dur} \rd_{u} \phi \right)^{2} (u, \infty) \, \ud u,
\end{align*}
which are straightforward to justify.

By \eqref{r.varpi.poly} and \eqref{eq:subextremality}, $|\e| < \varpi_{f} \leq r_{\EH} = \varpi_{f} + \sqrt{\varpi_{f}^{2} - \e^{2}}$. Therefore, \eqref{eq:bg-uv:mu-large-r} and the upper bound in \eqref{eq:bg-uv:mu-small-r} easily follow from \eqref{eq:bg-uv:r-bnd} and \eqref{eq:bg-uv:m-bnd} with $\dlt_{1} > 0$ sufficiently small (as a universal constant).

\pfstep{Step~2: Bounds away from $\EH$}
In this step, we establish the bounds \eqref{eq:bg-uv:dur-large-r}, \eqref{eq:bg-uv:dvr-large-r} and \eqref{eq:bg-uv:u-v}, which take place in the region $\calX_{1} \cap \set{r \geq 10}$ away from $\EH$. In the process, we also prove \eqref{eq:bg-uv:kpp-bnd} and \eqref{eq:bg-uv:gmm-decay}.

It will be convenient at this point to define $u_{10}(v)$ and $u_{30}(v)$ so that 
$$r(u_{10}(v),v)=10,\quad r(u_{30}(v),v)=30.$$
When such a $u_{10}(v)$ [resp. $u_{30}(v)$] does not exist, we just set $u_{10}(v) = u_{1}$ [resp. $u_{10}(v) = u_{1}$].

We claim that
\begin{align} 
	\abs{\log (-\gmm)} \leq & C u^{-2 \omg + 2} r^{-2} \dlt_{1}^{2} \quad \hbox{ in } \calX_{1} \cap \set{r \geq 10}, 	\label{eq:bg-uv:gmm-large-r} \\
	\abs{\log \kpp} \leq & C \dlt_{1}^{2} \quad \hbox{ in } \calX_{1}. \label{eq:bg-uv:kpp-bnd-pf}
\end{align}
Note that \eqref{eq:bg-uv:gmm-large-r} is precisely \eqref{eq:bg-uv:gmm-decay}, whereas \eqref{eq:bg-uv:kpp-bnd-pf}, together with $\kappa\restriction_{\EH}=1$ and the fact that $\rd_u\kappa\leq 0$, imply \eqref{eq:bg-uv:kpp-bnd}. Since $\dur = (1-\mu) \gmm$ and $\dvr = (1-\mu) \kpp$, note that \eqref{eq:bg-uv:gmm-large-r} and \eqref{eq:bg-uv:kpp-bnd}, in combination with \eqref{eq:bg-uv:mu-large-r}, imply \eqref{eq:bg-uv:dur-large-r} and \eqref{eq:bg-uv:dvr-large-r}, respectively.

We begin with \eqref{eq:bg-uv:gmm-large-r}. Recall that, by our coordinate normalization, $-\gmm = 1$ on $\NI$. Moreover,
\begin{equation*}
	\rd_{v} \log (-\gmm) = \frac{r}{\dvr} (\rd_{v} \phi)^{2}.
\end{equation*}
Given $(u, v) \in \calX_{1} \cap \set{r \geq 10}$, integrating the previous equation from $(u, \infty)$ to $(u, v)$, then applying \eqref{eq:bg-uv:dvphi} and \eqref{eq:bg-uv:mu-large-r}, we obtain
\begin{align*}
	\abs{\log (-\gmm)}(u, v)
	\leq & \int_{v}^{\infty} (1-\mu)^{-2} r^{-3} \rd_{v} r \left(\frac{r^{2}}{\kpp} \rd_{v} \phi \right)^{2}(u, v') \, \ud v'  \\
	\leq & C u^{-2 \omg + 2} \dlt_{1}^{2} \int_{v}^{\infty} r^{-3} \rd_{v} r(u, v') \, \ud v' 
	\leq  C u^{-2 \omg + 2} r^{-2} \dlt_{1}^{2}.
\end{align*}
Next, we establish \eqref{eq:bg-uv:kpp-bnd-pf}. 
Here we recall the coordinate normalization $\kpp = 1$ on $\EH$, as well as the equation 
\begin{equation*}
	\rd_{u} \log \kpp = \frac{r}{\dur} (\rd_{u} \phi)^{2}.
\end{equation*}
We divide into two cases: $(u, v) \in \calX_{1} \cap \set{r \leq 30}$ or  $(u, v) \in \calX_{1} \cap \set{r \geq 10}$. In the former case, we integrate $\rd_{u} \log \kpp$ from $(\infty, v)$ to $(u, v)$ and estimate using \eqref{eq:bg-uv:duphi} as follows:
\begin{align*}
	\abs{\log \kpp}(u, v)
	\leq \int_{u}^{\infty} r (-\rd_{u} r) \Abs{\frac{1}{\dur} \rd_{u} \phi}^{2} (u', v) \, \ud u'
	\leq C \dlt_{1}^{2} \int_{u_{30}(v)}^{\infty} r (-\rd_{u} r) (u', v) \, \ud u' 
	\leq C \dlt_{1}^{2}.
\end{align*}
Next, when $(u, v) \in \calX_{1} \cap \set{r \geq 10}$, we integrate $\rd_{u} \log \kpp$ from $(u_{10}(v), v)$ to $(u, v)$. Then we have
\begin{equation*}
	\abs{\log \kpp}(u, v)
	\leq \abs{\log \kpp}(u_{10}(v), v)
	+ \int_{u}^{u_{10}(v)} r^{-1} (-\rd_{u} r) \left(\frac{r}{\dur} \rd_{u} \phi \right)^{2}(u', v) \, \ud u'.
\end{equation*}
By the previous case, the first term is bounded by $C \dlt_{1}^{2}$. For the second term, we use $r(u', v) \geq 10$, \eqref{eq:bg-uv:duphi} and \eqref{eq:bg-uv:dur-large-r} to bound
\begin{align*}
\int_{u}^{u_{10}(v)} r^{-1} (-\rd_{u} r) \left(\frac{r}{\dur} \rd_{u} \phi \right)^{2}(u', v) \, \ud u'
\leq & C \dlt_{1}^{2} \int_{1}^{\infty} (u')^{-2 \omg} \, \ud u' 
\leq C \dlt_{1}^{2}.
\end{align*}
This proves \eqref{eq:bg-uv:kpp-bnd-pf}, and thus \eqref{eq:bg-uv:dvr-large-r}.

Finally, we prove \eqref{eq:bg-uv:u-v}. Consider the vector field 
\begin{equation*}
T = \frac{1}{\kpp} \rd_{v} - \frac{1}{\gmm} \rd_{u}
\end{equation*}
which is future-pointing and obeys $T r = 0$; in particular, $T$ is tangent to $\gmm_{30}$. Let $t \mapsto \gmm_{30}(t)$ be the parametrization of $\gmm_{30}$ so that $\gmm_{30}(0) = (u_{30}, v_{30}) \in \rd \calX_{1}$ and $\dot{\gmm}_{30}(t) = T \restriction_{\gmm_{30}(t)}$. Let $C_{30} = u_{30} - v_{30}$. Then $(u - v - C_{30})(\gmm_{30}(0)) = 0$. Moreover, at $\gmm_{30}(t)$ we have
\begin{equation*}
	\Abs{\frac{\ud}{\ud t} (u - v - C_{30})}
	= \abs{T (u - v - C_{30})}
	= \Abs{\frac{1}{\gmm} + \frac{1}{\kpp}}
	\leq \Abs{1 - \frac{1}{(-\gmm)}} + \Abs{1 - \frac{1}{\kpp}}
	\leq C (\abs{\log (-\gmm)} + \abs{\log \kpp}) ,
\end{equation*}
where in the last inequality, we used \eqref{eq:bg-uv:mu-large-r}, \eqref{eq:bg-uv:dur-large-r}, \eqref{eq:bg-uv:dvr-large-r} and the simple bound $\abs{e^{\vtht} - 1} \leq \abs{\vtht} e^{\abs{\vtht}}$. Thus 
\begin{align*}
	(u - v - C_{30})(\gmm_{30}(t))
	\leq & C \int_{0}^{t} (\abs{\log (-\gmm)} + \abs{\log \kpp})(\gmm_{30}(t')) \, \ud t' \\
	\leq & C \left(\sup_{\gmm_{30}} \abs{T u}^{-1} \right) \int_{u_{30}}^{u(\gmm_{30}(t))} \abs{\log (-\gmm)} (u', v_{30}(u')) \, \ud u' \\
	& + C \left(\sup_{\gmm_{30}} \abs{T v}^{-1} \right) \int_{v_{30}}^{v(\gmm_{30}(t))} \abs{\log \kpp} (u_{30}(v'), v') \, \ud v'. 
\end{align*}
Thanks to \eqref{eq:bg-uv:mu-large-r}--\eqref{eq:bg-uv:dur-large-r}, $\abs{T u}^{-1} = -\gmm$ and $\abs{T v}^{-1} = \kpp$ are uniformly bounded on $\gmm_{30}$; hence it remains to bound the integrals of $\abs{\log (-\gmm)}$ and $\abs{\log \kpp}$. For the former, using the equation for $\rd_{v} \log(-\gmm)$, we have
\begin{align*}
	\int_{1}^{\infty} \abs{\log (-\gmm)} (u', v_{30}(u')) \, \ud u' 
	\leq & \int_{1}^{\infty} \int_{u_{30}(v')}^{\infty} (1-\mu)^{-2} r \dvr \Abs{\frac{1}{\kpp} \rd_{v} \phi}^{2}(u', v') \, \ud v' \, \ud u' \\
	\leq & \frac{1}{4} \int_{1}^{\infty} (u')^{-2 \omg +1} \dlt_{1}^{2} \int_{u_{30}(v')}^{\infty} r^{-2} \rd_{v} r (u', v') \, \ud v' \, \ud u' .
\end{align*}
Since $r(u, v_{30}(u)) = 30$, note that $\int_{v_{30}(u')}^{\infty} r^{-2}(\rd_{v} r)(u', v') \, \ud v' \leq 30^{-1}$ for every $u' \geq 1$. Hence the last line is bounded by $C \dlt_{1}^{2}$, which is acceptable. Similarly, using the equation for $\rd_{u} \log \kpp$, we have
\begin{align*}
	\int_{1}^{\infty} \abs{\log \kpp} (u_{30}(v'), v') \, \ud v' 
	\leq & \int_{1}^{\infty} \int_{u_{30}(v')}^{\infty} r (-\dur) \Abs{\frac{1}{\dur} \rd_{u} \phi}^{2}(u', v') \, \ud u' \, \ud v' \\
	\leq & \int_{1}^{\infty} (v')^{-2\omg} \dlt_{1}^{2} \int_{u_{30}(v')}^{\infty} r (-\rd_{u} r) (u', v') \, \ud u' \, \ud v' .
\end{align*}
Since $r(u_{30}(v), v) = 30$, we have $\int_{u_{30}(v')}^{\infty} r(-\rd_{u} r)(u', v') \, \ud u' \leq \frac{1}{2} 30^{2}$ for every $v' \geq 1$. Thus the last line is bounded by $C \dlt_{1}^{2}$, which completes the proof of \eqref{eq:bg-uv:u-v}.

\pfstep{Step~3: Bounds near $\EH$}
Recall that the upper bound in \eqref{eq:bg-uv:mu-small-r} was established in Step~1, and as a consequence, the upper bound in \eqref{eq:bg-uv:dvr-small-r} follows from \eqref{eq:bg-uv:kpp-bnd} proved in Step~2. On the other hand, the lower bound in \eqref{eq:bg-uv:dvr-small-r} is immediate from \eqref{no.trapped.in.1}, and similarly as before, this together with \eqref{eq:bg-uv:kpp-bnd} imply the lower bound in \eqref{eq:bg-uv:mu-small-r}. Therefore, near $\EH$, it only remains to show \eqref{eq:bg-uv:dur-small-r}.
Recall the equation
\begin{equation*}
	\rd_{v} \log (-\dur) = \frac{2 (\varpi - \frac{\e^{2}}{r})}{r^{2}} \kpp.
\end{equation*}
On the one hand, by a similar computation as in the proof of \eqref{eq:DR-small-r:redshift} in Corollary~\ref{cor:DR-small-r} using \eqref{eq:bg-uv:r-bnd}, \eqref{eq:bg-uv:m-bnd} and \eqref{eq:bg-uv:kpp-bnd}, we have
\begin{equation*}
\frac{2 (\varpi - \frac{\e^{2}}{r})}{r^{2}} \kpp \geq \frac{1}{30^{2}}\sqrt{\varpi_{f}^{2} - \e^{2}} =: b_{1} > 0 \quad \hbox{ in } \calX_{1} \cap \set{r \leq 30}.
\end{equation*}
On the other hand, by \eqref{eq:bg-uv:r-bnd}, \eqref{eq:bg-uv:m-bnd} and \eqref{eq:bg-uv:kpp-bnd} we also have
\begin{equation*}
\frac{2 (\varpi - \frac{\e^{2}}{r})}{r^{2}} \kpp \leq b_{1}' \quad \hbox{ in } \calX_{1} \cap \set{r \leq 30},
\end{equation*}
for some universal constant $b_{1}' \geq b_{1}$. Integrating the equation for $\rd_{v} \log (-\dur)$, we obtain
\begin{align*}
	(-\dur)(u, v) = (-\dur)(u, v_{30}(u)) \exp \left( - \int_{v}^{v_{30}(u)} \frac{2 (\varpi - \frac{\e^{2}}{r})}{r^{2}} \kpp (u, v') \, \ud v' \right),
\end{align*}
where $v_{30}(u)$ is defined by $r(u, v_{30}(u)) = 30$. By \eqref{eq:bg-uv:dur-large-r}, we have $\frac{1}{4} \leq -\dur(u, v_{30}(u)) \leq 4$. Moreover, by \eqref{eq:bg-uv:u-v}, we have
\begin{equation*}
	u - v - C_{30} -1 \leq v_{30}(u) - v \leq u - v - C_{30} + 1.
\end{equation*}
Putting together these bounds, \eqref{eq:bg-uv:dur-small-r} follows.

\pfstep{Step~4: Proof of \eqref{eq:bg-uv:kpp-decay}}
Finally we prove \eqref{eq:bg-uv:kpp-decay}, which strengthens the integral bound for $\log \kpp$ used in the proof of \eqref{eq:bg-uv:u-v}.

As in the proof of \eqref{eq:bg-uv:kpp-bnd-pf}, we integrate $\rd_{u} \log \kpp$ from $(\infty, v)$ to $(u, v)$. We divide into two cases: $r(u, v) \leq 30$ and $r(u, v) \geq 10$. In the former case, we use \eqref{eq:bg-uv:duphi} and \eqref{eq:bg-uv:dur-small-r} to estimate
\begin{align*}
	\abs{\log \kpp} (u, v) 
	\leq \int_{u}^{\infty} r (-\dur) \Abs{\frac{1}{\dur} \rd_{u} \phi}^{2}(u', v) \, \ud u' 
%\leq 30 \dlt_{1}^{2} \int_{\max \set{u, u_{30}(v)}}^{\infty} B_{1} e^{- b_{1}(u' - v - C_{30})} v^{-2 \omg}  \, \ud u' 
\leq 30 B_{1} b_{1}^{-1} \dlt_{1}^{2} e^{- b_{1}(u - v - C_{30})} v^{-2 \omg} .
\end{align*}
If $u - C_{30} \leq 2 v$, then $v^{-2 \omg} \leq  C (u - C_{30})^{-2 \omg}$. On the other hand, if $u - C_{30} \geq 2 v$, then we have the exponential decay $e^{-\frac{1}{2} b_{1} (u - C_{30})}$, which is better than $(u - C_{30})^{-2 \omg}$.

When $r(u, v) \geq 10$, we estimate by \eqref{eq:bg-uv:duphi}
\begin{align*}
	\abs{\log \kpp} (u, v) 
\leq \int_{u_{10}(v)}^{\infty} r (-\dur) \Abs{\frac{1}{\dur} \rd_{u} \phi}^{2}(u', v) \, \ud u' 
	+ \int_{u}^{u_{10}(v)} r (-\dur) \Abs{\frac{1}{\dur} \rd_{u} \phi}^{2}(u', v) \, \ud u' .
\end{align*}
For the first term, we estimate
\begin{align*}
\int_{u_{10}(v)}^{\infty} r (-\dur) \Abs{\frac{1}{\dur} \rd_{u} \phi}^{2}(u', v) \, \ud u' 
\leq v^{-2 \omg} \dlt_{1}^{2} \int_{u_{10}(v)}^{\infty} r (- \rd_{u} r) (u', v) \, \ud u'
\leq C v^{-2 \omg} \dlt_{1}^{2}.
\end{align*}
On the other hand, by \eqref{eq:bg-uv:duphi} and \eqref{eq:bg-uv:dur-large-r} we have 

\begin{align*}
\int_{u}^{u_{10}(v)} r (-\dur) \Abs{\frac{1}{\dur} \rd_{u} \phi}^{2}(u', v) \, \ud u' 
\leq \Lmb^{-1} \dlt_{1}^{2} \int_{u}^{u_{10}(v)}  r^{-1} (- \rd_{u} r) (u', v) (u')^{-2 \omg} \, \ud u'
\leq C u^{-2 \omg + 1} \Lmb^{-1} \dlt_{1}^{2},
\end{align*}
which is acceptable. \qedhere
\end{proof}

\section{Lower bound on the event horizon: Proof of Theorem~\ref{thm:blowup}} \label{sec:blowup}
\subsection{Ideas of the proof and the beginning of the proof}\label{sec:blowup:ideas}

In this section, we prove Theorem~\ref{thm:blowup}. The main idea of the proof is to apply the strategy in \cite{LO.instab} to the nonlinear setting. Let us first recall the strategy in \cite{LO.instab}, which is based on proving the contrapositive. Assume that the quantity we wish to show to be infinite is instead finite (see \eqref{main.contradiction}). We will then show that that $\mathfrak L_{(\omg_0)\infty}=0$. The proof consists of four steps:
\begin{enumerate}
\item \pfstep{Step~1} Show that in a region sufficiently close to the event horizon, decay estimates for $\rd_v\phi$ and $\rd_u\phi$ hold. In particular, we obtain integral estimates on the curve $\{r=(1+\de_r) r_{\EH}\}$ restricted to sufficiently large $v$:
$$\int_{v_1}^\infty (v')^{\alp}\left((\rd_v\phi\restriction_{\{r=(1+{\de_r}) r_{\EH}\}})^2(v')+(\rd_u\phi\restriction_{\{r=(1+{\de_r}) r_{\EH}\}})^2(v')\right)\, \ud v' \leq C,$$
for $v_1$ large and ${\de_r}$ small. This is achieved using the red-shift estimates of Dafermos--Rodnianski \cite{DRPL, DRS}.
\item \pfstep{Step~2} Show that up to any finite $\{r=\Lambda r_{\EH}\}$ curve, $\rd_v\phi$ obeys an integrated estimate similar to that in Step~1 and $\phi$ obeys the pointwise bound $|\phi|(u,v)\leq C v^{-\f{\alp}{2}+\f 12}$, with constants depending on $\Lambda$.
\item \pfstep{Step~3} Show that away from the event horizon, say for $r(u,v) \geq 2r_{\EH}$, $\phi$ obeys the uniform bound $|\phi|(u,v) \leq C u^{-3}$. (The difference between this estimate and that in the previous step is that the constant in this step is \emph{uniform} while the constant in the previous step depends on the choice of the $\{r=\Lambda r_{\EH}\}$ curve.)
\item \pfstep{Step~4} Using Step~3, show that if $\mathfrak L_{(\omg_0)\infty}\neq 0$, then $|\rd_v(r\phi)|$ obeys the following \emph{lower bound} on the $\{r=R_{\calI^+}\}$ curve for sufficiently large $R_{\calI^+}$:
$$\left|\f{\rd_v(r\phi)}{\rd_v r}\restriction_{\{r=R_{\calI^+}\}}\right| \geq C^{-1} \left|\mathfrak L_{(\omg_0)\infty}\right|v^{-\min\{\omg_0,3\}}.$$
Hence if $\mathfrak L_{(\omg_0)\infty}\neq 0$, this leads to a contradiction with the conclusion of Step~2 (noting that $\alp>\min\{2\omg_0+1, 7\}$).
\end{enumerate}

All of the steps above are very robust: not only do they not require the spacetime to be exactly Reissner--Nordstr\"om, in fact they only rely very little on the exact geometric properties. Namely, the argument essentially only requires that the spacetime metric is spherically symmetric, and that the spacetime can be divided into the following regions:
\begin{enumerate}
\item A red-shift region near the event horizon, where $\rd_v\log (-\rd_{u} r)>0$ and bounded uniformly away from $0$.
\item A region of bounded geometry, with uniformly bounded $r$, $\rd_{v} r$ and $\rd_{u} r$.
\item An asymptotic region, where the metric is close to that of Minkowski, and that the scalar field decays with a rate that is almost as fast as that of Price's law.
\end{enumerate}
The estimates in Section~\ref{sec:bg} based on \cite{DRPL} are more than sufficient to justify such a partition of the spacetime with the desired properties even in the nonlinear setting.

\textbf{We now begin the proof of Theorem~\ref{thm:blowup}.} Let $\omg_0>2$. From now on, \textbf{fix an $\omg_0$-admissible initial data set.} Consider the connected component $\Ext_{1}$ of the exterior region of its maximal globally hyperbolic future development (cf. Figure~\ref{fig:Kommemi}).
Since it will be clearer for the argument in Sections~\ref{subsec:blowup-const-r} and \ref{sec.contra.conclude} to consider a general $\omg_0>2$, in this section, as opposed to Section~\ref{sec:bg}, \textbf{we will \underline{not} assume $\omg_0\leq 3$}. We can of course still apply the results in Section~\ref{sec:bg} as long as we replace $\omg_0$ by $\min\{\omg_0,3\}$ at appropriate places.

Let $A_{contra}\in [0,\infty)$ be a constant for the contradiction argument and introduce the following:

\textbf{Main contradiction assumption:} Consider the future-normalized coordinate system $(u,v)$ of the exterior region defined by \eqref{eq:coord-u-bg-0} and \eqref{eq:coord-v-bg-0} with $u_{\calX_0}=1$, $v_{\calX_0}(=V_{\calX_0})=1$ where $\calX_0$ is a region as in Section~\ref{subsec:DR-full} such that $C_{out}\subseteq \underline{\calN}$. Assume, for the sake of contradiction, that there exists $v_0\geq 1$ such that the following estimate holds
\begin{equation}\label{main.contradiction}
\int_{v_0}^\infty \lim_{u\to \infty} v^{\alp}(\rd_v\phi)^2(u,v)\, \ud v \leq A_{contra}
\end{equation}
for some $\alp>\min\{2\omg_0+1, 7\}$.

To carry out the proof, we perform Steps 1-4 in the linear setting \cite{LO.instab} as described in the beginning of this subsection. A rough outline of the rest of the section is as follows:
\begin{itemize}
\item \textbf{Section~\ref{subsec:blowup-pf}.} Steps~1 and 2 are carried out in this subsection, in Propositions~\ref{red.shift.prop} and \ref{contra.large.r} respectively.
\item \textbf{Section~\ref{sec.contra.unif}.} Step~3 is carried out in Proposition~\ref{prop.contra.unif}.
\item \textbf{Section~\ref{subsec:blowup-const-r}.} Step~4 is carried out in Proposition~\ref{contra.lower}.
\item \textbf{Section~\ref{sec.contra.conclude}}. Finally, we combine the conclusions of Steps~2 and 4 to conclude the proof.
\end{itemize}

Let us note that in different parts of the argument, it will be convenient to shift between the $(u,v)$ and the $(u,V)$ coordinate systems. This will be clearly specified at different steps below.

\subsection{Upper bound on a constant-$r$ curve}	\label{subsec:blowup-pf}

\textbf{In this subsection, we will work in the $(u,v)$ coordinate system}, i.e., the future-normalized coordinate system introduced in Section~\ref{subsec:bg-coords} with the ``initial conditions'' $u_{\calX_0}=1$, $v_{\calX_0}(=V_{\calX_0})=1$ (cf. Section~\ref{subsec:DR-full}). The advantage is that we can apply the bounds in Proposition~\ref{prop:bg-uv} when restricting to a region with sufficiently large $u$ and $v$.

\begin{proposition}[Red-shift estimates]\label{red.shift.prop}
Suppose the contradiction assumption \eqref{main.contradiction} holds. Then for some $\de_r>0$ sufficiently small and for some $v_1$ sufficiently large (both depending on the solution), there exists $B_{contra}>0$ depending on $A_{contra}$, $\alp$ and the solution such that the following estimates hold:
\begin{equation}\label{red.shift.prop.dphi}
\int_{v_1}^\infty v^{\alp}\left((\rd_v\phi\restriction_{r=(1+\de_r)r_{\EH}})^2+(\rd_u\phi\restriction_{r=(1+\de_r)r_{\EH}})^2 \right)(v) \, \ud v <B_{contra}^2,
\end{equation}
and 
\begin{equation}\label{red.shift.prop.phi}
\left|\phi\restriction_{r=(1+\de_r)r_{\EH}}\right|\leq B_{contra} v^{-\f{\alp}{2}+\f 12}.
\end{equation}
\end{proposition}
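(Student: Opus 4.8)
\textbf{Proof proposal for Proposition~\ref{red.shift.prop}.}

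The plan is to establish the bound \eqref{red.shift.prop.dphi}--\eqref{red.shift.prop.phi} by propagating the contradiction assumption \eqref{main.contradiction} from null infinity inward to the constant-$r$ curve $\set{r = (1+\de_r) r_{\EH}}$, using the red-shift effect along $\EH$ precisely as encoded in Corollary~\ref{cor:DR-small-r} (and its refinement Proposition~\ref{prop:bg-uv}). First I would fix $\de_r > 0$ small enough and $v_1$ large enough that Proposition~\ref{prop:bg-uv} applies on $\calX_1 \cap \set{r \leq 30 r_{\EH}}$ with all the geometric bounds \eqref{eq:bg-uv:kpp-bnd}, \eqref{eq:bg-uv:mu-small-r}, \eqref{eq:bg-uv:dvr-small-r}, and in particular the two-sided red-shift bound \eqref{eq:bg-uv:dur-small-r}; we also record that $\set{r = (1+\de_r) r_{\EH}}$ is a spacelike (in the $(u,v)$ plane) curve of the form $u = u_{(1+\de_r)}(v)$ with $u - v$ bounded there by \eqref{eq:bg-uv:u-v}. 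The key structural equation is the one for $\tfrac{r}{\dur}\rd_u\phi$, namely \eqref{eq:dVrduphi} rewritten in $(u,v)$ coordinates,
\begin{equation*}
	\rd_{v} \left(\frac{r}{\dur} \rd_{u} \phi \right) = - \rd_{v} \phi - \frac{2 (\varpi - \frac{\e^{2}}{r})}{r^{2}} \kpp \left( \frac{r}{\dur} \rd_{u} \phi \right),
\end{equation*}
whose last coefficient is bounded below by a positive constant $b_1$ in $\calX_1 \cap \set{r \leq 30 r_{\EH}}$ (this is exactly the computation in Step~4 of the proof of Corollary~\ref{cor:DR-small-r}, using \eqref{eq:bg-uv:r-bnd}, \eqref{eq:bg-uv:m-bnd}, \eqref{eq:bg-uv:kpp-bnd}).

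The core of the argument is then a weighted energy / integrating-factor estimate. Multiplying the equation above by the integrating factor $e^{I}$ with $\rd_v I = \tfrac{2(\varpi - \e^2/r)}{r^2}\kpp \geq b_1$, integrating along $C_u$ from $\NI$ (i.e.\ $v = \infty$, where $\tfrac{r}{\dur}\rd_u\phi$ has a good limit) down to a point with $r = (1+\de_r) r_{\EH}$, and using $-(I(u,v) - I(u,v')) \leq -b_1(v' - v)$ for $v' \geq v$, I would bound
\begin{equation*}
	\Abs{\frac{r}{\dur}\rd_u\phi}(u,v) \leq e^{-b_1(v_\infty - v)} \Abs{\frac{r}{\dur}\rd_u\phi}\big|_{\NI} + \int_v^\infty e^{-b_1(v' - v)} \abs{\rd_v\phi}(u,v') \, \ud v'.
\end{equation*}
The boundary term at $\NI$ vanishes (or one works with a large finite $v$ cutoff and lets it tend to infinity). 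The point is that the multiplier $v^{\alp/2}$ (or rather working directly with the weighted $L^2$ integral $\int v^\alp (\rd_v\phi)^2$) survives: by Cauchy--Schwarz in $v'$ against the exponential kernel $e^{-b_1(v'-v)}$, together with Lemma~\ref{lem:exp-int} to absorb the polynomial weight $(v')^\alp$ into $v^\alp$, one gets $v^{\alp}\big(\tfrac{r}{\dur}\rd_u\phi\big)^2(u,v) \lesssim \int_v^\infty e^{-b_1(v'-v)} (v')^{\alp}(\rd_v\phi)^2(u,v')\,\ud v'$ pointwise, hence after a further $v$-integration and Fubini the estimate $\int_{v_1}^\infty v^\alp \big(\tfrac{r}{\dur}\rd_u\phi\big)^2(u,v)\,\ud v \lesssim \int_{v_0}^\infty v^\alp (\rd_v\phi)^2(u,v)\,\ud v$, uniformly in $u$. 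Taking $u \to \infty$ on the right and applying \eqref{main.contradiction} bounds the $\rd_u\phi$ part of \eqref{red.shift.prop.dphi}. To get the $\rd_v\phi$ part on $\set{r = (1+\de_r) r_{\EH}}$, I would note that on that curve $\rd_v\phi$ is controlled by $\rd_u\phi$ and $\rd_V\phi$ at nearby points via \eqref{eq:DR:dvphi-med-r} (or by reconstructing $\phi$ and differentiating), since we are in a region of bounded geometry with $r$ bounded below; alternatively, integrate the equation for $\rd_u(\tfrac{r}{\dvr}\rd_v\phi)$ from $\EH$ outward, again using the red-shift sign and \eqref{eq:DR:dvphi-med-r} as the seed estimate near $\EH$. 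The pointwise bound \eqref{red.shift.prop.phi} then follows by integrating $\rd_v\phi$ (or $\rd_u\phi$) along the constant-$r$ curve from $v_1$ to $v$ and applying Cauchy--Schwarz: $\abs{\phi}(v) \leq \abs{\phi}(v_1) + \big(\int_{v_1}^v v'^{-\alp}\,\ud v'\big)^{1/2}\big(\int_{v_1}^\infty v'^\alp(\rd\phi)^2\big)^{1/2} \lesssim B_{contra} v^{-\alp/2 + 1/2}$, once one knows $\abs\phi$ is already small at $v_1$ (which follows from the Price's law decay \eqref{eq:DR:phi-small-r} with a fixed $\omg' > 1$, noting $\alp/2 - 1/2 > 1$ is not needed here, only that $v^{-\alp/2+1/2}$ dominates the fixed-$v_1$ contribution, handled by enlarging $B_{contra}$).

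The main obstacle I anticipate is bookkeeping the interplay between the $(u,v)$ and $(u,V)$ coordinate systems and making sure the red-shift constant $b_1$ is genuinely uniform (independent of $u$) throughout $\calX_1 \cap \set{r \leq 30 r_{\EH}}$ — this is where one must be careful to invoke Proposition~\ref{prop:bg-uv} (rather than only Corollary~\ref{cor:DR-small-r}) so that the decay $\dlt_1$-smallness of $\phi$ there is strong enough to close the bound $\tfrac{2(\varpi-\e^2/r)}{r^2}\kpp \geq b_1 > 0$ with $b_1$ depending only on $\varpi_{\EH}, \e, \de_r$. A secondary subtlety is that $\alp$ may be as large as $7$ while $\omg_0$ may be only slightly above $2$, so in Step~2-type reconstructions one cannot rely on Price's law giving a $v$-decay rate matching $\alp$; but for this proposition we only use Price's law qualitatively (to get smallness of $\phi$ near $\EH$ at the fixed slice $v_1$, and as the seed for $\rd_v\phi$ near $\EH$), while the actual $v^\alp$-weight is carried entirely by the contradiction hypothesis \eqref{main.contradiction} through the integrating-factor argument — so this mismatch does not obstruct the proof here, and is deferred to the later propositions.
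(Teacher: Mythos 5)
The proposal correctly identifies the key structural equation \eqref{eq:dVrduphi}, the red-shift constant $b_1$ from \eqref{eq:DR-small-r:redshift}, and the role of Lemma~\ref{lem:exp-int}, but the proposed integrating-factor argument is set up from the wrong boundary and does not close.

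The central step you propose — integrating $\rd_v\left(\f{r}{\dur}\rd_u\phi\right)$ with integrating factor along $C_u$ from $v=\infty$ down to the curve $\set{r=(1+\de_r)r_{\EH}}$, using $I(u,v')-I(u,v)\geq b_1(v'-v)$ for all $v'\geq v$ — fails for two reasons. First, the lower bound $\rd_v I = \f{2(\varpi-\e^2/r)}{r^2}\kpp\geq b_1$ is only valid in the bounded-$r$ region (say $r\leq 30\,r_{\EH}$); for $r$ large, this coefficient decays like $r^{-2}$, and $\int_{v}^{\infty}\rd_v I\,\ud v'$ over the large-$r$ part of $C_u$ is \emph{bounded}, so there is no exponential gain across most of the integration range. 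Second, the boundary term at $\NI$ does not vanish: $\f{r}{\dur}\rd_u\phi\to -\rd_u\Phi(u)$ as $v\to\infty$, and the hypothesis \eqref{main.contradiction} only controls $\rd_v\phi$ on $\EH$, not $\rd_u\Phi$ on $\NI$. Price's law gives $|\rd_u\Phi|(u)\lesssim u^{-(\omg_0-\eta)}$, but since $\alp>\min\set{2\omg_0+1,7}$, the weighted integral $\int v^{\alp}(\rd_u\Phi)^2\,\ud v$ that this boundary term would contribute is \emph{divergent} — precisely the mismatch you flagged at the end as a "secondary subtlety" in fact obstructs the argument at this point. Related to this, even after the reduction to $\int v^\alp(\rd_v\phi)^2(u,\cdot)\,\ud v$ "uniformly in $u$," the contradiction hypothesis only bounds this integral at $u=\infty$ (the horizon); controlling it at finite $u$ requires propagating $\rd_v\phi$ inward from the horizon, which your proposal does not address. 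The paper avoids all of this by integrating \emph{only within the bounded-$r$ region} $\set{r\leq(1+\de_r)r_\EH,\ v\geq v_1}$, forward in $v$ from the slice $\set{v=v_1}$ (where $\rd_u\phi$ is controlled by Price's law, \eqref{eq:DR-final:duphi}), while separately integrating the companion transport equation $\rd_u(r\rd_v\phi)=-\lambda\rd_u\phi$ backward in $u$ from the horizon (where $\rd_v\phi$ is controlled by \eqref{main.contradiction}). The two estimates are coupled through the cross term $\iint v^\alp|r\rd_u\phi\,\rd_v\phi|$, which is absorbed by the smallness $\sup_v\int_{u_{\Lambda_{\de_r}}(v)}^\infty(-\nu)\,\ud u\leq 2\de_r r_{\EH}$ — this is where $\de_r$ being small is crucial, and it is exactly the coupling mechanism your single-equation argument lacks.

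There is also an error in your Step~4: integrating $T\phi$ along $\set{r=(1+\de_r)r_\EH}$ from $v_1$ \emph{to} $v$ gives $|\phi|(v)\leq|\phi|(v_1)+\left(\int_{v_1}^{v}(v')^{-\alp}\,\ud v'\right)^{1/2}\left(\int_{v_1}^\infty(v')^\alp(T\phi)^2\right)^{1/2}$, whose right-hand side is a $v$-independent constant $\sim v_1^{-\alp/2+1/2}$ plus the constant $|\phi|(v_1)$, so it does \emph{not} produce the decaying bound $B_{contra}v^{-\alp/2+1/2}$; enlarging $B_{contra}$ cannot fix this. One must instead integrate from $v=\infty$ \emph{down} to $v$ (using $\phi\to 0$ along the curve), so that the Cauchy--Schwarz factor is $\left(\int_{v}^{\infty}(v')^{-\alp}\,\ud v'\right)^{1/2}\sim v^{-\alp/2+1/2}$, which does decay.
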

\begin{proof}
\pfstep{Step~0: Restriction to a large-$v$ region}
Clearly, the result only concerns a large-$v$ region. Moreover, when restricted to the region $\{r\leq (1+\de_r) r_{\EH}\}\cap \{v\geq v_1\}$, the $u$-value must also be large since $\{r=(1+\de_r) r_{\EH}\}$ is timelike. Therefore, for the rest of this proof, we can assume $v_1$ to be sufficiently large so that Proposition~\ref{prop:bg-uv} is applicable.

Let $\de_r>0$ be the small constant in the statement of this proposition (which is to be chosen later). By the above discussions, we can choose $v_1\geq v_0$ to be sufficiently large such that in the region of interest%\footnote{The upper bound of course holds by definition of the region, it is the lower bound that we assume by taking $v_1$ sufficiently large.}
\begin{equation}\label{red.shift.r.bounds}
(1-\de_r)r_{\EH}\leq r \leq (1+\de_r)r_{\EH}.
\end{equation}
We will assume these bounds for the rest of this proof.

\pfstep{Step~1: First estimate for the wave equation}
By the wave equation for $\phi$ and $r$ in \eqref{eq:EMSF-r-phi-m}, we derive
$$\rd_v\left(\f{r\rd_u\phi}{(-\nu)^{\f 12}}\right)=-\f{2 (\varpi-\f{\e^2}{r})}{2 r^2}\f{\rd_v r}{1-\mu}\left(\f{r\rd_u\phi}{(-\nu)^{\f 12}}\right)+(-\nu)^{\f 12}\rd_v\phi.$$
By \eqref{eq:DR-small-r:redshift}, for 
$$\Lambda_{\de_r}=1+\de_r,$$ there exists $v_1\geq 1$ and $\tilde{b}>0$ such that
$$\f{2 (\varpi-\f{\e^2}{r})}{2r^2}\f{\rd_v r}{1-\mu}\geq \tilde{b}\quad \mbox{ in }\{r\leq \Lambda_{\de_r} r_{\EH}\}\cap\{v\geq v_1\}.$$
Moreover, $\tilde{b}$ can be chosen to be independent of ${\de_r}$ as long as ${\de_r}<1$.

This implies that for $v\geq v_1$ and $r\leq \Lambda_{\de_r} r_{\EH}$, we have
\begin{equation}\label{red.shift.wave.1}
\begin{split}\
\f 12\rd_v\left(v^{\alp}\left(\f{r\rd_u\phi}{(-\nu)^{\f 12}}\right)^2\right)
\leq &\left(-\tilde{b} v^{\alp}+\f{\alp}{2}v^{\alp-1}\right)\left(\f{r\rd_u\phi}{(-\nu)^{\f 12}}\right)^2+v^{\alp} r\rd_u\phi\rd_v\phi.
\end{split}
\end{equation}
For $(u, v) \in \set{v \geq v_{1}, \, r \leq \Lmb_{\dlt_{r}} r_{\EH}}$, we introduce the notation so that $u_{\Lambda_{\de_r}}(v)$ and $v_{\Lambda_{\de_r}}(u)$ are defined by the following relations:
$$r(u_{\Lambda_{\de_r}}(v), v)=r(u, v_{\Lambda_{\de_r}}(u))=\Lambda_{\de_r} r_{\EH}.$$
Integrating \eqref{red.shift.wave.1} in $v$ from $(u,v_1)$ to $(u, v)$ for all $u\in [u_{\Lambda_{\de_r}}(v_1),\infty)$ and $v\in [v_1, v_{\Lambda_{\de_r}}(u)]$, and taking supremum over $v$, we obtain the following estimate for every $u\in [u_{\Lambda_{\de_r}}(v_1),\infty)$:
\begin{equation}\label{red.shift.wave.1.1}
\begin{split}
&\sup_{v\in [v_1, v_{\Lambda_{\de_r}}(u)]}\f {v^{\alp}}2 \left(\f{r\rd_u\phi}{(-\nu)^{\f 12}}\right)^2(u,v)+\int_{v_1}^{v_{\Lambda_{\de_r}}(u)} \left(\tilde{b} v^{\alp}-\f{\alp}{2}v^{\alp-1}\right)\left(\f{r\rd_u\phi}{(-\nu)^{\f 12}}\right)^2(u,v)\,\ud v\\
\leq & v_1^{\alp}\left(\f{r\rd_u\phi}{(-\nu)^{\f 12}}\right)^2(u,v_1)+2\int_{v_1}^{v_{\Lambda_{\de_r}}(u)} v^{\alp} \left|r\rd_u\phi\rd_v\phi\right|(u,v)\,\ud v.
\end{split}
\end{equation}
Notice that we can choose $v_1$ sufficiently large (depending on $\tilde{b}$ and $\alp$) so that for $v\geq v_1$,
\begin{equation}\label{red.shift.wave.1.2}
\tilde{b} v^{\alp}-\f{\alp}{2}v^{\alp-1}\geq \f {\tilde{b}}4 v^{\alp} .
\end{equation}
Integrating \eqref{red.shift.wave.1.1} in $u$ for $u\in [u_{\Lambda_{\de_r}}(v_1),\infty)$ and using \eqref{red.shift.wave.1.2}, we obtain
\begin{equation}\label{red.shift.wave.1.3}
\begin{split}
&\int_{u_{\Lambda_{\de_r}}(v_1)}^\infty\sup_{v\in [v_1, v_{\Lambda_{\de_r}}(u)]}\f {v^{\alp}}2 \left(\f{r\rd_u\phi}{(-\nu)^{\f 12}}\right)^2(u,v)\, \ud u\\
&+\f {\tilde{b}}4\int_{u_{\Lambda_{\de_r}}(v_1)}^\infty\int_{v_1}^{v_{\Lambda_{\de_r}}(u)}  v^{\alp}\left(\f{r\rd_u\phi}{(-\nu)^{\f 12}}\right)^2(u,v)\,\ud v\, \ud u\\
\leq &  \underbrace{\int_{u_{\Lambda_{\de_r}}(v_1)}^\infty v_1^{\alp}\left(\f{r\rd_u\phi}{(-\nu)^{\f 12}}\right)^2(u,v_1)\, \ud u}_{=:I}+ 2\underbrace{\int_{u_{\Lambda_{\de_r}}(v_1)}^\infty\int_{v_1}^{v_{\Lambda_{\de_r}}(u)} v^{\alp} \left|r\rd_u\phi\rd_v\phi\right|(u,v)\,\ud v\, \ud u}_{=:II}.
\end{split}
\end{equation}
To bound $I$ in \eqref{red.shift.wave.1.3}, notice that since $r\leq \Lambda_{{\de_r}} r_{\EH}\leq 30 r_{\EH}$, we can apply \eqref{eq:DR-final:duphi} (with $\omg'=2$) and \eqref{eq:DR-final:vV} to show that for some $C=C(B_0, B_{\omg'=2})$ (which may change from line to line), 
\begin{equation}\label{red.shift.wave.1.4}
\begin{split}
I\leq C v_1^{\alp-4} \int_{(1-{\de_r})r_\EH}^{(1+{\de_r})r_\EH} r^2\, dr\leq  C v_1^{\alp-4}\left( (1+\de_r)^3 r_{\EH}^3-(1-{\de_r})^3 r_{\EH}^3\right) \leq C v_1^{\alp-4}{\de_r} r_{\EH}^3
\end{split}
\end{equation}
for ${\de_r}>0$ sufficiently small.

For $II$, we use the Cauchy--Schwarz inequality, Young's inequality and H\"older's inequality to obtain the following estimate for some universal constant $C>0$:
\begin{equation}\label{red.shift.wave.1.5}
\begin{split}
II\leq &\underbrace{\f {\tilde{b}}8\int_{u_{\Lambda_{\de_r}}(v_1)}^\infty\int_{v_1}^{v_{\Lambda_{\de_r}}(u)}  v^{\alp}\left(\f{r\rd_u\phi}{(-\nu)^{\f 12}}\right)^2(u,v)\,\ud v\, \ud u}_{=:III} \\
&+\underbrace{C \tilde{b}^{-1}\left(\int_{v_1}^{\infty} \sup_{u\in [u_{\Lambda_{\de_r}}(v),\infty)} v^{\alp}(\rd_v\phi)^2 (u,v)\, \ud v\right) \left(\sup_{v\in [v_1,\infty)}\int_{u_{\Lambda_{\de_r}}(v)}^\infty (-\nu) (u,v)\, \ud u\right)}_{=:IV}.
\end{split}
\end{equation}
For the term $IV$ in \eqref{red.shift.wave.1.5}, note that
\begin{equation}\label{red.shift.wave.1.6}
\begin{split}
\sup_{v\in [v_1,\infty)}\int_{u_{\Lambda_{\de_r}}(v)}^\infty (-\nu) (u,v)\, \ud u\leq \int_{(1-{\de_r})r_{\EH}}^{(1+{\de_r})r_{\EH}} \, \ud r\leq 2{\de_r} r_{\EH}.
\end{split}
\end{equation}
On the other hand, for the term $III$, since the constant $\f{\tilde{b}}{8}$ in front of the integral is sufficiently small, we can absorb this term by the second term on the LHS of \eqref{red.shift.wave.1.3}. Hence, combining \eqref{red.shift.wave.1.3}, \eqref{red.shift.wave.1.4}, \eqref{red.shift.wave.1.5}, \eqref{red.shift.wave.1.6} and the above observation, we obtain 
\begin{equation}\label{red.shift.wave.1.7}
\begin{split}
&\int_{u_{\Lambda_{\de_r}}(v_1)}^\infty\sup_{v\in [v_1, v_{\Lambda_{\de_r}}(u)]}\f {v^{\alp}}2 \left(\f{r\rd_u\phi}{(-\nu)^{\f 12}}\right)^2(u,v)\, \ud u
+\f {\tilde{b}}8 \int_{u_{\Lambda_{\de_r}}(v_1)}^\infty\int_{v_1}^{v_{\Lambda_{\de_r}}(u)}  v^{\alp}\left(\f{r\rd_u\phi}{(-\nu)^{\f 12}}\right)^2(u,v)\,\ud v\, \ud u\\
\leq & C v_1^{\alp-4}{\de_r} r_{\EH}^3+ 2C \tilde{b}^{-1} {\de_r} r_{\EH}\left(\int_{v_1}^{\infty} \sup_{u\in [u_{\Lambda_{\de_r}}(v),\infty)} v^{\alp}(\rd_v\phi)^2 (u,v)\, \ud v\right).
\end{split}
\end{equation}

\pfstep{Step~2: Second estimate for the wave equation}
On the other hand, we can also write the wave equation for $\phi$ in \eqref{eq:EMSF-wave} (in a way useful to estimate $\rd_v\phi$) as follows:
$$\rd_u(r\rd_v\phi)=-\lambda \rd_u\phi.$$
This implies
\begin{equation}\label{red.shift.wave.2}
\f 12\rd_u \left(v^{\alp}(r\rd_v\phi)^2\right)=-v^{\alp}\lambda r\rd_u\phi\rd_v\phi.
\end{equation}
Integrating in $u$ for $u\in [u_{\Lambda_{\de_r}}(v),\infty)$, $v\in [v_1,\infty)$, and taking supremum over $u\in [u_{\Lambda_{\de_r}}(v),\infty)$, we obtain the following for every $v\in [v_1,\infty)$:
\begin{equation}\label{red.shift.wave.2.1}
\begin{split}
\sup_{u\in [u_{\Lambda_{\de_r}}(v),\infty)}\f 12 v^{\alp}(r\rd_v\phi)^2(u,v)
\leq & \f 12 v^{\alp}\lim_{u\to \infty}(r\rd_v\phi)^2(u,v)+\int_{u_{\Lambda_{\de_r}}(v)}^\infty v^{\alp}\left|\lambda r\rd_u\phi\rd_v\phi\right|(u,v) \, \ud u.
\end{split}
\end{equation}
Integrating in $v$ for $v\in [v_1,\infty)$, we obtain
\begin{equation}\label{red.shift.wave.2.2}
\begin{split}
&\f 12\int_{v_1}^\infty \sup_{u\in [u_{\Lambda_{\de_r}}(v),\infty)} v^{\alp}(r\rd_v\phi)^2(u,v)\, \ud v\\
\leq & \underbrace{\f 12\int_{v_1}^\infty  v^{\alp}\lim_{u\to \infty}(r\rd_v\phi)^2(u,v)\, \ud v}_{=:V}+\underbrace{\int_{v_1}^\infty\int_{u_{\Lambda_{\de_r}}(v)}^\infty v^{\alp}\left|\lambda r\rd_u\phi\rd_v\phi\right|(u,v) \, \ud u\, \ud v}_{=:VI}.
\end{split}
\end{equation}
By the assumption \eqref{main.contradiction} and the bounds \eqref{red.shift.r.bounds} on $r$, for ${\de_r}$ sufficiently small, we have
\begin{equation}\label{red.shift.wave.2.3}
V\leq \f 12 (1+{\de_r})^2 r_{\EH}^2 A_{contra}\leq r_{\EH}^2 A_{contra}.
\end{equation}
For the term $VI$, we use \eqref{eq:bg-uv:dvr-small-r} to show that $VI\leq II$ (for $II$ in \eqref{red.shift.wave.1.3}). Hence using \eqref{red.shift.wave.1.5} and \eqref{red.shift.wave.1.6}, we have
\begin{equation}\label{red.shift.wave.2.4}
\begin{split}
VI\leq & 2C \tilde{b}^{-1}{\de_r} r_{\EH} \int_{v_1}^\infty \sup_{u\in [u_{\Lambda_{\de_r}}(v),\infty)} v^{\alp}(r\rd_v\phi)^2(u,v)\, \ud v+ \f{\tilde{b}}{8} \int_{v_1}^\infty\int_{u_{\Lambda_{\de_r}}(v)}^\infty v^{\alp} \left(\f{r\rd_u\phi}{(-\nu)^{\f 12}}\right)^2(u,v) \, \ud u\, \ud v.
\end{split}
\end{equation}
Combining \eqref{red.shift.wave.2.2}, \eqref{red.shift.wave.2.3} and \eqref{red.shift.wave.2.4} and absorbing $2C\tilde{b}^{-1} {\de_r} r_{\EH} \int_{v_1}^\infty \sup_{u\in [u_{\Lambda_{\de_r}}(v),\infty)} v^{\alp}(r\rd_v\phi)^2(u,v)\, \ud v$ to the LHS (which is possible for ${\de_r}$ sufficiently small so that $2C\tilde{b}^{-1}{\de_r} r_{\EH}\leq \f 14$), we obtain
\begin{equation}\label{red.shift.wave.2.5}
\begin{split}
\f 14\int_{v_1}^\infty \sup_{u\in [u_{\Lambda_{\de_r}}(v),\infty)} v^{\alp}(r\rd_v\phi)^2(u,v)\, \ud v
\leq & r_{\EH}^2 A_{contra}+ \f{\tilde{b}}{8}\int_{v_1}^\infty\int_{u_{\Lambda_{\de_r}}(v)}^\infty v^{\alp} \left(\f{r\rd_u\phi}{(-\nu)^{\f 12}}\right)^2(u,v) \, \ud u\, \ud v.
\end{split}
\end{equation}

\pfstep{Step~3: Putting the two estimates together}
Adding \eqref{red.shift.wave.1.7} and \eqref{red.shift.wave.2.5}, we obtain
\begin{equation}\label{red.shift.wave.combined.1}
\begin{split}
&\int_{u_{\Lambda_{\de_r}}(v_1)}^\infty\sup_{v\in [v_1, v_{\Lambda_{\de_r}}(u)]}\f {v^{\alp}}2 \left(\f{r\rd_u\phi}{(-\nu)^{\f 12}}\right)^2(u,v)\, \ud u+ (\f 14-2C\tilde{b}^{-1} \de_r r_{\EH})\int_{v_1}^\infty \sup_{u\in [u_{\Lambda_{\de_r}}(v),\infty)} v^{\alp}(r\rd_v\phi)^2(u,v)\, \ud v\\
\leq &r_{\EH}^2 A_{contra}+C v_1^{\alp-4}{\de_r} r_{\EH}^3.
\end{split}
\end{equation}
Fixing $v_1$, the RHS is hence a constant depending on $A_{contra}$, $\alp$ and the solution.

Choosing ${\de_r}$ sufficiently small and using the estimate for $(-\nu)$ in \eqref{eq:bg-uv:dur-small-r}, \eqref{red.shift.wave.combined.1} implies that when restricted to $\{r=(1+{\de_r})r_{\EH}\}$, we have the following bound:
\begin{equation}\label{red.shift.wave.combined.2}
\begin{split}
&\int_{u_{\Lambda_{\de_r}}(v_1)}^\infty v^{\alp} \left(\rd_u\phi\restriction_{r=(1+{\de_r})r_{\EH}}\right)^2(u)\, \ud u+ \int_{v_1}^\infty  v^{\alp}\left(\rd_v\phi\restriction_{r=(1+{\de_r})r_{\EH}}\right)^2(v)\, \ud v \leq B_{contra}^2,
\end{split}
\end{equation}
for some $B_{contra}>0$ depending on $A_{contra}$, $\alp$ and the solution.
To get from \eqref{red.shift.wave.combined.2} to \eqref{red.shift.prop.dphi}, it suffices to compare $\ud u$ and $\ud v$ on $\{r=(1+{\de_r})r_{\EH}\}$. For this, we note that the vector field the vector field $T=\f{1}{\kappa}\rd_v-\f{1}{\gamma}\rd_u$ is tangent to $\{r=(1+{\de_r})r_{\EH}\}$ and hence on $\{r=(1+{\de_r})r_{\EH}\}$, 
$$\ud u= \f{|Tu|}{|Tv|}\, \ud v= -\f{\kappa}{\gamma} \, \ud v.$$
Using the estimates for $\gamma$ and $\kappa$ in \eqref{eq:bg-uv:gmm-decay} and \eqref{eq:bg-uv:kpp-decay} respectively, and taking $B_{contra}$ larger if necessary, we obtain \eqref{red.shift.prop.dphi}.

\pfstep{Step~4: Estimates for $|\phi|$}
To obtain the estimate \eqref{red.shift.prop.phi} for $\phi$, simply notice again that the vector field $T=\f{1}{\kappa}\rd_v-\f{1}{\gamma}\rd_u$ is tangent to $\{r=(1+{\de_r})r_{\EH}\}$ and that if we parametrize $\{r=(1+{\de_r})r_{\EH}\}$ by $t$ so that $Tt=1$, we have $\ud t=\f{|T t|}{|T v| }\, \ud v= \kappa \, \ud v$. By \eqref{eq:bg-uv:kpp-decay}, we have that for some universal $C>0$,
\begin{equation*}
\begin{split}
&\left|\phi\restriction_{r=(1+{\de_r})r_{\EH}}\right|(v) \leq C\int_{v}^\infty \left|T\phi\restriction_{r=(1+{\de_r})r_{\EH}}\right|(v') \, \ud v'\\
\leq &C\left(\int_{v}^\infty (v')^{-\alp}\, \ud v'\right)^{\f 12}\left(\int_{v}^\infty (v')^{\alp}\left((\rd_v\phi\restriction_{r=(1+{\de_r})r_{\EH}})^2+(\rd_u\phi\restriction_{r=(1+{\de_r})r_{\EH}})^2\right)(v') \, \ud v'\right)^{\f 12}\leq B_{contra} v^{-\f{\alp}{2}+\f 12},
\end{split}
\end{equation*}
after taking $B_{contra}$ larger if necessary. \qedhere

\end{proof}

The above proposition gives an estimate up to $\{r=(1+{\de_r})r_{\EH}\}$. In order to obtain a contradiction, we will need to propagate this estimate up to any $\{r=\Lambda r_{\EH}\}$ curve, where $\Lambda$ is fixed but otherwise arbitrarily large. This is achieved in the following proposition in which we perform a ``sideway'' energy estimate.

\begin{proposition}[Estimates up to any large but finite $r$]\label{contra.large.r}
Suppose the contradiction assumption \eqref{main.contradiction} holds. Let ${\de_r}>0$ be as in the conclusion of Proposition~\ref{red.shift.prop} and $\Lambda>1+{\de_r}$ be arbitrary. Then there exist $u_1\geq 1$ (depending on ${\de_r}$, $\Lambda$ and the solution) and $B'_{contra}>0$ (depending on\footnote{Here, $B_{contra}$ is as in Proposition~\ref{red.shift.prop}.} $B_{contra}$, ${\de_r}$, $\Lambda$ and the solution) such that the following integrals\footnote{Here, it is understood that the curves $\{(u,v): r(u,v)=\Lambda' r_{\EH}\}$ are parametrized by their $u$ values. In particular, when restricted to $\{(u,v): r(u,v)=\Lambda' r_{\EH}\}$, $v$ can be viewed as a function of $u$.} along the curves $\{(u,v): r(u,v)=\Lambda' r_{\EH}\}$ are uniformly bounded for $\Lambda'\in [(1+{\de_r}), \Lambda]$:
\begin{equation}\label{E.contra.prop}
\begin{split}
\int_{u_1}^\infty \left(v\restriction_{r=\Lambda' r_{\EH}}\right)^{\alp}\left((\rd_u\phi\restriction_{r=\Lambda' r_{\EH}})^2+(\rd_v\phi\restriction_{r=\Lambda' r_{\EH}} )^2\right)(u)\, \ud u &\leq (B'_{contra})^2.
\end{split}
\end{equation}
Moreover, in the region $\{u\geq u_1\}\cap\{r \in [(1+{\de_r})r_{\EH}, \Lambda r_{\EH}]\}$, $\phi(u,v)$ satisfies the following estimate:
\begin{equation}\label{pointwise.contra}
|\phi|(u,v)\leq B'_{contra} v^{-\f{\alp}{2}+\f 12}.
\end{equation}
\end{proposition}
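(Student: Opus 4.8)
The plan is to run a weighted ``sideways'' (characteristic) energy estimate in the bounded-geometry strip $\calR = \set{(1+\de_r) r_{\EH} \leq r \leq \Lmb r_{\EH}} \cap \set{u \geq u_1}$, supplied with data on the inner boundary $\set{r = (1+\de_r)r_{\EH}}$ by Proposition~\ref{red.shift.prop}, and then to deduce the pointwise bound by integrating $\rd_v \phi$ from that inner boundary. The point is that this is exactly the argument of Proposition~\ref{red.shift.prop}, except that away from $\EH$ there is no red-shift available to absorb the cross terms, so absorption is replaced by a Gr\"onwall argument in the $r$-variable.

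First I would fix $u_1$ and record uniform geometric bounds. Using the results of Section~\ref{sec:bg} --- in particular Corollary~\ref{cor:DR-final} and Proposition~\ref{prop:bg-uv} (applied, after dividing $[1+\de_r,\Lmb]$ into finitely many subranges if $\Lmb$ is large, with $\dlt_{1}$ small) --- one can take $u_1 \geq 1$ large enough that throughout $\calR$ the quantities $r$, $1-\mu$, $\lmb := \rd_v r$, $-\nu := -\rd_u r$ and their reciprocals are bounded above and below by constants depending on $\de_r$, $\Lmb$ and the solution; here it is crucial that $\calR$ stays uniformly away from both $r = r_{\EH}$ (so the red-shift degeneracy of $\rd_u r$ is irrelevant) and $r = 0$. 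Consequently the two constant-$r$ curves bounding $\calR$ are uniformly timelike, their $u$- and $v$-parametrizations are related by a Jacobian $\tfrac{\ud u}{\ud v} = \tfrac{\lmb}{-\nu}$ bounded above and below, and the past boundary of $\calR$ is the compact outgoing segment $C_{u_1} \cap \calR$, on which $\abs{\phi}$ and its first derivatives are bounded by a constant $C_{sol}$ depending only on the solution and $u_1$.

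Next comes the energy estimate. For $\Lmb' \in [1+\de_r, \Lmb]$ let $\calF(\Lmb')$ be the $v^{\alp}$-weighted energy $\int v^{\alp}((\rd_u \phi)^2 + (\rd_v \phi)^2)$ along $\set{r = \Lmb' r_{\EH}} \cap \set{u \geq u_1}$ (the choice of measure along the curve is immaterial by the previous step). From the divergence forms $\rd_u(r \rd_v \phi) = -\lmb \rd_u \phi$ and $\rd_v(r \rd_u \phi) = -\nu \rd_v \phi$ of the wave equation (already used in Proposition~\ref{red.shift.prop}) one gets $\rd_u(v^{\alp} (r \rd_v \phi)^2) = - \tfrac{2\lmb}{r} v^{\alp} (r \rd_v \phi)(r \rd_u \phi)$ and $\rd_v(v^{\alp} (r \rd_u \phi)^2) = \alp v^{\alp-1}(r \rd_u \phi)^2 - \tfrac{2\nu}{r} v^{\alp} (r \rd_u \phi)(r \rd_v \phi)$; because one works with $r\rd\phi$, no zeroth-order term in $\phi$ ever appears. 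Integrating these over $\set{(1+\de_r) r_{\EH} \leq r \leq \Lmb' r_{\EH}} \cap \set{u \geq u_1}$ (by the divergence theorem, truncating at finite $u$ and letting the truncation recede to $i^{+}$), bounding the inner-boundary flux by $C B_{contra}^{2}$ via Proposition~\ref{red.shift.prop} (after converting $\ud v \leftrightarrow \ud u$), bounding the compact past-boundary flux by $C C_{sol}$, and absorbing the cross terms $v^{\alp}(r\rd_v\phi)(r\rd_u\phi)$ and the term $\alp v^{\alp-1}(r\rd_u\phi)^2$ by Cauchy--Schwarz and the two-sided geometric bounds, we arrive at
\[
	\calF(\Lmb') \leq C \big(B_{contra}^{2} + C_{sol}\big) + C \iint_{\set{(1+\de_r) r_{\EH} \leq r \leq \Lmb' r_{\EH}} \cap \set{u \geq u_1}} v^{\alp} \big((\rd_u \phi)^2 + (\rd_v \phi)^2\big) \, \ud u \, \ud v .
\]
Changing variables from $u$ to $r$ at fixed $v$ (resp.\ from $v$ to $r$ at fixed $u$), using $\abs{\rd_u r} \sim \abs{\rd_v r} \sim 1$ on $\calR$, shows the bulk integral is $\leq C \int_{1+\de_r}^{\Lmb'} \calF(\Lmb'') \, \ud \Lmb''$. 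Hence $\calF(\Lmb') \leq C_0 + C_1 \int_{1+\de_r}^{\Lmb'} \calF$, and Gr\"onwall gives $\calF(\Lmb') \leq C_0 e^{C_1(\Lmb - 1 - \de_r)} =: (B'_{contra})^{2}$ for all $\Lmb' \in [1+\de_r,\Lmb]$, which is \eqref{E.contra.prop}.

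Finally, the pointwise bound. Given $(u,v)$ with $u \geq u_1$ and $r(u,v) \in [(1+\de_r) r_{\EH}, \Lmb r_{\EH}]$, let $v_{\ast}(u)$ satisfy $r(u, v_{\ast}(u)) = (1+\de_r) r_{\EH}$; since $\rd_v r \sim 1$ one has $0 \leq v - v_{\ast}(u) \leq C \Lmb r_{\EH}$, so $v_{\ast}(u) \sim v$ for $v$ large. Writing $\phi(u,v) = \phi(u, v_{\ast}(u)) + \int_{v_{\ast}(u)}^{v} \rd_v \phi(u, v')\, \ud v'$, the first term is $\leq B_{contra} v_{\ast}(u)^{-\f{\alp}{2}+\f12} \leq C B_{contra} v^{-\f{\alp}{2}+\f12}$ by \eqref{red.shift.prop.phi}, while Cauchy--Schwarz bounds the second by $\big(\int_{v_{\ast}(u)}^{v} (v')^{-\alp} \ud v'\big)^{1/2} \big(\int_{v_{\ast}(u)}^{v} (v')^{\alp}(\rd_v \phi)^2(u,v') \ud v'\big)^{1/2} \leq C v^{-\f{\alp}{2}} B'_{contra}$, where the last factor --- the $v^{\alp}$-weighted energy of $\rd_v\phi$ along the $C_u$-segment inside $\calR$ --- is bounded uniformly in $u$ by $(B'_{contra})^2$: this comes from integrating $\rd_u(v^{\alp}(r\rd_v\phi)^2) = -\tfrac{2\lmb}{r} v^{\alp}(r\rd_v\phi)(r\rd_u\phi)$ over $\calR \cap \set{u' \leq u}$, whose boundary contains exactly that $C_u$-segment and whose other boundary and bulk contributions are already controlled by \eqref{E.contra.prop} and the bulk estimate above. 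Combining yields \eqref{pointwise.contra}. The step I expect to be the main obstacle is the first one --- establishing the uniform two-sided geometric bounds on $\calR$ by patching together the corollaries of Section~\ref{sec:bg} --- together with the change-of-variables estimate $\iint v^{\alp}(\rd\phi)^2 \leq C \int \calF$ underlying the Gr\"onwall closure, which is precisely where $\abs{\rd_u r}\sim\abs{\rd_v r}\sim 1$ on $\calR$ is indispensable; beyond this the argument is a routine adaptation of Proposition~\ref{red.shift.prop}.
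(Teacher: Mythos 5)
Your overall architecture matches the paper's: you integrate the two divergence--form wave--equation identities over the strip between two constant-$r$ curves, convert the bulk to an $r$-integral of the constant-$r$ fluxes by the change of variables $\ud u\, \ud v = \lambda^{-1}\, \ud r\, \ud u$, and close by Gr\"onwall in $r$; the cosmetic differences (working with $r\rd\phi$ instead of $\rd\phi$, truncating at finite $u$ instead of $v$, keeping rather than dropping the favourably-signed boundary terms, and integrating along $C_u$ rather than along $\{r = \text{const}\}$ for the pointwise bound) are all legitimate and equivalent up to the uniform two-sided bounds on $r$, $1-\mu$, $\lambda$, $-\nu$ that both arguments rely on.

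There is, however, a genuine gap in your Step~1, and it is exactly where you yourself flag ``the main obstacle.'' You assert that Corollary~\ref{cor:DR-final} and Proposition~\ref{prop:bg-uv} furnish a \emph{strictly positive lower} bound on $1-\mu$ and $\lambda$ throughout $\calR$ for $u_1$ large. They do not. Proposition~\ref{prop:bg-uv} gives two-sided bounds for $1-\mu$ and $\lambda$ only in $\set{r\geq 10 r_{\EH}}$ (\eqref{eq:bg-uv:mu-large-r}, \eqref{eq:bg-uv:dvr-large-r}); in $\set{r\leq 30 r_{\EH}}$ it gives only $0 \leq 1-\mu \leq 1$ and $0 \leq \lambda \leq 2$ (\eqref{eq:bg-uv:mu-small-r}, \eqref{eq:bg-uv:dvr-small-r}), i.e.\ no lower bound at all -- these quantities do degenerate to $0$ on $\EH$. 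Since $\calR$ contains the sub-strip $r \in [(1+\de_r) r_{\EH}, 10 r_{\EH}]$, the cited results leave a genuine hole. The paper's own Step~1 explicitly notes that \eqref{eq:bg-uv:dvr-small-r} is insufficient and supplies a short separate argument: it uses the Reissner--Nordstr\"om algebraic identity $r_{\EH} = \varpi_{\EH}+\sqrt{\varpi_{\EH}^2-\e^2}$ (so that $1-\mu$ vanishes to first order at $r=r_{\EH}$ on the limiting solution) together with the mean value theorem to get $1-\mu \geq c(\Lmb,\varpi_{\EH},\e)\, \de_r - C\de_1^2$ for $r\in[(1+\de_r) r_\EH, \Lmb r_\EH]$, then uses \eqref{eq:bg-uv:m-bnd} to make $\de_1$ small for $v$ large, and finally passes to $\lambda = \kappa(1-\mu)$ via \eqref{eq:bg-uv:kpp-bnd}. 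Without this input you cannot justify $\lambda \geq c_{contra} > 0$, and without that, neither the bulk-to-flux conversion underpinning the Gr\"onwall step nor the estimate $v - v_*(u) \leq C\Lmb r_{\EH}$ used in your pointwise bound is available. (Your remark about subdividing $[1+\de_r,\Lmb]$ does not help here: the bottleneck is the single sub-interval abutting $r = (1+\de_r)r_{\EH}$, not the number of intervals.) Once this one geometric lemma is added, the rest of your argument is sound and is essentially the paper's proof.
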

\begin{proof}
The main idea is to carry out the corresponding the argument in the linear setting \cite{LO.instab}, which is simply a Gr\"onwall argument for the quantity on the LHS of \eqref{E.contra.prop}. The main new ingredient is that this now needs to be coupled with some estimates for the geometric quantities.

\pfstep{Step~1: Preliminary estimate for the geometric quantities} We first need a strictly positive lower bound on $\lambda$ in the region of interest. Note that this does \underline{not} follow from \eqref{eq:bg-uv:dvr-small-r} as such an estimate only holds away from the horizon. 

As long as $r\in [r_{\EH}, \Lambda r_{\EH}]$, since $r_{\EH}>\varpi_{\EH}$, we have
$$\f{d}{dr} \left(\f{2\varpi_{\EH}}{r}-\f{\e^2}{r^2}\right)=-\f{2\varpi_{\EH}}{r^2}+\f{2\e^2}{r^3}\leq -\f{2}{\Lambda^2 r_{\EH}^2}\left(\varpi_{\EH}-\f{\e^2}{\varpi_{\EH}}\right).$$
Hence, for $r\in [(1+{\de_r}) r_{\EH},\Lambda r_{\EH}]$, by the mean value theorem applied on the interval $[r_{\EH}, r]$,
$$\f{2\varpi_{\EH}}{r_{\EH}}-\f{2\varpi_{\EH}}{r}-\f{\e^2}{r^2_{\EH}}+\f{\e^2}{r^2}\geq \f{2}{\Lambda^2 r_{\EH}^2}\left(\varpi_{\EH}-\f{\e^2}{\varpi_{\EH}}\right) (r-r_{\EH})\geq \f{2}{\Lambda^2 r_{\EH}^2}\left(\varpi_{\EH}-\f{\e^2}{\varpi_{\EH}}\right) {\de_r} r_{\EH}.$$
On the other hand, given $\de_1>0$, we have by \eqref{eq:bg-uv:m-bnd} that for $v_1'$ sufficiently large
$$\left|\varpi(u,v)-\varpi_{\EH}\right|\leq C(\de_1)^2 r_{\EH}\quad\mbox{ in }\{v\geq v_1'\}\cap\{r\in [(1+{\de_r})r_{\EH},\Lambda r_{\EH}]\}.$$
Therefore, using \eqref{r.varpi.poly}, we have 
$$1-\mu=\f{2\varpi_{\EH}}{r_{\EH}}-\f{2\varpi}{r}-\f{\e^2}{r^2_{\EH}}+\f{\e^2}{r^2}\geq \f{2}{\Lambda^2 r_{\EH}^2}\left(\varpi_{\EH}-\f{\e^2}{\varpi_{\EH}}\right) {\de_r} r_{\EH} -\f{C(\de_1)^2 r_{\EH}}{r}.$$
By choosing $v_1'$ sufficiently large (depending on ${\de_r}$, $\varpi_{\EH}$, $\e$ and $\Lambda$), we have $\de_1$ sufficiently small so that 
$$1-\mu\geq \f{1}{\Lambda^2 r_{\EH}^2}\left(\varpi_{\EH}-\f{\e^2}{\varpi_{\EH}^2}\right) {\de_r} r_{\EH}\quad\mbox{ in }\{v\geq v_1'\}\cap\{r\in [(1+{\de_r})r_{\EH},\Lambda r_{\EH}].$$
In particular, by \eqref{eq:bg-uv:kpp-decay}, this implies the following lower bound on $\lambda$ 
\begin{equation}\label{E.contra.dvr.lower}
\lambda\geq c_{contra}\quad\mbox{ in }\{v\geq v_1'\}\cap\{r\in [(1+{\de_r})r_{\EH},\Lambda r_{\EH}]\}
\end{equation}
for some $c_{contra}>0$ depending on ${\de_r}$, $\varpi_{\EH}$, $\e$ and $\Lambda$. 

\textbf{We now choose $u_1$} such that Proposition~\ref{prop:bg-uv} applies in $\{u\geq u_1\}\cap\{r\in [(1+{\de_r})r_{\EH},\Lambda r_{\EH}]\}$ and such that
$$u_1\geq \max\{1, u_{(1+{\de_r})r_{\EH}}(v_1),\, u_{(1+{\de_r})r_{\EH}}(v_1')\},$$ 
where $u_{(1+{\de_r})r_{\EH}}(v)$ is defined by $r(u_{(1+{\de_r})r_{\EH}}(v), v)=(1+{\de_r})r_{\EH}$. In particular, \eqref{E.contra.dvr.lower} holds in $\{u\geq u_1\}\cap\{r\in [(1+{\de_r})r_{\EH},\Lambda r_{\EH}]\}$.

\pfstep{Step~2: Proof of \eqref{E.contra.prop}} 
For every $\Lambda'\in [(1+{\de_r}), \Lambda]$ and $v_\infty$ sufficiently large, define
$$E_{contra}(\Lambda'; v_\infty):=\int_{u_1}^{u_{\Lambda' r_{\EH}}(v_\infty)} \left(v\restriction_{r=\Lambda' r_{\EH}}\right)^{\alp}\left((\rd_u\phi\restriction_{r=\Lambda' r_{\EH}})^2+(\rd_v\phi\restriction_{r=\Lambda' r_{\EH}} )^2\right)(u)\, \ud u,$$
where $r(u_{\Lambda' r_{\EH}}(v_\infty), v_{\infty})=\Lambda' r_{\EH}$.

Consider the following consequences of the wave equation:
$$\f 12 \rd_u\left(v^{\alp}\left(\rd_v\phi\right)^2\right)=-\f{v^{\alp}\lambda\rd_u\phi\rd_v\phi}{r}-\f{v^{\alp}\nu(\rd_v\phi)^2}{r}$$
and
$$\f 12 \rd_v\left(v^{\alp}\left(\rd_u\phi\right)^2\right)=\f{\alp}{2}v^{\alp-1}\left(\rd_u\phi\right)^2 -\f{v^{\alp}\nu\rd_u\phi\rd_v\phi}{r}-\f{v^{\alp}\lambda(\rd_u\phi)^2}{r}.$$
Summing the above identities and integrating this with respect to $\ud u\,\ud v$ in the region $\{(u,v): (1+{\de_r})\leq \f{r(u,v)}{r_{\EH}}\leq \Lambda',\, u \geq u_1,\, v\leq v_\infty\}$ for all $\Lambda'\in [(1+{\de_r}),\Lambda]$, we obtain by \eqref{eq:bg-uv:dur-large-r}, \eqref{eq:bg-uv:dvr-large-r}, \eqref{eq:bg-uv:dvr-small-r} and \eqref{eq:bg-uv:dur-small-r} that for some $C>0$ depending on $\de_r$, $\Lambda$, $\e$ and $\varpi_{\EH}$, 
% The spacetime integrals also have good signs, but just to be robust!
\begin{equation}\label{E.contra.main}
E_{contra}(\Lambda';v_\infty) \leq CE_{contra}(1+\de_r;v_\infty)+C\iint_{\{(u,v): (1+{\de_r})\leq \f{r(u,v)}{r_{\EH}}\leq \Lambda',\, u\geq u_1,\, v\leq v_\infty\}} v^{\alp} (\abs{\rd_{v} \phi}^{2} + \abs{\rd_{u} \phi}^{2}) \, \ud u\, \ud v.
\end{equation}
Notice that \eqref{E.contra.main} holds because the boundary terms on $\{u=u_1\}$ and $\{v=v_\infty\}$ have favorable signs. 
Note also that direct integration of $\rd_{u} \left( v^{\alp} (\rd_{v} \phi)^{2}\right)$ gives a boundary integral on $\set{r = \Lmb' r_{\EH}}$ with line element $\ud v$, but we can easily switch to $\ud u = - \frac{\gmm}{\kpp} \ud v$ as in Step~3 of Proposition~\ref{red.shift.prop}.

It is convenient for the purpose of this proof to consider the $(r,u)$ coordinate system since we need an estimate that is integrated in $u$ for each curve with fixed $r$. The volume element $\ud u \ud v$ transforms as follows:
\begin{equation}\label{contra.vol.form}
\ud u\, \ud v= \f{1}{\lambda} \, \ud r\, \ud u.
\end{equation}
Using \eqref{E.contra.dvr.lower} and \eqref{contra.vol.form} we can change to the $(r,u)$ coordinate system in \eqref{E.contra.main} so that for some constant $C>0$ (depending on $c_{contra}$) we have
\begin{equation*}
E_{contra}(\Lambda';v_\infty) \leq CE_{contra}(1+{\de_r};v_\infty)+C\int_{(1+{\de_r})r_{\EH}}^{\Lambda' r_{\EH}} E_{contra}\left( \f{r}{r_{\EH}};v_\infty \right) \, \ud r.
\end{equation*}
By \eqref{red.shift.prop.dphi} in Proposition~\ref{red.shift.prop}, $E_{contra}(1+{\de_r};v_\infty)$ is uniformly bounded for all $v_\infty$. Gr\"onwall's inequality then implies that $E_{contra}(\Lambda';v_\infty)$ is uniformly bounded for all $v_\infty$. Therefore, \eqref{E.contra.prop} follows.

\pfstep{Step~3: Proof of \eqref{pointwise.contra}} Finally, \eqref{pointwise.contra} can be obtained using \eqref{E.contra.prop} and the argument as in Step~4 of Proposition~\ref{red.shift.prop}. \qedhere
\end{proof}

\subsection{Uniform decay bounds up to $r=\infty$}\label{sec.contra.unif}

In this subsection, we need to prove another upper bound for $\phi$, as stated below in Proposition~\ref{prop.contra.unif}. Notice that along any constant-$r$ curve, the decay rate in Proposition~\ref{prop.contra.unif} is worse than that in Proposition~\ref{contra.large.r}. Nevertheless, the key point of Proposition~\ref{prop.contra.unif} is that the constant in the estimate is independent of $r$ as $r\to \infty$. 

\textbf{In this subsection, we will use the $(u,V)$ coordinate system,} where $u$ is the future-normalized coordinate and $V$ is the initial-data normalized coordinate (see Section~\ref{subsec:bg-coords}), translated so that $u_{\calX_0}=V_{\calX_0}=1$ (see Section~\ref{subsec:DR-full}). The advantage of using the $(u,V)$ coordinate system is that we can apply the estimates in Corollary~\ref{cor:DR-large-r} in the whole of region $\mathcal N$, whereas the estimates in $(u,v)$ coordinate system are only applicable in an asymptotic region near timelike infinity. In order to prove Proposition~\ref{prop.contra.unif}, we need to compare the $u$, $V$ and $r(u,V)$ values in a large-$r$ region. This is given by the following lemma:
\begin{lemma}\label{lem.uv-est}
There exists $R_{\NI,0}$ sufficiently large such that if $r(u,V)\geq R_{\NI,0}$, the following upper and lower bounds for $r$ hold in the $(u,V)$ coordinate system:
\begin{align*}
r(u,V)\leq & R_{\NI,0}+2(V - \Vb_{R_{\NI,0}})-\f 12(u-\ub_{R_{\NI,0}})_++2(u-\ub_{R_{\NI,0}})_-, \\
r(u, V)\geq &R_{\NI,0}+\f 18 (V - \Vb_{R_{\NI,0}})-2(u-\ub_{R_{\NI,0}})_++\f 12(u-\ub_{R_{\NI,0}})_-,
\end{align*}
where $\ub_{R_{\NI,0}}<1$ and $\Vb_{R_{\NI,0}} \geq 1$ are, respectively, the $u$ and $V$ values of the intersection of $\{r=R_{\NI,0}\}$ and the initial hypersurface $\Sigma_0$. Here, we used the notation $(\cdot)_{\pm} = \max \set{\pm (\cdot), 0}$.
\end{lemma}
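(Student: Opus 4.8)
\textbf{Proof proposal for Lemma~\ref{lem.uv-est}.}

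The plan is to integrate the sign-definite first-order quantities $\rd_V r > 0$ and $\rd_U r < 0$ (equivalently $\rd_u r < 0$) along incoming and outgoing null rays, using the two-sided bounds on these derivatives supplied by Corollary~\ref{cor:DR-large-r}. First I would fix $R_{\NI, 0}$ large enough (depending on $\calX_{0}$ and $\eta_{\NI}$) so that the curve $\set{r = R_{\NI, 0}}$ lies in the region $\calN$ where all of \eqref{eq:DR:dvr-large-r}, \eqref{eq:DR:dur-large-r} and \eqref{eq:DR:1-mu-large-r} apply; in particular $R_{\NI, 0} \geq \eta_{\NI}^{-1} \max\set{\varpi_{i}, \abs{\e}}$ will suffice, since then $\set{r \geq R_{\NI, 0}}$ is contained in $\calN$ by the monotonicity of $r$ along both null directions. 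The point $(\ub_{R_{\NI, 0}}, \Vb_{R_{\NI, 0}})$ is the unique intersection of $\set{r = R_{\NI, 0}}$ with $\Sgm_{0}$, which exists and is unique because $r$ is strictly monotone along $\Sgm_{0}$ near each end (this follows from \eqref{eq:adm-id-adm} together with Lemma~\ref{lem:cauchy-to-char}), and $\ub_{R_{\NI, 0}} < 1 < \Vb_{R_{\NI, 0}}$ after the coordinate translation normalizing $u_{\calX_{0}} = V_{\calX_{0}} = 1$.

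Next I would decompose the path from a reference point on $\Sgm_{0}$ (or on the relevant past null boundary) to a given $(u, V)$ with $r(u, V) \geq R_{\NI, 0}$ into an incoming leg (constant $u$, varying $V$) and an outgoing leg (constant $V$, varying $u$). Along the incoming leg, $r$ increases, and \eqref{eq:DR:dvr-large-r} combined with the normalization $-\rd_u r = 1$ on $\NI$ (or more directly \eqref{eq:DR:dur-large-r}) gives $\frac{1}{4} \leq \rd_{V} r \leq 4$ — wait, more carefully: one uses \eqref{eq:DR:dvr-large-r} to compare $\rd_V r(u, V)$ with its value on $\Sgm_{0}$, and the asymptotic-flatness condition \eqref{eq:adm-id-af} (via Lemma~\ref{lem:cauchy-to-char}) gives $\rd_V r \restriction_{\Sgm_{0}} \to \frac{1}{2}$ towards the end, so in fact $\frac{1}{8} \leq \rd_V r \leq 2$ in the region of interest for $R_{\NI, 0}$ large; along the outgoing leg one has $\frac{1}{2} \leq -\rd_u r \leq 2$ directly from \eqref{eq:DR:dur-large-r}. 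Integrating these bounds, being careful to track whether $u \geq \ub_{R_{\NI, 0}}$ (so that traversing in $u$ \emph{decreases} $r$, contributing the $-\frac{1}{2}(u - \ub_{R_{\NI, 0}})_{+}$ and $-2(u-\ub_{R_{\NI,0}})_+$ terms) or $u \leq \ub_{R_{\NI, 0}}$ (so that $r$ increases, contributing $+2(u - \ub_{R_{\NI, 0}})_{-}$ and $+\frac{1}{2}(u - \ub_{R_{\NI,0}})_-$), and likewise that $V \geq \Vb_{R_{\NI, 0}}$ always (since $r(u, V) \geq R_{\NI, 0}$ forces $V$ past the $\Sgm_{0}$-value along the incoming ray, by monotonicity), yields exactly the claimed two chains of inequalities, with the constants $2, \frac{1}{2}, \frac{1}{8}$ matching the integrated slopes.

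The main technical care — rather than a deep obstacle — is bookkeeping the geometry of which reference point to use and making the path decomposition legitimate: one must check that the chosen incoming and outgoing legs stay inside $\calN$ (so the Corollary applies along the whole path), which follows from the fact that $\calN$ is a ``future-and-rightward'' region closed under moving along incoming null rays towards $\Sgm_{0}$, as noted in the proof of Corollary~\ref{cor:DR-large-r}. A secondary point is that the constants in \eqref{eq:DR:dvr-large-r} are stated relative to the value of $\rd_V r$ on $\Sgm_{0}$ at the \emph{same} $V$, so one needs the uniform near-$\frac{1}{2}$ behaviour of $\rd_V r \restriction_{\Sgm_{0}}$ near the end from \eqref{eq:adm-id-af}; enlarging $R_{\NI, 0}$ absorbs the resulting error into the stated slopes. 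No new estimates beyond Section~\ref{sec:bg} are needed, and the argument is a direct integration, so I expect it to be short once the reference-point conventions are fixed.
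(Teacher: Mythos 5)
Your proposal is correct and follows essentially the same route as the paper: fix $R_{\NI,0}$ large so that $\{r \geq R_{\NI,0}\} \subseteq \calN$, combine \eqref{eq:DR:dvr-large-r} with the asymptotic-flatness bound $\rd_V r\restriction_{\Sgm_0} = \tfrac12 + O(V^{-1})$ from \eqref{eq:adm-id-af} to obtain $\tfrac18 \leq \rd_V r \leq 2$, use \eqref{eq:DR:dur-large-r} directly for $\rd_u r$, and integrate along the two-leg path through $(\ub_{R_{\NI,0}}, \Vb_{R_{\NI,0}})$, splitting cases according to the sign of $u - \ub_{R_{\NI,0}}$. (Only a terminological slip: you label the constant-$u$/varying-$V$ leg ``incoming'' and the constant-$V$/varying-$u$ leg ``outgoing,'' which inverts the usual convention, but the computation is the right one.)
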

\begin{proof}
Without loss of generality, we can assume that $\{(u,V):r(u,V)\geq R_{\NI,0}\}\subset \mathcal N$, where $\calN$ is as in Corollary~\ref{cor:DR-large-r}.

By \eqref{eq:adm-id-af} and \eqref{eq:coord-UV}, we have on $\Sigma_0$ that\footnote{We use again the notation, introduced earlier that $\ub(V)$ is chosen such that $(\ub(V),V)\in \Sigma_0$.}
\begin{equation}\label{rdVr.initial}
\f 12-CV^{-1}\leq (\rd_V r)(\ub(V),V)\leq \f 12 +CV^{-1}
\end{equation}
for some solution-dependent $C>0$ when $V$ is sufficiently large. As a consequence, using \eqref{eq:DR:dvr-large-r}, we have
\begin{equation}\label{rdVr.large.r}
\f 18\leq (\rd_V r)(u,V)\leq 2\quad\mbox{in } \{(u,V):r(u,V)\geq R_{\NI,0}\}.
\end{equation}

Let $(\ub_{R_{\NI,0}}, \Vb_{R_{\NI,0}})$ be the point in the intersection of $\{r=R_{\NI,0}\}$ and the initial hypersurface $\Sigma_0$. First integrating along the curve $\{u=\ub_{R_{\NI,0}}\}$ from the initial hypersurface, and then integrating along a constant-$V$ curve and using \eqref{rdVr.large.r} and \eqref{eq:DR:dur-large-r}, we obtain the desired upper and lower bounds for $r(u, V)$.
Finally, we note that by taking $R_{\NI,0}$ larger is necessary, we can assume that $\ub_{R_{\NI,0}}<1$ and $\Vb_{R_{\NI,0}} \geq 1$. \qedhere
\end{proof}

We now turn to the main estimate of this subsection. Let us emphasize again that while $B_{contra}'$ in Proposition~\ref{contra.large.r} depends on $\Lambda$, the constant $C_{contra}$ in the following lemma is uniform in the whole region.
\begin{proposition}\label{prop.contra.unif}
Suppose the contradiction assumption \eqref{main.contradiction} holds.
There exists a constant $C_{contra}>0$ depending on $A_{contra}$ and the solution such that 
$$|\phi|(u,V)\leq C_{contra} u^{-\min\{\omg_0,3\}},$$
whenever $r(u,V)\geq 2 r_{\EH}$ and $u\geq 1$.
\end{proposition}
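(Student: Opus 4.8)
The plan is to bootstrap the uniform bound $|\phi| \lesssim u^{-\min\{\omg_0,3\}}$ on $\{r \geq 2r_{\EH}\}$ from two inputs: the $r$-weighted control of $\partial_V\psi$ and $\partial_V^2\psi$ from Corollary~\ref{cor:DR-large-r} (valid on all of $\calN$ in the $(u,V)$ coordinates), and the decay bound \eqref{pointwise.contra} along the timelike curve $\{r = 2r_{\EH}\}$ obtained in Proposition~\ref{contra.large.r} (with $\Lmb = 2$, say, so that $|\phi| \leq B'_{contra} v^{-\frac{\alp}{2}+\frac12}$ there, and since $\alp > \min\{2\omg_0+1,7\}$ this is at least $v^{-\min\{\omg_0,3\}}$, and one then converts $v$ to $u$ on that curve using Lemma~\ref{lem.uv-est} or Corollary~\ref{cor:DR-final}, noting $u \sim v$ on a bounded-$r$ curve). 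The idea is classical: $\psi = r\phi$ satisfies $\partial_u\partial_V \psi = \frac{\partial_u\partial_V r}{r}\psi$, so along each incoming ray we can integrate $\partial_V\psi$ out to null infinity and use the $r^{-\omg_0}$-weighted bound on $\frac{1}{\rd_V r}\rd_V\psi$ from \eqref{eq:DR:dvrphi-large-r} to trade $r$-decay for the value of $\Phi(u) = \lim_{V\to\infty}\psi(u,V)$; but we do not yet control $\Phi(u)$ directly, so instead I would run a Grönwall/continuity argument in $u$.

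Concretely, first I would use Lemma~\ref{lem.uv-est} to establish that in the region $\{r(u,V) \geq 2r_{\EH}\}\cap\{u\geq 1\}$ one has $u \leq C V$ (and more precisely $\frac1C(u + r) \leq V \leq C(u+r)$ up to additive constants), so that the weights $V^{-\omg'}$, $u^{-\omg'}$ and $r^{-\omg'}$ are comparable in the appropriate regimes. Then, starting from the bound $|\psi|(u,V) \leq C_{contra} u^{-\min\{\omg_0,3\}+1}\cdot(2r_{\EH})$ on $\{r=2r_{\EH}\}$ coming from \eqref{pointwise.contra}, integrate the equation $\rd_V\left(\frac{1}{\rd_V r}\rd_V\psi\right) = -\frac{\rd_u\rd_V r}{\rd_V r}\frac{1}{\rd_V r}\rd_V\psi + \frac{\rd_u\rd_V r}{r\rd_V r}\psi$ in $V$ from the $\{r=2r_{\EH}\}$ curve outward, using $\int \frac{|\rd_u\rd_V r|}{\rd_V r}\,\ud u \leq C$ (from \eqref{eq:DR-large-r:pf-me}, \eqref{eq:DR-large-r:pf-mu}, which bound $\frac{2(\varpi - \e^2/r)}{r^2}\frac{-\rd_u r}{1-\mu}$ integrated in $u$) and Grönwall to get $\left|\frac{1}{\rd_V r}\rd_V\psi\right|(u,V) \lesssim r^{-2} u^{-\min\{\omg_0,3\}+1}$ on the relevant region — here I would want to use the already-established strong bound \eqref{eq:DR:dvrphi-large-r} to handle the large-$r$ tail and only use the contradiction-improved bound near $r = 2r_{\EH}$. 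Integrating $\rd_V\psi$ from $\{r=2r_{\EH}\}$ then yields $|\psi|(u,V) \leq C_{contra}\, u^{-\min\{\omg_0,3\}+1}$ uniformly, hence $|\phi| = |\psi|/r \leq C_{contra}\, u^{-\min\{\omg_0,3\}+1}\,r^{-1} \leq C_{contra}\, u^{-\min\{\omg_0,3\}}$ on $\{r \geq 2r_{\EH}\}$ (absorbing the factor $r^{-1} \leq u^{-1}\cdot C$ there using $r \gtrsim u$ is false in general — rather, $r \geq 2r_{\EH}$ gives $r^{-1}\leq C$, and one uses $u^{-\min\{\omg_0,3\}+1}r^{-1}$ directly, noting $r^{-1}$ is harmless and one can even keep the extra $\min\{u,r\}^{-1}$ factor).

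The subtle point, and the one I expect to be the main obstacle, is bookkeeping the interplay of the three weights $u$, $V$, $r$ so that integrating $\rd_V\psi$ outward (where $V$ and $r$ grow together) does not lose the $u$-decay, and dually that along the $\{r=2r_{\EH}\}$ curve the $v$-decay from \eqref{pointwise.contra} correctly becomes $u$-decay — this requires Lemma~\ref{lem.uv-est} and the comparison $u \sim v$ on bounded-$r$ curves, together with the observation that $\min\{\omg_0,3\}$ is exactly the threshold at which the $r^{-\omg_0}$-weighted flux bound \eqref{eq:DR:dvrphi-large-r} (combined with $\omg_0 > \min\{\omg_0,3\}$ when $\omg_0 > 3$) is summable. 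A second technical care is that near the event horizon (between $r_{\EH}$ and $2r_{\EH}$) the bound \eqref{pointwise.contra} is only stated for $r \in [(1+\de_r)r_{\EH}, \Lmb r_{\EH}]$, so one should simply take $\Lmb = 2$ in Proposition~\ref{contra.large.r} to get the needed data on $\{r = 2r_{\EH}\}$. Once these comparisons are set up, the remaining estimates are routine Grönwall arguments entirely parallel to the proofs of \eqref{eq:DR:dvrphi-large-r} and \eqref{eq:DR:durphi-large-r} in Corollary~\ref{cor:DR-large-r}, with the improved input bound propagating the $u^{-\min\{\omg_0,3\}}$ rate.
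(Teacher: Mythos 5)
The overall shape of your argument is correct and matches the paper's approach at a high level: integrate the transport equation for $\frac{1}{\rd_V r}\rd_V\psi$ in the $u$-direction, close by some form of self-improvement, and finally convert back to $\phi$ using the bounded-$r$ factor $r^{-1}$. However, there is a genuine gap concerning how the bootstrap closes, and your proposed choice $\Lmb=2$ would sink the argument.

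The forcing term $\frac{\rd_u\rd_V r}{r\rd_V r}\psi$ in the equation for $\rd_u\left(\frac{1}{\rd_V r}\rd_V\psi\right)$ involves $\psi$ on the \emph{entire} constant-$V$ ray, not just on $\{r=2r_{\EH}\}$, so you cannot simply ``use the contradiction-improved bound near $r=2r_{\EH}$'' as boundary data and propagate by Gr\"onwall: the Gr\"onwall needs an a priori bound on $\psi$ in the range of integration, which is precisely what you are trying to prove. (Your appeal to \eqref{eq:DR:dvrphi-large-r} does not help either, since that bound gives only $r$-decay with \emph{no} $u$-decay, and the whole point of this proposition is to produce a $u$-decay rate slightly sharper than Price's law.) What the paper actually does is first relate $|\phi|(u,V)$ to $\sup_{V'}\abs{\frac{\rd_V(r\phi)}{\rd_V r}}(u,V')$ (see \eqref{contra.unif.phi.est}), substitute this into the integral $II$ arising from the wave equation, and then \emph{absorb} the resulting self-referential term into the left side. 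That absorption step comes with a coefficient of the form $4\cdot(512)^{\min\{\omg_0,3\}}\varpi_i\,(\Lmb\, r_{\EH})^{-1}$ in \eqref{contra.upper.II.est.2}, which must be $\le\tfrac12$. Since $r_{\EH}\le 2\varpi_{\EH}\le 2\varpi_i$, this can never hold with $\Lmb=2$; one needs $\Lmb$ enormously large. The proof therefore first establishes the bound on $\{r\ge\Lmb r_{\EH}\}$ for such a large $\Lmb$ (fixed at the end of the absorption), and then the compact range $\{2r_{\EH}\le r\le\Lmb r_{\EH}\}$ is covered directly by Proposition~\ref{contra.large.r} applied with that large $\Lmb$ --- you have the logic reversed.

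Two further missing ingredients: (i) the $u$-integral of the forcing must be split at $u_*(V)$ (defined so that $r(u_*(V),V)\sim V/16$): for $u'\le u_*(V)$, the $r\gtrsim V$ relation cancels the $V^{\min\{\omg_0,3\}}$ weight against $r^{-3}$, so plain Price's law $|\phi|\lesssim (u')^{-\omg'+1}r^{-1}$ with $\omg'>2$ gives a convergent integral, while for $u'\ge u_*(V)$ one needs the coupled estimate and the absorption; your proposal handles neither split explicitly. (ii) Your description ``integrate the equation $\rd_V(\ldots)=\ldots$ in $V$ \ldots using $\int\ldots\ud u$ and Gr\"onwall'' mixes up the two directions of integration; the transport equation has $\rd_u$ on the left and must be integrated in $u$, after which the separate and much easier $V$-integration of $\rd_V\psi$ (which is what \eqref{contra.unif.phi.est} encodes) recovers $\phi$. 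The final heuristic in your last paragraph, that the remaining steps are ``routine Gr\"onwall arguments entirely parallel to Corollary~\ref{cor:DR-large-r},'' is therefore not quite right: the Corollary's proof is a genuine Gr\"onwall estimate with an \emph{a priori controlled} forcing, whereas here the forcing is self-referential and the closing device is absorption with a carefully chosen large $\Lmb$, not Gr\"onwall.
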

\begin{proof}
By Proposition~\ref{contra.large.r}, the conclusion of the proposition holds for $2 r_{\EH}\leq r(u,V)\leq \Lambda r_{\EH}$ for any fixed (but otherwise arbitrary) $\Lambda>1$. It therefore suffices to prove the proposition for $\{(u,V):r(u,V)\geq \Lambda r_{\EH}\}$ for some large $\Lambda>1$ to be chosen later. We can assume in particular that \textbf{$\Lambda$ is chosen large enough so that Lemma~\ref{lem.uv-est} can be applied} when $r(u,V)\geq R_{\NI,0}$.

It will be convenient in the proof to also restrict to 
\begin{equation}\label{contra.unif.large.V}
V - \Vb_{R_{\NI,0}}\geq 128 R_{\NI,0}, \quad V \geq 2 \Vb_{R_{\NI,0}}
\end{equation}
which can be guaranteed by taking $\Lambda$ large. \textbf{We will make this assumption for the rest of the proof.}

To obtain the desired estimate, we will simultaneously bound $\phi$ and $\rd_V(r\phi)$. First, notice that, for $V_{\Lambda r_{\EH}}(u)$ being the unique value of $V$ such that $r(u,V_{\Lambda r_{\EH}}(u))=\Lambda r_{\EH}$, we have
\begin{equation}\label{contra.unif.phi.est}
\begin{split}
|\phi|(u,V)\leq &|\phi|(u, V_{\Lambda r_{\EH}}(u))+\f{1}{r(u,V)}\int_{V_{\Lambda r_{\EH}}(u)}^V \f{|\rd_V(r\phi)|}{\rd_V r}\rd_V r (u,V') \ud V'\\
\leq &|\phi|(u, V_{\Lambda r_{\EH}}(u))+\f{r(u,V)-\Lambda r_{\EH}}{r(u,V)} \sup_{V'\in [V_{\Lambda r_{\EH}}(u),V]}\f{|\rd_V(r\phi)|}{\rd_V r}(u,V')\\
\leq & C_{\Lambda} B'_{contra}(\Lambda) u^{-\f{\alp}{2}+\f 12}+\sup_{V'\in [V_{\Lambda r_{\EH}}(u),V]}\f{|\rd_V(r\phi)|}{\rd_V r}(u,V'),
\end{split}
\end{equation}
where we have used \eqref{pointwise.contra} in the last line and the fact that $u$ and $v$ are comparable up to a constant depending on $\Lambda$. (This follows from $Tr=\left(\f{1}{\kappa}\rd_v-\f{1}{\gamma}\rd_u\right)r=0$, \eqref{eq:bg-uv:gmm-decay} and \eqref{eq:bg-uv:kpp-decay}). We used the notation $C_\Lambda$, $B'_{contra}(\Lambda)$ to emphasize the dependence of the constants on $\Lambda$. 

On the other hand, consider the wave equation
$$\rd_u\left(\f{\rd_V(r\phi)}{\rd_V r}\right) = -\f{\rd_V(r\phi)}{(\rd_V r)^2} \frac{2(\varpi - \frac{\e^{2}}{r})}{r^{2}} \frac{\rd_{u} r \rd_{V} r}{1-\mu} + \frac{2(\varpi - \frac{\e^{2}}{r})}{r^{2}} \frac{\rd_{u} r}{1-\mu}\phi.$$
Integrating along a constant-$V$ curve from $(1,V)$ to $(u,V)$, we obtain
\begin{equation}\label{rdvrphi.diff.est.0}
\begin{split}
&V^{\min\{\omg_0,3\}}\left(\f{\rd_V(r\phi)}{\rd_V r}\right)(u,V)
-\underbrace{V^{\min\{\omg_0, 3\}}\left(\f{\rd_V(r\phi)}{\rd_V r}\right)(1,V) e^{-\int_{1}^u \frac{2(\varpi - \frac{\e^{2}}{r})}{r^{2}} \frac{\rd_{u} r }{1-\mu}(u',V) \, \ud u'}}_{:=I(u,V)}\\
= & \underbrace{e^{-\int_{1}^u \frac{2(\varpi - \frac{\e^{2}}{r})}{r^{2}} \frac{\rd_{u} r }{1-\mu}(u',V) \, \ud u'}\int_{1}^u e^{\int_{1}^{u'} \frac{2(\varpi - \frac{\e^{2}}{r})}{r^{2}} \frac{\rd_{u} r }{1-\mu}(u'',V) \, \ud u''} V^{\min\{\omg_0,3\}}\frac{2(\varpi - \frac{\e^{2}}{r})}{r^{2}} \frac{(\rd_{u} r)}{1-\mu}\phi(u',V) \, \ud u'}_{=:II(u,V)}.
\end{split}
\end{equation}
For $\Lambda$ sufficiently large, the integral in the exponential factors appearing in \eqref{rdvrphi.diff.est.0} is bounded above as follows:
\begin{equation}\label{exp.factor}
\begin{split}
&\int_{1}^u \frac{2(\varpi - \frac{\e^{2}}{r})}{r^{2}} \frac{|\rd_{u} r| }{1-\mu}(u',v) \, \ud u'
\leq  \int_{r(u,V)}^{\infty} \frac{2(\varpi_{i} - \frac{\e^{2}}{r'})}{(r')^{2}} \frac{1 }{(1-\f{2\varpi_{i}}{r'}+\f{{\bf e}^2}{r'^2})} \, dr'\leq \f{4\varpi_i}{\Lambda r_{\EH}}\leq \log \f{4}{3}.
\end{split}
\end{equation}
This implies that
\begin{equation}\label{contra.upper.I.est}
\begin{split}
|I(u,V)|\leq \f 43 V^{\min\{\omg_0, 3\}}\left|\f{\rd_V(r\phi)}{\rd_V r}\right|(1,V)\leq \f 83 \f{V^{\min\{\omg_0, 3\}}}{r^{\min\set{\omg_0, 3}}(1,V)} B_{\eta_{\NI}},
\end{split}
\end{equation}
by \eqref{eq:DR:1-mu-large-r} and \eqref{eq:DR:dvrphi-large-r} (recall that $\omg_{0}$ in Section~\ref{sec:bg} corresponds to $\min\set{\omg_{0}, 3}$ here). 

For $II(u,V)$, first note that we can choose $\Lambda$ sufficiently large so that elementary estimates together with \eqref{exp.factor} imply that 
\begin{equation}\label{contra.upper.II.est}
|II(u,V)|\leq 4 \int_{1}^u V^{\min\{\omg_0,3\}}\frac{\varpi_i}{r^{2}} (-\rd_{u} r)|\phi| (u',V) \, \ud u'.
\end{equation}
Let $u_*(V)$ be defined as the unique $u$ value such that 
\begin{equation}\label{u*.def}
r(u_*(V),V)-R_{\NI,0}=\frac {1}{16} (V - \Vb_{R_{\NI,0}}).
\end{equation}
Divide the integral on the RHS of \eqref{contra.upper.II.est} into\footnote{We make this division only in the case $u> u_*(v)$. If $u\leq u_*(v)$, then obviously $II(u,V)$ can be bounded by the first term alone.} $\int_{1}^{u_*(V)}+\int_{u_*(V)}^u$. By \eqref{u*.def} and the monotonicity of $r$, $r(u,V)\geq \f {1}{16}(V - \Vb_{R_{\NI,0}}) \geq \f V {32}$ for $u\leq u_*(V)$. Hence, using \eqref{eq:DR-final:phi}, for $\omg'\in (2,\min\{\omg_0,3\})$, we have
\begin{equation}\label{contra.upper.II.est.1}
\begin{split}
4 \int_{1}^{u_*(V)} V^{\min\{\omg_0,3\}}\frac{\varpi_i}{r^{2}} (-\rd_{u} r)|\phi| (u',V) \, \ud u' \leq &4 B_{\omg'}\int_{1}^{u_*(V)} V^{\min\{\omg_0,3\}}\left(\frac{\varpi_i}{r^{3}}(-\rd_{u} r)(u',V)\right) (u')^{-\omg'+1} \, \ud u'\\
\leq & 8\cdot (32)^3 B_{\omg'} \varpi_i \int_{1}^{u_*(V)} (u')^{-\omg'+1}\, \ud u' \leq \f{2^{18} B_{\omg'} \varpi_i }{\omg'-2}.
\end{split}
\end{equation}
To estimate the integral $\int_{u_*(V)}^u$, we will use the bound \eqref{contra.unif.phi.est} for $\phi$. For this, we need a bound for $V_{\Lambda r_{\EH}}(u_*(V))$. First, the definition \eqref{u*.def} of $u_*(V)$ and lower bound in Lemma~\ref{lem.uv-est} imply that $u_*(V)>\ub_{R_{\NI,0}}$. Then Lemma~\ref{lem.uv-est} and \eqref{contra.unif.large.V} give
\begin{equation}\label{u*.lower.bd}
\frac{V}{64} \leq \frac{1}{32} (V - \Vb_{R_{\NI,0}}) \leq (u_*(V)-\ub_{R_{\NI,0}}) \leq \f {31}8 (V - \Vb_{R_{\NI,0}}).
\end{equation}
Also using Lemma~\ref{lem.uv-est}, and the fact that $u_*(V)\geq \ub_{R_{\NI,0}}$, we obtain
$$\Lambda r_{\EH}=r(u_*(V),V_{\Lambda r_{\EH}}(u_*(V)))\leq R_{\NI,0}+2 (V_{\Lambda r_{\EH}}(u_*(V)) - \Vb_{R_{\NI,0}})-\f 12(u_*(V)-\ub_{R_{\NI,0}}).$$
This implies, using \eqref{u*.lower.bd} and \eqref{contra.unif.large.V}, that
$$V_{\Lambda r_{\EH}}(u_*(V))\geq \f 12(\Lambda r_{\EH}-R_{\NI,0})+\f 14(u_*(V)-\ub_{R_{\NI,0}}) + \Vb_{R_{\NI,0}} \geq -\f 12 R_{\NI,0}+\f 1{128} (V - \Vb_{R_{\NI,0}}) \geq \f{V}{512}.$$
This implies that if $u'\geq u_*(V)$ and $V'\geq V_{\Lambda r_{\EH}}(u')$, we have $V'\geq \f{V}{512}$. Hence, after plugging in \eqref{contra.unif.phi.est}, we obtain
\begin{equation}\label{contra.upper.II.est.2}
\begin{split}
&4 \int_{u_*(V)}^u   V^{\min\{\omg_0,3\}}\frac{\varpi_i}{r^{2}}(-\rd_{u} r)|\phi|(u',V) \, \ud u'\\
\leq & C_{\Lambda} B_{contra}'(\Lambda) V^{\min\{\omg_0,3\}}  \varpi_i \int_{u_*(V)}^u \f{(-\rd_u r)}{r^2} (u')^{-\f{\alp}{2}+\f 12}\,\ud u' \\
&+ 4\int_{u_*(V)}^u   V^{\min\{\omg_0,3\}}\frac{\varpi_i}{r^{2}}(-\rd_{u} r) \sup_{V'\in [V_{\Lambda r_{\EH}}(u'),V]}\f{|\rd_v(r\phi)|}{\rd_v r}(u',V') \, \ud u'\\
\leq & C_{\Lambda} B_{contra}'(\Lambda) V^{\min\{\omg_0,3\}} \left(u_*(V)\right)^{-\f{\alp}{2}+\f 12} \varpi_i (\Lambda r_{\EH})^{-1}\\
&+4\cdot (512)^{\min\{\omg_0,3\}} \varpi_i (\Lambda r_{\EH})^{-1}\left(\sup_{u'\in [u_*(V),u]}\sup_{V'\in [V_{\Lambda r_{\EH}}(u'),V]} \left((V')^{\min\{\omg_0,3\}}\f{|\rd_V(r\phi)|}{\rd_V r}(u',V')\right)\right).
\end{split}
\end{equation}
Choosing $\Lambda$ sufficiently large, we can arrange that $\Lambda r_{\EH}\geq R_{\NI,0}$ and moreover that
$$4\cdot (512)^{\min\{\omg_0,3\}} \varpi_i (\Lambda r_{\EH})^{-1}\leq \f 12.$$
\textbf{At this point we fix $\Lambda$.} 

Next, notice the following inclusion (within the maximal globally hyperbolic future development):
\begin{equation}\label{contra.inc}
\{(u',V'): u'\in [u_*(V),u],\,V'\in [V_{\Lambda r_{\EH}}(u'),V]\} \subset \{(u',V'): r(u', V')\geq \Lambda r_{\EH}\}.
\end{equation}
Combining the bounds \eqref{rdvrphi.diff.est.0}, \eqref{contra.upper.I.est}, \eqref{contra.upper.II.est}, \eqref{contra.upper.II.est.1} and \eqref{contra.upper.II.est.2} and using \eqref{contra.inc}, we have
\begin{equation}
\begin{split}
&\sup_{r(u,V)\geq \Lambda r_{\EH},\, u\geq 1}V^{\min\{\omg_0, 3\}}\left|\f{\rd_V(r\phi)}{\rd_V r}\right|(u,V)\\
\leq& \f 83 \f{V^{\min\{\omg_0, 3\}}}{r^{\omg_0}(1,V)} B_{\eta_{\NI}}+\f{2^{15} B_{\omg'} \varpi_i }{\omg'-2}+ C_{\Lambda} B_{contra}'(\Lambda) V^{\min\{\omg_0,3\}} \left(u_*(V)\right)^{-\f{\alp}{2}+\f 12} \varpi_i (\Lambda r_{\EH})^{-1}\\
&+ \f 12\left(\sup_{r(u',V')\geq \Lambda r_{\EH},\, u\geq 1} \left((V')^{\min\{\omg_0,3\}}\f{|\rd_V(r\phi)|}{\rd_V r}(u',V')\right)\right).
\end{split}
\end{equation}
The first three terms are uniformly bounded: For the first term, this is due to Lemma~\ref{lem.uv-est}; for the second term, boundedness is obvious; for the third term, we use 
$$\left(u_*(V)\right)^{-\f{\alp}{2}+\f 12}\leq \left(\max\{1,1-\ub_{R_{\NI,0}}\}\right)^{\f{\alp}{2}-\f 12}\left(u_*(V)-\ub_{R_{\NI,0}}\right)^{-\f{\alp}{2}+\f 12}$$ 
(which holds since $\ub_{R_{\NI,0}}<1$ and $u_*(V)\geq 1$), \eqref{u*.lower.bd} and $\alp>\min\{2\omg_0+1, 7\}$. Hence after subtracting the last term from both sides, we obtain
\begin{equation}\label{contra.unif.rdvrphi.final}
\sup_{r(u,V)\geq \Lambda r_{\EH},\, u\geq 1}V^{\min\{\omg_0, 3\}}\left|\f{\rd_V(r\phi)}{\rd_V r}\right|(u,V)\leq C_{contra}',
\end{equation}
for some $C_{contra}'>0$.

Returning to \eqref{contra.unif.phi.est} and multiplying by $u^{\min\{\omg_0,3\}}$, we have that for $u\geq 1$ and $r(u,V)\geq \Lambda r_{\EH}$,
\begin{equation}\label{contra.unif.phi.final}
\begin{split}
u^{\min\{\omg_0,3\}}|\phi|(u,V)\leq &C_{\Lambda} B_{contra}'(\Lambda) u^{-\f{\alp}{2}+\f{1}{2}+\min\{\omg_0,3\}}+u^{\min\{\omg_0,3\}}\sup_{V'\in [V_{\Lambda r_{\EH}}(u),V]}\f{|\rd_V(r\phi)|}{\rd_V r}(u,V')\\
\leq &C_{\Lambda}B_{contra}'(\Lambda) u^{-\f{\alp}{2}+\f{1}{2}+\min\{\omg_0,3\}}+\sup_{r(u,V')\geq \Lambda r_{\EH},\, u\geq 1} u^{\min\{\omg_0,3\}}\f{|\rd_V(r\phi)|}{\rd_V r}(u,V').
\end{split}
\end{equation}
To conclude, it remains to show that for $r(u,V')\geq \Lambda r_{\EH}$ and $u\geq 1$, we have $u\leq CV'$. Clearly, since $\rd_V r>0$, it suffices to show this for $r(u,V')=\Lambda r_{\EH}$. By Lemma~\ref{lem.uv-est}, as well as the fact that $\ub_{R_{\NI,0}}<1$ and $\Vb_{R_{\NI,0}} \geq 1$, if $r(u,V')=\Lambda r_{\EH}$, we have
$$u\leq \left(\max\{1, (1-\ub_{R_{\NI,0}})^{-1}\}\right)(u-u_{R_{\NI,0}}) \leq 4 (V' - \Vb_{R_{\NI,0}}) \leq 4 V'.$$
Plugging this into \eqref{contra.unif.phi.final} and using \eqref{contra.unif.rdvrphi.final}, we obtain
\begin{equation}\label{contra.unif.uv.final}
\begin{split}
u^{\min\{\omg_0,3\}}|\phi|(u,V)\leq C_{\Lambda}B_{contra}'(\Lambda) u^{-\f{\alp}{2}+\f{1}{2}+\min\{\omg_0,3\}}+4^{\min\{\omg_0,3\}} C_{contra}'.
\end{split}
\end{equation}
Finally, recalling that $\alp>\min\{2\omg_0+1, 7\}$, we obtain the conclusion. \qedhere
\end{proof}

\subsection{Lower bounds on a constant-$r$ curve} \label{subsec:blowup-const-r}

In this subsection, we turn to lower bounds for $\rd_v(r\phi)$. First, we derive a sufficiently good approximation formula for $\f{\rd_V(r\phi)}{\rd_V r}$ in the asymptotically flat region in Proposition~\ref{prop.rdvrphi.precise}. We then take appropriate limits to obtain the desired lower bound in Proposition~\ref{contra.lower}. The lower bound only holds when $\mathfrak L_{(\omg_0)\infty}\neq 0$. In the next subsection, we will conclude the proof of Theorem~\ref{thm:blowup} by showing that when $\mathfrak L_{(\omg_0)\infty}\neq 0$, this lower bound is inconsistent with the upper bound derived in Proposition~\ref{contra.large.r} and hence the contradiction assumption \eqref{main.contradiction} is false.

In Proposition~\ref{prop.rdvrphi.precise} below, \textbf{we will continue to use the $(u,V)$ coordinate system}. This will allow us to apply Corollary~\ref{cor:DR-large-r} and Lemma~\ref{lem.uv-est}.

\begin{proposition}[Precise estimate for $\rd_v(r\phi)$ near null infinity]\label{prop.rdvrphi.precise}
For every $u_{\calI^+}$ sufficiently large, define $\mathfrak L_{approx}(u_{\calI^+})(V)$ as a function of $V$ as follows: 
$$\mathfrak L_{approx}(u_{\calI^+})(V)=V^{\min\{\omg_0,3\}}\left(\f{\rd_V(r\phi)}{\rd_V r}\right)(u=\underline{u}(V),V)+ \int_{-u_{\calI^+}}^{u_{\calI^+}}  V^{\min\{\omg_0,3\}}\frac{2(\varpi - \frac{\e^{2}}{r})}{r^{2}} \frac{(\rd_{u} r)}{1-\mu}\phi(u',V) \, \ud u'.$$
Then, under the contradiction assumption \eqref{main.contradiction}, the following holds:

For every $\de_{\mathfrak L}>0$, there exists $u_{\calI^+,0}\geq 1$ sufficiently large and $R_{\calI^+}\geq R_{\NI,0}$ sufficiently large such that for every $ u_{\calI^+}\geq u_{\calI^+,0}$, there exists $V_{\calI^+}$ sufficiently large so that the following bound holds as long as $u\geq u_{\calI^+}$, $V \geq V_{\calI^+}$ and $r(u,V)\geq R_{\calI^+}$:
\begin{equation*}
\begin{split}
\left| V^{\min\{\omg_0,3\}}\left(\f{\rd_V(r\phi)}{\rd_V r}\right)(u,V)
- \mathfrak L_{approx}(u_{\calI^+})(V)\right|\leq  \de_{\mathfrak L}.
\end{split}
\end{equation*}
Moreover, all choices of constants may depend on $\omg_0$, $A_{contra}$ and the solution.
\end{proposition}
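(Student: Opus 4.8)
The plan is to integrate the evolution equation for $\frac{\rd_V(r\phi)}{\rd_V r}$ along a constant-$V$ curve and show that, after multiplying by the weight $V^{\min\{\omg_0,3\}}$, the difference between the value at $(u,V)$ and the "approximant" $\mathfrak L_{approx}(u_{\calI^+})(V)$ is controlled by three small error contributions, each of which can be made $\leq \de_{\mathfrak L}/3$ by taking $u_{\calI^+,0}$, $R_{\calI^+}$, and $V_{\calI^+}$ large in that order. Concretely, recall from Section~\ref{sec.contra.unif} (cf. \eqref{rdvrphi.diff.est.0}) that along $\uC_V$ the quantity $q := \frac{\rd_V(r\phi)}{\rd_V r}$ obeys
\[
\rd_u q = - q \,\frac{2(\varpi - \frac{\e^2}{r})}{r^2} \frac{\rd_u r \,\rd_V r}{1-\mu} + \frac{2(\varpi - \frac{\e^2}{r})}{r^2} \frac{\rd_u r}{1-\mu} \phi .
\]
Integrating from the initial hypersurface $\Sgm_0$ (where $u = \underline{u}(V)$) up to $u$, and using the integrating factor as in \eqref{rdvrphi.diff.est.0}, one writes $q(u,V)$ as the sum of (i) a boundary term $e^{-\int_{\underline{u}(V)}^u (\cdots)} q(\underline{u}(V),V)$, (ii) the "main" source integral $\int_{\underline{u}(V)}^u e^{-\int_{u'}^u(\cdots)} \frac{2(\varpi-\frac{\e^2}{r})}{r^2}\frac{\rd_u r}{1-\mu}\phi\, \ud u'$, and then split the source integral and the exponential factors according to the truncation at $u' = \pm u_{\calI^+}$.

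The three error terms to be controlled are then as follows. \emph{First}, the exponential factors $e^{\pm\int(\cdots)}$ differ from $1$ by $O(\varpi_i / R_{\calI^+})$ on $\{r \geq R_{\calI^+}\}$, exactly as in the estimate \eqref{exp.factor}; multiplying by the weight and by the $O(1)$ size of $q$ and of $\phi$ (both bounded uniformly by Corollary~\ref{cor:DR-large-r} and Proposition~\ref{prop.contra.unif} together with Lemma~\ref{lem.uv-est}, which gives $u \lesssim V$ in the large-$r$ region so the weight $V^{\min\{\omg_0,3\}}$ is absorbed), this is $\leq \de_{\mathfrak L}/3$ once $R_{\calI^+}$ is large. \emph{Second}, the portion of the source integral over $\{|u'| \geq u_{\calI^+}\} \cap \{u' \leq u\}$ together with the "outer" portion $\{u' \in (-\infty, \underline{u}(V)] \text{ where } r \leq R_{\NI,0}\}$ — wait, more precisely the tail $\int_{u_{\calI^+}}^{u}$ — is bounded using Proposition~\ref{prop.contra.unif}, which gives $|\phi| \leq C_{contra} u^{-\min\{\omg_0,3\}}$ on $\{r \geq 2r_{\EH}\}$, hence $V^{\min\{\omg_0,3\}} \int_{u_{\calI^+}}^{u} \frac{\varpi_i}{r^2}(-\rd_u r)|\phi|\, \ud u'$; using $r \gtrsim V$ on part of the range and $V \gtrsim u$ via Lemma~\ref{lem.uv-est} this is $\lesssim \int_{u_{\calI^+}}^\infty (u')^{-\min\{\omg_0,3\}+1}\, \ud u'$ up to $r$-decay, which, combined with the constant-$r$ bound \eqref{contra.unif.rdvrphi.final} near $r = R_{\calI^+}$ to handle the boundary of the truncation, is $\leq \de_{\mathfrak L}/3$ for $u_{\calI^+} \geq u_{\calI^+,0}$ large; here one needs $\alp > \min\{2\omg_0+1,7\}$ and $\min\{\omg_0,3\} > 2$ to make the tail integrals converge. \emph{Third}, on the remaining fixed range $u' \in [-u_{\calI^+}, u_{\calI^+}]$, the integrand $V^{\min\{\omg_0,3\}}\frac{2(\varpi-\frac{\e^2}{r})}{r^2}\frac{\rd_u r}{1-\mu}\phi(u',V)$ converges pointwise, as $V \to \infty$ with $u'$ fixed, to the value appearing in $\mathfrak L_{approx}$ (the exponential factors tend to $1$, and all geometric quantities and $\phi$ are continuous); dominated convergence on the compact $u'$-interval then gives that for $V \geq V_{\calI^+}$ the difference is $\leq \de_{\mathfrak L}/3$. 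Similarly the boundary term (i) converges to $V^{\min\{\omg_0,3\}}q(\underline{u}(V),V)$, which is exactly the first term of $\mathfrak L_{approx}$, up to the exponential-factor error already accounted for in the first bullet.

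The main obstacle, I expect, is the second error term: one must carefully track the interplay between the weight $V^{\min\{\omg_0,3\}}$ (which grows), the spatial decay $r^{-2}$ from $\frac{\varpi}{r^2}$, the temporal decay $|\phi| \lesssim u^{-\min\{\omg_0,3\}}$ from Proposition~\ref{prop.contra.unif}, and the relation $u \sim V$ (only valid once $r \gtrsim \Lambda r_{\EH}$, via Lemma~\ref{lem.uv-est}), making sure the truncation at $r = R_{\calI^+}$ is handled by \eqref{contra.unif.rdvrphi.final} rather than re-deriving it. One must also be careful that the constant $R_{\calI^+}$ can be chosen independently of $u_{\calI^+}$ (it only needs $R_{\calI^+} \geq R_{\NI,0}$ and $\Lambda r_{\EH}$ large as in Proposition~\ref{prop.contra.unif}), and that $V_{\calI^+}$ is chosen last, depending on both. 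The order of quantifiers — $\de_{\mathfrak L}$ given, then $u_{\calI^+,0}$ and $R_{\calI^+}$, then $u_{\calI^+} \geq u_{\calI^+,0}$ arbitrary, then $V_{\calI^+}$ — matches the structure of these three estimates precisely, so no circularity arises. All inputs (Corollary~\ref{cor:DR-large-r}, Proposition~\ref{prop.contra.unif}, Proposition~\ref{contra.large.r}, Lemma~\ref{lem.uv-est}, and \eqref{contra.unif.rdvrphi.final}) are already available.
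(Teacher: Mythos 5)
Your proposal follows the paper's overall strategy: integrate the transport equation for $\rd_V(r\phi)/\rd_V r$ along $\uC_V$ from $\Sgm_0$, write it via an integrating factor as in \eqref{rdvrphi.diff.est}, split the source integral at $u' = \pm u_{\calI^+}$, and estimate the boundary term, the exponential-factor error and the compact range $[-u_{\calI^+}, u_{\calI^+}]$ (your dominated-convergence sketch is fine; the paper instead bounds the deviation of the exponential factors explicitly via \eqref{bd.exp.factor}, gaining an extra $r^{-1}$). So far so good.

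However, there are two genuine gaps in the treatment of the tails. \emph{First}, you effectively drop the tail over $u' \in [\underline{u}(V), -u_{\calI^+}]$ (the paper's $II_1$). Proposition~\ref{prop.contra.unif} is a \emph{future-decay} statement, valid only for $u \geq 1$ and $r \geq 2 r_{\EH}$, and says nothing for $u' < 0$. What that tail actually needs is the \emph{initial-data} decay $|\psi| \leq B_{i^0}\,\underline{\ups}^{-\omg_0+1}$ from Proposition~\ref{prop:bg-large-r} (cf.\ \eqref{eq:bg-large-r:rphi}), which yields a bound of size $r^{-\omg_0+2}(-U_{\calI^+}, \underline{V}(-U_{\calI^+}))$ that becomes small when $u_{\calI^+,0}$ is large; this requires $\omg_0 > 2$ but not Proposition~\ref{prop.contra.unif}. \emph{Second}, your bound for the tail $\int_{u_{\calI^+}}^{u}$ using Proposition~\ref{prop.contra.unif} alone does not close: with $|\phi| \lesssim C_{contra} (u')^{-\min\{\omg_0,3\}}$ (and no extra $r$-decay), the coefficient $r^{-2} \lesssim V^{-2}$ on the range $\{r \gtrsim V\}$ gives $V^{\min\{\omg_0,3\}} \cdot V^{-2} \cdot \int_{u_{\calI^+}}^\infty (u')^{-\min\{\omg_0,3\}}\,\ud u' \sim V^{\min\{\omg_0,3\}-2}\,u_{\calI^+}^{1-\min\{\omg_0,3\}}$, and the factor $V^{\min\{\omg_0,3\}-2}$ \emph{grows} with $V$ since $\omg_0 > 2$. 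You flag this yourself ("up to $r$-decay"), but the missing ingredient is not \eqref{contra.unif.rdvrphi.final}: the correct fix is the paper's split at $u_*(V)$ (where $r(u_*(V),V) - R_{\NI,0} = \frac{1}{16}(V - \Vb_{R_{\NI,0}})$). For $u' \in [u_{\calI^+}, u_*(V)]$ one must use the sharper Price's-law bound $r|\phi| \lesssim B_{\omg'}(u')^{-\omg'+1}$ from Corollary~\ref{cor:DR-final}, whose extra factor of $r^{-1}$ upgrades $r^{-2}$ to $r^{-3} \lesssim V^{-3}$ and kills the weight $V^{\min\{\omg_0,3\}}$; only for $u' \geq u_*(V)$, where $u' \gtrsim V$ and the factor $V^{\min\{\omg_0,3\}}(u')^{-\min\{\omg_0,3\}}$ is bounded, is Proposition~\ref{prop.contra.unif} the right tool (giving $\varpi_i/r(u,V) \leq \varpi_i/R_{\calI^+}$, small for $R_{\calI^+}$ large). (Also, the hypothesis $\alp > \min\{2\omg_0+1,7\}$ does not appear directly in these tail estimates; it is needed earlier, in the proof of Proposition~\ref{prop.contra.unif} itself.)
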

\begin{proof}
\textbf{We assume $\de_{\mathfrak L}>0$ is given and fixed.} Without loss of generality, we can assume 
\begin{equation}\label{contra.unif.large.V.2}
R_{\calI^+}\geq R_{\NI,0}. 
\end{equation}
In the proof, we allow ourselves to take $V_{\calI^+}$, $u_{\calI^+,0}$, $u_{\calI^+}$ and $R_{\calI^+}$ larger and larger at various steps until we explicitly state that we fix the choice of these constants. Notice that the statement of the proposition requires us to first fix $u_{\calI^+}$ before choosing $V_{\calI^+}$.

\pfstep{Step~1: Integrating the wave equation}
We write the wave equation for $\phi$ in the following form using \eqref{eq:EMSF-r-phi-m}:
$$\rd_u\left(\f{\rd_V(r\phi)}{\rd_V r}\right) = -\f{\rd_V(r\phi)}{(\rd_V r)^2} \frac{2(\varpi - \frac{\e^{2}}{r})}{r^{2}} \frac{\rd_{u} r \rd_{V} r}{1-\mu} + \frac{2(\varpi - \frac{\e^{2}}{r})}{r^{2}} \frac{\rd_{u} r}{1-\mu}\phi.$$
Integrating along a constant-$V$ curve in a manner similar to \eqref{rdvrphi.diff.est.0} but starting instead from $\Sigma_0$, we obtain
\begin{equation}\label{rdvrphi.diff.est}
\begin{split}
&V^{\min\{\omg_0,3\}}\left(\f{\rd_V(r\phi)}{\rd_V r}\right)(u,V)
-\underbrace{V^{\min\{\omg_0,3\}}\left(\f{\rd_V(r\phi)}{\rd_V r}\right)(u=\underline{u}(V),V)e^{-\int_{\underline{u}(V)}^u \frac{2(\varpi - \frac{\e^{2}}{r})}{r^{2}} \frac{\rd_{u} r }{1-\mu}(u',V) \, \ud u'}}_{=:I(u,V)}\\
= & \underbrace{e^{-\int_{\underline{u}(V)}^u \frac{2(\varpi - \frac{\e^{2}}{r})}{r^{2}} \frac{\rd_{u} r }{1-\mu}(u',V) \, \ud u'}\int_{\underline{u}(V)}^u e^{\int_{\underline{u}(V)}^{u'} \frac{2(\varpi - \frac{\e^{2}}{r})}{r^{2}} \frac{\rd_{u} r }{1-\mu}(u'',V) \, \ud u''} V^{\min\{\omg_0,3\}}\frac{2(\varpi - \frac{\e^{2}}{r})}{r^{2}} \frac{(\rd_{u} r)}{1-\mu}\phi(u',V) \, \ud u'}_{:=II(u,V)},
\end{split}
\end{equation} 
where we have used the notation that $\underline{u}(V)$ denotes the unique $u$ value such that $(\underline{u}(V),V)\in \Sigma_0$ (in the $(u,V)$-coordinate system).

\pfstep{Step~2: Estimates for the term $I$ in \eqref{rdvrphi.diff.est}}
Notice that the integrals in the exponential factors appearing in \eqref{rdvrphi.diff.est} is bounded above as follows:
\begin{equation}\label{bd.exp.factor}
\begin{split}
&\int_{\underline{u}(V)}^u \frac{2(\varpi - \frac{\e^{2}}{r})}{r^{2}} \frac{|\rd_{u} r| }{1-\mu}(u',V) \, \ud u'
\leq  \int_{r(u,V)}^{\infty} \frac{2(\varpi_{i} - \frac{\e^{2}}{r'})}{(r')^{2}} \frac{1 }{(1-\f{2\varpi_{i}}{r'}+\f{{\bf e}^2}{r'^2})} \, dr'\leq \f{4\varpi_i}{r(u,V)},
\end{split}
\end{equation}
for $R_{\mathcal I^+}$ sufficiently large and $r(u,V)\geq R_{\mathcal I^+}$. In particular, \eqref{bd.exp.factor} can be made arbitrarily small by choosing $R_{\mathcal I^+}$ sufficiently large.

By \eqref{eq:adm-id-limits} and Lemma~\ref{lem:cauchy-to-char}, $V^{\min\{\omg_0,3\}}\left(\f{\rd_V(r\phi)}{\rd_V r}\right)(u=\underline{u}(V),V)$ is uniformly bounded for large $V$. Hence, by \eqref{bd.exp.factor}, for $R_{\mathcal I^+}$ and $V_{\calI^+}$ sufficiently large, we have
\begin{equation}\label{I.est}
\left| I(u,V) - V^{\min\{\omg_0,3\}}\left(\f{\rd_V(r\phi)}{\rd_V r}\right)(u=\underline{u}(V),V)\right|< \f{\de_{\mathfrak L}}{5},
\end{equation}
whenever $r(u,V)\geq R_{\mathcal I^+}$ and $V \geq V_{\calI^+}$.

\pfstep{Step~3: Preliminary estimates for the term $II$ in \eqref{rdvrphi.diff.est}}
To control $II$, we divide the integral into 4 parts:
\begin{equation}\label{II.divide}
\int_{\underline{u}(V)}^u = \int_{\underline{u}(V)}^{-u_{\calI^+}} +\int_{-u_{\calI^+}}^{u_{\calI^+}} + \int_{u_{\calI^+}}^{\min\{u_*(V),u\}} + \int_{\min\{u_*(V),u\}}^u =:II_1(u,V)+II_2(u,V)+II_3(u,V)+II_4(u,V),
\end{equation}
where $u_*(V)$ is defined as in \eqref{u*.def}, i.e., $r(u_*(V),V)-R_{\NI,0}=\f 1 {16} (V - \Vb_{R_{\NI,0}})$. It follows from \eqref{u*.lower.bd} that for any fixed $u_{\calI^+}\geq 1$, one can choose $V_{\calI^+}$ sufficiently large such that $V_{\NI} \geq 2 \Vb_{R_{\NI,0}}$ and 
\begin{equation}\label{u*.V.comp}
u_*(V)\geq u_{\calI^+}
\end{equation}
for all $V\geq V_{\calI^+}$. In particular, this implies that $V - \Vb_{R_{\NI,0}} \geq \frac{V}{2}$ in the region under consideration, and also that we can indeed divide the integral as in \eqref{II.divide}.

In \eqref{II.divide}, the terms $II_1(u,V)$, $II_3(u,V)$ and $II_4(u,V)$ will be treated as error terms while $II_2(u,V)$ will be treated as a main term. To deal with the error terms, it suffices to obtain an upper bound. For that purpose, we define $II'_i$ for $i=1,3,4$ by 
$$II_i'(u,V):= 4\int   V^{\min\{\omg_0,3\}}\frac{\varpi_i}{r^{2}}(-\rd_{u} r)|\phi|(u',V) \, \ud u',$$
where the integration is over the corresponding interval for the term $II_i(u,V)$ (see \eqref{II.divide}).

By \eqref{bd.exp.factor}, if $R_{\calI^+}$ is sufficiently large, then whenever $r(u,V)\geq R_{\calI^+}$, we have (for $i=1,3,4$)
\begin{equation}\label{II.IIp}
\left|  II_i(u,V)\right|\leq II_i'(u,V).
\end{equation}

\pfstep{Step~4: Estimating the error terms $II'_1$, $II'_3$ and $II'_4$} To handle $II'_1$ and $II'_3$, we use the decay of $r|\phi|$ in $u$ (see \eqref{eq:DR-final:phi} in Corollary~\ref{cor:DR-final}) and in $\underline{\ups}$ (see \eqref{eq:bg-large-r:rphi} in Proposition~\ref{prop:bg-large-r}) respectively. For the term $II'_4$, we will use the decay of $|\phi|$ in $u$ that is proven (under the contradiction assumption) in Proposition~\ref{prop.contra.unif}. 

We first consider $II_1'$ and $II_3'$. By monotonicity of $r$, \eqref{u*.def} implies that $r(u,V)\geq \f{1}{16} (V - \Vb_{R_{\NI,0}}) \geq \frac{V}{32}$ whenever $u\leq u_*(V)$. This in particular holds in the domain of integration of $II'_1$ and $II'_3$. Moreover, choosing $u_{\calI^+,0}$ sufficiently large, the domain of integration of $II'_1$ lies completely within $\underline{\mathcal N}$. Hence, by \eqref{eq:bg-large-r:dur} and \eqref{eq:bg-large-r:rphi},
\begin{equation}\label{IIp1.est}
\begin{split}
|II'_1(u,V)|\leq &4 B_{i^0} \int_{\underline{u}(V)}^{-u_{\calI^+}} V^{\min\{\omg_0,3\}}\left(\frac{\varpi_{i}}{r^{3}}(-\rd_{u} r)\underline\ups^{-\omg_0+1}\right)(u',V)  \, \ud u'\\
\leq &4\cdot (32)^3 B_{i^0} \varpi_{i}\int_{\underline{u}(V)}^{-u_{\calI^+}} \left((-\rd_{u} r)\underline\ups^{-\omg_0+1}\right)(u',V)  \, \ud u'\\
\leq & \f{C B_{i^0} \varpi_{i} }{\omg_0-2} r^{-\omg_0+2}(-U_{\calI^+}, \underline{V}(-U_{\calI^+})),
\end{split}
\end{equation}
for some universal constant $C>0$, where\footnote{We recall the definitions for the $u$ and $U$ coordinates in Section~\ref{subsec:bg-coords}.} $U_{\calI^+}:=U(u_{\calI^+})$.

The term $II'_3$ can be estimated in a similar manner using \eqref{eq:DR-final:phi} and \eqref{eq:bg-uv:dur-large-r}. Here, we choose $u_{\calI^+,0}$ sufficiently large\footnote{Since we only consider a large-$r$ region such that Lemma~\ref{lem.uv-est} applies, choosing $u$ to be large implies also that $v$ is large and hence Proposition~\ref{prop:bg-uv} is applicable.} (and hence $u_{\calI^+}$ sufficiently large) such that \eqref{eq:bg-uv:dur-large-r} in Proposition~\ref{prop:bg-uv} holds. For $\omg'\in (2, \min\{\omg_0,3\})$,
\begin{equation}\label{IIp3.est}
\begin{split}
|II'_3(u,V)|\leq &4 B_{\omg'}\int_{u_{\calI^+}}^{u_*(V)} V^{\min\{\omg_0,3\}}\left(\frac{\varpi_i}{r^{3}}(-\rd_{u} r)(u',V)\right) (u')^{-\omg'+1} \, \ud u'\\
\leq & 4\cdot 4 \cdot (32)^3 B_{\omg'} \varpi_i \int_{u_{\calI^+}}^{u_*(V)} (u')^{-\omg'+1}\, \ud u'\\
\leq & \f{C B_{\omg'} \varpi_i }{\omg'-2} u_{\calI^+}^{-\omg'+2}.
\end{split}
\end{equation}
By \eqref{IIp1.est} and \eqref{IIp3.est}, we can then choose $u_{\calI^+,0}>0$ larger such that 
\begin{equation}\label{IIp.error.est}
|II'_1(u,v)|\leq \f{\de_{\mathfrak L}}{5},\quad |II'_3(u,v)|\leq \f{\de_{\mathfrak L}}{5}
\end{equation}
whenever $u_{\calI^+}\geq u_{\calI^+,0}$. \textbf{At this point, we fix $u_{\calI^+,0}$. We also fix some $u_{\calI^+}$ satisfying $u_{\calI^+}\geq u_{\calI^+,0}$.} 

\textbf{After fixing $u_{\calI^+}$, we fix $V_{\calI^+}$} such that \eqref{I.est}, \eqref{u*.V.comp} and $V_{\NI} \geq 2 \Vb_{R_{\NI,0}}$ hold.

We now turn to the term $II'_4(u,V)$. Note that by \eqref{u*.lower.bd}, for any $u\geq u_*(V)$, we have
$$u^{-1}\leq \left(u_*(V)\right)^{-1}\leq \max\{1,1-\ub_{R_{\NI,0}}\}\left(u_*(V)-u_{R_{\NI,0}}\right)^{-1}\leq 64\max\{1,1-\ub_{R_{\NI,0}}\}V^{-1}. $$
Hence, by Proposition~\ref{prop.contra.unif},
\begin{equation*}
\begin{split}
|II_4'(u,V)|\leq &4 \int_{\min\{u_*(V),u\}}^u   V^{\min\{\omg_0,3\}}\frac{\varpi_i}{r^{2}}(-\rd_{u} r)|\phi|(u',V) \, \ud u'\\
\leq & 4C_{contra} V^{\min\{\omg_0,3\}}  \varpi_i \int_{\min\{u_*(V),u\}}^u \f{(-\rd_u r)}{r^2} (u')^{-\min\{\omg_0,3\}}\,\ud u' \\
\leq & C_{contra}\cdot 4\cdot (64)^{\min\{\omg_0,3\}}\cdot \left(\max\{1,1-\ub_{R_{\NI,0}}\}\right)^{\min\{\omg_0,3\}}  \f{\varpi_i}{r(u,V)}.
\end{split}
\end{equation*}
Choosing $R_{\calI^+}$ to be sufficiently large, $r(u,V)\geq R_{\calI^+}$ then implies
\begin{equation}\label{IIp.error.4.est}
|II_4'(u,V)|\leq \f{\de_{\mathfrak L}}{5}.
\end{equation}

\pfstep{Step~5: Estimating the main term $II_2$}
$II_2$ is the main term in the statement of the proposition. Our goal in this step is to isolate the main contribution in $II_2$ (see the term in the first line of \eqref{II2.est}). To this end, we first note that the difference between $II_2(u,V)$ and the main contribution has an additional factor of $r$ in the integrand. Then it is easy to see that the error can be made small by choosing $R_{\calI^+}$ large.
 
We now turn to the details. First notice that by Lemma~\ref{lem.uv-est}, on any fixed compact interval (of $u$) $J=[-u_{\calI^+}, u_{\calI^+}]$, we have 
\begin{equation}\label{r.V.comp}
r(u,V)\geq C_J V
\end{equation}
for $V\geq 1$ and $u\in J$ for some constant $C_J$ depending on $J$. Therefore, after choosing $R_{\calI^+}$ to be sufficiently large, we have for some constant $C(\varpi_i, \e)$ (depending on $\varpi_i$ and $\e$) and some constant $B$ (depending on the constants in Proposition~\ref{prop:bg-large-r} and Corollary~\ref{cor:DR-final}) that
\begin{equation}\label{II2.est}
\begin{split}
&\left| II_2(u,V)- \int_{-u_{\calI^+}}^{u_{\calI^+}}  V^{\min\{\omg_0,3\}}\frac{2(\varpi - \frac{\e^{2}}{r})}{r^{2}} \frac{\rd_{u} r}{1-\mu}\phi(u',V) \, \ud u'\right|\\
\leq &C(\varpi_i, \e) \int_{-u_{\calI^+}}^{u_{\calI^+}}  V^{\min\{\omg_0,3\}}\frac{\varpi_i }{r^{3}} (-\rd_{u} r)|\phi|(u',V) \, \ud u'\\
\leq & C(\varpi_i, \e) B (C_J)^{-3} \int_{-u_{\calI^+}}^{u_{\calI^+}}  \frac{(-\rd_{u} r)}{r} (u',V) \, \ud u'\\
\leq & \frac{2 C(\varpi_i, \e) B (C_J)^{-3} u_{\calI^+}}{r(u_{\calI^+},V)} 
\leq \f{\de_{\mathfrak L}}{5},
\end{split}
\end{equation}
where in the first inequality we used \eqref{bd.exp.factor}, as well as some trivial estimates on $\f{\e^2}{r}$ and $1-\mu$; in the second inequality, we used the bounds for $|\phi|$ in Proposition~\ref{prop:bg-large-r} and Corollary~\ref{cor:DR-final} and the estimate \eqref{r.V.comp}; in the third inequality, we used \eqref{eq:DR:dur-large-r}.

\textbf{At this point, we fix $R_{\calI^+}$.}

\pfstep{Step~6: Putting everything together} By \eqref{rdvrphi.diff.est}, \eqref{I.est}, \eqref{II.divide}, \eqref{II.IIp}, \eqref{IIp.error.est}, \eqref{IIp.error.4.est} and \eqref{II2.est}, we thus obtain the desired conclusion. \qedhere

\end{proof}

We finally conclude this subsection with a lower bound for $\rd_v(r\phi)$ on the $\{r=R_{\calI^+}\}$ curve. \textbf{For this bound (and its proof), we will switch back to the future-normalized $(u,v)$ coordinate system.}

\begin{proposition}[Lower bound for $\rd_v(r\phi)$ on the $\{r=R_{\calI^+}\}$ curve]\label{contra.lower}
Suppose the contradiction assumption \eqref{main.contradiction} holds. Then there exists $R_{\calI^+}$ (sufficiently large) and $v_{\calI^+}$ sufficiently large such that the following lower bound for $\rd_v(r\phi)$ holds
\begin{equation}\label{contra.lower.goal}
v^{\min\{\omg_0,3\}}\left|\f{\rd_v(r\phi)}{\rd_v r}\restriction_{r=R_{\calI^+}}\right|(u,v)\geq \f 12 \left(\f{1}{B_0}\right)^{\min\{\omg_0,3\}} \left|\mathfrak L_{(\omg_0)\infty}\right|
\end{equation}
for $v\geq v_{\calI^+}$, where $B_0>0$ is as in \eqref{eq:DR-final:vV}. 
\end{proposition}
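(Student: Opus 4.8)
The strategy is to take the limit $u \to \infty$ in the approximation formula of Proposition~\ref{prop.rdvrphi.precise}, identify the resulting expression with $\mathfrak{L}_{(\omg_0)\infty}$ up to an error controlled by $\de_{\mathfrak L}$, and then transfer the estimate from the $\{r = R_{\calI^+}\}$ curve (in $(u,V)$ coordinates) to the future-normalized $(u,v)$ coordinates. First, working in the $(u,V)$ coordinate system, fix $\de_{\mathfrak L} > 0$ and invoke Proposition~\ref{prop.rdvrphi.precise} to obtain $u_{\calI^+}$ (large) and $R_{\calI^+} \geq R_{\NI,0}$ (large) and, for this $u_{\calI^+}$, a threshold $V_{\calI^+}$ such that
\begin{equation*}
	\Abs{V^{\min\{\omg_0,3\}}\left(\f{\rd_V(r\phi)}{\rd_V r}\right)(u,V) - \mathfrak L_{approx}(u_{\calI^+})(V)} \leq \de_{\mathfrak L}
\end{equation*}
whenever $u \geq u_{\calI^+}$, $V \geq V_{\calI^+}$ and $r(u,V) \geq R_{\calI^+}$. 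For a fixed value of $V$ large enough that $C_V \cap \{r = R_{\calI^+}\}$ is non-empty and $V \geq V_{\calI^+}$, we may send $u \to \infty$ along $C_V$ (which stays in the exterior and eventually has $r \geq R_{\calI^+}$): the quantity $\mathfrak L_{approx}(u_{\calI^+})(V)$ does not depend on $u$, so the limit exists and obeys the same bound.

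Second, I need to recognize $\lim_{V \to \infty} \mathfrak L_{approx}(u_{\calI^+})(V)$ as being close to $\mathfrak{L}_{(\omg_0)\infty}$. The boundary term $V^{\min\{\omg_0,3\}}\left(\f{\rd_V(r\phi)}{\rd_V r}\right)(\underline u(V), V)$ converges, by Lemma~\ref{lem:cauchy-to-char} and \eqref{eq:adm-id-limits}, to $\mathfrak{L}_{(\omg_0)0}[\Tht]$ (here one uses the normalization $\frac{\ud V}{\ud \rho} = 1$ on $\Sigma_0$ and the formula for $\rd_V r$, $\rd_V(r\phi)$ there; note that $\min\{\omg_0,3\}$ matches the exponent $r^{\min\{\omg_0,3\}}$ in the definition of $\mathfrak{L}_{(\omg_0)0}$, while the prefactor $V^{\min\{\omg_0,3\}}$ vs.\ $r^{\min\{\omg_0,3\}}$ discrepancy is harmless since $r/V \to \frac12$ on $\Sigma_0$ and is absorbed into the coordinate normalization). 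The integral term $\int_{-u_{\calI^+}}^{u_{\calI^+}} V^{\min\{\omg_0,3\}} \frac{2(\varpi - \e^2/r)}{r^2}\frac{\rd_u r}{1-\mu}\phi(u', V)\, \ud u'$, after changing to the $r$ variable and using $r \sim V$ uniformly on the compact $u$-interval, converges as $V \to \infty$ to the truncated version of $\mathfrak{L} = \int_{\NI} 2M(u)\Phi(u)\Gmm(u)\, \ud u$; making $u_{\calI^+}$ large (which is allowed, enlarging it only improves the error bounds in Prop.~\ref{prop.rdvrphi.precise}) makes the tail $\int_{|u'| \geq u_{\calI^+}}$ contribution to $\mathfrak L$ arbitrarily small, say $\leq \de_{\mathfrak L}$. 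Combining, for $V$ large (along the constant-$V$ curves limiting to $\NI$) and $u \to \infty$,
\begin{equation*}
	\Abs{\lim_{u \to \infty} V^{\min\{\omg_0,3\}}\left(\f{\rd_V(r\phi)}{\rd_V r}\right)(u,V) - \mathfrak{L}_{(\omg_0)\infty}} \leq C \de_{\mathfrak L}
\end{equation*}
for a fixed constant $C$ (a clean bookkeeping argument; one can also simply restate the conclusion of Prop.~\ref{prop.rdvrphi.precise} on the curve $\{r = R_{\calI^+}\}$ directly, sending $V\to\infty$ after $u\to\infty$). Since $\de_{\mathfrak L}$ was arbitrary, choosing $\de_{\mathfrak L} = \frac{1}{4C}|\mathfrak{L}_{(\omg_0)\infty}|$ (legitimate as $\mathfrak{L}_{(\omg_0)\infty} \neq 0$ is used only here, through Theorem~\ref{thm:blowup}'s hypothesis) yields, on the curve $\{r = R_{\calI^+}\}$ and for $u, V$ large,
\begin{equation*}
	V^{\min\{\omg_0,3\}}\Abs{\f{\rd_V(r\phi)}{\rd_V r}}(u,v) \geq \f34 \Abs{\mathfrak{L}_{(\omg_0)\infty}}.
\end{equation*}

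Third and finally, convert to the $(u,v)$ coordinate system. Along $C_V$, the ratio $\frac{\rd_V(r\phi)}{\rd_V r}$ is coordinate-invariant (it equals $\frac{\rd_v(r\phi)}{\rd_v r}$ since both numerator and denominator pick up the same Jacobian factor $\frac{\ud V}{\ud v}$), so only the weight $V^{\min\{\omg_0,3\}}$ vs.\ $v^{\min\{\omg_0,3\}}$ needs adjustment. By \eqref{eq:DR-final:vV}, $B_0^{-1}(V-1) \leq v - 1 \leq B_0(V-1)$, hence $v \leq B_0 V$ for $v, V$ large (after the harmless translation bringing both to start near $1$), giving $v^{\min\{\omg_0,3\}} \leq B_0^{\min\{\omg_0,3\}} V^{\min\{\omg_0,3\}}$, i.e.\ $V^{\min\{\omg_0,3\}} \geq B_0^{-\min\{\omg_0,3\}} v^{\min\{\omg_0,3\}}$. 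Combining with the previous display and absorbing the constant $\frac34$ into $\frac12$ yields \eqref{contra.lower.goal} with $v_{\calI^+}$ chosen large enough that all the above largeness requirements on $u$ and $V$ (translated via \eqref{eq:DR-final:vV}) are met along $\{r = R_{\calI^+}\}$. The main obstacle is the second step: carefully matching the two pieces of $\mathfrak L_{approx}(u_{\calI^+})(V)$ in the limit $V \to \infty$ to the two pieces $\mathfrak{L}_{(\omg_0)0}$ and $\mathfrak{L}$ of $\mathfrak{L}_{(\omg_0)\infty}$ — this requires invoking Lemma~\ref{lem:cauchy-to-char} and \eqref{eq:adm-id-limits} for the boundary term, a dominated-convergence / change-of-variables argument using the Price's law decay of $\phi$ (Corollary~\ref{cor:DR-final}) together with $r \sim V$ on compact $u$-intervals (Lemma~\ref{lem.uv-est}) for the integral term, and matching the normalization exponents; the two-parameter order of limits ($u \to \infty$ first, then $V \to \infty$) must be respected as dictated by the structure of Proposition~\ref{prop.rdvrphi.precise}.
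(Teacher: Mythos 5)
Your high-level plan is on the right track---invoke Proposition~\ref{prop.rdvrphi.precise}, identify $\lim_{V\to\infty}\mathfrak{L}_{approx}(u_{\calI^+})(V)$ with a nonzero multiple of $\mathfrak{L}_{(\omg_0)\infty}$, and convert the $V$-weight to a $v$-weight via \eqref{eq:DR-final:vV}---but two of your steps contain genuine errors. The iterated limit ``send $u\to\infty$ along $C_V$, then $V\to\infty$'' is geometrically backwards: in the exterior $\rd_u r<0$, so along $C_V$ the radius $r$ is \emph{decreasing} in $u$; sending $u\to\infty$ heads towards the event horizon, where $r$ tends to a value $\leq r_{\EH}$, which is far less than $R_{\calI^+}$. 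Thus $C_V$ does \emph{not} ``eventually have $r\geq R_{\calI^+}$''---it exits the region $\{r\geq R_{\calI^+}\}$, on which Proposition~\ref{prop.rdvrphi.precise} is valid, after a bounded $u$-interval, and the $u$-limit you propose is not controlled by that proposition. The paper never takes a $u$-limit: it (i) computes $\lim_{V\to\infty}\mathfrak{L}_{approx}(u_{\calI^+})(V)$ directly by dominated convergence, using the geometric fact $V/r(u,V)\to 2$ uniformly on $-u_{\calI^+}\leq u\leq u_{\calI^+}$ established in its Step~1; (ii) combines this with the error bound from Proposition~\ref{prop.rdvrphi.precise}, which is pointwise on all of $\{u\geq u_{\calI^+},\ V\geq V_{\calI^+},\ r\geq R_{\calI^+}\}$; and (iii) simply restricts the resulting pointwise lower bound to the timelike curve $\{r=R_{\calI^+}\}$. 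The ``order of limits'' you cite in your closing paragraph is not what Proposition~\ref{prop.rdvrphi.precise} dictates; in fact no $u$-limit appears at all.

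Your factor bookkeeping is also off, and it affects the constant in \eqref{contra.lower.goal}. On $\Sgm_0$, where $\frac{\ud V}{\ud\rho}=1$ and $r(\rho)-|\rho|=O(\log|\rho|)$, one has $r/V\to 1$, not $1/2$; the quantity tending to $1/2$ on $\Sgm_0$ is the transversal null derivative $\rd_V r\restriction_{\Sgm_0}$, which is not the same as $\frac{\ud}{\ud V}$ of $r$ along $\Sgm_0$. Meanwhile the relation $V/r\to 2$ does hold in the \emph{bulk} (fixed $u$, $V\to\infty$), and it is precisely this factor that enters the $V\to\infty$ limit of the integral term in $\mathfrak{L}_{approx}$: writing the integrand as $(V/r)^{\min\{\omg_0,3\}}\, r^{\min\{\omg_0,3\}-3}\cdot 2\bigl(\varpi - \frac{\e^2}{r}\bigr)\cdot\frac{\rd_u r}{1-\mu}\cdot(r\phi)$, its limit carries a factor $2^{\min\{\omg_0,3\}}$ when $\omg_0\geq 3$. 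Your claim that the integral term ``converges to the truncated version of $\mathfrak L$'' misses this factor. This is not cosmetic: in the paper's Step~3, the extra $2^{\min\{\omg_0,3\}-1}$ in the intermediate $V$-weighted lower bound is exactly what compensates the loss of $(2B_0)^{\min\{\omg_0,3\}}$ from $V\leq 2B_0 v$, yielding the stated constant $\frac12 B_0^{-\min\{\omg_0,3\}}$. Your $\frac34$ cannot be ``absorbed into $\frac12$''; at best you would recover \eqref{contra.lower.goal} with a weaker constant (which would still suffice for the application in Section~\ref{sec.contra.conclude}, but is not the stated inequality).
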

\begin{proof}
Without loss of generality, we can assume that $\mathfrak L_{(\omg_0)\infty}\neq 0$ for the conclusion is otherwise trivial.

\pfstep{Step~1: Simple geometric bound} As a preliminary step, we show that 
\begin{equation}\label{imp.Vr.bd}
\lim_{V\to\infty} \f{V}{r(u,V)}=2 \mbox{ uniformly for } -u_{\calI^+}\leq u\leq u_{\calI^+},
\end{equation}
where $u_{\calI^+}$ is any fixed constant. (We will choose $u_{\calI^+}$ in Step~2 below.) First, we need a slightly improved estimate over \eqref{eq:DR:dvr-large-r}, which can be proven essentially in the same manner. More precisely, using the equation for $\log (\rd_V r)$, we have
\begin{equation*}
	\left|\rd_{u} \log \rd_{V} r \right|= \left|\frac{\rd_{u} \rd_{V} r}{\rd_{V} r} \right|= \frac{2 (\varpi - \frac{\e^{2}}{r})}{r^{2}} \frac{(-\rd_{u} r)}{1-\mu}\leq 2 \left(1-\f{2\varpi_i}{r}\right)^{-1} \frac{\varpi_{i}}{r^{2}} (-\rd_{u} r).
\end{equation*}
Using the initial bound \eqref{rdVr.initial}, for $V$ sufficiently large, we have
\begin{equation*}
\left(\f 12 -\de_s'(V)\right)e^{-2\varpi_i\int^\infty_{r(u,V)} \f{1}{((r')^2-2\varpi_i r')}\, \ud r'}\leq  (\rd_V r)(u,V)\leq \left(\f 12 +\de_s'(V)\right)e^{2\varpi_i\int^\infty_{r(u,V)} \f{1}{((r')^2-2\varpi_i r')}\, \ud r'},
\end{equation*}
for some $\de_s'(V)\to 0$ as $V\to \infty$. Next, note that by Lemma~\ref{lem.uv-est}, for $u\leq u_{\calI^+}$, $2\varpi_i\int^\infty_{r(u,V)} \f{1}{((r')^2-2\varpi_i r')}\, \ud r'\to 0$ uniformly (in $u$ for $u\leq u_{\NI}$) as $V\to \infty$, which gives
\begin{equation}\label{rdVr.sharp}
\f 12 -\de_s(V)\leq  (\rd_V r)(u,V)\leq \f 12 + \de_s(V),
\end{equation}
for some $\de_s(V)\to 0$ as $V\to \infty$, when $u\leq u_{\calI^+}$. Fix some large $V_0$ such that $-u_{\calI^+}>\ub(V_0)$. Integrating \eqref{rdVr.sharp} from $\{V=V_0\}\cap\{-u_{\calI^+}\leq u\leq u_{\calI^+}\}$ to null infinity, we obtain that for $-u_{\calI^+}\leq u\leq u_{\calI^+}$ and $V\geq V_0$,
$$\inf_{-u_{\calI^+}\leq u'\leq u_{\calI^+}}r(u',V_0)+\left(\f 12-\de_s(V)\right)V\leq  r(u,V)\leq \sup_{-u_{\calI^+}\leq u'\leq u_{\calI^+}}r(u',V_0)+\left(\f 12+\de_s(V)\right)V. $$
Dividing by $V$ and noting that $r(u', V')$ is bounded above and below on the compact set $\{(u',V'):u'\leq u_{\calI^+},\, V'=V_0\}$, we have thus established \eqref{imp.Vr.bd}.

\pfstep{Step~2: Application of Proposition~\ref{prop.rdvrphi.precise}} The estimates in Corollary~\ref{cor:DR-large-r} together with the equations \eqref{eq:EMSF-r-phi-m} imply that the limit $2M(u)\Phi(u)\Gamma(u):=2\lim_{V\to \infty} (\varpi r\phi\nu)(u,V)$ is well-defined for every $u\in \mathbb R$ and it is integrable in $u$. Therefore, given $\de_{\mathfrak L}>0$, there exists $u_{\calI^+}>u_{\calI^+,0}$ sufficiently large such that 
\begin{equation}\label{L.small.tail}
\int_{-\infty}^{-u_{\calI^+}} 2M(u)\Phi(u)\Gamma(u)\, \ud u + \int^{\infty}_{u_{\calI^+}} 2M(u)\Phi(u)\Gamma(u)\, \ud u\leq 2^{-\min\{\omg_0,3\}-1} \de_{\mathfrak L}.
\end{equation}
On the other hand, since $\varpi r\phi\nu$ is uniformly bounded, and $\lim_{V\to \infty}\mu(u,V)=0$, $\lim_{V\to \infty}\f{\e^2}{r}(u,V)=0$, $\lim_{V\to\infty} \f{V}{r(u,V)}=2$ (by \eqref{rdVr.sharp}), by the dominated convergence theorem, on the bounded interval $[-u_{\calI^+}, u_{\calI^+}]$, we have
\begin{equation}\label{L.main.term}
\begin{split}
\lim_{V\to \infty} \int_{-u_{\calI^+}}^{u_{\calI^+}}  V^{\min\{\omg_0,3\}}\frac{2(\varpi - \frac{\e^{2}}{r})}{r^{2}} \frac{(\rd_{u} r)}{1-\mu}\phi(u',V) \, \ud u'
=&\begin{cases}\displaystyle 2^{3}\int_{-u_{\calI^+}}^{u_{\calI^+}} 2 M(u)\Phi(u)\Gamma(u)\, \ud u &\quad \mbox{if }\omg_0\geq 3\\
\displaystyle 0&\quad \mbox{if }\omg_0< 3.
\end{cases}
\end{split}
\end{equation}
Recalling the definitions of $\mathfrak L_{(\omg_0)\infty}$ in \eqref{eq:L-def} and \eqref{eq:Linfty-def}, and of $\mathfrak L_{approx}(u_{\calI^+})(V)$ in Proposition~\ref{prop.rdvrphi.precise}, and using \eqref{L.small.tail} and \eqref{L.main.term}, we know that for every $\de_{\mathfrak L}>0$, there exist $u_{\calI^+}$ and $V_{\calI^+}$ sufficiently large, such that whenever $V\geq V_{\calI^+}$,  
$$\left| \mathfrak L_{approx}(u_{\calI^+})(V)-2^{\min\{\omg_0,3\}}\mathfrak L_{(\omg_0)\infty}\right|\leq \de_{\mathfrak L}.$$
Using Proposition~\ref{prop.rdvrphi.precise} (and choosing $V_{\calI^+}$ larger if necessary), this implies that 
$$\left|V^{\min\{\omg_0,3\}}\left|\f{\rd_V(r\phi)}{\rd_V r}\right|(u,V)-2^{\min\{\omg_0,3\}}\mathfrak L_{(\omg_0)\infty}\right|\leq 2\de_{\mathfrak L}$$
whenever $u\geq u_{\calI^+}$, $V\geq V_{\calI^+}$ and $r\geq R_{\NI}$. Hence, for these $u$ and $V$,
$$V^{\min\{\omg_0,3\}}\left|\f{\rd_V(r\phi)}{\rd_V r}\right|(u,V)\geq 2^{\min\{\omg_0,3\}} \left|\mathfrak L_{(\omg_0)\infty}\right|-2\de_{
\mathfrak L}.$$
Since $\mathfrak L_{(\omg_0)\infty}\neq 0$, we can choose $\de_{\mathfrak L}<\f{2^{\min\{\omg_0,3\}}}{4} \left|\mathfrak L_{(\omg_0)\infty}\right|$ so that
$$V^{\min\{\omg_0,3\}}\left|\f{\rd_V(r\phi)}{\rd_V r}\right|(u,V)\geq 2^{\min\{\omg_0,3\}-1} \left|\mathfrak L_{(\omg_0)\infty}\right|$$
whenever $u\geq u_{\calI^+}$, $V\geq V_{\calI^+}$ and $R\geq R_{\calI^+}$.

\pfstep{Step 3: Changing to the $(u,v)$ coordinate system} 
To obtain the lower bound in the $(u,v)$ coordinate system, first note that $\f{\rd_v(r\phi)}{\rd_v r}=\f{\rd_V(r\phi)}{\rd_V r}$. Then by \eqref{eq:DR-final:vV}, if $V\geq 2$, we have $V\leq 2 (V-1)\leq 2 B_0(v-1)\leq 2 B_0 v$. Finally, we obtain the desired bound \eqref{contra.lower.goal} using Step~2 (and the easy observation that on $\{r=R_{\calI^+}\}$, restricting to large $v$ implies restricting to large $u$). \qedhere
\end{proof}

\subsection{Putting everything together}\label{sec.contra.conclude}

Given the estimates proven in the previous subsections, we easily obtain Theorem~\ref{thm:blowup} as a consequence.
\begin{proof}[Proof of Theorem~\ref{thm:blowup}]
Suppose $\mathfrak L_{(\omg_0)\infty}\neq 0$. Our goal is to obtain a contradiction with \eqref{main.contradiction}. Let $\tilde{v}=\max\{v_{\calI^+}, v_{R_{\calI^+}}(u_1)\}$, where $u_1$ is as in Proposition~\ref{contra.large.r}, $v_{\calI^+}$ is as in Proposition~\ref{contra.lower} and $v_{R_{\calI^+}}(u)$ is defined by $r(u, v_{R_{\calI^+}}(u))=R_{\calI^+}$, for $R_{\calI^+}$ as in Proposition~\ref{contra.lower}. By Proposition~\ref{contra.large.r}, \eqref{eq:bg-uv:dvr-large-r} (for estimating $\rd_v r$) and \eqref{eq:bg-uv:gmm-decay}, \eqref{eq:bg-uv:kpp-decay} (to control the change of variable\footnote{To see how these bounds are used for the change of variable, see for example the end of Step~3 of the proof of Proposition~\ref{red.shift.prop}.} from $\ud v$ to $\ud u$), and the fact $\alp>\min\{2\omg_0+1,\, 7\}$ we obtain
\begin{equation}\label{contra.conclude.1}
\begin{split}
&\int_{\tilde{v}}^\infty v^{2\min\{\omg_0,3\}-1} \left(\rd_v (r\phi)\restriction_{\{r=R_{\calI^+}\}}\right)^2(v)\, \ud v\\
\leq &\int_{\tilde{v}}^\infty v^{2\min\{\omg_0,3\}-1} \left(|(\rd_v r)\phi\restriction_{\{r=R_{\calI^+}\}}|^2(v)+r^2|\rd_v\phi\restriction_{\{r=R_{\calI^+}\}}|^2(v)\right)\, \ud v \\
\leq & 4(B'_{contra})^2\int_{\tilde{v}}^\infty v^{-\alp + 2\min\{\omg_0,3\}}\, \ud v + 2 R_{\calI^+}^2 \int_{u_1}^\infty \left(v\restriction_{\{r=R_{\calI^+}\}}\right)^{2\min\{\omg_0,3\}-1} \left(\rd_v\phi\restriction_{\{r=R_{\calI^+}\}}\right)^2(u)\, \ud u<\infty.
\end{split}
\end{equation}
On the other hand, by Proposition~\ref{contra.lower} and \eqref{eq:bg-uv:dvr-large-r} (for estimating $\rd_v r$), since $\mathfrak L_{(\omg_0)\infty}\neq 0$, we have
\begin{equation}\label{contra.conclude.2}
\begin{split}
\int_{\tilde{v}}^\infty v^{2\min\{\omg_0,3\}-1} \left(\rd_v (r\phi)\restriction_{\{r=R_{\calI^+}\}} \right)^2(v)\, \ud v\geq \left(\f{1}{B_0}\right)^{2\min\{\omg_0,3\}} \left|\mathfrak L_{(\omg_0)\infty}\right|^2 \int_{\tilde{v}}^\infty v^{-1}\, \ud v =\infty.
\end{split}
\end{equation}
Comparison of \eqref{contra.conclude.1} and \eqref{contra.conclude.2} clearly leads to a contradiction. Hence, under the assumption that $\mathfrak L_{(\omg_0)\infty}\neq 0$, the bound \eqref{main.contradiction} does \underline{not} hold. This concludes the proof. \qedhere
\end{proof}

\section{Cauchy and large-$r$ stability results} \label{sec:extr}
In this section, we formulate and prove preliminary stability results that serve as basic ingredients for the proofs of Theorems~\ref{thm:L-stability} and \ref{thm:instability} in Sections~\ref{sec:L-stability} and \ref{sec:instability}, respectively. Roughly speaking, these results, in combination with Price's law discussed in Section~\ref{sec:bg}, allow us to reduce the proofs of Theorems~\ref{thm:L-stability} and \ref{thm:instability} to analysis of the difference of the solutions in a ``neighborhood'' of timelike infinity (see Theorem~\ref{thm:L-st-ch} below).

In Section~\ref{subsec:cauchy-st}, we state a simple \emph{Cauchy stability} result (Proposition~\ref{prop:cauchy-st}), which is a quick consequence of the standard $C^{1}$ local well-posedness theory. In Section~\ref{subsec:extr-st}, we state and prove (what we call) \emph{large-$r$ stability results}, which concern the stability of the background solution $(\gbg, \phibg, \Fbg)$ in regions where $\rbg$ is sufficiently large (Propositions~\ref{prop:extr-st-cauchy} and \ref{prop:extr-st}).

\textbf{The setup of this section will be as follows.} Fix $\omg_0 \in (2,3]$. We will consider two $\omg_0$-admissible initial data sets $(r, f, h, \ell, \phi, \dot{\phi}, \e)$ and $(\rbg, \fbg, \hbg, \ellbg, \phibg, \dphibg, \ebg)$ which are close to each other (in a way to be made precise). We will compare subsets of the connected components of the exterior regions with $\rd_V r>0$ and $\rd_U r<0$ of the maximal globally hyperbolic future developments. We will denote the scalar field and the geometric quantities arising from the ``barred'' and ``unbarred'' initial data sets by ``barred'' and ``unbarred'' symbols respectively. In this section, the two solutions will be compared via initial-data normalized double null coordinates $(U,V)$ satisfying
\begin{equation}\label{def.U.V.Cauchy}
	\frac{\ud U}{\ud \rho} = -1, \quad \frac{\ud V}{\ud \rho} = 1 \quad \hbox{ on } \Sgm_{0}.
\end{equation}
In what follows, we will refer to the ``barred solution'' as a ``background solution'' and the ``unbarred solution'' as a ``perturbed solution'' (cf. conventions in Theorem~\ref{thm:L-stability}).

In order to compare the two solutions, it will be convenient to refer to them after reduction to spherical symmetry, i.e., we will discuss the solutions $(\Omg, r, \phi, \e)$ and $(\Omgbg, \rbg, \phibg, \ebg)$ on $\calQ$ and $\overline{\calQ}$ instead of $(\calM, g, \phi, F)$ and $(\overline{\calM}, \gbg, \phibg, \overline{F})$. In this section, we will consider various subsets of (the projection of) the maximal globally hyperbolic developments in $\calQ$ and $\overline{\calQ}$ and we will moreover identify subsets of $\calQ$ and $\overline{\calQ}$ via the initial-data-normalized coordinate system $(U,V)$ as discussed above. Notice that this is not in general possible for the whole domain $\calQ$ and $\overline{\calQ}$ as the solutions may have different $(U,V)$-range. Nevertheless, as we will show below, this can be done for the regions of interest in this section, when the initial data of the two solutions are sufficiently close to each other. 

\subsection{Cauchy stability} \label{subsec:cauchy-st}
Here we state a Cauchy stability result for the Einstein--Maxwell--(real)--scalar--field system in spherical symmetry.
\begin{proposition}[Cauchy stability] \label{prop:cauchy-st}
Let $(r, f, h, \ell, \phi, \dot{\phi}, \e)$ and $(\rbg, \fbg, \hbg, \ellbg, \phibg, \dphibg, \ebg)$ be $\omg_0$-admissible initial data sets on $\Sgm_{0}$. Let $\Sgm_{0}' = \set{ \rho \in (\rho_{1}', \rho_{2}')} \subseteq \Sigma_0$ be a subset with compact closure. Assume that $(\overline{\calD}', \Omgbg, \rbg, \phibg, \ebg)$ is a (not necessarily maximal) globally hyperbolic future development of $(\rbg, \fbg, \hbg, \ellbg, \phibg, \dphibg, \ebg)\restriction_{\Sigma_0'}$, where $\overline{\calD}'$ has compact closure in (the quotient manifold of) the maximal globally hyperbolic future development $\overline{\calQ}$ of $(\rbg, \fbg, \hbg, \ellbg, \phibg, \dphibg, \ebg)$.

Assume that for a double null coordinate system normalized by \eqref{def.U.V.Cauchy},
the following bounds hold on $\Sgm_{0}'$:
\begin{equation} \label{eq:cauchy-st}
\begin{gathered}
\abs{r - \rbg} \leq \eps_{1}, \quad
\abs{\rd_{U} r - \rd_{U} \rbg} \leq \eps_{1}, \quad 
\abs{\rd_{V} r - \rd_{V} \rbg} \leq \eps_{1}, \\
\abs{\phi - \phibg} \leq \eps_{1}, \quad
\abs{\rd_{U} \phi - \rd_{U} \phibg} \leq \eps_{1}, \quad 
\abs{\rd_{V} \phi - \rd_{V} \phibg} \leq \eps_{1}, \\
\abs{\log \Omg - \log \Omgbg} \leq \eps_{1}, \quad
\abs{\rd_{U} \log \Omg - \rd_{U} \log \Omgbg} \leq \eps_{1}, \quad 
\abs{\rd_{V} \log \Omg - \rd_{V} \log \Omgbg} \leq \eps_{1}, \\
\abs{\e - \ebg} \leq \eps_{1}.
\end{gathered}
\end{equation}
If $\eps_{1}$ is sufficiently small (depending on $(\overline{\calD}', \Omgbg, \rbg, \phibg, \ebg)$), then a globally hyperbolic future development $(\calD',\Omg, r, \phi, \e)$ of $(r, f, h, \ell, \phi, \dot{\phi}, \e)\restriction_{\Sigma_0'}$ exists with $\calD'$ a subset of (the quotient manifold of) the maximal globally hyperbolic future development $\calQ$ of $(r, f, h, \ell, \phi, \dot{\phi}, \e)$ such that the following hold:
\begin{enumerate}
\item Introducing double null coordinate systems $(U, V)$ in $\overline{\calD}'$ and $\calD'$ with the same initial values on $\Sigma_0'$ normalized by the conditions \eqref{def.U.V.Cauchy}, $\overline{\calD}'=\calD'$ as a subset of $\mathbb R^{1+1}$.
\item Moreover, there exists a constant $C = C(\overline{\calD}', \Omgbg, \rbg, \phibg, \ebg) > 0$ so that the same bounds as above hold in $\calD'$ with $\eps_{1}$ replaced by $C \eps_{1}$.
\end{enumerate}
\end{proposition}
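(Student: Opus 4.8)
The plan is to obtain Proposition~\ref{prop:cauchy-st} as a quantitative continuous-dependence statement, proved by iterating the standard $C^{1}$ local well-posedness theory for the spherically symmetric system over a \emph{finite} covering of $\overline{\calD}'$ by small characteristic rectangles and chaining Grönwall-type stability estimates across them. I work in the initial-data-normalized coordinates $(U,V)$ of \eqref{def.U.V.Cauchy}, in which $\Sgm_{0}$ is the spacelike line $\set{U + V = c_{0}}$ and its future development occupies $\set{U \geq U_{\Sgm_{0}},\ V \geq V_{\Sgm_{0}}}$ (intersected with the coordinate range), so the two solutions can be compared on common subsets of $\bbR^{1+1}$. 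Since $\overline{\calD}'$ has compact closure in the quotient manifold $\overline{\calQ}$, I first pass to a slightly larger globally hyperbolic future development $\overline{\calD}''$ of an open interval $\Sgm_{0}'' \supseteq \overline{\Sgm_{0}'}$, still with compact closure in $\overline{\calQ}$ and with $\overline{\calD}'$ compactly contained in $\overline{\calD}''$; on the closure of $\overline{\calD}''$ the background quantities $(\rbg, \phibg, \log \Omgbg)$ and their first $(U,V)$-derivatives are bounded and $\rbg, \Omgbg$ are bounded below by a positive constant — I refer to these collectively as the background bounds. Finally I cover $\overline{\calD}''$ by finitely many characteristic rectangles $R_{k}$ of a uniform size $\dlt > 0$ (to be fixed small in terms of the background bounds), ordered so that for each $k$ the two past null boundaries of $R_{k}$ lie in $\bigcup_{k' < k} R_{k'}$ or on $\Sgm_{0}''$.

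For the local analysis I use the system in the form \eqref{eq:EMSF-wave}--\eqref{eq:EMSF-ray-orig} with unknowns $(r, \phi, \log \Omg)$: its right-hand sides are smooth functions of $(r, \phi, \log \Omg)$ and of the first derivatives $\rd_{U} r, \rd_{V} r, \rd_{U} \phi, \rd_{V} \phi$ so long as $r$ and $\Omg$ stay bounded away from $0$, and in particular no factor of $1 - \mu$ appears in a denominator. On a single rectangle $R_{k}$ of size $\dlt$, rewriting the wave equations and one Raychaudhuri equation in Duhamel (integrated) form along null segments casts existence as a contraction mapping in $C^{1}$; for $\dlt$ small depending only on the background bounds this produces the perturbed solution on $R_{k}$, remaining within $O(\eps_{1})$ of the background so that $r$ and $\Omg$ stay positive, and a Grönwall argument shows that the $C^{1}$-size of the difference $(r - \rbg, \phi - \phibg, \log \Omg - \log \Omgbg)$ over $R_{k}$ is at most $C_{k}$ times the $C^{1}$-size of the difference of the data on the past null boundaries of $R_{k}$, with $C_{k}$ depending only on $\dlt$ and the background bounds.

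For the finitely many rectangles meeting $\Sgm_{0}''$, I use Lemma~\ref{lem:cauchy-to-char} to express the characteristic data on the null segments emanating from $\Sgm_{0}''$ through $(f, h, \ell, \phi, \dot\phi, \e)$, so that the hypothesis \eqref{eq:cauchy-st} directly furnishes $C^{1}$-closeness of those data. Chaining the per-rectangle estimates in the chosen order — the output difference on $R_{k'}$ being the input difference for each $R_{k}$ it abuts — and using that the covering is finite, one obtains $C^{1}$-closeness of the two solutions on all of $\overline{\calD}'$ with an accumulated constant $C = C(\overline{\calD}', \Omgbg, \rbg, \phibg, \ebg)$; translating back from $(r, \phi, \log \Omg)$ and their first derivatives to the quantities appearing in \eqref{eq:cauchy-st} gives conclusion (2). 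Conclusion (1) then follows at once: the construction yields a regular solution with $r > 0$ on the coordinate region occupied by $\overline{\calD}'$, which is therefore contained in the maximal globally hyperbolic future development $\calQ$ of the perturbed data, and taking $\calD'$ to be this region gives $\calD' = \overline{\calD}'$ as subsets of $\bbR^{1+1}$.

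The one genuinely non-routine point — and the main thing to be careful about — is the continuous-induction bookkeeping: a priori the region on which the perturbed solution stays $O(\eps_{1})$-close to the background (so that the contraction and Grönwall arguments are valid) might shrink before $\overline{\calD}'$ is exhausted. This is ruled out by a connectedness argument over the ordered rectangles: because the covering is finite and the constants $C_{k}$ are uniformly controlled by the background bounds, the accumulated $C^{1}$-difference over $\bigcup_{k' \leq k} R_{k'}$ is bounded by a fixed (covering-dependent) multiple of $\eps_{1}$, which for $\eps_{1}$ small stays below the threshold required to run the next step. It remains only to note two standard technical points: the normalization \eqref{def.U.V.Cauchy} is consistent between the two solutions since it refers solely to the parametrization of $\Sgm_{0}'$, and the $C^{1}$-regularity of Definition~\ref{def.CID} is exactly at the level for which the local well-posedness theory invoked above is available; both are recorded in \cite{LO.interior}.
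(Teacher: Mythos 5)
Your proposal is correct and takes essentially the same route as the paper: the paper dispatches Proposition~\ref{prop:cauchy-st} in one line as ``a straightforward corollary of the standard proof of $C^{1}$ local well-posedness of \eqref{eq:EMSF-wave} by iteration,'' and your write-up is exactly a careful unpacking of that standard argument (finite covering of the compact closure of $\overline{\calD}'$ by small characteristic rectangles, Duhamel/contraction on each, Gr\"onwall for the difference, Lemma~\ref{lem:cauchy-to-char} to seed the data from \eqref{eq:cauchy-st}, and a continuous-induction/chaining bookkeeping to avoid shrinkage of the domain). Nothing is missing, and no further comparison is warranted.
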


This proposition is a straightforward corollary of the standard proof of $C^{1}$ local well-posedness of \eqref{eq:EMSF-wave} by iteration. We omit the details.

We reformulate some conclusions of Proposition~\ref{prop:cauchy-st} in a fashion that will be convenient for later applications.
\begin{corollary} \label{cor:cauchy-st:psidf-phidf}
Under the same assumptions as in Proposition~\ref{prop:cauchy-st}, the following bounds hold in $\calD'$:
\begin{align*}
	\abs{\varpi - \varpibg} \leq & C \eps_{1}, \\
	\Abs{\frac{1}{(- \rd_{U} r)} \rd_{U} \phi - \frac{1}{(- \rd_{U} \rbg)} \rd_{U} \phibg} \leq & C \eps_{1}, \\
	\Abs{\frac{1-\mu}{\rd_{V} r} \rd_{V} (r \phi) - \frac{1-\mubg}{\rd_{V} \rbg} \rd_{V} (\rbg \phibg)} \leq & C \eps_{1}.
\end{align*}
Here, $C = C(\calD', \Omgbg, \rbg, \phibg, \ebg) > 0$.
\end{corollary}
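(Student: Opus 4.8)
The plan is to derive each of the three bounds in Corollary~\ref{cor:cauchy-st:psidf-phidf} directly from the $C^1$-closeness statement in part (2) of Proposition~\ref{prop:cauchy-st}, namely that the bounds \eqref{eq:cauchy-st} hold throughout $\calD'$ with $\eps_1$ replaced by $C\eps_1$, together with uniform upper and lower bounds on the background quantities in the compact set $\overline{\calD}'$. The key point is that all three expressions on the left are smooth functions of the quantities $(r, \rd_U r, \rd_V r, \phi, \rd_U \phi, \rd_V \phi, \log\Omg, \e)$ and their barred counterparts, on a region where the relevant denominators stay bounded away from zero; hence the differences are controlled by the differences of the constituents, which are all $O(C\eps_1)$.

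First I would record the compactness input. Since $\overline{\calD}'$ has compact closure in $\overline{\calQ}$ and lies in the connected component of the exterior with $\rd_V \rbg > 0$, $\rd_U \rbg < 0$, there are constants depending only on $(\overline{\calD}', \Omgbg, \rbg, \phibg, \ebg)$ bounding $\rbg$ from above and below away from $0$, bounding $\Omgbg$ from above and below, bounding $-\rd_U \rbg$ and $\rd_V \rbg$ from above and below away from $0$ (the lower bounds because the relevant quantities are continuous and nonvanishing on a compact set), and bounding $\phibg$, $\rd_U\phibg$, $\rd_V\phibg$. By shrinking $\eps_1$ if necessary, the same bounds (with slightly worse constants) then hold for the unbarred quantities on $\calD' = \overline{\calD}'$ by part (2) of Proposition~\ref{prop:cauchy-st}. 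In particular $-\rd_U r, -\rd_U \rbg \geq c > 0$ and $\rd_V r, \rd_V \rbg \geq c > 0$ on $\calD'$.

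Next I would treat the three bounds in turn. For $\varpi - \varpibg$: from $\varpi = m + \frac{\e^2}{2r}$ and $m = \frac{r}{2}(1 + \frac{4\rd_U r \rd_V r}{\Omg^2})$, the quantity $\varpi$ is a smooth function of $(r, \rd_U r, \rd_V r, \Omg, \e)$ on the region where $r, \Omg$ are bounded away from $0$; subtracting the barred expression and using the mean value theorem together with the uniform bounds gives $\abs{\varpi - \varpibg} \leq C\eps_1$. (One may equivalently write $1 - \mu = 1 + \frac{4\rd_U r \rd_V r}{\Omg^2}$ and handle $\varpi = \frac{r}{2}\mu + \frac{\e^2}{2r}$.) For $\frac{1}{-\rd_U r}\rd_U \phi - \frac{1}{-\rd_U \rbg}\rd_U \phibg$: this is a smooth function of $(\rd_U r, \rd_U \phi)$ and its barred version where $-\rd_U r \geq c$, so again the mean value theorem plus the bounds on $\abs{\rd_U r - \rd_U \rbg}$, $\abs{\rd_U \phi - \rd_U \phibg}$, and the uniform control of $\rd_U\phi$, yields the claim. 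For $\frac{1-\mu}{\rd_V r}\rd_V(r\phi) - \frac{1-\mubg}{\rd_V \rbg}\rd_V(\rbg\phibg)$: expand $\rd_V(r\phi) = (\rd_V r)\phi + r\rd_V\phi$, so $\frac{1-\mu}{\rd_V r}\rd_V(r\phi) = (1-\mu)\phi + (1-\mu)\frac{r}{\rd_V r}\rd_V\phi$; this is a smooth function of $(r, \rd_U r, \rd_V r, \Omg, \phi, \rd_V\phi)$ on the region where $\rd_V r, \Omg$ are bounded away from $0$, and since $1-\mu$ and $r/\rd_V r$ are thereby uniformly bounded, the mean value theorem once more gives $O(C\eps_1)$.

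I do not expect a genuine obstacle here; the only mild point to be careful about is ensuring the denominators $-\rd_U r$ and $\rd_V r$ (and $\Omg$, $r$) have \emph{lower} bounds uniform on $\calD'$, which follows from compactness of $\overline{\calD}'$ and the exterior-region sign conditions for the background, and then transfers to the perturbed solution once $\eps_1$ is small. After fixing these uniform bounds, each of the three estimates is a one-line application of the fundamental theorem of calculus / mean value theorem to the relevant smooth function, absorbing all background-dependent constants into the single constant $C = C(\calD', \Omgbg, \rbg, \phibg, \ebg)$. I would therefore present the proof as: (i) record the uniform bounds from compactness and Proposition~\ref{prop:cauchy-st}(2); (ii) note each left-hand side is a smooth function of the basic quantities with uniformly bounded derivatives on the relevant region; (iii) conclude by the mean value theorem. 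The details are routine and can be left to the reader, exactly as the excerpt does for Proposition~\ref{prop:cauchy-st} itself.
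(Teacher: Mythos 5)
Your overall strategy — express each left‐hand side as a smooth function of the basic quantities and apply the mean value theorem together with the bounds of Proposition~\ref{prop:cauchy-st}(2) — is exactly what the paper does; the paper simply records the two formulae $\varpi = \frac{r}{2}(1 + 4\rd_U r\,\rd_V r/\Omg^2) + \frac{\e^2}{2r}$ and $\frac{1-\mu}{\rd_V r} = -4\frac{\rd_U r}{\Omg^2}$ and calls the remaining details obvious.

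There is however one point where your write‐up, as phrased, has a genuine gap. For the third bound you expand $\frac{1-\mu}{\rd_V r}\rd_V(r\phi) = (1-\mu)\phi + (1-\mu)\frac{r}{\rd_V r}\rd_V\phi$ and then claim ``$1-\mu$ and $r/\rd_V r$ are thereby uniformly bounded,'' which requires a lower bound on $\rd_V r$ uniform on $\calD'$. But the hypotheses of Proposition~\ref{prop:cauchy-st} do \emph{not} grant such a bound: $\calD'$ is only assumed to be a globally hyperbolic development with compact closure in $\overline{\calQ}$, and in the application (e.g.\ the region $\calR_1$ in the proof of Theorem~\ref{thm:L-stability}) it deliberately extends slightly past the event horizon into the black hole interior, where $\rd_V r$ can degenerate or change sign. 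Your appeal to ``exterior‐region sign conditions'' is therefore not available. The fix is precisely the second displayed identity in the paper's remark: since $1-\mu = -4\rd_U r\,\rd_V r/\Omg^2$, one has $\frac{1-\mu}{\rd_V r} = -4\frac{\rd_U r}{\Omg^2}$ \emph{identically}, so $\frac{1-\mu}{\rd_V r}\rd_V(r\phi) = -4\frac{\rd_U r}{\Omg^2}\bigl(\rd_V r\,\phi + r\,\rd_V\phi\bigr)$ involves only $\Omg$ in the denominator, and $\Omg$ (equivalently $\log\Omg$) is controlled by \eqref{eq:cauchy-st}. Equivalently, in your own expansion the apparent $\rd_V r$ in the denominator of $(1-\mu)\,r/\rd_V r$ cancels against the factor of $\rd_V r$ hidden in $1-\mu$; you should make this cancellation explicit rather than invoke a lower bound on $\rd_V r$. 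With that correction, the rest of your argument (uniform bounds on $\rbg$, $\Omgbg$, $\rd_U\rbg$ from compactness, transfer to the perturbed solution for $\eps_1$ small, and mean value theorem) is fine.
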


This corollary follows from the bounds in Proposition~\ref{prop:cauchy-st}, using the formulae
\begin{align*}
	\varpi = \frac{r}{2} \left(1 +4 \frac{\rd_{U} r \rd_{V} r}{\Omg^{2}} \right) + \frac{\e^{2}}{2 r}, \qquad
	\frac{1-\mu}{\rd_{V} r} = - 4 \frac{\rd_{U} r}{\Omg^{2}}.
\end{align*}
We again omit the obvious details.

\subsection{Stability in large-$r$ regions} \label{subsec:extr-st}
Given $(U, V)$, recall that $\underline{U}(V)$ and $\underline{V}(U)$ are defined by the relations $(\underline{U}(V), V) \in \Sgm_{0}$ and $(U, \underline{V}(U)) \in \Sgm_{0}$, respectively. We define $\underline{\ups} (U)$ with respect to the background solution, i.e., $\underline{\ups} (U)= \rbg(U, \underline{V}(U))$. We also remind the reader that $\psi = r \phi$ (Definition~\ref{def.ADM.psi}), and introduce the corresponding notation $\psibg = \rbg \phibg$ for the background solution.

We first state a stability result in the domain of dependence $\underline{\calR}$ of a subset of $\Sgm_{0}$ on which the background $\rbg$ is sufficiently large (see Figure~\ref{fig:extr-st-cauchy}).

\begin{figure}[h]
\begin{center}
\def\svgwidth{250px}
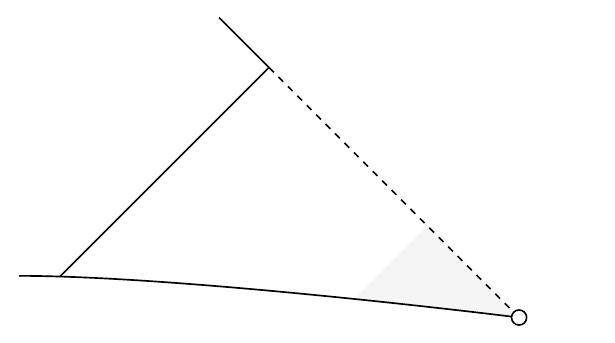 
\caption{Region $\underline{\calR}$} \label{fig:extr-st-cauchy}
\end{center}
\end{figure}

\begin{proposition}[Large-$r$ stability, Cauchy IVP] \label{prop:extr-st-cauchy}
Consider a region $\underline{\calR} \subseteq \PD$ of the form
\begin{equation*}
\underline{\calR} = \set{(U, V) \in \PD : U \leq U_{0}}
\end{equation*}
for some $U_{0} \in \bbR$ (observe that $\underline{\calR}$ is the domain of dependence of $\Sgm_{0} \cap \underline{\calR}$). In $\underline{\calR}$, assume that the background solution $(\Omgbg, \rbg, \phibg, \ebg)$ obeys 
\begin{equation*}
	\rd_{U} \rbg < 0, \quad \rd_{V} \rbg > 0,
\end{equation*}
and the following bounds for some $\rbg_{0}, \varpibg_{0}, \dlt_{\underline{\calR}} >0$ and $\underline{B} \geq 1$: 
\begin{align}
	\rbg(U_{0}, \underline{V}(U_{0})) \geq \rbg_{0}, \quad
	\varpi (U,V) \leq & \varpibg_{0}, \label{eq:extr-st-cauchy:hyp:r-m} \\
	\varpibg_{0} / \rbg_{0} \leq \dlt_{\underline{\calR}}, \quad
	\abs{\ebg} / \rbg_{0} \leq & \dlt_{\underline{\calR}}, \label{eq:extr-st-cauchy:hyp:m-e}\\
%	\underline{B}^{-1} \leq \rd_{V} \rbg (\underline{U}(V), V) \leq & \underline{B}, \\
	\underline{B}^{-1} \leq -\rd_{U} \rbg(U, \underline{V}(U)) \leq & \underline{B}, \label{eq:extr-st-cauchy:hyp:dur} \\
	\underline{B}^{-1} \leq \frac{-\rd_{U} \rbg(U, \underline{V}(U))}{- \rd_{U} \underline{\ups}(U)} \leq & \underline{B}, \label{eq:extr-st-cauchy:hyp:duups}\\
	\rbg_{0}^{-1} \abs{\psibg} (U,V) \leq & (\underline{\ups} / \rbg_{0})^{- \omg_0 +1}  \underline{B}, \label{eq:extr-st-cauchy:hyp:psi}\\
	\Abs{\frac{1-\mubg}{\rd_{V} \rbg} \rd_{V} \psibg}(U,V) \leq & (\rbg / \rbg_{0})^{-\omg_0} \underline{B}. \label{eq:extr-st-cauchy:hyp:dvpsi}
\end{align}
Define the initial difference size $\eps_{2}$ on $\Sgm_{0} \cap \underline{\calR}$ by
\begin{align*}
	\eps_{2} 
= \sup_{\Sgm_{0} \cap \underline{\calR}} \bigg( &
	\rbg_{0}^{-1} \abs{\varpi - \varpibg}
	+ \rbg^{-1} \abs{r - \rbg}
	+ \Abs{\log (-\rd_{U} r) - \log (-\rd_{U} \rbg)}
	+ \Abs{\log \frac{\rd_{V} r}{1-\mu} - \log \frac{\rd_{V} \rbg}{1-\mubg}} \\
	& + (\rbg / \rbg_{0})^{\omg_0} \Abs{\frac{1-\mu}{\rd_{V} r}\rd_{V} \psi - \frac{1-\mubg}{\rd_{V} \rbg} \rd_{V} \psibg} 
	+ \rbg_{0}^{-1} \abs{\psi - \psibg} \bigg)
	+ \rbg_{0}^{-2} \int_{\Sgm_{0} \cap \underline{\calR}} \abs{\psi - \psibg} (\rd_{\rho} \rbg) \, \ud \rho 
	+ \rbg_{0}^{-1} \abs{\e - \ebg}.
\end{align*}
For sufficiently small $\dlt_{\underline{\calR}} > 0$ (depending on $\omg_0, \underline{B}$) and $\eps_{2} > 0$ (depending on $\omg_0, \underline{B}, \dlt_{\underline{\calR}}$), there exists $C = C(\omg_0, \underline{B}) > 0$ such that the following bounds hold in $\underline{\calR}$:
\begin{equation} \label{eq:extr-st-cauchy}
\begin{aligned}
	\rbg_{0}^{-1} \abs{\varpi - \varpibg} (U, V) \leq & C \eps_{2}, \\
	\rbg^{-1} \abs{r - \rbg} (U, V) \leq & C \eps_{2}, \\
	\Abs{\log (-\rd_{U} r) - \log (- \rd_{U} \rbg)} (U, V)  \leq & C \eps_{2}, \\
	\Abs{\log \frac{\rd_{V} r}{1-\mu}- \log \frac{\rd_{V} \rbg}{1-\mubg}} (U, V) \leq & C \eps_{2}, \\
	\rbg_{0}^{-1} \abs{\psi - \psibg} (U, V) \leq & \rbg_{0}^{-1} \abs{\psi - \psibg} (U, \underline{V}(U))+ C (\underline{\ups} / \rbg_{0})^{-\omg_0+1} (U) \eps_{2}, \\
	\Abs{\frac{1-\mu}{\rd_{V} r} \rd_{V} \psi - \frac{1-\mubg}{\rd_{V} \rbg} \rd_{V} \psibg} (U, V)\leq & C (\rbg / \rbg_{0})^{-\omg_0} (U, V) \eps_{2} .	
\end{aligned}
\end{equation}
Furthermore, we have
\begin{equation} \label{eq:extr-st-cauchy:L}
	\Abs{\int_{-\infty}^{-1} 2 M(U) \Phi(U) \Gmm(U) \, \ud U - \int_{-\infty}^{-1} 2 \overline{M}(U) \overline{\Phi}(U) \overline{\Gmm}(U) \, \ud U} \leq C \rbg_{0}^{3} \eps_{2}.
\end{equation}
\end{proposition}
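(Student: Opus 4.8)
The plan is to prove Proposition~\ref{prop:extr-st-cauchy} by a bootstrap/continuity argument in the region $\underline{\calR}$, closely paralleling the proof of Proposition~\ref{prop:bg-large-r} but now tracking \emph{differences} of the two solutions rather than the solution itself. First I would set up a bootstrap assumption: fix a large constant $A$ (to be chosen depending only on $\omg_0$ and $\underline{B}$) and assume that the six difference quantities appearing in \eqref{eq:extr-st-cauchy} hold with $C$ replaced by $2A$ on the subregion $\underline{\calR} \cap \set{U \leq U_{f}}$. Under this assumption, together with the background bounds \eqref{eq:extr-st-cauchy:hyp:r-m}--\eqref{eq:extr-st-cauchy:hyp:dvpsi} and smallness of $\dlt_{\underline{\calR}}$ and $\eps_2$, one shows the same bounds hold with $C$ replaced by $A$, which closes the bootstrap via continuous induction (starting from $\Sgm_{0} \cap \underline{\calR}$ where the bounds hold with $C = 1$ by definition of $\eps_2$). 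The fact that $1-\mu$ and $1 - \mubg$ are comparable to $1$ throughout $\underline{\calR}$ (as in \eqref{eq:bg-large-r:pf-mu}) is a consequence of \eqref{eq:extr-st-cauchy:hyp:r-m}--\eqref{eq:extr-st-cauchy:hyp:m-e} and is used repeatedly.

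The key steps to close the bootstrap: (1) Difference of $\varpi$: subtract the transport equations $\rd_U \varpi = \tfrac12 \tfrac{1-\mu}{\rd_U r} r^2 (\rd_U \phi)^2$ for the two solutions, integrate from $\Sgm_0$, and use the bootstrap bounds on the other differences plus the background decay of $\psi$ to control the quadratic-in-$\rd_U\phi$ source; the factor $\rbg_0^{-1}$ makes the estimate dimensionally consistent. (2) Difference of $\log(-\rd_U r)$: subtract $\rd_V \log (-\rd_U r) = \tfrac{2(\varpi - \e^2/r)}{r^2(1-\mu)} \rd_V r$ for the two solutions and integrate in $V$ from $\Sgm_0$; the key is that $\int \tfrac{2 \varpi}{r^2}(-\rd_U r)\,\ud U$ is bounded by $C \varpibg_0 / \rbg_0 \leq C\dlt_{\underline{\calR}}$ (change of variables $U \mapsto r$), so the difference picks up only an $O(\dlt_{\underline{\calR}})$ amplification of already-controlled quantities plus the data difference. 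Similarly for $\log \tfrac{\rd_V r}{1-\mu}$ using \eqref{eq:EMSF-ray}. (3) Difference of $r$: integrate the difference of $\rd_U r$, using the $\log(-\rd_U r)$ difference bound and $|e^\vtht - 1| \leq |\vtht| e^{|\vtht|}$; combine with the $V$-direction via $\rd_V r$, relating $r(U,V)$ to $\underline\ups(U)$ through \eqref{eq:extr-st-cauchy:hyp:duups}. (4) Differences of $\psi$ and $\tfrac{1-\mu}{\rd_V r}\rd_V \psi$: subtract equation \eqref{eq:dUdvpsi} for the two solutions. The principal linear term has the same structure as in the proof of Proposition~\ref{prop:bg-large-r}, so Gr\"onwall applies; the source terms involve differences of the coefficients $\tfrac{\rd_U \rd_V r}{\rd_V r}$ and $\tfrac{\rd_U\rd_V r}{r \rd_V r}$, which are controlled by steps (1)--(3) and carry the favorable weight $(\rbg/\rbg_0)^{-\omg_0}$ after integrating $\int \tfrac{-\rd_U\underline\ups}{\underline\ups^{\omg_0 - 1}}\,\ud U \leq C \rbg_0^{-\omg_0+2}/(\omg_0 - 2)$, exactly as in \eqref{eq:bg-large-r:pf1}. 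The integral term $\rbg_0^{-2}\int_{\Sgm_0 \cap \underline{\calR}} |\psi - \psibg|(\rd_\rho \rbg)\,\ud\rho$ in $\eps_2$ is precisely what is needed to bound the first term on the RHS of the Gr\"onwall estimate for $\tfrac{1-\mu}{\rd_V r}\rd_V\psi$ propagated from data.

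Finally, for \eqref{eq:extr-st-cauchy:L}, I would express $M(U)\Phi(U)\Gmm(U) = \lim_{r\to\infty}(\varpi \cdot r\phi \cdot \tfrac{\rd_U r}{1-\mu})(U,\cdot)$ and write the integrand difference as a telescoping sum of three terms, each a difference in one factor times bounded copies of the others. The $r\phi = \psi$ factor difference is controlled in $L^1_U$ using the $\psi$-difference bound in \eqref{eq:extr-st-cauchy} integrated against $-\rd_U r \sim -\rd_U\underline\ups$, which converges because $\underline\ups(U) \to \infty$ with $\omg_0 > 2$; the $\varpi$ and $\Gmm$ factor differences are bounded pointwise by $C\rbg_0\eps_2$ (note $\varpi$ has dimension of $\rbg_0$) and $C\eps_2$ respectively, and after multiplying by $|\psibg| \lesssim \rbg_0 \underline\ups^{-\omg_0+1}\cdots$ and integrating in $U$ one again gets convergence, yielding the overall bound $C \rbg_0^3 \eps_2$ (the three powers of $\rbg_0$ coming from $\varpi \sim \rbg_0$, $\psi \sim \rbg_0$, and the $U$-integration scale $\sim \rbg_0$). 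The main obstacle I anticipate is bookkeeping the dimensional weights consistently through the coupled Gr\"onwall estimates in step (4) — ensuring that every source term genuinely carries either a smallness factor ($\dlt_{\underline{\calR}}$ or $\eps_2$) or the decaying weight $(\rbg/\rbg_0)^{-\omg_0}$, and that the constant $C$ never depends on $U_0$ or the profile of the background — but this is structurally identical to the a priori estimate already carried out in Proposition~\ref{prop:bg-large-r}, so no genuinely new difficulty arises.
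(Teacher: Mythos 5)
Your overall scheme --- a bootstrap on the six quantities appearing in \eqref{eq:extr-st-cauchy}, closed via difference transport equations with the large-$r$ smallness $\dlt_{\underline{\calR}}$ supplying the gain --- is the right one, and it is how the paper proves the parallel characteristic result Proposition~\ref{prop:extr-st} (from which the Cauchy case follows by replacing $(U_0, V)$, $(U, V_0)$, $\rbg(U, V_0)$ by $(\underline{U}(V), V)$, $(U, \underline{V}(U))$, $\underline{\ups}(U)$); your Steps~(2)--(4) and the derivation of \eqref{eq:extr-st-cauchy:L} are consistent with that argument. However, Step~(1) fails as written. You propagate $\varpi - \varpibg$ by the $\rd_U\varpi$-equation, whose source is quadratic in $r\rd_U\phi$, but neither the background hypotheses (\eqref{eq:extr-st-cauchy:hyp:psi} and \eqref{eq:extr-st-cauchy:hyp:dvpsi} control $\psibg$ and $\rd_V\psibg$ only, \emph{not} $\rd_U\psibg$ or $\rd_U\phibg$) nor your bootstrap quantities (the six items of \eqref{eq:extr-st-cauchy}, none of which is a $\rd_U$-derivative of $\psi-\psibg$) give any pointwise or $L^2$-in-$U$ smallness of $\rd_U(\phi - \phibg)$. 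Concretely, expanding $\rd_U\varpidf$ into a term quadratic in $\rd_U\phidf$, a cross term, and a coefficient difference, the first two cannot be made $O(\eps_2)$ from the available information: the best crude bound, the mass identity $\int \frac{1-\mu}{-\rd_U r}r^2(\rd_U\phi)^2\,\ud U' = \varpi(\underline{U}(V),V) - \varpi(U,V) \leq \varpibg_0$ and likewise for the background, yields only $|\varpidf| = O(\dlt_{\underline{\calR}}\rbg_0)$, not $O(\eps_2\rbg_0)$. Invoking ``background decay of $\psi$'' does not help, since $\rd_U\phi$ is not controlled by $\psibg$.

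The fix, and the paper's choice, is to propagate $\varpi - \varpibg$ by the $\rd_V\varpi$-equation instead, integrating along $C_U$ from $(U, \underline{V}(U)) \in \Sgm_0$. The source is then quadratic in $\rd_V\psi - r^{-1}(\rd_V r)\psi$: the difference of this combination is controlled pointwise by the bootstrap (through the $\rd_V\psi$-difference and $\psi$-difference components of $\eps_2$), the background part is controlled pointwise by \eqref{eq:extr-st-cauchy:hyp:dvpsi} and in $L^2_V$ by the mass identity $\int_{\underline{V}(U)}^\infty \frac{1-\mubg}{\rd_V\rbg}\bigl(\rd_V\psibg - \rbg^{-1}(\rd_V\rbg)\psibg\bigr)^2\,\ud V' \leq \varpibg_0$, and Cauchy--Schwarz then gives the required $\sqrt{\dlt_{\underline{\calR}}}$ gain on the cross term, while the purely quadratic-in-difference term is $O(\eps_2^2)$. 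With this one change the rest of your outline goes through as written. (A minor further remark: the paper uses two bootstrap constants $A_0 \ll A_1$, fixing $A_0$ universally from the $\varpi$-step before choosing $A_1$; your single constant $A$ is probably salvageable but requires some care to ensure the $\varpi$-improvement does not feed back a constant depending on $A$.)
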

For \eqref{eq:extr-st-cauchy:L}, we remind the reader that $M(U) = \lim_{V \to \infty} \varpi(U, V)$, $\Phi(U) = \lim_{V \to \infty} \psi(U, V)$, $\Gmm(U) = \lim_{V \to \infty} \frac{\rd_{U} r}{1-\mu}$ in the coordinates $(U, V)$ used in Proposition~\ref{prop:extr-st-cauchy}, and $\overline{M} (U)$, $\overline{\Phi}(U)$, $\overline{\Gmm}(U)$ are defined analogously.
\begin{remark} \label{rem:extr-st-cauchy}
Note that the initial data difference $\eps_{2}$ involves an \emph{integrated} norm of $\psi - \psibg$, which is weaker than the pointwise norm in $d_{1, \omg_0}^{+}(\Tht, \overline{\Tht})$, i.e., 
\begin{equation} \label{eq:extr-st-cauchy:eps2-int}
	\rbg_{0}^{-2} \int_{\Sgm_{0} \cap \underline{\calR}} \abs{\psi - \psibg} (\rd_{\rho} r) \, \ud \rho \leq C \sup_{\Sgm_{0} \cap \underline{\calR}} (\rbg / \rbg_{0})^{\omg_0-1} \rbg_{0}^{-1} \abs{\psi - \psibg}.
\end{equation}
The use of the integrated norm is motivated the instability argument in Section~\ref{sec:instability}. More precisely, we construct a perturbation $\Tht = \Tht_{\eps}$ of $\overline{\Tht}$ by placing a smooth bump function of amplitude $\eps$ in the data for $\rd_{U}(r \phi)$ around $\rho = \rho_{pert}$, where $\rho_{pert}$ is a large parameter chosen at the end of the argument. From this point of view, the RHS of \eqref{eq:extr-st-cauchy:eps2-int} is not very useful as it depends on the choice of $\rho_{pert}$. On the other hand, observe that the LHS is uniformly small with respect to $\rho_{pert}$. For more details, we refer to Lemma~\ref{lem:inst-s}.
\end{remark}

Next, we state a stability result in a characteristic rectangle $\calR$ (extending all the way to $\NI$) situated in a region where $\rbg$ is sufficiently large (see Figure~\ref{fig:extr-st}).
\begin{figure}[h]
\begin{center}
\def\svgwidth{250px}
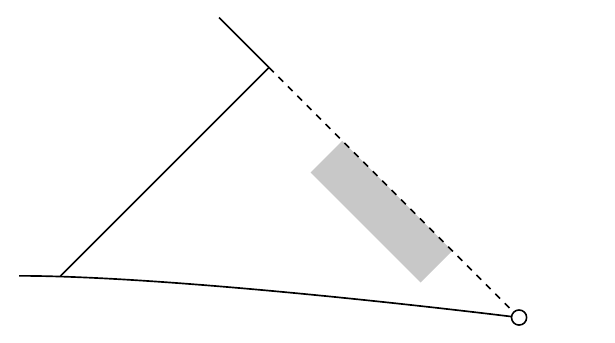 
\caption{Region $\underline{\calR}$} \label{fig:extr-st}
\end{center}
\end{figure}

\begin{proposition}[Large-$r$ stability, characteristic IVP] \label{prop:extr-st}
Given $U_{0}, U_{1}, V_{0} \in \bbR$ such that $U_{0} \leq U_{1}$, consider a characteristic rectangle $\calR \subseteq \PD$ of the form 
\begin{equation*}
	\calR = \set{(U, V) \in \PD : U_{0} \leq U \leq U_{1}, \ V \geq V_{0}}.
\end{equation*}
In $\calR$, assume that the background solution $(\Omgbg, \rbg, \phibg, \ebg)$ obeys 
\begin{equation*}
	\rd_{U} \rbg < 0, \quad \rd_{V} \rbg > 0,
\end{equation*}
and the following bounds for some $\rbg_{0}, \varpibg_{0}, \dlt_{\calR} > 0$, $B \geq 1$ and $U_{\ast} \in [U_{0}, U_{1}]$:
\begin{align}
	\rbg(U_{1}, V_{0}) \geq \rbg_{0}, \quad 
	\varpibg(U,V) \leq & \varpibg_{0}, \label{eq:extr-st:hyp:r-m} \\
	\rbg_{0}^{-1} \varpibg_{0} \leq \dlt_{\calR}, \quad
	\rbg_{0}^{-1} \abs{\ebg} \leq & \dlt_{\calR}, \label{eq:extr-st:hyp:m-e} \\
%	B^{-1} \leq \rd_{V} \rbg (U_{0}, V) \leq & B, \quad
	B^{-1} \leq -\rd_{U} \rbg (U, V_{0}) \leq & B, \label{eq:extr-st:hyp:dur} \\
	B^{-1} \leq \frac{-\rd_{U} \rbg (U, V_{0})}{- \rd_{U} \underline{\ups}(U)} \leq & B \quad \hbox{ when } U < U_{\ast}, \label{eq:extr-st:hyp:duups} \\
	\rbg_{0}^{-1} \abs{\psibg}(U,V) \leq & \left\{
	\begin{array}{cl}
	B & \hbox{ when } U \geq U_{\ast} , \\
	(\underline{\ups} / \rbg_{0})^{-\omg_0+1}  B & \hbox{ when } U < U_{\ast}, \\
	\end{array} \right. \label{eq:extr-st:hyp:psi} \\
	\Abs{\frac{1-\mubg}{\rd_{V} \rbg} \rd_{V} \psibg}(U,V) \leq & (\rbg / \rbg_{0})^{-\omg_0}  B. \label{eq:extr-st:hyp:dvpsi}
\end{align}
Let $\uC_{in} = \set{(U, V_{0}) : U \geq U_{0}}$ and $C_{out} = \set{(U_{0}, V) : V \geq V_{0}}$ be the null curves constituting the past boundary of $\calR$. We define the initial difference size $\eps_{3}$ on $\uC_{in} \cup C_{out}$ as
\begin{align*}
	\eps_{3} =
	& \rbg_{0}^{-1} \abs{\e - \ebg}  + \sup_{\uC_{in}} \left(
	\rbg_{0}^{-1} \abs{\varpi - \varpibg}
	+ \rbg^{-1} \abs{r - \rbg} \right) 
	 + \sup_{\uC_{in} } \Abs{\log (-\rd_{U} r) - \log (-\rd_{U} \rbg)} \\
	& +  \sup_{\uC_{in} \cap \set{U \geq U_{\ast}}} \rbg_{0}^{-1} \abs{\psi - \psibg}
	 + \sup_{\uC_{in} \cap \set{U < U_{\ast}}} \rbg_{0}^{-1} \abs{\psi - \psibg} 
	 + \rbg_{0}^{-2} \int_{\uC_{in} \cap \set{U < U_{\ast}}} \abs{\psi - \psibg} (- \rd_{U} \underline{\ups})  \, \ud U \\
	 & + \sup_{C_{out}} \bigg( 
	 \Abs{\log \frac{\rd_{V} r}{1-\mu} - \log \frac{\rd_{V} \rbg}{1-\mubg}} 
	 + (\rbg / \rbg_{0})^{\omg_0} \Abs{\frac{1-\mu}{\rd_{V} r}\rd_{V} \psi - \frac{1-\mubg}{\rd_{V} \rbg} \rd_{V} \psibg} \bigg) .
\end{align*}
For sufficiently small $\dlt_{\calR}$ (depending on $\omg_0, B$) and $\eps_{3}$ (depending on $\omg_0, B, U_{1} - U_{\ast}, \dlt_{\calR}$),  there exists $C = C(\omg_0, B, U_{1} - U_{\ast}) > 0$ such that 
\begin{equation} \label{eq:extr-st} 
\begin{aligned}
	\rbg_{0}^{-1} \abs{\varpi - \varpibg} (U, V) \leq & C \eps_{3}, \\
	\rbg^{-1} \abs{r - \rbg} (U, V) \leq & C \eps_{3}, \\
	\Abs{\log (-\rd_{U} r) - \log (-\rd_{U} \rbg)} (U, V) \leq & C \eps_{3}, \\
	\Abs{\log \frac{\rd_{V} r}{1-\mu} - \log \frac{\rd_{V} \rbg}{1-\mubg}} (U, V) \leq & C \eps_{3}, \\
	\rbg_{0}^{-1} \abs{\psi - \psibg} (U, V) \leq & \left\{
	\begin{array}{cl}
	C \eps_{3} & \hbox{ when } U \geq U_{\ast}, \\
	\rbg_{0}^{-1}\abs{\psi - \psibg}(U, V_{0}) + C (\underline{\ups} / \rbg_{0})^{-\omg_0+1} (U)  \eps_{3} & \hbox{ when } U < U_{\ast}, \\
	\end{array} \right. \\
	\Abs{\frac{1-\mu}{\rd_{V} r} \rd_{V} \psi - \frac{1-\mubg}{\rd_{V} \rbg} \rd_{V} \psibg} (U, V) \leq & C (\rbg / \rbg_{0})^{-\omg_0} (U, V) \eps_{3} .	
\end{aligned}
\end{equation}
Furthermore, along $\NI$ we have
\begin{equation} \label{eq:extr-st:L}
	\Abs{\int_{U_{0}}^{U_{1}} 2 M(U) \Phi(U) \Gmm(U) \, \ud U - \int_{U_{0}}^{U_{1}} 2 \overline{M}(U) \overline{\Phi}(U) \overline{\Gmm}(U) \, \ud U} \leq C \rbg_{0}^{3} \eps_{3}.
\end{equation}
\end{proposition}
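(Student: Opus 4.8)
The plan is to adapt, essentially line by line, the proof of Proposition~\ref{prop:extr-st-cauchy}, with the past null boundary $\uC_{in} \cup C_{out}$ of $\calR$ playing the role of $\Sgm_{0}$ there and with the characteristic constraints (the Raychaudhuri equations \eqref{eq:EMSF-ray-orig}) furnishing the data along $\uC_{in}$ and $C_{out}$ in place of the Cauchy constraints. First I would reduce to the dimensionless normalization $\rbg_{0} = 1$ using the scaling and translation invariance of the equations, of all hypotheses and of the conclusions (cf.\ Remark~\ref{rem:bg-uv-dim0}), so that the final constant $C$ may depend only on $\omg_{0}$, $B$ and $U_{1} - U_{\ast}$. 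The full statement would then be proved by a single continuity argument in $V$: one shows that the set of $V_{f} \geq V_{0}$ for which (a) the maximal development of the perturbed data contains $\calR \cap \set{V \leq V_{f}}$ with the stated $(U,V)$-identification and (b) the six estimates in \eqref{eq:extr-st} hold there with $C$ replaced by a fixed large multiple, is open, closed and nonempty; openness uses Proposition~\ref{prop:cauchy-st} together with the fact that, under the bootstrap assumptions, $r$ stays comparable to $\rbg \geq 1$ and $\Omg$ to $\Omgbg$, so no breakdown can occur in $\calR$.

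For the geometric differences, taking $\dlt_{\calR}$ small first forces $\mu, \mubg = O(\dlt_{\calR})$ throughout $\calR$ by the monotonicity of $\varpi$ and $r$ (as in the derivation of \eqref{eq:bg-large-r:pf-me}--\eqref{eq:bg-large-r:pf-mu}), hence $1-\mu, 1-\mubg \in [\tfrac12, 2]$. Then $\log(-\rd_{U} r) - \log(-\rd_{U} \rbg)$ is bounded by integrating the difference of the two copies of $\rd_{V}\log(-\rd_{U}r) = \tfrac{2(\varpi - \e^{2}/r)}{r^{2}}\tfrac{\rd_{V}r}{1-\mu}$ (from \eqref{eq:EMSF-r-phi-m}) in $V$ from $\uC_{in}$; $\log\tfrac{\rd_{V}r}{1-\mu} - \log\tfrac{\rd_{V}\rbg}{1-\mubg}$ is bounded by integrating the difference of $\rd_{U}\log\tfrac{\rd_{V}r}{1-\mu} = \tfrac{r}{\rd_{U}r}(\rd_{U}\phi)^{2}$ (from \eqref{eq:EMSF-ray}) in $U$ from $C_{out}$, after first producing, within the same bootstrap and by the argument of Proposition~\ref{prop:bg-large-r}, auxiliary bounds on $\tfrac{1}{\rd_{U}r}\rd_{U}\psi$ and its barred analog; $\varpi - \varpibg$ is bounded by integrating the difference of $\rd_{V}\varpi = \tfrac12\tfrac{1-\mu}{\rd_{V}r}\big(\rd_{V}\psi - \tfrac{\rd_{V}r}{r}\psi\big)^{2}$ in $V$ from $\uC_{in}$, using the $r$-weighted bound on $\tfrac{1-\mu}{\rd_{V}r}\rd_{V}\psi$ and the pointwise bound on $\psi$; and $r^{-1}(r - \rbg)$ is bounded by integrating $\rd_{V}r - \rd_{V}\rbg$ in $V$ from $\uC_{in}$. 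The point throughout is that every integrand is arranged to carry an explicit factor of $\rd_{V}r$ (or $-\rd_{U}r$) — in particular, whenever $\tfrac{\rd_{V}r}{1-\mu}$ or $\tfrac{\rd_{U}r}{1-\mu}$ appears it is paired with $\ud V$ (or $\ud U$) and rewritten as $\ud r$ — so the $V$- and $U$-integrals become $\ud r$-integrals that converge by the $r$-weights with $\omg_{0} > 2$; the ``diagonal'' contributions (the barred--unbarred differences of coefficients and quadratic sources, which involve the very quantities being bootstrapped) then come with a factor of $\dlt_{\calR}$ or $B$ and are absorbed for $\dlt_{\calR}$ small, while the data on $\uC_{in}$ and $C_{out}$ supply the $\eps_{3}$.

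\textbf{The scalar-field differences are the heart of the matter, and the source term near spatial infinity is the main obstacle.} Writing $D := \tfrac{1-\mu}{\rd_{V}r}\rd_{V}\psi - \tfrac{1-\mubg}{\rd_{V}\rbg}\rd_{V}\psibg$, one takes the difference of the two copies of the $(1-\mu)$-weighted version of \eqref{eq:dUdvpsi} and obtains a transport equation $\rd_{U}D = -cD + (\text{coeff.\ diff.})\tfrac{1-\mu}{\rd_{V}r}\rd_{V}\psi + c'(\psi - \psibg) + (\text{coeff.\ diff.})\psibg$, where $c$ is $L^{1}$ in $U$ over $[U_{0},U_{1}]$ with norm $O(\dlt_{\calR})$ plus controlled terms, and $c' = \tfrac{\rd_{U}\rd_{V}r}{r\rd_{V}r} = \tfrac{2(\varpi - \e^{2}/r)}{r^{3}}\tfrac{\rd_{U}r}{1-\mu}$ satisfies $|c'| \lesssim r^{-3}$. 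The $-cD$ term is absorbed by Gr\"onwall in $U$; the two ``coefficient-difference'' terms are $\lesssim r^{-\omg_{0}}$ times (geometric bootstrap constant)$\cdot\eps_{3}$ and pose no problem. The delicate term is $c'(\psi - \psibg)$: near spatial infinity $|\psi - \psibg|$ is controlled only by its value on $\uC_{in}$ plus a decaying remainder. Here one repeats Step~2 of the proof of Proposition~\ref{prop:bg-large-r}, using $|c'| \lesssim r^{-3}$, the comparison $-\rd_{U}\underline{\ups} \sim -\rd_{U}r$ (from \eqref{eq:extr-st:hyp:dur}--\eqref{eq:extr-st:hyp:duups} and the $\log$-difference bound), and the \emph{integrated} norm $\int_{\uC_{in}\cap\set{U < U_{\ast}}}|\psi - \psibg|(-\rd_{U}\underline{\ups})\,\ud U \leq \eps_{3}$ from the definition of $\eps_{3}$, to conclude that the $U$-integral of $c'(\psi - \psibg)$ is $\lesssim r^{-\omg_{0}}\eps_{3}$ for $U < U_{\ast}$, and also $\lesssim r^{-\omg_{0}}\eps_{3}$ for $U \geq U_{\ast}$ where one uses $|\psi - \psibg| \lesssim \eps_{3}$ at the cost of letting $C$ depend on $U_{1} - U_{\ast}$. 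This gives $|D| \lesssim r^{-\omg_{0}}\eps_{3}$, the last line of \eqref{eq:extr-st}. Then $\psi - \psibg$ itself is recovered by integrating $\rd_{V}(\psi - \psibg) = \tfrac{\rd_{V}r}{1-\mu}D + \big(\tfrac{\rd_{V}r}{1-\mu} - \tfrac{\rd_{V}\rbg}{1-\mubg}\big)\tfrac{1-\mubg}{\rd_{V}\rbg}\rd_{V}\psibg$ in $V$ from $\uC_{in}$: again every $\tfrac{\rd_{V}r}{1-\mu}$ pairs with $\ud V$ to give $\ud r$, $\int_{r}^{\infty}(r')^{-\omg_{0}}\,\ud r' = \tfrac{1}{\omg_{0}-1}r^{-\omg_{0}+1}$, and this reproduces the bound $|\psi - \psibg| \lesssim \eps_{3}$ for $U \geq U_{\ast}$ and $|\psi - \psibg|(U,V) \leq |\psi - \psibg|(U,V_{0}) + C\underline{\ups}^{-\omg_{0}+1}\eps_{3}$ for $U < U_{\ast}$.

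Finally, \eqref{eq:extr-st:L} follows by letting $V \to \infty$ in \eqref{eq:extr-st}: this gives $|M - \overline{M}| \lesssim \eps_{3}$ and $|\Gmm - \overline{\Gmm}| \lesssim \eps_{3}$ pointwise in $U$ (the latter since $\Gmm = \tfrac{\rd_{U}r}{1-\mu}$ is a smooth function of $\rd_{U}r, r, \varpi, \e$, and one checks as in the second paragraph that $\Gmm, \overline{\Gmm}$ are bounded), together with $|\Phi - \overline{\Phi}| \lesssim \eps_{3}$ for $U \geq U_{\ast}$ and $|\Phi - \overline{\Phi}|(U) \leq |\psi - \psibg|(U,V_{0}) + C\underline{\ups}^{-\omg_{0}+1}\eps_{3}$ for $U < U_{\ast}$. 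Decomposing $2M\Phi\Gmm - 2\overline{M}\,\overline{\Phi}\,\overline{\Gmm}$ as a telescoping sum of three differences and integrating over $[U_{0},U_{1}]$, the portion $U \geq U_{\ast}$ contributes $\lesssim (U_{1} - U_{\ast})\eps_{3}$ (using $\overline{M} \lesssim \varpibg_{0} \lesssim \dlt_{\calR}$ and the bounds $|\overline{\Phi}|, |\Gmm| \lesssim B$), and the portion $U < U_{\ast}$ contributes $\lesssim \eps_{3}$, using $\int_{\set{U < U_{\ast}}}|\psi - \psibg|(U,V_{0})\,\ud U \lesssim \int_{\set{U < U_{\ast}}}|\psi - \psibg|(-\rd_{U}\underline{\ups})\,\ud U \lesssim \eps_{3}$ and $\int_{\set{U < U_{\ast}}}\underline{\ups}^{-\omg_{0}+1}\,\ud U < \infty$ (both using $-\rd_{U}\underline{\ups} \gtrsim 1$ and $\omg_{0} > 2$). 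Undoing the initial rescaling restores the factor $\rbg_{0}^{3}$. As indicated, the single genuinely load-bearing point is the treatment of $c'(\psi - \psibg)$ above, which is exactly why $\eps_{3}$ is defined using the integrated $\uC_{in}$-norm of $\psi - \psibg$ rather than only a pointwise norm (cf.\ Remark~\ref{rem:extr-st-cauchy}).
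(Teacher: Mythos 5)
Your overall architecture is right and closely parallels the paper's: reduce to $\rbg_{0} = 1$ by scaling, run a continuity/bootstrap argument in $V$ over $\calR$, use the modified-mass monotonicity and the $r$-weights to make all $\ud V$-integrals converge, and — crucially — exploit the \emph{integrated} $\uC_{in}$-norm of $\psi - \psibg$ to handle the transport of $\tfrac{1-\mu}{\rd_{V}r}\rd_{V}\psi - \tfrac{1-\mubg}{\rd_{V}\rbg}\rd_{V}\psibg$ across the low-$r$ part of $\uC_{in}$. Your treatment of that transport equation and of the final $\NI$ estimate match the paper's Steps~6 and 7.

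However, there is a genuine gap in the proposed estimate for $\log\tfrac{\rd_{V}r}{1-\mu} - \log\tfrac{\rd_{V}\rbg}{1-\mubg}$. You propose to integrate the Raychaudhuri equation $\rd_{U}\log\tfrac{\rd_{V}r}{1-\mu} = \tfrac{r}{\rd_{U}r}(\rd_{U}\phi)^{2}$ in $U$ from $C_{out}$. The source is \emph{quadratic} in $\rd_{U}\phi$, so taking the difference produces the factor
\begin{equation*}
(\rd_{U}\phi)^{2} - (\rd_{U}\phibg)^{2} = \bigl(\rd_{U}\phi - \rd_{U}\phibg\bigr)\bigl(\rd_{U}\phi + \rd_{U}\phibg\bigr),
\end{equation*}
and while the sum factor is small (of order $\dlt_{\calR}$), the difference factor $\rd_{U}\phi - \rd_{U}\phibg$ is not controlled by $\eps_{3}$: the definition of $\eps_{3}$ involves only $\abs{\psi - \psibg}$ pointwise (and its integral) on $\uC_{in}$, which says nothing about $\abs{\rd_{U}(\psi - \psibg)}$ there. (As a toy example, $\psi - \psibg = \eps\sin(\eps^{-1}\rho)$ has $\abs{\psi - \psibg} = O(\eps)$ but $\abs{\rd_{U}(\psi - \psibg)} = O(1)$.) Nor can $\rd_{U}\phi - \rd_{U}\phibg$ be extracted from the $\varpi$- and $\rd_{U}r$-differences via the Raychaudhuri constraint on $\uC_{in}$, since that constraint only yields a bound on the difference of $(\rd_{U}\phi)^{2}$, not on $\rd_{U}\phi$ itself. ``Auxiliary bounds on $\tfrac{1}{\rd_{U}r}\rd_{U}\psi$ and its barred analog,'' as you propose, control each copy separately by $O(\dlt_{\calR})$ — not their difference by $O(\eps_{3})$. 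This gap is load-bearing: without the $\kpp$-bound you cannot bound $\rd_{V}r - \rd_{V}\rbg$ and hence cannot close the $r - \rbg$ estimate either.

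The paper sidesteps the problem by decomposing $\log\tfrac{\rd_{V}r}{1-\mu} = \log\rd_{V}r - \log(1-\mu)$ and propagating $\log\rd_{V}r$ via the \emph{wave equation}
\begin{equation*}
\rd_{U}\log\rd_{V}r = \frac{2(\varpi - \tfrac{\e^{2}}{r})}{r^{2}}\,\frac{\rd_{U}r}{1-\mu},
\end{equation*}
whose right-hand side is purely geometric: $\varpi$, $r$, $\e$, $\rd_{U}r$, $\mu$ are all quantities whose differences are tracked (pointwise, of size $O(\eps_{3})$ up to Gr\"onwall/bootstrap factors), so the source difference is $O(\eps_{3}) \cdot \tfrac{-\rd_{U}\rbg}{\rbg^{2}}$ and integrates cleanly. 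The $\log(1-\mu)$ difference then follows algebraically from the $\varpi$, $r$ and $\e$ differences. Switching your evolution equation for this one quantity — and adding $\log\rd_{V}r - \log\rd_{V}\rbg$ (or, equivalently, $\rd_{V}r/\rd_{V}\rbg - 1$) to the bootstrap — is exactly what is needed to close the argument.
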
  

Propositions~\ref{prop:extr-st-cauchy} and \ref{prop:extr-st} are proved in essentially the same manner. In what follows, we give a detailed proof of the characteristic IVP case (Proposition~\ref{prop:extr-st}), and only briefly sketch the proof of the Cauchy case (Proposition~\ref{prop:extr-st-cauchy}).

\begin{proof}[Proof of Proposition~\ref{prop:extr-st}]
Without loss of generality, we may assume that $U_{\ast} = V_{0} = 0$. Furthermore, since Proposition~\ref{prop:extr-st} is formulated so that the assumptions and the conclusions are invariant under the scaling transformation $(\Omg, r, \phi, \e) \mapsto (a \Omg, a r, \phi, a \e)$, $(\Omgbg, \rbg, \phibg, \ebg) \mapsto (a \Omgbg, a \rbg, \phibg, a \ebg)$, $(\rbg_{0}, \varpibg_{0}, U, V) \mapsto (a \rbg_{0}, a \varpibg_{0}, aU, aV)$ (for any $a > 0$), \textbf{we will also assume that $\rbg_{0} = 1$}. Notice that for $\varpi$ large, one can view this rescaling as transforming a ``large-$\rbg$ regime'' to a ``small-$\varpibg$ and $\ebg$ regime''.

\pfstep{Step~1: Bounds on the background geometry}
We begin by deriving some bounds for $\rd_{U} \rbg$ in $\calR$. For a sufficiently small $\dlt_{\calR} > 0$ (as a universal constant), we claim that
\begin{align} 
	(2B)^{-1} \leq - \rd_{U} \rbg \leq & 2 B,	\label{eq:extr-st:pf:durbg}  \\
	(2B )^{-1} (-\rd_{U} \underline{\ups}) \leq - \rd_{U} \rbg \leq & 2 B (-\rd_{U} \underline{\ups}) \quad \hbox{ when } U < 0.	\label{eq:extr-st:pf:durbg-dups}  
\end{align}
By \eqref{eq:extr-st:hyp:r-m} and \eqref{eq:extr-st:hyp:m-e}, note that
\begin{align}
	\abs{\rd_{V} \log \rd_{U} \rbg }
	= \Abs{\frac{2 (\varpibg - \frac{\ebg^{2}}{\rbg})}{\rbg^{2}} \frac{\rd_{V} \rbg}{1-\mubg}}
	\leq C \dlt_{\calR} \frac{\rd_{V} \rbg}{\rbg^{2}} \label{eq:extr-st:pf:dvlogdurbg}
\end{align}
for a sufficiently small $\dlt_{\calR}$ (as a universal constant). Integrating this bound from $(U, 0)$ to $(U, V)$, and using \eqref{eq:extr-st:hyp:dur} on $\uC_{in}$, \eqref{eq:extr-st:pf:durbg} follows. Moreover, using \eqref{eq:extr-st:hyp:duups}, we also obtain \eqref{eq:extr-st:pf:durbg-dups}.

Similarly as \eqref{eq:extr-st:pf:dvlogdurbg}, note that the following bound for $\rd_{U} \log \rd_{V} \rbg$ also holds:
\begin{equation} \label{eq:extr-st:pf:dulogdvrbg}
	\abs{\rd_{U} \log \rd_{V} \rbg }
%	= \Abs{\frac{2 (\varpibg - \frac{\ebg^{2}}{\rbg})}{\rbg^{2}} \frac{\rd_{U} \rbg}{1-\mubg}}
	\leq C \dlt_{\calR} \frac{- \rd_{U} \rbg}{\rbg^{2}}.
\end{equation}

\pfstep{Step~2: Setting up the main bootstrap argument}
Our first goal is to prove \eqref{eq:extr-st}; for this purpose, we use a bootstrap argument. Let $\calR'$ be a subset of $\calR$ of the form $\calR' = \set{(U, V) \in \calR : U \leq U_{B}, \ V \leq V_{B}}$ for some $(U_{B}, V_{B}) \in \calR$. We make different bootstrap assumptions depending on whether $U < 0$ or $U \geq 0$. 

Let $A_{0}, A_{1} > 0$ be constants to be specified below, where $1 \leq A_{0} \leq A_{1}$. 
In the region $\calR' \cap \set{U < 0}$, we assume that:
\begin{align} 
	\abs{\varpi - \varpibg} (U, V) \leq & 10 A_{0} \eps_{3} , \label{eq:extr-st:btstrp:mdf-far} \\
	\Abs{\log (-\rd_{U} r) - \log (-\rd_{U} \rbg)} (U, V) \leq & 10 A_{1} \eps_{3} , \label{eq:extr-st:btstrp:durdf-far}\\
	\Abs{\log (-\rd_{V} r) - \log (-\rd_{V} \rbg)} (U, V) \leq & 10 A_{1} \eps_{3} , \label{eq:extr-st:btstrp:dvrdf-far}\\
	\rbg^{-1} \abs{r - \rbg} (U, V) \leq & 10 A_{1} \eps_{3} , \label{eq:extr-st:btstrp:rdf-far} \\
	\Abs{\frac{1}{\rd_{V} r} \rd_{V} \psi - \frac{1}{\rd_{V} \rbg} \rd_{V} \psibg} (U, V) \leq & 10 A_{1} \rbg^{-\omg_0} \eps_{3} , \label{eq:extr-st:btstrp:dvpsidf-far} \\
	\abs{\psi - \psibg} (U, V) \leq & \abs{\psi - \psibg} (U, 0) + 10 A_{1} \underline{\ups}^{-\omg_0 +1} \eps_{3}. \label{eq:extr-st:btstrp:psidf-far} 
\end{align}
In the region $\calR' \cap \set{U \geq 0}$, we assume that:
\begin{align} 
	\abs{\varpi - \varpibg} (U, V) \leq & 10 A_{0} e^{U} \eps_{3} , \label{eq:extr-st:btstrp:mdf} \\
	\Abs{\log (-\rd_{U} r) - \log (-\rd_{U} \rbg)} (U, V) \leq & 10 A_{1} e^{U} \eps_{3} , \label{eq:extr-st:btstrp:durdf}\\
	\Abs{\log (-\rd_{V} r) - \log (-\rd_{V} \rbg)} (U, V) \leq & 10 A_{1} e^{U} \eps_{3} , \label{eq:extr-st:btstrp:dvrdf}\\
	\rbg^{-1} \abs{r - \rbg} (U, V) \leq & 10 A_{1} e^{U} \eps_{3} , \label{eq:extr-st:btstrp:rdf} \\
	\Abs{\frac{1}{\rd_{V} r} \rd_{V} \psi - \frac{1}{\rd_{V} \rbg} \rd_{V} \psibg} (U, V) \leq & 10 A_{1} e^{U} \rbg^{-\omg_0} \eps_{3} , \label{eq:extr-st:btstrp:dvpsidf}  \\
	\abs{\psi - \psibg} (U, V) \leq & 10 A_{1} e^{U} \eps_{3}. \label{eq:extr-st:btstrp:psidf} 
\end{align}

\begin{remark} 
In our bootstrap argument, the factor $e^{U}$ in \eqref{eq:extr-st:btstrp:mdf}--\eqref{eq:extr-st:btstrp:psidf} essentially plays the role of the exponential factor in Gr\"onwall's inequality. In particular, this factor allows us to close the bootstrap assumptions \eqref{eq:extr-st:btstrp:mdf}--\eqref{eq:extr-st:btstrp:psidf} without taking $\dlt_{\calR}$ small compared to $U_{1} - U_{\ast}$. In fact, we may use any $e^{a U}$ with $a > 0$ provided that $\dlt_{\calR}$ and $\eps_{3}$ are small enough compared to $a$,  but we have fixed $a = 1$ for concreteness.
\end{remark}
Our goal until Step~6 is to establish the following claim: For sufficiently large $A_{0}$ and $A_{1}$ (depending on $\omg_0, B$, chosen in the order $A_{0} \to A_{1}$), and small enough $\dlt_{\calR}$ (depending on $\omg_0, B$) and $\eps_{3}$ (depending on $\omg_0, B, A_{0}, A_{1}, \dlt_{\calR}$), the bounds \eqref{eq:extr-st:btstrp:mdf-far}--\eqref{eq:extr-st:btstrp:psidf} hold in $\calR'$ with the constants $10 A_{0}$ and $10 A_{1}$ replaced by $A_{0}$ and $A_{1}$, respectively.

\pfstep{Step~3: Consequences of the bootstrap assumptions}
In the remainder of the proof, we may harmlessly assume that $\dlt_{\calR} \leq 1$. Since $\eps_{3}$ is allowed to depend on $A_{1}$, $U_{1}$, and $\dlt_{\calR}$, we may also assume that $10 A_{1} e^{U_{1}} \eps_{3} \leq \dlt_{\calR}$ and $\eps_{3} \leq \dlt_{\calR}$. Then by \eqref{eq:extr-st:hyp:r-m}, \eqref{eq:extr-st:hyp:m-e}, \eqref{eq:extr-st:btstrp:mdf-far}, \eqref{eq:extr-st:btstrp:mdf} and $\abs{\e - \ebg} \leq \eps_{3}$, we have
\begin{equation} \label{eq:extr-st:pf:m-e}
	\varpibg \leq \dlt_{\calR}, \quad
	|\ebg| \leq \dlt_{\calR}, \quad
	\varpi \leq 2 \dlt_{\calR}, \quad
	|\e| \leq 2 \dlt_{\calR}.
\end{equation}
Let $\alp \geq 0$. Using \eqref{eq:extr-st:btstrp:mdf-far}, \eqref{eq:extr-st:btstrp:mdf}, $\abs{\e^{2} - \ebg^{2}} \leq C \dlt_{\calR} \eps_{3} \leq C A_{0} \eps_{3}$ and \eqref{eq:extr-st:pf:m-e} (see also Lemma~\ref{lem:r-alp-df} below), we may show that
\begin{align} \label{eq:extr-st:pf:m-e-r-alp-df}
	\Abs{\frac{\varpi}{r^{\alp}} - \frac{\varpibg}{\rbg^{\alp}}} \leq & \frac{C_{\alp} (A_{0} + \dlt_{\calR} A_{1}) e^{U_{+}}\eps_{3}}{\rbg^{\alp}}, \\
	\Abs{\frac{\e^{2}}{r^{\alp+1}} - \frac{\ebg^{2}}{\rbg^{\alp+1}}} \leq & \frac{C_{\alp} (A_{0} + \dlt_{\calR} A_{1}) e^{U_{+}}\eps_{3}}{\rbg^{\alp+1}},
\end{align}
where $U_{+} = \max \set{U, 0}$. Taking $\eps_{3}$ smaller if necessary, we may assume that $(1+C_{1}) (A_{0} + \dlt_{\calR} A_{1}) e^{U_{1}} \eps_{3} \leq \frac{1}{100}$. It follows that
\begin{equation} \label{eq:extr-st:pf:mu}
	\frac{1}{2} \leq 1-\mu \leq 2, \quad \frac{1}{2} \leq 1-\mubg \leq 2.
\end{equation}
Then in combination with the previous bounds, we have
\begin{equation} \label{eq:extr-st:pf:mu-df}
	\Abs{\log (1-\mu) - \log (1-\mubg)} \leq C (A_{0} + \dlt_{\calR} A_{1}) e^{U_{+}}\eps_{3}.
\end{equation}
Assume furthermore that $A_{1} e^{U_{1}} \eps_{3} \leq 1$. By \eqref{eq:extr-st:btstrp:durdf-far} and \eqref{eq:extr-st:btstrp:durdf}, observe that
\begin{equation} \label{eq:dur-df}
	\max \left\{ \Abs{\frac{\rd_{U} \rbg}{\rd_{U} r} - 1}, \Abs{\frac{\rd_{U} r}{\rd_{U} \rbg} - 1} \right\}
	\leq e^{\abs{\log (-\rd_{U} r) - \log(-\rd_{U} \rbg)}} - 1
	\leq C A_{1} e^{U_{+}} \eps_{3}.
\end{equation}
Similarly, by \eqref{eq:extr-st:btstrp:dvrdf-far}, \eqref{eq:extr-st:btstrp:dvrdf} and \eqref{eq:extr-st:pf:m-e-r-alp-df}, we have
\begin{equation} \label{eq:extr-st:pf:dvr-kpp-df}
\max \left\{ \Abs{\frac{\rd_{V} \rbg}{\rd_{V} r} - 1}, \Abs{\frac{\rd_{V} r}{\rd_{V} \rbg} - 1} \right\}
\leq C A_{1} e^{U_{+}} \eps_{3}, \quad
\Abs{\frac{1-\mu}{\rd_{V} r} \frac{\rd_{V} \rbg}{1-\mubg} - 1} 
\leq C A_{1} e^{U_{+}} \eps_{3}.
\end{equation}
 
 \pfstep{Step~4: Estimate for $\varpi - \varpibg$}
Our goal in this step is to improve \eqref{eq:extr-st:btstrp:mdf-far} and \eqref{eq:extr-st:btstrp:mdf}, and therefore fix the first bootstrap constant $A_{0} \geq 1$. We write
\begin{align*}
	\rd_{V} (\varpi - \varpibg) 
	= \frac{1}{2} \frac{1-\mu}{\rd_{V} r} (\rd_{V} \psi - r^{-1} (\rd_{V} r ) \psi)^{2} - \frac{1}{2} \frac{1-\mubg}{\rd_{V} \rbg} (\rd_{V} \psibg - \rbg^{-1} (\rd_{V} \rbg ) \psibg)^{2} 
	= T'_{1} + T'_{2} + T'_{3},
\end{align*}
where
\begin{align*}
	T'_{1} =& \frac{1}{2} \frac{1-\mu}{\rd_{V} r} (\rd_{V} (\psi - \psibg) - (r^{-1} (\rd_{V} r ) \psi - \rbg^{-1} (\rd_{V} \rbg ) \psibg))^{2} , \\
	T'_{2} = & \frac{1-\mu}{\rd_{V} r} (\rd_{V} (\psi - \psibg) - (r^{-1} (\rd_{V} r ) \psi - \rbg^{-1} (\rd_{V} \rbg )  \psibg)) (\rd_{V} \psibg - \rbg^{-1} ( \rd_{V} \rbg ) \psibg) , \\
	T'_{3} = & \frac{1}{2} \left( \frac{1-\mu}{\rd_{V} r} - \frac{1-\mubg}{\rd_{V} \rbg}\right) (\rd_{V} \psibg - \rbg^{-1} (\rd_{V} \rbg ) \psibg)^{2}.
\end{align*}
We claim that, for some universal constant $C > 0$, we have
\begin{align} 
	\int_{0}^{\infty} \abs{T'_{1}}(U, V') \, \ud V' \leq & C B^{2} A_{1}^{2} e^{2 U_{+}} \eps_{3}^{2}, \label{eq:extr-st:pf:mdf-1} \\
	\int_{0}^{\infty} \abs{T'_{2}}(U, V') \, \ud V' \leq & C B \sqrt{ \dlt_{\calR}} A_{1} e^{U_{+}} \eps_{3}, \label{eq:extr-st:pf:mdf-2} \\
	\int_{0}^{\infty} \abs{T'_{3}}(U, V') \, \ud V' \leq & C \dlt_{\calR} A_{1} e^{U_{+}} \eps_{3}. \label{eq:extr-st:pf:mdf-3}
\end{align}
By hypothesis we have $\abs{\varpi - \varpibg}(U, 0) \leq \eps_{3}$. Therefore, once \eqref{eq:extr-st:pf:mdf-1}--\eqref{eq:extr-st:pf:mdf-3} are proved, \eqref{eq:extr-st:btstrp:mdf-far} and \eqref{eq:extr-st:btstrp:mdf} would follow by fixing a large universal constant $A_{0}$, and taking $\dlt_{\calR}$ and $A_{1} e^{U_{1}}  \eps_{3}$ sufficiently small depending on $A_{1} / A_{0}$.

We first prove \eqref{eq:extr-st:pf:mdf-1}. For the contribution of the term $\rd_{V} (\psi - \psibg)$, we use \eqref{eq:extr-st:hyp:dvpsi}, \eqref{eq:extr-st:btstrp:dvpsidf-far}, \eqref{eq:extr-st:btstrp:dvpsidf}, \eqref{eq:extr-st:pf:mu} and \eqref{eq:extr-st:pf:dvr-kpp-df} to estimate
\begin{align*}
& \hskip-2em
\int_{0}^{\infty} \frac{1-\mu}{\rd_{V} r}(\rd_{V} (\psi - \psibg))^{2} (U, V') \, \ud V' \\
\leq  & C \int_{0}^{\infty} \left( \frac{1}{\rd_{V} r} \rd_{V} \psi - \frac{1}{\rd_{V} \rbg} \rd_{V} \psibg + \left( 1 - \frac{\rd_{V} \rbg}{\rd_{V} r} \right) \frac{1}{\rd_{V} \rbg} \rd_{V} \psibg \right)^{2} \rd_{V} \rbg (U, V') \, \ud V' \\
\leq & C B^{2} A_{1}^{2} e^{2 U_{+}} \eps_{3}^{2}.
\end{align*}
On the other hand, for the contribution of $\psi, \psibg$, we use \eqref{eq:extr-st:hyp:psi}, \eqref{eq:extr-st:btstrp:rdf-far}, \eqref{eq:extr-st:btstrp:psidf-far}, \eqref{eq:extr-st:btstrp:rdf}, \eqref{eq:extr-st:btstrp:psidf}, \eqref{eq:extr-st:pf:mu} and \eqref{eq:extr-st:pf:dvr-kpp-df} to estimate
\begin{align*}
& \hskip-2em
\int_{0}^{\infty} \frac{1-\mu}{\rd_{V} r}(r^{-1} ( \rd_{V} r ) \psi - \rbg^{-1} ( \rd_{V} \rbg ) \psibg)^{2} (U, V') \, \ud V' \\
\leq & C \int_{0}^{\infty} \frac{\rbg^{2}}{r^{2}} \frac{\rd_{V} r}{\rd_{V} \rbg} \left(  \psi - \psibg + \left(1 - \frac{r}{\rbg} \frac{\rd_{V} \rbg}{\rd_{V} r}\right) \psibg \right)^{2} \frac{\rd_{V} \rbg}{\rbg^{2}}(U, V') \, \ud V' \\
\leq & C B^{2} A_{1}^{2} e^{2 U_{+}} \eps_{3}^{2},
\end{align*}
which completes the proof of \eqref{eq:extr-st:pf:mdf-1}.

To prove \eqref{eq:extr-st:pf:mdf-2}, we use \eqref{eq:extr-st:pf:m-e}, \eqref{eq:extr-st:pf:dvr-kpp-df} and \eqref{eq:extr-st:pf:mdf-1} (that we just established) and Cauchy--Schwarz to estimate
\begin{align*}
& \hskip-2em
\int_{0}^{\infty} \frac{1-\mu}{\rd_{V} r} \left(\rd_{V} (\psi - \psibg) - \left(r^{-1} ( \rd_{V} r ) \psi - \rbg^{-1} ( \rd_{V} \rbg ) \psibg\right)\right) \left(\rd_{V} \psibg - \rbg^{-1} ( \rd_{V} \rbg ) \psibg\right) (U, V') \, \ud V' \\ 
\leq & C \sup_{V' \in [0, \infty)} \left( \frac{1-\mu}{\rd_{V} r} \frac{\rd_{V} \rbg}{1-\mubg} (U, V') \right)^{1/2} \left( \int_{0}^{\infty} \frac{1}{2}\frac{1-\mubg}{\rd_{V} \rbg} (\rd_{V} \psibg - \rbg^{-1} ( \rd_{V} \rbg ) \psibg)^{2} (U, V') \, \ud V' \right)^{1/2} B A_{1} e^{U_{+}}  \eps_{3} \\
=& C \sup_{V' \in [0, \infty)} \left( \frac{1-\mu}{\rd_{V} r} \frac{\rd_{V} \rbg}{1-\mubg} (U, V') \right)^{1/2} \left( \lim_{V'\to \infty} \overline{\varpi}(U,V')-\overline{\varpi}(U,0) \right)^{1/2} B A_{1} e^{U_{+}}  \eps_{3} \leq  C B \sqrt{\dlt_{\calR}} A_{1} e^{U_{+}} \eps_{3}.
\end{align*}

Finally, for \eqref{eq:extr-st:pf:mdf-3}, we use \eqref{eq:extr-st:pf:m-e} and \eqref{eq:extr-st:pf:dvr-kpp-df} to estimate
\begin{align*}
\int_{0}^{\infty} \left( \frac{1-\mu}{\rd_{V} r} \frac{\rd_{V} \rbg}{1-\mubg} - 1 \right) \frac{1-\mubg}{\rd_{V} \rbg} (\rd_{V} \psibg - \rbg^{-1} ( \rd_{V} \rbg ) \psibg)^{2} (U, V') \, \ud V' 
\leq C \dlt_{\calR} A_{1} e^{U_{+}} \eps_{3}.
\end{align*}

\pfstep{Step~5: Estimate for $r - \rbg$}
In this step, we improve the bootstrap assumptions \eqref{eq:extr-st:btstrp:durdf-far}, \eqref{eq:extr-st:btstrp:dvrdf-far}, \eqref{eq:extr-st:btstrp:rdf-far}, \eqref{eq:extr-st:btstrp:durdf}, \eqref{eq:extr-st:btstrp:dvrdf} and  \eqref{eq:extr-st:btstrp:rdf}. For convenience, we introduce the abbreviations 
\begin{equation*}
\widetilde{\log \rd_{V} r} = \log \rd_{V} r - \log \rd_{V} \rbg, \quad 
\widetilde{\log \rd_{U} r} = \log \rd_{U} r - \log \rd_{U} \rbg.
\end{equation*}
We start with the bootstrap assumptions \eqref{eq:extr-st:btstrp:dvrdf-far} and \eqref{eq:extr-st:btstrp:dvrdf} concerning $\widetilde{\log \rd_{V} r}$.
On $C_{out}$, note that
\begin{align*} 
	\Abs{\widetilde{\log \rd_{V} r}} (U_{0}, V)
\leq & \Abs{\log \frac{\rd_{V} r}{1-\mu} - \log \frac{\rd_{V} \rbg}{1-\mubg}} (U_{0}, V) + \Abs{\log (1-\mu) - \log (1-\mubg)} (U_{0}, V)  \\
\leq & \eps_{3} + C (A_{0} + \dlt_{\calR} A_{1}) \eps_{3}.
\end{align*}
On the other hand, by the preliminary bounds established in the previous steps, we have
\begin{align}
	\Abs{\rd_{U} \widetilde{\log \rd_{V} r}}
	= & \Abs{\frac{2 (\varpi - \frac{\e^{2}}{r})}{r^{2}} \frac{\rd_{U} r}{1-\mu}
	- \frac{2 (\varpibg - \frac{\ebg^{2}}{\rbg})}{\rbg^{2}} \frac{\rd_{U} \rbg}{1-\mubg}} \notag \\
	\leq & C (A_{0} + \dlt_{\calR} A_{1}) e^{U_{+}} \eps_{3} \frac{- \rd_{U} \rbg}{\rbg^{2}}. \label{eq:extr-st:pf:dulogdvrdf}
\end{align}
Therefore, we obtain
\begin{align*}
	\Abs{\widetilde{\log \rd_{V} r}} (U, V) 
	\leq & \Abs{\widetilde{\log \rd_{V} r}} (U_{0}, V) + \int_{U_{0}}^{U} \Abs{\rd_{U} \widetilde{\log \rd_{V} r}} (U', V) \, \ud U' \\
	\leq & \eps_{3} + C (A_{0} + \dlt_{\calR} A_{1}) e^{U_{+}} \eps_{3}.
\end{align*}
Taking $A_{1}$ large enough (compared to $C A_{0}$) and $\dlt_{\calR}$ sufficiently small (as a universal constant), the bootstrap assumptions \eqref{eq:extr-st:btstrp:dvrdf-far} and \eqref{eq:extr-st:btstrp:dvrdf}  improve. Moreover, as a consequence of the preceding bound, we also obtain
\begin{equation*}
	\abs{r - \rbg}(U, V) 
	\leq \abs{r - \rbg}(U, 0) + \int_{0}^{V} \Abs{\frac{\rd_{V} r}{\rd_{V} \rbg} - 1} \rd_{V} \rbg (U, V') \, \ud V' 
	\leq C \left( \eps_{3} + C (A_{0} + \dlt_{\calR} A_{1}) e^{U_{+}} \eps_{3} \right) \rbg(U, V),
\end{equation*}
which also improves the bootstrap assumptions \eqref{eq:extr-st:btstrp:rdf-far} and \eqref{eq:extr-st:btstrp:rdf} for an appropriate choice of $A_{1}$ and $\dlt_{\calR}$. 

Finally, we close \eqref{eq:extr-st:btstrp:durdf-far} and \eqref{eq:extr-st:btstrp:durdf} concerning $\widetilde{\log \rd_{U} r}$, which proceeds by a similar argument. In this case, we have
\begin{equation} \label{eq:extr-st:pf:dvlogdurdf}
	\Abs{\rd_{V} \widetilde{\log \rd_{U} r}}
	\leq C (A_{0} + \dlt_{\calR} A_{1}) e^{U_{+}} \eps_{3}  \frac{\rd_{V} \rbg}{\rbg^{2}}.
\end{equation}
Therefore,
\begin{align*}
	\Abs{\widetilde{\log \rd_{U} r}} (U, V) 
	\leq & \Abs{\widetilde{\log \rd_{U} r}} (U, 0) + \int_{0}^{V} \Abs{\rd_{V} \widetilde{\log \rd_{U} r}} (U, V') \, \ud V' \\
	\leq & \eps_{3} + C (A_{0} + \dlt_{\calR} A_{1}) e^{U_{+}} \eps_{3},
\end{align*}
which improves \eqref{eq:extr-st:btstrp:durdf-far} and \eqref{eq:extr-st:btstrp:durdf} for an appropriate choice of $A_{1}$ and $\dlt_{\calR}$.

\pfstep{Step~6: Estimate for $\psi - \psibg$}
The aim of this step is to improve \eqref{eq:extr-st:btstrp:dvpsidf-far}, \eqref{eq:extr-st:btstrp:psidf-far}, \eqref{eq:extr-st:btstrp:dvpsidf} and \eqref{eq:extr-st:btstrp:psidf}. We first claim that
\begin{equation} \label{eq:extr-st:pf:dvpsidf}
	\Abs{\frac{1}{\rd_{V} r} \rd_{V} \psi - \frac{1}{\rd_{V} \rbg} \rd_{V} \psibg} \leq \left( C \eps_{3} + C_{\omg_0, B} (A_{0} + \widetilde{\dlt}_{\calR} A_{1}) e^{U_{+}} \eps_{3} \right) \rbg^{-\omg_0}.
\end{equation}
where $\widetilde{\dlt}_{\calR} (\geq \dlt_{\calR})$ is introduced below in \eqref{eq:extr-st:pf:new-dlt}. It turns out that, by choosing $\dlt_{\calR}$ and $\eps_{3}$ appropriately, $\widetilde{\dlt}_{\calR}$ can be made as small as we want depending on $\omg_0, B$. 

Assume, for the moment, that \eqref{eq:extr-st:pf:dvpsidf} has already been proved. Then \eqref{eq:extr-st:btstrp:dvpsidf-far} and \eqref{eq:extr-st:btstrp:dvpsidf} would improve by taking $A_{1}$ large enough (compared to $C_{\omg_0, B} A_{0}$) and $\widetilde{\dlt}_{\calR}$ sufficiently small (depending on $\omg_0, B$). Moreover, since we have
\begin{equation*}
	\frac{1}{\rd_{V} \rbg} \rd_{V} (\psi - \psibg)
	= \frac{\rd_{V} r}{\rd_{V} \rbg} \left( \frac{1}{\rd_{V} r} \rd_{V} \psi - \frac{1}{\rd_{V} \rbg} \rd_{V} \psibg \right) + \left(\frac{\rd_{V} r}{\rd_{V} \rbg} - 1 \right)\frac{1}{\rd_{V} \rbg} \rd_{V} \psibg,
\end{equation*}
it would follow that
\begin{align*}
	\abs{\psi - \psibg}(U, V)
	\leq & \abs{\psi - \psibg}(U, 0) + \int_{0}^{\infty} \Abs{\frac{1}{\rd_{V} \rbg} \rd_{V} (\psi - \psibg)} \rd_{V} \rbg(U, V') \, \ud V' \\
	\leq & \abs{\psi - \psibg}(U, 0) + \left( C \eps_{3} + C_{\omg_0, B} (A_{0} + \dlt_{\calR} A_{1}) e^{U_{+}} \eps_{3} \right) \rbg^{-\omg_0+1}(U, 0)
\end{align*}
Since $\rbg(U, 0) \geq \max \set{\underline{\ups}(U), 1}$, the desired improvement of \eqref{eq:extr-st:btstrp:psidf-far} and \eqref{eq:extr-st:btstrp:psidf} would follow also by taking $A_{1}$ large enough (compared to $C_{\omg_0, B} A_{0}$) and $\widetilde{\dlt}_{\calR}$ sufficiently small (depending on $\omg_0, B$).

In order to establish the claim \eqref{eq:extr-st:pf:dvpsidf}, we use the equation
\begin{align*}
	\rd_{U} \left(\frac{1}{\rd_{V} r} \rd_{V} \psi - \frac{1}{\rd_{V} \rbg} \rd_{V} \psibg \right)
	=- \rd_{U} \log \rd_{V} r \left( \frac{1}{\rd_{V} r} \rd_{V} \psi - \frac{1}{r} \psi \right)  + \rd_{U} \log \rd_{V} \rbg \left( \frac{1}{\rd_{V} \rbg} \rd_{V} \psibg - \frac{1}{\rbg} \psibg \right),
\end{align*}
%INDEED,
%\begin{align*}
%	\rd_{U} \frac{1}{\rd_{V} r} \rd_{V} \psi
%	= - \frac{\rd_{U} \rd_{V} r}{\rd_{V} r} \frac{1}{\rd_{V} r} \rd_{V} \psi + \frac{\rd_{U} \rd_{V} r}{\rd_{V} r} \psi
%\end{align*}
which we rewrite as
\begin{align*}
	\rd_{U} \left(\frac{1}{\rd_{V} r} \rd_{V} \psi - \frac{1}{\rd_{V} \rbg} \rd_{V} \psibg \right)
	= - \rd_{U} \log \rd_{V} r \left(  \frac{1}{\rd_{V} r} \rd_{V} \psi - \frac{1}{\rd_{V} \rbg} \rd_{V} \psibg \right)
	+ T_{1} + T_{2} + T_{3} ,
\end{align*}
where
\begin{align*}
	T_{1} = & - \rd_{U} \widetilde{\log \rd_{V} r} \frac{1}{\rd_{V} \rbg} \rd_{V} \psibg , \\
	T_{2} = & \rd_{U} \log \rd_{V} r \left( \frac{1}{r} \psi - \frac{1}{\rbg} \psibg \right), \\
	T_{3} = & \rd_{U} \widetilde{\log \rd_{V} r} \frac{1}{\rbg} \psibg .
\end{align*}
By \eqref{eq:extr-st:pf:dulogdvrbg} and \eqref{eq:extr-st:pf:dulogdvrdf}, we have
\begin{equation} \label{eq:extr-st:pf:dulogdvr}
	\abs{\rd_{U} \log \rd_{V} r}
	\leq C \widetilde{\dlt}_{\calR} \rbg^{-2} (-\rd_{U}\rbg),
\end{equation}
where 
\begin{equation} \label{eq:extr-st:pf:new-dlt}
\widetilde{\dlt}_{\calR} = \dlt_{\calR} + (A_{0} + \dlt_{\calR} A_{1}) e^{U_{1}} \eps_{3} .
\end{equation}
By choosing $\dlt_{\calR}$ and $\eps_{3}$ appropriately, we may assume that $\widetilde{\dlt}_{\calR}$ is as small as we want depending on $\omg_0, B$. Since the integral of \eqref{eq:extr-st:pf:dulogdvr} over any $\uC_{V} \cap \calR$ is uniformly bounded, we may apply Gr\"onwall's inequality to estimate
\begin{equation*}
	\Abs{\frac{1}{\rd_{V} r} \rd_{V} \psi - \frac{1}{\rd_{V} \rbg} \rd_{V} \psibg}(U, V)
	\leq C \Abs{\frac{1}{\rd_{V} r} \rd_{V} \psi - \frac{1}{\rd_{V} \rbg} \rd_{V} \psibg}(U_{0}, V)
		+ C \int_{U_{0}}^{U} \abs{T_{1} + T_{2} + T_{3}}(U', V) \, \ud U'.
\end{equation*}
On $C_{out}$, note that
\begin{align*}
& \Abs{\frac{1}{\rd_{V} r} \rd_{V} \psi - \frac{1}{\rd_{V} \rbg} \rd_{V} \psibg}(U_{0}, V)\\
\leq & C \Abs{\frac{1-\mu}{\rd_{V} r} \rd_{V} \psi - \frac{1-\mubg}{\rd_{V} \rbg} \rd_{V} \psibg}(U_{0}, V) + C \Abs{\frac{1-\mu}{1-\mubg} - 1} \Abs{\frac{1-\mubg}{\rd_{V} \rbg} \rd_{V} \psibg} (U_0,V) \\
\leq & (C + C B (A_{0} + \dlt_{\calR} A_{1})) \rbg^{-\omg_0} \eps_{3},
\end{align*}
which is acceptable after taking $\dlt_{\calR} \leq B^{-1}$.
On the other hand, for $T_{1}$, $T_{2}$ and $T_{3}$ we have
\begin{align}
	\abs{T_{1}} \leq & C B (A_{0} + \dlt_{\calR} A_{1}) e^{U_{+}} \rbg^{-2 -\omg_0} (-\rd_{U} \rbg) \eps_{3}, \label{eq:extr-st:pf:dvpsi-1} \\
	\abs{T_{2}} \leq &\left\{ \begin{array}{cl}
	\displaystyle{C B \widetilde{\dlt}_{\calR} A_{1} e^{U} \rbg^{-3} (-\rd_{U} \rbg) \eps_{3}}  & \hbox{ when } U \geq 0, \\
	\displaystyle{\left(C \tilde{\dlt}_{\calR} \abs{\psi - \psibg}(U, 0) + C B \tilde{\dlt}_{\calR} A_{1} \underline{\ups}^{-\omg_0+1}(U) \eps_{3} \right) \rbg^{-3} (-\rd_{U} \rbg)  }  & \hbox{ when } U < 0,
	\end{array} \right.  \label{eq:extr-st:pf:dvpsi-2} \\
	\abs{T_{3}} \leq &\left\{ \begin{array}{cl}
	C B (A_{0} + \dlt_{\calR} A_{1}) e^{U} \rbg^{-3} (-\rd_{U} \rbg) \eps_{3}   & \hbox{ when } U \geq 0, \\
	C B (A_{0} + \dlt_{\calR} A_{1})  \underline{\ups}^{-\omg_0+1}(U)  \rbg^{-3} (-\rd_{U} \rbg) \eps_{3}   & \hbox{ when } U < 0.
	\end{array} \right.  \label{eq:extr-st:pf:dvpsi-3}
\end{align}
Indeed, \eqref{eq:extr-st:pf:dvpsi-1} follows from \eqref{eq:extr-st:hyp:dvpsi} and \eqref{eq:extr-st:pf:dulogdvrdf}. For the proof of \eqref{eq:extr-st:pf:dvpsi-2}, we apply \eqref{eq:extr-st:pf:dulogdvr} and
\begin{equation*} 
	\Abs{\frac{1}{r} \psi - \frac{1}{\rbg} \psibg}
	\leq \left\{ \begin{array}{cl}
	C B A_{1} e^{U} \rbg^{-1} \eps_{3}  & \hbox{ when } U \geq 0, \\
	\rbg^{-1} \abs{\psi - \psibg}(U, 0) + C B A_{1} \underline{\ups}^{-\omg_0+1}(U) \rbg^{-1}  \eps_{3}  & \hbox{ when } U < 0,
	\end{array} \right.
\end{equation*}
which, in turn, follows from \eqref{eq:extr-st:btstrp:rdf-far}, \eqref{eq:extr-st:btstrp:psidf-far}, \eqref{eq:extr-st:btstrp:rdf} and \eqref{eq:extr-st:btstrp:psidf}. Finally, \eqref{eq:extr-st:pf:dvpsi-3} is a consequence of \eqref{eq:extr-st:hyp:psi} and \eqref{eq:extr-st:pf:dulogdvrdf}.

Integrating \eqref{eq:extr-st:pf:dvpsi-1}, we obtain
\begin{align*}
	\int_{U_{0}}^{U} \abs{T_{1}}(U', V) \, \ud U'
	\leq C B (A_{0} + \dlt_{\calR} A_{1}) e^{U_{+}} \rbg^{-\omg_0-1} \eps_{3},
\end{align*}
which is acceptable. On the other hand, integrating \eqref{eq:extr-st:pf:dvpsi-2} and \eqref{eq:extr-st:pf:dvpsi-3} (using also $\widetilde{\dlt}_{\calR} \geq \dlt_{\calR}$), we obtain
\begin{align*}
	\int_{U_{0}}^{U} (\abs{T_{2}} + \abs{T_{3}})(U', V) \, \ud U'
	\leq & \int_{U_{0}}^{\min \set{U, 0}} (\abs{T_{2}} + \abs{T_{3}})(U', V) \, \ud U' + \int_{0}^{\max \set{U, 0}} (\abs{T_{2}} + \abs{T_{3}})(U', V) \, \ud U' \\
	\leq & C B \tilde{\dlt}_{\calR} \rbg^{-3} \int_{U_{0}}^{\min \set{U, 0}} \abs{\psi - \psibg}(U, 0) (-\rd_{U} \underline{\ups})(U) \, \ud U \\
	& + \frac{C}{\omg_0-2} B (A_{0} + \widetilde{\dlt}_{\calR} A_{1}) \rbg^{-3} \eps_{3} 
	 + C B (A_{0} + \widetilde{\dlt}_{\calR} A_{1}) e^{U_{+}} \rbg^{-3} \eps_{3}.
\end{align*}
In the last inequality, we used \eqref{eq:extr-st:pf:durbg} and \eqref{eq:extr-st:pf:durbg-dups} to estimate $-\rd_{U} \rbg$. The desired bound \eqref{eq:extr-st:pf:dvpsidf} now follows.

\pfstep{Step~7: Completion of the proof}
So far, we have closed the bootstrap assumptions in $\calR'$. Note furthermore that, by the argument so far, the bounds in \eqref{eq:extr-st} also hold (with $C = A_{1} e^{U_{1}}$) on $\calR'$. By a standard continuous induction argument, it follows that \eqref{eq:extr-st} holds in the whole rectangle $\calR$.

To conclude the proof, it remains to establish the bound \eqref{eq:extr-st:L}. As a consequence of the previous steps, for $\overline{M}(U)$, $\overline{\Phi}(U)$ and $\overline{\Gmm}(U)$ we have
\begin{align*}
	\abs{\overline{M}(U)} 
\leq &\sup_{\calR} \abs{\varpibg}
 \leq C , \\
	\abs{\Phibg(U)} 
\leq & \sup_{V \in [0, \infty)} \abs{\psibg(U, V)} 
\leq \left\{
	\begin{array}{cl}
		C B & \hbox{ when } U \geq 0, \\
		C B \underline{\ups}^{-\omg_0+1}(U) \eps_{3} & \hbox{ when } U < 0 .
	\end{array}
	\right. \\
	\abs{\overline{\Gmm}(U)} 
\leq & \sup_{\calR} \abs{\rd_{U} \rbg}  
\leq \left\{
	\begin{array}{cl}
		C B & \hbox{ when } U \geq 0, \\
		C B (- \rd_{U} \underline{\ups}(U)) & \hbox{ when } U < 0 .
	\end{array}
	\right. 
\end{align*}
On the other hand, for the corresponding differences, we have
\begin{align*}
	\abs{M(U) - \overline{M}(U)} 
\leq &\sup_{\calR} \abs{\varpi - \varpibg}
 \leq C \eps_{3}, \\
	\abs{\Phi(U) - \Phibg(U)} 
\leq & \sup_{V \in [0, \infty)} \abs{\psi(U, V) - \psibg(U, V)} 
\leq \left\{
	\begin{array}{cl}
		C \eps_{3} & \hbox{ when } U \geq 0, \\
		\abs{\psi - \psibg}(U, 0) + C \underline{\ups}^{-\omg_0+1}(U) \eps_{3} & \hbox{ when } U < 0 .
	\end{array}
	\right. \\
	\abs{\Gmm(U) - \overline{\Gmm}(U)} 
\leq & \sup_{\calR} \abs{\rd_{U} r - \rd_{U} \rbg}  
\leq \left\{
	\begin{array}{cl}
		C B \eps_{3} & \hbox{ when } U \geq 0, \\
		C B (-\rd_{U} \underline{\ups}(U)) \eps_{3} & \hbox{ when } U < 0 ,
	\end{array}
	\right. 
\end{align*}
Putting together these bounds, \eqref{eq:extr-st:L} follows. \qedhere
\end{proof}

\begin{proof}[Sketch of Proof of Proposition~\ref{prop:extr-st-cauchy}]
Without loss of generality, we may set $U_{0} = 0$ and $\rbg_{0} = 1$. Then the proof is entirely analogous to that of Proposition~\ref{prop:extr-st} in the region $U < 0$, except that the initial data lie on $\Sgm_{0} \cap \underline{\calR}$. As a consequence, all instances of $(U_{0}, V)$, $(U, V_{0})$ must be replaced by $(\underline{U}(V), V)$, $(U, \underline{V}(U))$ and $\underline{\ups}(U)$, respectively. Moreover, the roles of $B$ and $\dlt_{\calR}$ are played by $\underline{B}$ and $\dlt_{\underline{\calR}}$, respectively. We omit the straightforward details of modifying the proof. \qedhere
\end{proof}

\section{Stability of the backscattering tail: Proof of Theorem~\ref{thm:L-stability}} \label{sec:L-stability}
In this section we prove stability of the quantity $\mathfrak{L}$ (Theorem~\ref{thm:L-stability}).
It will be a consequence of \emph{asymptotic stability}, in a suitable sense, of the black hole exterior(s) of the maximal globally hyperbolic future development of an admissible initial data set. The tools we develop along the way will also play an important role in the proof of Theorem~\ref{thm:instability} in Section~\ref{sec:instability}.

\subsection{Ideas of the proof}

Recall that
\begin{equation*}
	\mathfrak{L} = \int_{\NI} 2 M(u) \Phi(u) \Gmm(u) \, \ud u
\end{equation*}
where $M(u) = \lim_{r \to \infty} \varpi(u, v)$, $\Phi(u) = \lim_{r \to \infty} r \phi(u, v)$ and $\Gmm(u) = \lim_{r \to \infty} \frac{\rd_{u} r}{1-\mu}(u, v)$. Stability of $\mathfrak{L}$ is thus a consequence of appropriate stability of each of the constituents $M(u)$, $\Phi(u)$ and $\Gmm(u)$ defined on $\NI$. In particular, since $M(u)$ and $\Gmm(u)$ are not expected to decay to zero towards timelike infinity, we need to ensure that the difference between the perturbed and the background $\Phi(u)$ \emph{decays at an integrable rate} towards timelike infinity\footnote{The analogous issue near spacelike infinity is taken care of by Proposition~\ref{prop:extr-st-cauchy}.}. 

To establish the above-mentioned stability of $M(u)$, $\Phi(u)$ and $\Gmm(u)$, we prove asymptotic stability of maximal development of admissible Cauchy data sets in the black hole exterior. Our proof is based on the following ideas:
\begin{enumerate} 
\item To begin with, we use the Cauchy and large-$r$ stability results (i.e., stability in the region close to $\NI$ where $\rbg$ is large) established in Section~\ref{sec:extr} to reduce the proof of asymptotic stability of the entire black hole exterior to showing that of a region near future timelike infinity, where the background solution is ``sufficiently small'' by Price's law (as we have seen in Section~\ref{sec:bg}); see Theorem~\ref{thm:L-st-ch}. To formalize the concept of a ``sufficiently small'' background solution, we introduce the notion of a $(\omg, \dlt, \Lmb)$-admissible solution (Definition~\ref{def:dlt-adm}). 

\item The Cauchy and large-$r$ stability results are proved simply using the initial-data-normalized coordinates $(U, V)$. However, proving stability of $\mathfrak{L}$ near future timelike infinity (Theorem~\ref{thm:L-st-ch}) requires a precise asymptotic decay bound for the difference of radiation fields $\Phi - \overline{\Phi}$ (see \eqref{eq:Phidf-decay-0} below), for which it is essential to work in a suitable coordinate system to compare the two solutions (indeed, an unwise choice of coordinates may easily destroy the desired estimates for $\Phi - \overline{\Phi}$!). 
%This issue turns out to be the main conceptual and technical difficulty of the proof, and is reflective of the \emph{quasilinear} nature of the Einstein--Maxwell--scalar--field system. 

Our choice is to use the future-normalized coordinates for the perturbed and the background solutions, which were defined in Section~\ref{subsec:bg-coords}. Under such a choice, the key goal is to show
\begin{equation} \label{eq:Phidf-decay-0}
	\abs{\Phi(u) - \Phibg(u)} \leq C u^{-\bt} \eps_{0}		\quad \hbox{ for $u$ large}, 
\end{equation}
where $\eps_{0}$ is the size of the initial data difference, and $\bt > 1$, so that the RHS is integrable in $u$. 

To be able to work in the future-normalized coordinates, we need to relate the size of the initial difference in the new coordinate systems to that in the original coordinate systems. For this purpose, we need \emph{smallness} of higher order (i.e., second order) derivatives of the background $(\omg, \dlt, \Lmb)$-admissible solution, which were derived from the Price's law theorem of Dafermos--Rodnianski in Section~\ref{sec:bg}. 

\item To efficiently handle the issue of coordinate choice in the proof of Theorem~\ref{thm:L-st-ch}, we take a modular approach. First, we establish a \emph{weak stability result} (Theorem~\ref{thm:weak-st}), which is not as strong as Theorem~\ref{thm:L-st-ch} but sufficient to achieve a nice control on the transformation from the initial-data-normalized coordinates $(U, V)$ to the future-normalized coordinates. Then, we carry out further arguments in the future-normalized coordinates to derive the desired decay of $\Phi - \Phibg$ (Propositions~\ref{prop:en-decay} and \ref{prop:psidf-decay}).

\item At the core of the weak stability result (Theorem~\ref{thm:weak-st}) lie nondegenerate energy and integrated local energy decay estimates for the difference $\phi - \phibg$ of scalar fields in the future-normalized coordinates. Through the Raychaudhuri equations \eqref{eq:EMSF-ray} and the modified mass equations in \eqref{eq:EMSF-r-phi-m}, such estimates imply pointwise and integrated bounds on the difference of geometric quantities (e.g., $\rd_{u} r - \rd_{u} \rbg$, $\rd_{v} r - \rd_{v} \rbg$) and the modified masses. These bounds, in turn, control the transformation from the initial-data-normalized coordinates to the future-normalized coordinates.

In the proof, we need to control nonlinear error terms of the form $\iint \frac{r^{3-\eta_{0}}}{\Omg^{2}} (\rd_{v} \phidf)^{2} (\rd_{u} \phidf)^{2} \, \ud u \ud v$ $(0 < \eta_{0} \ll 1)$, where $\phidf = \phi - \phibg$. To achieve this without proving additional pointwise bounds for $\rd_{u} \phidf$ and $\rd_{v} \phidf$, we use (what we call) an \emph{interaction Morawetz estimate}, which is obtained by multiplying the wave equation for $\phidf$ by $f_{0}(r) \left( \frac{1-\mu}{\dvr} \rd_{v} + \frac{1-\mu}{\dur} \rd_{u}\right)$ with an appropriate $f_{0}(r)$.

\item In Proposition~\ref{prop:en-decay}, using the nondegenerate energy and integrated local energy decay estimates in Theorem~\ref{thm:weak-st} as a starting point, we adapt the $r^{p}$-weighted energy method of Dafermos--Rodnianski \cite{DRNM} to prove decay of nondegenerate energy of $\phi - \phibg$ in an appropriate foliation. In particular, we obtain an \emph{integrable} rate of decay for $r \phi - \rbg \phibg$ along a constant-$r$ curve. Next, in Proposition~\ref{prop:psidf-decay}, we use an integration along characteristic argument (similar to \cite{LO1} and Section~\ref{subsec:blowup-const-r}) to propagate this decay to $(\Phi - \Phibg) (u)= \lim_{v \to \infty}(r \phi - \rbg \phibg)(u, v)$.
\end{enumerate}

\subsection{Organization of the section}
This section is organized as follows.
\begin{itemize}
\item {\bf Section~\ref{subsec:main-st}.} We introduce the notion of an \emph{$(\omg, \dlt, \Lmb)$-admissible background solution} (Definition~\ref{def:dlt-adm}). Then by the consequences of Price's law discussed in Section~\ref{sec:bg}, and the Cauchy/large-$r$ stability results in Section~\ref{sec:extr}, we reduce the proof of Theorem~\ref{thm:L-stability} to showing an asymptotic stability result (Theorem~\ref{thm:L-st-ch}\footnote{We note that Theorem~\ref{thm:L-st-ch} is \emph{quantitative}, in the sense that the implicit constant in the conclusion of the theorem is dependent only on certain explicit parameters of the background solution. This feature is essential for the application of the theorem in Section~\ref{sec:instability}.}) for an $(\omg, \dlt, \Lmb)$-admissible background solution. 

The remainder of this section is devoted to the proof of Theorem~\ref{thm:L-st-ch}. 
\item {\bf Section~\ref{subsec:weak-st}.} We state a \emph{weak stability} result (Theorem~\ref{thm:weak-st}), which does not imply Theorem~\ref{thm:L-st-ch}, but nevertheless gives an adequate control of the coordinate transformation $(U, V) \to (u, v), \, (\ubg, \vbg)$ and provides a starting point for the decay arguments later on.
\item {\bf Sections~\ref{subsec:btstrp}--\ref{subsec:weak-st-pf}.} These subsections comprise the proof of Theorem~\ref{thm:weak-st}. An outline of these subsections is provided at the end of Section~\ref{subsec:weak-st}, after we introduce relevant notation.
\item {\bf Section~\ref{subsec:rp-weight}.} Using Theorem~\ref{thm:weak-st} as the starting point, we employ the $r^{p}$-weighted energy method of Dafermos--Rodnianski \cite{DRNM} to prove an \emph{integrable rate} of decay (in a suitable sense) of $\phi - \phibg$ along a constant-$r$ curve in the future-normalized coordinate systems $(u, v), \, (\ubg, \vbg)$. 
\item {\bf Section~\ref{subsec:int-char}.} Here we propagate the decay of $\phi - \phibg$ along a constant-$r$ curve to that of the radiation field difference $\Phi - \Phibg$ along $\NI$ (in the future normalized coordinates) by an integration along characteristics technique similar to \cite{LO1} (see also Section~\ref{sec.contra.unif} above).
\item {\bf Section~\ref{subsec:main-st-pf}.} Finally, we put together the decay estimates proved so far to complete the proof of Theorem~\ref{thm:L-st-ch}.
\end{itemize}

\subsection{Reduction to stability of an admissible background solution} \label{subsec:main-st}
The Cauchy and large-$r$ stability results in Section~\ref{sec:extr} allow us to focus on a region near future timelike infinity of the background solution, where $\phibg$ and its derivatives are adequately small by Price's law (see Section~\ref{sec:bg}). Here we formulate a stability result in this region, which constitutes the key step of the proof of Theorem~\ref{thm:L-stability}. It is most natural and convenient (in particular, for the proof) to state this result in terms of a characteristic initial value problem.

We begin by specifying precisely the assumptions on the background solution.
\begin{definition} [$(\omg, \dlt, \Lambda)$-admissible background solution]\label{def:dlt-adm}
Let $\omg \in (2, 3]$, $\dlt > 0$ and $\Lmb > 1$. We say that a $C^{2}$ solution $(\Omgbg, \rbg, \phibg, \ebg)$ in $\widetilde{\calD}$ to the Einstein--Maxwell--(real)--scalar--field system in spherical symmetry is \emph{$(\omg, \dlt, \Lmb)$-admissible} if the following properties hold: 
\begin{enumerate}
\item $(\widetilde{\calD}, \gbg)$ is a smooth (1+1)-dimensional Lorentzian manifold with boundary (to be described below), which is a (non-maximal) globally hyperbolic future development of a characteristic initial value problem posed on $\uC_{in} \cup C_{out}$.
\begin{itemize}
\item $C_{out}$ is a past boundary component of $\widetilde{\calD}$ which is a smooth null ray; more precisely, it has a past endpoint, which we denote by $p$, but no future endpoint. It is \emph{outgoing} (i.e., $r$ is increasing towards the future), \emph{affine complete} and satisfies $\sup_{C_{out}} r = \infty$.
\item $\uC_{in}$ is a past boundary component of $\widetilde{\calD}$ which is a smooth null segment, and its past endpoint coincides with that of $C_{out}$, which we denoted by $p$ (hence $\uC_{in} \cap C_{out} = \set{p}$). It is \emph{incoming}, in the sense $r$ decreases towards the future.
\item $\widetilde{\calD}$ also has a future boundary, which is a spacelike hypersurface.
\end{itemize}
\item The event horizon $\EH$, which by definition is the null curve $\widetilde{\calD} \cap \rd (J^{-}(\NI))$, intersects the interior of $\uC_{in}$. 

\item Let $\rbg_{\EH} = \sup_{\EH} \rbg$ be the area-radius of the event horizon and $\varpibg_{\EH} := \sup_{\EH} \varpibg$; these two quantities obey the relation $\rbg_{\EH} = \varpibg_{\EH} + \sqrt{\varpibg_{\EH}^{2} - \ebg^{2}}$ (cf. \eqref{r.varpi.poly}). We assume that $\EH$ is \emph{subextremal}, i.e., $\varpibg_{\EH}^{-1} \abs{\ebg} < 1$ (cf. \eqref{eq:subextremality}).

\item On $\EH$, the area-radius function $\rbg$ obeys the bound $(1 -  \dlt) \rbg_{\EH} \leq \rbg \leq \rbg_{\EH}$. 

\item At the future endpoint of $\uC_{in}$, we have $\rbg = (1- 2\dlt) \rbg_{\EH}$. 
At the past endpoint of $\uC_{in}$, which is also the past endpoint of $C_{out}$, we have $\rbg = \Lmb \rbg_{\EH}$. The future boundary of $\widetilde{\calD}$ intersects $\uC_{in}$ at its future endpoint, i.e., at a point where $\rbg= (1-2\dlt) \rbg_{\EH}$. Moreover, on the future boundary of $\widetilde{\calD}$, $\rbg = (1-2\de) \rbg_{\EH}$.

\item In the black hole exterior $\calD := \widetilde{\calD} \cap J^{-}(\calI^{+})$, we have
\begin{equation} \label{eq:dlt-adm:m}
	\Abs{\varpibg - \varpibg_{f}} \leq \dlt^{2}.
\end{equation}

\item There exists a regular null coordinate system $(U, V)$ covering $\widetilde{\calD}$, such that $C_{out} = \set{U = 1}$, $\uC_{in} = \set{V = 1}$, $0 < \inf_{C_{out}} \frac{\rd_{V} \rbg}{1-\mubg} \leq \sup_{C_{out}} \frac{\rd_{V} \rbg}{1-\mubg} < \infty$ on $C_{out}$ and $0 < \inf_{\uC_{in}} (-\rd_{U} \rbg) \leq \sup_{\uC_{in}} (- \rd_{U} \rbg) < \infty$ on $\uC_{in}$. Moreover, $\rd_{U} \rbg < 0$ on $C_{out}$. 

\item Let $(\ubg, \vbg) = (\ubg(U), \vbg(V))$ denote the future-normalized coordinates covering $\calD:=\widetilde{\calD}\cap J^-(\NI)$ so that $\ubg = 1$ on $C_{out}$, $\frac{\rd_{\ubg} \rbg}{1 - \mubg} = -1$ along $\NI$ (cf. \eqref{eq:coord-u-bg}), $\vbg = 1$ on $\uC_{in}$ and $\frac{\rd_{\vbg} \rbg}{1 - \mubg} = 1$ along $\EH$ (cf. \eqref{eq:coord-v-bg}). The following smallness conditions hold for $\phibg$ and its (coordinate-invariant) first derivatives in $\calD$: 
\begin{align} 
	\Abs{\phibg} 
\leq &	\left\{
\begin{array}{ll}
\brk{\vbg}_{\rbg_{\EH}}^{-\omg} \dlt & \hbox{when } \rbg \leq 30 \rbg_{\EH} \\
\brk{\ubg}_{\rbg_{\EH}}^{-\omg+1} \min \set{\brk{\ubg}_{\rbg_{\EH}}^{-1}, (\rbg / \rbg_{\EH})^{-1}} \dlt  & \hbox{when } \rbg \geq 10 \rbg_{\EH} \\
\end{array}\right. 	\label{eq:dlt-adm:phi}
\\
	\rbg_{\EH} \Abs{\frac{1}{\rd_{U} \rbg} \, \rd_{U} \phibg} 
\leq &	\left\{
\begin{array}{ll}
\brk{\vbg}_{\rbg_{\EH}}^{-\omg}  \dlt & \hbox{when } \rbg \leq 30 \rbg_{\EH} \\
\brk{\ubg}_{\rbg_{\EH}}^{-\omg} (\rbg / \rbg_{\EH})^{-1} \Lmb^{-1/2} \dlt  & \hbox{when } \rbg \geq 10 \rbg_{\EH} \\
\end{array}\right. 	\label{eq:dlt-adm:duphi} \\
	\rbg_{\EH} \Abs{\frac{1-\mubg}{\rd_{V} \rbg} \, \rd_{V} \phibg} 
\leq &	\left\{
\begin{array}{ll}
\brk{\vbg}_{\rbg_{\EH}}^{-\omg} \dlt & \hbox{when } \rbg \leq 30 \rbg_{\EH} \\
\brk{\ubg}_{\rbg_{\EH}}^{-\omg+1} (\rbg / \rbg_{\EH})^{-1} \min \set{\brk{\ubg}_{\rbg_{\EH}}^{-1}, (\rbg / \rbg_{\EH})^{-1}} \dlt  & \hbox{when } \rbg \geq 10 \rbg_{\EH} \\
\end{array}\right.  \label{eq:dlt-adm:dvphi} \\
\Abs{\frac{1-\mubg}{\rd_{V} \rbg} \, \rd_{V} (\rbg \phibg)} 
\leq & \min\set{\brk{\vbg}_{\rbg_{\EH}}^{-\omg}, (\rbg / \rbg_{\EH})^{-\omg}} \dlt	\label{eq:dlt-adm:dvrphi}
\end{align}
Here $\brk{\cdot}_{\rbg_{\EH}} := \frac{(\cdot) + \rbg_{\EH}}{\rbg_{\EH}}$.

\item The following smallness conditions hold for the second derivatives of $\phibg$ on the initial characteristic hypersurface:
\begin{align} 
	\sup_{\uC_{in}} \, \rbg_{\EH}^{2}\Abs{\left(\frac{1}{\rd_{U} \rbg} \rd_{U}\right)^{2} \phibg} \leq & \Lmb^{-3/2} \, \dlt \label{eq:dlt-adm-duduphi} \\
	\sup_{C_{out}} \frac{\rbg^{\omg+1}}{\rbg_{\EH}^{\omg}} \Abs{\left(\frac{1-\mubg}{\rd_{V} \rbg} \rd_{V}\right)^{2} (\rbg \phibg)} \leq &  \dlt. \label{eq:dlt-adm-dvdvrphi}
\end{align}
\end{enumerate}
\end{definition}

The Penrose diagram depiction of $(\widetilde{\calD}, \gbg)$ is given in Figure~\ref{fig:dlt-adm} below.
\begin{figure}[h]
\begin{center}
\def\svgwidth{250px}
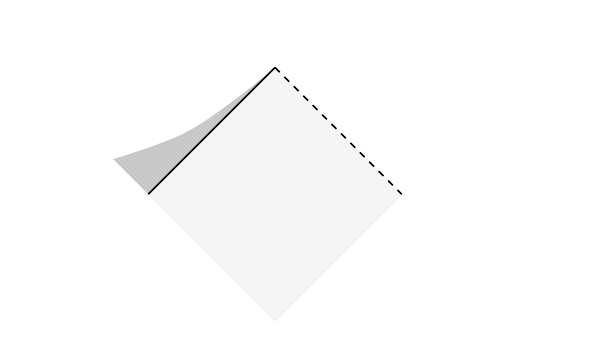 
\caption{Penrose diagram of a $(\omg, \dlt, \Lmb)$-admissible solution. In what follows, $\Lmb$ will be taken to be larger than $100$.} \label{fig:dlt-adm}
\end{center}
\end{figure}

\begin{remark} \label{rem:dlt-admissible}
Some remarks concerning our notion of $(\omg, \dlt, \Lmb)$-admissible background solution are in order.
\begin{enumerate}
\item Consider a perturbation of an $(\omg, \dlt, \Lmb)$-admissible background solution, given in terms of a perturbed characteristic initial data set on $\uC_{in} \cup C_{out}$. 
Observe that the new event horizon need not intersect $\uC_{in}$ at the same $U$-coordinate; in particular, it might happen in the black hole interior region of the background solution. This is why we consider the initial curve $\uC_{in}$ that protrudes a bit into the black hole interior region, although we are ultimately interested in stability of only the black hole exterior region.
\item On the other hand, the precise global structure of the interior region in a $(\omg, \dlt, \Lmb)$-admissible background solution is irrelevant for our stability theorem. Indeed, note that essentially all of the assumptions on $\phibg$ (see \eqref{eq:dlt-adm:phi}--\eqref{eq:dlt-adm:dvrphi}) are made only in the black hole exterior region $\calD = \widetilde{\calD} \cap J^{-}(\NI)$. The only exception is \eqref{eq:dlt-adm-duduphi}, which is also needed in the part of the initial hypersurface in $\widetilde{\calD}$. 
\item In fact, the larger domain $\widetilde{\calD}$ is only needed to compare the initial data, for instance in the statement of Theorem~\ref{thm:L-st-ch}. We only need the bound \eqref{eq:dlt-adm-duduphi} on the part of the initial hypersurface in $\widetilde{\calD}$ in order to guarantee that one can change back and forth between the initial-data-normalized and the future-normalized coordinates (cf. Section~\ref{subsec:gauge}).
\item The factors of $\rbg_{\EH}$ ensure that the bounds in Definition~\ref{def:dlt-adm} are dimensionless, i.e., invariant under the scaling $(\Omg, r, \phi, \e) \mapsto (a \Omg, a r, \phi, a \e)$ for any constant $a > 0$.
\item The dimensionless quantity $\Lmb$ can be reexpressed as $\Lmb = (\sup_{\uC_{in}} \rbg / \inf_{\uC_{in}} \rbg)$, i.e., it measures the ratio of the largest and smallest area-radii on $\uC_{in}$. In \eqref{eq:dlt-adm:duphi}, we have inserted the factor $\Lmb^{-1/2}$ for technical convenience; see \eqref{eq:dlt-tB-decay} below.
%In \eqref{eq:dlt-adm:duphi}, we have inserted the factor $\Lmb^{-1/2}$ to make sure that the modified mass $\varpibg$ varies at most of order $O(\dlt)$ on $\uC_{in}$, which is consistent with \eqref{eq:dlt-adm:m}.
% In fact, \eqref{eq:dlt-adm:varpi} can be proved (with a larger implicit constant) from \eqref{eq:dlt-adm:duphi}, \eqref{eq:dlt-adm:dvphi}, Penrose's inequality and monotonicity of $\varpi$.
\end{enumerate}
\end{remark}

Consider a $C^{2}$ solution $(\Omg, r, \phi, \e)$ to the Einstein--Maxwell--(real)--scalar--field system in spherical symmetry on a domain $\calO$. Assume that there exist constant-$U$ curves such that $r\to \infty$ towards the future. Attach an ideal boundary $\NI \cap \overline{\calO}$ of $\calO$ which is in one-to-one correspondence to constant-$U$ curves along which $r \to \infty$ (cf. Theorem~\ref{thm:kommemi}). Suppose that $\NI \cap \overline{\calO}$ is parametrized by $U \in [U_{1}, U_{2}]$. We define
\begin{equation} \label{eq:L-calD}
	\mathfrak{L} \restriction_{\calO} = \int_{U_{1}}^{U_{2}} 2 M(U) \Phi(U) \Gmm(U) \, \ud U.
\end{equation}
where $M(U) = \lim_{r \to \infty} \varpi(U, V)$, $\Phi(U) = \lim_{r \to \infty} r \phi(U, V)$ and $\Gmm(U) = \lim_{r \to \infty} \frac{\rd_{U} r}{1- \mu}(U, V)$ as in Definition~\ref{def.L}, where the limit is taken with $U$ fixed. In fact, if $\calD$ is a subset of the maximal globally hyperbolic future development of an admissible Cauchy data set, then $\mathfrak{L} \restriction_{\calD}$ is precisely the dynamically-defined quantity $\mathfrak{L}$ in Definition~\ref{def.L}, except that the integral is restricted to $\NI \cap \overline{\calO}$.

The essential step in our proof of Theorem~\ref{thm:L-stability} is the following theorem concerning stability of $\mathfrak{L} \restriction_{\calD}$ for an $(\omg, \dlt, \Lmb)$-admissible background solution. 
\begin{theorem} [Stability of $\mathfrak{L} \restriction_{\calD}$ for an $(\omg, \dlt, \Lmb)$-admissible background solution] \label{thm:L-st-ch}
Fix $\omg \in (2, 3]$ and $\eta_{0} > 0$ so that $\eta_{0} < \min \set{2 \omg - 4, 1}$. Let $\dlt_{0}, \dlt, \Lmb$ satisfy
\begin{equation} \label{eq:L-st-ch:dlt}
	0 < \dlt \leq \Lmb^{-100 \eta_{0}} \dlt_{0}, \qquad \Lmb > 100.
\end{equation}
Assume furthermore that $\dlt_{0} > 0$ is sufficiently small depending on $\omg$, $\eta_{0}$ and $\varpibg_{f}^{-1} \abs{\ebg}$. Consider an $(\omg, \dlt, \Lmb)$-admissible background solution $(\Omgbg, \rbg, \phibg, \ebg)$ in $\widetilde{\calD}$ with $\rbg_{\EH} = \sup_{\EH} \rbg$.
On the characteristic initial hypersurface $\uC_{in} \cup C_{out}$, consider an initial data set $(\Omg \restriction_{\uC_{in} \cup C_{out}}, r \restriction_{\uC_{in} \cup C_{out}}, \phi \restriction_{\uC_{in} \cup C_{out}}, \e)$ which obeys the following difference estimates:
\begin{align} 
\sup_{\uC_{in}} \rbg_{\EH} \Abs{\frac{1}{\rd_{U} r} \rd_{U} \phi - \frac{1}{\rd_{U} \rbg} \rd_{U} \phibg} \leq & \Lmb^{-3/2} \eps_{0}, \label{eq:L-st-ch:ini-duphi} \\
\sup_{\uC_{in}}\Abs{\frac{\rd_{U} \rbg}{\rd_{U} r} - 1} \leq & \Lmb^{-1} \eps_{0}, \label{eq:L-st-ch:ini-dur} \\
\sup_{C_{out}} \left(\frac{\rbg}{\rbg_{\EH}}\right)^{\omg} \Abs{\frac{1-\mu}{\rd_{V} r} \rd_{V} (r \phi) - \frac{1-\mubg}{\rd_{V} \rbg} \rd_{V} (\rbg \phibg)} \leq & \eps_{0}, \label{eq:L-st-ch:ini-dvrphi} \\
\sup_{C_{out}} \Abs{\left( \frac{\rd_{V} r}{1-\mu}\right)^{-1}\frac{\rd_{V} \rbg}{1-\mubg} - 1} \leq & \eps_{0}, \label{eq:L-st-ch:ini-dvr} \\
\Abs{\phi - \phibg} \restriction_{\uC_{in} \cap C_{out}} \leq & \Lmb^{-1} \eps_{0}, \label{eq:L-st-ch:ini-phi}\\
\rbg_{\EH}^{-1} \Abs{r - \rbg} \restriction_{\uC_{in} \cap C_{out}} \leq & \eps_{0}, \label{eq:L-st-ch:ini-r} \\
\rbg_{\EH}^{-1} \Abs{\varpi - \varpibg} \restriction_{\uC_{in} \cap C_{out}} \leq & \eps_{0}, \label{eq:L-st-ch:ini-m}\\
\rbg_{\EH}^{-1} \Abs{\e - \ebg} \leq & \eps_{0}. \label{eq:L-st-ch:ini-e}
\end{align}
Then for $\eps_{0} > 0$ sufficiently small (depending on $\dlt_{0}$, $\varpibg_{f}^{-1} \abs{\ebg}$, $\eta_{0}$, $\omg$ and $\Lmb$), the maximal globally hyperbolic future development $(\Omg, r, \phi, \e)$ of $(\Omg \restriction_{\uC_{in} \cup C_{out}}, r \restriction_{\uC_{in} \cup C_{out}}, \phi \restriction_{\uC_{in} \cup C_{out}}, \e)$ has a complete future null infinity $\NI$ and an event horizon $\EH$. Moreover, for some constant $C > 0$ that depends only on $\varpibg_{f}^{-1} \abs{\ebg}$, $\eta_{0}$, $\omg$ and $\Lmb$, the following hold:
\begin{itemize}
\item Denoting $r_{\EH}=\sup_{\EH} r$, we have
\begin{equation}\label{eq:L-st-ch:r}
\Abs{r_{\EH} - \rbg_{\EH}} \leq  C \rbg_{\EH} \eps_{0}.
\end{equation}
\item Denoting $\calD=J^-(\NI)\cap \widetilde{\calD}$ for the $(\omg,\de,\Lmb)$-admissible background solution, as well as $\calD=J^-(\NI)$ within the maximal globally hyperbolic future development of $(\Omg \restriction_{\uC_{in} \cup C_{out}}, r \restriction_{\uC_{in} \cup C_{out}}, \phi \restriction_{\uC_{in} \cup C_{out}}, \e)$, we have
\begin{equation}\label{eq:L-st-ch:L}
	\abs{\mathfrak{L} \restriction_{\calD} - \overline{\mathfrak{L}} \restriction_{\calD}} \leq  C \rbg_{\EH}^{3} \eps_{0},	
\end{equation}
where $\mathfrak{L} \restriction_{\calD}$ is as in \eqref{eq:L-calD} and $\overline{\mathfrak{L}} \restriction_{\calD}$ is as in \eqref{eq:L-calD} but with $(r,\phi,\varpi,\e)$ replaced by $(\rbg, \phibg, \varpibg, \ebg)$.
\end{itemize}

\end{theorem}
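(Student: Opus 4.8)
The plan is to prove Theorem~\ref{thm:L-st-ch} by reducing the control of $\mathfrak{L}\restriction_{\calD} - \overline{\mathfrak{L}}\restriction_{\calD}$ to decay estimates for the differences of the three constituents $M(u)$, $\Phi(u)$, $\Gmm(u)$ along future null infinity in the \emph{future-normalized} coordinates $(u,v)$ and $(\ubg,\vbg)$. Writing $2M\Phi\Gmm - 2\overline{M}\,\overline{\Phi}\,\overline{\Gmm} = 2(M-\overline{M})\Phi\Gmm + 2\overline{M}(\Phi-\overline{\Phi})\Gmm + 2\overline{M}\,\overline{\Phi}(\Gmm-\overline{\Gmm})$, and using that $\overline{M}$, $\overline{\Gmm}$ are bounded (by \eqref{eq:dlt-adm:m} and the normalization $\overline{\Gmm}\restriction_{\NI} = -1$) while $\overline{\Phi}$, $M-\overline{M}$, $\Gmm-\overline{\Gmm}$ decay in $u$ with integrable rate by Price's law (Definition~\ref{def:dlt-adm} and its consequences through the Raychaudhuri equations), the only genuinely dangerous term is $2\overline{M}(\Phi-\overline{\Phi})\Gmm$: here $\overline{M}\Gmm$ does \emph{not} decay, so we need $|\Phi(u)-\overline{\Phi}(u)| \le C u^{-\bt}\eps_0$ for some $\bt > 1$, integrable over $\NI$. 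Establishing \eqref{eq:L-st-ch:L} therefore hinges entirely on this integrable decay of the radiation-field difference, and \eqref{eq:L-st-ch:r} will come out along the way as a consequence of the stability of $\varpi$ near $\EH$ combined with $r_{\EH} = \varpi_{\EH} + \sqrt{\varpi_{\EH}^2 - \e^2}$ (Proposition~\ref{prop.subextremality}).

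The first main step is a \emph{weak stability} result (this is the content of Theorem~\ref{thm:weak-st}, referenced in the outline): starting from the difference estimates \eqref{eq:L-st-ch:ini-duphi}--\eqref{eq:L-st-ch:ini-e} on $\uC_{in}\cup C_{out}$, I would run a bootstrap argument in the initial-data-normalized coordinates $(U,V)$ to prove nondegenerate energy boundedness and an integrated local energy decay (Morawetz) estimate for $\phidf := \phi - \phibg$, and correspondingly pointwise and integrated bounds for the geometric differences $\rd_u r - \rd_u\rbg$, $\rd_v r - \rd_v\rbg$, $\varpi - \varpibg$, $r - \rbg$ via \eqref{eq:EMSF-ray} and the $\varpi$-equations in \eqref{eq:EMSF-r-phi-m}. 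The nonlinear error terms of the schematic form $\iint \frac{r^{3-\eta_0}}{\Omg^2}(\rd_v\phidf)^2(\rd_u\phidf)^2$ are the source of difficulty at this stage, and I would control them not by extra pointwise bounds but by an \emph{interaction Morawetz estimate} obtained by pairing the wave equation for $\phidf$ with $f_0(r)\big(\tfrac{1-\mu}{\dvr}\rd_v + \tfrac{1-\mu}{\dur}\rd_u\big)$ for a well-chosen $f_0$. Crucially, the output of this weak stability step is strong enough to show that the coordinate transformation $(U,V)\mapsto (u,v)$ (and $(U,V)\mapsto(\ubg,\vbg)$) has controlled Jacobians — here the second-derivative smallness assumptions \eqref{eq:dlt-adm-duduphi}, \eqref{eq:dlt-adm-dvdvrphi} and \eqref{eq:L-st-ch:ini-duphi} are exactly what is needed — so that we may henceforth work in the future-normalized gauge with the initial difference still of size $O(\eps_0)$ in appropriately weighted norms, and with the background satisfying the $(\omg,\dlt,\Lmb)$-admissible bounds.

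The second main step, carried out in the future-normalized coordinates, is to upgrade boundedness to \emph{decay}. Using Theorem~\ref{thm:weak-st} as the seed, I would implement the Dafermos--Rodnianski $r^p$-weighted energy hierarchy \cite{DRNM} for $\phidf$ (equivalently for $\psidf := r\phi - \rbg\phibg$), which, combined with a dyadic-in-$u$ pigeonholing and the Morawetz estimate, yields decay of the nondegenerate energy of $\phidf$ through a foliation, and in particular an \emph{integrable} decay rate for $\psidf$ restricted to a fixed constant-$r$ curve, say $\{r = R\}$: $|\psidf|(u) \le C u^{-\bt}\eps_0$ with $\bt > 1$. One has to be careful that the $r^p$ estimates interact correctly with the nonlinear error terms and with the fact that we are estimating a \emph{difference} against a nontrivial background; the admissibility bounds \eqref{eq:dlt-adm:phi}--\eqref{eq:dlt-adm:dvrphi} and the smallness parameter $\dlt \le \Lmb^{-100\eta_0}\dlt_0$ are used to absorb the background-dependent contributions.

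The final step propagates the decay on $\{r = R\}$ out to $\NI$. Writing the wave equation for $\tfrac{1-\mu}{\dvr}\rd_v\psi$ (or the analogue in $(u,V)$-type coordinates) and integrating along outgoing characteristics from $\{r=R\}$ to null infinity — exactly the integration-along-characteristics technique already used in Section~\ref{sec.contra.unif} and in \cite{LO1} — I would show $|\Phi(u) - \overline{\Phi}(u)| = \lim_{v\to\infty}|\psidf|(u,v) \le C u^{-\bt}\eps_0$, using that the transport-equation error terms carry an extra power of $r^{-1}$ (hence integrate to something controlled by the ADM-type mass over $R$) and that the difference of the data on $\{r=R\}$ has already been shown to decay. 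Combined with the decay of $M - \overline{M}$, $\Gmm - \overline{\Gmm}$ from Theorem~\ref{thm:weak-st} and of $\overline{\Phi}$ from admissibility, the triangle-inequality decomposition above integrates over $\NI$ to give \eqref{eq:L-st-ch:L}. I expect the interaction-Morawetz/nonlinear-error bookkeeping inside the weak stability step, together with the bookkeeping needed to show the gauge transformation preserves the $O(\eps_0)$ smallness, to be the main obstacle; the $r^p$ hierarchy and the characteristic propagation are by now fairly standard once the right weighted norms are fixed.
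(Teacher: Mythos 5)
Your architecture matches the paper's: weak stability (energy boundedness plus a Morawetz estimate, with the interaction Morawetz trick for the quartic error), control of the gauge transformation, Dafermos--Rodnianski $r^p$-weighted hierarchy, integration along characteristics, and finally a triangle-inequality decomposition of the integrand $2M\Phi\Gmm$ in which the dangerous term is $\overline{M}(\Phi-\overline\Phi)\Gmm$. Two remarks on the decomposition: in the future-normalized gauges $\Gmm=\overline\Gmm=-1$ on $\NI$, so the third term of your three-term split vanishes identically, and the paper simply writes $\mathfrak{L}-\overline{\mathfrak{L}}=-\int 2M(\Phi-\overline\Phi)\,\ud u-\int 2(M-\overline{M})\overline\Phi\,\ud u$; this is a harmless redundancy in your sketch, and you correctly identify where the integrability must come from. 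The derivation of \eqref{eq:L-st-ch:r} from stability of $\varpi$ and $r_{\EH}=\varpi_{\EH}+\sqrt{\varpi_{\EH}^2-\e^2}$ also works, though the paper extracts it more directly from the pointwise $|r-\rbg|$ bound.

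The place where you go astray is in claiming that the weak-stability bootstrap is ``run in the initial-data-normalized coordinates $(U,V)$'' and that only afterward do you pass to the future-normalized gauges. This is not a matter of taste: the geometric difference quantities that feed the bootstrap, such as $\log\kpp-\log\kppbg$ and $\log(-\gmm)-\log(-\gmmbg)$, are \emph{not} coordinate-invariant, and the whole point of the paper's setup is that the two solutions are compared at \emph{equal numerical values} of their respective future-normalized coordinates $(u,v)=(\ubg,\vbg)$, i.e.\ at \emph{different} spacetime points in general, so that both event horizons sit at ``$u=\infty$'', both null infinities at ``$v=\infty$'', and the background quantities $\kppbg$, $-\gmmbg$ are uniformly close to $1$. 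A naive bootstrap in a single shared $(U,V)$ system would have to contend with the event horizons of the two solutions sitting at different finite $U$-values and with $(U,V)$-coordinate background quantities that do not exhibit the needed smallness; this is exactly the failure mode the paper warns about when it says an unwise coordinate choice makes the stability statement false. Since the future-normalized coordinates are themselves teleological, the paper resolves the apparent circularity by introducing a family $(u_{(t_B)},v_{(t_B)})$ normalized on the \emph{future} null boundary of the bootstrap region $\calD_{(t_B)}$, running the bootstrap there, and closing it by simultaneously proving the gauge-transformation estimate relating $\eps_{(t_B)}$ to $\eps_0$ (Propositions~\ref{prop:en-btstrp} and \ref{prop:g-btstrp}); these two parts of the bootstrap must interleave, not happen in the sequence you describe. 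The second-order-derivative hypotheses \eqref{eq:dlt-adm-duduphi}, \eqref{eq:dlt-adm-dvdvrphi} that you correctly flag are used precisely inside that gauge-transformation estimate, but only once the bootstrap coordinates are already the approximately-future-normalized ones.
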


\begin{remark} \label{rem:L-stability-char}
Some remarks concerning Theorem~\ref{thm:L-st-ch} are in order.
\begin{enumerate}
\item Note that the bounds in Theorem~\ref{thm:L-st-ch} are dimensionless, and invariant under a simultaneous coordinate transform (i.e., same coordinate transform $(U, V) \mapsto (U'(U), V'(V))$ for the two solutions).
\item In contrast to Theorem~\ref{thm:L-stability}, where the constant $C_{\overline{\Tht}}$ depends on the background solution in a highly nontrivial fashion, Theorem~\ref{thm:L-st-ch} has the feature that the constants in the conclusion \eqref{eq:L-st-ch:r}--\eqref{eq:L-st-ch:L} have more clear dependence on parameters. This point will be useful in the proof of Theorem~\ref{thm:instability} in Section~\ref{sec:instability}.
\item We will show that any solution arising from an $\omg_0$-admissible initial data set, after restricting to a region with sufficiently large $U$ and $V$, is an $(\omg, \dlt, \Lmb)$-admissible solution for any $\omg<\omg_0$ and appropriate $\dlt$ and $\Lmb$; see Lemma~\ref{lem:dlt-adm-exist}. However, the choice of $U$ and $V$ for this depends highly on the solution, and is part of the reason for the nontrivial dependence of $C_{\overline{\Tht}}$ in Theorem~\ref{thm:L-stability} on the background solution.

\item The reason that we need a parameter $\Lmb$ in the definition of $(\omg, \dlt, \Lmb)$-admissible solutions (as opposed to using only $\dlt$ to capture smallness) is that the estimates we proved in Corollary~\ref{cor:DR-large-r} do \underline{not} imply decay of $\left(\f{1-\mu}{\rd_V r} \rd_V\right)^2 \psi$ towards future timelike infinity along constant-$r$ curves. In order to obtain smallness of $\left(\f{1-\mu}{\rd_V r} \rd_V\right)^2 \psi$ as in \eqref{eq:dlt-adm-dvdvrphi}, we need to take $\Lambda$ large so that $C_{out}$ lies within a large-$r$ region (see the proof of Lemma~\ref{lem:dlt-adm-exist}).

\item The relation \eqref{eq:L-st-ch:dlt} is used in our proof below to control various error terms that grow in $\Lmb$. In practice, it will be used to absorb any logarithmically growing terms in $\Lmb$. To ensure that an arbitrary maximal globally hyperbolic future development of an admissible data set approaches an $(\omg, \dlt, \Lmb)$-admissible solution with \eqref{eq:L-st-ch:dlt} in a neighborhood of timelike infinity, which is crucial in application of Theorem~\ref{thm:L-st-ch}, it is important that $\eta_{0} > 0$ can be chosen arbitrarily small.
This requires us to keep a very careful track of dependence of constants on $\Lmb$ in some part (namely, Theorem~\ref{thm:weak-st} below) of the proof of Theorem~\ref{thm:L-st-ch}.
\item Finally we note that, obviously, $\overline{\mathfrak{L}} \restriction_{\widetilde{\calD}} = \overline{\mathfrak{L}} \restriction_{\calD}$. 
\end{enumerate}
\end{remark}

In the remainder of this subsection, we give a proof of Theorem~\ref{thm:L-stability} assuming Theorem~\ref{thm:L-st-ch}. 
We start with a consequence of Price's law (see Section~\ref{sec:bg}).
\begin{lemma} \label{lem:dlt-adm-exist}
For $\omg_{0} > 2$, consider a maximal globally hyperbolic future development $(\Omgbg, \rbg, \phibg, \ebg)$ of an $\omg_{0}$-admissible Cauchy data. Let $\omg, \eta_{0}$ be positive numbers obeying $2 < \omg < \omg_{0}$ and $\eta_{0} < \frac{1}{200}(\omg_{0} - \omg)$. 
Then given any sufficiently small $\dlt_{0} > 0$, there exists $\Lmb > 100$ (depending on $\omg$, $\omg_0$, $\eta_0$, $\dlt_0$ and the solution) and a neighborhood $\widetilde{\calD}$ of $i^{+}$ of the form
\begin{equation*}
	\widetilde{\calD} = \set{(U, V) \in \PD : U \geq U_{i^{+}}, \ V \geq V_{i^{+}},\, \rbg(U, V) \geq (1 - 2 \dlt) \rbg_{\EH}} 
\end{equation*}
for some $U_{i^{+}}, V_{i^{+}}$ (depending on $\omg$, $\omg_0$, $\eta_0$, $\dlt_0$ and the solution), such that the restriction of $(\Omgbg, \rbg, \phibg, \ebg)$ to $\widetilde{\calD}$ is $(\omg, \dlt, \Lmb)$-admissible with $\dlt = \Lmb^{-100 \eta_{0}} \dlt_{0}$.
\end{lemma}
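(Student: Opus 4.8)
\textbf{Proof proposal for Lemma~\ref{lem:dlt-adm-exist}.}

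The plan is to take $\widetilde{\calD}$ to be a neighborhood of $i^{+}$ cut out of the $(u,v)$-future-normalized coordinate system (which, by the discussion in Section~\ref{subsec:bg-coords}, coincides with the $(U,V)$ system after a monotone reparametrization), and then check all items (1)--(10) of Definition~\ref{def:dlt-adm} one at a time, using the estimates of Section~\ref{sec:bg} for the fixed solution. First I would invoke Corollary~\ref{cor:DR-final} (with $\omg' = \omg$, which is admissible since $2 < \omg < \omg_{0}$) and Proposition~\ref{prop:bg-uv} to produce a characteristic rectangle $\calX_{1} = \set{(u,v) : u \geq u_{1}, \, v \geq v_{1}}$ on which the pointwise bounds \eqref{eq:bg-uv:phi}--\eqref{eq:bg-uv:dvphi} hold with $\dlt_{1}$ as small as desired and with the free parameter $\Lmb$ of that proposition at our disposal; note that the $r$-weighted and $\brk{\cdot}_{\rbg_{\EH}}$-weighted forms in \eqref{eq:dlt-adm:phi}--\eqref{eq:dlt-adm:dvrphi} match, after translating coordinates so $u_{1} = v_{1} = 1$ and using $\brk{\cdot}_{\rbg_{\EH}} = \frac{(\cdot) + \rbg_{\EH}}{\rbg_{\EH}}$ together with $r_{\EH} = \rbg_{\EH}$, exactly the Price's-law rates recorded there (recall $\dvrphi$-type bound \eqref{eq:DR-final:dvrphi} is already the sharp $\min\set{V^{-\omg}, r^{-\omg}}$ shape needed for \eqref{eq:dlt-adm:dvrphi}). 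The subextremality in item (3), the mass relation $\rbg_{\EH} = \varpibg_{\EH} + \sqrt{\varpibg_{\EH}^{2} - \ebg^{2}}$, and item (4) follow from Proposition~\ref{prop.subextremality} and Corollary~\ref{varpi.same.limit}; item (6), the bound $\abs{\varpibg - \varpibg_{f}} \leq \dlt^{2}$ on $\calD$, follows from \eqref{eq:bg-uv:m-bnd} once $\calX_{1}$ is chosen deep enough, since there $\sup_{\calD}\abs{\varpibg - \varpibg_{f}} \leq C\dlt_{1}^{2}\rbg_{\EH}$.

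The key quantitative point is the interplay between $\dlt$, $\dlt_{0}$ and $\Lmb$ forced by \eqref{eq:dlt-adm:duphi} and \eqref{eq:dlt-adm-duduphi}, which carry an extra $\Lmb^{-1/2}$ and $\Lmb^{-3/2}$ respectively. Here I would exploit the freedom of the parameter $\Lmb$ in Proposition~\ref{prop:bg-uv}: the conclusions \eqref{eq:bg-uv:duphi} and (for the second-derivative statement) \eqref{eq:bg-uv:kpp-decay} come with precisely these powers of $\Lmb^{-1}$ built in, so after fixing a small $\dlt_{1}$ one chooses $\Lmb$ large enough that $\Lmb^{100\eta_{0}}\dlt_{1} \ll \dlt_{0}$, which is possible because $100\eta_{0} < \frac12(\omg_{0}-\omg)$ leaves room (the leftover decay $(r/\rbg_{\EH})^{-(\omg_{0}-\omg)}$ near $C_{out}$ and $u^{-(\omg_{0}-\omg)}$ towards $i^{+}$ can absorb any logarithmic or small polynomial loss in $\Lmb$). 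Concretely, one picks $\Lmb$ so that $\widetilde{\calD}$ protrudes into a region $\set{\rbg \gtrsim \Lmb \rbg_{\EH}}$ at the corner $p = \uC_{in}\cap C_{out}$, i.e.\ $\Lmb = \sup_{\uC_{in}}\rbg / \inf_{\uC_{in}}\rbg$ as in Remark~\ref{rem:dlt-admissible}(5), and then the second-derivative bounds \eqref{eq:dlt-adm-duduphi}, \eqref{eq:dlt-adm-dvdvrphi} come from \eqref{eq:DR:duduphi-small-r} (on $\uC_{in}$, which sits in a bounded-$r$ region) and \eqref{eq:DR:dvdvrphi-large-r} (on $C_{out}$, which sits in a large-$r$ region once $\Lmb$ is large — this is exactly the point of Remark~\ref{rem:L-stability-char}(4)), both multiplied by the appropriate power of $\Lmb$ that \eqref{eq:bg-uv:duphi} and \eqref{eq:bg-uv:kpp-decay} supply. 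Finally set $\dlt := \Lmb^{-100\eta_{0}}\dlt_{0}$; then all of (7)--(10) hold with this $\dlt$, and (1), (2), (5), (8) are immediate structural facts about $\widetilde{\calD} = \set{U \geq U_{i^{+}}, V \geq V_{i^{+}}, \rbg \geq (1-2\dlt)\rbg_{\EH}}$ once $U_{i^{+}}, V_{i^{+}}$ are taken large enough (using future completeness of $\NI$ from \cite{DafTrapped} for the structure of $C_{out}$, and the fact that $\EH$ meets the interior of $\uC_{in}$ since $\rbg = (1-2\dlt)\rbg_{\EH} < \rbg_{\EH}$ at its future endpoint while $\rbg = \Lmb\rbg_{\EH}$ at $p$).

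The main obstacle I anticipate is not any single estimate — each of (1)--(10) reduces to a bound already proved in Section~\ref{sec:bg} — but rather the bookkeeping of the three-way dependence among $\dlt_{1}$ (smallness in Proposition~\ref{prop:bg-uv}), $\Lmb$, and $\eta_{0}$, and in particular checking that one can genuinely choose $\Lmb$ \emph{after} $\dlt_{0}$ without circularity: $\dlt_{1}$ must be chosen small in terms of $\varpibg_{f}^{-1}\abs{\ebg}$ (a universal-type smallness), then $\Lmb$ chosen large in terms of $\dlt_{0}, \dlt_{1}, \eta_{0}$, then $U_{i^{+}}, V_{i^{+}}$ chosen (depending on the solution) so that $\calX_{1}$ with the now-fixed $\Lmb$ actually realizes all the pointwise bounds. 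One must also be careful that the $\brk{\ubg}_{\rbg_{\EH}}$ versus $\brk{\vbg}_{\rbg_{\EH}}$ weights in Definition~\ref{def:dlt-adm} are arranged so that the two overlapping regions $\set{\rbg \leq 30\rbg_{\EH}}$ and $\set{\rbg \geq 10\rbg_{\EH}}$ are consistently covered; this is exactly the content of \eqref{eq:DR-final:uV} (that $u \lesssim v \lesssim V$ in the intermediate region), which lets one pass between the $u$-decay and $v$-decay forms without loss.
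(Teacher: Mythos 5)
Your proposal has the right overall shape — invoking Corollary~\ref{cor:DR-final} and Proposition~\ref{prop:bg-uv} for items (6)--(9), exploiting the $\Lmb$-freedom in Proposition~\ref{prop:bg-uv} together with the hypothesis $\eta_{0} < \frac{1}{200}(\omg_{0}-\omg)$ so that the leftover power $\Lmb^{-(\omg_{0}-\omg)}$ beats $\Lmb^{-100\eta_{0}}\dlt_{0}$, using Corollary~\ref{cor:DR-large-r} for the second $\rd_{V}$-derivative bound \eqref{eq:dlt-adm-dvdvrphi} on $C_{out}$ (and indeed this is precisely the point flagged in Remark~\ref{rem:L-stability-char}), and noting that the ordering of choices is $\dlt_{1}$ first, then $\Lmb$, then $U_{i^{+}}, V_{i^{+}}$. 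That matches the paper's Steps 2--4.

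However, there is a genuine gap: you dismiss the regularity of $\widetilde{\calD}$, the spacelikeness of its future boundary, and item (5) as ``immediate structural facts.'' They are not. The domain $\widetilde{\calD} = \set{U \geq U_{i^{+}}, V \geq V_{i^{+}}, \rbg \geq (1-2\dlt)\rbg_{\EH}}$ \emph{protrudes into the black hole interior} (the part with $U > U_{\EH}$), and the future-normalized coordinates $(u,v)$ you propose to cut $\widetilde{\calD}$ out of only cover the exterior $J^{-}(\NI)$ — the paper is explicit about this in \S\ref{subsec:bg-coords}. Nothing in Section~\ref{sec:bg} controls the solution past $\EH$, so neither ``the solution is $C^{2}$ in $\widetilde{\calD}$'' nor ``$\set{\rbg = (1-2\dlt)\rbg_{\EH}}$ is spacelike'' follow from the estimates you have assembled. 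The paper's Step~1, which you have no analogue of, appeals to the $C^{0}$-stability theorem for the black hole interior proved in the companion paper \cite{LO.interior}: by that theorem, for $V_{i^{+}}$ large and $U$ slightly past $U_{\EH}$, the interior geometry (and in particular the level set $\rbg = (1-2\dlt)\rbg_{\EH}$) is quantitatively close to Reissner--Nordstr\"om, which is what makes items (1), (5) and the extension to $\widetilde{\calD}$ legitimate. Your argument that $\EH$ meets the interior of $\uC_{in}$ (since $\rbg$ decreases from $\Lmb\rbg_{\EH}$ at $p$ to $(1-2\dlt)\rbg_{\EH}$) is fine, but it presupposes that $\uC_{in}$ actually extends as a regular null segment to a point where $\rbg = (1-2\dlt)\rbg_{\EH}$, which again lives in the interior and is the content of Step~1.
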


\begin{proof}
Since only the background solution $(\Omgbg, \rbg, \phibg, \ebg)$ is considered, for notational convenience, {\bf we omit the bar above various quantities in this proof}. For $\omg$, $\eta_{0}$ and $\dlt_{0}$ given as in the statement of the lemma, we need to choose $\Lmb > 100$, $U_{i^{+}}$ and $V_{i^{+}}$. Notice that once $\Lmb$ is chosen, $\de$ is fixed according to $\dlt = \Lmb^{-100 \eta_{0}} \dlt_{0}$, which is then a number smaller than $\dlt_{0}$.

\pfstep{Step~1: Handling the interior of the black hole and the $r$ value on $\EH$} 
Definition~\ref{def:dlt-adm} requires that the solution is $C^2$ in the set $\widetilde{\calD}$ as defined in the statement of the proposition. In particular, we need to control the region in the \emph{interior} of the black hole with $V\geq V_{i^{+}}$ and $r(U, V) \geq (1 - 2 \dlt) r_{\EH}$. While the present paper deals only with the exterior region, because of the Price's law result in Theorem~\ref{thm:DR-full}, the stability theorem in our companion paper \cite{LO.interior} for the interior region is applicable. By the $C^0$-stability theorem of the interior of black hole in \cite{LO.interior}, for $V_{i^{+}}$ sufficiently large and $U'_{int} > U_{\EH}$ sufficiently close to $U_{\EH}$, $r(U,V)$ can be made arbitrarily close to the corresponding $r$-value on Reissner--Nordstr\"om spacetime with parameters $(M,\e)=(\varpi_{\EH},\e)$ when $u\in [U_{\EH}, U'_{int}]$ and $V\geq V_{i^{+}}$. In particular, for any $\de>0$ sufficiently small (depending on $\varpi_{\EH}^{-1} |\e|$), there exists $V_{i^{+}}$ sufficiently large (depending on $\de$ and the solution) such that 
\begin{itemize}
\item On $\EH$, $(1-\de)r_{\EH}\leq r \leq r_{\EH}$.
\item In the interior region, the solution remains $C^2$ in  $\set{(U,V):V\geq V_{i^{+}},\, U\geq U_{\EH},\,r(U, V) \geq (1 - 2 \dlt) r_{\EH}}$ with $ \{r(U,V) =(1 - 2 \dlt) r_{\EH}\}$ being a spacelike hypersurface. 
\end{itemize}
We choose $\Lmb$ large enough so that $\de=\Lmb^{-100 \eta_{0}} \dlt_{0}$ is small enough to guarantee that the above holds.

\pfstep{Step~2: Fixing $\Lambda$} 
Let $\calX_{0}$, $\underline{\calN}$ and $\calN$ be defined as in Section~\ref{subsec:DR-full}. By Corollary~\ref{cor:DR-large-r}, there exists a finite constant $B > 0$ such that
\begin{equation*}
	r^{\omg + 1} \Abs{\left(\frac{1-\mu}{\rd_{V} r} \rd_{V} \right)^{2} \psi} \leq r^{\omg-\omg_{0}} B \leq (\Lmb r_{\EH})^{-(\omg_{0} - \omg)} B \quad \hbox{ in } \calN \cap \set{r \geq \Lmb r_{\EH}}.
\end{equation*}
We fix $\Lmb > 100$ to be sufficiently large so that, in addition to the requirement is Step~1, we have that for every $U \in \bbR$,
\begin{equation*} 
	\sup_{C_{U} \cap \set{r \geq \Lmb r_{\EH}}} \frac{r^{\omg+1}}{r_{\EH}^{\omg}} \Abs{\left(\frac{1-\mu}{\rd_{V} r} \rd_{V} \right)^{2} \psi} \leq \Lmb^{- 100 \eta_{0}} \dlt_{0} = \dlt.
\end{equation*}

\pfstep{Step~3: Choosing $U_{i^{+}}$ depending on $V_{i^{+}}$} For any $V_{i^{+}}$, we choose $U_{i^{+}}$ so that $(U_{i^{+}}, V_{i^{+}}) \in \set{r = \Lmb r_{\EH}}$. Since $\Lmb$ is large, note that $\set{r = \Lmb r_{\EH}}$ is a timelike curve in the black hole exterior, and the point $(U_{i^{+}}, V_{i^{+}})$ exists if $V_{i^{+}}$ is sufficiently large. Moreover, $U_{i^{+}} \to \infty$ as $V_{i^{+}} \to \infty$. 

\pfstep{Step~4: Fixing $V_{i^{+}}$} It remains to determine $V_{i^{+}}$. Recall that according to Step~1, we already need to choose $V_{i^+}$ large to show that $\widetilde{\calD}$ is a regular region in the maximal globally hyperbolic future development. It remains to choose $V_{i^{+}}$ possibly even larger in order to guarantee that \eqref{eq:dlt-adm:m}, \eqref{eq:dlt-adm:phi}--\eqref{eq:dlt-adm:dvrphi} and \eqref{eq:dlt-adm-duduphi} hold. By Corollary~\ref{cor:DR-small-r}, for $V_{i^{+}}$ large enough, we have
\begin{equation*} 
	\sup_{V \geq V_{i^{+}}} \sup_{\uC_{V} \cap \set{r \leq \Lmb r_{\EH}}} r_{\EH}^{2} \Abs{\left(\frac{1}{\rd_{U} r} \rd_{U} \right) \phi}^{2} \leq \Lmb^{-3/2} \dlt.
\end{equation*}
Moreover, by Corollary~\ref{cor:DR-final} (with $\omg'$ chosen so that $\omg < \omg' < \omg_{0}$) and Proposition~\ref{prop:bg-uv}, we may ensure that the decay assumptions \eqref{eq:dlt-adm:phi}--\eqref{eq:dlt-adm:dvrphi} and the mass difference bound \eqref{eq:dlt-adm:m} hold in $J^{-} (\NI) \cap \set{(U, V) : U \geq U_{i^{+}}, \ V \geq V_{i^{+}}}$ by taking $V_{i^{+}}$ (and thus also $U_{i^{+}}$) sufficiently large. Therefore, it follows that the solution restricted to $\widetilde{\calD}$ (defined in the statement of the lemma) is $(\omg, \dlt, \Lmb)$-admissible. \qedhere
\end{proof}

We now combine Theorem~\ref{thm:L-st-ch} with the Cauchy and large-$r$ stability results (Propositions~\ref{prop:cauchy-st}, \ref{prop:extr-st-cauchy} and \ref{prop:extr-st}) to prove Theorem~\ref{thm:L-stability}.

\begin{proof}[Proof of Theorem~\ref{thm:L-stability} assuming Theorem~\ref{thm:L-st-ch}]
Let $\omg_{0} > 2$ and consider $\omg_{0}$-admissible Cauchy data sets $\Tht, \overline{\Tht}$ such that $d_{1, \omg_{0}}^{+}(\Tht, \overline{\Tht}) < \eps$. Our aim is to show that $\abs{\mathfrak{L} - \overline{\mathfrak{L}}} \leq C_{\overline{\Tht}} \eps$ for a sufficiently small $\eps > 0$ (depending on $\overline{\Tht}$). (Observe that the conclusion of the theorem is obvious if $\eps$ is not small, i.e., for $\eps$ larger than a fixed constant depending on $\overline{\Tht}$.)

Let $2 < \omg < \omg_{0}$ and $\eta_{0} < \frac{1}{200}(\omg_{0} - \omg)$ be fixed. Let $\dlt_{0} > 0$ be a constant to be specified below. We consider a subset (to be defined precisely below as $\cup_{i=1}^4 \calR_i$) of the maximal globally hyperbolic future development of $\overline{\Tht}$ which contains the connected component of the exterior region with $\rd_V \rbg>0$ and $\rd_U \rbg < 0$. Using the initial-data-normalized coordinates $(U, V)$, we define the following regions:
\begin{figure}[h]
\begin{center}
\def\svgwidth{300px}
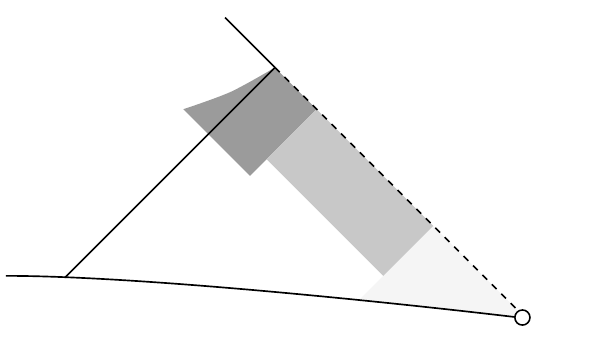 
\caption{Regions $\calR_{1}, \ldots, \calR_{4}$} \label{fig:L-stability}
\end{center}
\end{figure}

\begin{itemize}
\item $\calR_{4}$: We define $\calR_{4}$ to be a neighborhood of $i^{+}$ as in Lemma~\ref{lem:dlt-adm-exist}, i.e.,
\begin{equation*}
\calR_{4} = \widetilde{\calD} = \set{(U, V) \in \PD : U \geq U_{i^{+}}, \ V \geq V_{i^{+}}, \ \rbg(U, V) \geq (1 - 2 \dlt) \rbg_{\EH}},
\end{equation*}
so that the restriction of the background solution to $\calR_{4}$ is $(\omg, \dlt, \Lmb)$-admissible with $\dlt = \Lmb^{-100 \eta_{0}} \dlt_{0}$. We fix $\dlt_{0} > 0$ so that Theorem~\ref{thm:L-st-ch} is applicable. We denote the future endpoint of the initial incoming curve (i.e., $\uC_{V_{i^{+}}} \cap \set{\rbg = (1 - 2 \dlt) \rbg_{\EH}}$) by $(U_{int}, V_{i^{+}})$. 

\item $\calR_{3}$: We define $\calR_{3}$ to be a characteristic rectangle as in Proposition~\ref{prop:extr-st}, i.e.,
\begin{equation*}
	\calR_{3} = \calR = \set{(U, V) \in \PD : U_{0} \leq U \leq U_{i^{+}}, \ V \geq V_{0}},
\end{equation*}
for some $(U_{0}, V_{0}) \in \PD$. We fix $V_{0}$ here, and defer the choice of $U_{0}$ until later.

We choose $U_{\ast} = U_{0}$ so that \eqref{eq:extr-st:hyp:duups} holds vacuously. By Proposition~\ref{prop:bg-large-r}, Theorem~\ref{thm:DR-full} and Corollary~\ref{cor:DR-large-r}, there exists $B \geq 1$ such that \eqref{eq:extr-st:hyp:psi} and \eqref{eq:extr-st:hyp:dvpsi} hold on $\calN \cap \set{U_{0} \leq U \leq U_{i^{+}}}$ for any $\rbg_{0} \geq 1$. Moreover, by Remark~\ref{rem:DR-large-r:dUr}, \eqref{eq:extr-st:hyp:dur} holds on any $\uC_{V_{0}} \cap \set{U \leq U_{i^{+}}}$ inside $\calN$ for some $B$ depending on the background solution and $U_{i^{+}}$. Fixing a large enough $V_{0}> V_{i^{+}}$ (depending on $B$ and $\omg$) so that $\rbg(U_{i^{+}}, V_{0})$ is sufficiently large, we may ensure that \eqref{eq:extr-st:hyp:r-m} and \eqref{eq:extr-st:hyp:m-e} hold with $\rbg_{0} = \rbg(U_{i^{+}}, V_{0})$ and an appropriate $\dlt_{\calR} > 0$ so that Proposition~\ref{prop:extr-st} is applicable on $\calR_{2}$.

\item $\calR_{2}$: We define $\calR_{2}$ to be as in Proposition~\ref{prop:extr-st-cauchy}, i.e.,
\begin{equation*}
	\calR_{2} = \underline{\calR} = \set{(U, V) \in \PD : U \leq U_{0}}.
\end{equation*}
Observe that $\calR_{2}$ is the domain of dependence of $\Sgm_{0} \cap \calR_{2}$. Taking $U_{0}$ sufficiently negative, we may ensure that $\calR_{2} \subseteq \underline{\calN}$, so that by Proposition~\ref{prop:bg-large-r}, \eqref{eq:extr-st-cauchy:hyp:dur}--\eqref{eq:extr-st-cauchy:hyp:dvpsi} hold in $\calR_{2}$ for any $\rbg_{0} \geq 1$ for some $\underline{B} \geq 0$. Choosing $U_{0}$ even more negative if necessary, we may ensure that \eqref{eq:extr-st-cauchy:hyp:r-m} and \eqref{eq:extr-st-cauchy:hyp:m-e} hold with $\rbg_{0} = \rbg(U_{0}, \underline{V}(U_{0}))$ and a small enough $\dlt_{\underline{\calR}} > 0$ (depending on $\omg, \underline{B}$) so that Proposition~\ref{prop:extr-st-cauchy} is applicable.

\item $\calR_{1}$: Finally, we define $\calR_{1}$ to be the domain of dependence of $\Sgm_{0, \ast} = \Sgm_{0} \cap \set{(U, V) \in \PD : U \leq U_{int}, \ V \leq V_{0}}$. 
\end{itemize}

Notice that some of the $\calR_i$ defined above are not disjoint, but it is clear that their union covers the connected component of the exterior region of the maximal globally hyperbolic future development of $\overline{\Tht}$ with $\rd_U \rbg<0$ and $\rd_V \rbg>0$.

We split $\mathfrak{L} - \overline{\mathfrak{L}}$ according to the above decomposition as follows. Given $U_{1} < U_{2}$, let
\begin{equation*}
	\mathfrak{L}(U_{1}, U_{2}) = \int_{U_{1}}^{U_{2}} 2 M(U) \Phi(U) \Gmm(U) \, \ud U, \quad
	\overline{\mathfrak{L}}(U_{1}, U_{2}) = \int_{U_{1}}^{U_{2}} 2 \overline{M}(U) \overline{\Phi}(U) \overline{\Gmm}(U) \, \ud U.
\end{equation*}
Then we decompose
\begin{align*}
	\mathfrak{L} - \overline{\mathfrak{L}}
	& = (\mathfrak{L} - \overline{\mathfrak{L}})(-\infty, U_{0}) 
	+ (\mathfrak{L} - \overline{\mathfrak{L}})(U_{0}, U_{i^{+}}) 
	+ (\mathfrak{L} \restriction_{\calD} - \overline{\mathfrak{L}}\restriction_{\calD}),
\end{align*}
where we have used the notation\footnote{Note that in general, in the perturbed solution, the event horizon does not lie inside $\{U=U_{\EH}\}$. We therefore do not compare the solutions in $\calD$ using the $(U,V)$ coordinate system.} as in Theorem~\ref{thm:L-st-ch}, that $\calD$ denotes both the exterior region of $(\omg,\de,\Lmb)$-admissible background in $\calR_4$ and the exterior region of the perturbed solution restricted to $U\geq U_{i^+}$ and $V\geq V_{i^+}$.
We will treat each term on the RHS in order, using Propositions~\ref{prop:cauchy-st}, \ref{prop:extr-st-cauchy} and \ref{prop:extr-st}, as well as Theorem~\ref{thm:L-st-ch}.

In what follows, we allow the constant $C > 0$ to depend on the background solution (in particular, on $\omg$, $B$ or $\underline{B}$ defined above). We also take $\eps$ smaller and smaller during the argument (again depending on the background solution), so that the stated results can be applied.
 
By Lemma~\ref{lem:cauchy-to-char}, Proposition~\ref{prop:extr-st-cauchy} is applicable in $\calR_{2}$ with $\eps_{2} \leq C \eps$. Hence, we obtain
\begin{equation} \label{eq:L-stability:pf:R2}
 \abs{(\mathfrak{L} - \overline{\mathfrak{L}})(-\infty, U_{0})} \leq C \eps. 
\end{equation}
Moreover, in combination with Proposition~\ref{prop:cauchy-st} and Corollary~\ref{cor:cauchy-st:psidf-phidf} in $\calR_{1}$, Proposition~\ref{prop:extr-st} is applicable in $\calR_{3}$ with $\eps_{3} \leq C \eps$. It follows that
\begin{equation} \label{eq:L-stability:pf:R3}
 \abs{(\mathfrak{L} - \overline{\mathfrak{L}})(U_{0}, U_{i^{+}})} \leq C \eps. 
\end{equation}
Next, combining Proposition~\ref{prop:extr-st} with Proposition~\ref{prop:cauchy-st} and Corollary~\ref{cor:cauchy-st:psidf-phidf} in $\calR_{1}$, we see that Theorem~\ref{thm:L-st-ch} is applicable in $\calR_{4}$ with $\eps_{0} \leq C \eps$. Therefore,
\begin{equation} \label{eq:L-stability:pf:R4}
\abs{\mathfrak{L} \restriction_{\calD} - \overline{\mathfrak{L}} \restriction_{\calD} } \leq C \eps. 
\end{equation}
Putting together \eqref{eq:L-stability:pf:R2}--\eqref{eq:L-stability:pf:R4}, the desired estimate for $\mathfrak{L} - \overline{\mathfrak{L}}$ follows. \qedhere
\end{proof}

The remainder of this section is devoted to the proof of Theorem~\ref{thm:L-st-ch}. 

\subsection{Coordinate choice and a weak stability result} \label{subsec:weak-st}
To ensure that the difference between the two solutions \emph{decays}, which is needed for stability of the quantity $\mathfrak{L} \restriction_{\calD}$ asserted in Theorem~\ref{thm:L-st-ch}, it is desirable to compare the solutions in a coordinate system that is normalized in the future (more precisely, along the event horizon and null infinity) of each solution. On the other hand, the assumptions \eqref{eq:L-st-ch:ini-duphi}--\eqref{eq:L-st-ch:ini-e} on the difference of initial data are made in a different coordinate system (essentially inherited from the coordinates normalized on the Cauchy initial hypersurface $\Sgm_{0}$ in the context of the whole proof of Theorem~\ref{thm:L-stability}). Relating the initial data differences in these alternate coordinate systems turns out to be a rather delicate issue\footnote{In previous works (in particular, see Holzegel \cite{Hol.biaxial}) this problem was not as serious, since there existed a coordinate-invariant way to measure the difference from the background solution.}, as it requires us to carefully control the constants (especially their dependence on $\Lmb = \frac{\rbg(1,1)}{\rbg_{\EH}}$) that arise in the proof.

We take a modular approach to this issue. First, we will establish a weak stability result (Theorem~\ref{thm:weak-st}), which is not enough to establish Theorem~\ref{thm:L-st-ch}, but nevertheless is sufficiently strong to provide a nice control on the transformation from the original coordinates to the future-normalized coordinates. Then, working solely in the future-normalized coordinates, and without the need for a careful control of constants anymore, we will carry out the $r^{p}$-weighted energy estimate argument \`a la Dafermos--Rodnianski (Section~\ref{subsec:rp-weight}) and an integration along characteristics argument from \cite{LO1} (Section~\ref{subsec:int-char}), from which Theorem~\ref{thm:L-st-ch} will follow.

The goal of this subsection is to formulate the above-mentioned weak stability result (Theorem~\ref{thm:weak-st}), whose proof will span Sections~\ref{subsec:btstrp}--\ref{subsec:btstrp-pf}. 

\subsubsection*{Coordinate choice}
To compare the two solutions $(g, \phi, F)$ and $(\gbg, \phibg, \Fbg)$ so that the decay towards timelike infinity of the difference is apparent, we utilize the future-normalized null coordinates adapted to each solution. More precisely, for the perturbed solution $(g, \phi, F)$, we use the double null  coordinates $(u, v) = (u(U), v(V))$, which are normalized (asymptotically) along $\NI$ and $\EH$, respectively, as follows\footnote{The fact that the maximal globally hyperbolic future development $(g, \phi, F)$ possesses complete null infinity $\NI$ and an event horizon $\EH$ follows from general results about maximal globally hyperbolic future developments of admissible data, see \cite{LO.interior}.}:
\begin{itemize}
\item At the point $(U, V) = (1, 1)$, we have $(u, v) = (1, 1)$. 
\item For $U > 1$, $u(U)$ is normalized along $\NI$ by the requirement
\begin{equation}  \label{eq:coord-u}
	\lim_{V \to V_{\NI}} \frac{\rd_{u} r}{1-\mu}(u, V) = -1.
\end{equation}
\item For $V > 1$, $v(V)$ is normalized along $\EH$ by the requirement
\begin{equation} \label{eq:coord-v}
	\frac{\rd_{v} r}{1-\mu}(U_{\EH}, v) = 1.
\end{equation}
\end{itemize}
For the background solution $(\gbg, \phibg, \Fbg)$, we use the double null coordinates $(\ubg, \vbg) = (\ubg(U), \vbg(V))$, which is defined in the same fashion with respect to $(\gbg, \phibg, \Fbg)$:
\begin{itemize}
\item At the point $(U, V) = (1, 1)$, we have $(\ubg, \vbg) = (1, 1)$. 
\item For $U > 1$, $\ubg(U)$ is normalized along $\NI$ by the requirement
\begin{equation}  \label{eq:coord-u-bg}
	\lim_{V \to \overline{V}_{\NI}} \frac{\rd_{\ubg} \rbg}{1-\mubg}(\ubg, V) = -1.
\end{equation}
\item For $V > 1$, $\vbg(V)$ is normalized along $\EH$ by the requirement
\begin{equation} \label{eq:coord-v-bg}
	\frac{\rd_{\vbg} \rbg}{1-\mubg}(\overline{U}_{\EH}, \vbg) = 1.
\end{equation}
\end{itemize}
Note that the coordinates $(\ubg, \vbg)$ here coincide with the future-normalized coordinates $(u, v)$ in Section~\ref{sec:bg} up to an overall translation $(\ubg, \vbg) \mapsto (\ubg + \ubg_{\ast}, \vbg + \vbg_{\ast})$ for an appropriate $(\ubg_{\ast}, \vbg_{\ast}) \in \bbR^{2}$. We also remind the reader that $(u, v)$ and $(\ubg, \vbg)$ only cover the black hole exterior region in $(g, \phi, F)$ and $(\gbg, \phibg, \Fbg)$, respectively.

\subsubsection*{Some conventions and definitions}
In the remainder of this section, {\bf we set $\rbg_{\EH} = 1$, using the homogeneity symmetry $(r, \phi, \varpi, \e)\mapsto (a r, \phi, a \varpi, a \bfe)$}. Note that our gauge normalization conditions require us to concurrently make the coordinate transform $(u, v) \mapsto (a u, a v)$. We remark that we have formulated Definition~\ref{def:dlt-adm} and Theorem~\ref{thm:instability} carefully so that the parameters $\omg, \dlt, \Lmb, \eps_{0}$ and the implicit constants are \emph{independent} of these transformations.

Since the numbers\footnote{We note that upon fixing $\rbg_{\EH}=1$, prescribing $|\ebg|$ is equivalent to prescribing $\varpi_f^{-1}|\ebg|$ due to \eqref{r.varpi.poly} and Corollary~\ref{varpi.same.limit}.} $|\ebg|$, $\omg > 2$ and $0 < \eta_{0} < \min\{2\omg-4, 1\}$ are fixed for the purpose of Theorem~\ref{thm:L-st-ch}, we henceforth {\bf suppress the dependence of constants on $|\ebg|$, $\omg$ and $\eta_{0}$,} for simplicity of notation.

In what follows,  the solution $(g, \phi, F)$ in the $(u, v)$ coordinates will be compared with the background solution $(\gbg, \phibg, \Fbg)$ in the $(\ubg, \vbg)$ coordinates. That is, we consider differences such as $r(u,v) - \rbg(\ubg, \vbg)$ with the same numerical values $(u, v) = (\ubg, \vbg)$. {\bf For this reason, from now on, we omit the bar in $\ubg, \vbg$ and simply write $(\gbg, \phibg, \Fbg) (u, v)$ for $(\gbg, \phibg, \Fbg) (\ubg, \vbg)$ with $(u, v) = (\ubg, \vbg)$.}

With the preceding conventions, we introduce the quantities
\begin{gather*}
	\dvr = \rd_{v} r, \quad \dur = \rd_{u} r, \quad
	\kpp = \frac{\dvr}{1-\mu}, \quad \gmm = \frac{\dur}{1-\mu}, \\
	\dvrbg = \rd_{v} \rbg, \quad \durbg = \rd_{u} \rbg, \quad
	\kppbg = \frac{\dvrbg}{1-\mubg}, \quad \gmmbg = \frac{\durbg}{1-\mubg}.
\end{gather*}
By the above coordinate choice, we have $\kpp = \kppbg = 1$ on $\EH$, which corresponds to $\set{u = \infty}$, and $\gmm = \gmmbg = -1$ on $\NI$, which corresponds to $\set{v = \infty}$.

\subsubsection*{Statement of weak stability}
We introduce the initial difference size $\eps^{2}$ in the above coordinates as follows (recall that $\rbg_{\EH} = 1$):
\begin{equation} \label{eq:eps-def}
\begin{aligned}
	\eps^{2} 
= 	& \int_{C_{out}} \kpp^{-1} (\rd_{v} (\phi - \phibg))^{2} r^{2} \, \ud v 
	+ \int_{C_{out}} \kpp^{-1} \left( \rd_{v} (r (\phi - \phibg)) \right)^{2} \log(1+(r/\rbg_{\EH})) \, \ud v \\
	& + \int_{\uC_{in}} \frac{1}{(-\dur)} (\rd_{u} (\phi - \phibg))^{2} r^{2} \, \ud u  
	+ r \abs{\phi - \phibg}^{2} \restriction_{C_{out} \cap \uC_{in}}		\\
	& + \abs{\varpi - \varpibg}^{2} \restriction_{C_{out} \cap \uC_{in}}
	+ \Abs{r - \rbg}^{2} \restriction_{C_{out} \cap \uC_{in}} 
	+ \abs{\e - \ebg}^{2} .
\end{aligned}
\end{equation}
Note that $\eps^{2}$ is dependent on the coordinate choice for the two solutions $(g, \phi, F)$ and $(\gbg, \phibg, \Fbg)$. One of the consequences of the weak stability theorem (Theorem~\ref{thm:weak-st}) will be a bound for the quantity $\eps^{2}$ in terms of the initial difference $\eps_{0}^{2}$ in the original coordinates $(U, V)$; see \eqref{eq:gauge:eps} below.

Let $R_{0} = r(1, 1)$.  Given $u \in [1, \infty)$, let $v_{R_{0}}(u) \in [1, \infty)$ be defined by the condition $r(u, v_{R_{0}}(u)) = R_{0}$ (existence of such a $v_{R_{0}}(u)$ is guaranteed by the condition $r(1, 1) = R_{0}$ and monotonicity of $r$). In other words, $(u, v_{R_{0}}(u))$ is a parametrization of the curve $\gmm_{R_{0}}$. We now introduce the foliation $(\Gmm_{\tau})_{\tau \in [1, \infty)}$ of the domain $\calD$, which is defined as follows:
\begin{align*}
	\Gmm_{\tau} = & \Gmm_{\tau}^{(in)} \cup \Gmm_{\tau}^{(out)}, \\
	\Gmm_{\tau}^{(out)} = & \set{(u, v) \in \calD : u = \tau, \ v \in [v_{R_{0}}(\tau), \infty)}, \\
	\Gmm_{\tau}^{(in)} = & \set{(u, v) \in \calD : u \in [\tau, \infty), \ v  = v_{R_{0}}(\tau)}.
\end{align*}
We also introduce the definition
\begin{align*}
	\calD(\tau_{1}, \tau_{2}) =& \bigcup_{\tau \in (\tau_{1}, \tau_{2})} \Gmm_{\tau}.
\end{align*}
With the above coordinate choice, $\EH$ and $\NI$ are represented by the ``curves at infinity'' $\set{(\infty, v) : v \in [1, \infty)}$ and $\set{(u, \infty) : u \in [1, \infty)}$ for both the background and perturbed solutions. Accordingly, we define
\begin{align*}
	\EH(\tau_{1}, \tau_{2}) =& \set{(\infty, v) : v \in [v_{R_{0}}(\tau_{1}), v_{R_{0}}(\tau_{2})]} , \\
	\NI(\tau_{1}, \tau_{2}) =& \set{(u, \infty) : u \in [\tau_{1}, \tau_{2}]} .
\end{align*}

\begin{figure}[h]
\begin{center}
\def\svgwidth{250px}
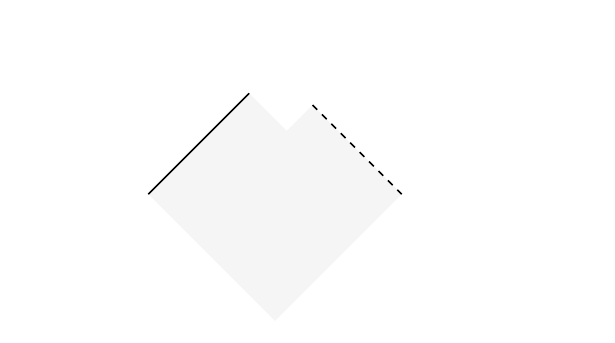 
\caption{Foliation $\Gmm_{\tau} = \Gmm_{\tau}^{(in)} \cup \Gmm_{\tau}^{(out)}$} \label{fig:energy}
\end{center}
\end{figure}

We are now ready to state our weak stability result. Recall the definition of $\eps_{0}$ from Theorem~\ref{thm:L-st-ch}. 
\begin{theorem} [Weak stability] \label{thm:weak-st}
Assume that the hypotheses of Theorem~\ref{thm:L-st-ch} hold for sufficiently small $\dlt_{0} > 0$ (depending on $\abs{\ebg}$, $\eta_0$ and $\omg$) and $\eps_{0} > 0$ (depending on $\dlt_0$, $\abs{\ebg}$, $\eta_0$, $\omg$ and $\Lmb$) with the normalization condition $\rbg_{\EH}=1$. Then there exists a constant $C > 0$ (depending on $\abs{\ebg}$, $\eta_0$ and $\omg$) such that the following statements hold.
\begin{enumerate}
\item The initial difference size $\eps^{2}$ in the coordinates $(u, v)$ and $(\ubg, \vbg)$ is bounded by $\eps_{0}^{2}$:
\begin{align} \label{eq:gauge:eps} 
	\eps^{2} \leq & C \eps_{0}^{2}. 
\end{align}
Moreover, the following pointwise estimate for $\rd_{v}(r \phi)$ on $C_{out}$ holds:
\begin{equation} \label{eq:gauge:dvrphi}	
	\sup_{C_{out}} r^{\omg} \Abs{\frac{1}{\dvr} \rd_{v}(r \phi - \rbg \phibg)}
	\leq C \eps_{0}.		
\end{equation}
\item Boundedness of nondegenerate energy and integrated decay of local energy hold for the difference $\phi - \phibg$. That is, for any $1 \leq \tau_{1} \leq \tau_{2} < \infty$, we have
\begin{equation} \label{eq:energy}
\begin{aligned}
	& \int_{\Gmm^{(in)}_{\tau_{2}}} \frac{1}{(-\dur)} (\rd_{u} (\phi - \phibg))^{2} r^{2} \, \ud u
	+ \int_{\Gmm^{(out)}_{\tau_{2}}} \kpp^{-1} (\rd_{v} (\phi - \phibg))^{2} r^{2} \, \ud v  \\
	& \int_{\NI(\tau_{1}, \tau_{2})} \frac{1}{(-\dur)} (\rd_{u} (\phi - \phibg))^{2} r^{2} \, \ud u
	+ \int_{\EH(\tau_{1}, \tau_{2})} \kpp^{-1} (\rd_{v} (\phi - \phibg))^{2} r^{2} \, \ud v  \\
	& + \iint_{\calD(\tau_{1}, \tau_{2})} \frac{1}{r^{1+\eta_{0}}} \left( \frac{1}{(-\dur)} (\rd_{u} (\phi - \phibg))^{2} + \kpp^{-1} (-\dur) (\rd_{v} (\phi - \phibg))^{2} + \frac{(-\dur)}{r^{2}} (\phi - \phibg)^{2}\right) r^{2} \, \ud u \ud v\\
	& \leq  C \left( \int_{\Gmm^{(in)}_{\tau_{1}}} \frac{1}{(-\dur)} (\rd_{u} (\phi - \phibg))^{2} r^{2} \, \ud u
	+ \int_{\Gmm^{(out)}_{\tau_{1}}} \kpp^{-1} (\rd_{v} (\phi - \phibg))^{2} r^{2} \, \ud v \right) 
		+ C (\log^{2} \Lmb) \tau_{1}^{-2 \omg+1+\eta_{0}} \dlt^{2} \eps^{2}.
\end{aligned}
\end{equation}
Furthermore, $\rd_{v} (\phi - \phibg)$ enjoys an improved integrated decay estimate:
\begin{equation} \label{eq:energy-dvphi}
\iint_{\calD(1, \infty) \cap \set{r \geq 20}} \frac{1}{r} \dvr^{-1} (\rd_{v} (\phi - \phibg))^{2} r^{2} \, \ud u \ud v
\leq C \eps^{2},
\end{equation}
and the difference $\phi - \phibg$ obeys the following pointwise estimate:
\begin{equation} \label{eq:energy-phidf-pt}
	\sup_{\calD(1, \infty)} r \abs{\phi - \phibg}^{2} \leq C \eps^{2}.
\end{equation}

\item The following geometric bounds hold for the perturbed solution:
\begin{align} 
	1/4 \leq \abs{\varpi} \leq 1, \quad \abs{\e} \leq & 1, \label{eq:weak-st:me-bnd} \\
	1 \leq \kpp \leq & 2, \label{eq:weak-st:kpp-bnd} \\
	1/2 \leq \dvr \leq & 2 \quad \hbox{ in } \set{r \geq 20}. \label{eq:weak-st:dvr-large-r} \\
	0 < -\dur \leq & C e^{-c_{(\dur)} (u - v - C_{\gmm_{20}})} \quad \hbox{ in } \set{r \leq 20}, \label{eq:weak-st:dur-small-r} \\
	1/4 \leq -\dur \leq & 1 \quad \hbox{ in } \set{r \geq 20}, \label{eq:weak-st:dur-large-r} \\
	1/2 \leq 1-\mu \leq & 1 \quad \hbox{ in } \set{r \geq 20}. \label{eq:weak-st:mu-large-r} 
\end{align}
where $c_{(\dur)} > 0$ is a constant depending on $\ebg$, and $C_{\gmm_{20}}$ is defined by $r(1+C_{\gmm_{20}}, 1) = 20$. 

\item
On $\gmm_{20} = \set{r = 20}$ and $\gmm_{R_{0}} = \set{r = R_{0}}$, we have
\begin{align} 
	\sup_{\gmm_{20}} \abs{u - v - C_{\gmm_{20}}} \leq  & \frac{1}{10}, \label{eq:weak-st:v-u}  \\
	\sup_{\gmm_{R_{0}}} \abs{u - v} \leq & \frac{1}{10}. \label{eq:weak-st:v-u-R0} 
\end{align}
Moreover, we have
\begin{equation} \label{eq:weak-st:v-u-r}
	\frac{1}{4} (v - u + C_{\gmm_{20}} + 20) \leq r(u, v) \leq 2 (v - u + C_{\gmm_{20}} + 20) \quad \hbox{ in } \calD \cap \set{r \geq 20}.
\end{equation}

\item Finally, we have the following difference bounds:
\begin{align}
	\Abs{\varpi - \varpibg} \leq & C \eps, \label{eq:weak-st:mdf} \\
	\Abs{\log \kpp - \log \kppbg} \leq & C \eps, 		\label{eq:weak-st:kppdf} \\
	\Abs{r - \rbg} \leq & C \max\set{\log \Lmb, \log r} \eps, \label{eq:weak-st:rdf} \\
	\Abs{\dvr - \dvrbg} \leq & C \log \Lmb \eps, 		\label{eq:weak-st:dvrdf} \\
	\Abs{\dur - \durbg} \leq & \left\{
	\begin{array}{ll}
	\displaystyle{C \log \Lmb e^{- \frac{1}{2} c_{(\dur)} (u - v - C_{\gmm_{20}})}} \eps & \hbox{ in } \set{r \leq 20}, \\
	\displaystyle{C \log \Lmb r^{-1}\eps}  & \hbox{ in } \set{r \geq 20}.
	\end{array} \right. 					\label{eq:weak-st:durdf} 
\end{align}
\end{enumerate}
\end{theorem}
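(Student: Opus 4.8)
\textbf{Proof strategy for Theorem~\ref{thm:weak-st}.} The plan is a large bootstrap argument run in the future-normalized coordinates $(u,v)$ and $(\ubg,\vbg)$ over the foliated region $\calD(1,\tau)$, whose crux is the nondegenerate energy plus integrated local energy decay estimate \eqref{eq:energy} for $\phidf = \phi-\phibg$. First I would set up the bootstrap: assume the geometric bounds of part (3), the $u-v$ versus $r$ comparisons of part (4), the difference bounds \eqref{eq:weak-st:mdf}--\eqref{eq:weak-st:durdf} of part (5), and the energy bounds \eqref{eq:energy}--\eqref{eq:energy-phidf-pt} of part (2) with enlarged constants, and then improve each. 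The very first task, handled separately, is part (1): I would relate the initial difference size $\eps^2$ in the $(u,v)$-gauge to $\eps_0^2$ in the $(U,V)$-gauge. This is where the hypotheses \eqref{eq:L-st-ch:ini-duphi}--\eqref{eq:L-st-ch:ini-e} together with the $(\omg,\dlt,\Lmb)$-admissibility bounds \eqref{eq:dlt-adm:phi}--\eqref{eq:dlt-adm-dvdvrphi} (especially the second-derivative bounds \eqref{eq:dlt-adm-duduphi}--\eqref{eq:dlt-adm-dvdvrphi}) are used: the second derivatives control the Jacobians $\tfrac{\ud u}{\ud U}$, $\tfrac{\ud v}{\ud V}$ and their perturbations, so that the initial-data integrals defining $\eps^2$ in \eqref{eq:eps-def} are estimated by $C\eps_0^2$. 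The key point to track carefully here is the $\Lmb$-dependence, which must land as stated (e.g.\ the factors $\Lmb^{-3/2}$, $\Lmb^{-1}$ in the hypotheses are calibrated against the Jacobian of the retarded coordinate change on $\uC_{in}$).

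\textbf{The energy estimate.} For \eqref{eq:energy} I would use the standard vector-field method applied to the wave equation $\Box_g(\phidf)$ rewritten in the form $\rd_u\rd_v(r\phidf) = (\text{error})$, where the error collects: (a) the difference of the geometric potentials acting on $\rbg\phibg$, schematically $\big(\tfrac{2(\varpi-\e^2/r)}{r^2}\tfrac{\dur\dvr}{1-\mu} - \tfrac{2(\varpibg-\ebg^2/\rbg)}{\rbg^2}\tfrac{\durbg\dvrbg}{1-\mubg}\big)\phibg$ plus lower-order pieces, all controlled by the admissibility decay of $\phibg$ and the difference bounds of part (5); and (b) genuinely quadratic terms in $\phidf$. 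The nondegenerate energy on $\EH$ is captured via the red-shift multiplier near $\set{r\le 20}$ (using \eqref{eq:weak-st:dur-small-r}, i.e.\ $\rd_v\log(-\dur)>c>0$ there, exactly as in Corollary~\ref{cor:DR-small-r}), and the $\NI$ flux via the $\tfrac{1}{1-\mu}\rd_u$ vector field. For the integrated local energy (Morawetz) term, the multiplier $f_0(r)\big(\tfrac{1-\mu}{\dvr}\rd_v + \tfrac{1-\mu}{\dur}\rd_u\big)$ with an appropriately chosen $f_0$ (monotone, with a zero near the photon sphere of the limiting Reissner--Nordstr\"om) produces the coercive spacetime integral with the $r^{-1-\eta_0}$ weights; the admissibility smallness $\dlt$ (and the relation \eqref{eq:L-st-ch:dlt}) is what allows absorbing the error terms generated by the deviation of the background from exact Reissner--Nordstr\"om, and by the quadratic terms, after using the bootstrapped pointwise bound $r|\phidf|^2\le C\eps^2$ \eqref{eq:energy-phidf-pt} to close. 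The source contribution produces the inhomogeneous term $C\log^2\Lmb\,\tau_1^{-2\omg+1+\eta_0}\dlt^2\eps^2$ on the right of \eqref{eq:energy}, obtained by integrating the admissibility decay rates against the weights and using \eqref{eq:weak-st:v-u-r} to convert $r$-decay into $u$- or $v$-decay.

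\textbf{Closing the bootstrap and the hard part.} Given \eqref{eq:energy}, the pointwise bound \eqref{eq:energy-phidf-pt} follows by integrating $\rd_v(r\phidf)$ (resp.\ $\rd_u(r\phidf)$) along characteristics and Cauchy--Schwarz against the flux; the improved $\rd_v$-decay \eqref{eq:energy-dvphi} comes from commuting with $\tfrac{1-\mu}{\dvr}\rd_v$ in the large-$r$ region, as in the $r^p$-method preparation. The geometric bounds (3)--(4) are recovered by feeding the difference bounds into the Raychaudhuri equations \eqref{eq:EMSF-ray} and the $\varpi$-equations in \eqref{eq:EMSF-r-phi-m} and integrating, exactly mirroring the self-contained proof of Proposition~\ref{prop:bg-uv}; the $u-v$-to-$r$ comparisons \eqref{eq:weak-st:v-u}--\eqref{eq:weak-st:v-u-r} use the vector field $T=\tfrac1\kpp\rd_v - \tfrac1\gmm\rd_u$ as there. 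Finally the difference bounds \eqref{eq:weak-st:mdf}--\eqref{eq:weak-st:durdf} are obtained by: bounding $\varpi-\varpibg$ via integrating the difference of $\rd_v\varpi$ equations and Cauchy--Schwarz against the energy flux (giving $C\eps$, not $C\eps^2$, because one factor is $\rd_v(\rbg\phibg)$ of the background, only $O(\dlt)$ but not small in $\eps$); then $\log\kpp-\log\kppbg$, $\dur-\durbg$, $\dvr-\dvrbg$ and $r-\rbg$ by successive integrations of their transport equations, where the $\log\Lmb$ factors arise from integrating $r^{-1}\,\ud r$ across the range $[\,R_0, \infty)$ with $R_0\sim\Lmb$. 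I expect the main obstacle to be the simultaneous bookkeeping of the $\Lmb$-dependence and the $\dlt$-smallness: one must ensure every error term generated by the background deviation is either $O(\dlt)$-small (absorbable using \eqref{eq:L-st-ch:dlt}, which converts logarithmic growth in $\Lmb$ into smallness via $\dlt\le\Lmb^{-100\eta_0}\dlt_0$) or strictly subcritical in the $r$-weight, while keeping the implicit constant $C$ independent of $\Lmb$ in parts (2)--(4) and only mildly ($\log\Lmb$) dependent in part (5). The interaction Morawetz estimate for the quadratic term $\iint \tfrac{r^{3-\eta_0}}{\Omg^2}(\rd_v\phidf)^2(\rd_u\phidf)^2$ is the technically delicate device that avoids needing extra pointwise control on first derivatives of $\phidf$, and getting its weights to match the coercive bulk term from the main Morawetz multiplier is the most fragile point.
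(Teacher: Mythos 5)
Your overall strategy --- bootstrap in the future-normalized gauge, vector-field energy estimates with a red-shift multiplier near $\EH$, a Morawetz multiplier in the bulk, an interaction Morawetz term to absorb quadratic errors, transport equations for the geometric differences, and careful $\Lmb$-bookkeeping --- matches the shape of the paper's argument. However, there is a genuine gap in the logical setup. The future-normalized coordinates $(u,v)$ for the \emph{perturbed} solution are teleologically defined (via $\kpp=1$ on $\EH$ and $\gmm=-1$ on $\NI$ of the perturbed spacetime), so they are not available until the global structure of the perturbed development --- in particular the existence of a complete $\NI$ and an $\EH$, which is a conclusion of Theorem~\ref{thm:L-st-ch} --- is already under control. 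Since $\eps^2$ in \eqref{eq:eps-def} is an initial-data integral measured in this teleological gauge, ``handling part (1) first'' is not possible as a separate preliminary step; the estimate $\eps^2\lesssim\eps_0^2$ has to be closed \emph{last}, as part of the bootstrap. The paper resolves this with a device your proposal omits: for each $t_B>0$ it introduces a truncated bootstrap domain $\calD_{(t_B)}$ with finite future boundary and coordinates $(u_{(t_B)},v_{(t_B)})$ normalized \emph{on that boundary} (\eqref{eq:u-tB-cond}--\eqref{eq:v-tB-cond}), runs continuous induction in $t_B$ with a $t_B$-dependent $\eps_{(t_B)}$ (under the bootstrap assumption \eqref{eq:en-btstrp-eps}), and quantifies the mismatch between the background gauge (normalized at $\EH,\NI$) and the truncated one by the quantity $\dlt_{(t_B)}$ of \eqref{eq:dlt-tB}. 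Proposition~\ref{prop:dlt-tB-decay} shows $\dlt_{(t_B)}\to 0$ as $t_B\to\infty$, and only in this limit does one recover the stated gauge and the bound \eqref{eq:gauge:eps}. Without this truncation the scheme is circular. Relatedly, the paper bootstraps only on the small set of quantities $\varpi-\varpibg$, $\log\kpp-\log\kppbg$, $\log(-\gmm)-\log(-\gmmbg)$ (plus integrated versions, $\eps_{(t_B)}$, and the auxiliary $\Lmb$-bound); the energy estimates and the remaining geometric difference bounds are \emph{derived} from these, not bootstrapped alongside them, which avoids a second layer of circularity in absorbing the Morawetz errors.

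A second, more localized omission: you do not address the \emph{anomalous loss of $r$-decay} for $\log\kpp-\log\kppbg$ flagged in Remark~\ref{rem:kpp-anomaly}. Because $\kpp$ is normalized to the future (on $C^f_{(t_B)out}$, limiting to $\EH$), integrating the Raychaudhuri equation gives a bound with no $r^{-1}$ gain, unlike the companion bound on $\log(-\gmm)-\log(-\gmmbg)$. This makes the far-region error term $\calE_{far}$ in the energy estimate appear to lose a power of $r$ relative to what the coercive bulk can absorb. The paper compensates with the integrated bootstrap assumptions \eqref{eq:en-btstrp-kpp-int}--\eqref{eq:en-btstrp-int-20} and the decomposition $\dvrdf=\dvrdf_g+\dvrdf_b$ of Lemma~\ref{lem:df-large-r}, routing the ``bad'' piece through the integrated bound with the weight $(v'+\Lmb)^{-\eta_0}$. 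Your plan of relying solely on $\dlt$-smallness and \eqref{eq:L-st-ch:dlt} to absorb such logarithmic losses is not enough here: the offending factor is $r$, not $\log\Lmb$, and the constant in \eqref{eq:energy} must stay $\Lmb$-independent.
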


As it can be inferred from the statement, the proof of Theorem~\ref{thm:weak-st} relies on establishing $L^{2}$ estimates for the scalar field difference $\phi - \phibg$. These estimates are natural to use for controlling the differences $\varpi - \varpibg$, $\log \kpp - \log \kppbg$ and $\log (-\gmm) - \log (-\gmmbg)$, in view of the equations they satisfy. Such a control leads, in turn, to the geometric difference bounds \eqref{eq:weak-st:rdf}--\eqref{eq:weak-st:durdf} and the control of the coordinate transformations $(U, V) \to (u, v), \, (\ubg, \vbg)$ asserted in \eqref{eq:gauge:eps}, \eqref{eq:gauge:dvrphi}. Furthermore, the $L^{2}$ estimate \eqref{eq:energy} sets stage for the decay argument of Dafermos--Rodnianski, which is the next step of the proof of Theorem~\ref{thm:L-st-ch}.

\subsubsection*{Outline of Sections~\ref{subsec:btstrp}--\ref{subsec:weak-st-pf}}
Sections~\ref{subsec:btstrp}--\ref{subsec:weak-st-pf} are devoted to the proof of Theorem~\ref{thm:weak-st}. For the convenience of the reader, we provide below a short outline of these subsections.
\begin{itemize}
\item {\bf Section~\ref{subsec:btstrp}.} We set up the main bootstrap argument for the proof of Theorem~\ref{thm:weak-st}. In particular, we define the bootstrap domain $\calD_{(t_{B})}$ and the corresponding coordinates $(u_{(t_{B})}, v_{(t_{B})})$ for the perturbed solution, which approach $\calD$ and the future-normalized coordinates $(u, v)$, respectively, as $t_{B} \to \infty$. The bootstrap assumptions include pointwise and integrated bounds for $\varpi - \varpibg$, $\log \kpp - \log \kppbg$ and $\log (-\gmm) - \log(-\gmmbg)$.
\item {\bf Section~\ref{subsec:bg-geom}.} We prove bounds on the background geometric quantities, such as $\rbg$, $\dvrbg$ and $\durbg$. We also show that $\abs{\log \kpp - \log \kppbg}$ and $\abs{\log (-\gmm) - \log (-\gmmbg)}$ are appropriately small on the future boundary of the bootstrap domain.
\item {\bf Section~\ref{subsec:geom}.} We prove bounds on the difference of geometric quantities, such as $r - \rbg$, $\dvr - \dvrbg$ and $\dur - \durbg$, from the bootstrap assumptions. Moreover, we show that $\kppbg$ and $-\gmmbg$ on the future boundary of $\calD_{(t_{B})}$ asymptote to $1$ as $t_{B} \to \infty$, which justifies the set-up in Section~\ref{subsec:btstrp}.
\item {\bf Section~\ref{subsec:en-id}.} We perform some vector field multiplier computation and derive energy identities that we use later. We also prove $L^{2}$-Hardy-type inequalities on null curves.
\item {\bf Section~\ref{subsec:en-pf}.} Using the identities in Section~\ref{subsec:en-id}, as well as the geometric difference bounds in Section~\ref{subsec:geom}, we show uniform boundedness of nondegenerate energy and integrated decay of local energy of $\phi - \phibg$.
\item {\bf Section~\ref{subsec:btstrp-pf}.} From the energy estimates established in Section~\ref{subsec:en-pf}, we close the bootstrap assumptions on $\varpi - \varpibg$, $\log \kpp - \log \kppbg$ and $\log (-\gmm) - \log(-\gmmbg)$.
\item {\bf Section~\ref{subsec:gauge}.} We use the geometric difference bounds in Section~\ref{subsec:geom} to estimate the coordinate change from $(U, V)$ to $(u_{(t_{B})}, v_{(t_{B})})$ and $(\ubg, \vbg)$. In particular, we estimate the initial difference size $\eps^{2}$ in the new coordinates in terms of $\eps_{0}^{2}$ in the original coordinates.
\item {\bf Section~\ref{subsec:weak-st-pf}.} Finally, we put together the results established so far and complete the proof of Theorem~\ref{thm:weak-st}.
\end{itemize}

\subsection{Setting up a bootstrap argument for weak stability} \label{subsec:btstrp}
The proof of Theorem~\ref{thm:weak-st} proceeds by a bootstrap argument. 
The purpose of this subsection is to set up this argument and provide the necessary tools to close it. 

\subsubsection*{Coordinate choice and bootstrap domain}
Recall the notation $R_{0} = r(1, 1)$. On the perturbed spacetime $(\M, g)$, consider the constant-$r$ curve
$\gmm_{R_{0}} = \set{(U, V) : r(U, V) = R_{0}}$
and the smooth future pointing time-like vector field $T$ (restricted to $\gmm_{R_{0}}$):
\begin{equation} \label{eq:vf-T-0}
	T  = - \frac{1-\mu}{\rd_{U} r} \rd_{U} + \frac{1-\mu}{\rd_{V} r} \rd_{V},
\end{equation}
which is tangent to $\gmm_{R_{0}}$.
For $t \geq 0$, we denote by $\gmm_{R_{0}}(t)$ (abusing the notation a bit for convenience) the parametrization of $\gmm_{R_{0}}$ so that $T = \dot{\gmm}_{R_{0}}$ and $\gmm_{R_{0}}(0)$ is the point $(1, 1)$ in the $(U, V)$ coordinates. We use the notation $(U \circ \gmm_{R_{0}}, V \circ \gmm_{R_{0}})(t)$ for the representation of this curve in the $(U,V)$ coordinates.

Given $t_{B} > 0$, we define the bootstrap domain $\calD_{(t_{B})}$ to be the causal past of the point $\gmm_{R_{0}}(t_{B})$ in $\calD$. We construct the coordinate $u_{(t_{B})} = u_{(t_{B})}(U)$ by normalizing $-(1-\mu)^{-1} \rd_{u_{(t_{B})}} r = 1$ along the final incoming curve $\uC_{(t_{B}) in}^{f} = \set{(U, V) : 1 \leq U \leq U \circ \gmm_{R_{0}}(t_{B}), \ V = V \circ \gmm_{R_{0}}(t_{B})}$ as follows:
\begin{align}
\frac{\ud u_{(t_{B})}}{\ud U}(U) =& \frac{- \rd_{U} r}{1-\mu}(U, V \circ \gmm_{R_{0}}(t_{B}))  \quad \hbox{ for } 1 < U \leq U\circ \gmm_{R_{0}}(t_{B}), \quad u_{(t_{B})}(1) = 1. \label{eq:u-tB-cond}
\end{align}
Similarly, we construct the coordinate $v_{(t_{B})} = v_{(t_{B})} (V)$ by normalizing $(1-\mu)^{-1} \rd_{v_{(t_{B})}} r = 1$ along the final outgoing curve $C_{(t_{B}) out}^{f} = \set{(U, V) : U = U\circ \gmm_{R_{0}}(t_{B}), \ 1\leq V \leq V\circ \gmm_{R_{0}}(t_{B})}$:
\begin{align}
\frac{\ud v_{(t_{B})}}{\ud V}(V) =& \frac{\rd_{V} r}{1-\mu}(U \circ \gmm_{R_{0}}(t_{B}), V)  \quad \hbox{ for } 1 < V \leq V\circ \gmm_{R_{0}}(t_{B}), \quad v_{(t_{B})}(1) = 1. \label{eq:v-tB-cond}
\end{align}

Let $(u_{(t_{B})} \circ \gmm_{R_{0}}, v_{(t_{B})} \circ \gmm_{R_{0}})(t)$ denote the parametrization of the curve $\gmm_{R_{0}}(t)$ in the $(u_{(t_{B})}, v_{(t_{B})})$ coordinates. Let $(u_{(t_{B}) f}, v_{(t_{B}) f})$ be the $(u_{(t_{B})}, v_{(t_{B})})$ coordinate of the final point $\gmm_{R_{0}}(t_{B})$, i.e.,
\begin{equation*}
u_{(t_{B}) f} = u_{(t_{B})} \circ \gmm_{R_{0}}(t_{B}), \quad
v_{(t_{B}) f} = v_{(t_{B})} \circ \gmm_{R_{0}}(t_{B}).
\end{equation*}
If $R_{0}$ is large enough (say greater than $2 \sup_{C_{out}} \varpi$, which will indeed be the case), then the point $(U \circ \gmm_{R_{0}} , V \circ \gmm_{R_{0}})(t_{B})$ is well away from the event horizon, and therefore we have $u_{(t_{B}) f}, v_{(t_{B}) f} < \infty$. 

The bootstrap domain $\calD_{(t_{B})}$ is the following box in the $(u_{(t_{B})}, v_{(t_{B})})$-coordinates:
\begin{equation}
	\calD_{(t_{B})} = \set{(u_{(t_{B})}, v_{(t_{B})}) : 1 \leq u_{(t_{B})} \leq u_{(t_{B}) f}, \ 1 \leq v_{(t_{B})} \leq v_{(t_{B}) f} }.
\end{equation}
\begin{figure}[h]
\begin{center}
\def\svgwidth{250px}
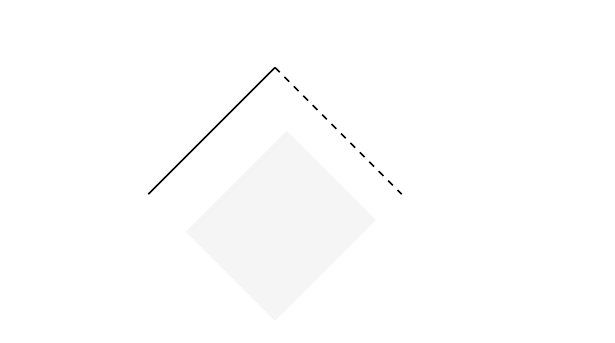 
\caption{Bootstrap domain $\calD_{(t_{B})}$. On $C_{(t_{B}) out}^{f}$ and $\uC_{(t_{B}) in}^{f}$, we normalize $\kpp = 1$ and $-\gmm = 1$, respectively.} \label{fig:btstrp-dom}
\end{center}
\end{figure}

In what follows, we compare the solution $(g, \phi, F)$ in the coordinates $(u_{(t_{B})}, v_{(t_{B})})$  on $\calD_{(t_{B})}$ with the background solution $(\gbg, \phibg, \Fbg)$ in the coordinates $(\ubg, \vbg)$ on $\calD_{(t_{B})}$ (viewed here as a box in the $(\ubg, \vbg)$-coordinates).

Note that $(\calD_{(t_{B})}, u_{(t_{B})}, v_{(t_{B})})$ has been chosen to (at least formally) approach $(\calD(1, \infty), u, v)$ as $t_{B} \to \infty$. We also remark that the normalization conditions \eqref{eq:coord-u-bg}, \eqref{eq:coord-v-bg} for $(\ubg, \vbg)$ are approximately satisfied, in the sense that $\kppbg \restriction_{C_{(t_{B}) out}^{f}}$ and $-\gmmbg \restriction_{\uC_{(t_{B}) in}^{f}}$ approach $1$ as $t_{B} \to \infty$; see Proposition~\ref{prop:dlt-tB-decay} below.

\subsubsection*{Some conventions and definitions}
Before we state the bootstrap assumptions, we fix some conventions (in effect for until Section~\ref{subsec:weak-st-pf}) to make the exposition simpler.
\begin{itemize}
\item To simplify the notation, we omit the $t_{B}$ dependence of $(u_{(t_{B})}$, $v_{(t_{B})})$ and write $(u, v) = (u_{(t_{B})}, v_{(t_{B})})$. Similarly, we omit the bar in $(\ubg, \vbg)$ and write $(u, v) = (\ubg, \vbg)$. Such conventions make sense, since we will consider the difference between $(r, \phi, \varpi)(u_{(t_{B})}, v_{(t_{B})})$ and $(\rbg, \phibg, \varpibg)(\ubg, \vbg)$ at points where $(u_{(t_{B})}, v_{(t_{B})})$ and $(\ubg, \vbg)$ have the same numerical values, which we naturally denote by $(u, v)$. 
\item We use the notation
\begin{gather*}
	\dvr = \rd_{v_{(t_{B})}} r, \quad \dur = \rd_{u_{(t_{B})}} r, \quad
	\kpp = \frac{\dvr}{1-\mu}, \quad \gmm = \frac{\dur}{1-\mu},
\end{gather*}
to denote quantities in the $(u, v) = (u_{(t_{B})}, v_{(t_{B})})$ coordinates. 
\item We use tilde to refer to differences; for instance, we write
\begin{gather*}
	\rdf (u, v)= (r - \rbg)(u, v), \quad
	\phidf (u, v)= (\phi - \phibg)(u, v), \quad
	\varpidf (u, v) = (\varpi - \varpibg)(u, v), \\
	\dvrdf (u, v)= (\dvr - \dvrbg)(u, v), \quad 
	\durdf (u, v) = (\dur - \durbg)(u, v).
\end{gather*}
\end{itemize}

We introduce the initial difference size $\eps_{(t_{B})}^{2}$ in the above coordinates as follows:
\begin{equation} \label{eq:eps-tB}
\begin{aligned}
	\eps_{(t_{B})}^{2} 
= 	& \int_{C_{out} \cap \calD_{(t_{B})}} \kpp^{-1} (\rd_{v} \phidf)^{2} r^{2} \, \ud v 
	+ \int_{C_{out} \cap \calD_{(t_{B})}} \kpp^{-1} \left( \rd_{v} (r \phidf) \right)^{2} \log(1+r) \, \ud v  \\
	& + \int_{\uC_{in} \cap \calD_{(t_{B})}} \frac{1}{(-\dur)} (\rd_{u} \phidf)^{2} r^{2} \, \ud u  
	+ r \abs{\phidf}^{2} \restriction_{C_{out} \cap \uC_{in}}
	+ \abs{\varpidf}^{2} \restriction_{C_{out} \cap \uC_{in}}
	+ \Abs{\rdf}^{2} \restriction_{C_{out} \cap \uC_{in}} 
	+ \abs{\e - \ebg}^{2} .
\end{aligned}
\end{equation}
Observe that (formally at this point) $\eps_{(t_{B})}^{2} \to \eps^{2}$ as $t_{B} \to \infty$.

\subsubsection*{Bootstrap assumptions}
In order to state our bootstrap assumptions, we need to define another (small) quantity $\dlt_{(t_B)}$, which is used to capture the error coming from the fact that the future-normalized gauge for the background solution is fixed at the future event horizon and future null infinity, instead of the future boundary of $\calD_{(t_B)}$. We will show later in Proposition~\ref{prop:dlt-tB-decay} that indeed $\dlt_{(t_{B})} \to 0$ as $t_{(B)}\to \infty$. 

Let
\begin{equation} \label{eq:dlt-tB}
\begin{aligned}
\dlt_{(t_{B})}^{2}
= & \dlt^{2} (1 + t_{B})^{-1} + \sup_{C_{(t_{B}) out}^{f}} \abs{\log \kppbg}
 + \int_{C_{(t_{B}) out}^{f}} \abs{\log \kppbg} \, \ud v \\
& + \sup_{\uC_{(t_{B}) in}^{f}} \rbg \abs{\log (-\gmmbg)} 
+ \int_{\uC_{(t_{B}) in}^{f}} \abs{\log (-\gmmbg)} \, \ud u.
\end{aligned}
\end{equation}

On $\calD_{(t_{B})}$, our bootstrap assumptions are as follows, where $A$ is a bootstrap constant:
\begin{itemize}
\item {\bf Bootstrap assumption on $\ep_{(t_B)}$}:
\begin{align}
\ep_{(t_B)}\leq & 10 A \left(\ep_0 + \de_{(t_B)}^4 \right). \label{eq:en-btstrp-eps} 
\end{align}

\item {\bf Bootstrap assumptions on pointwise difference bounds}:
\begin{align} 
	\abs{\varpidf} \leq& 10 A \left( \eps_{(t_{B})} + \dlt_{(t_{B})}^{2} \right),  \label{eq:en-btstrp-m} \\
	\abs{\log \kpp - \log \kppbg} \leq& 10 A \left( \eps_{(t_{B})} + \dlt_{(t_{B})}^{2} \right), \label{eq:en-btstrp-kpp}\\
	\abs{\log (-\gmm) - \log (-\gmmbg)} \leq& 10 \frac{A}{r} \left( \eps_{(t_{B})} + \dlt_{(t_{B})}^{2} \right)  \quad \hbox{ for } r \geq 20. \label{eq:en-btstrp-gmm}
\end{align}

\item {\bf Bootstrap assumptions on integrated difference bounds}:
\begin{align} 
	\int_{1}^{v_{(t_{B}) f}} \sup_{\uC_{v'} \cap \calD_{(t_{B})}} \Abs{\log \kpp - \log \kppbg}  \, \frac{\ud v'}{(v' + \Lmb)^{\eta_{0}}}
	\leq & 10 A \left( \eps_{(t_{B})} + \dlt_{(t_{B})}^{2} \right), \label{eq:en-btstrp-kpp-int}\\
	\int_{0}^{t_{B}} \left( \Abs{\log \kpp - \log \kppbg} + \Abs{\log (-\gmm) - \log (-\gmmbg)} \right)(u \circ \gmm_{R_{0}}, v \circ \gmm_{R_{0}}) \, \ud t
	\leq & 10 A \left( \eps_{(t_{B})} + \dlt_{(t_{B})}^{2} \right),	\label{eq:en-btstrp-int} \\
	\int_{0}^{t_{B, 20}} \left( \Abs{\log \kpp - \log \kppbg} + \Abs{\log (-\gmm) - \log (-\gmmbg)} \right)(u \circ \gmm_{20}, v \circ \gmm_{20}) \, \ud t
	\leq & 10 A \left( \eps_{(t_{B})} + \dlt_{(t_{B})}^{2} \right).	\label{eq:en-btstrp-int-20} 
\end{align}
Here, we use the parametrization $t \mapsto \gmm_{20}(t)$ of $\gmm_{20} = \set{r = 20}$ such that $\dot{\gmm}_{20}(t) = T$ and $\gmm_{20}(0) \in \uC_{in} \cup C_{out}$. We use the notation $t_{B, 20}$ for the $t$-value when $\gmm_{20}$ exits $\calD_{(t_{B})}$.

\item {\bf Auxiliary bootstrap assumption controlling\footnote{Recall that $(u_{(t_B)f}, v_{(t_B)f})$ is defined implicitly using the perturbed solution by $r(u_{(t_B)f}, v_{(t_B)f})=1$. This bootstrap assumption, via the bounds in Section~\ref{sec:bg}, gives an estimate on the $u_{(t_B)f}$ and $v_{(t_B)f}$} $(u_{(t_B)f}, v_{(t_B)f})$}:
\begin{equation}\label{eq:en-btstrp-R-Lmb}
|\rbg(u_{(t_B)f}, v_{(t_B)f}) - \Lambda |\leq 1.
\end{equation}
\end{itemize}
%USE
%\begin{equation*}
%	\abs{e^{\eps} - 1} \leq \abs{\eps} e^{\abs{\eps}}
%\end{equation*}

\begin{remark} \label{rem:kpp-anomaly}
Observe the \emph{anomalous loss of $r$} in the pointwise bound \eqref{eq:en-btstrp-kpp}, i.e., whereas $\log (-\gmm) - \log (-\gmmbg)$ exhibits an $r^{-1}$ decay in $\set{r \geq 20}$, $\log \kpp - \log \kppbg$ does not. Such an anomaly arises because $\log \kpp - \log \kppbg$ is controlled by integrating from $C_{out}^{f}(t_{B})$ in the direction of increasing $r$. The purpose of the integrated bound \eqref{eq:en-btstrp-kpp} is to recover this loss (up to a small power loss $(v' + \Lmb)^{\eta_{0}}$). This recovery of loss turns out to be crucial in closing the bootstrap argument; see Section~\ref{subsec:en-pf}.
\end{remark}

Our goal until Section \ref{subsec:btstrp-pf} is to establish the following result, which improves all the bootstrap assumptions except for \eqref{eq:en-btstrp-eps}.
\begin{proposition} [Improving bootstrap assumptions on the geometric quantities] \label{prop:en-btstrp}
Assume that the hypotheses of Theorem~\ref{thm:L-st-ch} hold with the normalization condition $\rbg_{\EH}=1$. For $t_{B} > 0$, assume in addition that the bootstrap assumptions \eqref{eq:en-btstrp-eps}--\eqref{eq:en-btstrp-R-Lmb} 
 hold on $\calD_{(t_{B})}$, with some large enough constant $A > 0$ (depending on $|\ebg|$, $\eta_{0}$, $\omg$). Then for sufficiently small $\dlt_{0} > 0$ (depending on $A$, $\abs{\ebg}$, $\eta_{0}$, $\omg$) and $\eps_0 > 0$ (depending on $A$, $\dlt_0$, $\abs{\ebg}$, $\eta_{0}$, $\omg$, $\Lmb$), the bootstrap assumptions \eqref{eq:en-btstrp-m}--\eqref{eq:en-btstrp-R-Lmb} on $\calD_{(t_{B})}$ improve as follows:
\begin{align} 
	\abs{\varpidf} \leq& A \left( \eps_{(t_{B})} + \dlt_{(t_{B})}^{2} \right),  \label{eq:en-btstrp-m:imp} \\
	\abs{\log \kpp - \log \kppbg} \leq&  A \left( \eps_{(t_{B})} + \dlt_{(t_{B})}^{2} \right), \label{eq:en-btstrp-kpp:imp}\\
	\abs{\log (-\gmm) - \log (-\gmmbg)} \leq&  \frac{A}{r} \left( \eps_{(t_{B})} + \dlt_{(t_{B})}^{2} \right)  \quad \hbox{ for } r \geq 20, \label{eq:en-btstrp-gmm:imp}	\\
	\int_{1}^{v_{(t_{B}) f}} \sup_{\uC_{v'} \cap \calD_{(t_{B})}} \Abs{\log \kpp - \log \kppbg} \, \frac{\ud v'}{(v' + \Lmb)^{\eta_{0}}}
	\leq &  A \left( \eps_{(t_{B})} + \dlt_{(t_{B})}^{2} \right), \label{eq:en-btstrp-kpp-int:imp}\\
	\int_{0}^{t_{B}} \left( \Abs{\log \kpp - \log \kppbg} + \Abs{\log (-\gmm) - \log (-\gmmbg)} \right)(u \circ \gmm_{R_{0}}, v \circ \gmm_{R_{0}}) \, \ud t
	\leq &  A \left( \eps_{(t_{B})} + \dlt_{(t_{B})}^{2} \right),	\label{eq:en-btstrp-int:imp} \\
	\int_{0}^{t_{B, 20}} \left( \Abs{\log \kpp - \log \kppbg} + \Abs{\log (-\gmm) - \log (-\gmmbg)} \right)(u \circ \gmm_{20}, v \circ \gmm_{20}) \, \ud t
	\leq & A \left( \eps_{(t_{B})} + \dlt_{(t_{B})}^{2} \right),	\label{eq:en-btstrp-int-20:imp} \\
	|\rbg(u_{(t_B)f}, v_{(t_B)f}) - \Lambda |\leq & \f 12. \label{eq:en-btstrp-R-Lmb:imp}
\end{align}
Moreover, the following geometric difference bounds hold on $\calD_{(t_{B})}$ for some $C>0$ depending on $|\ebg|$, $\eta_0$, $\omg$:
\begin{align}
\Abs{\rdf} \leq & C A \max\set{\log \Lmb, \log r} \left( \eps_{(t_{B})} + \dlt_{(t_{B})}^{2}\right), \label{eq:en-btstrp:rdf} \\
	\Abs{\dvrdf} \leq & C A (\log \Lmb) \left( \eps_{(t_{B})} + \dlt_{(t_{B})}^{2} \right), \label{eq:en-btstrp:dvrdf} \\
	\Abs{\durdf} \leq & \left\{
	\begin{array}{ll}
	\displaystyle{C A (\log \Lmb) e^{- \frac{1}{2} c_{(\dur)}(u - v - C_{\gmm_{20}})}} (\eps_{(t_{B})} + \dlt_{(t_{B})}^{2})  & \hbox{ for } r \leq 20, \\
	\displaystyle{C A (\log \Lmb) r^{-1} (\eps_{(t_{B})} + \dlt_{(t_{B})}^{2})}  & \hbox{ for } r \geq 20, 
	\end{array} \right.	\label{eq:en-btstrp:durdf} \\
	\Abs{\log (-\dur) - \log (-\durbg)} 
\leq & 	C A \log \Lmb (1+\abs{u - v - C_{\gmm_{20}}}) \left( \eps_{(t_{B})} + \dlt_{(t_{B})}^{2}\right) \quad \hbox{ for } r \leq 20.	\label{eq:en-btstrp:logdurdf} 
\end{align}
where $c_{(\dur)} > 0$ is specified in Lemma~\ref{lem:df-small-r} below, and $C_{\gmm_{20}}$ is defined by the relation $r(C_{\gmm_{20}} +1, 1) = 20$.
\end{proposition}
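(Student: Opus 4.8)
The proposition is a bootstrap-improvement statement: assuming the bootstrap assumptions \eqref{eq:en-btstrp-eps}--\eqref{eq:en-btstrp-R-Lmb} hold on $\calD_{(t_{B})}$, we must (i) improve all the geometric bootstrap assumptions \eqref{eq:en-btstrp-m}--\eqref{eq:en-btstrp-R-Lmb} by a factor of $10$, and (ii) derive the geometric difference bounds \eqref{eq:en-btstrp:rdf}--\eqref{eq:en-btstrp:logdurdf}. According to the outline at the end of Section~\ref{subsec:weak-st}, the argument is organized across Sections~\ref{subsec:bg-geom}--\ref{subsec:btstrp-pf}. The plan is as follows. First, in Section~\ref{subsec:bg-geom}, I would establish sharp bounds on the background geometric quantities $\rbg$, $\dvrbg$, $\durbg$, $\kppbg$, $\gmmbg$ on $\calD_{(t_{B})}$ (and its future boundary) — these follow from Proposition~\ref{prop:bg-uv} applied to the $(\omg,\dlt,\Lmb)$-admissible background, together with the auxiliary bootstrap assumption \eqref{eq:en-btstrp-R-Lmb} which pins down where the future boundary of $\calD_{(t_{B})}$ sits relative to $\{\rbg = \Lmb\}$. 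In particular I would verify that $\dlt_{(t_{B})}^{2}$ as defined in \eqref{eq:dlt-tB} is indeed small, using \eqref{eq:dlt-adm:m} and the decay estimates on $\phibg$.

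Second, in Section~\ref{subsec:geom}, from the bootstrap assumptions \eqref{eq:en-btstrp-m}--\eqref{eq:en-btstrp-int-20} on $\varpidf$, $\log\kpp-\log\kppbg$, $\log(-\gmm)-\log(-\gmmbg)$ I would deduce the difference bounds \eqref{eq:en-btstrp:rdf}--\eqref{eq:en-btstrp:logdurdf}. The key algebraic relations are $\rdf(u,v) = \rdf\restriction_{C_{out}\cap\uC_{in}} + \int \dvrdf\,\ud v + \int\durdf\,\ud u$ along the boundary and then into the interior; $\dvrdf = \dvr - \dvrbg = \kpp(1-\mu) - \kppbg(1-\mubg)$, which decomposes into a $\log\kpp - \log\kppbg$ piece and a $\mu - \mubg$ piece (controlled by $\varpidf$ and $\rdf$ via $1-\mu = 1 - \frac{2\varpi}{r} + \frac{\e^{2}}{r^{2}}$); and similarly for $\durdf$, where in the region $\{r\le 20\}$ one exploits the red-shift (the lower bound \eqref{eq:weak-st:dur-small-r}-type estimate from Corollary~\ref{cor:DR-small-r}, Step~4). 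The factors $\log\Lmb$ and $\max\{\log\Lmb,\log r\}$ arise from integrating $\dvrdf\sim \log\kpp-\log\kppbg$ times $\dvrbg\sim 1$ over an $r$-range of size $\sim\Lmb$ near $C_{out}$, combined with the integrated bound \eqref{eq:en-btstrp-kpp-int} that recovers the $r$-loss up to $(v'+\Lmb)^{\eta_{0}}$. I would also here verify \eqref{eq:en-btstrp-R-Lmb:imp}, i.e.\ that $\rbg(u_{(t_B)f},v_{(t_B)f})$ is within $\frac12$ of $\Lmb$: since $r(u_{(t_B)f},v_{(t_B)f}) = R_0 = r(1,1)$ is roughly $\Lmb\rbg_\EH = \Lmb$ (from the admissibility of the background, up to the initial difference), \eqref{eq:en-btstrp:rdf} at that point gives $|\rbg - R_0|\le CA(\log\Lmb)(\eps_{(t_B)} + \dlt_{(t_B)}^2)$, which is $\le\frac14$ once $\eps_0,\dlt_0$ are small (crucially using \eqref{eq:L-st-ch:dlt}, $\dlt\le\Lmb^{-100\eta_0}\dlt_0$, to kill the $\log\Lmb$).

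Third — and this is where the real work lies — in Sections~\ref{subsec:en-id}--\ref{subsec:en-pf} I would prove the nondegenerate energy boundedness and integrated local energy decay estimate for $\phidf = \phi - \phibg$ as in \eqref{eq:energy}, using vector field multipliers. The wave equation for $\phidf$ is $\Box_{g}\phidf = (\Box_g - \Box_{\gbg})\phibg$, where the RHS is an error term involving differences of metric coefficients, hence controlled by the geometric difference bounds just established (multiplied by first derivatives of $\phibg$, which decay by \eqref{eq:dlt-adm:duphi}--\eqref{eq:dlt-adm:dvrphi}). Applying the $T$-multiplier ($T = \kpp^{-1}\rd_v - \gmm^{-1}\rd_u$, as in \eqref{eq:vf-T-0}) gives energy boundedness; the red-shift multiplier near $\EH$ handles the horizon; and a Morawetz multiplier of the form $f(r)(\frac{1-\mu}{\dvr}\rd_v - \frac{1-\mu}{\dur}\rd_u)$ gives the integrated local energy decay with the $r^{-1-\eta_0}$ weight. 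The nonlinear error term $\iint \frac{r^{3-\eta_0}}{\Omg^2}(\rd_v\phidf)^2(\rd_u\phidf)^2$ that arises from the Raychaudhuri/mass equations cannot be closed by the standard multipliers alone; here I would use the interaction Morawetz estimate advertised in the outline — multiplying the equation for $\phidf$ by $f_0(r)(\frac{1-\mu}{\dvr}\rd_v + \frac{1-\mu}{\dur}\rd_u)$ for suitable $f_0$ — to bound $\iint \frac{1}{r}\dvr^{-1}(\rd_v\phidf)^2 r^2$, i.e.\ \eqref{eq:energy-dvphi}. Finally, in Section~\ref{subsec:btstrp-pf}, I would convert the energy estimates back into improved pointwise and integrated bounds on $\varpidf$, $\log\kpp-\log\kppbg$, $\log(-\gmm)-\log(-\gmmbg)$: integrating $\rd_v\varpidf = \frac12(\kpp^{-1}(\rd_v\psi - r^{-1}\dvr\psi)^2 - \cdots)$ and the analogous equations for $\rd_u\log\kpp$, $\rd_v\log(-\gmm)$, using Cauchy--Schwarz and the energy and $L^2$-Hardy inequalities to bound the quadratic terms, plus the background decay \eqref{eq:dlt-adm:duphi}--\eqref{eq:dlt-adm:dvrphi} and the smallness parameter $\dlt$ to make the constants improve. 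The main obstacle I anticipate is the careful tracking of $\Lmb$-dependence throughout: the bootstrap constant $A$ must be chosen \emph{independently} of $\Lmb$, so every appearance of $\log\Lmb$ (from the $r$-integration near $C_{out}$) must ultimately be absorbed either by the smallness $\dlt\le\Lmb^{-100\eta_0}\dlt_0$ from \eqref{eq:L-st-ch:dlt}, or by the small-power weight $(v'+\Lmb)^{-\eta_0}$ in the integrated bound \eqref{eq:en-btstrp-kpp-int}, or by the $\dlt_{(t_B)}^2$ term; keeping this bookkeeping consistent while simultaneously closing the nonlinear error terms is the delicate part.
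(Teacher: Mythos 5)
Your plan matches the paper's proof essentially step for step: bounds on the background geometry from Proposition~\ref{prop:bg-uv} and the auxiliary assumption \eqref{eq:en-btstrp-R-Lmb} (Section~\ref{subsec:bg-geom}); the geometric difference bounds \eqref{eq:en-btstrp:rdf}--\eqref{eq:en-btstrp:logdurdf} from the bootstrap assumptions via integration of the difference ODEs for $\rdf, \dvrdf, \durdf$, using the red-shift near $\EH$ and the split $\dvrdf=\dvrdf_g+\dvrdf_b$ with the integrated bound \eqref{eq:en-btstrp-kpp-int} to recover the anomalous $r$-loss (Section~\ref{subsec:geom}); the $T$-, Morawetz-, red-shift-multiplier energy estimates with the interaction Morawetz term absorbing the nonlinear error (Sections~\ref{subsec:en-id}--\ref{subsec:en-pf}); and finally the closing step of integrating $\rd_v\varpidf$, $\rd_u(\log\kpp-\log\kppbg)$, $\rd_v(\log(-\gmm)-\log(-\gmmbg))$ with Cauchy--Schwarz against the energy norms (Section~\ref{subsec:btstrp-pf}). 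Your handling of \eqref{eq:en-btstrp-R-Lmb:imp} via $R_0\approx\Lmb$ plus \eqref{eq:en-btstrp:rdf} is exactly Corollary~\ref{cor:rbg-uf-vf-Lmb}.

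One small imprecision: you conflate the interaction Morawetz estimate with \eqref{eq:energy-dvphi}. The interaction Morawetz multiplier (the $X$-type vector field $f(r)(-\frac{1-\mu}{\dvr}\rd_v+\frac{1-\mu}{(-\dur)}\rd_u)$ with $f=r^{-\eta_0}$) does \emph{not} produce $\iint r\,\dvr^{-1}(\rd_v\phidf)^2\,\ud u\,\ud v$; what it produces is the quartic-in-$\rd\phidf$ spacetime term $\iint\frac{1-\mu}{\dvr(-\dur)}r^{3-\eta_0}(\rd_u\phidf)^2(\rd_v\phidf)^2\,\ud u\,\ud v$ (see \eqref{eq:int-morawetz}), which is what absorbs the nonlinear error in the $T$-identity. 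The bound \eqref{eq:energy-dvphi} instead comes from the separate $\log r$-weighted multiplier $Z=\frac{\log(1+r)}{\dvr}\rd_v$ (Lemma~\ref{lem:DR} / Proposition~\ref{prop:en-dvphi-goal}), and it is not in fact needed for Proposition~\ref{prop:en-btstrp} itself --- it enters the proof of Theorem~\ref{thm:weak-st} as a separate conclusion. This mix-up in your narrative does not affect the soundness of the overall strategy.
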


\subsubsection*{Estimates for coordinate transformation in the bootstrap argument}
In carrying out the bootstrap argument, an important point is to understand how $\eps^{2}_{(t_{B})}$ is related with $\eps_{0}^{2}$. This is precisely the content of the following proposition, which can be viewed as an estimate for the coordinate transformation between the initial-data-normalized coordinates and the future-normalized coordinates. Note that the following proposition also improves the remaining bootstrap assumption \eqref{eq:en-btstrp-eps}.
\begin{proposition} [Coordinate transformation in the bootstrap argument] \label{prop:g-btstrp}
Assume that the hypotheses of Proposition~\ref{prop:en-btstrp} hold. Then for sufficiently small $\dlt_{0} > 0$ (depending on $A$, $\abs{\ebg}$, $\eta_0$ and $\omg$) and $\eps_{0} > 0$ (depending on $A$, $\dlt_0$, $|\ebg|$, $\eta_0$, $\omg$ and $\Lmb$), we have
\begin{equation} \label{eq:g-btstrp}
	\eps_{(t_{B})}^{2} \leq C \eps_{0}^{2} + C A^{2} \dlt_{0}^{2} \left(\eps_{(t_{B})} + \dlt_{(t_{B})}^{2} \right)^{2}.
\end{equation}
Furthermore, we have
\begin{equation} \label{eq:g-btstrp-dvrphi}
	\sup_{C_{out} \cap \calD_{(t_{B})}} r^{\omg} \Abs{\dvr^{-1} \rd_{v} (r \phi - \rbg \phibg)} \leq C \eps_{0} + C A \dlt_{0} (\eps_{(t_{B})} + \dlt_{(t_{B})}^{2}).
\end{equation}
\end{proposition}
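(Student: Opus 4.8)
\textbf{Proof proposal for Proposition~\ref{prop:g-btstrp}.}
The plan is to estimate the coordinate transformations $(U,V) \mapsto (u_{(t_{B})}, v_{(t_{B})})$ and $(U,V) \mapsto (\ubg, \vbg)$ using the geometric difference bounds \eqref{eq:en-btstrp:rdf}--\eqref{eq:en-btstrp:logdurdf} already granted by Proposition~\ref{prop:en-btstrp}, and then to rewrite each of the integrals and boundary terms defining $\eps_{(t_{B})}^{2}$ in \eqref{eq:eps-tB} in terms of the corresponding data in the $(U,V)$-normalized coordinates, which are controlled by $\eps_{0}$ through the hypotheses \eqref{eq:L-st-ch:ini-duphi}--\eqref{eq:L-st-ch:ini-e} of Theorem~\ref{thm:L-st-ch}. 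First I would record the Jacobian relations: along $C_{out} = \{U = 1\}$ we have $\frac{\ud v_{(t_{B})}}{\ud V} = \frac{\rd_{V} r}{1-\mu}(U \circ \gmm_{R_{0}}(t_{B}), V)$ by \eqref{eq:v-tB-cond}, while $\frac{\ud \vbg}{\ud V}$ is fixed by \eqref{eq:coord-v-bg}; and similarly along $\uC_{in} = \{V = 1\}$ for $\frac{\ud u_{(t_{B})}}{\ud U}$ and $\frac{\ud \ubg}{\ud U}$. The point is to show that $\frac{\ud v_{(t_{B})}}{\ud V} / \frac{\ud \vbg}{\ud V}$ (and the analogous ratio for $u$) is $1 + O(\text{something small})$ on the initial hypersurface, using that $\kpp \restriction_{C_{(t_{B})out}^{f}}$ and $-\gmm \restriction_{\uC_{(t_{B})in}^{f}}$ are close to $1$ (this is where $\dlt_{(t_{B})}^{2}$ enters via \eqref{eq:dlt-tB}), together with \eqref{eq:en-btstrp:dvrdf} and the $r$-weighted decay of $\rd_{V}\log\kpp$, $\rd_{U}\log(-\gmm)$ coming from the Raychaudhuri equations and the smallness of $\phi$ (Definition~\ref{def:dlt-adm} and the difference bounds).

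Next I would transform term by term. For the term $\int_{C_{out}\cap\calD_{(t_{B})}} \kpp^{-1}(\rd_{v}\phidf)^{2} r^{2}\,\ud v$: write $\rd_{v}\phi = \frac{\ud V}{\ud v}\rd_{V}\phi$ and similarly for $\rd_{v}\phibg$ (using the {\it different} Jacobian $\frac{\ud V}{\ud \vbg}$), so that $\rd_{v}\phidf$ becomes $\frac{\ud V}{\ud v}\rd_{V}\phi - \frac{\ud V}{\ud \vbg}\rd_{V}\phibg = \frac{\ud V}{\ud v}\rd_{V}(\phi - \phibg) + \left(\frac{\ud V}{\ud v} - \frac{\ud V}{\ud \vbg}\right)\rd_{V}\phibg$. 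The first piece is controlled by the hypothesis \eqref{eq:L-st-ch:ini-dvrphi} (after converting $\rd_{V}\phi$ to $\rd_{V}(r\phi)$ modulo lower-order terms handled by \eqref{eq:L-st-ch:ini-r}, \eqref{eq:L-st-ch:ini-phi} and the decay of $\phibg$) and gives a contribution $\leq C\eps_{0}^{2}$; the second piece carries a factor measuring the Jacobian discrepancy, which by the previous paragraph is $O(\dlt_{0}(\eps_{(t_{B})} + \dlt_{(t_{B})}^{2}))$ — here one uses $\dlt \leq \Lmb^{-100\eta_{0}}\dlt_{0}$ and the smallness bound \eqref{eq:dlt-adm:dvrphi} on $\rd_{V}\phibg$ — yielding $\leq CA^{2}\dlt_{0}^{2}(\eps_{(t_{B})} + \dlt_{(t_{B})}^{2})^{2}$. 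The logarithmically-weighted term $\int_{C_{out}\cap\calD_{(t_{B})}} \kpp^{-1}(\rd_{v}(r\phidf))^{2}\log(1+r)\,\ud v$ is handled the same way, noting that $\log(1+r)$ differs from $\log(1+\rbg)$ by $O(\eps)$ thanks to \eqref{eq:en-btstrp:rdf}, which is absorbed. The incoming integral over $\uC_{in}$ is symmetric, using \eqref{eq:L-st-ch:ini-duphi}, \eqref{eq:L-st-ch:ini-dur} and the $u$-Jacobian estimate. The boundary terms at $C_{out}\cap\uC_{in}$ — $r|\phidf|^{2}$, $|\varpidf|^{2}$, $|\rdf|^{2}$, $|\e - \ebg|^{2}$ — are evaluated at the single point $(U,V) = (1,1)$ and are directly bounded by $C\eps_{0}^{2}$ via \eqref{eq:L-st-ch:ini-phi}, \eqref{eq:L-st-ch:ini-m}, \eqref{eq:L-st-ch:ini-r}, \eqref{eq:L-st-ch:ini-e}, since at that point $(u,v) = (U,V)$ exactly by the normalization $u_{(t_{B})}(1) = v_{(t_{B})}(1) = 1$ and $\ubg(1) = \vbg(1) = 1$. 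Summing all contributions gives \eqref{eq:g-btstrp}. For \eqref{eq:g-btstrp-dvrphi}, I would argue pointwise on $C_{out}$ in exactly the same manner: decompose $\dvr^{-1}\rd_{v}(r\phi - \rbg\phibg)$ into a term controlled by the initial hypothesis \eqref{eq:L-st-ch:ini-dvrphi} (giving $C\eps_{0} r^{-\omg}$) plus a Jacobian-mismatch term controlled by $CA\dlt_{0}(\eps_{(t_{B})} + \dlt_{(t_{B})}^{2}) r^{-\omg}$ using \eqref{eq:dlt-adm:dvrphi}, then multiply through by $r^{\omg}$.

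The main obstacle I anticipate is the careful tracking of the Jacobian discrepancy between the $(u_{(t_{B})}, v_{(t_{B})})$ and $(\ubg, \vbg)$ normalizations, i.e.\ showing that $\left|\frac{\ud V}{\ud v_{(t_{B})}} - \frac{\ud V}{\ud \vbg}\right|$ is genuinely of size $O(\dlt_{0}(\eps_{(t_{B})} + \dlt_{(t_{B})}^{2}))$ rather than merely $O(\eps_{(t_{B})} + \dlt_{(t_{B})}^{2})$, since the weaker bound would not close the bootstrap (the coefficient of $(\eps_{(t_{B})} + \dlt_{(t_{B})}^{2})^{2}$ must be small, which is what ultimately forces $\dlt_{0}$ small). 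This requires exploiting that along $C_{out}$ the quantity $\rd_{V}\log\kpp = \frac{r}{\dvr}(\rd_{V}\phi)^{2}$ — and its background analogue — is quadratically small in $\phi$, so that integrating from the endpoint where both normalizations agree up to $\dlt_{(t_{B})}^{2}$ produces the extra smallness factor $\dlt$ (hence $\Lmb^{-100\eta_{0}}\dlt_{0}$) from \eqref{eq:dlt-adm:dvphi}, and that any logarithmic-in-$\Lmb$ growth from the $r$-integration is absorbed by the power $\Lmb^{-100\eta_{0}}$ as in the relation \eqref{eq:L-st-ch:dlt}. The bookkeeping of which small factor ($\eps_{0}$, $\dlt_{0}$, $\dlt_{(t_{B})}$, $\Lmb^{-\eta_{0}}$) attaches to which term, and ensuring all the $\Lmb$-dependence lands in ways that are harmless, is the delicate part; the rest is routine once the Jacobian estimates are in place.
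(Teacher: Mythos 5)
Your overall strategy --- estimate the coordinate transformations $(U,V)\mapsto(u_{(t_B)},v_{(t_B)})$ and $(U,V)\mapsto(\ubg,\vbg)$, then convert the initial-data difference from one gauge to the other --- is the right one and matches the paper's in outline. But there is a genuine gap in the central decomposition, and as a result your argument never closes.

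The issue is the claimed identity $\rd_v\phidf=\frac{\ud V}{\ud v}\rd_V(\phi-\phibg)+\bigl(\frac{\ud V}{\ud v}-\frac{\ud V}{\ud\vbg}\bigr)\rd_V\phibg$. Unwinding the conventions of Section~\ref{subsec:weak-st}, $\phidf(1,v)=\phi(1,V_1(v))-\phibg(1,V_2(v))$, where $V_1(\cdot)$ and $V_2(\cdot)$ are the inverses of the \emph{distinct} maps $V\mapsto v_{(t_B)}(V)$ and $V\mapsto\vbg(V)$. So $\rd_v\phidf(1,v)=V_1'(v)\,\rd_V\phi(1,V_1(v))-V_2'(v)\,\rd_V\phibg(1,V_2(v))$, and $\rd_V\phi$, $\rd_V\phibg$ sit at \emph{different} $V$-points. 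Your identity puts them at a common point, which is not correct: what you are missing is the third term
\begin{equation*}
 V_1'(v)\bigl[\rd_V\phibg(1,V_1(v))-\rd_V\phibg(1,V_2(v))\bigr],
\end{equation*}
a \emph{displacement} error that is not controlled by the Jacobian discrepancy $V_1'-V_2'$ alone. You focus exclusively on the discrepancy in the Jacobian \emph{factors} but not on the discrepancy in the \emph{evaluation points}. This is precisely the term the paper handles by introducing an auxiliary intermediate coordinate $\bfU$ on $\uC_{in}$ (normalized by $\rd_{\bfU}\rbg=-1$, see Step~2 of the paper's proof) and $\bfV$ on $C_{out}$ (normalized by $(1-\mubg)^{-1}\rd_{\bfV}\rbg=1$), then applying a mean-value decomposition $f(\bfU(u))-\fbg(\bfUbg(u))=(f-\fbg)(\bfU(u))+(\bfU-\bfUbg)(u)\int_0^1\rd_{\bfU}\fbg(\cdot)\,d\sgm$. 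The displacement piece $(\bfU-\bfUbg)$ is bounded by a Gr\"onwall argument, and --- crucially --- the factor $\rd_{\bfU}\fbg$ becomes a second derivative of the background scalar field, which must be small. This is exactly where the hypotheses \eqref{eq:dlt-adm-duduphi} and \eqref{eq:dlt-adm-dvdvrphi} of Definition~\ref{def:dlt-adm} are used. Your proposal never invokes these second-derivative smallness assumptions; without them the displacement term cannot be made $O(\dlt_0(\eps_{(t_B)}+\dlt_{(t_B)}^2))$ and the bootstrap does not close. (Indeed these second-derivative bounds were built into the notion of $(\omg,\dlt,\Lmb)$-admissibility precisely for this step --- see Remark~\ref{rem:dlt-admissible}(3).) The closing paragraph of your proposal correctly identifies that the Jacobian discrepancy must carry an extra $\dlt_0$ factor, but the mechanism you describe --- smallness of $\rd_V\log\kpp$ along $C_{out}$ --- only touches the $V_1'-V_2'$ piece, not the $V_1(v)-V_2(v)$ piece. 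As a minor point, you also write that $\kpp\restriction_{C^f_{(t_B)out}}$ and $-\gmm\restriction_{\uC^f_{(t_B)in}}$ are close to $1$; in fact they equal $1$ exactly by \eqref{eq:u-tB-cond}--\eqref{eq:v-tB-cond}, and it is the \emph{background} quantities $\kppbg$, $-\gmmbg$ there that are $\dlt_{(t_B)}$-close to $1$ per \eqref{eq:dlt-tB}.
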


This proposition is proved in Section~\ref{subsec:gauge}, after Proposition~\ref{prop:en-btstrp} is proven.

Beginning in the next subsection until Section~\ref{subsec:gauge}, we will carry out the main part of the bootstrap argument and prove Propositions~\ref{prop:en-btstrp} and \ref{prop:g-btstrp} (see the end of Section~\ref{subsec:weak-st} for an outline of the argument). \textbf{Before we proceed, let us emphasize a few points about the constants involved in the argument:}
\begin{itemize}
\item We will continue to use $C$ to denote an arbitrary constant depending only on $\omg$, $\eta_0$ and $|\ebg|$.
\item The main small parameters that we can choose are $\ep_0$ and $\de_0$. Moreover, $\ep_0$ can be chosen to be much smaller than than $\de_0$.
\item Since the choice of $\de_0$ can depend on $A$, we will assume that $A \de_0$ is much smaller than any constant $C$ by taking $\de_0$ sufficiently small.
\item Importantly, by choosing $\ep_0$ and $\de_0$ small, we can also make the parameters $\de$, $\de_{(t_B)}$ and $\ep_{(t_B)}$ small as follows:
\begin{itemize}
\item $\de$ can be made small since $\de\leq \Lmb^{-100\eta_0} \de_0$ and $\Lmb>100$ according to the assumptions in Theorem~\ref{thm:L-st-ch}.
\item $\dlt_{(t_B)}$ can be made small as it is controlled by $\de$, as will be shown in Proposition~\ref{prop:dlt-tB-bnd} (immediately below in the beginning of Section~\ref{subsec:bg-geom}).
\item $\ep_{(t_B)}$ can be made small because of the bootstrap assumption \eqref{eq:en-btstrp-eps} and the preceding bounds.
\end{itemize}
\end{itemize}

\subsection{Geometric bounds for the background solution} \label{subsec:bg-geom}
In this subsection, we provide estimates for various geometric quantities associated to an $(\omg, \dlt, \Lmb)$-admissible background solution. We continue to work in the bootstrap setting as described in Section~\ref{subsec:btstrp}. Namely, we have a fixed $(\omg, \dlt, \Lmb)$-admissible background solution and we will be estimating the geometric quantities for this background solution in the domain $\calD_{(t_B)}$, which is defined by a perturbed solution. We will also use the conventions for various constants discussed at the end of the last subsection.

We begin with a rough bound for $\de_{(t_B)}$. As mentioned at the end of the last subsection, this bound allows us to take $\de_{(t_B)}$ to be small by taking $\de$ (or $\de_0$) to be small.
\begin{proposition}\label{prop:dlt-tB-bnd}
The following estimate holds:
\begin{equation*}
\de_{(t_B)}\leq C\de.
\end{equation*}
\end{proposition}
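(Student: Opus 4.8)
Recall from \eqref{eq:dlt-tB} that
\begin{align*}
\dlt_{(t_{B})}^{2}
= & \dlt^{2} (1 + t_{B})^{-1} + \sup_{C_{(t_{B}) out}^{f}} \abs{\log \kppbg}
 + \int_{C_{(t_{B}) out}^{f}} \abs{\log \kppbg} \, \ud v
+ \sup_{\uC_{(t_{B}) in}^{f}} \rbg \abs{\log (-\gmmbg)}
+ \int_{\uC_{(t_{B}) in}^{f}} \abs{\log (-\gmmbg)} \, \ud u,
\end{align*}
so the first term is trivially bounded by $\dlt^{2} \leq C \dlt^{2}$, and the task is to bound the remaining four terms, each of which involves the geometric quantities $\log \kppbg$ and $\log(-\gmmbg)$ of the $(\omg, \dlt, \Lmb)$-admissible background solution restricted to the final null curves $C_{(t_{B}) out}^{f}$ and $\uC_{(t_{B}) in}^{f}$ of the bootstrap domain. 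The plan is to apply the geometric bounds of Proposition~\ref{prop:bg-uv} (with $\rbg_{\EH}=1$ as in the current normalization, $\dlt_{1} = \dlt$, and using the decay assumptions \eqref{eq:dlt-adm:phi}--\eqref{eq:dlt-adm:dvphi} as the hypotheses \eqref{eq:bg-uv:phi}--\eqref{eq:bg-uv:dvphi}), which directly furnish decay estimates for $\abs{\log \kppbg}$ via \eqref{eq:bg-uv:kpp-decay} and for $\abs{\log(-\gmmbg)}$ via \eqref{eq:bg-uv:gmm-decay}, both with a gain of a factor $\dlt^{2}$.

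More precisely, first I would record that $C_{(t_{B}) out}^{f}$ is a constant-$u$ curve of the background solution (in the future-normalized coordinates $(\ubg,\vbg)$, which we write as $(u,v)$ per the conventions) lying in the region $\set{r \geq 20}$ away from $\EH$, since $R_{0}=r(1,1)$ is large and the auxiliary bootstrap assumption \eqref{eq:en-btstrp-R-Lmb} puts its far endpoint near $\set{\rbg = \Lmb}$; likewise $\uC_{(t_{B}) in}^{f}$ is a constant-$v$ curve in the same large-$r$ region. On $C_{(t_{B}) out}^{f}$, the second branch of \eqref{eq:bg-uv:kpp-decay} gives $\abs{\log \kppbg} \leq C(\brk{v}^{-2\omg} + \brk{u}^{-2\omg+1}\Lmb^{-1})\dlt^{2} \leq C\dlt^{2}$ pointwise, which handles the $\sup$ term; integrating in $v$ along this curve, the same bound gives $\int \abs{\log\kppbg}\,\ud v \leq C\dlt^{2}\int_{1}^{\infty}\brk{v}^{-2\omg}\,\ud v + C\dlt^{2}\Lmb^{-1}\brk{u}^{-2\omg+1}(v_{(t_B)f}-1)$, and since the interval of $v$ has length comparable to $\Lmb$ (as $r$ ranges over roughly $[20,\Lmb]$ on this curve, using \eqref{eq:bg-uv:dvr-large-r} which gives $\dvrbg \sim 1$), the $\Lmb^{-1}$ cancels against the length, leaving a total bounded by $C\dlt^{2}$. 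Symmetrically, on $\uC_{(t_{B}) in}^{f}$ we use \eqref{eq:bg-uv:gmm-large-r} (equivalently \eqref{eq:bg-uv:gmm-decay}): $\abs{\log(-\gmmbg)} \leq C\brk{u}^{-2\omg+2}r^{-2}\dlt^{2}$, so $\rbg\abs{\log(-\gmmbg)} \leq C r^{-1}\dlt^{2} \leq C\dlt^{2}$ for the $\sup$ term, and $\int \abs{\log(-\gmmbg)}\,\ud u \leq C\dlt^{2}\int r^{-2}\,(-\rd_u r)^{-1}(-\rd_u r)\,\ud u = C\dlt^{2}\int r^{-2}\,\ud r \leq C\dlt^{2}$ after changing variables $u\mapsto r$ along the incoming curve using \eqref{eq:bg-uv:dur-large-r} (valid since $r\geq 20$ there). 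Combining, $\dlt_{(t_{B})}^{2} \leq C\dlt^{2}$, i.e. $\dlt_{(t_{B})} \leq C\dlt$, which is the claim.

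The main technical point to get right — and the only real obstacle — is the bookkeeping of the domains: one must verify that $C_{(t_{B}) out}^{f}$ and $\uC_{(t_{B}) in}^{f}$ indeed lie inside the region $\set{r \geq 20}$ of the $(\omg,\dlt,\Lmb)$-admissible background where the estimates of Proposition~\ref{prop:bg-uv} in case (2) apply (this follows from $R_{0} = r(1,1) = \Lmb\rbg_{\EH}$ being large — at least $100$ — together with the auxiliary bootstrap bound \eqref{eq:en-btstrp-R-Lmb}), and that the $v$-range on $C_{(t_{B}) out}^{f}$ and the $u$-range on $\uC_{(t_{B}) in}^{f}$ are both of size $O(\Lmb)$ so that the integrated terms do not lose a power of $\Lmb$. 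Since $C_{(t_{B}) out}^{f}$ and $\uC_{(t_{B}) in}^{f}$ are curves of the \emph{background} solution and the bounds of Proposition~\ref{prop:bg-uv} only depend on the $(\omg,\dlt,\Lmb)$-admissibility structure of Definition~\ref{def:dlt-adm}, no bootstrap assumptions on the perturbed solution are needed beyond \eqref{eq:en-btstrp-R-Lmb} to locate these curves; in particular the claimed constant $C$ depends only on $\omg$, $\eta_{0}$ and $\abs{\ebg}$, consistent with the conventions fixed at the end of Section~\ref{subsec:btstrp}. I would then remark that this is precisely the estimate needed to justify, a posteriori, that $\dlt_{(t_{B})}$ (like $\dlt$) can be made as small as desired by choosing $\dlt_{0}$ small, as already flagged in the bullet list at the end of Section~\ref{subsec:btstrp}.
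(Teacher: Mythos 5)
Your identification of the domain of $C_{(t_{B}) out}^{f}$ is incorrect, and this produces a genuine gap in the treatment of $\int_{C_{(t_{B}) out}^{f}} \abs{\log \kppbg} \, \ud v$. The outgoing null curve $C_{(t_{B}) out}^{f}$ runs from its \emph{past} endpoint $(U \circ \gmm_{R_{0}}(t_{B}), 1)$ on the initial incoming curve $\uC_{in}$ --- where $\rbg$ can be as small as $(1 - 2\dlt)\rbg_{\EH}$, and tends to that value as $t_{B} \to \infty$ --- up to its \emph{future} endpoint $\gmm_{R_{0}}(t_{B})$, where $\rbg \approx \Lmb$ by \eqref{eq:en-btstrp-R-Lmb}. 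So $\rbg$ ranges over essentially $[\rbg_{\EH}, \Lmb]$ on this curve, not $[20, \Lmb]$, and for large $t_B$ the bulk of $C_{(t_{B}) out}^{f}$ lies in the region $\set{\rbg \leq 30}$ near $\EH$. (You have the orientation reversed: since the curve is outgoing, $\rbg$ increases with $v$; the endpoint that \eqref{eq:en-btstrp-R-Lmb} locates near $\set{\rbg = \Lmb}$ is the \emph{future} one, while the past end has small $\rbg$.) Consequently, the second branch of \eqref{eq:bg-uv:kpp-decay} --- established only for $\rbg \geq 10 \rbg_{\EH}$ --- cannot be invoked along the whole curve, and the first branch, $\abs{\log \kppbg} \leq C \brk{u - C_{30}}^{-2\omg} \dlt^{2}$ for $\rbg \leq 30 \rbg_{\EH}$, is needed where $\rbg$ is small. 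Relatedly, the $v$-range of $C_{(t_{B}) out}^{f}$ is $[1, v_{(t_{B})f}]$, whose length scales with $t_{B}$, not $\Lmb$; only the portion with $\rbg \geq 10$ has $v$-length $\lesssim \Lmb$. (Your pointwise bound for the $\sup$ term happens to survive, since both branches of \eqref{eq:bg-uv:kpp-decay} are $\leq C \dlt^{2}$, but the cited justification uses the wrong branch where $\rbg$ is small.)

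The paper's proof closes this by splitting the $v$-integral into $\set{\rbg \leq 10}$ and $\set{\rbg \geq 10}$. On the large-$r$ piece your computation works as you describe: the $v$-length is $\lesssim \Lmb$, compensated by either the $\brk{v}^{-2\omg}$ decay or the $\Lmb^{-1}$ factor in the second branch of \eqref{eq:bg-uv:kpp-decay}. On the small-$r$ piece, one instead uses the first branch: on the constant-$u$ curve, the bound $C \brk{u_{(t_B)f} - C_{30}}^{-2\omg} \dlt^{2}$ is constant in $v$, while the $v$-length of $C_{(t_{B}) out}^{f} \cap \set{\rbg \leq 10}$ is $\lesssim \brk{u_{(t_B)f} - C_{30}}$ (by \eqref{eq:bg-uv:dur-small-r} and \eqref{eq:bg-uv:u-v}-type considerations); since $\omg > 2$, the product is $\lesssim \brk{u_{(t_B)f} - C_{30}}^{-2\omg + 1} \dlt^{2} \leq C \dlt^{2}$. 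Your treatment of $\uC_{(t_{B}) in}^{f}$ is fine, since on that incoming curve the smallest value of $\rbg$ is attained at the future endpoint $\gmm_{R_{0}}(t_{B})$, where $\rbg \geq \Lmb - 1 > 10$ by \eqref{eq:en-btstrp-R-Lmb}.
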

\begin{proof}
Since $t_{(B)}\geq 1$, by definition, it suffices to prove
\begin{align} 
	\sup_{C_{(t_{B}) out}^{f}} \abs{\log \kppbg} + \int_{C_{(t_{B}) out}^{f}} \abs{\log \kppbg} \, \ud v 
\leq & C \dlt^{2} ,		\label{eq:dlt-tB-bnd:pf-1}\\
 	\sup_{\uC_{(t_{B}) in}^{f}} \rbg \abs{\log (-\gmmbg)} + \int_{\uC_{(t_{B}) in}^{f}} \abs{\log (-\gmmbg)} \, \ud u
\leq & C \dlt^{2} .		\label{eq:dlt-tB-bnd:pf-2} 
\end{align}
The bound \eqref{eq:dlt-tB-bnd:pf-1} follows from \eqref{eq:bg-uv:kpp-decay}. Notice that for the integral we can divide it into $\rbg\leq 10$ and $\rbg\geq 10$, and the integral over $\rbg\leq 10$ is clearly bounded by $C \dlt^2$. For $\rbg\geq 10$, by \eqref{eq:en-btstrp-R-Lmb}, the $v$-length of $C^{f}_{(t_{B}) out} \cap \set{\rbg \geq 10}$ is bounded by $C \Lmb$; it is compensated by either the $v$ decay or the factor $\Lmb^{-1}$ in \eqref{eq:bg-uv:kpp-decay}. 

By \eqref{eq:en-btstrp-R-Lmb}, $\uC_{(t_{B}) in}^{f} \subseteq \{\rbg \geq 10\}$. Hence, the claim \eqref{eq:dlt-tB-bnd:pf-2} follows from \eqref{eq:bg-uv:gmm-decay}. \qedhere
\end{proof}

The following is the main result for the geometric quantities associated to an $(\omg, \dlt, \Lmb)$-admissible background solution:
\begin{proposition} \label{prop:bg-geom}
Let $(\Omgbg, \rbg, \phibg, \ebg)$ in $\widetilde{\calD}$ be an $(\omg, \dlt, \Lmb)$-admissible background solution as in the hypothesis of Theorem~\ref{thm:L-st-ch}, and we normalize $\rbg_{\EH} = 1$.
Let $(u, v)$ be a double null coordinate system for this solution characterized by \eqref{eq:coord-u-bg} and \eqref{eq:coord-v-bg}. 
Then the following statements hold.
\begin{enumerate}
\item We have
\begin{align}
	\rbg \geq & 1 - C \dlt \qquad \hbox{ in } \calD, \label{eq:rbg-bnd}   \\
	0 < 1-\mubg \leq & 1 + C \dlt \qquad \hbox{ in } \calD, \label{eq:mubg-bnd} \\
	1 \leq \kppbg \leq & 1 + C \dlt^{2} \qquad  \hbox{ in } \calD, \label{eq:kppbg-bnd} \\
	- 1 \leq \gmmbg \leq & - 1 + C \dlt^{2} \qquad  \hbox{ in } \calD \cap \set{\rbg \geq 10}. \label{eq:gmmbg-bnd}  
\end{align}
Furthermore, there exist $c_{(\durbg)}, c'_{(\durbg)} > 0$ depending on $|\ebg|$ so that
\begin{equation} \label{eq:durbg-bnd} 
	C^{-1} \min \set{e^{- c'_{(\durbg)}(u - v - \overline{C}_{\gmm_{30}})}, 1} < - \durbg \leq C \min \set{e^{- c_{(\durbg)} (u - v - \overline{C}_{\gmm_{30}})}, 1} \qquad  \hbox{ in } \calD
\end{equation}
where $\overline{C}_{\gmm_{30}}$ is defined by $\rbg(1 + \overline{C}_{\gmm_{30}}, 1) = 30$.
\item We have
\begin{align} 
	\abs{\log(-\gmmbg)} \leq & C u^{-2 \omg + 2} \rbg^{-2} \dlt^{2} \qquad \hbox{ in } \calD \cap \set{\rbg \geq 10} ,\label{eq:gmmbg-decay} \\
	\abs{\log \kppbg} \leq & C \left( v^{-2 \omg} + u^{-2 \omg + 1} \Lmb^{-1} \right) \dlt^{2} \qquad \hbox{ in } \calD \cap \set{\rbg \geq 10} .\label{eq:kppbg-decay} 
\end{align}
\end{enumerate}
\end{proposition}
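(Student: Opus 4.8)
The plan is to prove Proposition~\ref{prop:bg-geom} by recycling the estimates already established in Proposition~\ref{prop:bg-uv} and Corollary~\ref{cor:DR-final}, checking that the hypotheses of Proposition~\ref{prop:bg-uv} follow from $(\omg, \dlt, \Lmb)$-admissibility, and then translating the conclusions into the normalization $\rbg_{\EH} = 1$ together with the future-normalized coordinates $(u, v)$ adapted to the background solution (which coincide with those of Section~\ref{subsec:bg-uv} up to translation). The key point is that the decay assumptions \eqref{eq:dlt-adm:phi}--\eqref{eq:dlt-adm:dvrphi} in Definition~\ref{def:dlt-adm}, after setting $\rbg_{\EH} = 1$, are exactly the hypotheses \eqref{eq:bg-uv:phi}--\eqref{eq:bg-uv:dvphi} of Proposition~\ref{prop:bg-uv} with $\dlt_{1}$ replaced by $\dlt$ and $\brk{\cdot} = (\cdot) + 1 = \brk{\cdot}_{\rbg_{\EH}}$, and with the extra parameter $\Lmb$ carried along. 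So the first step is to verify this identification, choose $\dlt_0$ (hence $\dlt$) small enough that the smallness requirement ``$\dlt_{1}$ sufficiently small depending on $\varpibg_{f}^{-1} \abs{\ebg}$'' in Proposition~\ref{prop:bg-uv} is met — here one uses \eqref{eq:L-st-ch:dlt} and the hypothesis that $\dlt_0$ is small depending on $\omg, \eta_0, \varpibg_{f}^{-1}\abs{\ebg}$ — and then read off each conclusion.

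\textbf{Item (1).}
The bound \eqref{eq:rbg-bnd} follows from \eqref{eq:bg-uv:r-bnd} of Proposition~\ref{prop:bg-uv} together with Definition~\ref{def:dlt-adm}(4), which gives $\rbg \geq (1-\dlt)\rbg_{\EH} = 1 - \dlt$ on $\EH$, and monotonicity $\rd_U \rbg < 0$, $\rd_V \rbg > 0$ in $\calD$; combined this yields $\rbg \geq 1 - C\dlt$ throughout $\calD$. The bounds \eqref{eq:mubg-bnd}--\eqref{eq:gmmbg-bnd} are immediate consequences of \eqref{eq:bg-uv:mu-large-r}, \eqref{eq:bg-uv:mu-small-r}, \eqref{eq:bg-uv:kpp-bnd} and \eqref{eq:bg-uv:gmm-decay}: note that \eqref{eq:bg-uv:kpp-decay} gives $|\log\kppbg| \leq C\dlt^2$ in $\calD$, so $1 \leq \kppbg \leq 1 + C\dlt^2$ using $\kppbg \restriction_{\EH} = 1$ and $\rd_u \kppbg \leq 0$; similarly \eqref{eq:bg-uv:gmm-decay} gives $|\log(-\gmmbg)| \leq C\dlt^2$ in $\set{\rbg \geq 10}$ and $-\gmmbg \restriction_{\NI} = 1$ gives the two-sided bound. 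Finally \eqref{eq:durbg-bnd} is precisely the content of \eqref{eq:bg-uv:dur-small-r} (in $\set{\rbg \leq 30}$), extended to all of $\calD$ by combining with \eqref{eq:bg-uv:dur-large-r} in $\set{\rbg \geq 10}$ and interpolating across the overlap region $\set{10 \leq \rbg \leq 30}$, absorbing the additive constant $\overline{C}_{\gmm_{30}}$ into the exponential rate; the constants $c_{(\durbg)}, c'_{(\durbg)}$ are the $b_1/b_1'$ (up to normalization) appearing in the proof of \eqref{eq:bg-uv:dur-small-r}.

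\textbf{Item (2).}
The decay estimates \eqref{eq:gmmbg-decay} and \eqref{eq:kppbg-decay} in $\calD \cap \set{\rbg \geq 10}$ are literally \eqref{eq:bg-uv:gmm-decay} and \eqref{eq:bg-uv:kpp-decay} of Proposition~\ref{prop:bg-uv}, once one matches the notation: with $\rbg_{\EH} = 1$ the brackets $\brk{\tfrac{u - u_1}{r_{\EH}}}, \brk{\tfrac{v - v_1}{r_{\EH}}}$ become $u, v$ after the coordinate translation fixing $u_1 = v_1 = 1$, the choice $\Lmb$ in Proposition~\ref{prop:bg-uv} is taken to be the admissibility parameter $\Lmb$ (legitimate since \eqref{eq:bg-uv:phi}--\eqref{eq:bg-uv:dvphi} hold with this $\Lmb$ by \eqref{eq:dlt-adm:phi}--\eqref{eq:dlt-adm:dvphi}), and $\dlt_1 = \dlt$. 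The only subtlety is the coordinate translation: the $(u,v)$ of Proposition~\ref{prop:bg-uv} are normalized with $u_1 = v_1 = 1$ on the past boundary of a characteristic rectangle $\calX_1$, whereas the $(u,v)$ here are normalized by \eqref{eq:coord-u-bg}, \eqref{eq:coord-v-bg} with $(u,v) = (1,1)$ at $(U,V) = (1,1)$; since both are future-normalized, they differ by an overall translation $(u,v) \mapsto (u + u_*, v + v_*)$, and by construction of the $(\omg,\dlt,\Lmb)$-admissible solution this translation is bounded by a constant depending only on the admissible parameters (the event horizon and null infinity normalizations are intrinsic), so $u \sim u + O(1)$ and $v \sim v + O(1)$ in the relevant ranges and the powers $u^{-2\omg+2}$, etc., are unaffected up to constants.

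\textbf{Main obstacle.}
I expect the bookkeeping around the coordinate normalization to be the only genuine issue: one must confirm that the $(u,v)$ coordinates characterized by \eqref{eq:coord-u-bg}--\eqref{eq:coord-v-bg} with the base point $(1,1)$ differ from the ``$\calX_1$-normalized'' $(u,v)$ of Proposition~\ref{prop:bg-uv} by a \emph{bounded} translation (bounded in terms of the admissible parameters and $\Lmb$, not the full solution profile), so that all the weighted bounds transfer with acceptable constants, and that the various subsets $\set{\rbg \leq 30}$, $\set{\rbg \geq 10}$ and their overlap are handled consistently when stitching \eqref{eq:durbg-bnd} together from the two regimes of Proposition~\ref{prop:bg-uv}. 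Everything else is a direct citation. Care must also be taken that Proposition~\ref{prop:bg-uv} is applied on the whole of $\calD$ — this is legitimate because $\calD = \widetilde{\calD} \cap J^-(\NI)$ is a characteristic rectangle in the $(u,v)$ coordinates whose past boundary lies in $\set{\rbg \geq 10}$ (by admissibility item (8), $C_{out}$ starts at $\rbg = \Lmb \rbg_{\EH} = \Lmb \geq 100$), so the hypotheses \eqref{eq:bg-uv:phi}--\eqref{eq:bg-uv:dvphi} are available there with $\Lmb$ equal to the admissibility parameter.
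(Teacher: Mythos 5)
Your proposal follows essentially the same route as the paper's proof: apply Proposition~\ref{prop:bg-uv} (whose hypotheses \eqref{eq:bg-uv:r-bnd}--\eqref{eq:bg-uv:dvphi} coincide with the $(\omg,\dlt,\Lmb)$-admissibility conditions after the normalization $\rbg_{\EH}=1$ and a coordinate translation) and read off each conclusion, supplementing with the horizon normalization $\kppbg\restriction_{\EH}=1$, $\gmmbg\restriction_{\NI}=-1$ and the monotonicity from \eqref{eq:EMSF-ray} for the one-sided lower bounds in \eqref{eq:kppbg-bnd}, \eqref{eq:gmmbg-bnd}, and with the combination of \eqref{eq:bg-uv:dur-large-r} and \eqref{eq:bg-uv:dur-small-r} for \eqref{eq:durbg-bnd}. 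The remark about coordinate translation is also correct, and follows directly from the coordinate-invariance observation in Remark~\ref{rem:bg-uv-dim0}.

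One imprecision: for the \emph{upper} bound in \eqref{eq:mubg-bnd} you cite \eqref{eq:bg-uv:mu-large-r}, \eqref{eq:bg-uv:mu-small-r}, but those only give $1-\mubg\leq 2$, not $\leq 1 + C\dlt$. To get the sharper constant, the paper instead writes $1-\mubg = 1 - \frac{2\varpibg}{\rbg}+\frac{\ebg^2}{\rbg^2}$ and uses \eqref{eq:dlt-adm:m} to pin down $\varpibg$ together with the lower bound \eqref{eq:rbg-bnd} on $\rbg$; since $\varpibg_f > \abs{\ebg}$ (subextremality) and $\rbg \geq 1 - C\dlt$, the quadratic in $\rbg^{-1}$ is monotone on the relevant range and strictly below $1$, so $1-\mubg < 1 \leq 1 + C\dlt$ without needing the Price's-law output. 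This is a one-line fix, but as written that particular estimate doesn't follow from the equations you cite.
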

\begin{remark} [Coordinate invariance of bounds on $\phibg$]
Before we begin the proof, we remind the reader that the assumptions \eqref{eq:dlt-adm:phi}--\eqref{eq:dlt-adm-dvdvrphi} are invariant under the coordinate transformation $(U, V) \mapsto (\ubg, \vbg)$, since the vector fields $\frac{1}{\rd_{\ubg} \rbg} \rd_{\ubg}$ and $\frac{1-\mubg}{\rd_{\vbg} \rbg} \rd_{\vbg}$ are coordinate invariant. It follows that \eqref{eq:dlt-adm:phi}--\eqref{eq:dlt-adm-dvdvrphi} hold with $(U, V)$ replaced by $(\ubg, \vbg)$; the latter coordinates coincide with $(u, v)$ in the notation of Proposition~\ref{prop:bg-geom}.
\end{remark}

\begin{proof}
By Definition~\ref{def:dlt-adm}, Proposition~\ref{prop:bg-uv} for $\de_0$ sufficiently small. The bound \eqref{eq:rbg-bnd} follows from the bound $\rbg \geq 1-\dlt$ on $\EH \cap \widetilde{\calD}$ and the monotonicity properties of $\rbg$ in $\calD$. The upper bound in \eqref{eq:mubg-bnd} is proved by combining \eqref{eq:dlt-adm:m} with \eqref{eq:rbg-bnd}, whereas the lower bound follows from \eqref{eq:bg-uv:mu-large-r} and \eqref{eq:bg-uv:mu-small-r}. The upper bounds in \eqref{eq:kppbg-bnd} and \eqref{eq:gmmbg-bnd} trivially follow from \eqref{eq:bg-uv:kpp-decay} and \eqref{eq:bg-uv:gmm-decay} in Proposition~\ref{prop:bg-uv}, respectively, whereas the lower bounds are consequences of the future normalization $\overline{\kappa}=1$ at $\EH$ and $\overline{\gamma}=-1$ at $\NI$ together with the monotonicity of $\overline{\kappa}$ and $\overline{\gamma}$ that follows from \eqref{eq:EMSF-ray}. The bound \eqref{eq:durbg-bnd} is a result of combining \eqref{eq:bg-uv:dur-large-r} and \eqref{eq:bg-uv:dur-small-r}. Finally, \eqref{eq:gmmbg-decay}--\eqref{eq:kppbg-decay}  follow from \eqref{eq:bg-uv:gmm-decay}--\eqref{eq:bg-uv:kpp-decay}. \qedhere
\end{proof}

\subsection{Preliminary geometric difference bounds} \label{subsec:geom}
In this subsection, we estimate the difference of geometric quantities using the bootstrap assumptions. In particular, we obtain estimates for $\dvrdf$, $\durdf$ and $\rdf$ that are claimed in Proposition~\ref{prop:en-btstrp}.

From this subsection until Section~\ref{subsec:gauge}, {\bf we assume that the hypotheses of Proposition~\ref{prop:en-btstrp} hold.} In particular, we assume throughout that $\rbg_{\EH}=1$. We will also use the conventions about the constants $C$, $\de_0$, $\eps_0$, $\de$, $\de_{(t_B)}$ and $\eps_{(t_B)}$ as described in the end of Section~\ref{subsec:btstrp}. In particular, in what follows {\bf we freely adjust $\eps_{(t_{B})}$ and $\dlt_{0}$ to be small enough for our purposes}.

Since $t_{B}$ is fixed throughout the argument, \textbf{we drop the $t_{B}$-dependence of $\eps_{(t_{B})}$ and simply write $\eps = \eps_{(t_{B})}$}; this convention will be in effect until Section~\ref{subsec:en-pf}.

We start with a collection of simple bounds concerning the geometry of the perturbed solution $(g, \phi, F)$.
\begin{lemma} \label{lem:geom-prelim}
The following statements hold. 
\begin{enumerate}
\item In the whole bootstrap domain $\calD_{(t_{B})}$, we have
\begin{align} 
	\frac{1}{4} < \varpi \leq 1 & \quad \hbox{ in } \calD_{(t_{B})}, \label{eq:m-bnd}  \\
		\abs{\e} \leq 1 &, 		\label{eq:e-bnd}  \\
	1 \leq \kpp \leq 2 & \quad \hbox{ in } \calD_{(t_{B})}. \label{eq:kpp-bnd}  
\end{align}
\item Restricted to the region away from the event horizon, we have 
\begin{align} 
	\frac{1}{2} \leq 1-\mu \leq 1 & \quad \hbox{ in } \calD_{(t_{B})} \cap \set{r \geq 10}, \label{eq:mu-large-r}  \\
	 \frac{1}{2} \leq - \gmm \leq 1 & \quad \hbox{ in } \calD_{(t_{B})} \cap \set{r \geq 20}, \label{eq:gmm-large-r}  \\
%	 \frac{1}{2} \leq \dvr \leq 2 & \quad \hbox{ in } \calD_{(t_{B})} \cap \set{r \geq 20}. \label{eq:dvr-large-r}   \\
	 \frac{1}{4} \leq - \dur \leq 1 & \quad \hbox{ in } \calD_{(t_{B})} \cap \set{r \geq 20}. \label{eq:dur-large-r}  
\end{align}
\item Finally, we have
\begin{align}  
	\abs{R_{0} - \Lmb} \leq & \eps.  \label{eq:R0-Lmb}
\end{align}
\end{enumerate}
\end{lemma}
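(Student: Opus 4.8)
The strategy is to combine the Raychaudhuri monotonicity inequalities in the exterior region, the background geometry bounds for the $(\omg, \dlt, \Lmb)$-admissible solution from Proposition~\ref{prop:bg-geom}, and the bootstrap assumptions \eqref{eq:en-btstrp-m}--\eqref{eq:en-btstrp-gmm}. Throughout one uses freely that $\eps = \eps_{(t_{B})}$, $\dlt_{(t_{B})}$, $\dlt$, and all of their products with the bootstrap constant $A$, can be made as small as desired: by hypothesis $\dlt \leq \Lmb^{-100 \eta_{0}} \dlt_{0}$ and $A \dlt_{0} \ll 1$; by Proposition~\ref{prop:dlt-tB-bnd}, $\dlt_{(t_{B})} \leq C \dlt$; and by \eqref{eq:en-btstrp-eps}, $\eps_{(t_{B})} \leq 10 A (\eps_{0} + \dlt_{(t_{B})}^{4})$, where $\eps_{0}$ is allowed to be small depending on $A$ and $\dlt_{0}$. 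The bound \eqref{eq:R0-Lmb} is immediate: by Definition~\ref{def:dlt-adm}(5) and $\rbg_{\EH} = 1$ one has $\rbg(1,1) = \Lmb$, whereas $\Abs{r - \rbg}(1,1) \leq \eps_{(t_{B})}$ appears among the terms defining $\eps_{(t_{B})}^{2}$ in \eqref{eq:eps-tB}, so $\abs{R_{0} - \Lmb} = \abs{r(1,1) - \rbg(1,1)} \leq \eps$. For \eqref{eq:m-bnd}--\eqref{eq:e-bnd}, Definition~\ref{def:dlt-adm}(3) with $\rbg_{\EH}=1$ forces $\varpibg_{\EH} \in [\tfrac12, 1)$ (the strict upper bound being subextremality) and $\abs{\ebg} < \varpibg_{\EH}$; together with \eqref{eq:dlt-adm:m} (using $\varpibg_{f} = \varpibg_{\EH}$) this gives $\tfrac12 - \dlt^{2} \leq \varpibg \leq \varpibg_{\EH} + \dlt^{2}$ on $\calD_{(t_{B})}$, whence \eqref{eq:en-btstrp-m} yields $\tfrac14 < \varpi \leq 1$ after shrinking $\dlt_{0}$ (depending on $\varpibg_{f}^{-1}\abs{\ebg}$, so as to use $\varpibg_{\EH} < 1$) and then $\eps_{0}$; likewise $\abs{\e} \leq \abs{\ebg} + \abs{\e - \ebg} < \varpibg_{\EH} + \eps \leq 1$, since $\abs{\e - \ebg}$ is also a term of \eqref{eq:eps-tB}.

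Next, \eqref{eq:kpp-bnd}: the lower bound $\kpp \geq 1$ is a pure monotonicity statement, as $\rd_{u} \log \kpp = \tfrac{r}{\rd_{u}r}(\rd_{u}\phi)^{2} \leq 0$ in the exterior and $\kpp = 1$ on the final outgoing curve $C_{(t_{B}) out}^{f}$, which lies at the maximal value $u_{(t_{B})f}$ of $u$ in the box $\calD_{(t_{B})}$; integrating backward in $u$ at fixed $v$ gives $\kpp(u,v) \geq \kpp(u_{(t_{B})f}, v) = 1$. The upper bound follows from $\log \kpp = \log \kppbg + (\log \kpp - \log \kppbg)$ together with $\log \kppbg \leq C \dlt^{2}$ (from \eqref{eq:kppbg-bnd}) and \eqref{eq:en-btstrp-kpp}, the sum being $< \log 2$. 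For the bounds away from the horizon we use $1 - \mu = 1 - \tfrac{2\varpi}{r} + \tfrac{\e^{2}}{r^{2}}$: when $r \geq 10$, the lower bound of \eqref{eq:mu-large-r} follows from $1 - \mu \geq 1 - \tfrac{2}{10} > \tfrac12$ (using $\varpi \leq 1$), and the upper bound $1 - \mu \leq 1$ is the assertion $\mu \geq 0$, valid since $r \geq 10 > \tfrac{\e^{2}}{2\varpi}$ (using $\abs{\e} \leq 1$, $\varpi > \tfrac14$). For \eqref{eq:gmm-large-r}, monotonicity again gives the upper bound: $\rd_{v} \log(-\gmm) = \tfrac{r}{\rd_{v}r}(\rd_{v}\phi)^{2} \geq 0$ in the exterior and $-\gmm = 1$ on the final incoming curve $\uC_{(t_{B}) in}^{f}$ at the maximal value $v_{(t_{B})f}$ of $v$, so $-\gmm \leq 1$ throughout $\calD_{(t_{B})}$; the lower bound $-\gmm \geq \tfrac12$ on $\set{r \geq 20}$ comes from $\log(-\gmm) = \log(-\gmmbg) + (\log(-\gmm) - \log(-\gmmbg))$, the background bound $\abs{\log(-\gmmbg)} \leq C\dlt^{2}$ (from \eqref{eq:gmmbg-bnd}), and \eqref{eq:en-btstrp-gmm}. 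Finally \eqref{eq:dur-large-r} follows from $-\dur = (1-\mu)(-\gmm) \in [\tfrac14, 1]$ on $\set{r \geq 20}$ by the two preceding items.

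The one point needing a little care — and the main obstacle — is that \eqref{eq:gmmbg-bnd} holds only where $\rbg \geq 10$, while the bootstrap assumption \eqref{eq:en-btstrp-gmm} lives on the region $\set{r \geq 20}$ determined by the \emph{perturbed} solution; to apply the former inside the latter one must know $\set{r \geq 20} \cap \calD_{(t_{B})} \subseteq \set{\rbg \geq 10}$. This follows from a crude comparison $\abs{r - \rbg} \leq C A \max\set{\log \Lmb, \log r}(\eps_{(t_{B})} + \dlt_{(t_{B})}^{2})$ on $\calD_{(t_{B})}$ — a preliminary form of the estimate \eqref{eq:en-btstrp:rdf} of Proposition~\ref{prop:en-btstrp} — which is obtained by integrating $\rd_{v}(r - \rbg) = \dvr - \dvrbg$ (controlling $\dvr - \dvrbg$ via the bootstrap bounds on $\varpidf$ and $\log \kpp - \log \kppbg$ together with parts (1) and the $(1-\mu)$ bound of this lemma) and running a Gr\"onwall argument, the logarithmic-in-$\Lmb$ losses being absorbed by $\dlt \leq \Lmb^{-100 \eta_{0}} \dlt_{0}$; since $\log r / r$ is bounded for $r \geq 20$, this forces $\rbg \geq r/2 \geq 10$ there. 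Beyond this, the proof is routine bookkeeping: one checks that every error term assembled from $A$, $\eps_{(t_{B})}$, $\dlt_{(t_{B})}$ and $\dlt$ is as small as required under the stated hierarchy of parameters.
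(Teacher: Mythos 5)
Your proposal follows the paper's own proof of this lemma line by line: $\eqref{eq:R0-Lmb}$ from the definition of $\eps_{(t_B)}$; $\eqref{eq:m-bnd}$--$\eqref{eq:e-bnd}$ from subextremality with $\rbg_{\EH}=1$, $\eqref{eq:dlt-adm:m}$, Corollary~\ref{varpi.same.limit}, and the bootstrap assumption $\eqref{eq:en-btstrp-m}$; $\eqref{eq:kpp-bnd}$ from monotonicity of $\kpp$ in $u$ plus $\eqref{eq:kppbg-bnd}$ and $\eqref{eq:en-btstrp-kpp}$; $\eqref{eq:mu-large-r}$ from the explicit formula for $1-\mu$; $\eqref{eq:gmm-large-r}$ from monotonicity, $\eqref{eq:gmmbg-bnd}$ and $\eqref{eq:en-btstrp-gmm}$; and $\eqref{eq:dur-large-r}$ from $\dur = (1-\mu)\gmm$.

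The one place you go beyond the paper is the observation that combining $\eqref{eq:gmmbg-bnd}$ (stated on $\calD \cap \set{\rbg \geq 10}$) with the bootstrap assumption $\eqref{eq:en-btstrp-gmm}$ (stated on $\set{r \geq 20}$) tacitly requires $\set{r \geq 20} \cap \calD_{(t_B)} \subseteq \set{\rbg \geq 10}$. This is a legitimate point: the paper's proof of Lemma~\ref{lem:geom-prelim} does not address it, and only the analogous inclusion $\set{r \leq 20} \subseteq \set{\rbg \leq 30}$ is recorded, later and without fanfare, in Step~1 of the proof of Lemma~\ref{lem:df-small-r}. However, the fix you sketch is not quite right. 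You propose obtaining a crude $\abs{r - \rbg}$ bound by integrating $\rd_v(r - \rbg) = \dvrdf$. But because the bootstrap bound $\eqref{eq:en-btstrp-kpp}$ on $\log\kpp - \log\kppbg$ carries \emph{no} $r$-decay (this is precisely the anomalous loss discussed in Remark~\ref{rem:kpp-anomaly}), the forcing term $(1-\mubg)\left(1 - \tfrac{\kppbg}{\kpp}\right)\kpp$ in $\eqref{eq:dvrdf-pre}$ is only bounded by $CA(\eps_{(t_B)} + \dlt_{(t_B)}^2)$, and integrating it over a $v$-interval of length $\sim R_0 \sim \Lmb$ produces an $O(\Lmb)$, not $O(\log\Lmb)$, contribution. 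Since $\Lmb \dlt_{(t_B)}^2 \lesssim \Lmb^{1-200\eta_0}\dlt_0^2$, which for small $\eta_0$ grows with $\Lmb$, this is not absorbed by $\dlt \leq \Lmb^{-100\eta_0}\dlt_0$. The correct route is the $u$-integration from $\gmm_{R_0}$ in Step~1 of Lemma~\ref{lem:df-large-r}, which instead exploits the $1/r$-decaying bootstrap bound $\eqref{eq:en-btstrp-gmm}$ and yields $\eqref{eq:df-large-r:rdf}$; and, as you anticipate, that argument invokes only parts~(1), $\eqref{eq:mu-large-r}$, $\eqref{eq:R0-Lmb}$ and the bootstrap assumptions, so deriving it before $\eqref{eq:gmm-large-r}$ and $\eqref{eq:dur-large-r}$ breaks the apparent circularity.
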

\begin{proof}
Throughout the proof, we will take $\eps_0$, $\de_0$ small as necessary (and thus also $\ep$ and $\dlt_{(t_B)}$; see end of Section~\ref{subsec:btstrp}). For \eqref{eq:m-bnd}, observe that $\abs{\varpi - \varpibg}$ can be made arbitrarily small by \eqref{eq:en-btstrp-m}. Since $\rbg_{\EH}=1$ by our normalization, it follows from \eqref{r.varpi.poly} that $\f 12<\overline{\varpi}_{\EH}<1$ (with strict inequalities\footnote{Precisely how far away is $\overline{\varpi}_{\EH}$ from $\f 12$ and $1$ depends on $|\ebg|$.}). Hence, by Corollary~\ref{varpi.same.limit} and \eqref{eq:dlt-adm:m}, \eqref{eq:m-bnd} follows. For \eqref{eq:e-bnd}, we simply use \eqref{eq:subextremality} and $\abs{\e - \ebg} \leq \eps$. Next, the upper bound in \eqref{eq:kpp-bnd} follows from \eqref{eq:kppbg-bnd} and \eqref{eq:en-btstrp-kpp} by writing $\kpp = e^{\log (\kpp/\kppbg)} \kppbg$; whereas the lower bound follows simply from $\kappa=1$ on $\{u=u_{(t_B)f}\}$ and the monotonicity of $\kappa$. 
For \eqref{eq:mu-large-r}, recall that $1 - \mu = 1 - \frac{2 \varpi}{r} + \frac{\e^{2}}{r^{2}}$; therefore, the desired bound follows from \eqref{eq:m-bnd} and \eqref{eq:e-bnd}.
As in the case of \eqref{eq:kpp-bnd}, \eqref{eq:gmm-large-r} follows from \eqref{eq:gmmbg-bnd} and \eqref{eq:en-btstrp-gmm}, as well as monotonicity of $(-\gamma)$. Then the bound \eqref{eq:dur-large-r} follows from the preceding bounds, since $\dur = (1-\mu) \gmm$.
Finally, \eqref{eq:R0-Lmb} follows from the definition \eqref{eq:eps-tB}, since $\rdf (1, 1) = r(1, 1) - \rbg(1, 1) = R_{0} - \Lmb$. \qedhere
\end{proof}

Also, the following elementary analytic lemma will be useful.
\begin{lemma} \label{lem:r-alp-df}
Given any $\alp \in \bbR$, there exists $C_{\alp} > 0$ such that
\begin{equation} \label{eq:r-alp-df}
	\Abs{r^{\alp} - (\rbg)^{\alp} - \alp r^{\alp-1} \rdf} \leq C_{\alp} r^{\alp-2} \abs{\rdf}^{2} \quad \hbox{ if } \abs{\rdf} \leq \frac{1}{2} r.
\end{equation}
\end{lemma}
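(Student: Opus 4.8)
This is a short, elementary analytic lemma (Lemma~\ref{lem:r-alp-df}), so the proof proposal should be correspondingly brief.

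\medskip

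The plan is to prove the estimate by a Taylor expansion of the function $x \mapsto x^{\alp}$ around the point $x = \rbg$, evaluated at $x = r$. Write $r = \rbg + \rdf$, so that by the hypothesis $\abs{\rdf} \leq \frac{1}{2} r$ we have in particular $\rbg = r - \rdf \geq \frac12 r > 0$, and hence the whole segment joining $\rbg$ to $r$ stays in the region $[\frac12 r, \frac32 r]$ (one checks both endpoints lie there and the segment is an interval). By the second-order Taylor theorem with Lagrange (or integral) remainder applied to $g(x) = x^{\alp}$, we have
\begin{equation*}
	r^{\alp} = (\rbg)^{\alp} + \alp (\rbg)^{\alp - 1} \rdf + \frac{1}{2} \alp(\alp-1) \xi^{\alp - 2} \rdf^{2}
\end{equation*}
for some $\xi$ between $\rbg$ and $r$, hence $\xi \in [\frac12 r, \frac32 r]$.

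\medskip

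The only subtlety is that the claimed expansion point for the \emph{linear} term in the statement is $r^{\alp-1}$, not $(\rbg)^{\alp-1}$; so the second step is to replace $(\rbg)^{\alp-1}$ by $r^{\alp-1}$ in the linear term, absorbing the discrepancy into the quadratic error. Indeed, $(\rbg)^{\alp-1} = (r - \rdf)^{\alp-1} = r^{\alp-1} + O_{\alp}(r^{\alp-2} \abs{\rdf})$ again by first-order Taylor expansion of $x \mapsto x^{\alp-1}$ over the interval $[\frac12 r, \frac32 r]$, so $\alp (\rbg)^{\alp-1}\rdf = \alp r^{\alp-1}\rdf + O_{\alp}(r^{\alp-2}\abs{\rdf}^2)$. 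Combining this with the Taylor remainder above, and bounding $\xi^{\alp-2} \leq C_{\alp} r^{\alp-2}$ using $\xi \in [\frac12 r, \frac32 r]$ (here $C_\alp$ depends only on $\alp$ via the factor $\max\{2^{2-\alp}, (3/2)^{\alp-2}\}$), one obtains
\begin{equation*}
	\Abs{r^{\alp} - (\rbg)^{\alp} - \alp r^{\alp-1} \rdf} \leq C_{\alp} r^{\alp-2} \abs{\rdf}^{2}
\end{equation*}
for a suitable $C_{\alp} > 0$, which is exactly \eqref{eq:r-alp-df}.

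\medskip

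There is no real obstacle here; the proof is a routine application of Taylor's theorem, and the only thing to be careful about is keeping the remainder bounds in terms of $r$ (rather than $\rbg$, or $\max\{r,\rbg\}$) so that the final constant depends only on $\alp$. I would simply record the above two-line argument and move on.
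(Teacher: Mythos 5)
Your argument is correct. The paper's proof takes a slightly different, and somewhat cleaner, route: it writes
\begin{equation*}
r^{\alp} - (\rbg)^{\alp} = r^{\alp}\left(1 - \left(1 - \tfrac{\rdf}{r}\right)^{\alp}\right)
\end{equation*}
and Taylor-expands $(1-s)^{\alp}$ in the single dimensionless variable $s = \rdf/r$, $\abs{s} \leq \tfrac12$. That produces the linear term $\alp r^{\alp-1}\rdf$ with the $r$-weight already in place, so there is no need for the secondary step you perform (replacing $(\rbg)^{\alp-1}$ by $r^{\alp-1}$ in the linear term and absorbing the discrepancy). Your two-step version — Taylor expand $x \mapsto x^{\alp}$ around $\rbg$, then do a first-order expansion of $(\rbg)^{\alp-1}$ — is perfectly valid: the hypothesis $\abs{\rdf}\leq\tfrac12 r$ does guarantee $\rbg \in [\tfrac12 r, \tfrac32 r]$, so every intermediate point $\xi$ or $\eta$ lies in that interval and $\xi^{\alp-2}, \eta^{\alp-2} \leq C_{\alp}\,r^{\alp-2}$, giving the stated bound. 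The paper's factorization is worth internalizing as a trick, since it automatically tracks the correct power of $r$ (rather than $\rbg$) and keeps the argument to a single expansion.
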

\begin{proof}
The bound \eqref{eq:r-alp-df} simply follows by writing
\begin{equation*}
r^{\alp} - (\rbg)^{\alp} = r^{\alp} \left( 1 - \left( 1 - \tfrac{\rdf}{r} \right)^{\alp} \right),
\end{equation*}
and performing Taylor expansion in $\rdf / r$. \qedhere
\end{proof}

The integral assumption \eqref{eq:en-btstrp-int} leads to the following bounds on the curve $\gmm_{R_{0}}$.
\begin{lemma} \label{lem:df-const-r}
On the curve $\gmm_{R_{0}} \cap \calD_{(t_{B})}$, we have
\begin{align} 
	\sup_{\gmm_{R_{0}} \cap \calD_{(t_{B})}} \abs{\rdf} 
\leq &	C A \left( \eps + \dlt_{(t_{B})}^{2} \right) . \label{eq:df-const-r:rdf} 
\end{align}
\end{lemma}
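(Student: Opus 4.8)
\textbf{Proof proposal for Lemma~\ref{lem:df-const-r}.}

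The plan is to derive the bound on $\rdf = r - \rbg$ along $\gmm_{R_0} \cap \calD_{(t_B)}$ by integrating an ODE for $\rdf$ along the timelike curve $\gmm_{R_0}$, parametrized by $t \mapsto \gmm_{R_0}(t)$ with tangent vector $T = -\frac{1-\mu}{\dur}\rd_u + \frac{1-\mu}{\dvr}\rd_v$. The key observation is that $r$ is constant ($= R_0$) along this curve, so the quantity that actually varies is $\rdf$, and this variation is entirely due to the drift between the perturbed coordinate system $(u_{(t_B)}, v_{(t_B)})$ and the background coordinate system $(\ubg, \vbg)$. The starting point is the initial value at $\gmm_{R_0}(0) = (1,1)$, where by \eqref{eq:R0-Lmb} (equivalently the definition \eqref{eq:eps-tB} of $\eps_{(t_B)}$) we have $\abs{\rdf}(1,1) = \abs{R_0 - \Lmb} \leq \eps$.

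The main step is to compute $T\rdf = Tr - T\rbg$. Since $Tr = 0$, this is $-T\rbg$, which I would expand using $T = -\frac{1-\mu}{\dur}\rd_u + \frac{1-\mu}{\dvr}\rd_v$:
\begin{equation*}
	T \rdf = - T \rbg = \frac{1-\mu}{\dur} \rd_u \rbg - \frac{1-\mu}{\dvr} \rd_v \rbg = \frac{1-\mu}{\dur} \durbg - \frac{1-\mu}{\dvr} \dvrbg.
\end{equation*}
Rewriting, $\frac{1-\mu}{\dur}\durbg = \frac{1-\mu}{\dur}(\dur - \durdf) = (1-\mu)\gmm^{-1}\dur \cdot (\text{stuff})$; more cleanly, $T\rdf = (1-\mu)\left(\frac{\durbg}{\dur} - \frac{\dvrbg}{\dvr}\right) = (1-\mu)\left(\frac{\gmmbg(1-\mubg)}{\gmm(1-\mu)} - \frac{\kppbg(1-\mubg)}{\kpp(1-\mu)}\right) = (1-\mubg)\left(\frac{\gmmbg}{\gmm} - \frac{\kppbg}{\kpp}\right)$. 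Near $\gmm_{R_0}$, which is in the region $\set{r \geq 20}$ when $\Lmb$ is large (using $R_0 \approx \Lmb > 100$ via \eqref{eq:R0-Lmb}), the bounds \eqref{eq:kpp-bnd}, \eqref{eq:mu-large-r}, \eqref{eq:gmm-large-r}, \eqref{eq:kppbg-bnd}, \eqref{eq:mubg-bnd}, \eqref{eq:gmmbg-bnd} give that $\gmm, \gmmbg \sim -1$ and $\kpp, \kppbg \sim 1$, so via $\abs{e^x - 1} \leq \abs{x}e^{\abs{x}}$ one obtains the pointwise bound
\begin{equation*}
	\abs{T\rdf} \leq C\left(\abs{\log(-\gmm) - \log(-\gmmbg)} + \abs{\log\kpp - \log\kppbg}\right) \quad \hbox{ on } \gmm_{R_0} \cap \calD_{(t_B)}.
\end{equation*}
Integrating in $t$ from $0$ to the relevant parameter value and invoking the integrated bootstrap assumption \eqref{eq:en-btstrp-int}, together with the initial bound $\abs{\rdf}(1,1) \leq \eps$, yields $\sup_{\gmm_{R_0} \cap \calD_{(t_B)}} \abs{\rdf} \leq \eps + 10A(\eps + \dlt_{(t_B)}^2) \leq CA(\eps + \dlt_{(t_B)}^2)$, which is \eqref{eq:df-const-r:rdf}.

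I do not expect a serious obstacle here — the argument is a short ODE integration — but the one point requiring care is verifying that $\gmm_{R_0}$ indeed lies in $\set{r \geq 20}$ throughout $\calD_{(t_B)}$ so that the ``large-$r$'' geometric bounds of Lemma~\ref{lem:geom-prelim} and Proposition~\ref{prop:bg-geom} apply; this follows from \eqref{eq:R0-Lmb} and $\Lmb > 100$ (with $\eps_0$ small) since $r \equiv R_0 = \Lmb + O(\eps)$ on the curve. A secondary bookkeeping point is that the parametrization by $t$ need not be unit-speed, but since $\abs{Tu}, \abs{Tv}$ are comparable to $1$ on $\gmm_{R_0}$ (by the same geometric bounds), converting between $\ud t$ and the bootstrap assumption's measure costs only a universal constant, already absorbed into $C$.
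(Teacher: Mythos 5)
Your proposal is correct and follows essentially the same argument as the paper: integrate $T\rdf = -T\rbg$ along $\gmm_{R_0}$ using the tangent vector field $T$, express the integrand in terms of $\gmmbg/\gmm - \kppbg/\kpp$, and close via the pointwise bootstrap bounds together with the integrated bootstrap assumption \eqref{eq:en-btstrp-int}, starting from $\abs{\rdf}(1,1)\leq\eps$. One minor discrepancy worth noting: your computation gives the prefactor $(1-\mubg)$, whereas the paper writes $(1-\mu)$; your version is the algebraically correct one (the paper's appears to be a typo or is implicitly using that both are bounded), and since both factors are bounded by a universal constant on $\gmm_{R_0}\subseteq\set{r\geq 20}$, the integration and conclusion are unaffected.
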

\begin{proof}
Recall the vector field $T$ from \eqref{eq:vf-T-0}, which obeys $T r = 0$ so that $T \, \rdf = - T \rbg$. Since $T \restriction_{\gmm_{R_{0}}(t)} = \dot{\gmm}_{R_{0}}(t)$, we have
\begin{align*}
	\frac{\ud}{\ud t} \rdf(u \circ \gmm_{R_{0}}, v \circ \gmm_{R_{0}}) 
=& 	(T \, \rdf) (u \circ \gmm_{R_{0}}, v \circ \gmm_{R_{0}})  \\
= & (1-\mu) \left( \frac{\gmmbg}{\gmm} - \frac{\kppbg}{\kpp} \right) (u \circ \gmm_{R_{0}}, v \circ \gmm_{R_{0}}).
\end{align*}
Integrating in $t$, using \eqref{eq:mu-large-r} and recalling the bootstrap assumptions \eqref{eq:en-btstrp-kpp}, \eqref{eq:en-btstrp-gmm}, \eqref{eq:en-btstrp-int}, and using $|e^{\vartheta}-1|\leq |\vartheta| e^{|\vartheta|}$, it follows that
\begin{align*}
	\sup_{\gmm_{R_{0}} \cap \calD_{(t_{B})}} \abs{\rdf}
	\leq & \abs{\rdf} (1, 1) + \int_{0}^{t_{B}} \Abs{\frac{\gmmbg}{\gmm} - \frac{\kppbg}{\kpp}}  
						(u \circ \gmm_{R_{0}}, v \circ \gmm_{R_{0}})(t) \, \ud t \\
	\leq & \eps + C A (\eps + \dlt_{(t_{B})}^{2}) e^{C A (\eps + \dlt_{(t_{B})}^{2})},		
\end{align*}
which implies \eqref{eq:df-const-r:rdf}.  \qedhere
\end{proof}

Using Lemma~\ref{lem:df-const-r} as a starting point, we now turn to the task of deriving bounds for $\rdf$, $\dvrdf$ and $\durdf$ on the whole domain $\calD_{(t_{B})}$. We first consider the region $\set{r \geq 20}$ away from the event horizon.

\begin{lemma} \label{lem:df-large-r}
In the region $\calD_{(t_{B})} \cap \set{r \geq 20}$, the following difference bounds hold:
\begin{align} 
	\abs{\rdf} 
\leq & 	C A \max\set{\log \Lmb, \log r} \left( \eps + \dlt_{(t_{B})}^{2} \right),				\label{eq:df-large-r:rdf} \\
	\abs{\durdf} 
\leq & 	C A (\log \Lmb) r^{-1} \left( \eps +  \dlt_{(t_{B})}^{2}\right)	,			\label{eq:df-large-r:durdf} \\
	\abs{\dvrdf} 
\leq & 	C A (\log \Lmb) \left( \eps + \dlt_{(t_{B})}^{2}\right) .				\label{eq:df-large-r:dvrdf} 
\end{align}
Moreover, we have the splitting $\dvrdf = \dvrdf_{g} + \dvrdf_{b}$, where $\dvrdf_{g}$ obeys an analogous estimate as \eqref{eq:df-large-r:durdf}:
\begin{equation} \label{eq:df-large-r:dvr-g} 
	\abs{\dvrdf_{g}} 
\leq  	C A (\log \Lmb) r^{-1} \left( \eps + \dlt_{(t_{B})}^{2}\right) ,				
\end{equation}
and $\dvrdf_{b}$ obeys an improved integrated estimate:
\begin{equation}\label{eq:df-large-r:dvr-b}
	\int_{1}^{v} \sup_{\uC_{v'} \cap \calD_{(t_{B})}} \abs{\dvrdf_{b}} \, \frac{\ud v'}{(v' + \Lmb)^{\eta_{0}}}
\leq  	C A (\log \Lmb) \left( \eps + \dlt_{(t_{B})}^{2}\right) .				
\end{equation}
Finally, on the curves $\gmm_{20} = \set{(u, v) : r(u, v) = 20}$ and $\gmm_{R_{0}} = \set{(u, v) : r(u, v) = R_{0}}$, we have
\begin{align} 
	\sup_{\gmm_{20} \cap \calD_{(t_{B})}} \abs{u - v - C_{\gmm_{20}}} 
\leq & 	C A (\eps + \dlt_{(t_{B})}^{2}) + C \dlt^{2}, 		\label{eq:df-large-r:v-u} \\
	\sup_{\gmm_{R_{0}} \cap \calD_{(t_{B})}} \abs{u - v}
\leq &	C A (\eps + \dlt_{(t_{B})}^{2}) + C \dlt^{2}, \label{eq:df-large-r:v-u-R0}
\end{align}
where $C_{\gmm_{20}}$ is defined by the relation $r(C_{\gmm_{20}} + 1, 1) = 20$.
\end{lemma}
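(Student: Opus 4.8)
\textbf{Proof plan for Lemma~\ref{lem:df-large-r}.}

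The plan is to propagate the bounds on $\gmm_{R_{0}}$ from Lemma~\ref{lem:df-const-r} into the region $\set{r \geq 20}$ by integrating the evolution equations for the geometric differences along null directions, coupling these with the bootstrap assumptions on $\varpidf$, $\log\kpp - \log\kppbg$, $\log(-\gmm)-\log(-\gmmbg)$ and the background bounds in Proposition~\ref{prop:bg-geom}. First I would set up the equations: from \eqref{eq:EMSF-r-phi-m} and the coordinate normalizations, $\dur = (1-\mu)\gmm$, $\dvr = (1-\mu)\kpp$, and one has $\rd_{v}\gmm$, $\rd_{u}\kpp$ expressed via $(\rd_{u}\phi)^{2}$, $(\rd_{v}\phi)^{2}$ and likewise for the barred quantities; subtracting gives equations for $\gmmdf = \gmm - \gmmbg$ and $\kppdf = \kpp - \kppbg$. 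Using \eqref{eq:en-btstrp-gmm}, \eqref{eq:en-btstrp-kpp} together with \eqref{eq:kppbg-bnd}, \eqref{eq:gmmbg-bnd}, \eqref{eq:gmmbg-decay}, \eqref{eq:kppbg-decay} and the elementary inequality $|e^{\vtht}-1|\leq |\vtht|e^{|\vtht|}$, I would convert the logarithmic differences into bounds on $\gmmdf$ and $\kppdf$ of size $\lesssim \frac{A}{r}(\eps + \dlt_{(t_{B})}^{2})$ and $\lesssim A(\eps + \dlt_{(t_{B})}^{2})$ respectively. The mass difference $\varpidf$ is controlled by \eqref{eq:en-btstrp-m}, and $1-\mu$, $1-\mubg$ are comparable to $1$ by \eqref{eq:mu-large-r}, \eqref{eq:mubg-bnd}; combined with Lemma~\ref{lem:r-alp-df} this will handle the difference $\frac{1-\mu}{1} - \frac{1-\mubg}{1}$ in terms of $\varpidf$, $\rdf$, $\e-\ebg$.

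Next I would obtain $\rdf$. Starting from \eqref{eq:df-const-r:rdf} on $\gmm_{R_{0}}$, integrate $\rd_{u}\rdf = \durdf$ (or $\rd_{v}\rdf = \dvrdf$) away from $\gmm_{R_{0}}$; since $\durdf = (1-\mu)\gmmdf + (\mu - \mubg)(-\gmmbg)$ and $|\gmmbg|\leq 1 + C\dlt^{2}$, the integrand is bounded by $\frac{CA}{r}(\eps + \dlt_{(t_{B})}^{2}) + (\text{mass/}r\text{ terms})$; the logarithmic growth $\max\set{\log\Lmb, \log r}$ in \eqref{eq:df-large-r:rdf} arises from integrating the $r^{-1}$ weight against $\rd_{u} r$, using $r \sim \Lmb$ near $C_{out}$ and $R_{0}\sim\Lmb$ by \eqref{eq:R0-Lmb}. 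Note there is a subtlety: the $r^{-1}$ decay of $\gmmdf$ is not uniform because $\mu - \mubg$ contributes a term involving $\rdf$ itself, so I would run a short Grönwall argument in $r$ to absorb the $\rdf$-feedback — this closes because the relevant integral $\int r^{-2}(-\rd_{u}r)\,du$ is finite. With $\rdf$ in hand, \eqref{eq:df-large-r:durdf} and \eqref{eq:df-large-r:dvrdf} follow by re-examining the formulae for $\durdf$, $\dvrdf$: $\durdf$ inherits the $r^{-1}$ smallness from $\gmmdf$ and from $(\mu-\mubg)$; $\dvrdf$ only gets the weaker (non-decaying) bound because $\kppdf$ does not decay in $r$, reflecting the anomaly of Remark~\ref{rem:kpp-anomaly}.

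The splitting $\dvrdf = \dvrdf_{g} + \dvrdf_{b}$ is dictated by the bootstrap structure: set $\dvrdf_{g}$ to be the contribution from $(\mu - \mubg)\kppbg$ (which decays like $r^{-1}$ by the mass bounds and $\rdf$ bound, giving \eqref{eq:df-large-r:dvr-g}), and $\dvrdf_{b} = (1-\mu)\kppdf$ to be the ``bad'' part; the integrated estimate \eqref{eq:df-large-r:dvr-b} then follows directly from the integrated bootstrap assumption \eqref{eq:en-btstrp-kpp-int} and \eqref{eq:mu-large-r}. Finally, \eqref{eq:df-large-r:v-u} and \eqref{eq:df-large-r:v-u-R0} are obtained by the same argument as \eqref{eq:bg-uv:u-v} in Proposition~\ref{prop:bg-uv}: along $\gmm_{20}$ (resp. $\gmm_{R_{0}}$) the tangent vector $T$ satisfies $\frac{\ud}{\ud t}(u - v - C_{\gmm_{20}}) = \frac{1}{\gmm} + \frac{1}{\kpp}$, which is bounded by $C(|\log(-\gmm)| + |\log\kpp|)$; splitting this as $(|\log(-\gmm)-\log(-\gmmbg)| + |\log(-\gmmbg)|) + (|\log\kpp - \log\kppbg| + |\log\kppbg|)$ and integrating, the first parts are controlled by \eqref{eq:en-btstrp-int-20} (resp. \eqref{eq:en-btstrp-int}) giving the $CA(\eps+\dlt_{(t_{B})}^{2})$ term, and the background parts by \eqref{eq:bg-uv:u-v}-type bounds giving the $C\dlt^{2}$ term. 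I expect the main obstacle to be bookkeeping the $\Lmb$-dependence carefully — in particular verifying that the logarithmic losses $\log\Lmb$ genuinely appear (and no worse), since these are exactly what \eqref{eq:L-st-ch:dlt} is designed to absorb later; the Grönwall-in-$r$ step for $\rdf$ where $\rdf$ feeds back through $\mu - \mubg$ is the one place requiring genuine care rather than routine estimation.
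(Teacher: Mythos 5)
Your proposal follows essentially the same route as the paper's proof: integrate $\rd_{u}\rdf = \durdf$ from $\gmm_{R_0}$ using Lemma~\ref{lem:df-const-r} as the anchor, convert the logarithmic bootstrap bounds to pointwise bounds on $\gmmdf$, $\kppdf$ via $|e^{\vtht}-1|\leq |\vtht|e^{|\vtht|}$, handle the $\rdf$-feedback through $\mu-\mubg$, split $\dvrdf$ into a good part with $r^{-1}$ decay and a bad part $\propto\kppdf$ controlled by the integrated bootstrap \eqref{eq:en-btstrp-kpp-int}, and finally control $|u-v-C_{\gmm_{20}}|$ by integrating $\frac{1}{\gmm}+\frac{1}{\kpp}$ along $T$, splitting into difference and background contributions. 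The algebraic decompositions you write (e.g.\ $\durdf=(1-\mu)\gmmdf+(\mu-\mubg)(-\gmmbg)$, and $\dvrdf_{b}=(1-\mu)\kppdf$ vs.\ the paper's $(1-\mubg)\kppdf$) differ cosmetically from the paper's but are equivalent, since the discrepancy is absorbed into the good part. One small point to tighten up: the $\rdf$-feedback through $\mu-\mubg$ is not purely linear — there is also a quadratic-in-$\rdf$ tail from expanding $\rbg^{-1},\rbg^{-2}$ around $r^{-1},r^{-2}$ (cf.\ Lemma~\ref{lem:r-alp-df}), and the paper closes this with a separate local bootstrap assumption $\sup r^{-1}|\rdf|\leq\veps$ in addition to the Gr\"onwall for the linear coefficient; your single phrase ``Gr\"onwall argument in $r$'' elides this two-step structure, though the underlying mechanism is what you describe.
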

\begin{proof}
The most delicate bound is \eqref{eq:df-large-r:rdf}, since we cannot assume any bound on $\durdf$ or $\dvrdf$. For its proof, we analyze a nonlinear ODE that $\rdf$ solves, which can be done using only the bootstrap assumptions. As we will see, the rest of the argument is more routine.

\pfstep{Step~1: Proof of \eqref{eq:df-large-r:rdf}}
Recall that $\dur = (1-\mu) \gmm$. Thus, on the one hand, we have
\begin{equation} \label{eq:df-large-r:gmm-dur}
	\abs{\gmm} = \frac{- \rd_{u} r}{1-\mu} \leq 2 (- \rd_{u} r)
\end{equation}
in $\calD_{(t_{B})} \cap \set{r \geq 20}$, thanks to \eqref{eq:mu-large-r}. On the other hand, taking the difference between $(1-\mu) \gmm$ and $(1-\mubg) \gmmbg$, we obtain the ODE
\begin{equation} \label{eq:durdf}
	\rd_{u} \rdf = \durdf = \left( - \frac{2 \varpidf}{r} + \frac{\e^{2} - \ebg^{2}}{r^{2}} \right) \gmm + (1-\mubg) \left( 1- \frac{\gmmbg}{\gmm}\right) \gmm+ \frac{2(\varpibg - \frac{\ebg^{2}}{r})}{r^{2}} \gmm \, \rdf  + \calE[\rd_{u} \rdf],
\end{equation}
where
\begin{equation*}
	\calE[\rd_{u} \rdf] = \left( - 2 \varpibg (r^{-1} - \rbg^{-1} + r^{-2} \rdf) + \ebg^{2} (r^{-2} - \rbg^{-2} + 2 r^{-3} \rdf) \right) \gmm.
\end{equation*}

Given any $(u, v) \in \calD_{(t_{B})} \cap \set{r \geq 20}$, we will control $\rdf$ by integrating the ODE \eqref{eq:durdf} from $(u_{R_{0}}(v), v)$ to $(u, v)$, where $u_{R_{0}}(v)$ denotes the $u$-coordinate such that $r(u_{R_{0}}(v), v) = R_{0}$, so that $(u_{R_{0}}(v), v) \in \gmm_{R_{0}}$. The existence and uniqueness of such a $u_{R_{0}}(v)$ is guaranteed by monotonicity of $r$.  At $(u_{R_{0}}(v), v)$, by \eqref{eq:df-const-r:rdf} we have the initial estimate
\begin{equation*}
\abs{\rdf}(u_{R_{0}}(v), v) \leq C A\left(\eps + \dlt_{(t_{B})}^{2} \right).
\end{equation*}

For the forcing term, which is the main contribution, we have
\begin{equation*}
\Abs{\int_{u_{R_{0}}(v)}^{u} \Abs{\left( - \frac{2 \varpidf}{r} + \frac{\e^{2} - \ebg^{2}}{r^{2}} \right) + (1-\mubg) \left( 1- \frac{\gmmbg}{\gmm}\right)} \Abs{\gmm}(u', v) \, \ud u'}
\leq C A \max \set{\log \Lmb, \log r(u, v)} \left( \eps + \dlt_{(t_{B})}^{2} \right).
\end{equation*}
This estimate can be proved by first using \eqref{eq:df-large-r:gmm-dur} to make a change of variables $u' \mapsto r(u')$, then using \eqref{eq:eps-tB} and the bootstrap assumptions \eqref{eq:en-btstrp-m}, \eqref{eq:en-btstrp-gmm}. 

The linear term in \eqref{eq:durdf} (the third term on the RHS) can be dealt with Gr\"onwall's inequality, since the integral $\int_{u_{R_{0}}(v)}^{u} \frac{2 (\varpibg - \frac{\ebg^{2})}{r}}{r^{2}} \abs{\gmm} \, \ud u'$ is uniformly bounded. 

For the remaining term $\calE [\rd_{u} \rdf]$, we employ a bootstrap argument in $u$ (starting from $u = u_{R_{0}}(v)$), where the bootstrap assumption is $ \sup_{u' \in [u_{R_{0}}(v), u]} r^{-1}\abs{\rdf} \leq \veps$ (where $0 < \veps < 1/2$ is to be specified below).
Using Lemma~\ref{lem:r-alp-df}, we have
\begin{equation}\label{eq:durdf-err}
	\Abs{\calE[\rd_{u} \rdf]} \leq C \left( \frac{\varpibg}{r^{3}} + \frac{\ebg^{2}}{r^{4}} \right) \abs{\gmm} \abs{\rdf}^{2}.
\end{equation}
Then by \eqref{eq:df-large-r:gmm-dur}, we may estimate
\begin{equation*}
	\Abs{\int_{u_{R_{0}(v)}}^{u} \Abs{\calE[\rd_{u} \rdf]}(u', v)  \, \ud u'} \leq C \sup_{u' \in [u_{R_{0}}(v), u]} r^{-1} \abs{\rdf}^{2}(u',v).
\end{equation*}
where we abuse the notation a bit and write $u' \in [u_{R_{0}}(v), u]$ for points $u'$ between $u_{R_{0}}(v)$ and $u$ (despite the fact that $u$ may be smaller than $u_{R_{0}}(v)$).

In sum, we arrive at
\begin{equation*}
	\abs{\rdf}(u, v) \leq C A \max \set{\log \Lmb, \log r(u, v)} \left( \eps + \dlt_{(t_{B})}^{2} \right) + C \sup_{u' \in [u_{R_{0}}(v), u]} r^{-1}\abs{\rdf}^{2}(u',v).
\end{equation*}
Choosing $\veps$ in the bootstrap assumption small enough, we may absorb the last term into the LHS. Then bootstrap assumption can be improved by taking $A \log \Lmb (\eps+ \dlt_{(t_{B})}^{2})$ sufficiently small; moreover, it holds initially thanks to \eqref{eq:df-const-r:rdf}. The desired estimate \eqref{eq:df-large-r:rdf} follows.

\pfstep{Step~2: Proof of \eqref{eq:df-large-r:durdf}}
The bound \eqref{eq:df-large-r:durdf} is obtained using the same equation \eqref{eq:durdf}. We proceed similarly as before, except that we use the pointwise bound \eqref{eq:gmm-large-r} in place of \eqref{eq:df-large-r:gmm-dur}, and insert \eqref{eq:df-large-r:rdf} to estimate $\abs{\rdf}$. We remark that the factor $\log \Lmb$ is inherited from \eqref{eq:df-large-r:rdf} through the linear term in $\rdf$. 

\pfstep{Step~3: Proofs of \eqref{eq:df-large-r:dvrdf}--\eqref{eq:df-large-r:dvr-b}}
Next, we turn to the bounds concerning $\dvrdf$. As in the case of $\durdf$, $\rd_{v} \rdf = \dvrdf$ obeys the following equation:
\begin{equation} \label{eq:dvrdf-pre}
	\dvrdf = \rd_{v} \rdf = \left( - \frac{2 \varpidf}{r} + \frac{\e^{2} - \ebg^{2}}{r^{2}} \right) \kpp + (1-\mubg) \left( 1- \frac{\kppbg}{\kpp}\right) \kpp+ \frac{2(\varpibg - \frac{\ebg^{2}}{r})}{r^{2}} \kpp \, \rdf  + \calE[\rd_{v} \rdf],
\end{equation}
where
\begin{equation}\label{eq:dvrdf-pre-err}
	\Abs{\calE[\rd_{v} \rdf]} \leq C \left( \frac{\varpibg}{r^{3}} + \frac{\ebg^{2}}{r^{4}} \right) \abs{\kpp} \abs{\rdf}^{2}.
\end{equation}
The bound \eqref{eq:df-large-r:dvrdf} follows by a similar argument as before, where we use \eqref{eq:en-btstrp-kpp} and \eqref{eq:kpp-bnd} instead of \eqref{eq:en-btstrp-gmm} and \eqref{eq:gmm-large-r}, respectively. 

Observe that, in contrast to \eqref{eq:df-large-r:durdf}, we do \emph{not} gain the factor $r^{-1}$, since $\left(1 - \frac{\kppbg}{\kpp}\right)$ is only \emph{bounded} according to our bootstrap assumption \eqref{eq:en-btstrp-kpp}. We separate out the bad term by defining
\begin{equation*}
	\dvrdf_{b} = (1-\mubg) \left(1 - \frac{\kppbg}{\kpp}\right) \kpp
\end{equation*}
and let $\dvrdf_{g} = \dvrdf - \dvrdf_{b}$. Then by the same proof as before, \eqref{eq:df-large-r:dvr-g} follows. On the other hand, \eqref{eq:df-large-r:dvr-b} follows from the simple pointwise inequality (note $|e^\vartheta-1|\leq |\vartheta|e^{|\vartheta|}$)
\begin{equation*}
	\Abs{(1-\mubg) \left(1 - \frac{\kppbg}{\kpp}\right) \kpp}
	\leq \abs{1-\mubg} \abs{\log \kppbg - \log \kpp} e^{\abs{\log \kppbg - \log \kpp}} \abs{\kpp},
\end{equation*}
and the bounds \eqref{eq:en-btstrp-kpp}, \eqref{eq:en-btstrp-kpp-int}, \eqref{eq:mubg-bnd} and \eqref{eq:kpp-bnd}.

\pfstep{Step~4: Proof of \eqref{eq:df-large-r:v-u} and \eqref{eq:df-large-r:v-u-R0}}
These bounds are proved in a similar manner as \eqref{eq:df-const-r:rdf}. In what follows, we only handle the case \eqref{eq:df-large-r:v-u} and leave the similar proof of \eqref{eq:df-large-r:v-u-R0} to the reader. 

Recall the vector field $T$ from \eqref{eq:vf-T-0}. As before, we denote by $\gmm_{20}(t)$ the parametrization of $\gmm_{20}$ so that $\dot{\gmm_{20}}(t) = T(\gmm_{20}(t))$ and $\gmm_{20}(0) \in \uC_{in}$. Note that
\begin{equation*}
	- T(u - v - C_{\gmm_{20}}) = \frac{1}{\gmm} + \frac{1}{\kpp}
	= \left(\frac{\gmmbg}{\gmm} -1 \right) \frac{1}{\gmmbg} + \left(\frac{\kppbg}{\kpp} -1 \right) \frac{1}{\kppbg} + \left(1 + \frac{1}{\gmmbg}\right) + \left(\frac{1}{\kppbg} - 1\right).
\end{equation*}
By Proposition~\ref{prop:bg-geom}, \eqref{eq:en-btstrp-kpp}, \eqref{eq:en-btstrp-gmm} and \eqref{eq:en-btstrp-int-20}, it follows that
\begin{align*}
	\sup_{\gmm_{20} \cap \calD_{(t_{B})}} \abs{u - v - C_{\gmm_{20}}}
	\leq & \int_{0}^{t_{B, 20}} \abs{T (u - v - C_{\gmm_{20}})} (u \circ \gmm_{20}, v \circ \gmm_{20})(t) \, \ud t \\ 
	\leq & C A \left( \eps + \dlt_{(t_{B})}^{2} \right) e^{C A (\eps + \dlt_{(t_{B})}^{2})} + C e^{C \dlt^{2}} \int_{0}^{t_{B, 20}} \abs{\log (-\gmmbg)} + \abs{\log \kppbg} \, \ud t.
\end{align*}
Since $T(u) = - \gmm^{-1}$ and $T(v) = \kpp^{-1}$ are bounded from above and below on $\gmm_{20}$, it follow from \eqref{eq:gmmbg-decay} and \eqref{eq:kppbg-decay} that
\begin{equation*}
\int_{0}^{t_{B, 20}} \abs{\log (-\gmmbg)} + \abs{\log \kppbg} \, \ud t \leq C \dlt^{2},
\end{equation*}
which completes the proof. \qedhere
\end{proof}

As a quick consequence of \eqref{eq:df-large-r:v-u} in the preceding lemma, we obtain the following relation between $v-u$ and $r$ in the region $r \geq 20$.
\begin{corollary} \label{cor:v-u-r}
We have
\begin{equation} \label{eq:v-u-r}
	\frac{1}{4} (v - u + C_{\gmm_{20}} + 20) \leq r(u, v) \leq 2 (v - u + C_{\gmm_{20}} + 20) \quad \hbox{ in } \calD_{(t_{B})} \cap \set{r \geq 20}.
\end{equation}
Moreover, it holds that
\begin{gather} 
C^{-1} \Lmb \leq C_{\gmm_{20}} \leq  C \Lmb, \label{eq:C-gmm-Lmb} \\
\abs{C_{\gmm_{20}} - \overline{C}_{\gmm_{30}}} \leq C. \label{eq:C-gmm}
\end{gather}
\end{corollary}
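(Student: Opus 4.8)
We must show \eqref{eq:v-u-r}, as well as \eqref{eq:C-gmm-Lmb} and \eqref{eq:C-gmm}, given the bootstrap hypotheses of Proposition~\ref{prop:en-btstrp} and the estimates of Lemma~\ref{lem:df-large-r}.

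The plan is to derive \eqref{eq:v-u-r} directly from the geometry of the perturbed solution in the large-$r$ region, using that $-\dur = \rd_u(-r)$ and $\dvr = \rd_v r$ are both comparable to $1$ there. Precisely: fix $(u,v)\in\calD_{(t_B)}\cap\{r\geq 20\}$. By monotonicity of $r$ and the fact that $\gmm_{20}$ is timelike (which follows from \eqref{eq:gmm-large-r} and \eqref{eq:dvr-large-r}, i.e.\ $\dur<0$, $\dvr>0$), there is a unique point $(u_{20}, v)$ on $\gmm_{20}$ with the same $v$-coordinate, and a unique point $(u, v_{20})$ on $\gmm_{20}$ with the same $u$-coordinate; moreover $u_{20}\leq u$ and $v_{20}\leq v$. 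First I would write $r(u,v) - 20 = r(u,v) - r(u,v_{20}) = \int_{v_{20}}^{v} \dvr(u,v')\,\ud v'$ and use \eqref{eq:weak-st:dvr-large-r}/\eqref{eq:dur-large-r} (the bound $\tfrac14\leq -\dur\leq 1$ on $\{r\geq 20\}$ together with $\tfrac12\leq 1-\mu\leq 1$, giving $\tfrac12\leq\dvr\leq 2$) to get $\tfrac12(v - v_{20})\leq r(u,v)-20 \leq 2(v-v_{20})$; similarly, integrating $-\dur$ in $u$ from $u_{20}$ to $u$, $\tfrac14(u-u_{20})\leq r(u,v)-20\leq (u-u_{20})$. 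Then on $\gmm_{20}$ we have $u_{20} - v_{20} = (u_{20}-v) + (v - v_{20})$ and, crucially, $|u_{20} - v_{20} - C_{\gmm_{20}}|$ and the corresponding quantity for $(u,v_{20})$ are both controlled by \eqref{eq:df-large-r:v-u}, which says $\sup_{\gmm_{20}\cap\calD_{(t_B)}}|u - v - C_{\gmm_{20}}|\leq CA(\eps+\dlt_{(t_B)}^2) + C\dlt^2$. Since by the conventions at the end of Section~\ref{subsec:btstrp} both $A(\eps+\dlt_{(t_B)}^2)$ and $\dlt^2$ can be taken $\leq \tfrac{1}{100}$, this supremum is $\leq \tfrac{1}{10}$, say. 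Combining the three relations — $v - u = (v - v_{20}) - (u - u_{20}) + (u_{20}-v_{20})$ is not quite how I'd organize it; rather, I would use $r(u,v)-20 \sim v - v_{20}$, $r(u,v)-20 \sim u - u_{20}$, and $u_{20} - v_{20} \approx -C_{\gmm_{20}}$ (with error $\leq \tfrac15$) to conclude that $v - u + C_{\gmm_{20}} + 20$ and $r(u,v)$ agree up to a factor of $4$, after absorbing the $O(\tfrac15)$ error and the additive $20$ into the constants (here one uses $r\geq 20$ so the additive errors are harmless). This gives \eqref{eq:v-u-r}.

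For \eqref{eq:C-gmm-Lmb}, the point is that $C_{\gmm_{20}}$ was defined by $r(C_{\gmm_{20}}+1,1)=20$, i.e.\ it measures the $u$-distance from the initial point $(1,1)$ (where $r = R_0$) to the curve $\gmm_{20}$ along $\uC_{in}$... no — more carefully, $C_{\gmm_{20}}+1$ is the $u$-value at which the outgoing curve $C_{out}=\{v=1\}$ meets $\gmm_{20}$. Wait: with $v_{(t_B)}(1)=1$ on $C_{out}$... Actually $C_{\gmm_{20}}$ is defined by $r(C_{\gmm_{20}}+1, 1)=20$, so it is the $u$-extent along $\uC_{in}=\{v=1\}$ from $u=1$ to where $r$ drops to $20$. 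Along $\uC_{in}$ we have $\rd_u r = \dur$; in the large-$r$ part $-\dur$ is $O(1)$ by \eqref{eq:dur-large-r}, and since $R_0 = r(1,1)$ with $|R_0 - \Lmb|\leq\eps$ (by \eqref{eq:R0-Lmb}), integrating $-\dur$ over $\{v=1, 1\leq u\leq C_{\gmm_{20}}+1, r\geq 20\}$ gives $R_0 - 20 = \int (-\dur)\,\ud u$, so this sub-interval has $u$-length comparable to $R_0 - 20 \sim \Lmb$. It remains to bound the part of $\uC_{in}$ where $r\leq 20$, but that is irrelevant for \eqref{eq:C-gmm-Lmb} since we only go down to $r=20$; hence $C_{\gmm_{20}} \sim \Lmb$, using $\tfrac14(C_{\gmm_{20}}) \leq R_0 - 20 \leq C_{\gmm_{20}}$ and $R_0 \sim \Lmb$ (as $\Lmb>100$). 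For \eqref{eq:C-gmm}, $\overline{C}_{\gmm_{30}}$ is defined analogously by $\rbg(1+\overline{C}_{\gmm_{30}},1)=30$ for the background; one compares $C_{\gmm_{20}}$ and $\overline{C}_{\gmm_{30}}$ by noting both are $u$-lengths along $\uC_{in}$ (on which both solutions live), with $r$ and $\rbg$ close by \eqref{eq:df-large-r:rdf}–\eqref{eq:df-large-r:rdf} and $-\dur$, $-\durbg$ all $O(1)$ away from the horizon (\eqref{eq:dur-large-r}, \eqref{eq:durbg-bnd}); the difference $|C_{\gmm_{20}} - \overline{C}_{\gmm_{30}}|$ is then the $u$-length of the region between $\{r=20\}$ and $\{\rbg=30\}$ on $\uC_{in}$, which is $O(1)$ since both $r$ and $\rbg$ there are $O(1)$ and $-\dur, -\durbg \geq \tfrac14$.

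The main obstacle I anticipate is purely bookkeeping: being careful that the constant $4$ in \eqref{eq:v-u-r} is actually achievable, i.e.\ that all the additive $O(\tfrac15)$-type errors (from \eqref{eq:df-large-r:v-u}) and the additive constant $20$ genuinely get absorbed without degrading the multiplicative constant beyond $4$. This forces one to track the inequalities tightly — e.g.\ using $\tfrac12\leq\dvr\leq 2$ rather than cruder bounds, and exploiting $r\geq 20$ to dominate the $\tfrac15$ errors. One should also double-check the orientation conventions (whether $C_{\gmm_{20}}$ is positive, which it is since $r$ decreases in $u$ along $\uC_{in}$ and $R_0 > 20$) so that the signs in $v - u + C_{\gmm_{20}} + 20$ come out right. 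None of this is conceptually hard; it is the kind of estimate that one writes out in half a page once the comparison points $(u_{20},v)$, $(u,v_{20})$ on $\gmm_{20}$ are introduced and \eqref{eq:df-large-r:v-u}, \eqref{eq:R0-Lmb}, \eqref{eq:dur-large-r} and \eqref{eq:weak-st:dvr-large-r} are invoked.
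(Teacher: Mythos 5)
Your overall strategy --- using $\gmm_{20}$ as the reference curve, integrating $\rd_{u} r$ and $\rd_{v} r$, and invoking \eqref{eq:df-large-r:v-u} to control $u-v$ on $\gmm_{20}$ --- is the paper's, and you correctly identify the needed inputs: \eqref{eq:df-large-r:v-u}, \eqref{eq:R0-Lmb}, \eqref{eq:dur-large-r}, and the bound $\tfrac12\leq\dvr\leq 2$ in $\set{r\geq 20}$ (which, inside this bootstrap argument, one should cite from \eqref{eq:kpp-bnd} and \eqref{eq:mu-large-r} rather than from \eqref{eq:weak-st:dvr-large-r}, since the latter is a conclusion of the theorem being proved). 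However, there is a genuine gap: the claimed existence of both comparison points $(u_{20},v)$ and $(u,v_{20})$ on $\gmm_{20}$ is false. When $u < 1+C_{\gmm_{20}}$ --- a $\Lmb$-sized range of $u$-values --- the outgoing curve $C_{u}$ never meets $\gmm_{20}$, because $r(u,1)\geq 20$ and $r$ is increasing in $v$, so $r>20$ on all of $C_{u}$. Likewise $(u_{20},v)$ need not lie in $\calD_{(t_B)}$ once $\gmm_{20}$ has exited the bootstrap box through $\set{u=u_{(t_B)f}}$, so for small $u$ and large $v$ neither reference point is available and your relation $r(u,v)-20\sim v-v_{20}$ has no meaning there. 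Separately, the sign $u_{20}\leq u$ is backwards ($r(u_{20},v)=20\leq r(u,v)$ and $r$ decreasing in $u$ give $u_{20}\geq u$), and the intermediate assertion $u_{20}-v_{20}\approx -C_{\gmm_{20}}$ does not follow from $(u_{20},v),\ (u,v_{20})\in\gmm_{20}$ --- what \eqref{eq:df-large-r:v-u} gives directly is $u-v_{20}\approx C_{\gmm_{20}}$, which is the relation you actually want.

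The paper resolves the existence issue with one formula valid in both regimes:
$r(u,v)-20 = \int_{\min\set{1+C_{\gmm_{20}},\,u}}^{1+C_{\gmm_{20}}}(-\dur)(u',1)\,\ud u' + \int_{v'_{20}(u)}^{v}\dvr(u,v')\,\ud v'$,
where $v'_{20}(u)=v_{20}(u)$ when $u\geq 1+C_{\gmm_{20}}$ and $v'_{20}(u)=1$ otherwise; that is, when $C_{u}$ does not reach $\gmm_{20}$, one first travels along $\uC_{in}$. With a single well-defined reference point in every case, the constants close exactly as you expected. Your proofs of \eqref{eq:C-gmm-Lmb} and \eqref{eq:C-gmm} are correct and equivalent to the paper's, which simply substitutes $(u,v)=(1,1)$ and $(u,v)=(1+\overline{C}_{\gmm_{30}},1)$ into \eqref{eq:v-u-r} together with \eqref{eq:R0-Lmb} and \eqref{eq:df-large-r:rdf}.
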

\begin{proof}
Let $v'_{20}(u) = v_{20}(u)$ (characterized by $r(u, v_{20}(u)) = 20$) if $u \geq 1 + C_{\gmm_{20}}$ and $v'_{20}(u) = 1$ if $u \leq 1 + C_{\gmm_{20}}$. Note that
\begin{equation*}
	r(u, v) - 20= \int_{\min \set{1+C_{\gmm_{20}}, u}}^{1+C_{\gmm_{20}}} -\dur (u', 1) \, \ud u' + \int_{v'_{20}(u)}^{v} \dvr (u, v') \, \ud v'
\end{equation*}
By \eqref{eq:kpp-bnd}, \eqref{eq:mu-large-r} and \eqref{eq:dur-large-r}, we have $\frac{1}{2} \leq \dvr \leq 2$ and $\frac{1}{4} \leq - \dur \leq 1$ in $\set{r \geq 20}$. Moreover, by \eqref{eq:df-large-r:v-u} and taking $\eps$, $\de$ sufficiently small, we have $\abs{u - v_{20}(u) - C_{\gmm_{20}}} \leq 1$ when $u \geq 1 + C_{\gmm_{20}}$. Then \eqref{eq:v-u-r} follows by combining these facts. 

By \eqref{eq:R0-Lmb} and \eqref{eq:v-u-r} with $u = v = 1$, we obtain \eqref{eq:C-gmm-Lmb}. Moreover, \eqref{eq:C-gmm} follows by plugging in $v = 1$, $u = 1 + \overline{C}_{\gmm_{30}}$ (where $\rbg(u, v) = 30$) and using \eqref{eq:df-large-r:rdf}.  \qedhere
\end{proof}

Another corollary of Lemma~\ref{lem:df-large-r} is that we can improve the bootstrap assumption \eqref{eq:en-btstrp-R-Lmb}:

\begin{corollary}\label{cor:rbg-uf-vf-Lmb}
We have
$$|\rbg(u_{(t_B)f}, v_{(t_B)f})-\Lambda|\leq \f 12.$$
\end{corollary}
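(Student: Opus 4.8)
The plan is to read off the bound directly from the pointwise control of $\rdf = r - \rbg$ along $\gmm_{R_{0}}$ that has already been established. The key observation is that the future corner $\gmm_{R_{0}}(t_{B}) = (u_{(t_{B}) f}, v_{(t_{B}) f})$ of the bootstrap domain lies on $\gmm_{R_{0}} \cap \calD_{(t_{B})}$, and that $r = R_{0}$ there by definition of $\gmm_{R_{0}}$. Hence Lemma~\ref{lem:df-const-r} applied at this point gives
$$
	|\rbg(u_{(t_{B}) f}, v_{(t_{B}) f}) - R_{0}| = |\rdf|(u_{(t_{B}) f}, v_{(t_{B}) f}) \leq C A \left( \eps + \dlt_{(t_{B})}^{2} \right).
$$
It then remains only to compare $R_{0}$ with $\Lmb$, which is precisely \eqref{eq:R0-Lmb} in Lemma~\ref{lem:geom-prelim}: $|R_{0} - \Lmb| \leq \eps$ (recall that $\rbg(1,1) = \Lmb \rbg_{\EH} = \Lmb$ since the background is $(\omg, \dlt, \Lmb)$-admissible and $\rbg_{\EH} = 1$, while $R_{0} = r(1,1)$). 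Combining the two inequalities,
$$
	|\rbg(u_{(t_{B}) f}, v_{(t_{B}) f}) - \Lmb| \leq \eps + C A \left( \eps + \dlt_{(t_{B})}^{2} \right).
$$

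To finish, I would invoke the smallness conventions fixed at the end of Section~\ref{subsec:btstrp}. By the bootstrap assumption \eqref{eq:en-btstrp-eps} we have $\eps = \eps_{(t_{B})} \leq 10 A (\eps_{0} + \dlt_{(t_{B})}^{4})$, and by Proposition~\ref{prop:dlt-tB-bnd} we have $\dlt_{(t_{B})} \leq C \dlt \leq C \Lmb^{-100\eta_{0}} \dlt_{0} \leq C \dlt_{0}$. Thus the right-hand side above is bounded by a constant multiple of $A^{2}(\eps_{0} + \dlt_{0}^{2})$, which is $\leq \tfrac{1}{2}$ once $\dlt_{0}$ is taken small enough depending on $A$ (and then $\eps_{0}$ small enough depending on $A$ and $\dlt_{0}$) — exactly the regime under which Proposition~\ref{prop:en-btstrp} is being proved. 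This yields the claimed estimate, which is just the restatement \eqref{eq:en-btstrp-R-Lmb:imp}.

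The argument has no substantive obstacle; the only point requiring attention is the apparent circularity, namely that Lemma~\ref{lem:df-const-r} is itself derived under the bootstrap assumptions \eqref{eq:en-btstrp-eps}--\eqref{eq:en-btstrp-R-Lmb}, which include the very assumption \eqref{eq:en-btstrp-R-Lmb} we are improving. This is consistent with the bootstrap scheme: no estimate is used in a form that presupposes its own improvement, and the corollary is precisely the improved version \eqref{eq:en-btstrp-R-Lmb:imp} within the statement of Proposition~\ref{prop:en-btstrp}. One should also verify that all geometric inputs to Lemma~\ref{lem:df-const-r} (the normalization $\rbg_{\EH} = 1$, the bounds \eqref{eq:mu-large-r}, the pointwise assumptions \eqref{eq:en-btstrp-kpp}, \eqref{eq:en-btstrp-gmm}, and the integral assumption \eqref{eq:en-btstrp-int}) are indeed available at $\gmm_{R_{0}}(t_{B})$; they are, since $R_{0} = r(1,1)$ is comparable to $\Lmb > 100$ by \eqref{eq:R0-Lmb}, so $\gmm_{R_{0}}$ sits well inside the large-$r$ region where these bounds hold.
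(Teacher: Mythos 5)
Your proof is correct and follows essentially the same route as the paper: the ingredients are $r(u_{(t_B)f},v_{(t_B)f})=R_0$ by definition, $|R_0-\Lmb|\leq\eps$ from \eqref{eq:R0-Lmb}, a pointwise bound on $\rdf$, and the smallness conventions. The only cosmetic difference is that you invoke Lemma~\ref{lem:df-const-r} directly (the bound on $\gmm_{R_0}$, with no $\log\Lmb$ factor) where the paper cites \eqref{eq:df-large-r:rdf} from Lemma~\ref{lem:df-large-r}; since the point in question lies exactly on $\gmm_{R_0}$ this is a harmless simplification, and your remarks about the bootstrap consistency and the availability of the geometric inputs are accurate.
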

\begin{proof}
This follows from \eqref{eq:R0-Lmb}, \eqref{eq:df-large-r:rdf} and that by definition $r(u_{(t_B)f}, v_{(t_B)f})=R_0$. \qedhere
\end{proof}

Using the bounds in Lemma~\ref{lem:df-large-r} on the curve $\gmm_{20}$ (on which $r = 20$) as a starting point, we now consider the region $\set{r \leq 20}$. Here we need to crucially exploit the red-shift effect of the event horizon.
\begin{lemma} \label{lem:df-small-r}
In the region $\calD_{(t_{B})} \cap \set{r \leq 20}$, the following difference bounds hold:
\begin{align} 
	\sup_{\calD_{(t_{B})} \cap \set{r \leq 20}} \abs{\rdf} 
\leq & 	C A \log \Lmb \left( \eps + \dlt_{(t_{B})}^{2}\right),				\label{eq:df-small-r:rdf} \\
	\sup_{\calD_{(t_{B})} \cap \set{r \leq 20}} \abs{\dvrdf} 
\leq & 	C A \log \Lmb \left( \eps + \dlt_{(t_{B})}^{2} \right),				\label{eq:df-small-r:dvrdf} \\
	\sup_{\calD_{(t_{B})} \cap \set{r \leq 20}} (1+\abs{u - v - C_{\gmm_{20}}})^{-1} \abs{\log (-\dur) - \log (-\durbg)} 
\leq & 	C A \log \Lmb  \left( \eps + \dlt_{(t_{B})}^{2}\right).				\label{eq:df-small-r:logdurdf} 
\end{align}
Finally, there exists $c_{(\dur)} > 0$, which depends on $c_{(\durbg)}$ in \eqref{eq:durbg-bnd} and $|\ebg|$, $\eta_0$, $\omg$, such that
\begin{equation} \label{eq:df-small-r:dur} 
	\sup_{\calD_{(t_{B})} \cap \set{r \leq 20}} e^{c_{(\dur)} (u - v - C_{\gmm_{20}})} \abs{\dur} 
\leq  	C. 					
\end{equation}
\end{lemma}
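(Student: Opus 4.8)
The plan is to analyze the region $\calD_{(t_B)} \cap \set{r \leq 20}$ by integrating the relevant transport equations in the $u$-direction, starting from the curve $\gmm_{20} = \set{r = 20}$, where the bounds from Lemma~\ref{lem:df-large-r} (in particular \eqref{eq:df-large-r:rdf}, \eqref{eq:df-large-r:dvrdf}, \eqref{eq:df-large-r:v-u}) provide initial data. The key structural feature to exploit is the \emph{red-shift} along the event horizon: in this region $-\durbg$ decays exponentially in $(u - v - \overline{C}_{\gmm_{30}})$ by \eqref{eq:durbg-bnd}, and one expects the same for $-\dur$. First I would establish \eqref{eq:df-small-r:dur}: the equation $\rd_v \log(-\dur) = \frac{2(\varpi - \e^2/r)}{r^2} \kpp$ combined with $\varpi > 1/4$, \eqref{eq:e-bnd}, \eqref{eq:kpp-bnd} and a lower bound on $\varpi - \e^2/r$ (from $r_{\EH} > \varpi_{\EH}$ and subextremality, as in Step~4 of the proof of Corollary~\ref{cor:DR-small-r}) gives $\rd_v \log(-\dur) \geq b_{(\dur)} > 0$ for $r \leq 20$ for some constant $b_{(\dur)}$ depending on $|\ebg|$. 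Integrating from $\gmm_{20}$ inward (i.e., from $v = v_{20}(u)$ down to $v$, which corresponds to increasing $u - v$), and using \eqref{eq:dur-large-r} to bound $-\dur$ on $\gmm_{20}$ together with \eqref{eq:df-large-r:v-u} to convert $v_{20}(u) - v$ into $u - v - C_{\gmm_{20}}$ up to an $O(1)$ error, yields $-\dur(u,v) \leq C e^{-c_{(\dur)}(u - v - C_{\gmm_{20}})}$ with $c_{(\dur)} \leq b_{(\dur)}$.

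Next I would turn to \eqref{eq:df-small-r:logdurdf}. Taking the difference of $\rd_v \log(-\dur) = \frac{2(\varpi - \e^2/r)}{r^2}\kpp$ and the corresponding equation for $\durbg$, one gets $\rd_v(\log(-\dur) - \log(-\durbg))$ expressed in terms of $\varpidf$, $\rdf$, $\e^2 - \ebg^2$, and $\log\kpp - \log\kppbg$, with coefficients controlled by the pointwise bootstrap assumptions \eqref{eq:en-btstrp-m}, \eqref{eq:en-btstrp-kpp} and the bounds on $r$ from $r \geq 1 - C\dlt$ (a consequence of \eqref{eq:rbg-bnd} and \eqref{eq:df-large-r:rdf}). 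Integrating in $v$ from $\gmm_{20}$ (where $\abs{\log(-\dur) - \log(-\durbg)} \leq CA\log\Lmb(\eps + \dlt_{(t_B)}^2)$ follows from \eqref{eq:df-large-r:durdf} and \eqref{eq:dur-large-r}) over a $v$-interval of length $O(1 + \abs{u - v - C_{\gmm_{20}}})$ produces the factor $(1 + \abs{u - v - C_{\gmm_{20}}})$ on the right-hand side. From \eqref{eq:df-small-r:logdurdf} and \eqref{eq:durbg-bnd} one then reads off $\abs{\durdf} = \abs{\dur - \durbg} \leq \abs{-\durbg}\,\abs{e^{\log(-\dur) - \log(-\durbg)} - 1} \leq C\log\Lmb\, e^{-\frac12 c_{(\dur)}(u - v - C_{\gmm_{20}})}(\eps + \dlt_{(t_B)}^2)$ after absorbing the polynomial factor into the exponential (using $\abs{e^\vartheta - 1} \leq \abs{\vartheta}e^{\abs{\vartheta}}$ and $\eta_0$-room); this is the statement needed in \eqref{eq:en-btstrp:durdf}.

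Finally, \eqref{eq:df-small-r:rdf} and \eqref{eq:df-small-r:dvrdf} follow by integrating transport equations for $\rdf$ and $\dvrdf$ inward from $\gmm_{20}$. For $\rdf$, integrate $\rd_u \rdf = \durdf$ in $u$: on a constant-$v$ slice, $\abs{\rdf}(u,v) \leq \abs{\rdf}(u_{20}(v), v) + \int_{u_{20}(v)}^{u} \abs{\durdf}(u', v)\,\ud u'$, and since $\durdf$ is exponentially decaying in $u' - v - C_{\gmm_{20}}$ by the bound just obtained, the integral converges and is bounded by $CA\log\Lmb(\eps + \dlt_{(t_B)}^2)$; the initial term is controlled by \eqref{eq:df-large-r:rdf} on $\gmm_{20}$. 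For $\dvrdf$, use equation \eqref{eq:dvrdf-pre} (with $\calE[\rd_v\rdf]$ bounded by \eqref{eq:dvrdf-pre-err}), integrate $\rd_u \dvrdf$ in $u$ from $\gmm_{20}$ (so the linear-in-$\rdf$ term and the $\varpidf$, $\log\kpp - \log\kppbg$ terms are handled via \eqref{eq:df-small-r:rdf}, \eqref{eq:en-btstrp-m}, \eqref{eq:en-btstrp-kpp} and $\kpp \in [1,2]$), and apply Gr\"onwall as the relevant integrals of coefficients over $\uC_v \cap \set{r \leq 20}$ are uniformly bounded; the initial data on $\gmm_{20}$ comes from \eqref{eq:df-large-r:dvrdf}. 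The main obstacle I anticipate is bookkeeping the red-shift carefully enough in \eqref{eq:df-small-r:logdurdf}: one must ensure the $v$-interval of integration inward from $\gmm_{20}$ is indeed comparable to $1 + \abs{u - v - C_{\gmm_{20}}}$ (using \eqref{eq:df-large-r:v-u} and the bounds $\frac12 \leq \dvr \leq 2$ in $\set{10 \leq r \leq 20}$), and that the constant $c_{(\dur)}$ can be fixed before $\eps_{(t_B)}$, $\dlt_0$ are taken small, so that no circularity arises with the bootstrap.
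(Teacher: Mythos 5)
Your proposal reverses the paper's order of proof, and this reversal creates a circularity that cannot be repaired without essentially going back to the paper's route. Your Steps~1 and 2 (establishing \eqref{eq:df-small-r:dur} and \eqref{eq:df-small-r:logdurdf}) both require quantitative control of $\rdf$ in the region $\set{r \leq 20}$. Specifically: in Step~1 you need a strictly positive lower bound on $\frac{2(\varpi-\e^2/r)}{r^2}$ for the \emph{perturbed} solution, but since $\varpi \geq 1/4$ and $|\e|\leq 1$, the quantity $\varpi - \e^2/r$ is only bounded below if $r$ is bounded below, and a lower bound $r \geq 1/2$ in $\set{r\leq 20}$ is precisely Corollary~\ref{cor:small-r}, which is derived from \eqref{eq:df-small-r:rdf}. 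The bound $r \geq 1-C\dlt$ you cite follows from \eqref{eq:rbg-bnd} and \eqref{eq:df-large-r:rdf}, but \eqref{eq:df-large-r:rdf} holds only in $\set{r \geq 20}$. In Step~2, the transport equation for $\log(-\dur)-\log(-\durbg)$ has forcing terms like $2\varpibg\kppbg(r^{-2}-\rbg^{-2})$ which are \emph{linear} in $\rdf$ and cannot be estimated without the $\rdf$ bound. You flag a circularity concern at the end, but you misdiagnose it: the issue is not fixing $c_{(\dur)}$ before $\eps,\dlt_0$, it is that Steps~1 and 2 sit downstream of \eqref{eq:df-small-r:rdf} in the logical order.

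One might try to patch this by running a bootstrap on $\abs{\rdf}\leq\veps$ in $\set{r\leq 20}$ before your Step~1. But that bootstrap does not close: because the $\rdf$-dependence in the $\log(-\dur)-\log(-\durbg)$ equation is linear, the output is of the form $\abs{\rdf}\leq C A\log\Lmb(\eps+\dlt_{(t_B)}^2) + C\veps$ with a universal $C$ that is typically $>1$ (it scales like $\overline{c}_{\EH}^{-2}$ coming from the integral $\int_0^\infty(1+L)e^{-\overline{c}_{\EH}L}\,\ud L$), so the $C\veps$ term cannot be absorbed. The paper avoids both problems by proving \eqref{eq:df-small-r:rdf} \emph{first}, working directly with the $\rd_v\rdf$ equation \eqref{eq:dvrdf}, rewritten so that the linear-in-$\rdf$ coefficient is the \emph{background} red-shift quantity $\frac{2(\varpibg-\ebg^2/\rbg)}{\rbg^2}\kpp$. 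This coefficient is positive unconditionally (by background subextremality, \eqref{eq:df-small-r:redshift}), so an integrating factor gives exponential decay and a direct bound on $\rdf$; the only bootstrap needed is for the error term $\overline{\calE}[\rd_v\rdf]$, which is \emph{quadratic} in $\rdf$ (bounded by $C\veps\,\abs{\rdf}$, not $C\veps$), so that bootstrap does close. The remaining estimates \eqref{eq:df-small-r:dvrdf}--\eqref{eq:df-small-r:dur} then follow in the same order you would subsequently take, using the now-available $\rdf$ control; in particular \eqref{eq:df-small-r:dur} is obtained from \eqref{eq:df-small-r:logdurdf} and \eqref{eq:durbg-bnd} rather than from a separate red-shift argument for the perturbed solution.
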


\begin{proof}
As before, the key estimate is \eqref{eq:df-small-r:rdf}, and all other bounds in this lemma rely on it. To ensure uniform boundedness in $u$ and $v$, we utilize the \emph{red-shift effect} enjoyed by $\rdf$ along the event horizon towards the past; see \eqref{eq:df-small-r:redshift} below, which is essentially positivity of surface gravity of the event horizon of the background solution.

\pfstep{Step~1: Proof of \eqref{eq:df-small-r:rdf}}
Our starting point is the equation for $\rd_{v} \rdf$. This equation was written down in \eqref{eq:dvrdf-pre}; however, in this case it turns out to be a little more convenient to re-express $\rd_{v} \rdf$ as follows:
\begin{equation} \label{eq:dvrdf}
	\rd_{v} \rdf = \left( - \frac{2 \varpidf}{r} + \frac{\e^{2} - \ebg^{2}}{r^{2}} \right) \kpp + (1-\mubg) \left( 1- \frac{\kppbg}{\kpp}\right) \kpp+ \frac{2(\varpibg - \frac{\ebg^{2}}{\rbg})}{\rbg^{2}} \kpp \, \rdf  + \overline{\calE}[\rd_{v} \rdf],
\end{equation}
where
\begin{equation*}
	\overline{\calE}[\rd_{v} \rdf] = \left( - 2 \varpibg (r^{-1} - \rbg^{-1} + \rbg^{-2} \rdf) + \ebg^{2} (r^{-2} - \rbg^{-2} + 2 \rbg^{-3} \rdf)  \right) \kpp.
\end{equation*}
In order to use the method of an integrating factor, we rewrite \eqref{eq:dvrdf} as
\begin{equation*}
	\rd_{v} (e^{I} \rdf)(u, v) = e^{I} F(u ,v), 
\end{equation*}
where
\begin{align*}
	\rd_{v} I = & - \frac{2 \varpibg - \frac{\ebg^{2}}{\rbg}}{\rbg^{2}} \kpp, \\
	F =& \left( - \frac{2 \varpidf}{r} + \frac{\e^{2} - \ebg^{2}}{r^{2}} \right) \kpp + (1-\mubg) \left( 1- \frac{\kppbg}{\kpp}\right) \kpp + \overline{\calE}[\rd_{v} \rdf].
\end{align*}
Therefore, $\rdf(u, v)$ can be represented by the formula
\begin{equation} \label{eq:dvrdf-rf}
	\rdf(u, v) 
	= e^{I(u, v_{20}(u)) - I(u, v)} \, \rdf(u, v_{20}(u)) - \int_{v}^{v_{20}(u)} e^{I(u, v') - I(u, v)} F(u, v') \, \ud v'.
\end{equation}

We first bound the exponent $I(u, v') - I(u, v)$, where we need to use the red-shift effect of the (background) event horizon. For $\dlt$ sufficiently small, we claim that there exists a constant $\overline{c}_{\EH} > 0$  such that
\begin{equation} \label{eq:df-small-r:redshift}
\frac{2 (\varpibg - \frac{\ebg^{2}}{\rbg})}{\rbg^{2}} \geq \overline{c}_{\EH} > 0 \quad \hbox{ in } \calD \cap \set{\rbg \leq 30}.
\end{equation}
We first verify this statement on the exact Reissner--Nordstr\"om spacetime (i.e., when $\dlt = 0$). By the relation $\rbg_{\EH} = \varpibg_{f} + \sqrt{\varpibg_{f}^{2} - \overline{\e}^{2}}$ (where $\rbg_{\EH} = 1$ by our normalization), when $r\in [r_{\EH}, 30 r_{\EH}]$, we have
\begin{equation*}
	\frac{2 (\varpibg - \frac{\ebg^{2}}{\rbg})}{\rbg^{2}} 
\geq \frac{2}{30^{3} \rbg_{\EH}^{3}} (\varpibg_{f} \rbg_{\EH} - \ebg^{2})
%= \frac{2}{30^{3} \rbg_{\EH}^{3}} \left( \varpibg_{f} + \sqrt{\varpibg_{f}^{2} - \ebg^{2}} \right) \sqrt{\varpibg_{f}^{2} - \ebg^{2}}
= \frac{2}{30^{3} \rbg_{\EH}^{2}} \sqrt{\varpibg_{f}^{2} - \ebg^{2}},
\end{equation*}
where the last term is positive thanks to the subextremality assumption in Definition~\ref{def:dlt-adm}. Then in view of Definition~\ref{def:dlt-adm}, the desired lower bound \eqref{eq:df-small-r:redshift} follows by taking $\dlt$ sufficiently small compared to $\sqrt{\varpibg_{f}^{2}-\abs{\ebg}^{2}}$.
We note that the constant $\overline{c}_{\EH}$ can be chosen to depend only on $1-\frac{\abs{\ebg}}{\varpibg_{f}}$.

By \eqref{eq:df-large-r:rdf}, \eqref{eq:en-btstrp-eps}, Proposition~\ref{prop:dlt-tB-bnd}, the smallness of $\de_0=\Lmb^{100\eta_0}\de$ and monotonicity of $\rbg$, note that 
\begin{equation*}
\calD_{(t_{B})} \cap \set{(u, v) : r(u, v) \leq 20} \subseteq \set{(u, v) : \rbg(u, v) \leq 30}.
\end{equation*}
Therefore, for $v' > v$ we have the one-sided bound
\begin{equation} \label{eq:dvrdf-exp}
	I(u, v') - I(u, v)
	= - \int_{v}^{v'} \frac{2 \varpibg - \frac{\ebg^{2}}{\rbg}}{\rbg^{2}}(u, v'') \, \ud v''
	\leq - \overline{c}_{\EH} (v' - v).
\end{equation}

We now estimate each term in \eqref{eq:dvrdf-rf}. For the first term on the RHS, we simply neglect the exponential and use \eqref{eq:df-large-r:rdf} to estimate
\begin{equation} \label{eq:df-small-r:rdf-ini}
\abs{e^{I(u, v_{20}(u)) - I(u, v)} \rdf(u, v_{20}(u))} \leq \abs{\rdf(u, v_{20}(u))} \leq C A \log \Lmb (\eps + \dlt_{(t_{B})}^{2}).
\end{equation}
For the contribution of the forcing term, we crucially use \eqref{eq:dvrdf-exp} to bound
\begin{equation*}
\Abs{\int_{v}^{v_{20}(u)} e^{I(u, v') - I(u, v)} F(u, v') \, \ud v'}
\leq \sup_{v' \in [v, v_{20}(u)]} \abs{F(u, v')}.
\end{equation*}
To treat the contribution of $F(u, v')$, which contains nonlinear terms in $\rdf$, we use a bootstrap argument. Our bootstrap assumption is $\sup_{v' \in [v, v_{20}(u)]} \abs{\rdf}(u, v') \leq \veps$ (where $0 < \veps < \frac{1}{100}$ is to be specified below). By Lemma~\ref{lem:r-alp-df}, but with the roles of $r$ and $\rbg$ reversed, we have
\begin{equation*}
	\Abs{\overline{\calE}[\rd_{v} \rdf]} \leq C \left( \frac{\varpibg}{\rbg^{3}} + \frac{\ebg^{2}}{\rbg^{4}} \right) \abs{\kpp} \abs{\rdf}^{2}.
\end{equation*}
Furthermore, by the bootstrap assumption and the lower bound $\rbg \geq 1- C \dlt$, we have (for small $\dlt_{0}$) $r = \rbg + \rdf \geq \frac{1}{2}$.
Then by $\abs{\ebg} < 1$, \eqref{eq:dlt-adm:m}, \eqref{eq:mubg-bnd}, \eqref{eq:kpp-bnd}, \eqref{eq:eps-tB}, \eqref{eq:en-btstrp-m} and \eqref{eq:en-btstrp-kpp}, we have
\begin{equation*}
\sup_{v' \in [v, v_{20}(u)]} \abs{F(u, v')} \leq C A \log \Lmb \left( \eps + \dlt_{(t_{B})}^{2} \right) + C \veps \sup_{v' \in [v, v_{20}(u)]} \abs{\rdf(u, v')}.
\end{equation*}

In sum, we arrive at the bound
\begin{equation*}
	\abs{\rdf(u, v)}
	\leq C A \log \Lmb \left( \eps + \dlt_{(t_{B})}^{2} \right) + C \veps \sup_{v' \in [v, v_{20}(u)]} \abs{\rdf(u, v')}.
\end{equation*}
Choosing $\veps > 0$ small enough, we can absorb the last term into the LHS. The bootstrap assumption can be improved by taking $A \log \Lmb (\eps + \dlt_{(t_{B})}^{2})$ sufficiently small; moreover, it holds initially thanks to \eqref{eq:df-small-r:rdf-ini}. The desired estimate \eqref{eq:df-small-r:rdf} now follows. 

\pfstep{Step~2: Proof of \eqref{eq:df-small-r:dvrdf}} 
This bound follows immediately by plugging in \eqref{eq:df-small-r:rdf} into \eqref{eq:dvrdf} and estimating the remaining terms as in the previous step. 

\pfstep{Step~3: Proof of \eqref{eq:df-small-r:logdurdf}} Recall that $\log (-\dur)$ obeys the equation
\begin{equation*}
	\rd_{v} (\log (-\dur)) = \frac{2 (\varpi - \frac{\e^{2}}{r})}{r^{2}} \kpp.
\end{equation*}
Therefore, we may write
\begin{align*}
	\rd_{v} (\log (-\dur) - \log(-\durbg))
	= & 2 \left(\frac{\varpi- \varpibg}{r^{2}} - \frac{\e^{2} - \ebg^{2}}{r^{3}} \right) \kpp + \frac{2 (\varpibg - \frac{\ebg^{2}}{r})}{r^{2}} \left( 1 - \frac{\kppbg}{\kpp} \right) \kpp \\
	& + 2 \varpibg \kppbg (r^{-2} - \rbg^{-2}) - 2 \ebg^{2} \kppbg (r^{-3} - \rbg^{-3}).
\end{align*}
Proceeding as in Step~1 (in particular, the pointwise estimate for $F(u, v')$), it follows that
\begin{equation*}
	\abs{\rd_{v} (\log (-\dur) - \log(-\durbg))} \leq C A \log \Lmb \left( \eps + \dlt_{(t_{B})}^{2} \right).
\end{equation*}
Integrating the above inequality to $v$ from $v= v_{20}(u)$, we arrive at 
\begin{equation*}
	\abs{\log (-\dur) - \log(-\durbg)}(u, v)
	\leq \abs{\log (-\dur) - \log(-\durbg)}(u, v_{20}(u)) + C A \log \Lmb \left( \eps + \dlt_{(t_{B})}^{2} \right) \abs{v_{20}(u) - v}.
\end{equation*}
Since $-\dur, -\durbg$ are uniformly bounded away from $0$ (by a universal constant), the first term can be bounded by
\begin{equation*}
\abs{\log (-\dur) - \log(-\durbg)}(u, v_{20}(u)) \leq C \abs{\durdf}(u, v_{20}(u)) \leq  C A \log \Lmb \left( \eps + \dlt_{(t_{B})}^{2}\right),
\end{equation*}
which is acceptable. On the other hand, since
\begin{equation*} 
	\abs{v_{20}(u) - u + C_{\gmm_{20}}} \leq C A \left( \eps + \dlt_{(t_{B})}^{2} \right)  + C\de^2
\end{equation*}
thanks to \eqref{eq:df-large-r:v-u}, the second term contributes the factor $(1 + \abs{u - v - C_{\gmm_{20}}})$ in \eqref{eq:df-small-r:logdurdf}.

\pfstep{Step~4: Proof of \eqref{eq:df-small-r:dur}} 
We begin by writing
\begin{equation*}
	(-\dur) 
	= \frac{\dur}{\durbg}  (-\durbg)
	= e^{\log(-\dur) - \log(-\durbg)} (-\durbg).
\end{equation*}
By \eqref{eq:durbg-bnd}, \eqref{eq:C-gmm} and \eqref{eq:df-small-r:logdurdf}, it follows that
\begin{equation*}
	(-\dur) \leq e^{C A \log \Lmb (\eps + \dlt_{(t_{B})}^{2}) (1 + \abs{u - v - C_{\gmm_{20}}})}e^{ - c_{(\durbg)} (u - v - C_{\gmm_{20}})} e^{ c_{(\durbg)} C},
\end{equation*}
where $c_{(\durbg)} > 0$ is the constant from \eqref{eq:durbg-bnd}. Taking $A \log \Lmb (\eps + \dlt_{(t_{B})}^{2})$ in the exponent sufficiently small (compared to $c_{(\durbg)}$), the desired bound follows with $c_{(\dur)} = \frac{1}{2} c_{(\durbg)}$. \qedhere
\end{proof}

As a corollary of Lemma~\ref{lem:df-small-r}, we obtain the following obvious bounds on $r$ and $1-\mu$.
\begin{corollary} \label{cor:small-r}
In the region $\calD_{(t_{B})}$, we have
\begin{gather} 
	r \geq \frac{1}{2}, \label{eq:r-bnd} \\
	0 \leq 1-\mu \leq 20. \label{eq:mu-small-r}
\end{gather}
\end{corollary}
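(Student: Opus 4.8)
The plan is to obtain Corollary~\ref{cor:small-r} as an immediate bookkeeping consequence of Lemma~\ref{lem:df-small-r} together with the geometric bounds for the perturbed and background solutions established in Sections~\ref{subsec:btstrp}--\ref{subsec:geom}; no further analytic work is needed. I would treat the lower bound $r \geq \frac{1}{2}$ first. On $\calD_{(t_{B})} \cap \set{r \geq 20}$ it is trivial, so it suffices to consider $\calD_{(t_{B})} \cap \set{r \leq 20}$, where I write $r = \rbg + \rdf$ and combine the background bound $\rbg \geq 1 - C \dlt$ from \eqref{eq:rbg-bnd} with the difference bound $\abs{\rdf} \leq C A \log \Lmb (\eps + \dlt_{(t_{B})}^{2})$ from \eqref{eq:df-small-r:rdf}. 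This reduces matters to checking $C \dlt + C A \log \Lmb (\eps + \dlt_{(t_{B})}^{2}) \leq \frac{1}{2}$, which holds under the smallness conventions fixed at the end of Section~\ref{subsec:btstrp}: indeed $\dlt \leq \Lmb^{-100 \eta_{0}} \dlt_{0}$ guarantees that $\log \Lmb \, \dlt_{(t_{B})}^{2}$ is small despite the possibly large $\log \Lmb$, and $\eps = \eps_{(t_{B})}$ can be made small through \eqref{eq:en-btstrp-eps} by choosing $\eps_{0}$ small (allowed to depend on $\Lmb$) and $\dlt_{0}$ small. This establishes \eqref{eq:r-bnd}.

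Next I would handle \eqref{eq:mu-small-r} using the identity $1 - \mu = 1 - \frac{2 \varpi}{r} + \frac{\e^{2}}{r^{2}}$. For the upper bound, since $\varpi > \frac{1}{4} > 0$ by \eqref{eq:m-bnd}, the term $-\frac{2 \varpi}{r}$ is negative, so $1 - \mu \leq 1 + \frac{\e^{2}}{r^{2}}$; combined with $\abs{\e} \leq 1$ from \eqref{eq:e-bnd} and the lower bound $r \geq \frac{1}{2}$ just proved, this gives $1 - \mu \leq 1 + 4 = 5 \leq 20$. For the lower bound $1 - \mu \geq 0$, I would instead use \eqref{eq:mass.def} to write $1 - \mu = - \frac{4 \rd_{u} r \, \rd_{v} r}{\Omg^{2}}$ and note that this is nonnegative since $\rd_{u} r < 0$, $\rd_{v} r \geq 0$ throughout the black hole exterior $\calD \supseteq \calD_{(t_{B})}$ (cf.\ Proposition~\ref{prop.subextremality}) and $\Omg^{2} > 0$; equivalently, one can use $1 - \mu = \dvr / \kpp$ with $\dvr \geq 0$ and $\kpp \geq 1$ from \eqref{eq:kpp-bnd}.

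I do not expect a genuine obstacle here, since all of the substance is already in Lemma~\ref{lem:df-small-r}; the only point requiring mild care is the absorption of the $\log \Lmb$ loss in the bound for $\abs{\rdf}$, which is handled exactly as elsewhere in Section~\ref{subsec:geom} via the relation $\dlt \leq \Lmb^{-100 \eta_{0}} \dlt_{0}$ between the parameters.
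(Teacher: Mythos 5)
Your proof is correct and takes essentially the same route as the paper: \eqref{eq:r-bnd} from \eqref{eq:rbg-bnd} and \eqref{eq:df-small-r:rdf} (with the smallness conventions absorbing the $\log\Lmb$ loss), nonnegativity of $1-\mu$ from the sign conditions in \eqref{no.trapped.in.1} together with $\Omg^2 > 0$, and the upper bound by dropping the $-\tfrac{2\varpi}{r}$ term and using $\abs{\e}\leq 1$, $r\geq \tfrac12$. The only cosmetic difference is that you state the intermediate inequality as $1-\mu \leq 1 + \e^2/r^2$ (the direct consequence of $\varpi>0$), whereas the paper writes $1-\mu \leq 1 + 2\e^2/r^3$; both are comfortably $\leq 20$.
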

\begin{proof}
For \eqref{eq:r-bnd}, we use \eqref{eq:rbg-bnd} and \eqref{eq:df-small-r:rdf}. Nonnegativity of $1-\mu$ follows from \eqref{no.trapped.in.1} together with the positivity of $\Omg^2$. For the upper bound in \eqref{eq:mu-small-r}, we bound $1-\mu \leq 1 + \frac{2 \e^{2}}{r^{3}} \leq 20$ by \eqref{eq:e-bnd} and \eqref{eq:r-bnd}. \qedhere
\end{proof}

Another corollary of (the proof of) Lemma~\ref{lem:df-small-r} is a bound which, as we will see later, concerns the strength of the red-shift effect of the event horizon of the perturbed solution.
\begin{corollary} \label{cor:red-shift-bnd}
There exists a positive constant $c_{\EH}$, which depends on $|\ebg|$, $\eta_0$ and $\omg$, such that
\begin{equation} \label{eq:red-shift-bnd}
	\frac{2(\varpi - \frac{\e^{2}}{r})}{r^{2}} \geq c_{\EH} > 0 \quad \hbox{ in } \calD_{(t_{B})} \cap \set{r \leq 20}.
\end{equation}
\end{corollary}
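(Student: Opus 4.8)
The plan is to derive Corollary~\ref{cor:red-shift-bnd} as a perturbative consequence of the background red-shift bound \eqref{eq:df-small-r:redshift} together with the geometric difference bounds of Lemma~\ref{lem:df-small-r}. The key observation is that the quantity $\frac{2(\varpi - \frac{\e^{2}}{r})}{r^{2}}$ for the perturbed solution differs from its background counterpart $\frac{2(\varpibg - \frac{\ebg^{2}}{\rbg})}{\rbg^{2}}$ only by terms controlled by $\varpidf = \varpi - \varpibg$, $\rdf = r - \rbg$, and $\e - \ebg$, all of which can be made arbitrarily small by choosing $\eps_{0}$ and $\dlt_{0}$ small. So morally this is just a continuity argument.

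Concretely, first I would note that $\calD_{(t_{B})} \cap \set{r \leq 20} \subseteq \set{\rbg \leq 30}$, which is precisely the inclusion already used inside the proof of Lemma~\ref{lem:df-small-r} (it follows from \eqref{eq:df-small-r:rdf}, \eqref{eq:rbg-bnd} and the smallness of $\dlt_{0} = \Lmb^{100\eta_{0}} \dlt$). On this region, \eqref{eq:df-small-r:redshift} gives $\frac{2(\varpibg - \frac{\ebg^{2}}{\rbg})}{\rbg^{2}} \geq \overline{c}_{\EH} > 0$, with $\overline{c}_{\EH}$ depending only on $1 - \frac{\abs{\ebg}}{\varpibg_{f}}$ (hence on $|\ebg|$ since $\rbg_{\EH} = 1$). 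Next I would write
\begin{equation*}
	\frac{2(\varpi - \frac{\e^{2}}{r})}{r^{2}} - \frac{2(\varpibg - \frac{\ebg^{2}}{\rbg})}{\rbg^{2}}
	= 2 \varpidf r^{-2} + 2 \varpibg (r^{-2} - \rbg^{-2}) - (\e^{2} - \ebg^{2}) r^{-3} - \ebg^{2}(r^{-3} - \rbg^{-3}),
\end{equation*}
and bound each term using $r \geq \frac12$ (from \eqref{eq:r-bnd}), $\varpibg \leq 1$ and $\abs{\ebg} < 1$ (from \eqref{eq:m-bnd}, \eqref{eq:e-bnd}, or Definition~\ref{def:dlt-adm}), together with Lemma~\ref{lem:r-alp-df} to control $r^{-2} - \rbg^{-2}$ and $r^{-3} - \rbg^{-3}$ by $C \abs{\rdf}$. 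Using \eqref{eq:en-btstrp-m} for $\varpidf$, \eqref{eq:df-small-r:rdf} for $\rdf$ (which costs a $\log \Lmb$ factor but this is harmless once multiplied by the small quantity $\eps + \dlt_{(t_{B})}^{2}$ after choosing $\eps_{0}$ small depending on $\Lmb$), and $\abs{\e^{2} - \ebg^{2}} \leq C \abs{\e - \ebg} \leq C \eps$, the difference is bounded by $C A \log \Lmb (\eps + \dlt_{(t_{B})}^{2})$. Choosing $\dlt_{0}$ small (depending on $|\ebg|, \eta_{0}, \omg, A$) and then $\eps_{0}$ small (also depending on $\Lmb$) makes this less than $\frac12 \overline{c}_{\EH}$, so \eqref{eq:red-shift-bnd} follows with $c_{\EH} = \frac12 \overline{c}_{\EH}$, which depends only on $|\ebg|$ (and trivially on $\eta_{0}, \omg$ in the sense that those are fixed throughout).

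I do not anticipate a genuine obstacle here — this corollary is a straightforward perturbation of \eqref{eq:df-small-r:redshift}, and all the ingredients (the inclusion $\set{r \leq 20} \subseteq \set{\rbg \leq 30}$, the pointwise difference bounds, Lemma~\ref{lem:r-alp-df}) are already in hand from the proof of Lemma~\ref{lem:df-small-r}. The only mild care needed is tracking that the $\log \Lmb$ loss from \eqref{eq:df-small-r:rdf} is absorbed by taking $\eps_{0}$ small depending on $\Lmb$, which is consistent with the smallness hierarchy stated at the end of Section~\ref{subsec:btstrp}; and noting that since $\overline{c}_{\EH}$ depends only on $1 - \abs{\ebg}/\varpibg_{f}$ (equivalently on $|\ebg|$ given $\rbg_{\EH}=1$), the final constant $c_{\EH}$ has the claimed dependence. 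If one wanted to be even more economical, one could simply observe that the computation in Step~4 of Lemma~\ref{lem:df-small-r} already implicitly contains this estimate, so the corollary could be stated as an immediate byproduct; but writing the short direct argument above is cleaner.
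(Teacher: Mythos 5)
Your proof is correct and follows essentially the same route as the paper: start from the background lower bound \eqref{eq:df-small-r:redshift}, note the inclusion of $\calD_{(t_{B})} \cap \set{r \leq 20}$ into $\set{\rbg \leq 30}$, and absorb the difference coming from $\varpidf$, $\rdf$ and $\e - \ebg$ using \eqref{eq:en-btstrp-m}, \eqref{eq:df-small-r:rdf} and \eqref{eq:eps-tB}, taking $\dlt_{0}$ and then $\eps_{0}$ small. The paper's proof is terser but cites exactly the same ingredients; your write-up just spells out the decomposition and the bound-chasing that the paper leaves implicit.
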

\begin{proof}
The lower bound \eqref{eq:red-shift-bnd} follows from \eqref{eq:df-small-r:redshift}, using the difference estimates $\abs{\varpidf} \leq A (\eps + \dlt_{(t_{B})}^{2})$, $\abs{\e - \ebg} \leq \eps$ and \eqref{eq:df-small-r:rdf}.  \qedhere
\end{proof}

Before we end this subsection, let us note that we have already obtained strong enough estimates for the geometry of the perturbed spacetime to show that $\de_{(t_B)}\to 0$ as $t_B\to \infty$. This is given by the following proposition, which improves over the estimates in Proposition~\ref{prop:dlt-tB-bnd}.

\begin{proposition}\label{prop:dlt-tB-decay}
There exists $c_{\Lmb} > 0$ depending on $\Lmb > 0$ so that
\begin{equation} \label{eq:dlt-tB-decay}
\dlt_{(t_{B})}^{2} \leq C \dlt^{2} (1+ c_{\Lmb} t_{B})^{-1}.
\end{equation}
\end{proposition}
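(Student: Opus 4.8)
The quantity $\dlt_{(t_B)}^{2}$ in \eqref{eq:dlt-tB} is a sum of five terms. The first, $\dlt^{2}(1+t_B)^{-1}$, already has the desired form (with $c_{\Lmb}=1$), so it remains to bound the four terms involving $\log \kppbg$ on the final outgoing curve $C_{(t_{B}) out}^{f}$ and $\log (-\gmmbg)$ on the final incoming curve $\uC_{(t_{B}) in}^{f}$. The point is that both of these curves recede towards timelike infinity as $t_{B} \to \infty$, so the already-established decay estimates \eqref{eq:gmmbg-decay}, \eqref{eq:kppbg-decay} for the background solution, together with the $\set{\rbg \leq 30}$ bound \eqref{eq:bg-uv:kpp-decay} coming from Definition~\ref{def:dlt-adm}/Proposition~\ref{prop:bg-uv}, force each of these terms to decay in $t_{B}$.

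The plan is as follows. \emph{Step 1 (locating the final point).} First I would show that there is $c_{\Lmb} > 0$, depending on $\Lmb$, such that the $(\ubg, \vbg)$-coordinates of the endpoint $\gmm_{R_{0}}(t_{B})$ --- equivalently, the constant value $\ubg_{f}$ of $\ubg$ along $C_{(t_{B}) out}^{f}$ and the constant value $\vbg_{f}$ of $\vbg$ along $\uC_{(t_{B}) in}^{f}$ --- satisfy $\ubg_{f}, \vbg_{f} \geq 1 + c_{\Lmb} t_{B}$. Since $\gmm_{R_{0}}$ is parametrized by the future timelike vector field $T$ with $T r = 0$, and $R_{0} \geq 20$ along it (recall $R_{0} = r(1,1)$ and \eqref{eq:R0-Lmb}, so $R_{0} \approx \Lmb > 100$), Lemma~\ref{lem:geom-prelim} gives $\tfrac12 \leq -\gmm \leq 1$ and $1 \leq \kpp \leq 2$ on $\gmm_{R_{0}}$, whence the (perturbed) future-normalized retarded and advanced coordinates grow at a linear rate in the $T$-parameter along $\gmm_{R_{0}}$; passing to the background coordinates via the coordinate-transformation estimates of Section~\ref{subsec:geom} (in particular \eqref{eq:df-large-r:durdf}, \eqref{eq:df-large-r:dvrdf}, \eqref{eq:v-u-r}) then yields $\ubg_{f}, \vbg_{f} \geq 1 + c_{\Lmb} t_{B}$, the $\Lmb$-loss originating from the fact that the normalizations \eqref{eq:coord-u-bg}, \eqref{eq:coord-v-bg} are imposed on $\EH$ and $\NI$ rather than on $\rd\calD_{(t_{B})}$, together with the $\Lmb$-width of the rectangle. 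One then propagates this along the final curves using monotonicity of $\rbg$ and \eqref{eq:durbg-bnd}, \eqref{eq:dur-large-r}: along $\uC_{(t_{B}) in}^{f}$ one has $\rbg \geq 10$ throughout (its $C_{out}$-end lies at $\rbg \gtrsim c_{\Lmb} t_{B}$ and its $\gmm_{R_{0}}$-end at $\rbg \approx R_{0}$), with the sharper relation $\ubg + c \rbg \geq c_{\Lmb} t_{B}$; along $C_{(t_{B}) out}^{f}$ one has $\ubg \equiv \ubg_{f} \geq 1 + c_{\Lmb} t_{B}$, and the sub-portion where $\rbg \geq 10$ additionally satisfies $\vbg \geq c_{\Lmb} t_{B}$ (since it lies within $r$-distance $O(\Lmb)$ of $\gmm_{R_{0}}(t_{B})$).

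\emph{Step 2 (applying the background decay estimates).} On $C_{(t_{B}) out}^{f}$, split into $\set{\rbg \leq 30}$ and $\set{\rbg \geq 10}$. On the former, \eqref{eq:bg-uv:kpp-decay} gives $\abs{\log \kppbg} \leq C \brk{\ubg - \overline{C}_{\gmm_{30}}}^{-2\omg} \dlt^{2} \leq C (1 + c_{\Lmb} t_{B})^{-2\omg} \dlt^{2}$, using $\ubg \equiv \ubg_{f}$; on the latter, \eqref{eq:kppbg-decay} gives $\abs{\log \kppbg} \leq C(\vbg^{-2\omg} + \ubg^{-2\omg+1}\Lmb^{-1}) \dlt^{2} \leq C (1 + c_{\Lmb} t_{B})^{-1}\dlt^{2}$, using $\vbg, \ubg \gtrsim c_{\Lmb} t_{B}$ there and $2\omg - 1 > 1$. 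Taking the supremum and integrating $\ud v$ (convergent, since $2\omg > 1$ and the $r$-extent of $C_{(t_{B}) out}^{f}$ beyond $\rbg = 10$ is $O(\Lmb)$) yields the desired $C\dlt^{2}(1+c_{\Lmb} t_{B})^{-1}$ bound for the two $C_{(t_{B}) out}^{f}$-terms. On $\uC_{(t_{B}) in}^{f}$, where $\rbg \geq 10$, \eqref{eq:gmmbg-decay} gives $\rbg \abs{\log (-\gmmbg)} \leq C \ubg^{-2\omg+2} \rbg^{-1} \dlt^{2}$; by the dichotomy $\ubg + c\rbg \gtrsim c_{\Lmb} t_{B}$ from Step 1, at each point either $\ubg \gtrsim c_{\Lmb} t_{B}$ (and $2\omg - 2 > 1$) or $\rbg \gtrsim c_{\Lmb} t_{B}$, so in either case $\rbg\abs{\log(-\gmmbg)} \leq C\dlt^{2}(1+c_{\Lmb}t_{B})^{-1}$, and the integral $\int \abs{\log(-\gmmbg)}\,\ud u$ is handled by the same dichotomy (changing variables $u \mapsto \rbg$ where $\rbg$ is large, and using $\int \ubg^{-2\omg+2}\,\ud u < \infty$ where $\ubg$ is large). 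Summing the five contributions gives \eqref{eq:dlt-tB-decay}; for $t_{B}$ of order $1$ the full expression is $\leq C\dlt^{2}$ by Proposition~\ref{prop:dlt-tB-bnd}, absorbed by shrinking $c_{\Lmb}$.

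The real substance is Step 1: making precise, with explicit $\Lmb$-dependence, the claim that $\gmm_{R_{0}}(t_{B})$ runs off to $i^{+}$ at a rate linear in $t_{B}$ when measured in the background coordinates. This forces one to chain the $T$-parametrization of $\gmm_{R_{0}}$ (which naturally controls the perturbed future-normalized coordinates $(u_{(t_{B})}, v_{(t_{B})})$) through the weak-stability coordinate-transformation bounds of Sections~\ref{subsec:geom}--\ref{subsec:gauge} to the fixed background coordinates $(\ubg, \vbg)$; once that is in hand, everything else is routine bookkeeping with the decay exponents $2\omg$, $2\omg - 1$, $2\omg - 2$, all of which exceed $1$ since $\omg > 2$.
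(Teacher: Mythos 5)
Your proposal follows the same high-level strategy as the paper's proof and arrives at the correct bounds. The key observations---that the final curves $C^{f}_{(t_{B})\,out}$ and $\uC^{f}_{(t_{B})\,in}$ sit at coordinate values growing linearly in $t_{B}$, that this linear growth comes from the $T$-parametrization of $\gmm_{R_{0}}$ (which lies in $\set{r \geq 20}$ by \eqref{eq:R0-Lmb}, so $Tu = -\gmm^{-1}$ and $T v = \kpp^{-1}$ are uniformly bounded above and below by Lemma~\ref{lem:geom-prelim}), and that the background decay estimates \eqref{eq:bg-uv:kpp-decay}, \eqref{eq:gmmbg-decay}, \eqref{eq:kppbg-decay} then close the argument because the exponents $2\omg$, $2\omg-1$, $2\omg-2$ all exceed $1$---are exactly the paper's. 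The dichotomy you invoke along $\uC^{f}_{(t_{B})\,in}$ (each point has either $\ubg \gtrsim t_{B}$ or $\rbg \gtrsim t_{B}$) is the same as the paper's ``either $u \geq \tfrac{1}{2} t_{B}$ or $\rbg \geq \Lmb + C^{-1} t_{B}$.''

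The one place where you over-complicate is Step~1, in the phrase ``passing to the background coordinates via the coordinate-transformation estimates of Section~\ref{subsec:geom}.'' Under the paper's conventions in Section~\ref{subsec:btstrp}, the quantity $\dlt_{(t_{B})}^{2}$ in \eqref{eq:dlt-tB} compares $\kpp$ and $\kppbg$ at the \emph{same numerical} values $(u,v) = (u_{(t_{B})}, v_{(t_{B})}) = (\ubg, \vbg)$, not at the same physical $(U,V)$-point. Consequently the relevant $\ubg$-value on $C^{f}_{(t_{B})\,out}$ is literally $u_{(t_{B})f}$; no change of coordinates between the two solutions is involved, and hence no appeal to \eqref{eq:df-large-r:durdf}, \eqref{eq:df-large-r:dvrdf} is needed. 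The paper therefore goes straight from $T t_{(B)} = 1$ to $C^{-1} t_{B} \leq u_{(t_{B})f}, v_{(t_{B})f} \leq C t_{B}$ with a universal constant, with the $\Lmb$-dependence of $c_{\Lmb}$ entering only through the reduction to $t_{B} \geq C\Lmb$ (via Proposition~\ref{prop:dlt-tB-bnd}) and the offset $C_{30} \leq 2\Lmb$, not through a coordinate change. This is a benign detour on your part---the transformation estimates would indeed give the same conclusion---but it is worth keeping in mind that the identification is numerical by definition, since that is precisely what makes $\dlt_{(t_{B})}$ the right quantity to track the gauge mismatch.
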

\begin{proof}

By Proposition~\ref{prop:dlt-tB-bnd}, it suffices to prove the proposition for $t_{B}\geq C \Lambda$. Moreover, since $Tt_{(B)}=1$ (and that by \eqref{eq:vf-T-0}, $T=-\f{1-\mu}{\rd_U r}\rd_U+\f{1-\mu}{\rd_V} \rd_V=-\f{1-\mu}{\rd_u r}\rd_u+\f{1-\mu}{\rd_v} \rd_v$), by \eqref{eq:kpp-bnd}, \eqref{eq:gmm-large-r} and \eqref{eq:R0-Lmb} (to ensure that $\gamma_{R_0}\subset\{r\geq 20\}$), we have 
\begin{equation}\label{eq:tB-uB-vB}
C^{-1} t_{(B)}\leq u_{(t_B)f},\, v_{(t_B)f}\leq Ct_{(B)}.
\end{equation}
We can therefore assume that $u_{(t_B)f}\geq C \Lambda$. For $C_{30}$ defined by the intersection of $\{\rbg=30\}$ and $\{v=1\}$, we have by \eqref{eq:dlt-adm:m}, \eqref{eq:mubg-bnd} and \eqref{eq:gmmbg-bnd} that $C_{30}\leq 2\Lambda$. Hence, we can assume that $u_{(t_B)f}\geq C_{30}+1$.

Under these assumptions, we claim that
\begin{align} 
	\sup_{C_{(t_{B}) out}^{f}} \abs{\log \kppbg} + \int_{C_{(t_{B}) out}^{f}} \abs{\log \kppbg} \, \ud v 
\leq & C \dlt^{2} (1+ c_{\Lmb} t_{B})^{-2 \omg + 1},		\label{eq:dlt-tB-decay:pf-1}\\
 	\sup_{\uC_{(t_{B}) in}^{f}} \rbg \abs{\log (-\gmmbg)} + \int_{\uC_{(t_{B}) in}^{f}} \abs{\log (-\gmmbg)} \, \ud u
\leq & C \dlt^{2} (1+ c_{\Lmb} t_{B})^{-1} .		\label{eq:dlt-tB-decay:pf-2} 
\end{align}

To prove the claim \eqref{eq:dlt-tB-decay:pf-1}, we first note that by \eqref{eq:bg-uv:kpp-decay},
$$\sup_{C_{(t_{B}) out}^{f}\cap\{\rbg\leq 30\}} \abs{\log \kppbg} + \int_{C_{(t_{B}) out}^{f}\cap\{\rbg\leq 30\}} \abs{\log \kppbg} \, \ud v \leq C \de^2 (u_{(t_B)f}-C_{30})^{-2\omg},$$
$$\sup_{C_{(t_{B}) out}^{f}\cap\{\rbg\geq 30\}} \abs{\log \kppbg} + \int_{C_{(t_{B}) out}^{f}\cap\{\rbg\geq 30\}} \abs{\log \kppbg} \, \ud v \leq C \de^2 \left(\left(v_{30}(u_{(t_B)f})\right)^{-2\omg+1} + u_{(t_B)f}^{-2\omg}\right),$$
where $v_{30}(u_{(t_B)f})$ is defined by\footnote{Note that such a $v_{30}(u_{(t_B)f})$ exists since $u_{(t_B)f}\geq C_{30}+1$.} $\rbg(u_{(t_B)f}, v_{10}(u_{(t_B)f}))=30$.
In deriving the bound for the integral term when $\rbg\geq 30$, we use the observation similar to that in Proposition~\ref{prop:dlt-tB-bnd} that the $v$-length of $C^{f}_{(t_{B}) out} \cap \set{\rbg \geq 30}$ is bounded by $C \Lmb$; and that this can be compensated by either the $v$ decay or the factor $\Lmb^{-1}$ in \eqref{eq:bg-uv:kpp-decay}. The claim \eqref{eq:dlt-tB-decay:pf-1} is then a consequence of \eqref{eq:tB-uB-vB}, the fact $C^{-1}\Lambda\leq C_{30}\leq C\Lambda$ (which follows from \eqref{eq:dlt-adm:m}, \eqref{eq:mubg-bnd} and \eqref{eq:gmmbg-bnd}) and the fact $v_{30}(u_{(t_B)f})\geq C^{-1} (u_{(t_B)f}-C_{\gamma_{20}})$ (which follows from \eqref{eq:v-u-r}).

The claim \eqref{eq:dlt-tB-decay:pf-2} follows from \eqref{eq:bg-uv:gmm-decay} and the observation that either $u \geq \frac{1}{2} t_{B}$ or $\rbg \geq \Lmb + C^{-1} t_{B}$ on $\uC^{f}_{(t_{B}) in}$. This final observation follows from a bound for $\overline{\lambda}$ away from $0$, which is in turn a consequence of \eqref{eq:dlt-adm:m}, \eqref{eq:kppbg-bnd} and $\Lambda>100$. \qedhere
\end{proof}

\subsection{Vector field multipliers and $L^{2}$-Hardy-type estimates} \label{subsec:en-id}
Equipped with the bounds on the difference of geometric quantities, we now analyze the difference $\phidf (u, v) = \phi(u,v) - \phibg(u, v)$ of the scalar field. Clearly, $\phidf$ solves the equation
\begin{equation} \label{eq:wave4phidf}
	\rd_{u} \rd_{v} \phidf + \frac{\dvr}{r} \rd_{u} \phidf + \frac{\dur}{r} \rd_{v} \phidf 
	= - \left( \frac{\dvr}{r} - \frac{\dvrbg}{\rbg} \right) \rd_{u} \phibg - \left( \frac{\dur}{r} - \frac{\durbg}{\rbg} \right) \rd_{v} \phibg.
\end{equation}
%Indeed,
%\begin{align*}
%	0
%	= & \left( \rd_{u} \rd_{v} \phi + \frac{\dvr}{r} \rd_{u} \phi + \frac{\dur}{r} \rd_{v} \phi \right)
%	- \left( \rd_{u} \rd_{v} \phibg + \frac{\dvrbg}{\rbg} \rd_{u} \phibg + \frac{\durbg}{\rbg} \rd_{v} \phibg \right) \\
%	= & \rd_{u} \rd_{v} \dlt \phi + \frac{\dvr}{r} \rd_{u} (\dlt \phi) + \frac{\dur}{r} \rd_{v} (\dlt \phi)
%		+ \frac{\dvr}{r} \rd_{u} \phibg + \frac{\dur}{r} \rd_{v} \phibg
%		- \frac{\dvrbg}{\rbg} \rd_{u} \phibg - \frac{\durbg}{\rbg} \rd_{v} \phibg  \\
%	= & \rd_{u} \rd_{v} \dlt \phi + \frac{\dvr}{r} \rd_{u} (\dlt \phi) + \frac{\dur}{r} \rd_{v} (\dlt \phi)
%		+ \dlt \bb( \frac{\dvr}{r} \bb) \rd_{u} \phibg + \dlt \bb( \frac{\dur}{r} \bb) \rd_{v} \phibg
%\end{align*}

The purpose of this subsection is to derive some basic energy and $L^{2}$ identities for analysis of \eqref{eq:wave4phidf}.
We prove four types of energy identities, all by multiplying appropriate vector fields:
\begin{itemize}
\item {\bf Almost conserved energy (Lemma~\ref{lem:consv-en}).} 
This estimate gives control of the energy on null curves, with degenerating weight near $\EH$. We use the vector field
\begin{equation} \label{eq:vf-T}
	T = \frac{1-\mu}{\dvr} \rd_{v} + \frac{1-\mu}{(-\dur)} \rd_{u} ,
\end{equation}
which we already used in Section~\ref{subsec:geom}.

\item {\bf Integrated local energy decay (or Morawetz) estimate (Lemma~\ref{lem:morawetz}).}
This estimate gives the spacetime term in \eqref{eq:energy}. We use the vector field
\begin{equation} \label{eq:vf-X}
	X = f(r) \left( - \frac{1-\mu}{\dvr} \rd_{v} + \frac{1-\mu}{(-\dur)} \rd_{u}\right)
\end{equation}
for an appropriate choice of $f$. This vector field also leads to control of 
$\iint r^{3- \eta_{0}} \frac{\dvr (-\dur)}{1-\mu} (\rd_{v} \phidf)^{2} (\rd_{u} \phidf)^{2} \, \ud u \ud v$, which is important for controlling nonlinear terms in $\phidf$ in the energy estimate (Section~\ref{subsec:en-pf}); see Remark~\ref{rem:int-morawetz} below.

To control the zeroth order term in the spacetime term in \eqref{eq:energy} in a large-$r$ region, we use 
\begin{equation} \label{eq:vf-Xrad}
	X_{(rad)} = \frac{1}{\dvr} \rd_{v} - \frac{1}{(-\dur)} \rd_{u},
\end{equation}
which approaches $X$ with $f = -1$ as $r \to \infty$. Note that, when restricted to the Minkowski space $(\varpi = \e = 0)$, $X_{(rad)}$ coincides with the radial vector field $2 \rd_{r}$ used by Morawetz \cite{Morawetz}. 

Furthermore, to obtain a refined integrated local energy decay estimate for $\rd_{v} \phidf$ near $\EH$ (see Remark~\ref{rem:iled-small-r}), we use the vector field
\begin{equation} \label{eq:vf-Xi}
	\Xi = e^{-c_{\Xi} (u - v - C_{\gmm_{20}})} \rd_{v}
\end{equation}
for some $c_{\Xi} > 0$ to be chosen. Unlike all other vector fields we consider, $\Xi$ is \emph{coordinate-dependent} and \emph{irregular} across $\EH$ in general.

\item {\bf Red-shift estimate (Lemma~\ref{lem:red-shift}).}
This estimate provides control of the nondegenerate energy near $\EH$ in \eqref{eq:energy}. We use the vector field
\begin{equation} \label{eq:vf-Y}
	Y = \frac{\chi_{\EH}(r)}{(-\dur)} \rd_{u}
\end{equation}
where $\chi_{\EH}(r)$ is a suitable cutoff near $\EH$.

\item {\bf $\log r$-weighted energy (or Dafermos--Rodnianski) estimate (Lemma~\ref{lem:DR}).} This estimate is a variant of the $r^{p}$-weighted energy estimate of Dafermos--Rodnianski. It is used to prove \eqref{eq:energy-dvphi}, which in turn is important for closing the bootstrap assumptions for $\log (-\gmm) - \log (-\gmmbg)$. We use the vector field
\begin{equation} \label{eq:vf-Z}
	Z = \frac{\log(1+r)}{\dvr} \rd_{v} .
\end{equation}
\end{itemize}

Also, we prove $L^{2}$-Hardy-type estimates on null curves (Lemma~\ref{lem:hardy-type} and Lemma~\ref{lem:hardy-opt}). Lemma~\ref{lem:hardy-type} is used to relate $\rd_{v} (r \phidf)$ with $\rd_{v} \phidf$, and can be thought of as a variant of the $L^{2}$ Hardy inequality. 
Lemma~\ref{lem:hardy-opt} is the optimal $L^{2}$ Hardy inequality, which is used to control lower order terms involving $\phidf$ using $\rd_{v} \phidf$.

\subsubsection*{Energy identities}
Let $\calO$ be an open set such that $\overline{\calO} \subseteq \calD_{(t_{B})}$. In the following six lemmas (Lemmas~\ref{lem:consv-en}--\ref{lem:DR}), we let $\phidf \in C^{2}(\overline{\calO})$ be a solution to the inhomogeneous wave equation
\begin{equation} \label{eq:wave-inhom}
	\rd_{u} \rd_{v} \phidf + \frac{\dvr}{r} \rd_{u} \phidf + \frac{\dur}{r} \rd_{v} \phidf = F.
\end{equation}
with some forcing term $F \in C^{0}(\overline{\calO})$.

We start with the vector field $T$ in \eqref{eq:vf-T}
\begin{lemma} [Almost conserved energy] \label{lem:consv-en}
We have 
\begin{equation} \label{eq:en-id-T}
\begin{aligned}
&\hskip-2em
	\rd_{u} \left( \frac{1}{2} \frac{1-\mu}{\dvr} (\rd_{v} \phidf)^{2} r^{2}\right) 
	+ \rd_{v} \left( \frac{1}{2} \frac{1-\mu}{(-\dur)} (\rd_{u} \phidf)^{2} r^{2} \right)   \\
= & F \left( \frac{1-\mu}{\dvr} \rd_{v} \phidf + \frac{1-\mu}{(-\dur)} \rd_{u} \phidf \right) r^{2} \\
	& + \frac{1}{2}  \frac{1-\mu}{\dvr (-\dur)}  r^{3} \left( (\rd_{u} \phibg)^{2} + 2 \rd_{u} \phibg \rd_{u} \phidf \right) (\rd_{v} \phidf)^{2} 
	 - \frac{1}{2} \frac{1-\mu}{\dvr (-\dur)}  r^{3} \left( (\rd_{v} \phibg)^{2} + 2 \rd_{v} \phibg \rd_{v} \phidf \right)  (\rd_{u} \phidf)^{2} .
\end{aligned}
\end{equation}
\end{lemma}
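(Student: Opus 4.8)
The plan is to verify the energy identity \eqref{eq:en-id-T} directly, by computing the divergence of the energy current associated to the multiplier $T$ and using the wave equation \eqref{eq:wave-inhom}. First I would record the two elementary product-rule identities
\[
	\rd_{u} \left( \frac{1}{2} \frac{1-\mu}{\dvr} (\rd_{v} \phidf)^{2} r^{2}\right)
	= \frac{1-\mu}{\dvr} (\rd_{v} \phidf)(\rd_{u}\rd_{v}\phidf) r^{2}
	+ \frac{1}{2} \rd_{u}\!\left(\frac{1-\mu}{\dvr} r^{2}\right) (\rd_{v}\phidf)^{2},
\]
\[
	\rd_{v} \left( \frac{1}{2} \frac{1-\mu}{(-\dur)} (\rd_{u} \phidf)^{2} r^{2}\right)
	= \frac{1-\mu}{(-\dur)} (\rd_{u} \phidf)(\rd_{u}\rd_{v}\phidf) r^{2}
	+ \frac{1}{2} \rd_{v}\!\left(\frac{1-\mu}{(-\dur)} r^{2}\right) (\rd_{u}\phidf)^{2}.
\]
Adding them, the cross terms combine: using \eqref{eq:wave-inhom} to substitute $\rd_{u}\rd_{v}\phidf = F - \frac{\dvr}{r}\rd_{u}\phidf - \frac{\dur}{r}\rd_{v}\phidf$, the principal contribution becomes $F\left(\frac{1-\mu}{\dvr}\rd_v\phidf + \frac{1-\mu}{(-\dur)}\rd_u\phidf\right)r^2$ — which is exactly the first term on the right of \eqref{eq:en-id-T} — plus the ``transport'' pieces $-(1-\mu)\big(\frac{1}{\dvr}\rd_v\phidf + \frac{1}{(-\dur)}\rd_u\phidf\big)(\dvr\,\rd_u\phidf + \dur\,\rd_v\phidf)\, r$. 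One checks that the latter simplifies, after expanding, to $(1-\mu)(\rd_u\phidf)(\rd_v\phidf)\big(\frac{\dur}{(-\dur)} - \frac{\dvr}{\dvr}\big) r = -2(1-\mu)(\rd_u\phidf)(\rd_v\phidf)\,r$ up to a term that recombines with the $\rd_u,\rd_v$ derivatives of the weights; it is cleanest to keep everything together and only simplify at the end.

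Next I would evaluate the weight-derivative terms $\frac12\rd_u\!\left(\frac{1-\mu}{\dvr}r^2\right)(\rd_v\phidf)^2$ and $\frac12\rd_v\!\left(\frac{1-\mu}{(-\dur)}r^2\right)(\rd_u\phidf)^2$. Here the key inputs are the Raychaudhuri equations \eqref{eq:EMSF-ray-orig}, rewritten for the perturbed solution in the form
\[
	\rd_u\!\left(\frac{\dur}{\Omg^2}\right) = -\frac{r(\rd_u\phi)^2}{\Omg^2}, \qquad
	\rd_v\!\left(\frac{\dvr}{\Omg^2}\right) = -\frac{r(\rd_v\phi)^2}{\Omg^2},
\]
together with $\frac{1-\mu}{\dvr} = -4\frac{\dur}{\Omg^2}$ and $\frac{1-\mu}{(-\dur)} = 4\frac{\dvr}{\Omg^2}$ (the identities recorded after Corollary~\ref{cor:cauchy-st:psidf-phidf}). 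Thus $\rd_u\!\left(\frac{1-\mu}{\dvr}\right) = -4\rd_u\!\left(\frac{\dur}{\Omg^2}\right) = \frac{4r(\rd_u\phi)^2}{\Omg^2} = \frac{1-\mu}{(-\dur)}\frac{(\rd_u\phi)^2}{r}$, and similarly for the $\rd_v$ weight; combining with $\rd_u(r^2) = 2r\dur$ and $\rd_v(r^2)=2r\dvr$, the weight-derivative terms produce a piece $\frac12\frac{1-\mu}{\dvr(-\dur)}r^3(\rd_u\phi)^2(\rd_v\phidf)^2 - \frac12\frac{1-\mu}{\dvr(-\dur)}r^3(\rd_v\phi)^2(\rd_u\phidf)^2$ plus the transport terms $\frac{(1-\mu)r}{\dvr(-\dur)}\big(\dur\,\dvr(\rd_v\phidf)^2 \cdot(\ldots)\big)$ that cancel the leftover cross terms from the previous paragraph. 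Finally I would split $(\rd_u\phi)^2 = (\rd_u\phibg + \rd_u\phidf)^2 = (\rd_u\phibg)^2 + 2\rd_u\phibg\,\rd_u\phidf + (\rd_u\phidf)^2$, noting that the purely $(\rd_u\phidf)^2(\rd_v\phidf)^2$ term appears with opposite sign from the corresponding term in the $(\rd_v\phi)^2$ expansion (via $\frac{1-\mu}{\dvr(-\dur)}r^3$ with a sign) and hence cancels, leaving precisely the two $(\rd_u\phibg)^2 + 2\rd_u\phibg\,\rd_u\phidf$ and $(\rd_v\phibg)^2 + 2\rd_v\phibg\,\rd_v\phidf$ terms of \eqref{eq:en-id-T}.

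The only subtlety — and the step I would double-check most carefully — is the bookkeeping of the cross and transport terms: one must be sure that the terms coming from using the wave equation on $\rd_u\rd_v\phidf$, the terms from $\rd_u(r^2),\rd_v(r^2)$, and the terms from $\rd_u(\tfrac{1-\mu}{\dvr}),\rd_v(\tfrac{1-\mu}{-\dur})$ (via Raychaudhuri) all recombine exactly, with no residue. This is a purely algebraic verification with no analytic input beyond the equations \eqref{eq:wave-inhom}, \eqref{eq:EMSF-ray-orig} and the two $\Omg^2$ identities; since all manipulations are valid pointwise on $\overline{\calO} \subseteq \calD_{(t_{B})}$ where $\Omg, \dvr > 0$ and $\dur < 0$ (so the weights $\frac{1-\mu}{\dvr}$, $\frac{1-\mu}{(-\dur)}$ are smooth and well-defined by the sign properties established in Lemma~\ref{lem:geom-prelim}), the identity \eqref{eq:en-id-T} follows. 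I would present it as a one-paragraph direct computation, invoking \eqref{eq:EMSF-ray-orig} and the relations for $\frac{1-\mu}{\dvr}$, $\frac{1-\mu}{(-\dur)}$ explicitly, and leaving the elementary algebra to the reader.
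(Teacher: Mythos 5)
Your approach is exactly the paper's: multiply \eqref{eq:wave-inhom} by $r^2 T\phidf$, expand the divergence by the product rule, cancel the $\rd_u(r^2),\rd_v(r^2)$ pieces of the weight derivatives against the transport terms, apply the Raychaudhuri equations \eqref{eq:EMSF-ray} for the remaining weight derivatives, and then expand $(\rd_u\phi)^2,(\rd_v\phi)^2$ around $\phibg$ and observe the $(\rd_u\phidf)^2(\rd_v\phidf)^2$ cancellation. Two of your intermediate formulae are wrong, though you land on the right final expression: (i) the pure cross-term contribution of the transport pieces is $-(1-\mu)r\,\rd_u\phidf\,\rd_v\phidf\left(\frac{\dvr}{\dvr}+\frac{\dur}{-\dur}\right)=0$, not $-2(1-\mu)r\,\rd_u\phidf\,\rd_v\phidf$; and (ii) the Raychaudhuri identity should read $\rd_u\!\left(\frac{1-\mu}{\dvr}\right)=\frac{1-\mu}{\dvr(-\dur)}\,r\,(\rd_u\phi)^2$, not $\frac{1-\mu}{(-\dur)}\frac{(\rd_u\phi)^2}{r}$ (your $\frac{4r(\rd_u\phi)^2}{\Omg^2}$ equals the former since $\frac{4}{\Omg^2}=\frac{1-\mu}{\dvr(-\dur)}$). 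Fixing these, the computation goes through exactly as you outline.
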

\begin{proof}
We multiply \eqref{eq:wave-inhom} by $r^{2} T \phidf$ and compute:
\begin{align*}
& \hskip-2em 
	\left(\rd_{u} \rd_{v} \phidf + \frac{\dvr}{r} \rd_{u} \phidf + \frac{\dur}{r} \rd_{v} \phidf \right) r^{2} \left( \frac{1-\mu}{\dvr} \rd_{v} \phidf + \frac{1-\mu}{(-\dur)} \rd_{u} \phidf \right) \\
	= & \rd_{u} \left( \frac{1}{2} \frac{1-\mu}{\dvr} r^{2} (\rd_{v} \phidf)^{2}\right) 
	+ \rd_{v} \left( \frac{1}{2} \frac{1-\mu}{(-\dur)} r^{2} (\rd_{u} \phidf)^{2} \right) \\
	& - \frac{1}{2} \rd_{u}\left( \frac{1-\mu}{\dvr} r^{2} \right) (\rd_{v} \phidf)^{2} 
	- \frac{1}{2} \rd_{v} \left( \frac{1-\mu}{(-\dur)} r^{2} \right) (\rd_{u} \phidf)^{2} \\
	& + \dvr r \frac{1-\mu}{\dvr} \rd_{u} \phidf \rd_{v} \phidf + \dvr r \frac{1-\mu}{(-\dur)} (\rd_{u} \phidf)^{2} 
	 - (-\dur) r \frac{1-\mu}{\dvr} (\rd_{v} \phidf)^{2} - (-\dur) r \frac{1-\mu}{(-\dur)} \rd_{u} \phidf \rd_{v} \phidf \\
	= & \rd_{u} \left( \frac{1}{2} \frac{1-\mu}{\dvr} r^{2} (\rd_{v} \phidf)^{2}\right) 
	+ \rd_{v} \left( \frac{1}{2} \frac{1-\mu}{(-\dur)} r^{2} (\rd_{u} \phidf)^{2} \right) \\
	& - \frac{1}{2} \rd_{u}\left( \frac{1-\mu}{\dvr} \right) r^{2} (\rd_{v} \phidf)^{2} 
	+ \frac{1}{2} \rd_{v} \left( \frac{1-\mu}{\dur}\right) r^{2}  (\rd_{u} \phidf)^{2}.
\end{align*}
Recalling the equations \eqref{eq:EMSF-ray} for $\rd_{u} \frac{1-\mu}{\dvr} = \rd_{u} \kpp^{-1}$ and $\rd_{v} \frac{1-\mu}{\dur} = \rd_{u} \gmm^{-1}$, we see that the last line is equal to
\begin{align*}
& \hskip-2em
%	- \frac{1}{2} \rd_{u}\left( \frac{1-\mu}{\dvr} \right) r^{2} (\rd_{v} \phidf)^{2} 
%	+ \frac{1}{2} \rd_{v} \left( \frac{1-\mu}{\dur}\right) r^{2}  (\rd_{u} \phidf)^{2} \\
%	= & 
	\frac{1}{2}  \frac{1-\mu}{\dvr \dur}  r^{3} (\rd_{u} \phi)^{2} (\rd_{v} \phidf)^{2} 
	- \frac{1}{2} \frac{1-\mu}{\dvr \dur}  r^{3} (\rd_{v} \phi)^{2}  (\rd_{u} \phidf)^{2} \\
	= & \frac{1}{2}  \frac{1-\mu}{\dvr \dur}  r^{3} \left( (\rd_{u} \phibg)^{2} + 2 \rd_{u} \phibg \rd_{u} \phidf \right) (\rd_{v} \phidf)^{2} 
	- \frac{1}{2} \frac{1-\mu}{\dvr \dur}  r^{3} \left( (\rd_{v} \phibg)^{2} + 2 \rd_{v} \phibg \rd_{v} \phidf \right)  (\rd_{u} \phidf)^{2} ,
\end{align*}
which completes the proof. \qedhere
%
% PREVIOUS COMPUTATION
%
%\begin{align}
%&\hskip-2em
%	\rd_{u} \left( \frac{1}{2} \frac{1-\mu}{\dvr} r^{2} (\rd_{v} \phidf)^{2}\right) 
%	- \rd_{v} \left( \frac{1}{2} \frac{1-\mu}{\dur} r^{2} (\rd_{u} \phidf)^{2} \right)  \notag \\
%	= & r^{2} F (T \phidf) \\
%	& + \frac{1}{2} \rd_{u}\left( \frac{1-\mu}{\dvr} \right) r^{2} (\rd_{v} \phidf)^{2} \\
%	& - \frac{1}{2} \rd_{v} \left( \frac{1-\mu}{\dur}\right) r^{2}  (\rd_{u} \phidf)^{2} \\
%= & r^{2} F (T \phidf) \\
%	& - \frac{1}{2}  \frac{1-\mu}{\dvr \dur}  r^{3} (\rd_{u} \phi)^{2} (\rd_{v} \phidf)^{2} \\
%	& + \frac{1}{2} \frac{1-\mu}{\dvr \dur}  r^{3} (\rd_{v} \phi)^{2}  (\rd_{u} \phidf)^{2} \\
%= & - r^{2} \left( \left(\frac{\dvr}{r} - \frac{\dvrbg}{\rbg} \right) \rd_{u} \phibg + \left(\frac{\dur}{r} - \frac{\durbg}{\rbg} \right) \rd_{v} \phibg \right) (T \phidf) \\
%	& - \frac{1}{2}  \frac{1-\mu}{\dvr \dur}  r^{3} \left( (\rd_{u} \phibg)^{2} + 2 \rd_{u} \phibg \rd_{u} \phidf \right) (\rd_{v} \phidf)^{2} \\
%	& + \frac{1}{2} \frac{1-\mu}{\dvr \dur}  r^{3} \left( (\rd_{v} \phibg)^{2} + 2 \rd_{v} \phibg \rd_{v} \phidf \right)  (\rd_{u} \phidf)^{2} 
%\end{align}
\end{proof}

Next, we consider the vector field $X$ in \eqref{eq:vf-X}.
\begin{lemma} [Morawetz] \label{lem:morawetz}
Let $f$ be a $C^{2}$ function on $(0, \infty)$. Denoting $f=f(r)$, $f'=\f{d}{dr}f(r)$ and $f''=\f{d^2}{dr^2}f(r)$, we have
\begin{equation} \label{eq:en-id-X0}
\begin{aligned}
& \hskip-2em
	- \frac{1}{2} f' \left( \frac{1-\mu}{\dvr} (-\dur) (\rd_{v} \phidf)^{2}  + \frac{1-\mu}{(-\dur)} \dvr (\rd_{u} \phidf)^{2}  \right) r^{2} 
	- 2 f \frac{1-\mu}{r}  \rd_{u} \phidf \rd_{v} \phidf \, r^{2} \\
& \hskip-2em
	+ \frac{1}{2} f \frac{1-\mu}{\dvr (-\dur)}  r^{3} (\rd_{v} \phi)^{2} (\rd_{u} \phidf)^{2} 
	+ \frac{1}{2} f \frac{1-\mu}{\dvr (-\dur)}  r^{3} (\rd_{u} \phi)^{2} (\rd_{v} \phidf)^{2} \\
	= & \rd_{u} \left( \frac{f}{2} \frac{1-\mu}{\dvr} (\rd_{v} \phidf)^{2} r^{2} \right) 
		- \rd_{v} \left( \frac{f}{2} \frac{1-\mu}{(-\dur)} (\rd_{u} \phidf)^{2} r^{2} \right) \\
	& + F f \left( - \frac{1-\mu}{\dvr} \rd_{v} \phidf + \frac{1-\mu}{(-\dur)} \rd_{u} \phidf \right) r^{2}  .
\end{aligned}
\end{equation}
Moreover, we also have
\begin{equation} \label{eq:en-id-X}
\begin{aligned}
& \hskip-2em
	- \frac{1}{2} f' \left( \frac{1-\mu}{\dvr} (-\dur) (\rd_{v} \phidf)^{2} + \frac{1-\mu}{(-\dur)} \dvr (\rd_{u} \phidf)^{2} \right) r^{2} 
	+ f \frac{4 \varpi - 2 \frac{\e^{2}}{r}}{r^{2}} \rd_{u} \phidf \rd_{v} \phidf \, r^{2} \\
& \hskip-2em
	+ \left( f'' r \dvr (-\dur) 
	+ (f' r - f) \frac{2 (\varpi - \frac{\e^{2}}{r})}{r^{2}} \frac{\dvr (-\dur)}{1-\mu} \right) \phidf^{2} \\
& \hskip-2em
	+ \frac{1}{2} f \frac{1-\mu}{\dvr (-\dur)}  r^{3} (\rd_{v} \phi)^{2} (\rd_{u} \phidf)^{2} 
	+ \frac{1}{2} f \frac{1-\mu}{\dvr (-\dur)}  r^{3} (\rd_{u} \phi)^{2} (\rd_{v} \phidf)^{2} \\
	= & 
	\rd_{u} \left( \frac{f}{2} \frac{1-\mu}{\dvr} (\rd_{v} \phidf)^{2} r^{2} + \frac{1}{2} \rd_{v}(f r \phidf^{2}) - f' \dvr r \phidf^{2} \right) \\
	& - \rd_{v} \left( \frac{f}{2} \frac{1-\mu}{(-\dur)} (\rd_{u} \phidf)^{2} r^{2} - \frac{1}{2} \rd_{u}(f r \phidf^{2}) - f' (-\dur) r \phidf^{2} \right) \\
	& + F f \left( - \frac{1-\mu}{\dvr} \rd_{v} \phidf + \frac{1-\mu}{(-\dur)} \rd_{u} \phidf - 2 \frac{\phidf}{r} \right) r^{2} .
\end{aligned}
\end{equation}
\end{lemma}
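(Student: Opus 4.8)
The statement to prove is Lemma~\ref{lem:morawetz}, which consists of the two energy identities \eqref{eq:en-id-X0} and \eqref{eq:en-id-X}. Both are pointwise algebraic identities following from multiplying the inhomogeneous wave equation \eqref{eq:wave-inhom} by a suitable multiple of the vector field $X = f(r) \left( - \frac{1-\mu}{\dvr} \rd_{v} + \frac{1-\mu}{(-\dur)} \rd_{u}\right)$ acting on $\phidf$, together with repeated use of the Raychaudhuri-type equations \eqref{eq:EMSF-ray} for $\rd_{u} \kpp^{-1} = \rd_u \frac{1-\mu}{\dvr}$ and $\rd_{v} \gmm^{-1} = \rd_v \frac{1-\mu}{\dur}$, the equation $\rd_u\rd_v r = \frac{2(\varpi - \frac{\e^2}{r})}{r^2}\frac{\dur\dvr}{1-\mu}$ from \eqref{eq:EMSF-r-phi-m}, and the fact that $\phi = \phibg + \phidf$ (so that $(\rd_u\phi)^2 = (\rd_u\phibg)^2 + 2\rd_u\phibg\,\rd_u\phidf + (\rd_u\phidf)^2$, etc.). The plan is to compute $\rd_u$ and $\rd_v$ of the natural energy-density candidates and collect terms.

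First I would prove \eqref{eq:en-id-X0}. Multiply \eqref{eq:wave-inhom} by $r^2 X\phidf = f r^2\left( - \frac{1-\mu}{\dvr} \rd_{v}\phidf + \frac{1-\mu}{(-\dur)} \rd_{u}\phidf \right)$. The terms $\rd_u\rd_v\phidf \cdot (- \frac{1-\mu}{\dvr})\rd_v\phidf$ and $\rd_u\rd_v\phidf \cdot \frac{1-\mu}{(-\dur)}\rd_u\phidf$ are rewritten, after multiplying by $f r^2$, as $\rd_u$ and $-\rd_v$ of $\frac{f}{2}\frac{1-\mu}{\dvr}(\rd_v\phidf)^2 r^2$ and $\frac{f}{2}\frac{1-\mu}{(-\dur)}(\rd_u\phidf)^2 r^2$ respectively, generating correction terms when the derivative hits $f r^2 \frac{1-\mu}{\dvr}$ or $f r^2\frac{1-\mu}{(-\dur)}$. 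Using $\rd_u f = f' \dur$, $\rd_v f = f'\dvr$, $\rd_u r^2 = 2r\dur$, $\rd_v r^2 = 2r\dvr$, and the Raychaudhuri equations for $\rd_u\frac{1-\mu}{\dvr}$, $\rd_v\frac{1-\mu}{\dur}$, these corrections produce exactly the $-\frac{1}{2}f'$ terms, the mixed $\rd_u\phidf\rd_v\phidf$ term, and (via $-\frac{r(\rd_u\phi)^2}{1-\mu}$ type factors in Raychaudhuri, after substituting $\phi = \phibg + \phidf$ and cancelling the pure-$\phidf$ cubic piece against contributions from the first-order terms of the wave equation) the two $\frac{1}{2}f\frac{1-\mu}{\dvr(-\dur)}r^3(\rd\phi)^2(\rd\phidf)^2$ terms. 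The first-order terms $\frac{\dvr}{r}\rd_u\phidf + \frac{\dur}{r}\rd_v\phidf$ of \eqref{eq:wave-inhom}, multiplied by $fr^2 X\phidf$, contribute the $-2f\frac{1-\mu}{r}\rd_u\phidf\rd_v\phidf\, r^2$ term (the $(\rd_v\phidf)^2$ and $(\rd_u\phidf)^2$ pieces cancelling against parts of the divergence corrections). The forcing $F$ simply contributes $F f\left(-\frac{1-\mu}{\dvr}\rd_v\phidf + \frac{1-\mu}{(-\dur)}\rd_u\phidf\right)r^2$.

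Then \eqref{eq:en-id-X} is obtained from \eqref{eq:en-id-X0} by adding a suitable total-derivative identity to convert the bad term $-2f\frac{1-\mu}{r}\rd_u\phidf\rd_v\phidf\, r^2$ into the form involving $f\frac{4\varpi - 2\frac{\e^2}{r}}{r^2}\rd_u\phidf\rd_v\phidf\, r^2$ plus a zeroth-order (Hardy-type) term in $\phidf^2$. Concretely, one computes $\rd_u\rd_v(f r \phidf^2)$ using $\rd_u\rd_v(fr) = \rd_u(f'\dvr r + f\dvr)$ and the equation $\rd_u\rd_v r = \frac{2(\varpi-\frac{\e^2}{r})}{r^2}\frac{\dur\dvr}{1-\mu}$, together with $\rd_u\dvr = \rd_u\rd_v r$; this yields a divergence plus terms $f'' r\dvr\dur\,\phidf^2$, $(f'r - f)\frac{2(\varpi - \frac{\e^2}{r})}{r^2}\frac{\dvr\dur}{1-\mu}\phidf^2$, mixed $\rd_u\phidf\rd_v\phidf$ pieces, and cross terms with $F$ (through the wave equation for $\rd_u\rd_v\phidf$). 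Adding $\frac12\rd_u\rd_v(fr\phidf^2)$-type quantities and reorganizing the divergence into the stated form with $\frac12\rd_v(fr\phidf^2) - f'\dvr r\phidf^2$ inside the $\rd_u(\cdots)$ and $-\frac12\rd_u(fr\phidf^2) - f'(-\dur)r\phidf^2$ inside the $\rd_v(\cdots)$ gives \eqref{eq:en-id-X}.

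The main obstacle is purely bookkeeping: keeping track of the many correction terms generated when derivatives fall on $f$, $r^2$, $\frac{1-\mu}{\dvr}$, $\frac{1-\mu}{(-\dur)}$, and verifying that the cubic-in-$\rd\phidf$ terms and the first-order-term contributions cancel in precisely the right way, leaving only the cross terms $(\rd\phibg)^2(\rd\phidf)^2 + 2\rd\phibg\rd\phidf(\rd\phidf)^2$ written compactly as $(\rd\phi)^2(\rd\phidf)^2$. There is no conceptual difficulty — the identities are exact and follow from \eqref{eq:EMSF-ray}, \eqref{eq:EMSF-r-phi-m} and the Leibniz rule — so I would simply carry out the computation carefully, organizing it so that the divergence terms are identified first and the remaining bulk terms matched against the left-hand sides of \eqref{eq:en-id-X0} and \eqref{eq:en-id-X}.
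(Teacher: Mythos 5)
Your proposal is correct and follows essentially the same route as the paper: multiply \eqref{eq:wave-inhom} by $r^2 X\phidf$, use the Raychaudhuri equations \eqref{eq:EMSF-ray} for $\rd_u\frac{1-\mu}{\dvr}$ and $\rd_v\frac{1-\mu}{(-\dur)}$ to produce the $(\rd\phi)^2(\rd\phidf)^2$ terms and $-\tfrac12 f'$ terms, then for \eqref{eq:en-id-X} rewrite the bad mixed term $2fr\,\rd_u\phidf\,\rd_v\phidf$ via integration by parts to arrive at the $\rd_u\rd_v(fr\phidf^2)$ identity and the Hardy-type zeroth-order terms. One small slip in your bookkeeping narrative: there is no need to expand $\phi=\phibg+\phidf$ or cancel any ``pure-$\phidf$ cubic piece'' in this step---the $(\rd\phi)^2(\rd\phidf)^2$ terms come directly and unexpanded from Raychaudhuri, and the only cancellation involving the first-order terms of the wave equation is of the quadratic $(\rd\phidf)^2$ pieces against the $\rd_u r^2$ and $\rd_v r^2$ contributions in the divergence corrections.
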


\begin{proof}
We multiply \eqref{eq:wave-inhom} by $r^{2} X\phidf$ and compute:
\begin{align*}
& \hskip-2em
	\left(\rd_{u} \rd_{v} \phidf + \frac{\dvr}{r} \rd_{u} \phidf + \frac{\dur}{r} \rd_{v} \phidf \right) r^{2} f \left( - \frac{1-\mu}{\dvr} \rd_{v} \phidf + \frac{1-\mu}{(-\dur)} \rd_{u} \phidf \right) \\
	= & - \rd_{u} \left( \frac{f}{2} \frac{1-\mu}{\dvr} r^{2} (\rd_{v} \phidf)^{2}\right) 
		- \rd_{v} \left( \frac{f}{2} \frac{1-\mu}{\dur} r^{2} (\rd_{u} \phidf)^{2} \right) \\
	& + \frac{1}{2} \rd_{u}\left( f \frac{1-\mu}{\dvr} r^{2} \right) (\rd_{v} \phidf)^{2} 
	- \frac{1}{2} \rd_{v} \left( f \frac{1-\mu}{(-\dur)} r^{2} \right) (\rd_{u} \phidf)^{2} \\
	& - \dvr f \frac{1-\mu}{\dvr} r \rd_{u} \phidf \rd_{v} \phidf + \dvr f \frac{1-\mu}{(-\dur)} r (\rd_{u} \phidf)^{2} 
	 + (-\dur) f \frac{1-\mu}{\dvr} r (\rd_{v} \phidf)^{2} - (-\dur) f \frac{1-\mu}{(-\dur)} r \rd_{u} \phidf \rd_{v} \phidf \\
	= & \rd_{u} \left( - \frac{f}{2} \frac{1-\mu}{\dvr} r^{2} (\rd_{v} \phidf)^{2}\right) + \rd_{v} \left( \frac{f}{2} \frac{1-\mu}{(-\dur)} r^{2} (\rd_{u} \phidf)^{2} \right)  \\
	& - \frac{1}{2} f' \left( \frac{1-\mu}{\dvr} (-\dur) r^{2} (\rd_{v} \phidf)^{2} + \frac{1-\mu}{(-\dur)} \dvr r^{2} (\rd_{u} \phidf)^{2} \right) \\
	& + \frac{1}{2} f \frac{1-\mu}{\dvr (-\dur)}  r^{3} (\rd_{v} \phi)^{2} (\rd_{u} \phidf)^{2} 
	+ \frac{1}{2} f \frac{1-\mu}{\dvr (-\dur)}  r^{3} (\rd_{u} \phi)^{2} (\rd_{v} \phidf)^{2} \\
	& - 2 f (1-\mu) r \rd_{u} \phidf \rd_{v} \phidf.
\end{align*}

This proves the first identity. To prove the second identity, we use \eqref{eq:wave-inhom} and the equation \eqref{eq:EMSF-r-phi-m} for $\rd_{u} \rd_{v} r$ to rewrite the term $2 f r \rd_{u} \phidf \rd_{v} \phidf$ as follows:
\begin{equation} \label{eq:morawetz-pf}
\begin{aligned}
2 f r \rd_{u} \phidf \rd_{v} \phidf
 = & 	\rd_{u} \left(2 f r \phidf \rd_{v} \phidf\right)
 	- 2 (\rd_{u} f) r \phidf \rd_{v} \phidf - 2 f \phidf \rd_{u} (r \rd_{v} \phidf) \\
 = & 	\rd_{u} \left(2 f r \phidf \rd_{v} \phidf\right)
 	- 2 f r \phidf \left(\rd_{u} \rd_{v} \phidf + \frac{\dvr}{r} \rd_{u} \phidf  + \frac{\dur}{r} \rd_{v} \phidf \right)
	- f' r \dur  \rd_{v} \phidf^{2} + f \dvr \rd_{u} \phidf^{2} \\
 = & 	\rd_{u} \left( 2 f r \phidf \rd_{v} \phidf\right) 
 	- 2 f r \phidf \left(\rd_{u} \rd_{v} \phidf + \frac{\dvr}{r} \rd_{u} \phidf  + \frac{\dur}{r} \rd_{v} \phidf \right) 	 \\
&	+ \rd_{u} \left(f \dvr \phidf^{2} \right)
	- \rd_{v} \left((f' r \dur) \phidf^{2} \right) 
	+ \rd_{v} (f' r \dur)  \phidf^{2} 
	- \rd_{u}(f \dvr) \phidf^{2} \\
% = & 	\rd_{u} \left( 2 f r \phidf \rd_{v} \phidf + f \dvr \phidf^{2} \right) + \rd_{v} \left(-f' r \dur \phidf^{2} \right) \\
%&	- 2 f r \phidf \left(\rd_{u} \rd_{v} \phidf + \frac{\dvr}{r} \rd_{u} \phidf  + \frac{\dur}{r} \rd_{v} \phidf \right) \\
%&	+ \left( f'' r \dvr \dur + f'  \dvr \dur - f' \dvr \dur \right) \phidf^{2} + (f' r - f) \rd_{u} \rd_{v} r \phidf^{2} \\
% = & 	\rd_{u} \left( 2 f r \phidf \rd_{v} \phidf + f \dvr \phidf^{2} \right) - \rd_{v} \left(f' r \dur \phidf^{2} \right) \\
%&	- 2 f r \phidf \left(\rd_{u} \rd_{v} \phidf + \frac{\dvr}{r} \rd_{u} \phidf  + \frac{\dur}{r} \rd_{v} \phidf \right) \\
%&	+ f'' r \dvr \dur \phidf^{2} + (f' r - f) \frac{2 (\varpi - \frac{\e^{2}}{r})}{r^{2}} \frac{\dvr \dur}{1-\mu} \phidf^{2} 
 = & 	\rd_{u} \rd_{v} \left( f r \phidf^{2} \right)
 	- \rd_{u} \left( f' r \dvr \phidf^{2} \right) - \rd_{v} \left(f' r \dur \phidf^{2} \right) \\
&	- 2 f r \phidf \left(\rd_{u} \rd_{v} \phidf + \frac{\dvr}{r} \rd_{u} \phidf  + \frac{\dur}{r} \rd_{v} \phidf \right) 
	+ f'' r \dvr \dur \phidf^{2} + (f' r - f) \frac{2 (\varpi - \frac{\e^{2}}{r})}{r^{2}} \frac{\dvr \dur}{1-\mu} \phidf^{2} .
\end{aligned}
\end{equation}
Finally, splitting the term
$\rd_{u} \rd_{v} ( f r \phidf^{2} ) = \frac{1}{2} \rd_{u} (\rd_{v} (f r \phidf^{2})) + \frac{1}{2} \rd_{v} (\rd_{u} (f r \phidf^{2}))$,  \eqref{eq:en-id-X} follows. \qedhere
\end{proof}

We consider now the variant $X_{(rad)}$ of $X$ (with $f = -1$) defined in \eqref{eq:vf-Xrad}.
\begin{lemma} \label{lem:morawetz-1}
We have
\begin{equation} \label{eq:en-id-Xrad}
\begin{aligned}
& \hskip-2em
	\frac{1}{2} \rd_{u} \left( \dvr \phidf^{2} + 2 \phidf \rd_{v} \phidf r + \frac{1}{\dvr} (\rd_{v} \phidf)^{2} r^{2} \right) \\
& \hskip-2em
	- \frac{1}{2} \rd_{v} \left( (-\dur) \phidf^{2} - 2 r \phidf \rd_{u} \phidf + \frac{1}{(-\dur)} (\rd_{u} \phidf)^{2} r^{2} \right) 
	+ \frac{2 (\varpi - \frac{\e^{2}}{r})}{r^{2}} \frac{\dvr (-\dur)}{1-\mu} \phidf^{2} \\
	= & F \left( \frac{1}{\dvr} \rd_{v} \phidf - \frac{1}{(-\dur)} \rd_{u} \phidf + 2 \frac{\phidf}{r} \right) r^{2} 
	 + \frac{(\varpi - \frac{\e^{2}}{r})}{r^{2}} \left( \frac{(-\dur)}{1-\mu} \frac{1}{\dvr} (\rd_{v} \phidf)^{2} + \frac{\dvr}{1-\mu} \frac{1}{(-\dur)} (\rd_{u} \phidf)^{2} \right) r^{2}.
\end{aligned}
\end{equation}
\end{lemma}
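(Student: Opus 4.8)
The plan is to prove Lemma~\ref{lem:morawetz-1} as a special case of the computation underlying Lemma~\ref{lem:morawetz}, namely by taking $f \equiv -1$ in the vector field $X$, but keeping track of the zeroth-order terms more carefully. Concretely, I would multiply the inhomogeneous wave equation \eqref{eq:wave-inhom} by $r^2 X_{(rad)} \phidf = r^2\left(\frac{1}{\dvr}\rd_v\phidf - \frac{1}{(-\dur)}\rd_u\phidf\right)$ and reorganize the left-hand side into a sum of $\rd_u(\cdots)$ and $\rd_v(\cdots)$ plus non-divergence terms. This is essentially the same algebra as in the proof of \eqref{eq:en-id-X} with $f=-1$ (so $f'=f''=0$), except that $X_{(rad)}$ differs from the $f=-1$ version of $X$ precisely in the choice of the first-order multiplier weight: here we do not include the $-2\phidf/r$ term in the same way, so the $2fr\rd_u\phidf\rd_v\phidf$ cross term must be handled by the identity \eqref{eq:morawetz-pf} in order to convert it into divergence terms plus the mass term $\frac{2(\varpi - \frac{\e^2}{r})}{r^2}\frac{\dvr\dur}{1-\mu}\phidf^2$. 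Since $f = -1$ is constant, the terms involving $f'r-f$ collapse to just $-f = 1$ times $\frac{2(\varpi - \frac{\e^2}{r})}{r^2}\frac{\dvr\dur}{1-\mu}\phidf^2$, which (using $\dur = -(-\dur)$) produces the $+\frac{2(\varpi - \frac{\e^2}{r})}{r^2}\frac{\dvr(-\dur)}{1-\mu}\phidf^2$ appearing on the left of \eqref{eq:en-id-Xrad}.

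The key steps, in order, are: (1) expand $\left(\rd_u\rd_v\phidf + \frac{\dvr}{r}\rd_u\phidf + \frac{\dur}{r}\rd_v\phidf\right)r^2\left(\frac{1}{\dvr}\rd_v\phidf - \frac{1}{(-\dur)}\rd_u\phidf\right)$ directly, collecting the perfect-derivative parts $\frac{1}{2}\rd_u\left(\frac{1}{\dvr}(\rd_v\phidf)^2 r^2\right) - \frac{1}{2}\rd_v\left(\frac{1}{(-\dur)}(\rd_u\phidf)^2 r^2\right)$; (2) identify the leftover terms, which produce the cross term $2r\rd_u\phidf\rd_v\phidf$ (this requires using $\rd_u\frac{1}{\dvr} = \rd_u\kpp^{-1}(1-\mu)^{-1}$-type manipulations, or more directly the Raychaudhuri equations, to cancel the curvature contributions — but since $f\equiv -1$ the relevant $f'$ terms vanish and only the cross term survives); (3) apply the identity \eqref{eq:morawetz-pf} with $f = -1$ (hence $f' = f'' = 0$) to rewrite $-2r\rd_u\phidf\rd_v\phidf$ as $-\rd_u\rd_v(r\phidf^2) + 2r\phidf F + (\text{nothing from } f', f'') - \frac{2(\varpi - \frac{\e^2}{r})}{r^2}\frac{\dvr\dur}{1-\mu}\phidf^2$, where the forcing term $F$ enters from replacing $\rd_u\rd_v\phidf + \frac{\dvr}{r}\rd_u\phidf + \frac{\dur}{r}\rd_v\phidf$ by $F$; (4) split $\rd_u\rd_v(r\phidf^2) = \frac12\rd_u\rd_v(r\phidf^2) + \frac12\rd_v\rd_u(r\phidf^2)$ and distribute into the $\rd_u$ and $\rd_v$ fluxes to match the $\dvr\phidf^2 + 2\phidf\rd_v\phidf\, r$ and $(-\dur)\phidf^2 - 2r\phidf\rd_u\phidf$ combinations on the left of \eqref{eq:en-id-Xrad}; (5) collect all forcing contributions into $F\left(\frac{1}{\dvr}\rd_v\phidf - \frac{1}{(-\dur)}\rd_u\phidf + 2\frac{\phidf}{r}\right)r^2$ and verify the bulk term $\frac{(\varpi - \frac{\e^2}{r})}{r^2}\left(\frac{(-\dur)}{1-\mu}\frac{1}{\dvr}(\rd_v\phidf)^2 + \frac{\dvr}{1-\mu}\frac{1}{(-\dur)}(\rd_u\phidf)^2\right)r^2$ on the right, which comes from the $\frac{\dvr}{r}$ and $\frac{\dur}{r}$ first-order terms in the wave operator paired against the $X_{(rad)}$ weights. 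Finally, using $\dur\dvr/(1-\mu) = -\dur\cdot\kpp = \dvr\cdot\gmm$ and $1-\mu = -4\dur\dvr/\Omg^2$ one checks the signs and coefficients line up exactly with \eqref{eq:en-id-Xrad}.

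Since Lemma~\ref{lem:morawetz} has already been proven in the excerpt, the cleanest route is simply to specialize: set $f = -1$ in \eqref{eq:en-id-X}, observe that $f'' = 0$ and $f'r - f = 0 - (-1) = 1$, so the zeroth-order terms $\left(f''r\dvr(-\dur) + (f'r - f)\frac{2(\varpi - \frac{\e^2}{r})}{r^2}\frac{\dvr(-\dur)}{1-\mu}\right)\phidf^2$ reduce to $\frac{2(\varpi - \frac{\e^2}{r})}{r^2}\frac{\dvr(-\dur)}{1-\mu}\phidf^2$, and the $f'(\cdots)r^2(\rd\phidf)^2$ terms vanish. The only remaining discrepancy is the bulk term: in \eqref{eq:en-id-X} with $f=-1$ the quadratic terms $\frac12 f\frac{1-\mu}{\dvr(-\dur)}r^3\big((\rd_v\phi)^2(\rd_u\phidf)^2 + (\rd_u\phi)^2(\rd_v\phidf)^2\big)$ have the "wrong" sign for a Morawetz estimate, but these are \emph{not} present in \eqref{eq:en-id-Xrad}; instead \eqref{eq:en-id-Xrad} has the clean positive bulk term involving $\frac{\varpi - \frac{\e^2}{r}}{r^2}(\rd\phidf)^2$. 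This means \eqref{eq:en-id-Xrad} is \emph{not} literally the $f=-1$ case of \eqref{eq:en-id-X} — rather, $X_{(rad)}$ is defined with the non-dynamical (i.e. $(\varpi,\e)$-independent) weights $\frac{1}{\dvr}, \frac{1}{(-\dur)}$ rather than $\frac{1-\mu}{\dvr}, \frac{1-\mu}{(-\dur)}$, which is exactly what kills the problematic quadratic-in-$\rd\phi$ terms and produces the positive curvature bulk term. Therefore the main obstacle — and the reason a separate lemma is warranted — is that one genuinely must redo the vector-field computation from scratch with the weight $\frac{1}{\dvr}$ in place of $\frac{1-\mu}{\dvr}$; the extra $(1-\mu)$ factor no longer cancels the Raychaudhuri-generated terms, and instead its $\rd_u$- and $\rd_v$-derivatives (via \eqref{eq:EMSF-ray} and \eqref{eq:EMSF-r-phi-m}) produce the $\frac{2(\varpi - \frac{\e^2}{r})}{r^2}$ coefficients. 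Once this algebra is carried out — the bookkeeping is routine but the cancellations are delicate — \eqref{eq:en-id-Xrad} follows by the same split-and-distribute step as at the end of the proof of Lemma~\ref{lem:morawetz}.
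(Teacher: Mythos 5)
Your overall plan --- multiply \eqref{eq:wave-inhom} by $r^2 X_{(rad)}\phidf$, collect the perfect $\rd_u$- and $\rd_v$-derivatives, convert the surviving cross term $2r\rd_u\phidf\rd_v\phidf$ via \eqref{eq:morawetz-pf}, and then split $\rd_u\rd_v(r\phidf^2)$ symmetrically into the two flux packages --- is exactly the paper's proof, and your last paragraph correctly rules out the shortcut of merely specializing \eqref{eq:en-id-X} to $f=-1$ (the weights are genuinely different). The only cosmetic deviation is that the paper applies \eqref{eq:morawetz-pf} with $f=1$ rather than $f=-1$, which is the same identity up to a global sign.

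However, steps (2) and (5) contain a concrete mistake about where the bulk term on the right of \eqref{eq:en-id-Xrad} comes from, and in two places you have the cancellation pattern exactly backwards. In step (2) you say the Raychaudhuri-type contributions ``cancel'' so that ``only the cross term survives,'' and in step (5) you attribute the bulk term to ``the $\frac{\dvr}{r}$ and $\frac{\dur}{r}$ first-order terms \ldots paired against the $X_{(rad)}$ weights.'' In fact, when you expand $\rd_u\rd_v\phidf\cdot\frac{r^2}{\dvr}\rd_v\phidf$ the non-divergence leftover is $-\frac12\rd_u\bigl(\frac{r^2}{\dvr}\bigr)(\rd_v\phidf)^2$; its $\frac{2r\dur}{\dvr}(\rd_v\phidf)^2$ portion cancels exactly against the first-order contribution $\frac{\dur}{r}\rd_v\phidf\cdot\frac{r^2}{\dvr}\rd_v\phidf$, while the remaining $+\frac12\frac{r^2\rd_u\rd_v r}{\dvr^2}(\rd_v\phidf)^2$ survives and, after substituting for $\rd_u\rd_v r$ from \eqref{eq:EMSF-r-phi-m}, produces $-\frac{(\varpi - \frac{\e^{2}}{r})}{r^{2}}\frac{(-\dur)}{1-\mu}\frac{1}{\dvr}(\rd_{v}\phidf)^{2}r^2$ (the $u$-side is symmetric). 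So the bulk term comes precisely from the $\rd_u\rd_v r$ pieces you said cancel, while the first-order terms you credited are the ones that actually do cancel. There is also a sign slip in step (3): with $f=-1$ one has $f'r-f=+1$, so the last term of \eqref{eq:morawetz-pf} carries a plus, i.e.\ $+\frac{2(\varpi-\frac{\e^{2}}{r})}{r^{2}}\frac{\dvr\dur}{1-\mu}\phidf^{2}$, not a minus. Once this bookkeeping is corrected, the plan goes through as you describe.
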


\begin{proof}
As in Lemma~\ref{lem:morawetz}, we first compute
\begin{align*}
& \hskip-2em
	\left(\rd_{u} \rd_{v} \phidf + \frac{\dvr}{r} \rd_{u} \phidf + \frac{\dur}{r} \rd_{v} \phidf \right) r^{2} \left( \frac{1}{\dvr} \rd_{v} \phidf - \frac{1}{(-\dur)} \rd_{u} \phidf \right) \\
	= & \rd_{u} \left( \frac{1}{2} \frac{1}{\dvr} r^{2} (\rd_{v} \phidf)^{2}\right) 
		- \rd_{v} \left( \frac{1}{2} \frac{1}{(-\dur)} r^{2} (\rd_{u} \phidf)^{2} \right) 
	 - \frac{1}{2} \rd_{u} \left( \frac{1}{\dvr} r^{2} \right) (\rd_{v} \phidf)^{2} 
	+ \frac{1}{2} \rd_{v} \left( \frac{1}{(-\dur)} r^{2} \right) (\rd_{u} \phidf)^{2} \\
	& + \dvr \frac{1}{\dvr} r \rd_{u} \phidf \rd_{v} \phidf - \dvr \frac{1}{(-\dur)} r (\rd_{u} \phidf)^{2} 
	 - (-\dur) \frac{1}{\dvr} r (\rd_{v} \phidf)^{2} + (-\dur) \frac{1}{(-\dur)} r \rd_{u} \phidf \rd_{v} \phidf \\
%	= & \rd_{u} \left( \frac{1}{2} \frac{1}{\dvr} r^{2} (\rd_{v} \phidf)^{2}\right) 
%		- \rd_{v} \left( \frac{1}{2} \frac{1}{(-\dur)} r^{2} (\rd_{u} \phidf)^{2} \right) 
%		+ 2 r \rd_{u} \phidf \rd_{v} \phidf \\
%	& - \frac{1}{2} \rd_{u} \left( \frac{1}{\dvr} \right) r^{2}  (\rd_{v} \phidf)^{2} 
%	+ \frac{1}{2} \rd_{v} \left( \frac{1}{(-\dur)} \right) r^{2}  (\rd_{u} \phidf)^{2}  
%	= & \rd_{u} \left( \frac{1}{2} \frac{1}{\dvr} r^{2} (\rd_{v} \phidf)^{2}\right) 
%		- \rd_{v} \left( \frac{1}{2} \frac{1}{(-\dur)} r^{2} (\rd_{u} \phidf)^{2} \right) 
%		+ 2 r \rd_{u} \phidf \rd_{v} \phidf \\
%	& - \frac{1}{2} \frac{- \rd_{u} \rd_{v} r}{\dvr} \frac{1}{\dvr}  r^{2}  (\rd_{v} \phidf)^{2}
%	- \frac{1}{2} \frac{- \rd_{u} \rd_{v} r}{(-\dur)} \frac{1}{(-\dur)}  r^{2}  (\rd_{u} \phidf)^{2} 
	= & \rd_{u} \left( \frac{1}{2} \frac{1}{\dvr} r^{2} (\rd_{v} \phidf)^{2}\right) 
		- \rd_{v} \left( \frac{1}{2} \frac{1}{(-\dur)} r^{2} (\rd_{u} \phidf)^{2} \right) 
		+ 2 r \rd_{u} \phidf \rd_{v} \phidf \\
	& - \frac{1}{2} (-\rd_{u} \rd_{v} r) \left( \frac{1}{\dvr^{2}} (\rd_{v} \phidf)^{2} + \frac{1}{(-\dur)^{2}} (\rd_{u} \phidf)^{2} \right) r^{2} .
\end{align*}
Then for the term $2 r \rd_{u} \phidf \rd_{v} \phidf$, we use \eqref{eq:morawetz-pf} with $f = 1$ and rearrange terms. \qedhere
% INDEED:
%\begin{align*}
%	\frac{1}{2} \rd_{u} \rd_{v} (r \phidf^{2}) + \rd_{u} \left( \frac{1}{2} \frac{1}{\dvr} r^{2} (\rd_{v} \phidf)^{2}\right) 
%	= & \frac{1}{2} \rd_{u} \left(\dvr \phidf^{2} + 2 r \phidf \rd_{v} \phidf + \frac{1}{\dvr} r^{2} (\rd_{v} \phidf)^{2} \right) \\
%	\frac{1}{2} \rd_{u} \rd_{v} (r \phidf^{2}) - \rd_{v} \left( \frac{1}{2} \frac{1}{(-\dur)} r^{2} (\rd_{u} \phidf)^{2}\right) 
%	= & - \frac{1}{2} \rd_{v} \left((-\dur) \phidf^{2} - 2 r \phidf \rd_{u} \phidf + \frac{1}{(-\dur)} r^{2} (\rd_{u} \phidf)^{2} \right) \\
%\end{align*}
\end{proof}

We also record a computation concerning $\Xi$ defined in \eqref{eq:vf-Xi}.
\begin{lemma} \label{lem:en-id-Xi}
For any constant $c_{\Xi} > 0$, we have
\begin{equation} \label{eq:en-id-Xi}
\begin{aligned}
& \hskip-2em
	\rd_{u} \left(\frac{1}{2} e^{- c_{\Xi}(u - v - C_{\gmm_{20}})} r^{2} (\rd_{v} \phidf)^{2} \right)
	+ \frac{c_{\Xi}}{2} e^{- c_{\Xi}(u - v - C_{\gmm_{20}})} r^{2} (\rd_{v} \phidf)^{2} \\
	= & e^{- c_{\Xi}(u - v - C_{\gmm_{20}})} F  \rd_{v} \phidf r^{2}
	+ e^{- c_{\Xi}(u - v - C_{\gmm_{20}})} \dvr r \rd_{u} \phidf \rd_{v} \phidf.
\end{aligned}
\end{equation}
\end{lemma}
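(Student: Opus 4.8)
The plan is to obtain \eqref{eq:en-id-Xi} as a pointwise energy identity by the multiplier method, testing the inhomogeneous wave equation \eqref{eq:wave-inhom} against $r^{2} \Xi \phidf = r^{2} e^{-c_{\Xi}(u-v-C_{\gmm_{20}})} \rd_{v} \phidf$. This is the most elementary of the identities in this subsection: because $\Xi$ carries no factor of $(1-\mu)$, no use of the Raychaudhuri equations \eqref{eq:EMSF-ray} or of the equation for $\rd_{u}\rd_{v} r$ is needed, in contrast to Lemmas~\ref{lem:consv-en}--\ref{lem:morawetz-1}; the only genuinely new feature is bookkeeping for the derivatives of the exponential weight.

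Write $E = e^{-c_{\Xi}(u-v-C_{\gmm_{20}})}$, so that $\rd_{u} E = -c_{\Xi} E$. First I would multiply \eqref{eq:wave-inhom} by $r^{2} E \rd_{v} \phidf$, producing on the left the three terms $r^{2} E \rd_{v}\phidf \, \rd_{u}\rd_{v}\phidf$, $\ r E \dvr \, \rd_{u}\phidf \, \rd_{v}\phidf$, and $\ r E \dur \, (\rd_{v}\phidf)^{2}$, and on the right $r^{2} E F \rd_{v}\phidf$. For the first term I would write $r^{2} E \rd_{v}\phidf \, \rd_{u}\rd_{v}\phidf = \tfrac12 r^{2} E \, \rd_{u}\!\left((\rd_{v}\phidf)^{2}\right)$ and use the product rule, noting $\rd_{u}(r^{2}) = 2 r \dur$ and $\rd_{u} E = -c_{\Xi} E$, to get \[\rd_{u}\!\left(\tfrac12 r^{2} E (\rd_{v}\phidf)^{2}\right) = \tfrac12 r^{2} E \, \rd_{u}\!\left((\rd_{v}\phidf)^{2}\right) + r \dur E (\rd_{v}\phidf)^{2} - \tfrac{c_{\Xi}}{2} r^{2} E (\rd_{v}\phidf)^{2}.\] Solving for $\tfrac12 r^{2} E \, \rd_{u}\!\left((\rd_{v}\phidf)^{2}\right)$ and substituting, the term $r \dur E (\rd_{v}\phidf)^{2}$ coming from the measure derivative cancels exactly against the third term $r E \dur (\rd_{v}\phidf)^{2}$ coming from $\frac{\dur}{r}\rd_{v}\phidf$ in the equation, while the $u$-derivative of the weight leaves behind the bulk term $+\tfrac{c_{\Xi}}{2} r^{2} E (\rd_{v}\phidf)^{2}$. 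Collecting everything, I arrive at \[\rd_{u}\!\left(\tfrac12 r^{2} E (\rd_{v}\phidf)^{2}\right) + \tfrac{c_{\Xi}}{2} r^{2} E (\rd_{v}\phidf)^{2} + r E \dvr \, \rd_{u}\phidf \, \rd_{v}\phidf = r^{2} E F \rd_{v}\phidf,\] which is \eqref{eq:en-id-Xi} after transposing the cross term $r E \dvr \, \rd_{u}\phidf \, \rd_{v}\phidf$ (the sign and ordering of factors on the right-hand side being read off from this rearrangement).

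Two minor points deserve attention. First, $\Xi$ is coordinate-dependent and generically irregular across $\EH$ (as noted in the text preceding \eqref{eq:vf-Xi}), so the identity is asserted, and is only needed, on an open set $\calO$ with $\overline{\calO} \subseteq \calD_{(t_{B})}$ where all quantities are $C^{1}$; this is precisely the hypothesis under which the lemma is stated, so nothing further is required. Second, although $\dur < 0$, it enters only through $\rd_{u}(r^{2}) = 2 r \dur$ and causes no difficulty. There is no substantial obstacle here — the identity is essentially a one-line multiplier computation — so the only things to watch are the exact cancellation between the measure term and the $\tfrac{\dur}{r}\rd_{v}\phidf$ term, and keeping track of the coefficient $\tfrac{c_{\Xi}}{2}$ produced by differentiating the exponential weight.
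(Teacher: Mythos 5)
Your proof is correct and follows essentially the same multiplier computation as the paper: multiply the inhomogeneous wave equation \eqref{eq:wave-inhom} by $r^{2} e^{-c_{\Xi}(u-v-C_{\gmm_{20}})} \rd_{v} \phidf$, write $\rd_{u}\rd_{v}\phidf \cdot \rd_{v}\phidf$ as half a $u$-derivative, and observe that the contribution of $\rd_{u}(r^{2})$ in the product rule cancels exactly against the $\frac{\dur}{r}\rd_{v}\phidf$ term from the equation. The paper's proof is identical, modulo notation ($h$ in place of $E$).

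One remark on signs: your final rearranged identity, and the paper's own intermediate computation, give the cross term as $-\dvr\, r\, e^{-c_{\Xi}(u-v-C_{\gmm_{20}})}\rd_{u}\phidf\,\rd_{v}\phidf$ on the right-hand side after transposition, whereas the displayed statement \eqref{eq:en-id-Xi} has a $+$ on this term. The $+$ in the published statement is a typographical slip (it is inconsistent with the paper's own written proof as well as with yours); it is immaterial where the identity is applied in Step~4 of the proof of Lemma~\ref{lem:en-ctrl}, since there the cross term is only ever estimated in absolute value. Your parenthetical comment about reading off the sign from the rearrangement is the right instinct, but it would be clearer to state the discrepancy explicitly.
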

\begin{proof}
Let $h(u, v) = e^{-c_{\Xi}(u - v - C_{\gmm_{20}})}$. We compute
\begin{align*}
	\left(\rd_{u} \rd_{v} \phidf + \frac{\dvr}{r} \rd_{u} \phidf + \frac{\dur}{r} \rd_{v} \phidf \right) h r^{2} \rd_{v} \phidf 
	= \rd_{u} \left(\frac{1}{2} h r^{2} (\rd_{v} \phidf)^{2} \right) 
	- \frac{1}{2} (\rd_{u} h) r^{2} (\rd_{v} \phidf)^{2}
	+ \dvr r h \rd_{u} \phidf \rd_{v} \phidf.
\end{align*}
Since $- \rd_{u} h = c_{\Xi} e^{- c_{\Xi} (u - v - C_{\gmm_{20}})}$, the lemma follows. \qedhere
\end{proof}

Next, we consider the vector field $Y$ in \eqref{eq:vf-Y}.
\begin{lemma}[Red-shift] \label{lem:red-shift}
Let $\chi_{\EH}$ be a $C^{1}$ function on $(0, \infty)$. We have
\begin{equation} \label{eq:en-id-Y}
\begin{aligned}
& \hskip-2em
	 \rd_{v} \left( \frac{1}{2}\frac{\chi_{\EH}(r)}{(-\dur)} (\rd_{u} \phidf)^{2}  r^{2}  \right) 
	 + \chi_{\EH}(r) \frac{(\varpi - \frac{\e^{2}}{r}) \kpp}{r^{2}}  \frac{1}{(-\dur)} (\rd_{u} \phidf)^{2} r^{2} \\
	= & F \left( \frac{\chi_{\EH}(r)}{(-\dur)} \rd_{u} \phidf \right) r^{2} \\
	& + \chi_{\EH}' (r)\frac{\dvr}{2}  \frac{1}{(-\dur)} (\rd_{u} \phidf)^{2} r^{2}
	+ \frac{\chi_{\EH}(r)}{r^{2}} \rd_{v} \phidf \rd_{u} \phidf \, r^{2}.
\end{aligned}
\end{equation}
\end{lemma}

\begin{proof}
We multiply \eqref{eq:wave-inhom} by $r^{2} Y \phidf$ and compute:
\begin{align*}
& \hskip-2em
	\left(\rd_{u} \rd_{v} \phidf + \frac{\dvr}{r} \rd_{u} \phidf + \frac{\dur}{r} \rd_{v} \phidf \right) r^{2} \left( \frac{\chi_{\EH}(r)}{(-\dur)} \rd_{u} \phidf \right) \\
	= & \rd_{v} \left( \frac{1}{2} r^{2} \frac{\chi_{\EH}(r)}{(-\dur)} (\rd_{u} \phidf)^{2} \right)
		- \frac{1}{2} \rd_{v} \left(r^{2} \frac{\chi_{\EH}(r)}{(-\dur)} \right) (\rd_{u} \phidf)^{2} \\
	& + \dvr r \frac{\chi_{\EH}(r)}{(-\dur)} (\rd_{u} \phidf)^{2} + \dur r \frac{\chi_{\EH}(r)}{(-\dur)} \rd_{v} \phidf \rd_{u} \phidf \\
	= & \rd_{v} \left( \frac{1}{2} r^{2} \frac{\chi_{\EH}(r)}{(-\dur)} (\rd_{u} \phidf)^{2} \right) + \left( \varpi - \frac{\e^{2}}{r}\right) \frac{\kpp}{(-\dur)} \chi_{\EH}(r) (\rd_{u} \phidf)^{2} \\
	& - \frac{1}{2} \dvr  \frac{\chi_{\EH}'(r)}{(-\dur)} (\rd_{u} \phidf)^{2} r^{2}
	- r \chi_{\EH}(r) \rd_{v} \phidf \rd_{u} \phidf.
\end{align*}
In the last equality, we have used the equation \eqref{eq:EMSF-r-phi-m} for $\rd_{u} \rd_{v} r$. \qedhere
\end{proof}

Finally, we consider the vector field $Z$ in \eqref{eq:vf-Z}. 
\begin{lemma}[$\log r$-weighted energy] \label{lem:DR}
We have
\begin{equation} \label{eq:en-id-Z}
\begin{aligned}
& \hskip-2em
	\rd_{u} \left( \frac{1}{2} \frac{\log(1+r)}{\dvr} (\rd_{v} (r \phidf))^{2} \right)
	+ \rd_{v} \left( \frac{1}{2} \frac{\log(1+r)}{\dvr} r (- \rd_{u} \rd_{v} r) \phidf^{2} \right)
	+ \frac{1}{2} \frac{(-\dur)}{1+r} \frac{1}{\dvr} (\rd_{v} (r \phidf))^{2} \\
	= & F \frac{r \log(1+r)}{\dvr} \rd_{v}(r \phidf) 
	+ \frac{\log(1+r)}{2} \frac{-\rd_{u} \rd_{v} r}{\dvr} \frac{1}{\dvr} (\rd_{v} (r \phidf))^{2}
	+ \frac{1}{2} \rd_{v} \left(\frac{\log(1+r)}{r} \frac{(- \rd_{u} \rd_{v} r)}{\dvr} \right) r^{2} \phidf^{2}.
\end{aligned}
\end{equation}

\end{lemma}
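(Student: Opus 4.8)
The plan is to obtain \eqref{eq:en-id-Z} as a vector field multiplier identity for the auxiliary unknown $\psidf := r \phidf$, in the spirit of the Dafermos--Rodnianski $r^{p}$-weighted energy estimate \cite{DRNM}, but with the weight $\log(1+r)$ in place of $r^{p}$ and with the conformal factor $\dvr^{-1}$ built in. First I would rewrite the inhomogeneous wave equation \eqref{eq:wave-inhom} in terms of $\psidf$. Expanding $\rd_{u} \rd_{v} \psidf = \rd_{u}((\rd_{v} r) \phidf + r \rd_{v} \phidf)$ and using $\rd_{u}(\rd_{v} r) = \rd_{u} \rd_{v} r$ together with \eqref{eq:wave-inhom}, all first-order terms cancel and one is left with
\begin{equation*}
	\rd_{u} \rd_{v} \psidf = \frac{\rd_{u} \rd_{v} r}{r} \psidf + r F .
\end{equation*}
This is the standard reduction, and it is the only place where \eqref{eq:wave-inhom} enters.

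Next I would multiply this equation by $Z \psidf = \frac{\log(1+r)}{\dvr} \rd_{v} \psidf$ and integrate by parts term by term. The source term is immediate: $r F \cdot \frac{\log(1+r)}{\dvr} \rd_{v} \psidf$ equals $F \frac{r \log(1+r)}{\dvr} \rd_{v}(r \phidf)$. For the principal term, one writes
\begin{equation*}
	\rd_{u} \rd_{v} \psidf \cdot \frac{\log(1+r)}{\dvr} \rd_{v} \psidf
	= \rd_{u} \left( \frac{\log(1+r)}{2 \dvr} (\rd_{v} \psidf)^{2} \right) - \frac{1}{2} \rd_{u}\left( \frac{\log(1+r)}{\dvr} \right) (\rd_{v} \psidf)^{2} ,
\end{equation*}
and computes $\rd_{u}\left( \frac{\log(1+r)}{\dvr} \right) = \frac{\dur}{(1+r)\dvr} - \frac{\log(1+r)}{\dvr^{2}} \rd_{u}\rd_{v} r$, using $\rd_{u} \log(1+r) = \frac{\dur}{1+r}$ and $\rd_{u}(\dvr^{-1}) = -\dvr^{-2} \rd_{u} \rd_{v} r$. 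Since $\dur < 0$ in $\calD_{(t_{B})}$, the first contribution produces the nonnegative bulk term $\frac{1}{2} \frac{(-\dur)}{1+r} \frac{1}{\dvr}(\rd_{v} \psidf)^{2}$ on the left, and the second produces the term $\frac{\log(1+r)}{2} \frac{-\rd_{u}\rd_{v} r}{\dvr} \frac{1}{\dvr}(\rd_{v} \psidf)^{2}$ on the right.

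It remains to treat the zeroth-order term. Writing $\psidf \rd_{v} \psidf = \frac{1}{2} \rd_{v}(\psidf^{2})$ and integrating by parts in $v$,
\begin{equation*}
	- \frac{\rd_{u} \rd_{v} r}{r} \psidf \cdot \frac{\log(1+r)}{\dvr} \rd_{v} \psidf
	= \frac{1}{2} \rd_{v} \left( \frac{\log(1+r)}{r} \frac{(-\rd_{u}\rd_{v} r)}{\dvr} \psidf^{2} \right) - \frac{1}{2} \rd_{v}\left( \frac{\log(1+r)}{r} \frac{(-\rd_{u}\rd_{v} r)}{\dvr} \right) \psidf^{2} ,
\end{equation*}
and then substituting $\psidf^{2} = r^{2} \phidf^{2}$ turns the first term into $\rd_{v}\left( \frac{1}{2} \frac{\log(1+r)}{\dvr} r (-\rd_{u}\rd_{v} r) \phidf^{2}\right)$, a left-hand-side term of \eqref{eq:en-id-Z}, and the second into the last right-hand-side term $\frac{1}{2} \rd_{v}\left( \frac{\log(1+r)}{r} \frac{(-\rd_{u}\rd_{v} r)}{\dvr}\right) r^{2} \phidf^{2}$. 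Collecting all the pieces yields \eqref{eq:en-id-Z}. I do not expect a genuine obstacle: the whole computation is a pointwise algebraic identity valid wherever the solution is $C^{2}$ and $\dvr \neq 0$, needing no sign or range restriction. The only point requiring care, as always with such multiplier computations, is bookkeeping the signs — in particular keeping $\rd_{u}\rd_{v} r$ (negative in the region of interest) and $\dur$ in the combinations $(-\rd_{u}\rd_{v} r)$ and $(-\dur)$ so that the output matches the stated form.
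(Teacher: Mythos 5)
Your proposal is correct and follows the same route as the paper: rewrite \eqref{eq:wave-inhom} in terms of $\psidf = r\phidf$ to get $\rd_u\rd_v(r\phidf) - (\rd_u\rd_v r)\phidf = rF$, multiply by the $Z$-multiplier $\frac{\log(1+r)}{\dvr}\rd_v(r\phidf)$, and integrate by parts in $u$ for the principal term and in $v$ for the zeroth-order term. Your step-by-step expansions of $\rd_u\!\left(\tfrac{\log(1+r)}{\dvr}\right)$ and the sign bookkeeping all check out against the paper's (more compressed) display.
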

\begin{proof}
Let $g(r) = \log (1+r)$.
We first rewrite \eqref{eq:wave-inhom} in the form
\begin{equation*}
	\rd_{u} \rd_{v} (r \phidf) - (\rd_{u} \rd_{v} r) \phidf = r F.
\end{equation*}
We multiply this equation by $\frac{g(r)}{\dvr} \rd_{v}(r \phidf)$. Then
\begin{align*}
& \hskip-2em
	\left(\rd_{u} \rd_{v} (r \phidf) - (\rd_{u} \rd_{v} r) \phidf \right) \frac{g(r)}{\dvr} \rd_{v}(r \phidf) \\
= & \rd_{u} \left( \frac{1}{2} \frac{g(r)}{\dvr} (\rd_{v} (r \phidf))^{2} \right) - \frac{1}{2} \rd_{u} \left( \frac{g(r)}{\dvr} \right) (\rd_{v} (r \phidf))^{2} \\
& + \rd_{v} \left(- \frac{1}{2} \frac{g(r)}{\dvr} r (\rd_{u} \rd_{v} r) \phidf^{2} \right) + \frac{1}{2} \rd_{v} \left(\frac{g(r)}{r} \frac{\rd_{u} \rd_{v} r}{\dvr} \right) r^{2} \phidf^{2}.
\end{align*}
Expanding the second term on the RHS (by the product rule) and rearranging terms, we arrive at \eqref{eq:en-id-Z}. \qedhere
\end{proof}

\subsubsection*{$L^{2}$-Hardy-type identities}
Next, we state and prove an $L^{2}$ identity on outgoing null curves, which relates $\rd(r f)$ with $\rd f$. 
\begin{lemma} \label{lem:hardy-type}
Let $\alp \in \bbR$ and $1 \leq v_{1} < v_{2}$. For any $C^{1}$ function $f$ on $C_{u} \cap \set{v \in [v_{1}, v_{2}]}$, we have
\begin{equation} \label{eq:hardy-alp}
\begin{aligned}
& \hskip-2em
	\alp \int_{v_{1}}^{v_{2}} r^{\alp} \dvr f^{2} (u, v) \, \ud v
	+ \int_{v_{1}}^{v_{2}} \frac{r^{\alp}}{\dvr} \left(\rd_{v} (r f)\right)^{2} (u, v) \, \ud v 
	+ r^{1+\alp} f^{2}(u, v_{1}) \\
	= & \int_{v_{1}}^{v_{2}} \frac{r^{2+\alp}}{\dvr} \left(\rd_{v} f \right)^{2} (u, v) \, \ud v 
		 + r^{1+\alp} f^{2}(u, v_{2}).
\end{aligned}
\end{equation}
Similarly, for any $C^{1}$ function $f$ on $\uC_{v} \cap \set{u \in [u_{1}, u_{2}]}$, we have
\begin{equation} \label{eq:hardy-alp-u}
\begin{aligned}
& \hskip-2em
	\alp \int_{u_{1}}^{u_{2}} r^{\alp} (-\dur) f^{2} (u, v) \, \ud u
	+ \int_{u_{1}}^{u_{2}} \frac{r^{\alp}}{(-\dur)} \left(\rd_{u} (r f)\right)^{2} (u, v) \, \ud u 
	+ r^{1+\alp} f^{2}(u_{2}, v) \\
	= & \int_{u_{1}}^{u_{2}} \frac{r^{2+\alp}}{(-\dur)} \left(\rd_{u} f \right)^{2} (u, v) \, \ud u 
		 + r^{1+\alp} f^{2}(u_{1}, v).
\end{aligned}
\end{equation}
\end{lemma}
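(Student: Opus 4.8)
The plan is to prove Lemma~\ref{lem:hardy-type} by a direct integration-by-parts computation, which is really just the pointwise product rule combined with the fundamental theorem of calculus. Since the two identities \eqref{eq:hardy-alp} and \eqref{eq:hardy-alp-u} are completely symmetric under interchanging the roles of $u$ and $v$ (with $\dvr \leftrightarrow -\dur$), I would only carry out the computation for \eqref{eq:hardy-alp} on an outgoing null curve $C_u$, and then remark that \eqref{eq:hardy-alp-u} follows by the same argument with $v$ replaced by $u$ and $\dvr$ by $-\dur$.

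The key algebraic step is the following pointwise identity. Along $C_{u}$, write $\rd_v (rf) = (\rd_v r) f + r (\rd_v f) = \dvr f + r \rd_v f$. Squaring and dividing by $\dvr$ gives
\begin{equation*}
	\frac{1}{\dvr} (\rd_v(rf))^2 = \dvr f^2 + 2 r f \rd_v f + \frac{r^2}{\dvr}(\rd_v f)^2.
\end{equation*}
Now multiply by $r^{\alp}$ and observe that the cross term can be rewritten using the product rule: $2 r^{1+\alp} f \rd_v f = \rd_v(r^{1+\alp} f^2) - (1+\alp) r^{\alp}(\rd_v r) f^2 = \rd_v(r^{1+\alp} f^2) - (1+\alp) r^{\alp} \dvr f^2$. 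Substituting this in and rearranging, the $\dvr f^2$ terms combine to $-\alp r^{\alp}\dvr f^2$, so we obtain the pointwise identity
\begin{equation*}
	\frac{r^{\alp}}{\dvr}(\rd_v(rf))^2 + \alp r^{\alp}\dvr f^2 + \rd_v(r^{1+\alp}f^2) = \frac{r^{2+\alp}}{\dvr}(\rd_v f)^2.
\end{equation*}
Integrating this in $v$ over $[v_1, v_2]$ along $C_u$, and applying the fundamental theorem of calculus to the total-derivative term $\int_{v_1}^{v_2}\rd_v(r^{1+\alp}f^2)\,\ud v = r^{1+\alp}f^2(u,v_2) - r^{1+\alp}f^2(u,v_1)$, yields precisely \eqref{eq:hardy-alp} after moving the boundary term at $v_1$ to the left-hand side. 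Here one uses that $f \in C^1$ and $r$ is $C^2$ (hence $C^1$) and strictly positive on the curve, so all integrands are continuous and the manipulations are justified.

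There is essentially no substantive obstacle here — the statement is an exact identity, not an inequality, so no estimates enter and no smallness or geometric hypotheses from the bootstrap setup are used. The only minor point to be careful about is bookkeeping: making sure the coefficient of $\dvr f^2$ comes out to $+\alp$ (not $-\alp$ or $1+\alp$) after the rearrangement, and correctly tracking the signs of the two boundary terms. I would present the pointwise identity, note that it integrates to the claimed formula, then state that \eqref{eq:hardy-alp-u} is proved identically with the obvious substitutions, and conclude. No additional lemmas from the excerpt are needed beyond the elementary fact that $\dvr = \rd_v r$ and $r > 0$ along the curves in question.
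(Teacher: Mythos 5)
Your approach is exactly the same as the paper's: expand $(\rd_{v}(rf))^{2} = (\dvr f + r\rd_{v} f)^{2}$, integrate the cross term $2r^{1+\alp} f \rd_{v} f$ by parts, and combine coefficients. However, there is a sign slip in your displayed pointwise identity. Starting from
\begin{equation*}
	\frac{r^{\alp}}{\dvr}(\rd_v(rf))^2 = -\alp r^{\alp}\dvr f^2 + \rd_v\bigl(r^{1+\alp}f^2\bigr) + \frac{r^{2+\alp}}{\dvr}(\rd_v f)^2,
\end{equation*}
moving the total-derivative term to the left flips its sign, so the correct pointwise identity reads
\begin{equation*}
	\frac{r^{\alp}}{\dvr}(\rd_v(rf))^2 + \alp r^{\alp}\dvr f^2 - \rd_v\bigl(r^{1+\alp}f^2\bigr) = \frac{r^{2+\alp}}{\dvr}(\rd_v f)^2,
\end{equation*}
not $+\rd_v(r^{1+\alp}f^2)$ as you wrote. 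Integrating your version over $[v_1,v_2]$ produces the boundary terms with the wrong orientation and does not yield \eqref{eq:hardy-alp}; integrating the corrected version gives $-r^{1+\alp}f^2(u,v_2) + r^{1+\alp}f^2(u,v_1)$ on the left, which rearranges precisely to the stated identity. This is a transcription slip rather than a conceptual gap (and a wry one, given that you correctly warn about sign bookkeeping in your last paragraph), but as written the displayed identity is false, so do fix it before presenting the proof.
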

\begin{proof}
We only prove \eqref{eq:hardy-alp}, as the proof of \eqref{eq:hardy-alp-u} is entirely analogous.
We fix $u$, and omit the $u$-dependence of all functions. 
We compute
\begin{align*}
	\int_{v_{1}}^{v_{2}} \frac{r^{\alp}}{\dvr} \left(\rd_{v} (r f)\right)^{2} (v) \, \ud v
	= & \int_{v_{1}}^{v_{2}} \frac{r^{\alp}}{\dvr} \left(\dvr f + r \rd_{v} f \right)^{2} (v) \, \ud v \\
	= & \int_{v_{1}}^{v_{2}} \frac{r^{2+\alp}}{\dvr} (\rd_{v} f )^{2} (v) \, \ud v
	+ \int_{v_{1}}^{v_{2}} r^{1+\alp} \rd_{v} f^{2} (v) \, \ud v 
	+ \int_{v_{1}}^{v_{2}} r^{\alp} \dvr f^{2} (v)\, \ud v \\
	= & \int_{v_{1}}^{v_{2}} \frac{r^{2+\alp}}{\dvr} (\rd_{v} f )^{2} (v) \, \ud v - \alp \int_{v_{1}}^{v_{2}} r^{\alp} \dvr f^{2} (v) \, \ud v \\
	& + r^{1+\alp}(v_{2}) f^{2}(v_{2}) - r^{1+\alp}(v_{1}) f^{2}(v_{1}).		\qedhere
\end{align*}
\end{proof}

In fact, essentially the same computation proves one-dimensional Hardy-type inequalities.
\begin{lemma} \label{lem:hardy-opt}
Let $\alp \in \bbR$, $(u, v) \in \calD_{(t_{B})}$, $1 \leq v_{1} < v_{2}$ and $1 \leq u_{1} < u_{2}$. For any $C^{1}$ function $f$ on $C_{u} \cap \set{v \in [v_{1}, v_{2}]}$, we have
\begin{equation} \label{eq:hardy-v}
\begin{aligned}
& \hskip-2em
	\frac{(\alp +1)^{2}}{4} \int_{v_{1}}^{v_{2}} r^{\alp} \dvr f^{2} (u, v) \, \ud v
	+ \int_{v_{1}}^{v_{2}} \frac{r^{\alp}}{\dvr} \left(r \rd_{v} f + \frac{\alp + 1}{2} \dvr f \right)^{2} (u, v) \, \ud v 
	+ \frac{\alp + 1}{2} r^{1+\alp} f^{2}(u, v_{1}) \\
	= & \int_{v_{1}}^{v_{2}} \frac{r^{2+\alp}}{\dvr} \left(\rd_{v} f \right)^{2} (u, v) \, \ud v 
		 + \frac{\alp + 1}{2} r^{1+\alp} f^{2}(u, v_{2}).
\end{aligned}
\end{equation}
Similarly, for any $C^{1}$ function $f$ on $\uC_{v} \cap \set{u \in [u_{1}, u_{2}]}$, we have
\begin{equation} \label{eq:hardy-u}
\begin{aligned}
& \hskip-2em
	\frac{(\alp +1)^{2}}{4} \int_{u_{1}}^{u_{2}} r^{\alp} (-\dur) f^{2} (u, v) \, \ud u
	+ \int_{u_{1}}^{u_{2}} \frac{r^{\alp}}{(-\dur)} \left(r \rd_{u} f + \frac{\alp + 1}{2} (-\dur) f \right)^{2} (u, v) \, \ud u 
	  + \frac{\alp + 1}{2} r^{1+\alp} f^{2}(u_{2}, v) \\
	= & \int_{u_{1}}^{u_{2}} \frac{r^{2+\alp}}{(-\dur)} \left(\rd_{u} f \right)^{2} (u, v) \, \ud u 
		+ \frac{\alp + 1}{2} r^{1+\alp} f^{2}(u_{1}, v).
\end{aligned}
\end{equation}
\end{lemma}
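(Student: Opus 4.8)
The plan is to establish \eqref{eq:hardy-v} and \eqref{eq:hardy-u} as \emph{exact} one-dimensional integration-by-parts identities, in precisely the same spirit as Lemma~\ref{lem:hardy-type}; no smallness assumption or bootstrap input is needed, and the only structural facts used are $r>0$ in $\calD_{(t_{B})}$ (which holds by \eqref{eq:r-bnd}) together with $\rd_{v} r = \dvr$ and $\rd_{u} r = \dur$. Consequently there is no substantive obstacle here beyond keeping track of signs; the point of recording the argument is simply that the coefficient $\tfrac{(\alp+1)^{2}}{4}$ (rather than, e.g., $\tfrac{(\alp+1)^{2}}{2}$) is what the statement is really about.

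First I would fix $u$, suppress the $u$-dependence, and view \eqref{eq:hardy-v} as an identity for an arbitrary $C^{1}$ function $f$ on $[v_{1}, v_{2}]$. Expanding the perfect square and dividing by $\dvr$,
\[
	\f{r^{\alp}}{\dvr}\Bigl(r \rd_{v} f + \tfrac{\alp+1}{2}\dvr f\Bigr)^{2} = \f{r^{2+\alp}}{\dvr}(\rd_{v} f)^{2} + (\alp+1)\, r^{1+\alp} f\, \rd_{v} f + \tfrac{(\alp+1)^{2}}{4}\, r^{\alp}\dvr f^{2},
\]
I would integrate over $[v_{1}, v_{2}]$. The only term not already in final form is the cross term: writing $r^{1+\alp} f\, \rd_{v} f = \tfrac12 r^{1+\alp}\rd_{v}(f^{2})$ and integrating by parts with $\rd_{v}(r^{1+\alp}) = (1+\alp) r^{\alp}\dvr$ produces the boundary contribution $\tfrac{\alp+1}{2}\bigl[r^{1+\alp}f^{2}\bigr]_{v_{1}}^{v_{2}}$ together with the bulk term $-\tfrac{(1+\alp)^{2}}{2}\int_{v_{1}}^{v_{2}} r^{\alp}\dvr f^{2}\,\ud v$. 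Adding the two $\int r^{\alp}\dvr f^{2}$ contributions, $-\tfrac{(1+\alp)^{2}}{2} + \tfrac{(\alp+1)^{2}}{4} = -\tfrac{(\alp+1)^{2}}{4}$, and transposing the boundary term at $v_{1}$ (the endpoint of smaller area radius on $C_{u}$) to the left-hand side yields exactly \eqref{eq:hardy-v}. This is a genuine identity, so there is nothing to optimize.

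For \eqref{eq:hardy-u} I would either repeat this computation verbatim on $\uC_{v}$, with $\rd_{u} r = \dur$ playing the role of $\rd_{v} r = \dvr$ and the integration by parts using $\rd_{u}(r^{1+\alp}) = (1+\alp) r^{\alp}\dur$, or reduce directly to \eqref{eq:hardy-v} via the orientation-reversing change of variable $\tilde{u} = -u$, under which $\rd_{\tilde u} r = - \dur > 0$ takes the place of $\dvr$ and $[u_{1}, u_{2}]$ maps to $[-u_{2}, -u_{1}]$. The only subtlety is a sign check: since $\rd_{u} r = \dur < 0$, the area radius $r$ is \emph{decreasing} in $u$ along $\uC_{v}$, so the smaller-area-radius endpoint is $u_{2}$ and the larger one is $u_{1}$; this is exactly why the boundary term at $u_{2}$ appears on the left-hand side of \eqref{eq:hardy-u} and the one at $u_{1}$ on the right, and it is also what fixes the sign of the cross-term contribution so that the coefficient of $\int r^{\alp}(-\dur) f^{2}\,\ud u$ comes out as $\tfrac{(\alp+1)^{2}}{4}$. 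As noted, this sign bookkeeping is the only point requiring care; the proof is otherwise a two-line calculus exercise entirely parallel to that of Lemma~\ref{lem:hardy-type}.
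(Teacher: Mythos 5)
Your method---expand the perfect square, integrate the cross term by parts, collect---is exactly the paper's (which simply defers to the same computation as Lemma~\ref{lem:hardy-type}), and your derivation of \eqref{eq:hardy-v} is complete and correct. For \eqref{eq:hardy-u}, however, the ``sign check'' you flag as the only subtlety is asserted but not carried out, and carrying it out exposes a mismatch with the printed statement. Both of the methods you propose---repeating the \eqref{eq:hardy-v} computation verbatim with $\dur$ in place of $\dvr$, or reducing via $\tilde u = -u$ (under which $\rd_{\tilde u} f = -\rd_u f$ and $\rd_{\tilde u} r = -\dur$, so the square of \eqref{eq:hardy-v} becomes $\bigl(-r\rd_u f + \tfrac{\alp+1}{2}(-\dur)f\bigr)^2$)---produce the perfect square $\bigl(r\rd_u f + \tfrac{\alp+1}{2}\,\dur\, f\bigr)^2 = \bigl(r\rd_u f - \tfrac{\alp+1}{2}(-\dur)f\bigr)^2$, with a \emph{minus} where the printed \eqref{eq:hardy-u} has a plus in front of $\tfrac{\alp+1}{2}(-\dur)f$.

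The printed \eqref{eq:hardy-u} is in fact not an identity: taking $\alp = 0$ and $f = r$ along $\uC_{v}$ (so $\rd_u f = \dur$), the printed square collapses to $r\dur + \tfrac12(-\dur)r = \tfrac12 r\dur$ and the two sides of \eqref{eq:hardy-u} differ by $\tfrac23\bigl(r^3(u_2, v) - r^3(u_1, v)\bigr) \neq 0$, whereas the corrected square $\tfrac32 r\dur$ restores equality. (The analogous formula \eqref{eq:hardy-alp-u} in Lemma~\ref{lem:hardy-type} also carries $\rd_u(r f) = r\rd_u f + \dur f$, not $r\rd_u f + (-\dur) f$, which confirms the sign.) This is a harmless typo in the paper---every application of the lemma discards the nonnegative square term, so no downstream estimate is affected---but your claim that the bookkeeping produces the printed coefficient and boundary placement is not substantiated as stated: with the printed square, neither the coefficient $\tfrac{(\alp+1)^2}{4}$ nor the placement of the $u_1$, $u_2$ boundary terms comes out right. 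Your method, carried through, proves the correct identity; the missing step is the two-line computation you promised but omitted, which both closes the proof and catches the sign discrepancy rather than silently absorbing it.
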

\begin{proof}
Equation \eqref{eq:hardy-v} is proved by exactly the same computation as Lemma~\ref{lem:hardy-type}, except we replace $\dvr f + r \rd_{v} f$ by $\frac{\alp+1}{2} \dvr f + r \rd_{v} f$. Moreover, \eqref{eq:hardy-u} is proved by simply switching $u$ and $v$. We omit the details. \qedhere
\end{proof}

\subsection{Proof of the energy estimates} \label{subsec:en-pf}
We now put together the identities proved in Section~\ref{subsec:en-id} and prove energy estimates. The main result we establish here is Proposition~\ref{prop:energy-goal} below. We also prove the improved integrated local energy decay for $\rd_{v} \phidf$, which quickly follows from Lemma~\ref{lem:DR} once Proposition~\ref{prop:energy-goal} is proved.

\subsubsection*{Statement of energy estimates}
To properly state the energy estimates, we need to define the restriction $(\Gmm_{(t_{B}) \tau})_{\tau \in [1, \infty)}$ of the foliation $(\Gmm_{\tau})_{\tau \in [1, \infty)}$ to $\calD_{(t_{B})}$. More precisely, we define
\begin{align*}
	\calD_{(t_{B})}(\tau_{1}, \tau_{2}) =& \cup_{\tau \in (\tau_{1}, \tau_{2})} \Gmm_{(t_{B}) \tau}, \qquad 
	\Gmm_{(t_{B}) \tau} =  \Gmm_{(t_{B}) \tau}^{(in)} \cup \Gmm_{(t_{B}) \tau}^{(out)} , \\
	\Gmm_{(t_{B}) \tau}^{(in)} =& \Gmm_{\tau}^{(in)} \cap \calD_{(t_{B})} = \set{(u, v) \in \calD_{(t_{B})} : v = v_{R_{0}}(\tau), u \in [\tau, \infty)} , \\
	\Gmm_{(t_{B}) \tau}^{(out)} =& \Gmm_{\tau}^{(out)} \cap \calD_{(t_{B})} = \set{(u, v) \in \calD_{(t_{B})} : u = \tau, v \in [v_{R_{0}}(\tau), \infty)} .
\end{align*}
Moreover, we define the restricted null curves
\begin{equation*}
	C_{(t_{B}) u}(\tau_{1}, \tau_{2}) = C_{u} \cap \overline{\calD_{(t_{B})}(\tau_{1}, \tau_{2})}, \quad
	\uC_{(t_{B}) v}(\tau_{1}, \tau_{2}) = \uC_{v} \cap \overline{\calD_{(t_{B})}(\tau_{1}, \tau_{2})}.
\end{equation*}
In particular, we write
\begin{align*}
	\EH_{(t_{B})} (\tau_{1}, \tau_{2}) 
	= & C_{(t_{B}) u_{(t_{B}) f}}(\tau_{1}, \tau_{2})
	= \set{(u_{(t_{B}) f}, v) : v \in [v_{R_{0}}(\tau_{1}), v_{R_{0}}(\tau_{2})]}, \\
	\NI_{(t_{B})} (\tau_{1}, \tau_{2}) 
	= & \uC_{(t_{B}) v_{(t_{B}) f}}(\tau_{1}, \tau_{2})
	= \set{(u, v_{(t_{B}) f}) : u \in [\tau_{1}, \tau_{2}]}.
\end{align*}
Despite the notation, note that $\EH_{(t_{B})}(\tau_{1}, \tau_{2})$ [resp. $\NI_{(t_{B})}(\tau_{1}, \tau_{2})$] does \underline{not} lie in the event horizon [resp. null infinity] of the background or the perturbed solution. Nevertheless, it tends to $\EH(\tau_{1}, \tau_{2})$ [resp. $\NI(\tau_{1}, \tau_{2})$] as $t_{B} \to \infty$, which justifies our notation.

With the proper definitions in place, we now make the following simplifying convention: Since $t_{B}$ is fixed throughout the argument, we drop the $t_{B}$-dependence of the above objects and simply write $\calD(\tau_{1}, \tau_{2}) = \calD_{(t_{B})}(\tau_{1}, \tau_{2})$, $\Gmm_{\tau} = \Gmm_{(t_{B}) \tau}$ etc. We also write $u_{f} = u_{(t_{B}) f}$ and $v_{f} = v_{(t_{B}) f}$ for the final $u = u_{(t_{B})}$ and $v = v_{(t_{B})}$ coordinates in $\calD_{(t_{B})}$.

\begin{proposition} \label{prop:energy-goal}
For every $1 \leq \tau_{1} \leq \tau_{2}$, we have
\begin{equation} \label{eq:en-goal}
\begin{aligned}
& \hskip-2em
	\int_{\Gmm^{(in)}_{\tau_{2}}} \frac{1}{(-\dur)} (\rd_{u} \phidf)^{2} r^{2} \, \ud u
	+ \int_{\Gmm^{(out)}_{\tau_{2}}} \kpp^{-1} (\rd_{v} \phidf)^{2} r^{2} \, \ud v  \\
& \hskip-2em
	+ \int_{\NI(\tau_{1}, \tau_{2})} \frac{1}{(-\dur)} (\rd_{u} \phidf)^{2} r^{2} \, \ud u
	+ \int_{\EH(\tau_{1}, \tau_{2})} \kpp^{-1} (\rd_{v} \phidf)^{2} r^{2} \, \ud v  \\
& \hskip-2em
	+ \iint_{\calD(\tau_{1}, \tau_{2})} \frac{1}{r^{1+\eta_{0}}} \left( \frac{1}{(-\dur)} (\rd_{u} \phidf)^{2} + \kpp^{-1} (-\dur) (\rd_{v} \phidf)^{2} + \frac{\phidf^{2}}{r^{2}} \right) r^{2} \, \ud u \ud v \\
	\leq & C \left( \int_{\Gmm^{(in)}_{\tau_{1}}} \frac{1}{(-\dur)} (\rd_{u} \phidf)^{2} r^{2} \, \ud u
	+ \int_{\Gmm^{(out)}_{\tau_{1}}} \kpp^{-1} (\rd_{v} \phidf)^{2} r^{2} \, \ud v \right) \\
	& + C A^{2} (\log^{2} \Lmb) \tau_{1}^{-2\omg+1+\eta_{0}} \dlt^{2} \left( \eps + \dlt_{(t_{B})}^{2} \right)^{2}
	+ C r \phidf^{2} (\tau_{1}, v_{f}) .
\end{aligned}
\end{equation}
Moreover, we also have
\begin{equation} \label{eq:en-null-goal}
\begin{aligned}
& \hskip-2em
	\sup_{v \in [v_{R_{0}}(\tau_{1}), v_{R_0}(\tau_{2})] }\int_{\uC_{v}(\tau_{1}, \tau_{2})} \frac{1}{(-\dur)} (\rd_{u} \phidf)^{2} r^{2} \, \ud u
	+ \sup_{u \in [\tau_{1}, \tau_{2}] }\int_{C_{u}(\tau_{1}, \tau_{2})} \kpp^{-1} (\rd_{v} \phidf)^{2} r^{2} \, \ud v  \\
	\leq & C \left( \int_{\Gmm^{(in)}_{\tau_{1}}} \frac{1}{(-\dur)} (\rd_{u} \phidf)^{2} r^{2} \, \ud u
	+ \int_{\Gmm^{(out)}_{\tau_{1}}} \kpp^{-1} (\rd_{v} \phidf)^{2} r^{2} \, \ud v \right) \\
	& + C A^{2} (\log^{2} \Lmb) \tau_{1}^{-2\omg+1+\eta_{0}} \dlt^{2} \left( \eps + \dlt_{(t_{B})}^{2} \right)^{2}
	+ C r \phidf^{2} (\tau_{1}, v_{f}) .
\end{aligned}
\end{equation}
\end{proposition}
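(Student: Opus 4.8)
\textbf{Proof strategy for Proposition~\ref{prop:energy-goal}.}
The plan is to combine the energy identities from Section~\ref{subsec:en-id} with the geometric difference bounds from Section~\ref{subsec:geom} in the standard hierarchy of multiplier estimates. First I would integrate the almost conserved energy identity \eqref{eq:en-id-T} over a region $\calD(\tau_1, \tau)$ bounded by $\Gmm_{\tau_1}$, $\Gmm_\tau$, and the ``curves at infinity'' $\EH(\tau_1,\tau)$ and $\NI(\tau_1,\tau)$; this produces the boundary terms on $\Gmm^{(in)}_\tau$, $\Gmm^{(out)}_\tau$, $\EH(\tau_1,\tau)$, $\NI(\tau_1,\tau)$ together with two kinds of error terms: the forcing term involving $F = -(\tfrac{\dvr}{r}-\tfrac{\dvrbg}{\rbg})\rd_u\phibg - (\tfrac{\dur}{r}-\tfrac{\durbg}{\rbg})\rd_v\phibg$, and the cubic/quartic bulk terms $\tfrac12 \tfrac{1-\mu}{\dvr(-\dur)} r^3 (\rd_u\phibg)^2 (\rd_v\phidf)^2$, etc. The forcing term is handled by expanding $\tfrac{\dvr}{r}-\tfrac{\dvrbg}{\rbg} = \tfrac{\dvrdf}{r} - \tfrac{\dvrbg}{r\rbg}\rdf$ and using the background decay \eqref{eq:dlt-adm:dvphi}, \eqref{eq:dlt-adm:duphi} together with the difference bounds \eqref{eq:en-btstrp:rdf}, \eqref{eq:en-btstrp:dvrdf}, \eqref{eq:en-btstrp:durdf}; after Cauchy--Schwarz against the (to-be-bootstrapped) energy flux, the worst contribution is the $\tau_1^{-2\omg+1+\eta_0}\dlt^2 (\eps + \dlt_{(t_B)}^2)^2$ term, where the $\log^2\Lmb$ factor comes from \eqref{eq:en-btstrp:rdf} and \eqref{eq:en-btstrp:dvrdf}, and where we crucially use the split $\dvrdf = \dvrdf_g + \dvrdf_b$ with \eqref{eq:df-large-r:dvr-b} to absorb the anomalous $\log\kpp - \log\kppbg$ loss (cf. Remark~\ref{rem:kpp-anomaly}).

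Second, I would add the Morawetz estimate \eqref{eq:en-id-X} with an appropriate bounded monotone $f = f(r)$ (equal to $-1$ for large $r$, vanishing appropriately near $\EH$, and with $f' \geq 0$ globally so that the null derivative terms have a good sign) to recover the spacetime integral $\iint r^{-1-\eta_0}(\tfrac{1}{(-\dur)}(\rd_u\phidf)^2 + \kpp^{-1}(-\dur)(\rd_v\phidf)^2)$; to handle the degeneracy at the photon sphere the standard fix is to choose $f$ with a sign change near $r = 3\varpi_f$ (here this is only an $O(\dlt)$-perturbed location, so the positivity of the Hessian-type term $f''r\dvr(-\dur) + (f'r - f)\tfrac{2(\varpi-\e^2/r)}{r^2}\tfrac{\dvr(-\dur)}{1-\mu}$ for the zeroth-order term, and of the bulk quadratic terms, survives for small $\dlt_0$). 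The zeroth-order spacetime term $\iint r^{-3-\eta_0}\phidf^2$ in a large-$r$ region is obtained by additionally using $X_{(rad)}$ \eqref{eq:en-id-Xrad} (whose coefficient $\tfrac{2(\varpi - \e^2/r)}{r^2}\tfrac{\dvr(-\dur)}{1-\mu}\phidf^2$ is positive away from $\EH$) combined with the optimal $L^2$-Hardy inequality \eqref{eq:hardy-v} of Lemma~\ref{lem:hardy-opt} to control the resulting $\rd_v\phidf$, $\rd_u\phidf$ boundary/bulk terms. Third, to upgrade the degenerate energy on $\Gmm_\tau$ and $\EH(\tau_1,\tau)$ to the \emph{nondegenerate} $\int \tfrac{1}{(-\dur)}(\rd_u\phidf)^2 r^2$ near $\EH$, I would add the red-shift estimate \eqref{eq:en-id-Y} with a cutoff $\chi_\EH$ supported in $\set{r \leq 20}$; the sign-definite term $\chi_\EH \tfrac{(\varpi-\e^2/r)\kpp}{r^2}\tfrac{1}{(-\dur)}(\rd_u\phidf)^2 r^2$ is positive by the red-shift bound \eqref{eq:red-shift-bnd} of Corollary~\ref{cor:red-shift-bnd}, and the error terms $\chi_\EH' \tfrac{\dvr}{2}\tfrac{1}{(-\dur)}(\rd_u\phidf)^2 r^2$ (supported in $\set{10 \leq r \leq 20}$, a bounded-$r$ region controlled by the Morawetz bulk term) and $\tfrac{\chi_\EH}{r^2}\rd_v\phidf\rd_u\phidf r^2$ (absorbed by Cauchy--Schwarz into the red-shift term and the Morawetz bulk term after taking the red-shift multiplier with a large enough constant) are handled by choosing the relative weights appropriately. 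One then runs a continuity/bootstrap argument in $\tau$ (or, since everything is already finite on the compact $\calD_{(t_B)}$, just a direct combination with suitably chosen constants) to close \eqref{eq:en-goal}; the flux estimate \eqref{eq:en-null-goal} on individual $C_u(\tau_1,\tau_2)$ and $\uC_v(\tau_1,\tau_2)$ then follows by integrating \eqref{eq:en-id-T} (plus red-shift near $\EH$) over the appropriate sub-rectangles rather than the full $\calD(\tau_1,\tau_2)$, with the same error-term bookkeeping.

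\textbf{Main obstacle.} The genuinely delicate point is controlling the quartic bulk terms such as $\iint \tfrac{1-\mu}{\dvr(-\dur)} r^3 (\rd_v\phibg)^2 (\rd_u\phidf)^2$ and, more seriously, the fully nonlinear $\iint \tfrac{1-\mu}{\dvr(-\dur)} r^3 \rd_v\phibg \, \rd_v\phidf \, (\rd_u\phidf)^2$-type terms without assuming any pointwise bound on $\rd_v\phidf$ or $\rd_u\phidf$. The former is fine: $r^2 (\rd_v\phibg)^2 \lesssim r^{-2\omg}\dlt^2$ by \eqref{eq:dlt-adm:dvphi}, and the leftover $r\tfrac{1-\mu}{\dvr(-\dur)}(\rd_u\phidf)^2$ integrates against $r^{-2\omg}$ which more than absorbs the $r^{1-\eta_0}$ one can afford. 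The latter requires the \emph{interaction Morawetz estimate} advertised in the introduction: multiply the wave equation for $\phidf$ by $f_0(r)\bigl(\tfrac{1-\mu}{\dvr}\rd_v + \tfrac{1-\mu}{(-\dur)}\rd_u\bigr)\phidf$ with a carefully chosen weight $f_0(r) \sim r^{3-\eta_0}$ (for $r$ large) to generate, after using the Raychaudhuri equations \eqref{eq:EMSF-ray}, a spacetime integral controlling $\iint \tfrac{r^{3-\eta_0}}{\Omg^2}(\rd_v\phidf)^2(\rd_u\phidf)^2$ up to terms already present in the energy hierarchy; this is precisely the mechanism that closes the estimate without extra pointwise control, and getting the weights and signs to line up — in particular ensuring the interaction-Morawetz bulk term dominates the nonlinear cubic error after Cauchy--Schwarz, while its own error terms feed back only into quantities already controlled — is where the bulk of the technical work lies. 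All constants here depend only on $|\ebg|$, $\omg$, $\eta_0$ (not on $\Lmb$), since the $\Lmb$-dependence is isolated into the explicit $\log^2\Lmb$ factor multiplying $\dlt^2$, which by \eqref{eq:L-st-ch:dlt} is absorbed by the smallness of $\dlt_0$.
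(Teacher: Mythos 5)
Your outline gets the multiplier hierarchy ($T$-energy, Morawetz $X$ plus $X_{(rad)}$, red-shift $Y$, and an interaction-Morawetz control of the quartic bulk) and the role of the splitting $\dvrdf = \dvrdf_{g} + \dvrdf_{b}$ essentially right, and these do match the paper's Lemma~\ref{lem:en-ctrl} and Lemmas~\ref{lem:en-err-near}--\ref{lem:en-err-nonlin}. However, there is a concrete gap in your treatment of $\calE_{nonlin}$. You discuss $\iint \frac{1-\mu}{\dvr(-\dur)}r^{3}(\rd_{v}\phibg)^{2}(\rd_{u}\phidf)^{2}$ and its cross term, but say nothing about the symmetric term $\iint \frac{1-\mu}{\dvr(-\dur)}r^{3}(\rd_{u}\phibg)^{2}(\rd_{v}\phidf)^{2}$, and near $\EH$ it is this one that is delicate: the red-shift controls $(\rd_{u}\phidf)^{2}/(-\dur)$ nondegenerately, but your integrated local energy decay gives $(\rd_{v}\phidf)^{2}$ only with the degenerate weight $(-\dur)$. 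Rewriting the integrand (via \eqref{eq:dlt-adm:duphi}) as $\approx (\durbg/\dur)^{2}(-\dur)\,v^{-2\omg}\dlt^{2}(\rd_{v}\phidf)^{2}$ exposes the obstruction: by \eqref{eq:df-small-r:logdurdf} the conversion factor $(\durbg/\dur)^{2}$ grows like $e^{C A \log\Lmb(\eps + \dlt_{(t_{B})}^{2})(1+|u-v-C_{\gmm_{20}}|)}$ towards $i^{+}$, which is slow but unbounded, so the degenerate $(-\dur)$ weight cannot absorb it. The paper's remedy is an additional \emph{irregular} multiplier $\Xi = e^{-c_{\Xi}(u-v-C_{\gmm_{20}})}\rd_{v}$ (see \eqref{eq:vf-Xi}, \eqref{eq:en-id-Xi}, the term \eqref{eq:iled-small-r} built into $E[\phidf]$, and Remark~\ref{rem:iled-small-r}), which yields an improved integrated local energy decay for $\rd_{v}\phidf$ near $\EH$ with the explicit weight $e^{-\frac{1}{2}c_{(\dur)}(u-v-C_{\gmm_{20}})}$; this is strictly weaker than $(-\dur)$, and that extra exponential room is exactly what soaks up the slowly growing $(\durbg/\dur)^{2}$. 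Your sketch has no substitute for this estimate, and without it the absorption of $\calE_{nonlin}$ into $E[\phidf]$ does not close.

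Two minor points. The photon-sphere worry is a red herring here: the problem is spherically symmetric, there is no trapping, and the paper uses the monotone choices $f = r^{-3}$ (weak estimate) and $f = r^{-\eta_{0}}$ (for the $r^{-1-\eta_{0}}$-weighted bulk), with no sign change; your description of $f$ (equal to $-1$ at infinity, vanishing near $\EH$, with $f' \geq 0$) is also internally inconsistent. Finally, your interaction-Morawetz multiplier $f_{0}(r)\bigl(\frac{1-\mu}{\dvr}\rd_{v} + \frac{1-\mu}{(-\dur)}\rd_{u}\bigr)$ is in the $T$-direction, for which the purely nonlinear quartic pieces cancel in \eqref{eq:en-id-T}; the good-sign quartic $\frac{1}{2}f\frac{1-\mu}{\dvr(-\dur)}r^{3}(\rd_{v}\phi)^{2}(\rd_{u}\phidf)^{2}$ that produces \eqref{eq:int-morawetz} comes from the $X$-identity \eqref{eq:en-id-X}, so the coefficient of $\rd_{u}$ must carry the opposite sign.
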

\begin{remark}  \label{rem:zeroth}
The zeroth order term $r \phidf^{2} (\tau_{1}, v_{f})$ at the future endpoint of $\Gmm_{\tau_{1}}^{(out)}$ is needed to control lower order terms. Note that this term vanishes as $t_{B} \to 0$, since both $\phi$ and $\phibg$ decay with the rate $r^{-1}$. In the course of our bootstrap argument, this term is initially controlled by $C \eps^{2}$; see Corollary~\ref{cor:zeroth} below.
\end{remark}

Our goal is to simultaneously prove \eqref{eq:en-goal} and \eqref{eq:en-null-goal}. For the proof of \eqref{eq:en-goal}, and also for later use, it is convenient to introduce (for any $1 \leq \tau_{1} \leq \tau_{2}$)
\begin{align}
	E[\phidf](\tau_{1}, \tau_{2}) 
	= & \sup_{\tau' \in [\tau_{1}, \tau_{2}]} \left( \int_{\Gmm^{(in)}_{\tau'}} \frac{1}{(-\dur)} (\rd_{u} \phidf)^{2} r^{2} \, \ud u 
	+ \int_{\Gmm^{(out)}_{\tau'}} \kpp^{-1} (\rd_{v} \phidf)^{2} r^{2} \, \ud v  \right) \notag \\
	&+ \int_{\NI(\tau_{1}, \tau_{2})} \frac{1}{(-\dur)} (\rd_{u} \phidf)^{2} r^{2} \, \ud u
	+ \int_{\EH(\tau_{1}, \tau_{2})} \kpp^{-1} (\rd_{v} \phidf)^{2} r^{2} \, \ud v  \notag \\
	& + \iint_{\calD(\tau_{1}, \tau_{2})} \frac{1}{r^{1+\eta_{0}}} 
		\left( \frac{1}{(-\dur)} (\rd_{u} \phidf)^{2} + \kpp^{-1} (-\dur) (\rd_{v} \phidf)^{2} + (-\dur) \frac{\phidf^{2}}{r^{2}} \right) r^{2} \, \ud u \ud v \notag \\
	& + \iint_{\calD(\tau_{1}, \tau_{2}) \cap \set{r \leq 20}} e^{- \frac{1}{2} c_{(\dur)} (u - v - C_{\gmm_{20}})} (\rd_{v} \phidf)^{2} r^{2} \, \ud u \ud v 		\label{eq:iled-small-r} \\
	& + \iint_{\calD(\tau_{1}, \tau_{2})} \frac{1-\mu}{\dvr (- \dur)} r^{3-\eta_{0}} (\rd_{u} \phidf)^{2} (\rd_{v} \phidf)^{2} \, \ud u \ud v.	\label{eq:int-morawetz}
\end{align}
where $c_{(\dur)} > 0$ is as in \eqref{eq:df-small-r:dur}.

Roughly speaking, the proofs of \eqref{eq:en-goal} and \eqref{eq:en-null-goal} consist of two steps. First, we use the computation in the previous subsection to show that $E[\phidf](\tau_{1}, \tau_{2})$ and the LHS of \eqref{eq:en-null-goal} can be controlled by the nondegenerate energy at $\tau_{1}$ up to some error term $\calE$ (Lemma~\ref{lem:en-ctrl} and Lemma~\ref{lem:en-null-ctrl}). Second, we prove that the error term $\calE$ is controlled (Lemma~\ref{lem:en-err-near}--Lemma~\ref{lem:en-err-nonlin}).

\begin{remark}[Improved integrated local energy decay of $\rd_{v} \phidf$ near $\EH$] \label{rem:iled-small-r} 
The term \eqref{eq:iled-small-r} in $E[\phidf](\tau_{1}, \tau_{2})$ gives an improved control of $\rd_{v} \phidf$ near $\EH$, and it plays an important role in handling error terms near $\EH$ (see Lemma~\ref{lem:en-err-nonlin} below). This term is obtained using the irregular vector field $\Xi$ defined in \eqref{eq:vf-Xi}.
\end{remark}

\begin{remark}[Interaction Morawetz term] \label{rem:int-morawetz} 
The weighted spacetime integral for $(\rd_{v} \phidf)^{2} (\rd_{u} \phidf)^{2}$ in $E[\phidf](\tau_{1}, \tau_{2})$ (see \eqref{eq:int-morawetz}) arises from the Morawetz vector field (see \eqref{eq:en-id-X}), and it is useful for treating error terms in the $T$-estimate that are nonlinear in $\phidf$ (see Lemma~\ref{lem:en-err-nonlin} below). We dub this the \emph{interaction Morawetz term}, in view of its resemblance to the interaction potential estimate in the $(1+1)$-dimensional conservation laws literature.
\end{remark}

\subsubsection*{Combination of the vector field identities}

As outlined above, we begin by controlling $E[\phidf](\tau_{1}, \tau_{2})$ using the identities in Section~\ref{subsec:en-id}, up to an error term $\calE(\tau_{1}, \tau_{2})$ that we decompose carefully for the next step. 

\begin{lemma} \label{lem:en-ctrl}
For every $1 \leq \tau_{1} \leq \tau_{2}$, we have
\begin{equation} \label{eq:en-ctrl}
	E[\phidf](\tau_{1}, \tau_{2}) 
	\leq C \left( \int_{\Gmm^{(in)}_{\tau_{1}}} \frac{1}{(-\dur)} (\rd_{u} \phidf)^{2} r^{2} \, \ud u
	+ \int_{\Gmm^{(out)}_{\tau_{1}}} \kpp^{-1} (\rd_{v} \phidf)^{2} r^{2} \, \ud v + r \phidf^{2} (\tau_{1}, v_{f}) + \calE(\tau_{1}, \tau_{2}) \right).
\end{equation}
The error $\calE (\tau_{1}, \tau_{2}) \geq 0$ admits a decomposition of the form
\begin{equation*}
	\calE(\tau_{1}, \tau_{2}) = \calE_{near}(\tau_{1}, \tau_{2}) + \calE_{far}(\tau_{1}, \tau_{2}) + \calE_{nonlin}(\tau_{1}, \tau_{2}),
\end{equation*}
where
\begin{align*}
	\calE_{near}(\tau_{1}, \tau_{2})
	= & \iint_{\calD(\tau_{1}, \tau_{2}) \cap \set{r \leq 20}} 
		\left( \Abs{\frac{\dvr}{r} - \frac{\dvrbg}{\rbg}} \Abs{\rd_{u} \phibg} 
		+  \Abs{\frac{\dur}{r} - \frac{\durbg}{\rbg}} \Abs{\rd_{v} \phibg} \right) 
		\left( \frac{1}{(-\dur)} \abs{\rd_{u} \phidf} + \abs{\rd_{v} \phidf} + \abs{\phidf} \right)  \, \ud u \ud v, \\
	\calE_{far}(\tau_{1}, \tau_{2})
	= & \iint_{\calD(\tau_{1}, \tau_{2}) \cap \set{r \geq 20}} 
		\left( \Abs{\frac{\dvr}{r} - \frac{\dvrbg}{\rbg}} \Abs{\rd_{u} \phibg} 
		+  \Abs{\frac{\dur}{r} - \frac{\durbg}{\rbg}} \Abs{\rd_{v} \phibg} \right) 
		\left( \abs{\rd_{u} \phidf} + \abs{\rd_{v} \phidf} + \frac{\abs{\phidf}}{r}\right) r^{2} \, \ud u \ud v, \\
	\calE_{nonlin}(\tau_{1}, \tau_{2})
	= & \iint_{\calD(\tau_{1}, \tau_{2})} \frac{1-\mu}{\dvr (-\dur)}  r^{3} \left( (\rd_{u} \phibg)^{2} + \abs{\rd_{u} \phibg} \abs{\rd_{u} \phidf} \right) (\rd_{v} \phidf)^{2} \, \ud u \ud v \\
	& + \iint_{\calD(\tau_{1}, \tau_{2})} \frac{1-\mu}{\dvr (-\dur)}  r^{3} \left( (\rd_{v} \phibg)^{2} + \abs{\rd_{v} \phibg} \abs{\rd_{v} \phidf} \right)  (\rd_{u} \phidf)^{2} \, \ud u \ud v .
\end{align*}
\end{lemma}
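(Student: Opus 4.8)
\textbf{Proof strategy for Lemma~\ref{lem:en-ctrl}.}
The plan is to combine the vector field identities of Section~\ref{subsec:en-id} with a careful choice of multiplier and then integrate over $\calD(\tau_1,\tau_2)$. First I would set up the basic $T$-energy bound: integrate the identity \eqref{eq:en-id-T} over $\calD(\tau_1, \tau_2)$, apply the divergence theorem, and use that the coefficients $\tfrac{1-\mu}{\dvr}$, $\tfrac{1-\mu}{(-\dur)}$ are comparable to $\kpp^{-1}$ and $(-\dur)^{-1}$ respectively (by \eqref{eq:weak-st:kpp-bnd}, \eqref{eq:mu-large-r}, \eqref{eq:mu-small-r}, \eqref{eq:dur-large-r}) to identify the boundary terms on $\Gmm^{(in)}_{\tau_{1}}$, $\Gmm^{(out)}_{\tau_{1}}$, $\Gmm^{(in)}_{\tau_{2}}$, $\Gmm^{(out)}_{\tau_{2}}$, $\EH(\tau_1,\tau_2)$ and $\NI(\tau_1,\tau_2)$; the RHS forcing-$F$ term and the cubic/quartic terms in $\phidf$ are collected into $\calE_{near}+\calE_{far}$ (from the $F$ contributions, with $F$ as in \eqref{eq:wave4phidf}) and $\calE_{nonlin}$ (from the last two lines of \eqref{eq:en-id-T}). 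Here I split the spatial integration regions into $\set{r\le 20}$ and $\set{r\ge 20}$ so that the $F$-error is weighted correctly: near $\EH$ one keeps the degenerate weights since $(-\dur)$ is exponentially small (Lemma~\ref{lem:df-small-r}), while in the large-$r$ region one uses the uniform bounds \eqref{eq:weak-st:dvr-large-r}, \eqref{eq:weak-st:dur-large-r}.

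Next I would add to the $T$-estimate suitable multiples of the Morawetz identity \eqref{eq:en-id-X} with an $f(r)$ that is increasing, bounded, vanishing to appropriate order, and comparable to $-r^{-\eta_0}$ for large $r$ (so that $-\tfrac12 f'$ produces the coercive spacetime density $\tfrac{1}{r^{1+\eta_0}}(\cdots)$ and the interaction term \eqref{eq:int-morawetz}), together with the zeroth-order radial identity \eqref{eq:en-id-Xrad} localized to large $r$ (giving the $\tfrac{\phidf^2}{r^{3+\eta_0}}$ bulk term via the coefficient $\tfrac{2(\varpi-\e^2/r)}{r^2}\tfrac{\dvr(-\dur)}{1-\mu}$, which is positive away from $\EH$), the red-shift identity \eqref{eq:en-id-Y} with a cutoff $\chi_{\EH}$ supported in $\set{r\le 20}$ (to upgrade the degenerate $T$-energy to the nondegenerate $\int \tfrac{1}{(-\dur)}(\rd_u\phidf)^2 r^2$ near $\EH$, using Corollary~\ref{cor:red-shift-bnd}), and the irregular identity \eqref{eq:en-id-Xi} with $c_\Xi$ small (to produce the improved term \eqref{eq:iled-small-r}). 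Each of these identities produces its own boundary and error contributions; the boundary terms are absorbed into the already-controlled nondegenerate energies on $\Gmm_{\tau_1}$ (after a Cauchy–Schwarz in the cross terms $\rd_u\phidf\,\rd_v\phidf$ and use of Lemma~\ref{lem:hardy-opt} and Lemma~\ref{lem:hardy-type} to absorb zeroth-order boundary terms into derivative energies plus the single genuinely anomalous term $r\phidf^2(\tau_1, v_f)$), and the error terms are reorganized to fit the stated decomposition of $\calE$. One has to be a little careful that the constants multiplying $X$, $Y$, $\Xi$ relative to $T$ are fixed universal constants (depending only on $\abs{\ebg}, \eta_0, \omg$) so that all the positive bulk densities survive with a definite sign; this is where one chooses the relative weights.

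The main obstacle I expect is bookkeeping the competing weights and signs so that nothing is lost: the Morawetz multiplier with weight $-r^{-\eta_0}$ produces a bulk term $\tfrac{1}{r^{1+\eta_0}}(-\dur)(\rd_v\phidf)^2 r^2$ whose coefficient, while positive, degenerates at $\EH$, and one must not double-count it against the red-shift term; similarly, the zeroth-order bulk term from \eqref{eq:en-id-X} has coefficient $(f'r-f)\tfrac{2(\varpi-\e^2/r)}{r^2}\tfrac{\dvr(-\dur)}{1-\mu}$ whose sign for the chosen $f$ must be checked (choosing $f$ so that $f'r-f\le 0$, which holds for $f=-\langle r\rangle^{-\eta_0}$-type functions with the right normalization), and in the intermediate region $r\sim 20$ the various cutoffs in $f$ and $\chi_{\EH}$ create uncontrolled-sign terms supported in a compact $r$-range which must be dominated by the coercive density there. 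All of these are handled by fixing the multipliers in the order $X\to Y\to\Xi$ and absorbing the compactly-supported error terms using the positivity of the Morawetz bulk density on $\set{10\le r\le 30}$ (uniformly, by Lemma~\ref{lem:geom-prelim} and Corollary~\ref{cor:red-shift-bnd}). Once the combination is fixed, the remaining $F$-errors and genuinely nonlinear errors are exactly $\calE_{near}$, $\calE_{far}$, $\calE_{nonlin}$ as displayed, completing the proof of the lemma; their estimation is deferred to Lemmas~\ref{lem:en-err-near}–\ref{lem:en-err-nonlin}.
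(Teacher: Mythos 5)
Your high-level strategy matches the paper's: combine the $T$-identity \eqref{eq:en-id-T} with the Morawetz identities \eqref{eq:en-id-X0}, \eqref{eq:en-id-X}, the radial identity \eqref{eq:en-id-Xrad}, the red-shift identity \eqref{eq:en-id-Y}, and the irregular identity \eqref{eq:en-id-Xi}; identify boundary and bulk terms; absorb zeroth-order boundary terms via the Hardy lemmas \ref{lem:hardy-opt}, \ref{lem:hardy-type}, which produces the anomalous term $r\phidf^2(\tau_1,v_f)$; and sweep the $F$-forcing and genuinely nonlinear contributions into $\calE_{near}$, $\calE_{far}$, $\calE_{nonlin}$. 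The decomposition of $\calE$ you describe agrees with the paper's.

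However, your specification of the Morawetz weight has a sign error that is not cosmetic. You propose an $f$ that is ``increasing, bounded, vanishing \dots and comparable to $-r^{-\eta_0}$''; that is, $f<0$ and $f'>0$. But from \eqref{eq:en-id-X0}, the first-order bulk density is $-\tfrac12 f'\bigl(\tfrac{1-\mu}{\dvr}(-\dur)(\rd_v\phidf)^2 + \tfrac{1-\mu}{(-\dur)}\dvr(\rd_u\phidf)^2\bigr)r^2$, which requires $f'\le 0$ for coercivity, and the interaction-Morawetz density $\tfrac12 f\tfrac{1-\mu}{\dvr(-\dur)}r^3(\cdots)$ requires $f\ge 0$. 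The paper accordingly takes $f=r^{-3}$ in Step~1 and $f=r^{-\eta_0}$ in Step~3: both positive and decreasing. Moreover, your claimed property ``$f'r-f\le 0$, which holds for $f=-\langle r\rangle^{-\eta_0}$'' is false for that $f$ (for $f=-r^{-\eta_0}$ one has $f'r-f=(1+\eta_0)r^{-\eta_0}>0$); it \emph{is} true for $f=r^{-\eta_0}$. So your stated sign properties are internally inconsistent with the formula for $f$ you give, and taking $f$ as written would destroy the coercivity of the very bulk terms you need.

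A second, structural point: the paper does not use a single Morawetz multiplier. It first runs a global Morawetz with $f=r^{-3}$ (combined with $T$ and $Y$ into $aT+bX_0+cY$) to obtain a \emph{weaker} spacetime estimate with weight $r^{-4}$, and only then applies the optimal-weight Morawetz $f=r^{-\eta_0}$ cut off by $\chi_{R_1}$ to the region $r\gtrsim R_1$, using the weaker estimate to control the cutoff errors (which are supported in $R_1/2\le r\le R_1$, not in a fixed annulus $10\le r\le 30$ as you suggest). The two-step structure is what lets you obtain the near-critical weight $r^{-1-\eta_0}$ without having to propagate the optimal-weight Morawetz into regions where its zeroth-order bulk term has the wrong sign; your one-shot description glosses over this, and the absorption you propose over $\{10\le r\le 30\}$ would not suffice.
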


\begin{proof}
To simplify the notation, in this proof we omit the line and area elements $\ud u$, $\ud v$ and $\ud u \, \ud v$. In view of \eqref{eq:wave4phidf}, we apply the results in Section~\ref{subsec:en-id} with
\begin{equation*}
	F = - \left( \frac{\dvr}{r} - \frac{\dvrbg}{\rbg} \right) \rd_{u} \phibg - \left( \frac{\dur}{r} - \frac{\durbg}{\rbg} \right) \rd_{v} \phibg.
\end{equation*}

We proceed in several steps. 

\pfstep{Step~1: A weaker energy estimate}
We begin by proving a weaker version of \eqref{eq:en-ctrl}, with sub-optimal weights and no control of $\phidf^{2}$.
More precisely, for any $1 \leq \tau_{1} \leq \tau_{2}$ we define 
\begin{equation*}
\begin{aligned}
	E_{w}[\phidf](\tau_{1}, \tau_{2})
	= & \sup_{\tau' \in [\tau_{1}, \tau_{2}]} \left( \int_{\Gmm^{(in)}_{\tau'}} \frac{1}{(-\dur)} (\rd_{u} \phidf)^{2} r^{2} 
	+ \int_{\Gmm^{(out)}_{\tau'}} \kpp^{-1} (\rd_{v} \phidf)^{2} r^{2}  \right) \\
	& + \int_{\NI(\tau_{1}, \tau_{2})} \frac{1}{(-\dur)} (\rd_{u} \phidf)^{2} r^{2} 
	+ \int_{\EH(\tau_{1}, \tau_{2})} \kpp^{-1} (\rd_{v} \phidf)^{2} r^{2}  \\
	& + \iint_{\calD(\tau_{1}, \tau_{2})} \frac{1}{r^{4}} 
		\left( \frac{1}{(-\dur)} (\rd_{u} \phidf)^{2} + \kpp^{-1} (-\dur) (\rd_{v} \phidf)^{2} \right) r^{2} \\
	& + \iint_{\calD(\tau_{1}, \tau_{2})} \frac{1-\mu}{\dvr (-\dur)} \left((\rd_{v} \phi)^{2} (\rd_{u} \phidf)^{2} + (\rd_{u} \phi)^{2} (\rd_{v} \phidf)^{2} \right) .
\end{aligned}
\end{equation*}
We claim that for every $1 \leq \tau_{1} \leq \tau_{2}$ we have
\begin{equation} \label{eq:en-w-ctrl}
	E_{w}[\phidf](\tau_{1}, \tau_{2}) 
	\leq C \left( \int_{\Gmm^{(in)}_{\tau_{1}}} \frac{1}{-\dur} (\rd_{u} \phidf)^{2} r^{2} 
	+ \int_{\Gmm^{(out)}_{\tau_{1}}} \kpp^{-1} (\rd_{v} \phidf)^{2} r^{2}  + \calE(\tau_{1}, \tau_{2}) \right) .
\end{equation}

Let $X_{0}$ be given by \eqref{eq:vf-X} with $f(r) = r^{-3}$, i.e.,
\begin{equation*}
	X_{0} = r^{-3} \left( - \frac{1-\mu}{\dvr} \rd_{v} + \frac{1-\mu}{(-\dur)} \rd_{u}\right).
\end{equation*}
Also, let $Y$ be given by \eqref{eq:vf-Y} with $\chi_{\EH}(r)$ a smooth non-negative function, with $0\leq \chi\leq 1$, which equals $1$ on $\set{r \leq 10}$ and $0$ on $\set{r \geq 20}$. To prove \eqref{eq:en-w-ctrl}, the idea is to use a multiplier of the form 
\begin{equation*}
a T + b X_{0} + c Y. 
\end{equation*}
That is, we sum up $a \times \eqref{eq:en-id-T} + b \times \eqref{eq:en-id-X0} + c \times \eqref{eq:en-id-Y}$, integrate over $\calD(\tau_{1}, \tau_{2})$, obtain boundary terms from the exact derivatives and choose $a, b, c \geq 0$ appropriately.

We start by computing the contribution of the LHSs of these identities. For the $T$ vector field, we observe that
\begin{equation} \label{eq:en-w-ctrl-T}
\begin{aligned}
& \hskip-2em
	\int_{\Gmm^{(in)}_{\tau_{2}}} \frac{1}{2} \frac{1-\mu}{(-\dur)} (\rd_{u} \phidf)^{2} r^{2} 
	+ \int_{\Gmm^{(out)}_{\tau_{2}}} \frac{1}{2} \kpp^{-1} (\rd_{v} \phidf)^{2} r^{2} \\
& \hskip-2em
	+ \int_{\NI(\tau_{1}, \tau_{2})} \frac{1}{2} \frac{1-\mu}{(-\dur)} (\rd_{u} \phidf)^{2} r^{2} 
	+ \int_{\EH(\tau_{1}, \tau_{2})} \frac{1}{2} \kpp^{-1} (\rd_{v} \phidf)^{2} r^{2}  \\
	\leq & \iint_{\calD(\tau_{1}, \tau_{2})} (\hbox{LHS of \eqref{eq:en-id-T}})
	+\int_{\Gmm^{(in)}_{\tau_{1}}} \frac{1}{2} \frac{1-\mu}{(-\dur)} (\rd_{u} \phidf)^{2} r^{2} 
	+\int_{\Gmm^{(out)}_{\tau_{1}}} \frac{1}{2} \kpp^{-1} (\rd_{v} \phidf)^{2} r^{2} .
\end{aligned}
\end{equation}
Next, we consider $X_{0}$. On the LHS of \eqref{eq:en-id-X0}, since $f \geq 0$ and $- f' \geq 0$, the only term with non-consistent sign is $2 f r^{-1} (1-\mu)  \rd_{u} \phidf \rd_{v} \phidf r^{2}$. For this term, we apply Cauchy-Schwarz and estimate
\begin{align*}
	\Abs{2 r^{-4} (1-\mu) \rd_{u} \phidf \rd_{v} \phidf r^{2}}
	\leq r^{-4} \left( \frac{1-\mu}{\dvr} (-\dur) (\rd_{v} \phidf)^{2} 
		+ \frac{1-\mu}{(-\dur)} \dvr (\rd_{u} \phidf)^{2} \right) r^{2}.
\end{align*}
On the other hand, since $- \frac{1}{2} f' = \frac{3}{2} r^{-4}$, the first term on the LHS of \eqref{eq:en-id-X0} equals
\begin{equation*}
%	- \frac{1}{2} (r^{-3})' \left( \frac{1-\mu}{\dvr} (-\dur) (\rd_{v} \phidf)^{2} 
%		+ \frac{1-\mu}{(-\dur)} \dvr (\rd_{u} \phidf)^{2} \right) r^{2}
%	= 	
	\frac{3}{2} r^{-4} \left( \frac{1-\mu}{\dvr} (-\dur) (\rd_{v} \phidf)^{2} 
		+ \frac{1-\mu}{(-\dur)} \dvr (\rd_{u} \phidf)^{2} \right) r^{2}.
\end{equation*}
Therefore, we see that
\begin{equation} \label{eq:en-w-ctrl-X}
\begin{aligned}
& \hskip-2em
	\frac{1}{2} \iint_{\calD(\tau_{1}, \tau_{2})} r^{-4} \left( \frac{1-\mu}{\dvr} (-\dur) (\rd_{v} \phidf)^{2} 
		+ \frac{1-\mu}{(-\dur)} \dvr (\rd_{u} \phidf)^{2} \right) r^{2} \\
& \hskip-2em
	+ \frac{1}{2} \iint_{\calD(\tau_{1}, \tau_{2})} \frac{1-\mu}{\dvr (-\dur)} \left((\rd_{v} \phi)^{2} (\rd_{u} \phidf)^{2} + (\rd_{u} \phi)^{2} (\rd_{v} \phidf)^{2} \right) \\
	\leq & \iint_{\calD(\tau_{1}, \tau_{2})} (\hbox{LHS of \eqref{eq:en-id-X0}}).
\end{aligned}\end{equation}
Finally, by \eqref{eq:red-shift-bnd} and our choice of $\chi_{\EH}$, we have
\begin{equation} \label{eq:en-w-ctrl-Y}
\begin{aligned}
& \hskip-2em
	\int_{\Gmm_{\tau_{2}}^{(in)} \cap \set{r \leq 10}} \frac{1}{2} \frac{1}{(-\dur)} (\rd_{u} \phidf)^{2} r^{2}
	+  \f{c_{\EH}}{2} \iint_{\calD(\tau_{1}, \tau_{2}) \cap \set{r \leq 10}} \frac{1}{(-\dur)} (\rd_{u} \phidf)^{2} r^{2} \\
	\leq & \iint_{\calD(\tau_{1}, \tau_{2})} (\hbox{LHS of \eqref{eq:en-id-Y}})
	+ \int_{\Gmm_{\tau_{1}}^{(in)} \cap \set{r \leq 20}} \frac{1}{2} \frac{1}{(-\dur)} (\rd_{u} \phidf)^{2} r^{2}.
\end{aligned}
\end{equation}

We now bound the contribution of the RHSs, using $\calE(\tau_{1}, \tau_{2})$ as well as the preceding lower bounds. For the $T$ vector field, we simply have the bound
\begin{equation} \label{eq:en-w-ctrl-T1}
	\Abs{\iint_{\calD(\tau_{1}, \tau_{2})} (\hbox{RHS of \eqref{eq:en-id-T}})}
	\leq C \calE(\tau_{1}, \tau_{2}),
\end{equation}
where the contribution of $F$ is controlled by $\calE_{near} + \calE_{far}$, and the rest by $\calE_{nonlin}$. For $X_{0}$, the first line of the RHS of \eqref{eq:en-id-X0} contributes boundary integrals, which can be controlled using \eqref{eq:en-w-ctrl-T} and the initial energy for some $C_{X_0}>0$:
\begin{equation} \label{eq:en-w-ctrl-X1}
\begin{aligned}
& \hskip-2em
	\Abs{\iint_{\calD(\tau_{1}, \tau_{2})} \rd_{u} \left( \frac{f}{2} \frac{1-\mu}{\dvr} (\rd_{v} \phidf)^{2} r^{2} \right) 
		- \iint_{\calD(\tau_{1}, \tau_{2})} \rd_{v} \left( \frac{f}{2} \frac{1-\mu}{(-\dur)} (\rd_{u} \phidf)^{2} r^{2} \right) } \\
	\leq & C_{X_{0}} \left(\iint_{\calD(\tau_{1}, \tau_{2})} (\hbox{LHS of \eqref{eq:en-id-T}})
	+\int_{\Gmm^{(in)}_{\tau_{1}}} \frac{1}{2} \frac{1-\mu}{(-\dur)} (\rd_{u} \phidf)^{2} r^{2} 
	+\int_{\Gmm^{(out)}_{\tau_{1}}} \frac{1}{2} \kpp^{-1} (\rd_{v} \phidf)^{2} r^{2} \right).
\end{aligned}\end{equation}
On the other hand, for the second line of the RHS of \eqref{eq:en-id-X0} we have
\begin{equation} \label{eq:en-w-ctrl-X2}
	\Abs{\iint_{\calD(\tau_{1}, \tau_{2})} F f \left( - \frac{1-\mu}{\dvr} \rd_{v} \phidf + \frac{1-\mu}{(-\dur)} \rd_{u} \phidf \right) r^{2} }
	\leq C ( \calE_{near}(\tau_{1}, \tau_{2}) + \calE_{far}(\tau_{1}, \tau_{2})).
\end{equation}
Finally, we turn to $Y$. For the first line on the RHS of \eqref{eq:en-id-Y}, we have
\begin{equation} \label{eq:en-w-ctrl-Y1}
	\Abs{\iint_{\calD(\tau_{1}, \tau_{2})}  F \left( \frac{\chi_{\EH}(r)}{(-\dur)} \rd_{u} \phidf \right) r^{2} }
	\leq C \calE_{near}(\tau_{1}, \tau_{2}) .
\end{equation}
On the other hand, in view of \eqref{eq:en-w-ctrl-X}, the remainder of the RHS of \eqref{eq:en-id-Y} can be bounded as follows for some $C_Y>0$:
\begin{equation} \label{eq:en-w-ctrl-Y2}
	\Abs{\iint_{\calD(\tau_{1}, \tau_{2})} 
	\chi'_{\EH} \frac{\dvr}{2 r^{2}} \frac{1}{(-\dur)} (\rd_{u} \phidf)^{2} r^{2}
	+ \frac{\chi_{\EH}}{r^{2}} \rd_{v} \phidf \rd_{u} \phidf \, r^{2} } 
	\leq C_{Y} \iint_{\calD(\tau_{1}, \tau_{2})} (\hbox{LHS of \eqref{eq:en-id-X0}}).
\end{equation}

In conclusion, we see that if we choose $c =1$, $b = 1+ c C_{Y}$ and $a = 1 + b C_{X_{0}}$, then \eqref{eq:en-w-ctrl} follows.

\pfstep{Step~2: Control of the zeroth order term}
The goal of this step is to get a preliminary control of the zeroth order term.
The key claim is as follows: For any $\veps > 0$ and a sufficiently large number $R_{1} > 40$, we have
\begin{equation} \label{eq:en-z-ctrl}
\begin{aligned}
& \hskip-2em
	\iint_{\calD(\tau_{1}, \tau_{2}) \cap \set{R_{1} \leq r \leq 2 R_{1}}} \frac{1}{r} \left( (\rd_{v} \phidf)^{2} + (\rd_{u} \phidf)^{2} + \frac{1}{r^{2}} \phidf^{2}\right) r^{2} \\
	\leq & C_{R_{1}, \veps} \left( E_{w}[\phidf](\tau_{1}, \tau_{2}) + r \phidf^{2}(\tau_{1}, v_{f}) \right) + C \calE (\tau_{1}, \tau_{2}) \\
	&  + C \frac{\veps^{1-\eta_{0}}}{R_{1}^{1-\eta_{0}}} \iint_{\calD(\tau_{1}, \tau_{2}) \cap \set{r \geq \veps^{-1} R_{1}}} \frac{1}{r^{1+\eta_{0}}} 
	\left(  \frac{1}{(-\dur)} (\rd_{u} \phidf)^{2} + \kpp^{-1} (-\dur) (\rd_{v} \phidf)^{2} \right) r^{2}.
\end{aligned}
\end{equation}
Note that the last term on the RHS is multiplied by an arbitrarily small factor $\veps^{1-\eta_{0}}$, and is moreover integrated only over $r \geq \veps^{-1} R_{1}$, which is well-separated from the region $\set{R_{1} \leq r \leq 2R_{1}}$ controlled by the LHS. These features allow us to handle this term in the last step below.

To prove \eqref{eq:en-z-ctrl}, we use the vector field $X_{(rad)}$ in \eqref{eq:vf-Xrad}. Let $\chi_{R_{1}}(r)$ be a smooth non-decreasing function such that 
\begin{equation*}
\chi_{R_{1}}(r) = 0 \hbox{ in } \set{r \leq \frac{R_{1}}{2}}, \quad
\chi_{R_{1}}(r) = 1 \hbox{ in } \set{r \geq 4 R_{1}}, \quad
\chi'_{R_{1}} \geq \frac{1}{10 r} \hbox{ in } \set{R_{1} \leq r \leq 2 R_{1}}, \quad
\abs{\chi'_{R_{1}}} \leq \frac{10}{r}.
\end{equation*}
We multiply \eqref{eq:en-id-Xrad} by $\chi_{R_{1}}(r)$ and integrate by parts. Concerning the LHS, we claim that
\begin{equation} \label{eq:en-z-ctrl-Xrad}
\begin{aligned}
& \hskip-2em
	\iint_{\calD(\tau_{1}, \tau_{2})} \frac{1}{2} \chi_{R_{1}}'(r) \dvr (-\dur) \phidf^{2} + \iint \frac{2(\varpi - \frac{\e^{2}}{r})}{r^{2}} \frac{\dvr(-\dur)}{1-\mu} \phidf^{2} \\
	\leq & \iint_{\calD(\tau_{1}, \tau_{2})} \chi_{R_{1}} (\hbox{LHS of \eqref{eq:en-id-Xrad}}) + C ( E_{w}[\phidf](\tau_{1}, \tau_{2}) + r \phidf^{2}(\tau_{1}, v_{f}) ),
\end{aligned}
\end{equation}
We show how to treat the first term on the LHS of \eqref{eq:en-id-Xrad} in detail. We have
\begin{align*}
& \hskip-2em
	\iint_{\calD(\tau_{1}, \tau_{2})} \chi_{R_{1}}(r)  \frac{1}{2} \rd_{u} \left( \dvr \phidf^{2} + 2 \phidf \rd_{v} \phidf r + \frac{1}{\dvr} (\rd_{v} \phidf)^{2} r^{2} \right) \\
	= & 	\iint_{\calD(\tau_{1}, \tau_{2})} \chi'_{R_{1}}(r)  \frac{(-\dur)}{2} \left( \dvr \phidf^{2} + 2 \phidf \rd_{v} \phidf r + \frac{1}{\dvr} (\rd_{v} \phidf)^{2} r^{2} \right) 
	 - \f 12 \int_{\Gmm_{\tau_{1}}^{(out)}} \chi_{R_{1}}(r) \left( \dvr \phidf^{2} + 2 \phidf \rd_{v} \phidf r + \frac{1}{\dvr} (\rd_{v} \phidf)^{2} r^{2} \right) \\
	& + \f 12 \int_{\EH(\tau_{1}, \tau_{2}) } \chi_{R_{1}}(r) \left( \dvr \phidf^{2} + 2 \phidf \rd_{v} \phidf r + \frac{1}{\dvr} (\rd_{v} \phidf)^{2} r^{2} \right) 
	 + \f 12 \int_{\Gmm_{\tau_{2}}^{(out)}} \chi_{R_{1}}(r) \left( \dvr \phidf^{2} + 2 \phidf \rd_{v} \phidf r + \frac{1}{\dvr} (\rd_{v} \phidf)^{2} r^{2} \right) \\
	\geq  & \frac{1}{4} \iint_{\calD(\tau_{1}, \tau_{2})} \chi'_{R_{1}}(r) \dvr (-\dur) \phidf^{2}  \\
	& - \frac{1}{2} \iint_{\calD(\tau_{1}, \tau_{2})} \chi'_{R_{1}}(r) \frac{(-\dur)}{\dvr} (\rd_{v} \phidf)^{2} r^{2} 
	- \f 12 \int_{\Gmm_{\tau_{1}}^{(out)}} \left( \dvr \phidf^{2} + 2 \phidf \rd_{v} \phidf r + \frac{1}{\dvr} (\rd_{v} \phidf)^{2} r^{2} \right).
\end{align*}
In the last inequality, we used the simple inequality $2 \phidf \rd_{v} \phidf r \geq - \dvr \phidf^{2} - \dvr^{-1} (\rd_{v} \phidf)^{2} r^{2}$ for the terms $\int_{\EH(\tau_{1}, \tau_{2})}(\cdots)$ and $\int_{\Gmm^{(out)}_{\tau_{2}}}(\cdots)$ (ensuring that both have a good sign), and the slight variant  $2 \phidf \rd_{v} \phidf r \geq - \frac{1}{2} \dvr \phidf^{2} - 2 \dvr^{-1} (\rd_{v} \phidf)^{2} r^{2}$ for the term $\iint_{\calD(\tau_{1}, \tau_{2})} \chi'_{R_{1}}(r) (\cdots)$. The last line is then bounded from below by $- C (E_{w}[\phidf](\tau_{1}, \tau_{2}) + r \phidf^{2}(\tau_{1}, v_{f}))$, where we use \eqref{eq:mu-large-r} and Lemma~\ref{lem:hardy-opt} (with $\alp = 0$) on $\Gmm^{(out)}_{\tau_{1}}$ to control $\int_{\Gmm^{(out)}_{\tau_{1}}} \dvr \phidf^{2}$.

The second term on the LHS of \eqref{eq:en-id-Xrad} is treated similarly, and we obtain
\begin{align*}
& \hskip-2em
	- \iint_{\calD(\tau_{1}, \tau_{2})} \chi_{R_{1}}(r)  \frac{1}{2} \rd_{v} \left( (-\dur) \phidf^{2} - 2 \phidf \rd_{u} \phidf r + \frac{1}{(-\dur)} (\rd_{v} \phidf)^{2} r^{2} \right)  \\
	\geq & \frac{1}{4} \iint_{\calD(\tau_{1}, \tau_{2})} \chi'_{R_{1}}(r) \dvr (-\dur) \phidf^{2}  
		- C (E_{w}[\phidf](\tau_{1}, \tau_{2}) + r \phidf^{2}(\tau_{1}, v_{f})).
\end{align*}
We omit the repetitive details, except to note that Lemma~\ref{lem:hardy-opt} (with $\alp = 0$) is applied both on $\Gmm^{(in)}_{\tau_{1}}$ (to get a term $r \phidf^{2}(\tau_1, v_{R_{0}(\tau_{1})})$) and $\Gmm^{(out)}_{\tau_{1}}$ (to bound the previous term by $r \phidf^{2}(\tau_1,v_f)$ and other acceptable terms). The claim \eqref{eq:en-z-ctrl-Xrad} follows.

We now estimate the contribution of the RHS of \eqref{eq:en-id-Xrad}. Clearly, we have
\begin{equation} \label{eq:en-z-ctrl-Xrad1}
	\Abs{\iint_{\calD(\tau_{1}, \tau_{2})} \chi_{R_{1}} F \left( \frac{1}{\dvr} \rd_{v} \phidf - \frac{1}{(-\dur)} \rd_{u} \phidf + 2 \frac{\phidf}{r} \right) r^{2} }
	\leq C \calE_{far}(\tau_{1}, \tau_{2}).
\end{equation}
For the last term on the RHS of \eqref{eq:en-id-Xrad}, we divide the integral into $\set{r \leq \veps^{-1} R_{1}}$ and $\set{r \geq \veps^{-1} R_{1}}$. For the first integral, by \eqref{eq:m-bnd}, \eqref{eq:e-bnd}, \eqref{eq:kpp-bnd}, \eqref{eq:mu-large-r} and \eqref{eq:gmm-large-r} we have
\begin{equation} \label{eq:en-z-ctrl-Xrad2}
	\Abs{\iint_{\calD(\tau_{1}, \tau_{2}) \cap \set{r \leq \veps^{-1} R_{1}}} \frac{(\varpi - \frac{\e^{2}}{r})}{r^{2}} \left( \frac{(-\dur)}{1-\mu} \frac{1}{\dvr} (\rd_{v} \phidf)^{2} + \frac{\dvr}{1-\mu} \frac{1}{(-\dur)} (\rd_{u} \phidf)^{2} \right) r^{2}}
	\leq C_{\veps^{-1} R_{1}} E_{w}[\phidf](\tau_{1}, \tau_{2}).
\end{equation}
In view of \eqref{eq:m-bnd}, \eqref{eq:e-bnd}, note that $\abs{r^{-2} (\varpi - \frac{\e^{2}}{r})} \leq C r^{-2}$, which decays faster than $r^{-1-\eta_{0}}$. Again by \eqref{eq:kpp-bnd} and \eqref{eq:mu-large-r}, it follows that
\begin{equation} \label{eq:en-z-ctrl-Xrad3}
\begin{aligned} 
& \hskip-2em
	\Abs{\iint_{\calD(\tau_{1}, \tau_{2}) \cap \set{r \geq \veps^{-1} R_{1}}} \frac{(\varpi - \frac{\e^{2}}{r})}{r^{2}} \left( \frac{(-\dur)}{1-\mu} \frac{1}{\dvr} (\rd_{v} \phidf)^{2} + \frac{\dvr}{1-\mu} \frac{1}{(-\dur)} (\rd_{u} \phidf)^{2} \right) r^{2}} \\
	\leq & C \frac{\veps^{1-\eta_{0}}}{R_{1}^{1-\eta_{0}}} \iint_{\calD(\tau_{1}, \tau_{2}) \cap \set{r \geq \veps^{-1} R_{1}}} \frac{1}{r^{1+\eta_{0}}} 
	\left(  \frac{1}{(-\dur)} (\rd_{u} \phidf)^{2} + \kpp^{-1} (-\dur) (\rd_{v} \phidf)^{2} \right) r^{2}.
\end{aligned}
\end{equation}
Putting together \eqref{eq:en-z-ctrl-Xrad}--\eqref{eq:en-z-ctrl-Xrad3}, we obtain 
\begin{equation*}
\iint_{\calD(\tau_{1}, \tau_{2}) \cap \set{R_{1} \leq r \leq 2 R_{1}}} r \phidf^{2} \leq (\hbox{RHS of \eqref{eq:en-z-ctrl})},
\end{equation*}
which proves a part of \eqref{eq:en-z-ctrl}. Observing that the rest of the LHS of \eqref{eq:en-z-ctrl} is obviously bounded by $C_{R_{1}} E_{w}[\phidf](\tau_{1}, \tau_{2})$, the full claim \eqref{eq:en-z-ctrl} follows.

\pfstep{Step~3: Optimal integrated local energy decay}
We upgrade the non-optimal (as $r \to \infty$) weights in the spacetime integrals in \eqref{eq:en-w-ctrl} and \eqref{eq:en-z-ctrl} to the (almost) optimal ones. We claim that there exists $R_{1} = R_{1}(\eta_{0}) > 40$ sufficiently large so that
\begin{equation} \label{eq:en-far-ctrl}
\begin{aligned}
& \hskip-2em
\iint_{\calD(\tau_{1}, \tau_{2}) \cap \set{r \geq R_{1}}} \frac{1}{r^{1+\eta_{0}}} 
		 \left( \frac{1}{(-\dur)} (\rd_{u} \phidf)^{2} + \kpp^{-1} (-\dur) (\rd_{v} \phidf)^{2} + (-\dur) \frac{\phidf^{2}}{r^{2}} \right) r^{2} \\
& \hskip-2em
+ \iint_{\calD(\tau_{1}, \tau_{2}) \cap \set{r \geq R_{1}}} \frac{1-\mu}{\dvr (-\dur)} r^{3-\eta_{0}} \left((\rd_{v} \phi)^{2} (\rd_{u} \phidf)^{2} + (\rd_{u} \phi)^{2} (\rd_{v} \phidf)^{2} \right) \\
\leq & C_{\eta_{0}} \left( E_{w}[\phidf](\tau_{1}, \tau_{2}) + r \phidf^{2}(\tau_{1}, v_{f}) + \calE(\tau_{1}, \tau_{2}) \right).
\end{aligned}
\end{equation}
We use the identity \eqref{eq:en-id-X} with $f = r^{-\eta_{0}}$, whose LHS has a good sign in the region where $r$ is sufficiently large. Thus we multiply this identity by $\chi_{R_{1}}(r)$ for a suitably large $R_{1}$, integrate by parts over $\calD(\tau_{1}, \tau_{2})$, and control the error terms using \eqref{eq:en-w-ctrl}, \eqref{eq:en-z-ctrl} and $\calE$.

In this step, we deviate from our convention and track the dependence of implicit constants on $\eta_{0}$ for logical clarity.
We first consider the contribution of the LHS of \eqref{eq:en-id-X}. We claim that for $R_{1}$ sufficiently large (depending only on $\eta_{0}$),
\begin{equation} \label{eq:en-far-ctrl-X}
\begin{aligned}
& \hskip-2em
\frac{1}{2} \iint_{\calD(\tau_{1}, \tau_{2}) \cap \set{r \geq R_{1}}} 
	\left( -\frac{1}{2} f'(r)  \left( \frac{1-\mu}{(-\dur)} \dvr (\rd_{u} \phidf)^{2} + \frac{1-\mu}{\dvr} (-\dur) (\rd_{v} \phidf)^{2} \right) r^{2} 
	+ f''(r) r \dvr (-\dur) \phidf^{2} \right) \\
& \hskip-2em
+ \frac{1}{2} \iint_{\calD(\tau_{1}, \tau_{2}) \cap \set{r \geq R_{1}}}
		f(r) \frac{1-\mu}{\dvr (-\dur)} r^{3} \left((\rd_{v} \phi)^{2} (\rd_{u} \phidf)^{2} + (\rd_{u} \phi)^{2} (\rd_{v} \phidf)^{2} \right) \\
\leq &
\iint_{\calD(\tau_{1}, \tau_{2})} \chi_{R_{1}} (\hbox{LHS of \eqref{eq:en-id-X}}).
\end{aligned}
\end{equation}
Since $f = r^{-\eta_{0}}$, we have $f' = -\eta_{0} r^{-1-\eta_{0}}$ and $f'' = \eta_{0} (1+\eta_{0}) r^{-2-\eta_{0}}$. 
Thus we have
\begin{align*}
	\Abs{f \frac{4 \varpi - 2 \frac{\e^{2}}{r}}{r^{2}} \rd_{u} \phidf \rd_{v} \phidf r^{2}}
	\leq & \Abs{\frac{2 \varpi - \frac{\e^{2}}{r}}{1-\mu}} \eta_{0}^{-1} r^{-1} \left(-\frac{1}{2} f'(r)  \left( \frac{1-\mu}{\dvr} (-\dur) (\rd_{v} \phidf)^{2} + \frac{1-\mu}{\dvr} (-\dur) (\rd_{v} \phidf)^{2} \right) r^{2} \right), \\
	\Abs{(f' r - f) \frac{2(\varpi - \frac{\e^{2}}{r})}{r^{2}} \frac{\dvr (-\dur)}{1-\mu} \phidf^{2}}
	\leq & \Abs{\frac{2(\varpi - \frac{\e^{2}}{r})}{1-\mu}} \eta_{0}^{-1} r^{-1} \left( f''(r) r \dvr (-\dur) \phidf^{2} \right).
\end{align*}
In the region $\set{r > 20}$, \eqref{eq:m-bnd}, \eqref{eq:e-bnd} and \eqref{eq:mu-large-r} imply
\begin{equation*}
\Abs{\frac{2 \varpi - \frac{\e^{2}}{r}}{(1-\mu) r^{2}}} + \Abs{\frac{2(\varpi - \frac{\e^{2}}{r})}{1-\mu}} \leq C.
\end{equation*}
Note that $\mathrm{supp}\,\chi_{R_{1}} \subseteq \set{r \geq \frac{R_{1}}{2}}$. Choosing $R_{1}$ sufficiently large depending on $\eta_{0}$ and the preceding constant, \eqref{eq:en-far-ctrl-X} follows.

We now bound the contribution of the RHS of \eqref{eq:en-id-X}. For the first term, we claim that
\begin{equation} \label{eq:en-far-ctrl-X1}
\begin{aligned}
 & \hskip-2em
\Abs{\iint_{\calD(\tau_{1}, \tau_{2})} \chi_{R_{1}}(r) \rd_{u} \left( \frac{f}{2} \frac{1-\mu}{\dvr} (\rd_{v} \phidf)^{2} r^{2} + \frac{1}{2} \rd_{v}(f r \phidf^{2}) - f' \dvr r \phidf^{2} \right)} \\
\leq & C \left( E_{w}[\phidf](\tau_{1}, \tau_{2}) + r \phidf^{2}(\tau_{1}, v_{f})
		+ \iint_{\calD(\tau_{1}, \tau_{2}) \cap \set{\frac{1}{2} R_{1} \leq r \leq 4 R_{1}}} \frac{1}{r} \left( (\rd_{v} \phidf)^{2} + (\rd_{u} \phidf)^{2} + \frac{1}{r^{2}} \phidf^{2}\right) r^{2} \right).
\end{aligned}
\end{equation}
To see this, we first perform integration by parts for the integral on the LHS of \eqref{eq:en-far-ctrl-X1}. The resulting boundary integrals can be estimated using $E_{w}[\phidf](\tau_{1}, \tau_{2})$ and Hardy's inequality (Lemma~\ref{lem:hardy-opt} with $\alp = 0$). Note that when using Lemma~\ref{lem:hardy-opt}, we need to apply \eqref{eq:hardy-v} on both the constant-$u$ boundary hypersurfaces in the future and the past, and then use \eqref{eq:hardy-u} to bound $r \phidf^{2}(\tau_{2}, v_{f})$ by $E_{w}[\phidf](\tau_{1}, \tau_{2})+r \phidf^{2}(\tau_{1}, v_{f})$. On the other hand, the resulting spacetime integral has $\chi_{R_{1}}'$ in the integrand. Since
\begin{equation*}
	\mathrm{supp} \, \chi_{R_{1}}' \subset \set{\frac{R_{1}}{2} \leq r \leq 4 R_{1}}, \quad
	\abs{\chi'_{R_{1}}} \leq \frac{10}{r},
\end{equation*}
this term can be bounded by the last term on the RHS of \eqref{eq:en-far-ctrl-X1}.

The second term on the RHS of \eqref{eq:en-id-X} can be handled similarly as the preceding case; we have
\begin{equation} \label{eq:en-far-ctrl-X2}
\begin{aligned}
 & \hskip-2em
\Abs{\iint_{\calD(\tau_{1}, \tau_{2})} \chi_{R_{1}}(r) \rd_{v} \left( \frac{f}{2} \frac{1-\mu}{(-\dur)} (\rd_{u} \phidf)^{2} r^{2} - \frac{1}{2} \rd_{u}(f r \phidf^{2}) - f' (-\dur) r \phidf^{2} \right)} \\
\leq & C \left( E_{w}[\phidf](\tau_{1}, \tau_{2}) + r \phidf^{2}(\tau_{1}, v_{f})
		+ \iint_{\calD(\tau_{1}, \tau_{2}) \cap \set{\frac{1}{2} R_{1} \leq r \leq 4 R_{1}}} \frac{1}{r} \left( (\rd_{v} \phidf)^{2} + (\rd_{u} \phidf)^{2} + \frac{1}{r^{2}} \phidf^{2}\right) r^{2} \right).
\end{aligned}
\end{equation}
We omit the proof. 

For the last term on the RHS of \eqref{eq:en-id-X}, we clearly have
\begin{equation} \label{eq:en-far-ctrl-X3}
\Abs{\iint_{\calD(\tau_{1}, \tau_{2})} \chi_{R_{1}} F f \left( - \frac{1-\mu}{\dvr} \rd_{v} \phidf + \frac{1-\mu}{(-\dur)} \rd_{u} \phidf - 2 \frac{\phidf}{r} \right) r^{2}}
\leq C \calE_{far}(\tau_{1}, \tau_{2}).
\end{equation}

Finally, we are ready to put together the estimates so far and establish \eqref{eq:en-far-ctrl}. By \eqref{eq:en-far-ctrl-X}--\eqref{eq:en-far-ctrl-X3}, where we recall the choice $f = r^{-\eta_{0}}$ and the geometric bounds \eqref{eq:kpp-bnd}--\eqref{eq:gmm-large-r}, we have
\begin{align*}
	(\hbox{LHS of \eqref{eq:en-far-ctrl}})
	\leq & C \eta_{0}^{-1} \left( E_{w}[\phidf](\tau_{1}, \tau_{2}) + r \phidf^{2}(\tau_{1}, v_{f}) + \calE(\tau_{1}, \tau_{2}) \right) \\
	& + C \eta_{0}^{-1} \iint_{\calD(\tau_{1}, \tau_{2}) \cap \set{\frac{1}{2} R_{1} \leq r \leq 4 R_{1}}} \frac{1}{r} \left( (\rd_{v} \phidf)^{2} + (\rd_{u} \phidf)^{2} + \frac{1}{r^{2}} \phidf^{2}\right) r^{2} .
\end{align*}
Using \eqref{eq:en-z-ctrl} (for $R_{1}/2$, $R_{1}$ and $2 R_{1}$), for any $\veps > 0$ we have
\begin{align*}
	(\hbox{LHS of \eqref{eq:en-far-ctrl}})
	\leq & C_{R_{1}, \veps} \eta_{0}^{-1} \left( E_{w}[\phidf](\tau_{1}, \tau_{2}) + r \phidf^{2}(\tau_{1}, v_{f}) \right) + C \eta_{0}^{-1} \calE (\tau_{1}, \tau_{2}) \\
	&  + C \eta_{0}^{-{1} }\frac{\veps^{1-\eta_{0}}}{R_{1}^{1-\eta_{0}}} \iint_{\calD(\tau_{1}, \tau_{2}) \cap \set{r \geq (2\veps)^{-1} R_{1}}} \frac{1}{r^{1+\eta_{0}}} 
	\left(  \frac{1}{(-\dur)} (\rd_{u} \phidf)^{2} + \kpp^{-1} (-\dur) (\rd_{v} \phidf)^{2} \right) r^{2}.
\end{align*}
Observe that, by taking $\veps > 0$ sufficiently small depending on $R_{1} = R_{1}(\eta_{0})$ and $\eta_{0}$, the last term on the RHS can be absorbed into the LHS. This concludes the proof of \eqref{eq:en-far-ctrl}.

From now on, we again {\bf suppress the dependence of constants on $\eta_{0}$}, as we had done before.

\pfstep{Step~4: Completion of the proof}
We now have all we need to finish the proof of \eqref{eq:en-ctrl}. In view of \eqref{eq:en-w-ctrl} and \eqref{eq:en-far-ctrl}, it remains achieve the following tasks:
\begin{itemize}
\item Bound the spacetime integral of $\phidf$ for $r\leq R_1$;
\item Bound the improved integrated local energy decay for $\rd_{v} \phidf$ near $\EH$ in \eqref{eq:iled-small-r}; and
\item Bound the interaction Morawetz term \eqref{eq:int-morawetz}.
\end{itemize}

To control the zeroth order term for $r\leq R_1$, we use Hardy's inequality (Lemma~\ref{lem:hardy-opt} with $\alp = 0$) and \eqref{eq:r-bnd} to estimate
\begin{align*}
	&\iint_{\calD(\tau_{1}, \tau_{2}) \cap \set{r \leq R}} (-\dur) r^{-(1+\eta_{0})} \phidf^{2} \\
	\leq & C\iint_{\calD(\tau_{1}, \tau_{2}) \cap \set{r \leq R}} (-\dur) \phidf^{2} \\
	\leq & C \left( \iint_{\calD(\tau_{1}, \tau_{2}) \cap \set{r \leq R}} \frac{1}{(-\dur)} (\rd_{u} \phidf)^{2} r^{2}
		+ R \int_{v_{R}(\tau_{1})}^{v_{R}(\tau_{2})} \phidf^{2}(u_{R}(v), v) \, \ud v  + R \int_{\Gamma_{\tau_{1}}^{(out)}\cap\{r \leq R\}} \phidf^{2} \right).
\end{align*}
After averaging this inequality over $R \in [R_{1}, R_{1}+1]$, we may apply \eqref{eq:en-w-ctrl} and \eqref{eq:en-far-ctrl} together with Hardy's inequality (Lemma~\ref{lem:hardy-opt} with $\alp = 0$) on $\Gamma_{\tau_{1}}^{(out)}$ to conclude that
\begin{equation*}
\iint_{\calD(\tau_{1}, \tau_{2}) \cap \set{r \leq R_{1}}} (-\dur) r^{-(1+\eta_{0})} \phidf^{2} 
\leq C (\hbox{RHS of \eqref{eq:en-ctrl}}).
\end{equation*}
Notice here that we can allow $C$ to depend on $R_{1}$, since according to Step~3, $R_{1}$ depends only on $\eta_{0}$.

Next, we bound \eqref{eq:iled-small-r} by the RHS of \eqref{eq:en-ctrl}. We use the irregular vector field $\Xi$ defined in \eqref{eq:vf-Xi}. 
For $R \leq 20$, we integrate by parts the identity \eqref{eq:en-id-Xi} with $c_{\Xi} = \frac{1}{2} c_{(\dur)}$ over $\calD(\tau_{1}, \tau_{2}) \cap \set{r \leq R}$. Then
\begin{align*}
& \hskip-2em
	\frac{1}{2} \int_{\EH(\tau_{1}, \tau_{2}) \cap \set{r \leq R}} e^{-\frac{1}{2} c_{(\dur)} (u - v - C_{\gmm_{20}})} r^{2} (\rd_{v} \phidf)^{2}
	+ \frac{c_{(\dur)}}{4} \iint_{\calD(\tau_{1}, \tau_{2}) \cap \set{r \leq R}} e^{-\frac{1}{2} c_{(\dur)} (u - v - C_{\gmm_{20}})} r^{2} (\rd_{v} \phidf)^{2} \\
	\leq & C \left( \iint_{\calD(\tau_{1}, \tau_{2}) \cap \set{r \leq R}} \abs{F \rd_{v} \phidf} r^{2} 
				+ \iint_{\calD(\tau_{1}, \tau_{2}) \cap \set{r \leq R}} \dvr r \abs{\rd_{u} \phidf} \abs{\rd_{v} \phidf} 
				+ \int_{v_{R}(\tau_{1})}^{v_{R}(\tau_{2})} (\rd_{v} \phidf)^{2}(u_{R}(v), v) \, \ud v \right) \\
	\leq & C \left( \calE_{near}(\tau_{1}, \tau_{2})
				+ E_{w}[\phidf](\tau_{1}, \tau_{2})
				+ \int_{v_{R}(\tau_{1})}^{v_{R}(\tau_{2})} (\rd_{v} \phidf)^{2}(u_{R}(v), v) \, \ud v \right).
\end{align*}
In the first inequality, we used \eqref{eq:df-large-r:v-u} to bound $e^{-\frac{1}{2} c_{(\dur)} (u - v - C_{\gmm_{20}})} \leq C$.
Note that the boundary terms on $\Gmm^{(out)}_{\tau_{1}}$ and $\Gmm^{(out)}_{\tau_{2}}$ do not arise, since $R \leq 20 < R_{0}$.
Averaging this inequality over $R \in [10, 20]$, we may bound the remaining integral by $E_{w}[\phidf](\tau_{1}, \tau_{2})$, and conclude
\begin{equation*}
	\frac{c_{(\dur)}}{4} \iint_{\calD(\tau_{1}, \tau_{2}) \cap \set{r \leq 10}} e^{-\frac{1}{2} c_{(\dur)} (u - v - C_{\gmm_{20}})} r^{2} (\rd_{v} \phidf)^{2} 
	\leq C \left( \calE_{near}(\tau_{1}, \tau_{2}) + E_{w}[\phidf](\tau_{1}, \tau_{2}) \right).
\end{equation*}
The same integral over $\calD(\tau_{1}, \tau_{2}) \cap \set{10 \leq r \leq 20}$ is bounded by $E_{w}[\phidf](\tau_{1},\tau_{2})$ thanks to \eqref{eq:df-large-r:v-u}. Then by \eqref{eq:en-w-ctrl}, we see that $\eqref{eq:iled-small-r} \leq C (\hbox{RHS of \eqref{eq:en-ctrl}})$ as desired.

Finally, to control the interaction Morawetz term, note that \eqref{eq:en-w-ctrl} and \eqref{eq:en-far-ctrl} imply
\begin{equation} \label{eq:en-ctrl:int-morawetz}
	\iint_{\calD(\tau_{1}, \tau_{2})} \frac{1-\mu}{\dvr (-\dur)}  r^{3-\eta_{0}} \left( (\rd_{v} \phi)^{2} (\rd_{u} \phidf)^{2} + (\rd_{u} \phi)^{2} (\rd_{v} \phidf)^{2} \right)  \leq C (\hbox{RHS of \eqref{eq:en-ctrl}}).
\end{equation}
We expand $\rd_{v} \phi = \rd_{v} \phibg + \rd_{v} \phidf$, $\rd_{u} \phi = \rd_{u} \phibg + \rd_{u} \phidf$ and bound the LHS from below as follows:
\begin{align*}
(\hbox{LHS of \eqref{eq:en-ctrl:int-morawetz}}) 
%	= & \iint_{\calD(\tau_{1}, \tau_{2})}	2 \frac{1-\mu}{\dvr (-\dur)}  r^{3} (\rd_{v} \phidf)^{2} (\rd_{u} \phidf)^{2} \\
%	& + \frac{1-\mu}{\dvr (-\dur)}  r^{3-\eta_{0}} \left( (\rd_{v} \phibg)^{2} + 2 \rd_{v} \phibg \rd_{v} \phidf \right) (\rd_{u} \phidf)^{2} \\
%	& + \frac{1-\mu}{\dvr (-\dur)}  r^{3-\eta_{0}} \left( (\rd_{u} \phibg)^{2} + 2 \rd_{u} \phibg \rd_{u} \phidf \right) (\rd_{v} \phidf)^{2} \\
\geq & \iint_{\calD(\tau_{1}, \tau_{2})} \frac{1-\mu}{\dvr (-\dur)}  r^{3-\eta_{0}} (\rd_{v} \phidf)^{2} (\rd_{u} \phidf)^{2} \\
	& - C \iint_{\calD(\tau_{1}, \tau_{2})}  \frac{1-\mu}{\dvr (-\dur)}  r^{3-\eta_{0}} \left( (\rd_{v} \phibg)^{2} (\rd_{u} \phidf)^{2} + (\rd_{u} \phibg)^{2} (\rd_{v} \phidf)^{2} \right).
\end{align*}
Estimating the last term by $\calE_{nonlin} (\tau_1,\tau_2)$, we obtain the desired control of the interaction Morawetz term.

This completes the proof of \eqref{eq:en-ctrl}. \qedhere
\end{proof}

We now state the analogous control of the LHS of \eqref{eq:en-null-goal}.
\begin{lemma} \label{lem:en-null-ctrl}
For every $1 \leq \tau_{1} \leq \tau_{2}$, we have
\begin{equation} \label{eq:en-null-ctrl}
\begin{aligned}
& \hskip-2em
	\sup_{v \in [v_{R_{0}}(\tau_{1}), v_{R_0}(\tau_{2})] }\int_{\uC_{v}(\tau_{1}, \tau_{2})} \frac{1}{(-\dur)} (\rd_{u} \phidf)^{2} r^{2} \, \ud u
	+ \sup_{u \in [\tau_{1}, \tau_{2}] }\int_{C_{u}(\tau_{1}, \tau_{2})} \kpp^{-1} (\rd_{v} \phidf)^{2} r^{2} \, \ud v  \\
	\leq & C \left( \int_{\Gmm^{(in)}_{\tau_{1}}} \frac{1}{(-\dur)} (\rd_{u} \phidf)^{2} r^{2} \, \ud u
	+ \int_{\Gmm^{(out)}_{\tau_{1}}} \kpp^{-1} (\rd_{v} \phidf)^{2} r^{2} \, \ud v + r \phidf^{2} (\tau_{1}, v_{f}) + \calE(\tau_{1}, \tau_{2}) \right),
\end{aligned}
\end{equation}
where the error $\calE(\tau_{1}, \tau_{2})$ is defined as in Lemma~\ref{lem:en-ctrl}.
\end{lemma}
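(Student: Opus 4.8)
\textbf{Proof plan for Lemma~\ref{lem:en-null-ctrl}.} The plan is to run essentially the same vector field argument as in the proof of Lemma~\ref{lem:en-ctrl}, but now taking the energy flux on a \emph{single} null curve (either a $\uC_{v}$ or a $C_{u}$ inside $\calD(\tau_{1},\tau_{2})$) as the quantity to be estimated, rather than the foliation-sup quantity $E[\phidf]$. The key observation is that $E[\phidf](\tau_{1},\tau_{2})$ has already been shown (combining \eqref{eq:en-ctrl} with the error estimates of the subsequent lemmas, or, as is actually needed here, just \eqref{eq:en-ctrl} itself) to be bounded by the RHS of \eqref{eq:en-null-ctrl}; so once the flux on a null curve is controlled by $C\,E[\phidf](\tau_{1},\tau_{2})$ plus an acceptable error, we are done. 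Concretely: fix $v \in [v_{R_{0}}(\tau_{1}), v_{R_{0}}(\tau_{2})]$ and integrate the $T$-identity \eqref{eq:en-id-T} over the region $\calD(\tau_{1},\tau_{2}) \cap \set{v' \le v}$. The exact-derivative terms produce the desired flux $\int_{\uC_{v}(\tau_{1},\tau_{2})} \frac{1}{(-\dur)}(\rd_{u}\phidf)^{2} r^{2}$ on the right null boundary, the flux on $\Gmm^{(out)}_{\tau_{1}}$ (part of the past boundary, hence an acceptable data term), a flux on $\EH(\tau_{1},\tau_{2})$ (good sign, can be dropped), and a portion of $\Gmm^{(in)}_{\tau}$-type fluxes along the future boundary which is $\le E[\phidf](\tau_{1},\tau_{2})$. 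The spacetime error integrals on the RHS of \eqref{eq:en-id-T} are precisely $\calE_{near} + \calE_{far} + \calE_{nonlin}$ (the nonlinear terms are exactly the last two lines of \eqref{eq:en-id-T}), i.e.\ they are bounded by $\calE(\tau_{1},\tau_{2})$ exactly as in the proof of Lemma~\ref{lem:en-ctrl}.

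The one place requiring a little care is the behavior near $\EH$: the $T$-energy flux degenerates there (the weight is $\kpp^{-1}$ and $\frac{1}{(-\dur)}$ with $-\dur \to 0$), which is already accounted for since the claimed estimate \eqref{eq:en-null-ctrl} uses exactly the degenerate weights $\frac{1}{(-\dur)}(\rd_{u}\phidf)^{2}r^{2}$ and $\kpp^{-1}(\rd_{v}\phidf)^{2}r^{2}$. For the $C_{u}$-flux, $u \in [\tau_{1},\tau_{2}]$, one integrates \eqref{eq:en-id-T} over $\calD(\tau_{1},\tau_{2}) \cap \set{u' \le u}$ instead; the argument is symmetric, now producing the flux $\int_{C_{u}(\tau_{1},\tau_{2})} \kpp^{-1}(\rd_{v}\phidf)^{2}r^{2}$ on the future null boundary and the data flux on $\Gmm^{(in)}_{\tau_{1}}$, $\Gmm^{(out)}_{\tau_{1}}$, plus the $\NI(\tau_{1},\tau_{2})$-flux which has a good sign. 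A subtle point is that the past boundary of $\calD(\tau_{1},\tau_{2}) \cap \set{v' \le v}$ may consist of a piece of $\Gmm^{(out)}_{\tau_{1}}$ followed by a piece of $\Gmm^{(in)}_{\tau}$ for $\tau \le \tau_{1}$; but since our foliation is defined relative to $\gmm_{R_{0}}$ and $v \ge v_{R_{0}}(\tau_{1})$, one checks that in fact the relevant past boundary is contained in $\Gmm_{\tau_{1}}$, so only the data terms in \eqref{eq:en-null-ctrl} appear. The zeroth order term $r\phidf^{2}(\tau_{1},v_{f})$ enters only through the Hardy-type estimate (Lemma~\ref{lem:hardy-opt}) needed to control terms like $\int_{\Gmm^{(out)}_{\tau_{1}}} \dvr \phidf^{2}$ that arise from the lower-order pieces, exactly as in Step~2 and Step~4 of the proof of Lemma~\ref{lem:en-ctrl}.

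The main (minor) obstacle I anticipate is bookkeeping: one must verify that for \emph{every} admissible null curve $\uC_{v}$ or $C_{u}$, the past boundary of the truncated region sits inside $\Gmm_{\tau_{1}}$ and the future boundary contributes only terms already bounded by $E[\phidf](\tau_{1},\tau_{2})$, and that no new error terms beyond $\calE(\tau_{1},\tau_{2})$ are generated. Since we only use the single multiplier $T$ (no $X_{0}$, $Y$, $X_{(rad)}$ or $\Xi$ are needed for the null-curve flux — those were required in Lemma~\ref{lem:en-ctrl} only to upgrade weights and produce the spacetime/redshift terms, which do not appear in \eqref{eq:en-null-ctrl}), the error structure is a strict subset of what was already handled, and the estimate follows. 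Thus the proof is: apply the $T$-identity \eqref{eq:en-id-T}, integrate over the appropriate truncated subdomain, discard good-sign boundary terms, bound the remaining future-boundary fluxes by $E[\phidf](\tau_{1},\tau_{2})$ (itself $\le$ RHS of \eqref{eq:en-null-ctrl} by Lemma~\ref{lem:en-ctrl}), bound the spacetime integrals by $\calE(\tau_{1},\tau_{2})$, and use Lemma~\ref{lem:hardy-opt} on $\Gmm^{(out)}_{\tau_{1}}$ to absorb lower-order contributions, producing the $r\phidf^{2}(\tau_{1},v_{f})$ term. We omit the details, which are a simplified version of the proof of Lemma~\ref{lem:en-ctrl}.
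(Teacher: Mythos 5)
There is a genuine gap: you misidentify the $T$-energy flux weight on a constant-$v$ curve. Multiplying the wave equation by $r^{2}T\phidf$, i.e.\ integrating the identity \eqref{eq:en-id-T} over $\calD(\tau_{1},\tau_{2})\cap\set{v'\le v}$, produces the boundary flux
\begin{equation*}
\int_{\uC_{v}}\frac{1-\mu}{(-\dur)}(\rd_{u}\phidf)^{2}r^{2}\,\ud u = \int_{\uC_{v}}\frac{1}{(-\gmm)}(\rd_{u}\phidf)^{2}r^{2}\,\ud u,
\end{equation*}
which is the \emph{degenerate} weight near $\EH$: since $\gmm=\dur/(1-\mu)$ stays bounded while $1-\mu,\,-\dur\to 0$ as $r\to r_{\EH}$, the weight $\frac{1-\mu}{(-\dur)}$ is bounded, whereas the weight $\frac{1}{(-\dur)}$ appearing in the claimed estimate \eqref{eq:en-null-ctrl} diverges. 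Your statement that ``the $T$-energy flux degenerates there (the weight is $\kpp^{-1}$ and $\frac{1}{(-\dur)}$ with $-\dur\to 0$)'' has this exactly backwards: $\frac{1}{(-\dur)}$ is the \emph{nondegenerate} weight, which is strictly stronger than what the $T$-multiplier produces. Consequently the claim that ``no $X_{0}$, $Y$, $X_{(rad)}$ or $\Xi$ are needed'' is false, since the upgraded ($Y$-produced) weight does appear in \eqref{eq:en-null-ctrl}.

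The paper's proof is indeed just to rerun the proof of \eqref{eq:en-w-ctrl} — which uses the three multipliers $aT+bX_{0}+cY$ so that the red-shift current $Y$ supplies the nondegenerate $\frac{1}{(-\dur)}$-flux near $\EH$ and $X_{0}$ absorbs the resulting error — over rectangles with boundary sides on $\Gmm_{\tau_{1}}^{(in)}$, $\Gmm_{\tau_{1}}^{(out)}$, $C_{u}$, $\uC_{v}$. Your argument could in principle be rescued without explicitly re-invoking $Y$: the portion of $\uC_{v}(\tau_{1},\tau_{2})$ with $r\le R_{0}$ coincides with $\Gmm_{\tau_{v}}^{(in)}$ (where $v=v_{R_{0}}(\tau_{v})$), so its nondegenerate flux is already contained in $E[\phidf](\tau_{1},\tau_{2})$, which is controlled by the RHS of \eqref{eq:en-null-ctrl} via \eqref{eq:en-ctrl}; and on the remaining portion with $r\ge R_{0}$ one has $1-\mu\ge\frac{1}{2}$, so the $T$-flux weight $\frac{1-\mu}{(-\dur)}$ is comparable to $\frac{1}{(-\dur)}$. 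But this decomposition, and the observation that the degenerate and nondegenerate weights agree up to constants away from $\EH$, is precisely the missing step; as written, the proposal asserts the $T$-identity alone produces the $\frac{1}{(-\dur)}$-flux, which it does not.
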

This lemma is a straightforward corollary of the proof of Lemma~\ref{lem:en-ctrl}; we simply repeat the proof of \eqref{eq:en-w-ctrl} above, but with $\calD(\tau_{1}, \tau_{2})$ replaced by rectangles with sides belonging to $\Gmm^{(in)}_{\tau_{1}}$, $\Gmm^{(out)}_{\tau_{1}}$, $C_{u}$ and $\uC_{v}$. We omit the details.

\subsubsection*{Control of the error terms}
We now turn to the control of the error term $\calE(\tau_{1}, \tau_{2})$, which was defined in Lemma~\ref{lem:en-ctrl}.
We begin with $\calE_{near}(\tau_{1}, \tau_{2})$.
\begin{lemma} \label{lem:en-err-near}
For any $\veps > 0$, there exists $C_{\veps} > 0$ so that for any $1 \leq \tau_{1} \leq \tau_{2}$, we have
\begin{equation} \label{eq:en-err-near}
	\calE_{near}(\tau_{1}, \tau_{2}) \leq C_{\veps} A^{2} (\log^{2} \Lmb) \tau_{1}^{-2 \omg + 1} \dlt^{2} \left( \eps + \dlt_{(t_{B})}^{2} \right)^{2} + \veps E[\phidf](\tau_{1}, \tau_{2}).
\end{equation}
\end{lemma}
%Recall that:
%\begin{align*}
%	\calE_{near}(\tau_{1}, \tau_{2})
%	= & \iint_{\calD(\tau_{1}, \tau_{2}) \cap \set{r \leq 20}} 
%		\left( \Abs{\frac{\dvr}{r} - \frac{\dvrbg}{\rbg}} \Abs{\rd_{u} \phibg} 
%		+  \Abs{\frac{\dur}{r} - \frac{\durbg}{\rbg}} \Abs{\rd_{v} \phibg} \right) 
%		\left( \frac{1}{(-\dur)} \abs{\rd_{u} \phidf} + \abs{\rd_{v} \phidf} + \abs{\phidf} \right)  \, \ud u \ud v, \\
%\end{align*}
\begin{proof}
By Cauchy--Schwarz, \eqref{eq:kppbg-bnd}, \eqref{eq:durbg-bnd}, \eqref{eq:kpp-bnd} and \eqref{eq:r-bnd}, we have
\begin{align*}
	\calE_{near}(\tau_{1}, \tau_{2})
	\leq \veps E[\phidf](\tau_{1}, \tau_{2}) 
		+ C_{\veps} (I_{1} + I_{2}), 
\end{align*}
where
\begin{align*}
	I_{1} = & \iint_{\calD(\tau_{1}, \tau_{2}) \cap \set{r \leq 20}} \left(\frac{\dvr}{r} - \frac{\dvrbg}{\rbg} \right)^{2} \frac{\durbg}{\dur}(-\durbg) \frac{1}{(-\durbg)^{2}} (\rd_{u} \phibg)^{2} \, \ud u \ud v , \\
	I_{2} = & \iint_{\calD(\tau_{1}, \tau_{2}) \cap \set{r \leq 20}} \left(\frac{1}{r} - \frac{1}{\rbg} \frac{\durbg}{\dur} \right)^{2} (-\dur) \frac{1}{\kppbg^{2}}(\rd_{v} \phibg)^{2} \, \ud u \ud v .
\end{align*}

We start with $I_{1}$. By \eqref{eq:dlt-adm:duphi}, we estimate
\begin{equation*}
	I_{1} \leq \dlt^{2} \iint_{\calD(\tau_{1}, \tau_{2}) \cap \set{r \leq 20}} v^{-2 \omg} \left(\frac{\dvr}{r} - \frac{\dvrbg}{\rbg} \right)^{2} \frac{\durbg}{\dur} (-\durbg)  \, \ud u \ud v.
\end{equation*}
By \eqref{eq:mubg-bnd}, \eqref{eq:durbg-bnd},  \eqref{eq:r-alp-df}, \eqref{eq:df-small-r:rdf}, \eqref{eq:df-small-r:dvrdf} and \eqref{eq:df-small-r:logdurdf}, we have
\begin{equation*}
	\left(\frac{\dvr}{r} - \frac{\dvrbg}{\rbg} \right)^{2} \frac{\durbg}{\dur} (-\durbg) 
	= \left(\frac{\dvrdf}{r} + \kppbg (1-\mubg) \left(\frac{1}{r} - \frac{1}{\rbg} \right) \right)^{2} \frac{\durbg}{\dur} (-\durbg) 
	\leq C A^{2} \log^{2} \Lmb \left( \eps + \dlt_{(t_{B})}^{2} \right)^{2} e^{- \f{c_{(\durbg)}}{2} (u - v - C_{\gmm_{20}})}.
\end{equation*}
By \eqref{eq:df-large-r:v-u} and \eqref{eq:df-large-r:v-u-R0} (which implies $\abs{v_{R_{0}}(\tau) - \tau} \leq C A (\eps + \dlt_{(t_{B})}^{2}) + C \dlt^{2}$, so that $\tau$ and $v_{R_{0}}(\tau)$ are comparable for $\tau \geq 1$), we see that
\begin{equation} \label{eq:en-err-near:pf-1}
	\iint_{\calD(\tau_{1}, \tau_{2}) \cap \set{r \leq 20}} v^{-2 \omg} e^{-c (u - v - C_{\gmm_{20}})} \, \ud u \ud v
	\leq C \int_{v_{R_{0}}(\tau_{1})}^{v_{R_{0}}(\tau_{2})} v^{-2 \omg} \, \ud v
	\leq C \tau_{1}^{-2\omg + 1}.
\end{equation}
Therefore, $I_{1}$ is bounded by the first term on the RHS of \eqref{eq:en-err-near}, which is acceptable.

Next, we treat $I_{2}$. By \eqref{eq:dlt-adm:dvphi}, we have
\begin{align*}
	I_{2} \leq \dlt^{2} \iint_{\calD(\tau_{1}, \tau_{2}) \cap \set{r \leq 20}} v^{-2 \omg} \left( \frac{1}{r} - \frac{1}{\rbg} \frac{\durbg}{\dur} \right)^{2} (-\dur) \, \ud u \ud v .
\end{align*}
By \eqref{eq:r-alp-df}, \eqref{eq:df-small-r:rdf}, \eqref{eq:df-small-r:logdurdf} and \eqref{eq:df-small-r:dur}, we have
\begin{equation*}
	\left( \frac{1}{r} - \frac{1}{\rbg} \frac{\durbg}{\dur} \right)^{2} (-\dur)
	\leq \left( \left(\frac{1}{r} - \frac{1}{\rbg} \right) + \frac{1}{\rbg} \left( 1 - \frac{\durbg}{\dur} \right)\right)^{2} (-\dur)
	\leq C A^{2} \log^{2} \Lmb \left( \eps + \dlt_{(t_{B})}^{2} \right)^{2} e^{- \frac{1}{2} c_{(\dur)} (u - v - C_{\gmm_{20}})}.
\end{equation*}
Then again by \eqref{eq:en-err-near:pf-1}, $I_{2}$ is bounded by the first term on the RHS of \eqref{eq:en-err-near}. This completes the proof. \qedhere
\end{proof}

Next, we handle $\calE_{far}(\tau_{1}, \tau_{2})$. It is for this error term that we need the integrated local energy decay with a (near) optimal $r$-weight. Furthermore, we need to use the decomposition $\dvrdf = \dvrdf_{g} + \dvrdf_{b}$ in Lemma~\ref{lem:df-large-r}, which allows us to overcome the anomalous loss of $r$-decay in \eqref{eq:en-btstrp-kpp}. 
\begin{lemma} \label{lem:en-err-far}
For any $\veps > 0$, there exists $C_{\veps} > 0$ so that for any $1 \leq \tau_{1} \leq \tau_{2}$, we have
\begin{equation} \label{eq:en-err-far}
	\calE_{far}(\tau_{1}, \tau_{2}) \leq C_{\veps} A^{2} (\log^{2} \Lmb) \tau_{1}^{-2 \omg + 1 + \eta_{0}} \dlt^{2} \left( \eps + \dlt_{(t_{B})}^{2} \right)^{2} + \veps E[\phidf](\tau_{1}, \tau_{2}).
\end{equation}
\end{lemma}
%Recall:
%\begin{align*}
%	\calE_{far}(\tau_{1}, \tau_{2})
%	= & \iint_{\calD(\tau_{1}, \tau_{2}) \cap \set{r \geq 20}} 
%		\left( \Abs{\frac{\dvr}{r} - \frac{\dvrbg}{\rbg}} \Abs{\rd_{u} \phibg} 
%		+  \Abs{\frac{\dur}{r} - \frac{\durbg}{\rbg}} \Abs{\rd_{v} \phibg} \right) 
%		\left( \abs{\rd_{u} \phidf} + \abs{\rd_{v} \phidf} + \frac{\abs{\phidf}}{r}\right) r^{2} \, \ud u \ud v, 
%\end{align*}
\begin{proof}
By Cauchy--Schwarz, \eqref{eq:kppbg-bnd}, \eqref{eq:durbg-bnd}, \eqref{eq:kpp-bnd} and \eqref{eq:dur-large-r}, we have
\begin{equation} \label{eq:en-err-far:pf0}
	\calE_{far}(\tau_{1}, \tau_{2}) \leq \veps E[\phidf](\tau_{1}, \tau_{2}) + C_{\veps}(I_{1} + I_{2}),
\end{equation}
where
\begin{align*}
	I_{1} = & \iint_{\calD(\tau_{1}, \tau_{2}) \cap \set{r \geq 20}} r^{1+\eta_{0}} \Abs{\frac{\dvr}{r} - \frac{\dvrbg}{\rbg}}^{2} \frac{1}{(-\durbg)^{2}}\abs{\rd_{u} \phibg}^{2} r^{2} \, \ud u \ud v , \\
	I_{2} = & \iint_{\calD(\tau_{1}, \tau_{2}) \cap \set{r \geq 20}} r^{1+\eta_{0}} \Abs{\frac{\dur}{r} - \frac{\durbg}{\rbg}}^{2} \frac{1}{\kppbg^{2}}\abs{\rd_{v} \phibg}^{2} r^{2} \, \ud u \ud v .
\end{align*}
We start with $I_{1}$, which is more involved. By \eqref{eq:dlt-adm:duphi} and \eqref{eq:df-large-r:rdf} (which ensures that $r, \rbg$ are comparable), we have
\begin{equation} \label{eq:en-err-far:pf1}
	I_{1} \leq C \Lmb^{-1} \dlt^{2} \iint_{\calD(\tau_{1}, \tau_{2}) \cap \set{r \geq 20}} u^{-2\omg} r^{1+\eta_{0}} \Abs{\frac{\dvr}{r} - \frac{\dvrbg}{\rbg}}^{2} \, \ud u \ud v .
\end{equation}
Recalling the decomposition $\dvrdf = \dvrdf_{g} + \dvrdf_{b}$ in Lemma~\ref{lem:df-large-r}, we split
\begin{equation*}
\Abs{\frac{\dvr}{r} - \frac{\dvrbg}{\rbg}}^{2}
\leq 3 \left( \frac{\dvrdf_{b}^{2}}{r^{2}} + \frac{\dvrdf_{g}^{2}}{r^{2}} + \dvrbg^{2} \left(\frac{1}{r} - \frac{1}{\rbg} \right)^{2} \right).
\end{equation*}
By \eqref{eq:mubg-bnd} and \eqref{eq:kppbg-bnd} we have $\dvrbg \leq C$. Similarly, by \eqref{eq:kpp-bnd} and \eqref{eq:mu-large-r} we have $C^{-1} \leq \rd_v r \leq C$ in $\set{r \geq 20}$. Using \eqref{eq:r-alp-df}, \eqref{eq:df-large-r:rdf} and \eqref{eq:df-large-r:dvr-g}, we may estimate
\begin{equation} \label{eq:en-err-dvr-g}
\begin{aligned}
& \hskip-2em 
	\iint_{\calD(\tau_{1}, \tau_{2}) \cap \set{r \geq 20}} u^{-2\omg} r^{1+\eta_{0}} \left( \frac{\dvrdf_{g}^{2}}{r^{2}} + \dvrbg^{2} \left(\frac{1}{r} - \frac{1}{\rbg} \right)^{2} \right) \, \ud u \ud v  \\
	\leq & C A^{2} \log^{2} \Lmb \left( \eps + \dlt_{(t_{B})}^{2} \right)^{2} 
		\iint_{\calD(\tau_{1}, \tau_{2}) \cap \set{r \geq 20}} u^{-2\omg}  r^{-3+\eta_{0}} \log^2 r(\rd_{v} r) \, \ud u \ud v \\
	\leq & C A^{2} (\log^{2} \Lmb) \tau_{1}^{-2\omg + 1} \left( \eps + \dlt_{(t_{B})}^{2} \right)^{2} .
\end{aligned}
\end{equation}
In the last inequality, we first performed the $v$-integral (which is finite and uniformly bounded in $u$ since $r^{-3+\eta_{0}}\log^2 r$ is integrable in $r$ for $r \geq 20$), and then integrated $u^{-2 \omg}$ over $u \geq \tau_{1}$. 

It remains to treat the contribution of $\dvrdf_{b}$. We claim that
\begin{equation} \label{eq:en-err-dvr-b}
	\iint_{\calD(\tau_{1}, \tau_{2}) \cap \set{r \geq 20}} u^{-2\omg} r^{1 + \eta_{0}} \frac{\dvrdf_{b}^{2}}{r^{2}} \, \ud u \ud v
	\leq C A^{2} (\log^{2} \Lmb) \tau_{1}^{-2 \omg + 1 + \eta_{0}} \left( \eps + \dlt_{(t_{B})}^{2} \right)^{2}.
\end{equation}

In the proof of \eqref{eq:en-err-dvr-b}, the bound \eqref{eq:df-large-r:dvrdf} without $r$-decay is (barely) insufficient. Instead, we need to exploit \eqref{eq:df-large-r:dvr-b}.
Using the shorthand $\uC_{v}(\tau_{1}, \tau_{2}) = \uC_{v} \cap \calD(\tau_{1}, \tau_{2}) \cap \set{r \geq 20}$ in this part of the proof, we first estimate
\begin{align*}
%	\iint_{\calD(\tau_{1}, \tau_{2}) \cap \set{r \geq 20}} u^{-2\omg} r^{1 + \eta_{0}} \frac{\dvrdf_{b}^{2}}{r^{2}} \, \ud u \ud v =
	(\hbox{LHS of \eqref{eq:en-err-dvr-b}})
	\leq & \sup_{\calD(\tau_{1}, \tau_{2}) \cap \set{r \geq 20}} \abs{\dvrdf_{b}} 
	\int_{v_{R_{0}}(\tau_{1})}^{v_{R_{0}}(\tau_{2})} \left( \sup_{\uC_{v}(\tau_{1}, \tau_{2})} \abs{\dvrdf_{b}} \int_{\uC_{v}(\tau_{1}, \tau_{2})} u^{-2 \omg} r^{-1+\eta_{0}} \, \ud u \right) \, \ud v.
\end{align*}
By \eqref{eq:v-u-r}, we either have $v + C_{\gmm_{20}} + 20 \leq 2 u$ or $v + C_{\gmm_{20}} + 20 \leq 8 r(u, v)$. Thus, we see that
\begin{align*}
\int_{\uC_{v}(\tau_{1}, \tau_{2})} u^{-2 \omg} r^{-1+\eta_{0}} \, \ud u
\leq & C \int_{\uC_{v}(\tau_{1}, \tau_{2})} u^{-2 \omg + \eta_{0}} r^{-1+2\eta_{0}} (v + C_{\gmm_{20}} + 20)^{-\eta_{0}} \, \ud u \\
\leq & C (v + C_{\gmm_{20}} + 20)^{-\eta_{0}} \int_{\tau_{1}}^{\infty} u^{-2 \omg + \eta_{0}} \, \ud u\leq C (v + \Lmb)^{-\eta_{0}} \tau_{1}^{-2 \omg + 1 + \eta_{0}}.
\end{align*}
In the last inequality, we used \eqref{eq:C-gmm-Lmb}. With the factor $(v + \Lmb)^{-\eta_{0}}$, we are able to apply \eqref{eq:df-large-r:dvr-b} to bound the $v$-integral above. Combined with \eqref{eq:df-large-r:dvrdf} and \eqref{eq:df-large-r:dvr-g}, which show that $\dvrdf_{b} = \dvrdf - \dvrdf_{g}$ obeys the same pointwise estimate as $\dvrdf$, we obtain \eqref{eq:en-err-dvr-b} as desired.

Finally we bound $I_{2}$, which is simpler. By \eqref{eq:dlt-adm:dvphi} and \eqref{eq:df-large-r:rdf}, we have
\begin{equation*}
	I_{2} \leq C \dlt^{2} \iint_{\calD(\tau_{1}, \tau_{2}) \cap \set{r \geq 20}} u^{-2\omg} r^{1+\eta_{0}} \Abs{\frac{\durdf}{r} + \durbg \left(\frac{1}{r} - \frac{1}{\rbg}\right)}^{2} \, \ud u \ud v .
\end{equation*}
Since $\durdf$ obeys an estimate analogous to $\dvrdf_{g}$ (see \eqref{eq:df-large-r:durdf}), we may bound the integral by exactly the same argument as \eqref{eq:en-err-dvr-g}.

In conclusion, we have proved
\begin{equation} \label{eq:en-err-far:pf2}
\begin{aligned}
& \hskip-2em
\iint_{\calD(\tau_{1}, \tau_{2}) \cap \set{r \geq 20}} r^{1+\eta_{0}} \left( 
	\Abs{\frac{\dvr}{r} - \frac{\dvrbg}{\rbg}}^{2} \frac{1}{(-\durbg)^{2}}\abs{\rd_{u} \phibg}^{2} 
	+ \Abs{\frac{\dur}{r} - \frac{\durbg}{\rbg}}^{2} \frac{1}{\kppbg^{2}}\abs{\rd_{v} \phibg}^{2} \right) r^{2} \, \ud u \ud v \\
\leq & C A^{2} (\log^{2} \Lmb) \tau_{1}^{-2 \omg + 1 + \eta_{0}} \dlt^{2} \left( \eps + \dlt_{(t_{B})}^{2} \right)^{2}.
\end{aligned}
\end{equation}
Combined with \eqref{eq:en-err-far:pf0}, this completes the proof of \eqref{eq:en-err-far}. \qedhere
\end{proof}
 
Finally, we treat $\calE_{nonlin}(\tau_{1}, \tau_{2})$. Here we crucially use the improved integrated local energy decay \eqref{eq:iled-small-r} and the interaction Morawetz terms \eqref{eq:int-morawetz} in $E[\phidf]$. With these ideas we avoid the use of pointwise decay bounds for $\rd_{u} \phidf$, $\rd_{v} \phidf$, which would have complicated the argument.

\begin{lemma} \label{lem:en-err-nonlin}
For any $\veps > 0$, there exists $C_{\veps} > 0$ so that for any $1 \leq \tau_{1} \leq \tau_{2}$, we have
\begin{equation} \label{eq:en-err-nonlin}
	\calE_{nonlin}(\tau_{1}, \tau_{2}) \leq ( C_{\veps} \dlt^{2}  + \veps ) E[\phidf](\tau_{1}, \tau_{2}).
\end{equation}
\end{lemma}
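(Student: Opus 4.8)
\textbf{Proof plan for Lemma~\ref{lem:en-err-nonlin}.} Recall that $\calE_{nonlin}(\tau_1,\tau_2)$ is a sum of four spacetime integrals: two ``quadratic in $\phibg$'' terms of the form $\iint \frac{1-\mu}{\dvr(-\dur)} r^3 (\rd \phibg)^2 (\rd \phidf)^2$, and two ``cubic'' terms of the form $\iint \frac{1-\mu}{\dvr(-\dur)} r^3 \abs{\rd \phibg} \abs{\rd \phidf} (\rd' \phidf)^2$. The plan is to treat the quadratic terms using the smallness of $\phibg$ supplied by the $(\omg,\dlt,\Lmb)$-admissibility bounds \eqref{eq:dlt-adm:duphi}, \eqref{eq:dlt-adm:dvphi} (which give a factor $\dlt$, hence $\dlt^2$ after squaring, producing the $C_\veps \dlt^2 E[\phidf]$ contribution), and to treat the cubic terms by Cauchy--Schwarz, peeling off one factor of $\abs{\rd \phibg}$ pointwise (again using the $\dlt$-smallness) and bounding the remaining $\iint \frac{1-\mu}{\dvr(-\dur)} r^{3-\eta_0}(\rd\phidf)^2(\rd'\phidf)^2$ by the interaction Morawetz term \eqref{eq:int-morawetz} contained in $E[\phidf]$, with the remaining $r$-weights absorbed into the $r^{-1-\eta_0}$-weighted integrated local energy decay terms.

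\textbf{Step 1 (region split).} Split $\calD(\tau_1,\tau_2) = (\calD \cap \set{r \leq 20}) \cup (\calD \cap \set{r \geq 20})$, since near $\EH$ the available pointwise bounds on $\phibg$ and the geometric quantities are of ``red-shift'' type (exponential decay of $-\dur$, bounds \eqref{eq:bg-uv:dur-small-r}, \eqref{eq:durbg-bnd}), whereas in the far region we have the $r$-decay of $\phibg$ and of its derivatives together with the near-optimal $r$-weighted terms in $E[\phidf]$.

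\textbf{Step 2 (quadratic terms).} For a term like $\iint \frac{1-\mu}{\dvr(-\dur)} r^3 (\rd_u \phibg)^2 (\rd_v \phidf)^2$, write $r^3 (\rd_u \phibg)^2 \frac{1-\mu}{\dvr(-\dur)}(\rd_v\phidf)^2 = \left(r^2 \frac{(-\dur)}{\durbg^2}(\rd_u\phibg)^2\right)\cdot\left(\kpp^{-1}(-\dur)(\rd_v\phidf)^2\right)\cdot\frac{\durbg^2}{(-\dur)^2}\cdot\frac{r}{(-\dur)}$ and estimate the first bracket pointwise. In $\set{r\leq 20}$, using \eqref{eq:dlt-adm:duphi}, \eqref{eq:durbg-bnd}, the $\langle v\rangle^{-2\omg}$ decay gives a bound $\leq C\dlt^2 \langle v\rangle^{-2\omg}$ and, after comparing $r/(-\dur)$ with $e^{c_{(\dur)}(u-v-C_{\gmm_{20}})}$ via \eqref{eq:weak-st:dur-small-r} and \eqref{eq:df-small-r:dur}, the surviving weight $e^{-\frac12 c_{(\dur)}(u-v-C_{\gmm_{20}})}(\rd_v\phidf)^2 r^2$ is precisely the integrand of \eqref{eq:iled-small-r} appearing in $E[\phidf]$. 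In $\set{r\geq 20}$, use \eqref{eq:dlt-adm:duphi} ($|\frac{1}{\durbg}\rd_u\phibg|\leq \langle u\rangle^{-\omg}(r)^{-1}\Lmb^{-1/2}\dlt$), \eqref{eq:weak-st:mu-large-r}, \eqref{eq:weak-st:dur-large-r} to bound the first bracket by $C\dlt^2 r^{-\eta_0}$ times something integrable, matching the $r^{-1-\eta_0}$-weighted term $\kpp^{-1}(-\dur)(\rd_v\phidf)^2 r^2$ in $E[\phidf]$; here one uses $\langle u\rangle^{-2\omg}r^{\eta_0}\le C$ on $\set{r\ge 20}$ via \eqref{eq:weak-st:v-u-r}. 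The symmetric term with $(\rd_v\phibg)^2(\rd_u\phidf)^2$ is handled identically using \eqref{eq:dlt-adm:dvphi}. This contributes $C_\veps\dlt^2 E[\phidf](\tau_1,\tau_2)$.

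\textbf{Step 3 (cubic terms).} For $\iint \frac{1-\mu}{\dvr(-\dur)} r^3 \abs{\rd_u\phibg}\abs{\rd_u\phidf}(\rd_v\phidf)^2$, apply Young: $\abs{\rd_u\phibg}\abs{\rd_u\phidf}\le \frac{\veps'}{2} r^{-\eta_0}(\rd_u\phidf)^2 (-\dur)^{-1}\cdot\cdots + \frac{1}{2\veps'}(\cdots)(\rd_u\phibg)^2(-\dur)r^{\eta_0}$; the second piece is a quadratic term handled as in Step 2 (smallness $\dlt^2$), and the first piece yields $\veps' \iint \frac{1-\mu}{\dvr(-\dur)}r^{3-\eta_0}(\rd_u\phidf)^2(\rd_v\phidf)^2$, which is exactly $\veps'$ times the interaction Morawetz term \eqref{eq:int-morawetz} in $E[\phidf]$. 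Choosing $\veps'$ small gives $\veps E[\phidf]$. The term with $\abs{\rd_v\phibg}\abs{\rd_v\phidf}(\rd_u\phidf)^2$ is symmetric. Combining Steps 2 and 3 yields \eqref{eq:en-err-nonlin}.

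\textbf{Main obstacle.} The delicate point is the cubic terms: they are genuinely nonlinear in $\phidf$ and one cannot afford to lose any $r$-weight. The resolution is that the interaction Morawetz term \eqref{eq:int-morawetz} was deliberately included in $E[\phidf]$ with the $r^{3-\eta_0}$ weight precisely so that, after peeling one $\rd\phibg$ factor with its built-in $\dlt$-smallness, the remaining quartic quantity is directly controlled — no pointwise decay estimates for $\rd_u\phidf$ or $\rd_v\phidf$ are needed. The secondary subtlety is the near-horizon quadratic terms, where one must carefully convert the $r^3\frac{1-\mu}{\dvr(-\dur)}$ weight into the exponential weight $e^{-\frac12 c_{(\dur)}(u-v-C_{\gmm_{20}})}$ of \eqref{eq:iled-small-r}, which is legitimate because $-\dur$ decays exponentially (comparably to $-\durbg$) by \eqref{eq:weak-st:dur-small-r}, \eqref{eq:df-small-r:dur} and the comparability of $u-v$ with $-r$ up to $O(1)$ on $\gmm_{20}$ from \eqref{eq:weak-st:v-u}.
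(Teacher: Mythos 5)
Your overall architecture matches the paper's: split $\calE_{nonlin}$ into the terms quadratic in $\phibg$-derivatives and the genuinely nonlinear (cubic-in-$\phidf$) terms, handle the former via the $\dlt$-smallness of the background, and handle the latter by Young's inequality, shedding one $\rd\phibg$ factor and absorbing the resulting quartic integral into the interaction Morawetz term \eqref{eq:int-morawetz}. The near-horizon part of Step 2 (reduction to \eqref{eq:iled-small-r} via exponential comparability of $-\dur$ and $-\durbg$) and all of Step 3 are essentially what the paper does. However, there is a genuine gap in your far-region ($r \geq 20$) treatment of the $I_{(uv)}$-type term $\iint \frac{1-\mu}{\dvr(-\dur)} r^{3(+\eta_{0})} (\rd_{u}\phibg)^{2} (\rd_{v}\phidf)^{2}$.

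The error is the claim that $\langle u\rangle^{-2\omg} r^{\eta_{0}} \leq C$ on $\set{r \geq 20}$. This is false: with $u$ fixed in $[\tau_{1}, \tau_{2}]$, one may send $v \to \infty$ so that $r \to \infty$ by \eqref{eq:weak-st:v-u-r}, and the left-hand side blows up. The underlying reason a pointwise comparison against the ILED integrand (which carries only weight $r^{1-\eta_{0}}\kpp^{-1}(-\dur)(\rd_{v}\phidf)^{2}$) cannot work here is that, on $\set{r \geq 10\rbg_{\EH}}$, \eqref{eq:dlt-adm:duphi} only gives $|\durbg^{-1}\rd_{u}\phibg| \lesssim r^{-1} u^{-\omg}$ — the $u$-decay costs one fewer power of $r$ than for $\rd_{v}\phibg$. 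In contrast, for the term with $(\rd_{v}\phibg)^{2}(\rd_{u}\phidf)^{2}$, \eqref{eq:dlt-adm:dvphi} does yield $r^{4}(\rd_{v}\phibg)^{2} \leq C\dlt^{2}$ pointwise (the extra $\min$-factor gives $r^{-2}$ decay), so the pointwise-to-ILED comparison works there; the two ``quadratic'' terms are therefore \emph{not} handled symmetrically, contrary to your claim. For $I_{(uv)}$, the paper instead exploits the full $r^{2}$-weighted energy flux in $E[\phidf]$: since $r^{1+\eta_{0}} \leq r^{2}$, one bounds $\int_{C_{u}} \kpp^{-1} r^{1+\eta_{0}} (\rd_{v}\phidf)^{2}\, \ud v$ uniformly in $u$ by $E[\phidf]$, and then the remaining $u^{-2\omg}$ factor is integrated over $u \geq \tau_{1}$ (with $2\omg > 4 > 1$), producing a factor $C\tau_{1}^{-2\omg+1}$ rather than a divergence. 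This Fubini/flux step, rather than a pointwise weight comparison, is what closes the estimate.
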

\begin{proof}
We begin by splitting
\begin{equation*}
	\calE_{nonlin}(\tau_{1}, \tau_{2}) = I_{(uv)} + I_{(vu)} + II_{(uv)} + II_{(vu)},
\end{equation*}
where
\begin{align*}
	I_{(uv)} = & \iint_{\calD(\tau_{1}, \tau_{2})} \frac{1-\mu}{\dvr(-\dur)} r^{3} (\rd_{u} \phibg)^{2}  (\rd_{v} \phidf)^{2} \, \ud u \ud v, \\
	II_{(uv)} = & \iint_{\calD(\tau_{1}, \tau_{2})} \frac{1-\mu}{\dvr(-\dur)} r^{3} \abs{\rd_{u} \phibg} \abs{\rd_{u} \phidf}  (\rd_{v} \phidf)^{2} \, \ud u \ud v,
\end{align*}
and $I_{(vu)}$, $II_{(vu)}$ are defined by switching $u$ and $v$. 

\pfstep{Step~1: Bound for $I_{(uv)}$}
Here we bound $I_{(uv)}$. In view of later steps, we establish the following slightly stronger estimate:
\begin{equation} \label{eq:en-err-Iuv}
	\iint_{\calD(\tau_{1}, \tau_{2})} \frac{1-\mu}{\dvr(-\dur)} r^{3+\eta_{0}} (\rd_{u} \phibg)^{2}  (\rd_{v} \phidf)^{2} \, \ud u \ud v
	\leq C \dlt^{2} E[\phidf](\tau_{1}, \tau_{2}).
\end{equation}
We split the integral on the LHS into $\iint_{\calD(\tau_{1}, \tau_{2}) \cap \set{r \leq 20}} (\cdots)$ and $\iint_{\calD(\tau_{1}, \tau_{2}) \cap \set{r \geq 20}} (\cdots)$. For the former, we use \eqref{eq:dlt-adm:duphi} and \eqref{eq:kpp-bnd} to first estimate
\begin{align*}
	\iint_{\calD(\tau_{1}, \tau_{2}) \cap \set{r \leq 20}} \frac{1-\mu}{\dvr(-\dur)} r^{3+\eta_{0}} (\rd_{u} \phibg)^{2}  (\rd_{v} \phidf)^{2} \, \ud u \ud v
	\leq C \dlt^{2} \iint_{\calD(\tau_{1}, \tau_{2}) \cap \set{r \leq 20}} \left(\frac{\durbg}{\dur}\right)^{2} (-\dur) (\rd_{v} \phidf)^{2} \, \ud u \ud v.
\end{align*}
Using \eqref{eq:df-small-r:logdurdf} and \eqref{eq:df-small-r:dur} (taking $\eps, \dlt$ smaller if necessary), we have the pointwise bound 
\begin{equation*}
\left( \frac{\durbg}{\dur} \right)^{2} (-\dur) \leq e^{-\frac{1}{2} c_{(\durbg)}(u - v - C_{\gmm_{20}})} \quad \hbox{ in }\set{r \leq 20}. 
\end{equation*}
Therefore, the RHS is bounded by $C \dlt^{2} E[\phidf](\tau_{1}, \tau_{2})$ thanks to \eqref{eq:iled-small-r}.

It remains to estimate the integral $\iint_{\calD(\tau_{1}, \tau_{2}) \cap \set{r \geq 20}} (\cdots)$. By \eqref{eq:dlt-adm:duphi}, \eqref{eq:durbg-bnd}, \eqref{eq:dur-large-r} and \eqref{eq:df-large-r:rdf} (which ensures that $r, \rbg$ are comparable), we have
\begin{align*}
	\iint_{\calD(\tau_{1}, \tau_{2}) \cap \set{r \geq 20}} \frac{1-\mu}{\dvr(-\dur)} r^{3+\eta_{0}} (\rd_{u} \phibg)^{2}  (\rd_{v} \phidf)^{2} \, \ud u \ud v
	\leq & C \Lmb^{-1} \dlt^{2} \iint_{\calD(\tau_{1}, \tau_{2}) \cap \set{r \geq 20}} u^{-2 \omg} \kpp^{-1} r^{1+\eta_{0}} (\rd_{v} \phidf)^{2} \, \ud u \ud v \\
	\leq & C \Lmb^{-1} \dlt^{2} E[\phidf](\tau_{1}, \tau_{2}) \int_{\tau_{1}}^{\infty} u^{-2 \omg} \, \ud u.
\end{align*}
The last line is bounded by $C \Lmb^{-1} \tau_{1}^{-2\omg+1} \dlt^{2} E[\phidf](\tau_{1}, \tau_{2}) $, which is much stronger than what we need. (Note that this proof works for all $\eta_{0} \leq 1$.)

\pfstep{Step~2: Bound for $I_{(uv)}$}
Next, we bound $I_{(vu)}$. As in the previous step, we establish the following stronger estimate:
\begin{equation} \label{eq:en-err-Ivu} 
	\iint_{\calD(\tau_{1}, \tau_{2})} \frac{1-\mu}{\dvr(-\dur)} r^{3+\eta_{0}} (\rd_{v} \phibg)^{2}  (\rd_{u} \phidf)^{2} \, \ud u \ud v
	\leq C \dlt^{2} E[\phidf](\tau_{1}, \tau_{2}).
\end{equation}
In this case, we simply use \eqref{eq:dlt-adm:dvphi}, \eqref{eq:kpp-bnd} and \eqref{eq:df-large-r:rdf} (the latter ensures that $r, \rbg$ are comparable) to bound $r^{4} (\rd_{v} \phidf)^{2} \leq C \dlt^{2}$. Then by \eqref{eq:kpp-bnd} again, we have
\begin{equation*}
	(\hbox{LHS of \eqref{eq:en-err-Ivu}})
	\leq C \dlt^{2} \iint_{\calD(\tau_{1}, \tau_{2})} r^{-1+\eta_{0}} \f{1}{(-\nu)} (\rd_{u} \phidf)^{2} r^{2} \, \ud u \ud v
	\leq C \dlt^{2} E[\phidf](\tau_{1}, \tau_{2}).
\end{equation*}

\pfstep{Step~3: Nonlinear contribution of $\rd \phidf$}
Finally, we treat the terms $II_{(uv)}$ and $II_{(vu)}$, which are nonlinear in $\rd \phidf$. Given $\veps > 0$, we use Cauchy--Schwarz to simply estimate
\begin{equation*}
	II_{(uv)} 
	\leq C_{\veps} \iint_{\calD(\tau_{1}, \tau_{2})} \frac{1-\mu}{\dvr (-\dur)} r^{3+\eta_{0}} (\rd_{u} \phibg)^{2} (\rd_{v} \phidf)^{2} \, \ud u \ud v 		+ \veps \iint_{\calD(\tau_{1}, \tau_{2})} \frac{1-\mu}{\dvr (-\dur)} r^{3-\eta_{0}} (\rd_{u} \phidf)^{2} (\rd_{v} \phidf)^{2} \, \ud u \ud v .
\end{equation*}
The last term is bounded by $\veps E[\phidf](\tau_{1}, \tau_{2})$ via \eqref{eq:int-morawetz}, and the first term is bounded by $C_{\veps} \dlt^{2} E[\phidf](\tau_{1}, \tau_{2})$ by the preceding two steps. Handling $II_{(vu)}$ similarly, the proof of the lemma is complete. \qedhere
\end{proof}

We are ready to complete the proof of Proposition~\ref{prop:energy-goal}.
\begin{proof}[Proof of Proposition~\ref{prop:energy-goal}]
We begin with \eqref{eq:en-goal}. By Lemma~\ref{lem:en-ctrl} and Lemma~\ref{lem:en-err-near}--Lemma~\ref{lem:en-err-nonlin}, we obtain
\begin{align*}
	E[\phidf](\tau_{1}, \tau_{2})
	\leq & C \left( \int_{\Gmm_{\tau_{1}}^{(in)}} \frac{1}{(-\dur)} (\rd_{u} \phidf)^{2} r^{2} \, \ud u 
			+ \int_{\Gmm_{\tau_{1}}^{(out)}} \kpp^{-1} (\rd_{v} \phidf)^{2} r^{2} \, \ud v
			+ r \phidf^{2} (\tau_{1}, v_{f}) \right) \\
		& + C_{\veps} A^{2} (\log^{2} \Lmb) \tau_{1}^{-2 \omg + 1 + \eta_{0}} \dlt^{2} \left(\eps + \dlt_{(t_{B})}^{2} \right)^{2} 
		+ C (C_{\veps} \dlt + \veps) E[\phidf](\tau_{1}, \tau_{2}).
\end{align*}
Choosing $\veps$ small enough and then taking $\dlt$ small as well accordingly, the last term on the RHS can be absorbed into the LHS. Therefore,
\begin{equation} \label{eq:en-final}
	E[\phidf](\tau_{1}, \tau_{2}) \leq C (\hbox{RHS of \eqref{eq:en-goal}}),
\end{equation}
which proves \eqref{eq:en-goal}. For \eqref{eq:en-null-goal}, we simply use Lemma~\ref{lem:en-null-ctrl} together with Lemma~\ref{lem:en-err-near}--Lemma~\ref{lem:en-err-nonlin} with $\veps = 1$ and apply \eqref{eq:en-final} that we just closed. \qedhere
\end{proof}

We record here a pointwise bound for $r \phidf^{2}$ in terms of $\eps^{2}$. 
\begin{corollary} \label{cor:zeroth}
We have
\begin{equation} \label{eq:phidf-pt}
	\sup_{\calD_{(t_B)}} r \phidf^{2} \leq C \eps^{2}.
\end{equation}
\end{corollary}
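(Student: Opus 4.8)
\textbf{Proof proposal for Corollary~\ref{cor:zeroth}.} The goal is a uniform pointwise bound $\sup_{\calD_{(t_B)}} r \phidf^{2} \leq C \eps^{2}$, where we recall that throughout this part of the argument $\eps = \eps_{(t_{B})}$. The plan is to integrate $\rd_{v}(r\phidf^{2})$ along an outgoing null segment, starting from a point where $r\phidf^{2}$ is already controlled, and then control the resulting flux term by the energy estimate of Proposition~\ref{prop:energy-goal}. The key algebraic identity is
\begin{equation*}
	\rd_{v}(r \phidf^{2}) = \dvr \phidf^{2} + 2 r \phidf \rd_{v} \phidf,
\end{equation*}
which, after a Cauchy--Schwarz split $2 r \phidf \rd_{v}\phidf \leq \f{1}{2}\dvr\phidf^{2} + 2 \dvr^{-1} r^{2}(\rd_{v}\phidf)^{2}$, combined with the Hardy-type inequality (Lemma~\ref{lem:hardy-opt} with $\alp=0$) to absorb $\int \dvr \phidf^{2}$, reduces matters to bounding $\int_{C_{u}} \kpp^{-1}(\rd_{v}\phidf)^{2} r^{2}\,\ud v$ and the value of $r\phidf^{2}$ at the past endpoint of the outgoing segment.

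First I would fix $(u,v)\in\calD_{(t_B)}$ and integrate along $C_{u}$ from $v_{R_{0}}(u)$ (the point on $\gmm_{R_{0}}$) up to $v$, in both directions as needed depending on whether $(u,v)$ lies in the region $\{r\leq R_{0}\}$ or $\{r\geq R_{0}\}$. This gives
\begin{equation*}
	r\phidf^{2}(u,v) \leq r\phidf^{2}(u, v_{R_{0}}(u)) + \int_{C_{u}(\tau,\tau')} \abs{\rd_{v}(r\phidf^{2})}\,\ud v
\end{equation*}
for appropriate $\tau,\tau'$, where the integral on the right is controlled by $C \int_{C_{u}}\kpp^{-1}(\rd_{v}\phidf)^{2}r^{2}\,\ud v$ plus a Hardy contribution, using \eqref{eq:mu-large-r}, \eqref{eq:kpp-bnd} and \eqref{eq:r-bnd} to convert between $\dvr$, $\kpp$ and the weight $r^{2}$. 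Next, the flux $\int_{C_{u}}\kpp^{-1}(\rd_{v}\phidf)^{2}r^{2}\,\ud v$ is bounded by $C\eps^{2}$ by applying \eqref{eq:en-null-goal} of Proposition~\ref{prop:energy-goal} with $\tau_{1}=1$: the right-hand side there consists of the initial fluxes on $\Gmm^{(in)}_{1}=\uC_{in}\cap\calD_{(t_B)}$ and $\Gmm^{(out)}_{1}=C_{out}\cap\calD_{(t_B)}$, which are directly part of $\eps^{2}$ (see \eqref{eq:eps-tB}), the error term $C A^{2}\log^{2}\Lmb\,\dlt^{2}(\eps+\dlt_{(t_B)}^{2})^{2}$, which is $\leq C\eps^{2}$ after using the bootstrap hypothesis \eqref{eq:en-btstrp-eps} and Proposition~\ref{prop:dlt-tB-decay} (so that $\dlt_{(t_B)}^{2}\leq C\dlt^{2}$ and $\dlt=\Lmb^{-100\eta_{0}}\dlt_{0}$ absorbs the logarithmic loss), and the zeroth order term $C r\phidf^{2}(1,v_{f})$, which is treated below.

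The one remaining piece is to control $r\phidf^{2}$ at the two special points appearing above: the ``initial'' value $r\phidf^{2}(u,v_{R_{0}}(u))$ on the curve $\gmm_{R_{0}}$, and the future endpoint $r\phidf^{2}(1,v_{f})=r\phidf^{2}(\tau_{1},v_{f})$ at $\tau_{1}=1$ that feeds into the energy estimate. For the point on $\uC_{in}\cup C_{out}$ itself (i.e.\ the corner $C_{out}\cap\uC_{in}$) the quantity $r\phidf^{2}$ is part of $\eps^{2}$ by definition. To propagate this: along $C_{out}$, integrate $\rd_{v}(r\phidf^{2})$ and use the $\log(1+r)$-weighted flux $\int_{C_{out}}\kpp^{-1}(\rd_{v}(r\phidf))^{2}\log(1+r)\,\ud v$ (another constituent of $\eps^{2}$) together with Lemma~\ref{lem:hardy-type} to bound $r\phidf^{2}$ everywhere on $C_{out}$ by $C\eps^{2}$; in particular this handles $v_{R_{0}}(1)$ if $\gmm_{R_{0}}$ starts on $C_{out}$, and it handles $r\phidf^{2}(1,v_{f})$ since $(1,v_{f})$ lies on $C_{out}\cap\calD_{(t_B)}$ when $u=1$ (indeed $\Gmm^{(out)}_{1}\subseteq C_{out}$). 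Then, starting from the bound on $\gmm_{R_{0}}\cap\uC_{in}$ (or on $C_{out}$), one propagates along $\gmm_{R_{0}}$ itself using the vector field $T$ from \eqref{eq:vf-T-0}, with $T(r\phidf^{2}) = r\, T(\phidf^{2}) = 2 r \phidf\, T\phidf$, controlling $\int T\phidf$ by the flux terms $\int_{\EH(1,\tau)}+\int_{\NI(1,\tau)}$ in $E[\phidf]$; but in fact a cleaner route is to observe that every point of $\gmm_{R_{0}}\cap\calD_{(t_B)}$ lies on some $C_{u}$, so the outgoing-integration argument above already covers it once we know $r\phidf^{2}$ is $\leq C\eps^{2}$ at one reference point on $\gmm_{R_{0}}$, namely its intersection with $\uC_{in}$, where it is part of $\eps^{2}$.

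I expect the main obstacle to be bookkeeping the interdependence between $r\phidf^{2}(1,v_{f})$ and the energy estimate: Proposition~\ref{prop:energy-goal} contains $r\phidf^{2}(\tau_{1},v_{f})$ on its right-hand side, so one cannot naively invoke it to bound $r\phidf^{2}(1,v_{f})$ without circularity. The resolution is that at $\tau_{1}=1$ the endpoint $(\tau_{1},v_{f})=(1,v_{f})$ lies on $C_{out}$, where $r\phidf^{2}$ is controlled by $C\eps^{2}$ \emph{independently} of the energy estimate, purely via Lemma~\ref{lem:hardy-type} and the $\log(1+r)$-weighted term in $\eps^{2}$; so one establishes this boundary bound first, feeds it into \eqref{eq:en-null-goal} and \eqref{eq:en-goal}, and only then runs the outgoing-integration argument for general $(u,v)$. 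The rest is routine: collecting constants (all depending only on $\abs{\ebg}$, $\eta_{0}$, $\omg$, with $\dlt_{0}$ and $\eps_{0}$ taken small), and noting that all logarithmic-in-$\Lmb$ losses are absorbed by powers of $\dlt=\Lmb^{-100\eta_{0}}\dlt_{0}$.
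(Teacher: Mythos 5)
Your overall scaffolding is sound — you correctly identify that $r\phidf^{2}$ must first be controlled on $C_{out}$ directly from the $\log(1+r)$-weighted term in $\eps^{2}$ (via Lemma~\ref{lem:hardy-type}, \eqref{eq:mu-large-r}), precisely to break the circularity with the boundary term $r\phidf^{2}(\tau_{1},v_{f})$ in Proposition~\ref{prop:energy-goal}. This is exactly the first step of the paper's proof.

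However, your second step — integrating $\rd_{v}(r\phidf^{2})$ along $C_{u}$ from $\gmm_{R_{0}}$ — has a genuine gap in the region $\set{r < 20}$. The Hardy-type identity you invoke in the $v$-direction (\eqref{eq:hardy-v} with $\alp=0$, or equivalently Lemma~\ref{lem:hardy-type}) inevitably produces the flux $\int_{C_{u}} \dvr^{-1}(\rd_{v}\phidf)^{2} r^{2}\,\ud v$, with the weight $\dvr^{-1} = (\kpp(1-\mu))^{-1}$. This is \emph{not} the same as the $\kpp^{-1}$-weighted flux controlled by Proposition~\ref{prop:energy-goal}: they are comparable only in $\set{r \geq 20}$, where \eqref{eq:weak-st:mu-large-r} gives $1-\mu \geq 1/2$. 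For $r$ close to $r_{\EH}$, $1-\mu$ degenerates, $\dvr$ becomes small, and the $v$-length of $C_{u} \cap \set{r \leq 20}$ tends to infinity as $t_{B} \to \infty$; so neither the conversion to $\kpp^{-1}$ nor a Gr\"onwall argument in the $v$-direction can close, uniformly in $t_{B}$. You cite \eqref{eq:mu-large-r}, so you touch the obstruction, but the proposal does not explain what to do for $r < 20$.

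The paper sidesteps this entirely by integrating in the \emph{$u$-direction}: once $\sup_{C_{out}} r\phidf^{2} \leq C\eps^{2}$ is known, one applies \eqref{eq:hardy-u} (Lemma~\ref{lem:hardy-opt} with $\alp = 0$) on $\uC_{v}$ from $(1,v) \in C_{out}$ to $(u,v)$. This produces the flux $\int_{\uC_{v}}\frac{1}{(-\dur)}(\rd_{u}\phidf)^{2} r^{2}\,\ud u$, whose weight $\frac{1}{(-\dur)}$ is precisely the \emph{nondegenerate} red-shift weight that appears in the energy flux of Proposition~\ref{prop:energy-goal} (both in $\Gmm^{(in)}_{\tau}$ and in \eqref{eq:en-null-goal}), and is therefore controlled by $C\eps^{2}$. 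Your ``even cleaner route'' — covering $\gmm_{R_{0}}$ by the curves $C_{u}$ starting from the single corner $(1,1)$ — does not resolve this either, since the starting value on $\uC_{in}$ at $(u,1)$ for $u>1$ is only accessible via exactly the $u$-direction Hardy argument you would then have to use anyway.
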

Among other things, this corollary allows us to control the last terms on the RHSs of \eqref{eq:en-goal} and \eqref{eq:en-null-goal}. It is also useful for controlling certain nonlinear error terms that arise in the proof of Proposition~\ref{prop:en-dvphi-goal} and Section~\ref{subsec:rp-weight} below.
\begin{proof}
We start by observing that
\begin{equation*}
	\sup_{C_{out}\cap\calD_{(t_B)}} r \phidf^{2} \leq C \eps^{2}.
\end{equation*}
This bound follows simply from Lemma~\ref{lem:hardy-type} (with $\alp = 0$), the definition \eqref{eq:eps-tB} of $\eps = \eps_{(t_{B})}$ (in particular, the term with $\rd_{v} (r \phidf)$) and \eqref{eq:mu-large-r}. Then by Proposition~\ref{prop:energy-goal} and \eqref{eq:hardy-u} (with $\alp = 0$), \eqref{eq:phidf-pt} follows. \qedhere
\end{proof}

\subsubsection*{Improved integrated local energy decay estimate for $\rd_{v} \phidf$}
Next, we turn to the proof of improved integrated local energy decay estimate for $\rd_{v} \phidf$. Our goal is to establish the following proposition.
\begin{proposition} \label{prop:en-dvphi-goal}
We have
\begin{equation} \label{eq:en-dvphi-goal}
	\iint_{\calD_{(t_B)} \cap \set{r \geq 20}} \frac{1}{r} \dvr^{-1}(\rd_{v} \phidf)^{2} r^{2} \, \ud u \ud v  \\
	\leq C \eps^{2}
	+ C A^{2} \log^{2} \Lmb \dlt^{2} \left( \eps + \dlt_{(t_{B})}^{2} \right)^{2} .
\end{equation}
\end{proposition}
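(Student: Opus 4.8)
The plan is to prove Proposition~\ref{prop:en-dvphi-goal} by applying the $\log r$-weighted energy identity of Lemma~\ref{lem:DR} (the vector field $Z = \tfrac{\log(1+r)}{\dvr} \rd_v$), integrated over the bootstrap domain $\calD_{(t_B)} \cap \set{r \geq 20}$, and then controlling the resulting boundary and bulk error terms using the energy estimates of Proposition~\ref{prop:energy-goal} together with the geometric difference bounds from Section~\ref{subsec:geom}. Recall from \eqref{eq:en-id-Z} that the left side of the identity contains, after dividing by appropriate powers of $r$, the good spacetime term $\tfrac12 \tfrac{(-\dur)}{1+r} \dvr^{-1} (\rd_v(r\phidf))^2$; since $\tfrac{(-\dur)}{1+r} \sim r^{-1}$ and $C^{-1} \leq -\dur \leq C$ in $\set{r \geq 20}$ by \eqref{eq:weak-st:dur-large-r}, and using Lemma~\ref{lem:hardy-type} (with $\alp = 0$) to trade $\rd_v(r\phidf)$ for $\rd_v \phidf$ plus a boundary term $r\phidf^2$, this good term dominates exactly $\iint r^{-1} \dvr^{-1} (\rd_v \phidf)^2 r^2$, the quantity we want to bound.

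First I would integrate \eqref{eq:en-id-Z} over $\calD_{(t_B)} \cap \set{r \geq 20}$, being careful to cut off smoothly near $\set{r = 20}$ (say by averaging in $R \in [20, 21]$ as done repeatedly in Section~\ref{subsec:en-pf}) so as to avoid the event-horizon region where the $\log(1+r)/\dvr$ weight has no good sign; the cutoff error is supported in a bounded-$r$ region and is controlled by $E[\phidf](1, v_f)$, hence by $C\eps^2 + C A^2 \log^2\Lmb \dlt^2(\eps + \dlt_{(t_B)}^2)^2 + C r\phidf^2(1, v_f)$ via \eqref{eq:en-goal} (recall $\tau_1 = 1$ here). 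The exact-derivative terms $\rd_u(\cdots)$ and $\rd_v(\cdots)$ in \eqref{eq:en-id-Z} produce boundary integrals over $C_{out} \cap \calD_{(t_B)}$, $\NI_{(t_B)}$, $\EH_{(t_B)}$ and the cutoff curve near $\set{r = 20}$: the $C_{out}$ contribution is bounded by $\eps^2$ directly from the definition \eqref{eq:eps-tB} of $\eps = \eps_{(t_B)}$ (which contains exactly the term $\int_{C_{out}} \kpp^{-1}(\rd_v(r\phidf))^2 \log(1+r)\,\ud v$), and the remaining boundary terms are controlled by Proposition~\ref{prop:energy-goal} and Corollary~\ref{cor:zeroth} (which gives $\sup r\phidf^2 \leq C\eps^2$), noting that the $\phidf^2$-weighted boundary term carries a favorable sign since $-\rd_u\rd_v r > 0$ in $\set{r \geq 20}$. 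The bulk error terms on the right of \eqref{eq:en-id-Z} are: (i) the forcing term $F \tfrac{r\log(1+r)}{\dvr}\rd_v(r\phidf)$, where $F = -(\tfrac{\dvr}{r} - \tfrac{\dvrbg}{\rbg})\rd_u\phibg - (\tfrac{\dur}{r} - \tfrac{\durbg}{\rbg})\rd_v\phibg$; (ii) the term $\tfrac{\log(1+r)}{2}\tfrac{-\rd_u\rd_v r}{\dvr}\dvr^{-1}(\rd_v(r\phidf))^2$, which since $-\rd_u\rd_v r = \tfrac{2(\varpi - \e^2/r)}{r^2}\tfrac{\dvr(-\dur)}{1-\mu} \lesssim r^{-2}$ carries only a logarithmic weight against an $r^{-2}$ decay, hence is absorbable into the good term after a Cauchy--Schwarz and the smallness of $\eps, \dlt$ (indeed $\log(1+r) \cdot r^{-2} \ll r^{-1}$ for large $r$, but one should use the Hardy inequality Lemma~\ref{lem:hardy-opt} to handle the transition); and (iii) the $\rd_v(\tfrac{\log(1+r)}{r}\tfrac{-\rd_u\rd_v r}{\dvr}) r^2\phidf^2$ term, which after expanding the derivative is bounded by $C r^{-1-\eta_0}(-\dur)r^2 \cdot r^{-2}\phidf^2$-type expressions (using \eqref{eq:EMSF-ray}, \eqref{eq:weak-st:me-bnd}, \eqref{eq:weak-st:kpp-bnd}) and is controlled by the zeroth-order spacetime term appearing in $E[\phidf](1, v_f)$ from Proposition~\ref{prop:energy-goal}.

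The main obstacle is term (i), the forcing contribution. After Cauchy--Schwarz against the good spacetime term $\iint r^{-1}\dvr^{-1}(\rd_v(r\phidf))^2$, one is left with $\iint r \log^2(1+r)\, \dvr^{-1} F^2 r^2 \,\ud u\ud v$ (or a variant with an $r^{\eta_0}$ loss to facilitate the absorption), which is structurally the same integral that was estimated as $\calE_{far}$ in Lemma~\ref{lem:en-err-far} — the geometric differences $\tfrac{\dvr}{r} - \tfrac{\dvrbg}{\rbg}$ and $\tfrac{\dur}{r} - \tfrac{\durbg}{\rbg}$ multiplied against $\abs{\rd_u\phibg}^2, \abs{\rd_v\phibg}^2$, which by \eqref{eq:dlt-adm:duphi}--\eqref{eq:dlt-adm:dvphi} carry the crucial factor $\Lmb^{-1}\dlt^2 u^{-2\omg}$ in the large-$r$ regime. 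The delicate point, exactly as in Lemma~\ref{lem:en-err-far}, is the anomalous loss of $r$-decay in the pointwise bound \eqref{eq:weak-st:dvrdf} for $\dvrdf$: I would invoke the splitting $\dvrdf = \dvrdf_g + \dvrdf_b$ from Lemma~\ref{lem:df-large-r}, where $\dvrdf_g$ obeys the $r^{-1}$-improved bound \eqref{eq:en-btstrp:durdf}-type estimate and $\dvrdf_b$ obeys the integrated-in-$v$ bound \eqref{eq:df-large-r:dvr-b}; the good part is handled directly by integration, while the bad part is handled by first integrating in $u$ against $u^{-2\omg + \eta_0}$ and picking up a factor $(v + \Lmb)^{-\eta_0}$ from the relation \eqref{eq:weak-st:v-u-r} between $v - u$ and $r$, which then makes the $v$-integral of $\sup_{\uC_v}\abs{\dvrdf_b}$ convergent by \eqref{eq:df-large-r:dvr-b}. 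All constants this produces are of the form $C A^2\log^2\Lmb\dlt^2(\eps + \dlt_{(t_B)}^2)^2$ (with possibly a harmless extra $\tau_1^{-2\omg + 1 + \eta_0} = 1$ since $\tau_1 = 1$), which matches the claimed right side. Finally, adding back the contribution of the bounded-$r$ region $\set{r \leq 20}$ — which is not needed in \eqref{eq:en-dvphi-goal} but would be bounded by $E[\phidf](1, v_f)$ in any case — and collecting all terms, \eqref{eq:en-dvphi-goal} follows; note that the $\log^2\Lmb$ growth is exactly the kind that is later absorbed by the relation $\dlt \leq \Lmb^{-100\eta_0}\dlt_0$ in \eqref{eq:L-st-ch:dlt}, so no special care about the $\Lmb$-dependence is needed at this stage beyond tracking it honestly.
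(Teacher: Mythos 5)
Your proposal follows essentially the same route as the paper: it proves an integrated estimate for $\rd_{v}(r\phidf)$ using the multiplier $Z = \frac{\log(1+r)}{\dvr}\rd_{v}$ of Lemma~\ref{lem:DR} with a smooth cutoff near $\set{r=20}$, controls the forcing term by the $\calE_{far}$-style argument (including the $\dvrdf = \dvrdf_{g} + \dvrdf_{b}$ splitting and the $\log(1+r) \lesssim r^{\eta_{0}}$ room), and then converts $\rd_{v}(r\phidf)$ to $\rd_{v}\phidf$ via Lemma~\ref{lem:hardy-type}, exactly as the paper does in Lemma~\ref{lem:en-dvrphi} and the subsequent proof. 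The only slip is the Hardy exponent: you need $\alp = -1$, not $\alp = 0$, to pass from $\int r^{-1}\dvr^{-1}(\rd_{v}(r\phidf))^{2}$ to $\int r\,\dvr^{-1}(\rd_{v}\phidf)^{2}$ with the boundary term $\phidf^{2}(u,v_{0})$ (rather than $r\phidf^{2}$).
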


We first establish an integrated energy estimate for $\rd_{v} (r \phidf)$, using the multiplier $Z$ in \eqref{eq:vf-Z}.
\begin{lemma} \label{lem:en-dvrphi}
For any $1 \leq \tau_{1} \leq \tau_{2}$, we have
\begin{equation}
\begin{aligned}
& \hskip-2em
	\iint_{\calD(\tau_{1}, \tau_{2}) \cap \set{r \geq 40}} \frac{1}{r} \dvr^{-1} (\rd_{v} (r \phidf))^{2} \, \ud u \ud v \\
	\leq & C  \int_{\Gmm^{(out)}_{\tau_{1}}} \kpp^{-1} (\rd_{v} (r \phidf))^{2} \log(1 + r) \, \ud v 
		+ C E[\phidf](\tau_{1}, \tau_{2}) + C r \phidf^{2}(\tau_{1}, v_{f}) \\
		& + C A^{2} (\log^{2} \Lmb) \tau_{1}^{-2 \omg + 1 + 2 \eta_{0}} \dlt^{2} \left( \eps + \dlt_{(t_{B})}^{2} \right)^{2} .
\end{aligned}
\end{equation}
\end{lemma}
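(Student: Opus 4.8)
\textbf{Proof plan for Lemma~\ref{lem:en-dvrphi}.} The plan is to apply the $\log r$-weighted energy identity \eqref{eq:en-id-Z} from Lemma~\ref{lem:DR} with the multiplier $Z = \frac{\log(1+r)}{\dvr} \rd_{v}$, integrated over the region $\calD(\tau_{1}, \tau_{2}) \cap \set{r \geq R}$ for $R$ slightly below $40$ (so that we may average in $R$ at the end to avoid boundary terms on a fixed constant-$r$ curve, exactly as is done repeatedly in Section~\ref{subsec:en-pf}). The left-hand side of \eqref{eq:en-id-Z} contains the bulk term $\frac{1}{2} \frac{(-\dur)}{1+r} \frac{1}{\dvr} (\rd_{v}(r\phidf))^{2}$, which, using the geometric bounds \eqref{eq:dur-large-r} and \eqref{eq:kpp-bnd} valid in $\set{r \geq 20}$, controls $\frac{1}{r} \dvr^{-1} (\rd_{v}(r\phidf))^{2}$ from below up to a constant. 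The two exact-derivative terms on the left-hand side contribute boundary integrals: the $\rd_u$-derivative term gives $\int_{\Gmm^{(out)}} \frac{\log(1+r)}{\dvr}(\rd_v(r\phidf))^2 \, \ud v$ on the past and future outgoing boundaries (on the future one the sign is favorable), while the $\rd_v$-derivative term $\rd_v(\frac{1}{2}\frac{\log(1+r)}{\dvr} r (-\rd_u\rd_v r)\phidf^2)$ gives a zeroth-order term on the constant-$v$ boundaries, which we bound using $\phidf^2 \leq C\eps^2 r^{-1}$ (Corollary~\ref{cor:zeroth}) — after averaging in $R$, these turn into $E[\phidf]$ and $r\phidf^2(\tau_1,v_f)$ type terms controlled by Proposition~\ref{prop:energy-goal}.

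\textbf{Handling the right-hand side of \eqref{eq:en-id-Z}.} There are three terms. The forcing term $F \frac{r\log(1+r)}{\dvr}\rd_v(r\phidf)$ with $F = -(\frac{\dvr}{r} - \frac{\dvrbg}{\rbg})\rd_u\phibg - (\frac{\dur}{r} - \frac{\durbg}{\rbg})\rd_v\phibg$ is split by Cauchy--Schwarz into (a small multiple of) $\iint \frac{\log(1+r)}{r} \dvr^{-1}(\rd_v(r\phidf))^2$, which is absorbed into the left-hand side after noting $\log(1+r)/r \leq C_\veps/r + \veps\cdot(\text{something absorbable})$ — more precisely one should keep the weight $\frac{1}{r}\dvr^{-1}(\rd_v(r\phidf))^2$ on the left and use that the $\log(1+r)$ excess on the forcing side is compensated by the extra decay in the $\calE_{far}$-type estimate — plus a forcing-squared term $\iint r^{1+\eta_0}\dvr^{-1} |F|^2 \dots$ which is essentially $\calE_{far}$-type and bounded exactly as in Lemma~\ref{lem:en-err-far}, yielding the $A^2\log^2\Lmb\,\tau_1^{-2\omg+1+2\eta_0}\dlt^2(\eps+\dlt_{(t_B)}^2)^2$ term (the extra $\eta_0$ compared to \eqref{eq:en-err-far} coming from the logarithmic weight, which is why the exponent is $-2\omg+1+2\eta_0$). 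The second term, $\frac{\log(1+r)}{2}\frac{-\rd_u\rd_v r}{\dvr}\cdot\frac{1}{\dvr}(\rd_v(r\phidf))^2$, uses $-\rd_u\rd_v r = \frac{2(\varpi - \e^2/r)}{r^2}\frac{\dvr(-\dur)}{1-\mu}$, so by \eqref{eq:m-bnd}, \eqref{eq:e-bnd}, \eqref{eq:dur-large-r} it carries a factor $\leq C r^{-2}\log(1+r)$, which decays strictly faster than $r^{-1}$ and is therefore absorbed into the left-hand bulk term for $R$ large enough. The third term, $\frac{1}{2}\rd_v(\frac{\log(1+r)}{r}\frac{-\rd_u\rd_v r}{\dvr}) r^2 \phidf^2$, is integrated by parts in $v$ to move the $\rd_v$ onto $\phidf^2$ (producing $\rd_v\phidf^2 = 2\phidf\rd_v\phidf$, handled by Cauchy--Schwarz against the left-hand bulk and $E[\phidf]$) plus boundary terms in $v$ controlled by $\phidf^2 \leq C\eps^2 r^{-1}$; the coefficient $\frac{\log(1+r)}{r}\frac{-\rd_u\rd_v r}{\dvr}$ and its $v$-derivative are again $O(r^{-2}\log r)$-type by the geometric bounds.

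\textbf{Main obstacle.} The delicate point is bookkeeping the logarithmic weight in the forcing term: one must verify that the $\log(1+r)$ factor does not destroy the borderline integrability, which works because the background $\rd_u\phibg$, $\rd_v\phibg$ carry $\omg > 2$ powers of decay in $u$ (via \eqref{eq:dlt-adm:duphi}, \eqref{eq:dlt-adm:dvphi}) with the extra $\Lmb^{-1}$ or $r$-decay, so that after the change of variables $u \mapsto r$ along constant-$v$ curves (as in the proof of Lemma~\ref{lem:en-err-far}, using \eqref{eq:v-u-r} and \eqref{eq:C-gmm-Lmb}) the spatial integral of $r^{-3+2\eta_0}\log^2 r$ converges for $r \geq 40$; the price is the slightly worse exponent $-2\omg+1+2\eta_0$, which is harmless since $\eta_0 < \omg - 2$ guarantees $-2\omg+1+2\eta_0 < -3+2\eta_0 < -1$ so the final $\tau_1$-integration still converges. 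Everything else is a routine combination of integration by parts, Cauchy--Schwarz with small constants, the geometric difference bounds from Section~\ref{subsec:geom}, Proposition~\ref{prop:energy-goal}, and averaging in the cutoff radius $R \in [R', R'+1]$ with $R' \in [20,40]$ to eliminate spurious constant-$r$ boundary terms. Passing from the estimate for $\rd_v(r\phidf)$ to the stated estimate for $\rd_v\phidf$ (and thence to Proposition~\ref{prop:en-dvphi-goal}) is then immediate via $\rd_v(r\phidf) = r\rd_v\phidf + \dvr\phidf$, the Hardy-type inequality of Lemma~\ref{lem:hardy-type}, and Corollary~\ref{cor:zeroth}.
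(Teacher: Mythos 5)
Your plan matches the paper's proof in its essential structure: the same $Z$-multiplier identity from Lemma~\ref{lem:DR}, the same Cauchy--Schwarz treatment of the forcing term using $\log(1+r)\lesssim r^{\eta_0/2}$ together with the $\calE_{far}$-type bound (which is what produces the $\tau_1^{-2\omg+1+2\eta_0}$ exponent, as you correctly identify), and the same crucial use of Corollary~\ref{cor:zeroth} ($r\phidf^2 \leq C\eps^2$) to control the nonlinear part of the coefficient; your choice of averaging in a sharp cutoff radius $R$ is interchangeable with the paper's smooth cutoff $\chi_{40}$. The one genuine technical departure is in the third term of \eqref{eq:en-id-Z}: the paper computes $\rd_v\big(\frac{\log(1+r)}{r}\frac{-\rd_u\rd_v r}{\dvr}\big)$ explicitly via the Raychaudhuri and mass equations, obtaining a pointwise bound $\lesssim \frac{\log(1+r)}{r^2} + r\log(1+r)\big((\rd_v\phibg)^2 + (\rd_v\phidf)^2\big)$ that is then fed directly into $E[\phidf]$; you instead integrate by parts in $v$ to move the derivative onto $\phidf^2$, which works but forces you to re-derive the same cancellations more circuitously and to track additional $v$-boundary terms. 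Also note that for the middle term you say it is ``absorbed into the left-hand bulk for $R$ large enough'' — this happens to be fine numerically at $R = 40$ (since $\log(41)/40 \approx 0.09$), but the cleaner route (the one the paper takes, and available to you too) is simply to note that $\frac{\log(1+r)}{r^2}$ decays faster than $r^{-1-\eta_0}$ and hence the term is controlled by the integrated local energy decay term in $E[\phidf](\tau_1,\tau_2)$, avoiding any dependence on the specific value of the cutoff.
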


\begin{proof}
As before, we omit the line and area elements in the integrals, and denote the RHS of \eqref{eq:wave4phidf} by $F$.
Let $\chi_{40}(r)$ be a smooth non-negative function such that $\chi_{40}(r) = 0$ in $\set{r \leq 20}$ and $\chi_{40}(r) = 1$ in $\set{r \geq 40}$.
We multiply the identity \eqref{eq:en-id-Z} by $\chi_{40}(r)$ and integrate by parts over $\calD(\tau_{1}, \tau_{2})$. We claim that
\begin{equation} \label{eq:en-dvrphi-Z}
\begin{aligned}
	\frac{1}{2} \iint_{\calD(\tau_{1}, \tau_{2})} \chi_{40} \frac{(-\dur)}{1+r} (\rd_{v} (r \phidf))^{2} 
	\leq & \iint_{\calD(\tau_{1}, \tau_{2})} \chi_{40} (\hbox{LHS of \eqref{eq:en-id-Z}}) 
		+ E[\phidf](\tau_{1}, \tau_{2}) + r \phidf^{2}(\tau_{1}, v_{f}) \\
		&+ C  \int_{\Gmm^{(out)}_{\tau_{1}}} \kpp^{-1} (\rd_{v} (r \phidf))^{2} \log(1 + r) .
\end{aligned}
\end{equation}
Indeed, observe that all future boundary integrals are non-negative (we use $-\rd_{u} \rd_{v} r \geq 0$ on the support of $\chi_{40}$), so they can be thrown away. The past boundary integral on $\Gmm^{(out)}_{\tau_{1}}$ is controlled by the last term on the RHS above.
On the other hand, the past boundary integral $\Gmm^{(in)}_{\tau_{1}}$ is controlled by $E[\phidf](\tau_{1}, \tau_{2})$ and Hardy's inequality (Lemma~\ref{lem:hardy-opt} with $\alp = 0$); the latter inequality necessitates the term $r \phidf^{2}(\tau_{1}, v_{f})$ above. Note that we use here $(-\rd_u \rd_v r)\leq C(-\nu)$, which is a consequence of the equation and Lemma~\ref{lem:geom-prelim}. The spacetime term involving $\chi'_{40}$ (which arises from integration by parts) can also be bounded by $E[\phidf](\tau_{1}, \tau_{2})$, since $\mathrm{supp} \, \chi'_{40} \subseteq \set{20 \leq r \leq 40}$.

Concerning the contribution of the RHS of \eqref{eq:en-id-Z}, we claim that
\begin{align} 
\iint_{\calD(\tau_{1}, \tau_{2}) \cap \set{r \geq 20}} \Abs{F \frac{r \log (1+r)}{\dvr} \rd_{v} (r \phidf)} 
\leq & E[\phidf](\tau_{1}, \tau_{2}) +  C A^{2} (\log^{2} \Lmb) \tau_{1}^{-2 \omg + 1 + 2 \eta_{0}} \dlt^{2} \left( \eps + \dlt_{(t_{B})}^{2} \right)^{2}	,		\label{eq:en-dvrphi-Z1} \\
\iint_{\calD(\tau_{1}, \tau_{2}) \cap \set{r \geq 20}} \Abs{\frac{\rd_{u} \rd_{v} r}{\dvr} \frac{\log (1+r)}{\dvr} (\rd_{v} (r \phidf))^{2}} 
\leq & C E[\phidf](\tau_{1}, \tau_{2})	,	\label{eq:en-dvrphi-Z2} \\
\iint_{\calD(\tau_{1}, \tau_{2}) \cap \set{r \geq 20}}\Abs{\rd_{v} \left(\frac{\log (1+r)}{r} \frac{\rd_{u} \rd_{v} r}{\dvr}\right) r^{2} \phidf^{2} } 
\leq & C E[\phidf](\tau_{1}, \tau_{2}) .			\label{eq:en-dvrphi-Z3}
\end{align}

We prove these claims in order. To prove \eqref{eq:en-dvrphi-Z1} we essentially proceed as in Lemma~\ref{lem:en-err-far}, making use of the room in the $r$-weight in the argument. More precisely, using \eqref{eq:kpp-bnd}, \eqref{eq:mu-large-r}, \eqref{eq:dur-large-r}, Cauchy--Schwarz and the trivial estimate $\log(1+r) \leq C r^{\eta_{0} / 2}$, we have
\begin{equation*}
	(\hbox{LHS of \eqref{eq:en-dvrphi-Z1}})
	\leq E[\phidf](\tau_{1}, \tau_{2})
	+ C \iint_{\calD(\tau_{1}, \tau_{2}) \cap \set{r \geq 20}} r^{1+2\eta_{0}} F^{2} r^{2}.
\end{equation*}
Using \eqref{eq:en-err-far:pf2} (with $\eta_0$ replaced by $2\eta_0$), the second term on the RHS can be bounded by the last term in \eqref{eq:en-dvrphi-Z1}, as desired.

Next, we prove \eqref{eq:en-dvrphi-Z2}. By \eqref{eq:m-bnd}, \eqref{eq:e-bnd}, \eqref{eq:kpp-bnd}, \eqref{eq:mu-large-r} and \eqref{eq:dur-large-r}, we have, when $r\geq 20$,
\begin{equation*}
	\Abs{\frac{\rd_{u}\rd_{v} r}{\dvr} \log (1+r)} \leq \frac{C \log (1+r)}{r^{2}}.
\end{equation*}
Since this weight decays faster than $r^{-1-\eta_{0}}$, \eqref{eq:en-dvrphi-Z2} follows thanks to the integrated local energy decay term in $E[\phidf](\tau_{1}, \tau_{2})$.

Finally, we establish \eqref{eq:en-dvrphi-Z3}. Note first that
\begin{align*}
\rd_{v} \left(\frac{\rd_{u} \rd_{v} r}{\dvr} \right)
= \rd_{v} \left( \frac{2 (\varpi - \frac{\e^{2}}{r})}{r^{2}} \gmm \right)
= \left( - \frac{4 \varpi}{r^{3}} + \frac{6 \e^{2}}{r^{4}} \right) \gmm \dvr + \gmm \kpp^{-1} (\rd_{v} \phi)^{2} + \frac{2 (\varpi - \frac{\e^{2}}{r})}{r^{2}} \gmm \dvr^{-1} r (\rd_{v} \phi)^{2}.
\end{align*}
Using the Leibniz rule, $\phi = \phibg + \phidf$ and the bounds \eqref{eq:m-bnd}, \eqref{eq:e-bnd}, \eqref{eq:kpp-bnd}, \eqref{eq:mu-large-r}, \eqref{eq:gmm-large-r}, it is straightforward to check that
\begin{equation*}
	\Abs{\rd_{v} \left(\frac{\log (1+r)}{r} \frac{\rd_{u} \rd_{v} r}{\dvr}\right) r^{2}}
	\leq \frac{C \log(1+r)}{r^{2}} + C r \log(1+r) (\rd_{v} \phibg)^{2}  + C r \log(1+r) (\rd_{v} \phidf)^{2}
\end{equation*}
in $\set{r \geq 20}$. The first term decays faster than $r^{-1-\eta_{0}}$, so its contribution is bounded by $E[\phidf](\tau_{1}, \tau_{2})$. By \eqref{eq:dlt-adm:dvphi} and \eqref{eq:kppbg-bnd}, the second term  bounded by $C \dlt^{2} r^{-3} \log(1+r)$, which again decays faster than $r^{-1-\eta_{0}}$; thus its contribution is bounded by $E[\phidf](\tau_{1}, \tau_{2})$ as well. Thus, it only remains to consider the contribution of the last term. Using \eqref{eq:phidf-pt}, we have
\begin{align*}
	\iint_{\calD(\tau_{1}, \tau_{2}) \cap \set{r \geq 20}} r \log (1+r) (\rd_{v} \phidf)^{2} \phidf^{2} 
	\leq & C \eps^{2} \iint_{\calD(\tau_{1}, \tau_{2}) \cap \set{r \geq 20}} \frac{\log (1+r)}{r^{2}} (\rd_{v} \phidf)^{2} r^{2},
\end{align*}
which can again be bounded by $E[\phidf](\tau_{1}, \tau_{2})$, as desired. \qedhere
\end{proof}

Using Lemma~\ref{lem:hardy-type}, Lemma~\ref{lem:en-dvrphi} and \eqref{eq:en-final}, we may now prove \eqref{eq:en-dvphi-goal} (and hence Proposition~\ref{prop:en-dvphi-goal}).
\begin{proof}[Proof of \eqref{eq:en-dvphi-goal}]
By Lemma~\ref{lem:hardy-type} with $\alp = - 1$, we have, for any $v_0\leq v_f$,
\begin{equation*}
	\int_{v_{0}}^{v_f} r \dvr^{-1} (\rd_{v} \phidf)^{2} (u, v) \, \ud v
	\leq C \left( \phidf^{2}(u, v_{0}) + \int_{v_{0}}^{v_f} \frac{1}{r} \dvr^{-1}(\rd_{v} (r \phidf))^{2} (u, v) \, \ud v \right).
\end{equation*}
Let $R \in [40,50]$. Define $v'_{R}(u) = v_{R}(u)$ if there exists $v_{R}(u)$ such that $r(u, v_{R}(u)) = R$ in $\calD_{(t_B)}$, and $v'_{R}(u) = 1$ otherwise\footnote{Notice that by our choice of $R_0$, $\{r=R\}$ must intersect $C^f_{(t_B)out}$ instead of $\uC^f_{(t_B)in}$.}.  Taking $v_{0} = v'_{R}(u)$ and integrating over $u\in [1,u_f]$, it follows that
\begin{equation*}
	\iint_{\calD_{(t_B)} \cap \set{r \geq R}} r \dvr^{-1} (\rd_{v} \phidf)^{2} (u, v) \, \ud v
	\leq C \left( \int_{1}^{u_{f}} \phidf^{2}(u, v'_{R}(u)) \, \ud u + \int_{\calD_{(t_B)} \cap \set{r \geq R}} \frac{1}{r} \dvr^{-1}(\rd_{v} (r \phidf))^{2} \, \ud u \ud v \right).
\end{equation*}
We bound the second term by Lemma~\ref{lem:en-dvrphi}. After averaging in $R \in [40, 50]$, we can control the first term by $E[\phidf](1,u_f)$ and obtain
\begin{equation*}
\begin{split}
	&\iint_{\calD_{(t_B)} \cap \set{r \geq 50}} r \dvr^{-1} (\rd_{v} \phidf)^{2} (u, v) \, \ud v\\
	\leq & C \left(  \int_{\Gmm^{(out)}_{1}} \kpp^{-1} (\rd_{v} (r \phidf))^{2} \log(1 + r) \, \ud v 
		+ E[\phidf](1,u_f)  
		+ r \phidf^{2}(1, v_{f}) \right)+CA^2 (\log^2\Lambda) \de^2 \left(\ep+\de_{(t_B)}^2\right)^2.
\end{split}
\end{equation*}
On the other hand, we clearly have
\begin{equation*}
	\iint_{\calD_{(t_B)} \cap \set{20 \leq r \leq 50}} r \dvr^{-1} (\rd_{v} \phidf)^{2} (u, v) \, \ud v
	\leq C E[\phidf](1,u_f).
\end{equation*}
Combining these bounds with \eqref{eq:en-final} and \eqref{eq:phidf-pt}, we obtain \eqref{eq:en-dvphi-goal}. \qedhere
\end{proof}

\subsection{Closing the bootstrap assumptions: Proof of Proposition~\ref{prop:en-btstrp}}
\label{subsec:btstrp-pf}
Here we prove Proposition~\ref{prop:en-btstrp}. This proposition closes all the bootstrap assumptions except \eqref{eq:en-btstrp-eps}, which is handled in the next subsection. 

There are three key ingredients for the proof of Proposition~\ref{prop:en-btstrp}. The first two ingredients are the geometric difference bounds and the energy-type bounds for $\phidf$, which were established in Sections~\ref{subsec:geom} and \ref{subsec:en-pf}, respectively.
The final ingredient is the following energy-type bounds for the background solution.
\begin{lemma} \label{lem:en-bg}
The following statements hold.
\begin{enumerate}
\item Let $\calE$ be a function on $\calD_{(t_{B})}$ obeying
\begin{equation*}
	\abs{\calE} \leq B,
\end{equation*}
for some $B > 0$. For any $R \geq 20$, the following bound holds:
\begin{equation} \label{eq:en-bg-dvphi}
	\sup_{u \in [1, u_{f}]} \int_{C_{u} \cap \calD_{(t_{B})} \cap \set{r \geq R}} \Abs{\calE \frac{r^{2}}{\dvr} (\rd_{v} \phibg)^{2}} \, \ud v
	+ \iint_{\calD_{(t_{B})} \cap \set{r \geq R}} \Abs{\calE \frac{r}{\dvr} (\rd_{v} \phibg)^{2} } \, \ud u \ud v
	\leq C B \dlt^{2}.
\end{equation}
\item Let $\underline{\calE}$ be a function on $\calD_{(t_{B})}$ obeying
\begin{equation*}
	\abs{\underline{\calE}} \leq \underline{B} \max \set{e^{\frac{c_{(\dur)}}{2} (u - v- C_{\gmm_{20})}}, 1},
\end{equation*}
for some $\underline{B} > 0$ and the constants $c_{(\dur)}, C_{\gmm_{20}}$ in Lemma~\ref{lem:df-small-r}. Then the following bound holds:
\begin{equation} \label{eq:en-bg-duphi}
	\sup_{v \in [1, v_{f}]} \int_{\uC_{v} \cap \calD_{(t_{B})}} 
					\Abs{ \underline{\calE} \frac{r^{2}}{\dur}(\rd_{u} \phibg)^{2}} \, \ud v
	+ \iint_{\calD_{(t_{B})}} 
					\Abs{ \underline{\calE} \frac{r^{1-\eta_{0}}}{\dur}(\rd_{u} \phibg)^{2}}  \, \ud u \ud v
	\leq C \underline{B} \dlt^{2} .
\end{equation}
\end{enumerate}
\end{lemma}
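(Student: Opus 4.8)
\textbf{Proof plan for Lemma~\ref{lem:en-bg}.}
The statement is purely about the background solution, so the plan is to directly insert the decay estimates \eqref{eq:dlt-adm:duphi}--\eqref{eq:dlt-adm:dvphi} from the definition of an $(\omg, \dlt, \Lmb)$-admissible solution, together with the coordinate-invariant forms of these bounds proved in Proposition~\ref{prop:bg-geom}, and then estimate the resulting explicit integrals. Note first that, by the remark preceding the proof of Proposition~\ref{prop:bg-geom}, the bounds \eqref{eq:dlt-adm:phi}--\eqref{eq:dlt-adm-dvdvrphi} hold in the $(\ubg, \vbg)$-coordinates, which by our convention are the same numerical coordinates $(u,v)$ we use in $\calD_{(t_{B})}$; moreover $\calD_{(t_{B})}$ sits inside the black hole exterior of the background where all of these estimates are valid. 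Since $\dvr = (1-\mu)\kpp$ and $\dur = (1-\mu)\gmm$ for the \emph{background}, we can freely convert between $\rd_v \phibg$ and $\frac{1-\mubg}{\dvrbg}\rd_v\phibg$ using the bounds \eqref{eq:mubg-bnd}, \eqref{eq:kppbg-bnd}, \eqref{eq:durbg-bnd} on the background geometry; here I would keep track that $-\durbg$ is comparable to $e^{-c_{(\durbg)}(u-v-\overline C_{\gmm_{30}})}$ in $\set{r \le 20}$ and is bounded above and below away from zero in $\set{r \ge 20}$, and that $\kppbg$, $-\gmmbg$ are comparable to $1$ throughout $\calD$.

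For part (1), I would bound $\Abs{\calE \frac{r^2}{\dvr}(\rd_v\phibg)^2} \le B \frac{r^2}{\dvrbg}\Abs{\rd_v\phibg}^2 \le C B \frac{1}{\kppbg}\Abs{\frac{1-\mubg}{\dvrbg}\rd_v\phibg}^2 r^2 \le C B \dlt^2 r^2 \brk{\ubg}_{\rbg_\EH}^{-2\omg+2}(r/\rbg_\EH)^{-2}\min\{\ldots\}$ in $\set{r \ge 20}$, using \eqref{eq:dlt-adm:dvphi} in the regime $\rbg \ge 10\rbg_\EH$ (which applies since $r \ge 20 = 20\rbg_\EH$ under the normalization $\rbg_\EH=1$, and $r$, $\rbg$ are comparable by \eqref{eq:df-large-r:rdf}). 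This gives an integrand $\le CB\dlt^2 \brk{u}^{-2\omg+2}$ after using $\rd_v r \le C$ to absorb the $\dvr$ weight, so the $v$-integral along any $C_u$ converges uniformly (the $r^{-2}$ weight against $\rd_v r\, dv \sim dr$ is integrable), giving the $\sup_u\int_{C_u}$ bound; for the spacetime integral one picks up an extra $r^{-1}$ (from the explicit $\frac{r}{\dvr}$ rather than $\frac{r^2}{\dvr}$), does the $v$-integral to get something uniformly bounded, then integrates $\brk{u}^{-2\omg+2}$ in $u$ (convergent since $\omg > 2$), yielding $CB\dlt^2$. Part (2) is entirely analogous: in $\set{r \ge 20}$ use \eqref{eq:dlt-adm:duphi} in the regime $\rbg \ge 10$ to get $\frac{r^2}{-\durbg}(\rd_u\phibg)^2 \le C\dlt^2 \Lmb^{-1}\brk{u}^{-2\omg}$ and integrate; in $\set{r \le 20}$ use \eqref{eq:dlt-adm:duphi} in the regime $\rbg \le 30$ to get $(\rd_u\phibg)^2 \le C\dlt^2 \brk{v}^{-2\omg}$, so that $\Abs{\underline\calE \frac{r^2}{\dur}(\rd_u\phibg)^2} \le C\underline B \, e^{\frac{c_{(\dur)}}{2}(u-v-C_{\gmm_{20}})}\frac{1}{-\durbg}\dlt^2\brk{v}^{-2\omg}$ and the $\frac{1}{-\durbg} \sim e^{c_{(\durbg)}(u-v-\overline C_{\gmm_{30}})}$ blow-up is defeated by choosing $c_{(\dur)} = \tfrac12 c_{(\durbg)}$ (consistent with Lemma~\ref{lem:df-small-r}) together with \eqref{eq:C-gmm}, leaving a clean exponential decay in $u-v$ and polynomial decay in $v$; the double integral then converges by \eqref{eq:en-err-near:pf-1}-type reasoning (change variables so one direction gives exponential decay, the other gives $\int \brk{v}^{-2\omg}\,dv < \infty$). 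The weaker $r^{1-\eta_0}$ weight in the spacetime integral of \eqref{eq:en-bg-duphi} is harmless since $\eta_0 < \min\{2\omg-4,1\}$ keeps everything integrable.

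The only mild subtlety — and the main thing to get right — is the bookkeeping around the factor $\Lmb$. In part (2), in the region $r \ge 20$ one must check that the $\Lmb^{-1}$ gained from \eqref{eq:dlt-adm:duphi} is genuinely available (this is why the factor $\Lmb^{-1/2}$ was inserted in Definition~\ref{def:dlt-adm}, squared upon taking $(\rd_u\phibg)^2$); in this lemma we do not actually need it, so I would simply discard it, but one should make sure the constants $C$ in \eqref{eq:en-bg-dvphi}--\eqref{eq:en-bg-duphi} are $\Lmb$-\emph{independent} as claimed, which holds because after performing the explicit integrations no power of $\Lmb$ is generated (the $u$-integrals are over $[1,u_f]$ but with integrands decaying like $\brk{u}^{-2\omg+2}$ or $\brk{u}^{-2\omg}$, which converge regardless of $u_f$). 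I expect the entire argument to be routine once the admissibility bounds are written in the right coordinates; there is no genuine obstacle, just careful matching of the weight $\calE$ (resp. $\underline\calE$) against the geometric weights and the decay rates of $\phibg$.
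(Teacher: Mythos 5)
Your overall strategy — plug in the admissibility bounds on $\phibg$ from Definition~\ref{def:dlt-adm} (after converting to the $(\ubg,\vbg)$ coordinates) and estimate the explicit integrals — is the same as the paper's, and your treatment of part (1) as well as of part (2) in the region $\set{r \ge 20}$ is essentially correct. (For the sup-in-$v$ integral in part (2), region $\set{r \ge 20}$, the paper actually uses a different device — it estimates $\int_{\uC_{v}} \rbg^{2}(\rd_{u}\phibg)^{2}/(-\durbg) \, \ud u$ by the total variation of $\varpibg$ along $\uC_{v}$, which is $\le C\dlt^{2}$ by \eqref{eq:dlt-adm:m}, rather than integrating the pointwise decay of \eqref{eq:dlt-adm:duphi}; but your direct integration of $\brk{u}^{-2\omg}$ is equally valid.)

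The gap is in part (2), region $\set{r \le 20}$. You discard the $|\durbg|^{2}$ factor too early: from \eqref{eq:dlt-adm:duphi} you only retain $(\rd_{u}\phibg)^{2} \le C\dlt^{2}\brk{v}^{-2\omg}$, and then the weight $\frac{r^{2}}{|\dur|}$ contributes a factor $\sim \frac{1}{|\durbg|} \sim e^{+c'_{(\durbg)}(u-v-\overline{C}_{\gmm_{30}})}$. Together with $|\underline{\calE}| \le \underline{B}\,e^{+\frac{c_{(\dur)}}{2}(u-v-C_{\gmm_{20}})}$ you obtain a \emph{growing} total exponent $\tfrac{c_{(\dur)}}{2} + c'_{(\durbg)}$, and no positive choice of $c_{(\dur)}$ can reverse its sign (decreasing $c_{(\dur)}$ only shaves the growth rate, it cannot make the exponent negative). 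Your claim that ``the blow-up is defeated by choosing $c_{(\dur)} = \tfrac{1}{2}c_{(\durbg)}$'' therefore fails with the cruder bound you wrote, and the resulting spacetime integral over $\set{r\le 20}$ would diverge. The fix — which is exactly what the paper does — is to keep the full admissibility bound $\Abs{\tfrac{1}{\durbg}\rd_{u}\phibg}^{2} \le \dlt^{2}\brk{v}^{-2\omg}$ intact and factor the integrand as
\begin{equation*}
	\Abs{\underline{\calE}\,\frac{r^{2}}{\dur}\,(\rd_{u}\phibg)^{2}}
	= |\underline{\calE}|\, r^{2}\, \frac{\durbg^{2}}{\dur^{2}}\, (-\dur)\, \Abs{\frac{1}{\durbg}\rd_{u}\phibg}^{2},
\end{equation*}
then bound $\frac{\durbg^{2}}{\dur^{2}} \le e^{CA\log\Lmb(\eps_{(t_{B})}+\dlt_{(t_{B})}^{2})(1+|u-v-C_{\gmm_{20}}|)}$ by \eqref{eq:df-small-r:logdurdf} and $(-\dur) \le Ce^{-c_{(\dur)}(u-v-C_{\gmm_{20}})}$ by \eqref{eq:df-small-r:dur}. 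The net exponent is then $\tfrac{c_{(\dur)}}{2} - c_{(\dur)} + O(A\log\Lmb(\eps+\dlt^{2})) \le -\tfrac{c_{(\dur)}}{4}$ after shrinking $\eps$ and $\dlt$, giving the needed exponential decay. The moral is that the favourable factor $(-\dur)$ must appear \emph{once}, rather than being traded away for $\frac{1}{(-\durbg)}$, which carries the opposite exponential. With this correction the remainder of your scheme for the $r \le 20$ piece (pointwise exponential decay in $u-v$ times $\brk{v}^{-2\omg}$, then integrate via \eqref{eq:en-err-near:pf-1}-type reasoning) goes through.
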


\begin{proof}
In this proof, we omit the line and area elements to simplify the notation. For \eqref{eq:en-bg-dvphi}, we first estimate
\begin{equation*}
\Abs{\calE \frac{r^{2}}{\dvr} (\rd_{v} \phibg)^{2}}
\leq B r^{2} \frac{\kppbg^{2}}{\kpp (1-\mu) } \kppbg^{-2} (\rd_{v} \phibg)^{2}
\leq C B r^{2} \kppbg^{-2} (\rd_{v} \phibg)^{2},
\end{equation*}
where we used \eqref{eq:kppbg-bnd}, \eqref{eq:kpp-bnd} and \eqref{eq:mu-large-r} for the last inequality. Then \eqref{eq:en-bg-dvphi} is a straightforward consequence of \eqref{eq:dlt-adm:dvphi}, where we remind the reader that $C^{-1} \leq r/\rbg \leq C$ by \eqref{eq:df-large-r:rdf}.
We omit the details.

The proof of \eqref{eq:en-bg-duphi} is a bit more involved. We first treat the region $\set{r \leq 20}$. 
In this case, we start with the pointwise bound
\begin{align*}
	\Abs{\underline{\calE} \frac{r^{2}}{\dur} (\rd_{u} \phibg)^{2}}
	\leq & C \underline{B} e^{\frac{c_{(\dur)}}{2} (u - v - C_{\gmm_{20}})}\frac{\durbg^{2}}{\dur^{2}} (-\dur) \frac{1}{\durbg^{2}}(\rd_{u} \phibg)^{2} \\
	\leq & C \underline{B} e^{- \frac{c_{(\dur)}}{2} (u - v - C_{\gmm_{20}})} e^{C A \log \Lmb (\eps + \dlt_{(t_{B})}^{2}) (1 + \abs{u - v - C_{\gmm_{20}}})} \frac{1}{\durbg^{2}}(\rd_{u} \phibg)^{2},
\end{align*}
where we used \eqref{eq:df-small-r:logdurdf} and \eqref{eq:df-small-r:dur}.
Taking $\eps, \dlt$ small enough so that $C A \log \Lmb (\eps + \dlt_{(t_{B})}^{2}) \leq \frac{1}{4} c_{(\dur)}$, and applying \eqref{eq:dlt-adm:duphi}, we arrive at
\begin{equation*}
\Abs{\underline{\calE} \frac{r^{2}}{\dur} (\rd_{u} \phibg)^{2}}
\leq C \underline{B} e^{- \frac{c_{(\dur)}}{4} (u - v - C_{\gmm_{20}})} v^{-2 \omg} \dlt^{2}.
\end{equation*}
Then as a consequence, we obtain \eqref{eq:en-bg-duphi} with the integrals (in order) replaced by $\int_{\uC_{v} \cap \calD_{(t_{B})} \cap \set{r \leq 20}}$ and $\iint_{\calD_{(t_{B})} \cap \set{r \leq 20}}$.

It remains to handle the region $\set{r \geq 20}$. By \eqref{eq:df-large-r:v-u}, note that $u - v - C_{\gmm_{20}} \leq C$ in this region. For the first term in \eqref{eq:en-bg-duphi}, we apply \eqref{eq:en-btstrp-m}, \eqref{eq:mu-large-r}, \eqref{eq:dur-large-r},  \eqref{eq:df-large-r:rdf} and \eqref{eq:df-large-r:durdf} to estimate
\begin{align*}
\int_{\uC_{v} \cap \calD_{(t_{B})} \cap \set{r \geq 20}} 
	\Abs{ \underline{\calE} \frac{r^{2}}{\dur}(\rd_{u} \phibg)^{2}} 
\leq C \underline{B} \int_{\uC_{v} \cap \calD_{(t_{B})} \cap \set{r \geq 20}} 
	\frac{\rbg^{2}}{(-\durbg)}(\rd_{u} \phibg)^{2} 
\leq C \underline{B} (\varpibg(1, v) - \varpibg(u_{f}, v)).
\end{align*}
In the last inequality, we used the equation for $\rd_{u} \varpibg$. By \eqref{eq:m-bnd}, the last expression is bounded by $C \underline{B} \dlt^{2}$, which is acceptable.

Finally, for the second term in \eqref{eq:en-bg-duphi}, we use \eqref{eq:dlt-adm:duphi}, \eqref{eq:dur-large-r}, \eqref{eq:df-large-r:rdf} and \eqref{eq:df-large-r:durdf} to estimate
\begin{align*}
\iint_{\calD_{(t_{B})} \cap \set{r \geq 20}}
\Abs{ \underline{\calE} \frac{r^{1-\eta_{0}}}{\dur}(\rd_{u} \phibg)^{2}} 
\leq & C \underline{B} \iint_{\calD_{(t_{B})} \cap \set{r \geq 20}} \frac{r^{1-\eta_{0}}}{\durbg^{2}} (\rd_{u} \phibg)^{2} \\
\leq & C \underline{B} \dlt^{2} \iint_{\calD_{(t_{B})} \cap \set{r \geq 20}} r^{-1-\eta_{0}} u^{-2 \omg}.
\end{align*}
Integrating in $v$ first (note that $\rd_{v} r \geq C^{-1}$ in this region) and then integrating in $u$, we see that the integral on the last line is finite. This completes the proof. \qedhere
\end{proof}

We now prove Proposition~\ref{prop:en-btstrp} in a sequence of lemmas. We start with the modified mass difference on the initial outgoing curve $C_{out} \cap \calD_{(t_{B})}$.
\begin{lemma} \label{lem:btstrp-pf:m-ini}
On $C_{out} \cap \calD_{(t_{B})}$, we have
\begin{equation} \label{eq:btstrp-pf:m-ini}
	\sup_{C_{out} \cap \calD_{(t_{B})}} \abs{\varpidf} \leq \abs{\varpidf}\restriction_{C_{out} \cap \uC_{in}} + C (\dlt + \eps) \eps + C A (\log \Lmb) \dlt^{2} \left( \eps + \dlt_{(t_{B})}^{2} \right).
\end{equation}
\end{lemma}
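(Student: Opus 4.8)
The quantity $\varpidf = \varpi - \varpibg$ on $C_{out} \cap \calD_{(t_B)}$ is controlled by integrating the difference of the equations $\rd_v \varpi = \frac{1}{2} \frac{1-\mu}{\dvr} r^2 (\rd_v \phi)^2$ and $\rd_v \varpibg = \frac{1}{2} \frac{1-\mubg}{\dvrbg} \rbg^2 (\rd_v \phibg)^2$, starting from the prescribed value of $\abs{\varpidf}$ at the corner $C_{out} \cap \uC_{in}$. Writing $\phi = \phibg + \phidf$, the integrand splits into three types of terms: (i) a purely background piece $\frac{1}{2}\left(\frac{1-\mu}{\dvr} r^2 - \frac{1-\mubg}{\dvrbg}\rbg^2\right)(\rd_v \phibg)^2$, which is linear in the geometric differences and quadratic in $\rd_v \phibg$; (ii) a cross term $\frac{1-\mu}{\dvr} r^2 \, \rd_v\phibg \, \rd_v \phidf$; and (iii) a purely perturbed piece $\frac{1}{2}\frac{1-\mu}{\dvr} r^2 (\rd_v \phidf)^2$. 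The plan is to bound $\int_{C_{out}\cap\calD_{(t_B)}}$ of each type and collect the results.

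\textbf{Key steps.} First I would rewrite $\frac{1-\mu}{\dvr} r^2 - \frac{1-\mubg}{\dvrbg}\rbg^2 = \kpp^{-1} r^2 - \kppbg^{-1}\rbg^2$ and estimate it pointwise by $C(\abs{\log\kpp - \log\kppbg} + r^{-1}\abs{\rdf})\,r^2 \kppbg^{-1}$ using $\kpp, \kppbg \sim 1$ (from \eqref{eq:kpp-bnd}, \eqref{eq:kppbg-bnd}) together with $\abs{e^\vartheta - 1}\le\abs\vartheta e^{\abs\vartheta}$; here the bootstrap bounds \eqref{eq:en-btstrp-kpp} and \eqref{eq:en-btstrp:rdf} give $\abs{\log\kpp-\log\kppbg} + r^{-1}\abs{\rdf} \leq C A \log\Lmb (\eps + \dlt_{(t_B)}^2)$ (on $C_{out}$, $r \geq R_0 \sim \Lmb$ is large so $r^{-1}\log r$ stays bounded). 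Then term (i) contributes $\leq C A \log\Lmb (\eps + \dlt_{(t_B)}^2) \int_{C_{out}} r^2 \kppbg^{-1}(\rd_v\phibg)^2 \ud v \leq C A \log\Lmb \dlt^2 (\eps + \dlt_{(t_B)}^2)$, where the last integral is bounded by $C\dlt^2$ via \eqref{eq:dlt-adm:dvphi} (using $r\sim\rbg$ from \eqref{eq:en-btstrp:rdf}), which matches the third term on the RHS of \eqref{eq:btstrp-pf:m-ini}. For term (iii), use $\kpp^{-1}\leq 1$, $1-\mu \leq C$ on $C_{out}$ (large $r$, \eqref{eq:mu-large-r}), so it is bounded by $C\int_{C_{out}\cap\calD_{(t_B)}} \kpp^{-1}(\rd_v\phidf)^2 r^2 \ud v \leq C\eps^2$ directly from the definition \eqref{eq:eps-tB} of $\eps = \eps_{(t_B)}$ — this yields the $C\eps^2 \leq C\eps\cdot\eps$ piece. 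For term (ii), Cauchy--Schwarz against the measure $\kpp^{-1}r^2\,\ud v$ gives $\leq C\left(\int_{C_{out}}\kpp^{-1}r^2(\rd_v\phibg)^2\right)^{1/2}\left(\int_{C_{out}}\kpp^{-1}r^2(\rd_v\phidf)^2\right)^{1/2} \leq C(\dlt)^{1}\cdot\eps$ by the same two bounds as above; this is the $C\dlt\,\eps$ part of the middle term $C(\dlt+\eps)\eps$.

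\textbf{Main obstacle.} The bookkeeping is routine once the pointwise difference estimate for $\kpp^{-1}r^2 - \kppbg^{-1}\rbg^2$ is in place; the one point requiring a little care is ensuring all constants $C$ depend only on $\abs{\ebg},\eta_0,\omg$ (and not on $\Lmb$ or $A$), with every growing factor of $\log\Lmb$ or $A$ appearing explicitly as in the statement — in particular that term (i) delivers exactly the factor $A\log\Lmb\,\dlt^2$ and not something worse, which relies on the fact that $\int_{C_{out}} r^2\kppbg^{-1}(\rd_v\phibg)^2$ costs only $\dlt^2$ (no $\Lmb$ growth) thanks to the decay in \eqref{eq:dlt-adm:dvphi}. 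I would also double-check that on $C_{out}$ we genuinely have $r \geq 10\,\rbg_\EH = 10$ throughout $\calD_{(t_B)}$ — which holds since $R_0 = r(1,1) \sim \Lmb > 100$ by \eqref{eq:R0-Lmb} and $\rd_v r \geq 0$ — so that the large-$r$ geometric bounds of Lemma~\ref{lem:geom-prelim} and Proposition~\ref{prop:bg-geom} all apply on the entire curve.
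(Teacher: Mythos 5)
Your proof is correct and takes essentially the same route as the paper: you decompose $\rd_v\varpidf$ on $C_{out}$ into the same three pieces (background-only, cross, and $\phidf$-only), bound the $\phidf$-only piece directly from \eqref{eq:eps-tB}, use Cauchy--Schwarz for the cross term, and estimate the background-only piece via the pointwise bootstrap bound on $\abs{\log\kpp-\log\kppbg}$ and $r^{-1}\abs{\rdf}$ combined with the $\rd_v\phibg$ decay from $(\omg,\dlt,\Lmb)$-admissibility. The only cosmetic difference is that the paper packages the $\int r^2\kppbg^{-1}(\rd_v\phibg)^2$-type bounds into a reusable Lemma~\ref{lem:en-bg} and writes the pointwise coefficient as a single factor $\abs{1-\frac{\rbg^2}{\kppbg}\frac{\kpp}{r^2}}$, whereas you invoke \eqref{eq:dlt-adm:dvphi} directly and Taylor-expand into two terms; the resulting constants are the same.
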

\begin{proof}
We use take the difference of the $\rd_{v} \varpi$ equation, which we write as follows:
\begin{align*}
	\rd_{v} (\varpi - \varpibg) 
	= \frac{1}{2} \frac{r^{2}}{\kpp} (\rd_{v} \phidf)^{2}
		+\frac{r^{2}}{\kpp} \rd_{v} \phibg \rd_{v} \phidf 
		+ \frac{1}{2} \left(\frac{r^{2}}{\kpp} - \frac{\rbg^{2}}{\kppbg} \right) (\rd_{v} \phibg)^{2}.
\end{align*}
For the contribution of each term, we claim the following estimates, which clearly imply \eqref{eq:btstrp-pf:m-ini}:
\begin{align} 
	\int_{1}^{v_{f}} \Abs{\frac{r^{2}}{\kpp} (\rd_{v} \phidf)^{2}} (1, v) \, \ud v \leq & C \eps^{2}, 		\label{eq:btstrp-pf:m-ini1} \\
	\int_{1}^{v_{f}} \Abs{\frac{r^{2}}{\kpp} \rd_{v} \phibg \rd_{v} \phidf} (1, v)  \, \ud v \leq & C \dlt \eps, \label{eq:btstrp-pf:m-ini2} \\
	\int_{1}^{v_{f}} \Abs{\left(1 - \frac{\rbg^{2}}{r^{2}} \frac{\kpp}{\kppbg} \right) \frac{r^{2}}{\kpp} (\rd_{v} \phibg)^{2}} (1, v)  \, \ud v \leq & C A (\log \Lmb) \dlt^{2} \left( \eps + \dlt_{(t_{B})}^{2} \right).		\label{eq:btstrp-pf:m-ini3}
\end{align}
Indeed, \eqref{eq:btstrp-pf:m-ini1} follows immediately from the definition of $\eps = \eps_{(t_{B})}$ in \eqref{eq:eps-tB}. For \eqref{eq:btstrp-pf:m-ini2}, we first apply Schwarz's inequality to estimate
\begin{equation*}
	(\hbox{LHS of \eqref{eq:btstrp-pf:m-ini2}})
	\leq \left( \int_{1}^{v_{f}} \frac{r^{2}}{\kpp^{2}} (\rd_{v} \phibg)^{2}  \, \ud v \right)^{1/2} 
			\left( \int_{1}^{v_{f}} \frac{r^{2}}{\kpp^{2}} (\rd_{v} \phidf)^{2}  \, \ud v \right)^{1/2}.
\end{equation*}
By \eqref{eq:mu-large-r} and \eqref{eq:en-bg-dvphi} with $\calE =1$, the first factor can be bounded by $C \dlt$. Estimating the second factor by $C \eps$ using \eqref{eq:btstrp-pf:m-ini1}, \eqref{eq:btstrp-pf:m-ini2} follows. Finally, \eqref{eq:btstrp-pf:m-ini3} is a consequence of \eqref{eq:mu-large-r} and \eqref{eq:en-bg-dvphi} with $R = R_{0}$ and
\begin{equation*}
	\calE = \Abs{1 - \frac{\rbg^{2}}{\kppbg} \frac{\kpp}{r^{2}}} \leq C A \log \Lmb \left( \eps + \dlt_{(t_{B})}^{2} \right)  \quad \hbox{ in } \calD_{(t_{B})} \cap \set{r \geq R_{0}}.
\end{equation*}
This pointwise inequality follows from \eqref{eq:kppbg-bnd}, \eqref{eq:kpp-bnd}, \eqref{eq:en-btstrp-kpp}, \eqref{eq:r-alp-df} and \eqref{eq:df-large-r:rdf}. We omit the details. \qedhere
\end{proof}

Next, we propagate the modified mass difference bound to the whole domain $\calD_{(t_{B})}$.
\begin{lemma} \label{lem:btstrp-pf:m}
In $\calD_{(t_{B})}$, we have
\begin{equation} \label{eq:btstrp-pf:m}
\sup_{\calD_{(t_{B})}} \abs{\varpidf} 
\leq \abs{\varpidf}\restriction_{C_{out} \cap \calD_{(t_{B})}} + C (\dlt + \eps) \eps + C A (\log \Lmb) \dlt^{2} \left( \eps + \dlt_{(t_{B})}^{2} \right).
\end{equation}
\end{lemma}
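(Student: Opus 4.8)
\textbf{Proof plan for Lemma~\ref{lem:btstrp-pf:m}.}
The idea is to propagate the bound on $\varpidf$ from the initial outgoing curve $C_{out} \cap \calD_{(t_{B})}$, where it is controlled by Lemma~\ref{lem:btstrp-pf:m-ini}, to the whole bootstrap domain by integrating the $\rd_{u} \varpi$ equation along constant-$v$ curves. Taking the difference of the $\rd_u\varpi$ equations for the two solutions gives
\begin{equation*}
	\rd_{u}(\varpi - \varpibg) = \frac{1}{2} \frac{r^{2}}{(-\gmm)} (\rd_{u} \phidf)^{2} + \frac{r^{2}}{(-\gmm)} \rd_{u} \phibg \rd_{u} \phidf + \frac{1}{2} \left( \frac{r^{2}}{(-\gmm)} - \frac{\rbg^{2}}{(-\gmmbg)}\right) (\rd_{u} \phibg)^{2},
\end{equation*}
so that for $(u,v) \in \calD_{(t_{B})}$ one has $\abs{\varpidf}(u,v) \leq \abs{\varpidf}(1,v) + \int_{1}^{u} \abs{\rd_{u}(\varpi - \varpibg)}(u',v)\, \ud u'$. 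The first term is controlled by $\sup_{C_{out} \cap \calD_{(t_{B})}} \abs{\varpidf}$, and it remains to bound the three space-time contributions of the forcing terms integrated over $C_{u}$-slices; this is the analogue of the three estimates \eqref{eq:btstrp-pf:m-ini1}--\eqref{eq:btstrp-pf:m-ini3} but now with the roles of $u$ and $v$ exchanged, and integrated across the whole domain rather than on a single slice.

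First I would bound the $(\rd_{u} \phidf)^{2}$ term: by \eqref{eq:mu-large-r}, \eqref{eq:gmm-large-r} (for $r \geq 20$) and, in the region $r \leq 20$, by the red-shift weight $e^{-\frac12 c_{(\dur)}(u-v-C_{\gmm_{20}})}$ appearing in \eqref{eq:df-small-r:dur} (which allows us to bound $r^2/(-\gmm) = r^2 (1-\mu)/(-\dur)$ by something that, after the change of variables, reduces to a piece of the nondegenerate energy term $\int_{\uC_v} \frac{1}{(-\dur)}(\rd_u\phidf)^2 r^2$ in $E[\phidf]$), this integral is controlled by $C\, E[\phidf](1, u_f)^{1/2} \cdot (\cdots)$; more directly, $\int_1^{u_f}\frac{r^2}{(-\gmm)}(\rd_u\phidf)^2(u',v)\,\ud u' \leq C(\eps^2 + \dlt_{(t_B)}^2 + \ldots)$ using Proposition~\ref{prop:energy-goal} (specifically \eqref{eq:en-null-goal}) together with Corollary~\ref{cor:zeroth} to handle the zeroth order term $r\phidf^2(1,v_f)$, and \eqref{eq:en-final}. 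For the cross term $\int \frac{r^2}{(-\gmm)} \rd_u\phibg\, \rd_u\phidf$, I would apply Cauchy--Schwarz slice-by-slice, using \eqref{eq:en-bg-duphi} (with $\underline{\calE}$ an appropriate bounded-up-to-red-shift-weight function) to bound the $\phibg$ factor by $C\dlt$, and the previous bound for the $\phidf$ factor by $C(\eps + \dlt_{(t_B)}^2)$, yielding a contribution $\leq C\dlt(\eps + \dlt_{(t_B)}^2)$; since $\dlt \leq \Lmb^{-100\eta_0}\dlt_0$ is small and by \eqref{eq:en-btstrp-eps} $\eps$ is comparable to $\ep_0 + \dlt_{(t_B)}^2$, this fits within the claimed bound. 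For the third term, the coefficient $1 - \frac{\rbg^2}{r^2}\frac{(-\gmm)}{(-\gmmbg)}$ is $O(A\log\Lmb(\eps+\dlt_{(t_B)}^2))$ by \eqref{eq:r-alp-df}, \eqref{eq:en-btstrp:rdf}, \eqref{eq:en-btstrp:durdf} (and the corresponding bound on $\log(-\gmm)-\log(-\gmmbg)$), and then \eqref{eq:en-bg-duphi} gives $\int \frac{r^2}{(-\gmm)}(\rd_u\phibg)^2 \leq C\dlt^2$, producing the $C A\log\Lmb\,\dlt^2(\eps + \dlt_{(t_B)}^2)$ term.

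The main obstacle is a uniformity-in-$u$ issue near the event horizon $\set{r \leq 20}$: there $-\gmm = -\dur/(1-\mu)$ is large (growing like $e^{c_{(\durbg)}(u-v-C_{\gmm_{20}})}$), so the weight $r^2/(-\gmm)$ in the forcing terms is \emph{exponentially small}, which is favorable, but one must make sure that the decaying exponential in $r^2/(-\gmm)$ is not overwhelmed by the fact that the domain of $u$-integration is unbounded as $t_B \to \infty$. This is exactly where the red-shift estimates \eqref{eq:df-small-r:dur}, \eqref{eq:durbg-bnd} and the improved integrated local energy decay for $\rd_v\phidf$ near $\EH$ built into $E[\phidf]$ (see \eqref{eq:iled-small-r}) are essential; the point is that the exponential factor $e^{-\frac12 c_{(\dur)}(u-v-C_{\gmm_{20}})}$ makes all the $u$-integrals over $\set{r\leq 20}$ converge uniformly. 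Once the three contributions are bounded as above, combining with Lemma~\ref{lem:btstrp-pf:m-ini} and noting $\dlt,\eps$ are as small as needed yields \eqref{eq:btstrp-pf:m}. I would also remark that the final bound should be stated with the understanding that, by the bootstrap hypotheses and Proposition~\ref{prop:dlt-tB-decay}, all of $\dlt$, $\dlt_{(t_B)}$, $\eps$ can be absorbed so that \eqref{eq:btstrp-pf:m} indeed improves \eqref{eq:en-btstrp-m} after also invoking Lemma~\ref{lem:btstrp-pf:m-ini} to replace $\abs{\varpidf}\restriction_{C_{out}\cap\calD_{(t_B)}}$ by $\abs{\varpidf}\restriction_{C_{out}\cap\uC_{in}} \leq \eps$.
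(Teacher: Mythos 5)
Your overall strategy matches the paper's: take the difference of the $\rd_u\varpi$ equation, integrate from $C_{out}$ along constant-$v$ curves, and bound the three forcing contributions using Proposition~\ref{prop:energy-goal} (for the quadratic-in-$\phidf$ term), Cauchy--Schwarz plus Lemma~\ref{lem:en-bg} (for the cross term), and Lemma~\ref{lem:en-bg} with an appropriate $\underline{\calE}$ (for the quadratic-in-$\phibg$ term). The bounds for the first two terms are handled correctly. (Your displayed equation has a sign error --- $\frac{r^2}{-\gmm} = -\frac{(1-\mu)r^2}{\dur}$ so you are missing an overall minus sign --- but this is immaterial since you estimate absolute values.)

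There is, however, a genuine gap in your treatment of the third term. You factor the coefficient as $\frac{r^2}{-\gmm}\bigl(1 - \frac{\rbg^2}{r^2}\frac{-\gmm}{-\gmmbg}\bigr)$ and then assert a pointwise bound $\bigl|1 - \frac{\rbg^2}{r^2}\frac{-\gmm}{-\gmmbg}\bigr| = O(A\log\Lmb(\eps+\dlt_{(t_B)}^2))$. This fails near the event horizon. Indeed,
\begin{equation*}
\frac{-\gmm}{-\gmmbg} = \frac{\dur}{\durbg}\cdot\frac{1-\mubg}{1-\mu} = \frac{\dur}{\durbg}\Bigl(1 + \frac{\mu-\mubg}{1-\mu}\Bigr),
\end{equation*}
and while $\mu-\mubg$ is controlled at the size $O(A\log\Lmb(\eps+\dlt_{(t_B)}^2))$ uniformly, the denominator $1-\mu$ is \emph{not} bounded below in $\calD_{(t_B)}$ (it shrinks like $e^{-ct_B}$ near the horizon as $t_B\to\infty$), so the ratio $\frac{\mu-\mubg}{1-\mu}$ is not uniformly controlled. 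The bootstrap assumption \eqref{eq:en-btstrp-gmm} on $\log(-\gmm)-\log(-\gmmbg)$ that you invoke is stated only for $r\geq 20$, precisely for this reason. Factoring out $\frac{r^2}{-\gmm} = \frac{(1-\mu)r^2}{-\dur}$ has the effect of dividing the coefficient by $1-\mu$, which is exactly what you cannot afford.

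The fix, as in the paper, is to factor out $\frac{r^2}{\dur}$ (without the $1-\mu$), so that the coefficient to be estimated becomes
\begin{equation*}
\underline{\calE} = \Abs{(1-\mu) - (1-\mubg)\frac{\rbg^2}{r^2}\frac{\dur}{\durbg}}
\leq \abs{\mu-\mubg} + \Abs{(1-\mubg)\frac{r^2-\rbg^2}{r^2}} + \Abs{(1-\mubg)\frac{\rbg^2}{r^2}\Bigl(1-\frac{\dur}{\durbg}\Bigr)}.
\end{equation*}
Each summand is then directly controlled by the bootstrap (and its consequences, Lemmas~\ref{lem:df-large-r}, \ref{lem:df-small-r}) without ever dividing by $1-\mu$: the only quantity allowed to grow is $\bigl|1-\frac{\dur}{\durbg}\bigr|$ near the horizon via \eqref{eq:en-btstrp:logdurdf}, and this growth is of the form $(1+\abs{u-v-C_{\gmm_{20}}})e^{CA\log\Lmb(\eps+\dlt_{(t_B)}^2)(u-v-C_{\gmm_{20}})}$, which is absorbed by the allowance $\underline{\calE}\leq\underline{B}\max\{e^{\frac{c_{(\dur)}}{2}(u-v-C_{\gmm_{20}})},1\}$ in Lemma~\ref{lem:en-bg} once $\eps,\dlt$ are taken small enough. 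This is exactly the quantity $(1-\mu)\bigl|1-\frac{\rbg^2}{r^2}\frac{-\gmm}{-\gmmbg}\bigr|$ that you should be estimating as a single unit rather than splitting into $1-\mu$ and the rest.
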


\begin{proof}
This time we take the difference of the $\rd_{u} \varpi$ equation, which we write as follows:
\begin{align*}
	\rd_{u} (\varpi - \varpibg) 
	= & \frac{1}{2} (1-\mu) \frac{r^{2}}{\dur} (\rd_{u} \phidf)^{2}
		+(1-\mu) \frac{r^{2}}{\dur} \rd_{u} \phibg \rd_{u} \phidf \\
		& + \frac{1}{2} \left((1-\mu) \frac{r^{2}}{\dur} - (1-\mubg) \frac{\rbg^{2}}{\durbg} \right)(\rd_{u} \phibg)^{2}.
\end{align*}
As before, it suffices establish the following estimates for the contribution of each term: 
\begin{align} 
	\sup_{v \in [1, v_{f}]} \int_{\uC_{v} \cap \calD_{(t_{B})}} \Abs{\frac{(1-\mu) r^{2}}{\dur} (\rd_{u} \phidf)^{2}} \, \ud u \leq & C \eps^{2}, 		\label{eq:btstrp-pf:m1} \\
	\sup_{v \in [1, v_{f}]} \int_{\uC_{v} \cap \calD_{(t_{B})}} \Abs{\frac{(1-\mu) r^{2}}{\dur} \rd_{u} \phibg \rd_{u} \phidf} \, \ud u \leq & C \dlt \eps, \label{eq:btstrp-pf:m2} \\
	\sup_{v \in [1, v_{f}]} \int_{\uC_{v} \cap \calD_{(t_{B})}} \Abs{\left((1-\mu) - (1-\mubg) \frac{\rbg^{2}}{r^{2}} \frac{\dur}{\durbg} \right) \frac{r^{2}}{\dur} (\rd_{u} \phibg)^{2}}  \, \ud u \leq & C A (\log \Lmb) \dlt^{2} \left( \eps + \dlt_{(t_{B})}^{2} \right).		\label{eq:btstrp-pf:m3}
\end{align}
Estimate \eqref{eq:btstrp-pf:m1} follows directly from Proposition~\ref{prop:energy-goal}, Corollary~\ref{cor:zeroth} and \eqref{eq:eps-tB}. For \eqref{eq:btstrp-pf:m2}, we apply Schwarz's inequality, \eqref{eq:m-bnd}, \eqref{eq:e-bnd}, \eqref{eq:en-bg-duphi} (with $\underline{\calE} = 1$) and \eqref{eq:btstrp-pf:m1}.
Finally, \eqref{eq:btstrp-pf:m3} follows from \eqref{eq:en-bg-duphi} with 
\begin{align*}
	\underline{\calE}
	= & \Abs{(1-\mu) - (1-\mubg) \frac{\rbg^{2}}{r^{2}} \frac{\dur}{\durbg}} \\
	\leq & \Abs{\mu - \mubg} + \Abs{(1-\mubg) \left(1 - \frac{\rbg^{2}}{r^{2}} \frac{\dur}{\durbg}\right)} \\
	\leq & \Abs{\mu - \mubg} + \Abs{(1-\mubg) \left(\frac{r^{2} - \rbg^{2}}{r^{2}}\right)} + \Abs{(1-\mubg) \frac{\rbg^{2}}{r^{2}}\left(1 - \frac{\dur}{\durbg}\right)} \\
	\leq & 
	\left\{
	\begin{array}{ll}
	C A \log \Lmb \left( \eps + \dlt_{(t_{B})}^{2} \right) (1 + \abs{u - v - C_{\gmm_{20}}}) e^{C A \log \Lmb \left( \eps + \dlt_{(t_{B})}^{2} \right) (u - v - C_{\gmm_{20}})}	& \hbox{ in } \calD_{(t_{B})} \cap \set{r \leq 20}, \\ 
	C r^{-1} A \log \Lmb \left( \eps + \dlt_{(t_{B})}^{2} \right) 
	& \hbox{ in } \calD_{(t_{B})} \cap \set{r \geq 20},
	\end{array}
	\right.
\end{align*}
where we take $\eps, \dlt$ small enough so that $(1 + \abs{u - v - C_{\gmm_{20}}}) e^{C A \log \Lmb \left( \eps + \dlt_{(t_{B})}^{2} \right) (u - v - C_{\gmm_{20}})} < C e^{\frac{1}{2} c_{(\dur)}(u - v - C_{\gmm_{20}})}$. The last inequality is a straightforward consequence of \eqref{eq:eps-tB}, \eqref{eq:en-btstrp-m}, \eqref{eq:mubg-bnd}, \eqref{eq:r-alp-df}, Lemma~\ref{lem:df-large-r} and Lemma~\ref{lem:df-small-r}; we omit the details. \qedhere
\end{proof}

We now consider the difference of $\log \kpp$.
\begin{lemma} \label{lem:btstrp-pf:kpp}
In $\calD_{(t_{B})}$, we have
\begin{align} 
	\sup_{\calD_{(t_{B})}} \abs{\log \kpp - \log \kppbg} 
	\leq & C \left( \dlt_{(t_{B})}^{2} + (\dlt + \eps) \eps + A (\log \Lmb) \dlt^{2} \left( \eps + \dlt_{(t_{B})}^{2} \right)  \right), \label{eq:btstrp-pf:kpp} \\
	\int_{1}^{v_{f}} \sup_{\uC_{v'} \cap \calD_{(t_{B})}} \abs{\log \kpp - \log \kppbg} \, \frac{\ud v'}{(v' + \Lmb)^{\eta_{0}}}
	\leq & C \left( \dlt_{(t_{B})}^{2} + (\dlt + \eps) \eps + A (\log \Lmb) \dlt^{2} \left( \eps + \dlt_{(t_{B})}^{2} \right) \right).	\label{eq:btstrp-pf:kpp-int}
\end{align}
Moreover, for every $R \geq 20$, we have
\begin{equation} 
	\int_{v_{R, i}}^{v_{R, f}} \Abs{\log \kpp - \log \kppbg}(u_{R}(v), v) \, \ud v
	\leq C \dlt_{(t_{B})}^{2} + C R^{\eta_{0}} \left( (\dlt + \eps) \eps + A (\log \Lmb) \dlt^{2} \left( \eps + \dlt_{(t_{B})}^{2} \right) \right), \label{eq:btstrp-pf:kpp-int-R}
\end{equation}
where $v_{R, i}$ and $v_{R, f}$ are the $v$-coordinates of the points at the intersection $\gmm_{R} \cap (C_{out} \cup \uC_{in})$ and $\gmm_{R} \cap (C^{f}_{out} \cup \uC^{f}_{in})$, respectively.
\end{lemma}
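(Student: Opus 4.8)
The plan is to integrate the difference of the Raychaudhuri equations \eqref{eq:EMSF-ray} for $\log\kpp$, starting from the final outgoing curve $C^{f}_{(t_{B})out}$, on which the normalization \eqref{eq:u-tB-cond} forces $\kpp = 1$. Since $\rd_{u}\log\kpp = \frac{r}{\dur}(\rd_{u}\phi)^{2}$ and $\rd_{u}\log\kppbg = \frac{\rbg}{\durbg}(\rd_{u}\phibg)^{2}$, writing $\rd_{u}\phi = \rd_{u}\phibg + \rd_{u}\phidf$ gives
\begin{equation*}
	\rd_{u}(\log\kpp - \log\kppbg) = \frac{r}{\dur}(\rd_{u}\phidf)^{2} + 2\frac{r}{\dur}\rd_{u}\phibg\,\rd_{u}\phidf + \left(\frac{r}{\dur} - \frac{\rbg}{\durbg}\right)(\rd_{u}\phibg)^{2},
\end{equation*}
so that for $(u,v)\in\calD_{(t_{B})}$ one has, integrating along $\uC_{v}$, $(\log\kpp-\log\kppbg)(u,v) = -\log\kppbg(u_{f},v) - \int_{u}^{u_{f}}\rd_{u}(\log\kpp-\log\kppbg)(u',v)\,\ud u'$. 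The boundary term is controlled, in both sup and integrated form, by $\dlt_{(t_{B})}^{2}$ through the definition \eqref{eq:dlt-tB}. The three bulk terms are then treated exactly as in Lemmas~\ref{lem:btstrp-pf:m-ini}--\ref{lem:btstrp-pf:m}: the first via the nondegenerate energy flux / integrated local energy decay of $\phidf$ in Proposition~\ref{prop:energy-goal} (using $r\geq\frac12$ from \eqref{eq:r-bnd} to absorb powers of $r$, and Corollary~\ref{cor:zeroth} for the zeroth-order terms on the right sides of those estimates), contributing $C\eps^{2}$; the second via Cauchy--Schwarz together with the background energy bound \eqref{eq:en-bg-duphi} (with $\underline{\calE}=1$), contributing $C\dlt\eps$; and the third via the geometric difference estimates \eqref{eq:df-large-r:rdf}, \eqref{eq:df-large-r:durdf}, \eqref{eq:df-small-r:rdf}, \eqref{eq:df-small-r:logdurdf}, \eqref{eq:df-small-r:dur}, which give $\abs{\frac{r}{\dur}-\frac{\rbg}{\durbg}}$ bounded by $CA\log\Lmb(\eps+\dlt_{(t_{B})}^{2})$ times an integrable weight, paired again with \eqref{eq:en-bg-duphi}, contributing $CA\log\Lmb\dlt^{2}(\eps+\dlt_{(t_{B})}^{2})$. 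Taking the supremum over $\uC_{v}\cap\calD_{(t_{B})}$ (the $u'$-integral is largest over the whole of $\uC_{v}$) yields \eqref{eq:btstrp-pf:kpp}; the stray term $CA^{2}\log^{2}\Lmb\dlt^{2}(\eps+\dlt_{(t_{B})}^{2})^{2}$ coming from Proposition~\ref{prop:energy-goal} is absorbed into $CA\log\Lmb\dlt^{2}(\eps+\dlt_{(t_{B})}^{2})$ after taking $\eps_{0},\dlt_{0}$ small depending on $A,\Lmb$.

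For the weighted bound \eqref{eq:btstrp-pf:kpp-int}, I would estimate $\sup_{\uC_{v'}\cap\calD_{(t_{B})}}\abs{\log\kpp-\log\kppbg} \leq \abs{\log\kppbg(u_{f},v')} + \int_{\uC_{v'}\cap\calD_{(t_{B})}}\abs{\rd_{u}(\log\kpp-\log\kppbg)}\,\ud u$, multiply by $(v'+\Lmb)^{-\eta_{0}}$, and integrate in $v'$; the boundary contribution is $\leq\int_{C^{f}_{(t_{B})out}}\abs{\log\kppbg}\,\ud v'\leq\dlt_{(t_{B})}^{2}$, while the remaining spacetime integral $\iint_{\calD_{(t_{B})}}(v+\Lmb)^{-\eta_{0}}\abs{\rd_{u}(\log\kpp-\log\kppbg)}\,\ud u\,\ud v$ is split as before. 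The crucial point is that in $\set{r\geq 20}$ one has $r\leq C(v+\Lmb)$ by \eqref{eq:v-u-r} and \eqref{eq:C-gmm-Lmb}, hence $(v+\Lmb)^{-\eta_{0}}r\leq C r^{1-\eta_{0}}$, which matches the $r^{1-\eta_{0}}$ weight in the integrated local energy decay bulk term of $E[\phidf]$ and in the background estimate \eqref{eq:en-bg-duphi}; in $\set{r\leq 20}$ all weights are comparable to constants by \eqref{eq:r-bnd}. This gives \eqref{eq:btstrp-pf:kpp-int}. Finally, \eqref{eq:btstrp-pf:kpp-int-R} follows by the same computation, integrating from $C^{f}_{(t_{B})out}$ down the constant-$r$ curve $\gmm_{R}$: the resulting spacetime integral lives in $\calD_{(t_{B})}\cap\set{r\leq R}$, and trading one factor $r^{\eta_{0}}\leq R^{\eta_{0}}$ against the available $r^{1-\eta_{0}}$ weights produces the prefactor $R^{\eta_{0}}$ in front of the non-$\dlt_{(t_{B})}^{2}$ terms.

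The main obstacle is the weighted bound \eqref{eq:btstrp-pf:kpp-int}, which is exactly the estimate recovering the anomalous loss of $r$-decay present in \eqref{eq:btstrp-pf:kpp} (cf. Remark~\ref{rem:kpp-anomaly}). The delicate step is to match the $r$-weight produced by integrating $\rd_{u}\log\kpp = \frac{r}{\dur}(\rd_{u}\phi)^{2}$ against the $r^{-1-\eta_{0}}r^{2}$ weight of the integrated local energy decay term, for which the geometric relation $r\sim v-u+C_{\gmm_{20}}$ with $C_{\gmm_{20}}\sim\Lmb$ from Corollary~\ref{cor:v-u-r} is essential; one must isolate the near-horizon region $\set{r\leq 20}$, where this relation degenerates, and handle it using the red-shift-improved bulk term \eqref{eq:iled-small-r} and the exponential decay of $-\dur$ there. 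Everything else is a routine repetition of the modified-mass difference arguments.
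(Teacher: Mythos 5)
Your argument is correct and follows the paper's proof essentially verbatim: the same decomposition of $\rd_{u}(\log\kpp-\log\kppbg)$ into quadratic, cross, and coefficient-difference terms, the same use of Proposition~\ref{prop:energy-goal}, \eqref{eq:en-bg-duphi} (with Cauchy--Schwarz), and the pointwise bound on $\underline{\calE}=\abs{1-\frac{\rbg}{r}\frac{\dur}{\durbg}}$, and the same weight trick $(v+\Lmb)^{-\eta_0}\leq C r^{-\eta_0}$ via Corollary~\ref{cor:v-u-r} to absorb the weighted $v$-integral into the bulk term. One small inaccuracy in your closing discussion: the red-shift-improved bulk term \eqref{eq:iled-small-r} controls $(\rd_v\phidf)^2$ near $\EH$ and plays no role here, since the integrand of $\rd_u\log\kpp$ is $(\rd_u\phidf)^2$, whose nondegenerate spacetime integral is already supplied by \eqref{eq:en-goal}; what actually enters near the horizon is the exponential-weight bookkeeping in Lemma~\ref{lem:en-bg}(2) and in the estimate of $\underline{\calE}$.
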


\begin{proof}
We take the difference of the equation for $\rd_{u} \log \kpp$ and integrate from $C_{(t_{B}) out}^{f}$. The difference equation can be written as follows:
\begin{equation} \label{eq:dulogkppdf}
	\rd_{u} (\log \kpp - \log \kppbg)
	= \frac{r}{\dur} (\rd_{u} \phidf)^{2} + 2 \frac{r}{\dur} \rd_{u} \phibg \rd_{u} \phidf 
	+ \left(1 - \frac{\rbg}{r} \frac{\dur}{\durbg} \right) \frac{r}{\dur} (\rd_{u} \phibg)^{2}.
\end{equation} 
We claim that:
\begin{align}
	\sup_{v \in [1, v_{f}]} \int_{\uC_{v} \cap \calD_{(t_{B})}} \Abs{\frac{r}{\dur} (\rd_{u} \phidf)^{2}}  \, \ud u
	+ \iint_{\calD_{(t_{B})}} \Abs{\frac{r^{1-\eta_{0}}}{\dur} (\rd_{u} \phidf)^{2} } \, \ud u \ud v
	\leq & C \left(\eps+ \dlt_{(t_{B})}^{2} \right)^{2}, 	\label{eq:btstrp-pf:kpp1} \\
	\sup_{v \in [1, v_{f}]} \int_{\uC_{v} \cap \calD_{(t_{B})}} \Abs{\frac{r}{\dur} \rd_{u} \phibg \rd_{u} \phidf} \, \ud u
	+ \iint_{\calD_{(t_{B})}} \Abs{\frac{r^{1-\eta_{0}}}{\dur} \rd_{u} \phibg \rd_{u} \phidf } \, \ud u \ud v
	\leq & C \dlt \left(\eps+ \dlt_{(t_{B})}^{2} \right), 		\label{eq:btstrp-pf:kpp2} \\
	\sup_{v \in [1, v_{f}]} \int_{\uC_{v} \cap \calD_{(t_{B})}} 
					\Abs{\underline{\calE} \frac{r}{\dur}(\rd_{u} \phibg)^{2}} \, \ud u
	+ \iint_{\calD_{(t_{B})}} 
					\Abs{\underline{\calE} \frac{r^{1-\eta_{0}}}{\dur}(\rd_{u} \phibg)^{2}}  \, \ud u \ud v
	\leq & C A (\log \Lmb) \dlt^{2} \left( \eps + \dlt_{(t_{B})}^{2} \right),		\label{eq:btstrp-pf:kpp3}
\end{align}
where
\begin{equation*}
	\underline{\calE} = \Abs{1 - \frac{\rbg}{r} \frac{\dur}{\durbg}}.
\end{equation*}

We first show how to establish the lemma once \eqref{eq:btstrp-pf:kpp1}--\eqref{eq:btstrp-pf:kpp3} are proved. The pointwise estimate \eqref{eq:btstrp-pf:kpp} is proved by integrating \eqref{eq:dulogkppdf} on $u' \in [u, u_{f}]$, bounding the boundary term at $u_{f}$ by $\dlt_{(t_{B})}$ using \eqref{eq:dlt-tB}, and using the boundary integral terms in \eqref{eq:btstrp-pf:kpp1}--\eqref{eq:btstrp-pf:kpp3} to control the integral. 

For \eqref{eq:btstrp-pf:kpp-int}, we start by writing
\begin{align*}
\int_{1}^{v_{f}} \sup_{\uC_{v} \cap \calD_{(t_{B})}} \Abs{\log \kpp - \log \kppbg} \, \frac{\ud v}{(v + \Lmb)^{\eta_{0}}}
\leq	& \int_{1}^{v_{f}} \int_{1}^{u_{f}} \Abs{\rd_{u} (\log \kpp - \log \kppbg)} (u, v) \, \ud u \, \frac{\ud v}{(v + \Lmb)^{\eta_{0}}} \\
	&  + \int_{1}^{v_{f}} \Abs{\log \kpp - \log \kppbg}(u_{f}, v) \, \frac{\ud v}{(v + \Lmb)^{\eta_{0}}} \\
\leq	& \int_{1}^{v_{f}} \int_{1}^{u_{f}} r^{-\eta_{0}} \Abs{\rd_{u} (\log \kpp - \log \kppbg)} (u, v) \, \ud u \ud v \\
	&  + \int_{1}^{v_{f}} r^{-\eta_{0}} \Abs{\log \kppbg}(u_{f}, v) \, \ud v.
\end{align*}
In the last inequality, we used \eqref{eq:v-u-r} and \eqref{eq:C-gmm-Lmb}, which implies $v + \Lmb \geq C^{-1} r(1, v)$, the monotonicity $r(1, v) \geq r(u, v)$, and also $\log\kpp=1$ on $C_{out}^f$ by definition. In the last two lines, the first term is bounded using \eqref{eq:dulogkppdf} and the spacetime integral terms in \eqref{eq:btstrp-pf:kpp1}--\eqref{eq:btstrp-pf:kpp3}, whereas the second term is bounded by $\dlt_{(t_{B})}^{2}$ thanks to \eqref{eq:dlt-tB}.

Lastly, for \eqref{eq:btstrp-pf:kpp-int-R}, we first estimate
\begin{align*}
\int_{v_{R, i}}^{v_{R, f}} \Abs{\log \kpp - \log \kppbg}(u_{R}(v), v) \, \ud v
\leq & \int_{v_{R_{i}}}^{v_{R, f}} \int_{u_{R}(v)}^{u_{f}} \Abs{\rd_{u} (\log \kpp - \log \kppbg)}(u, v) \, \ud u \ud v \\
& + \int_{v_{R_{i}}}^{v_{R, f}} \Abs{\log \kpp - \log \kppbg}(u_{f}, v)  \ud v \\
\leq & R^{\eta_{0}} \iint_{\calD_{(t_{B})} \cap \set{r \leq R}} r^{-\eta_{0}} \Abs{\rd_{u} (\log \kpp - \log \kppbg)} \, \ud u \ud v \\
& + \int_{C_{out}^{f}} \Abs{\log \kppbg}  \ud v.
\end{align*}
In the last expression, the first term can be bounded in the same way as the proof of \eqref{eq:btstrp-pf:kpp-int} using \eqref{eq:dulogkppdf}--\eqref{eq:btstrp-pf:kpp3}, and the last term by \eqref{eq:dlt-tB}.

It remains to verify the claims \eqref{eq:btstrp-pf:kpp1}--\eqref{eq:btstrp-pf:kpp3}. As before, \eqref{eq:btstrp-pf:kpp1} follows from Proposition~\ref{prop:energy-goal} and \eqref{eq:phidf-pt}, and \eqref{eq:btstrp-pf:kpp2} can be proved using Schwarz's inequality, \eqref{eq:en-bg-duphi} with $\underline{\calE} = 1$ and \eqref{eq:btstrp-pf:kpp1}. Finally, \eqref{eq:btstrp-pf:kpp3} follows from \eqref{eq:en-bg-duphi} and the pointwise estimate
\begin{equation*}
	\Abs{1 - \frac{\rbg}{r} \frac{\dur}{\durbg}} \leq C A \log \Lmb \left( \eps + \dlt_{(t_{B})}^{2} \right)
\end{equation*}
which is prove exactly as in the proof of \eqref{eq:btstrp-pf:m3} above. \qedhere
\end{proof}

Next we consider the difference of $\log(-\gamma)$, which is handled similarly as $\log \kpp$.
\begin{lemma} \label{lem:btstrp-pf:gmm}
We have
\begin{equation}  \label{eq:btstrp-pf:gmm}
	\abs{\log (-\gmm) - \log (-\gmmbg)} \leq C r^{-1} \left( \dlt_{(t_{B})}^{2} + (\dlt + \eps) \eps + A (\log \Lmb) \dlt^{2} \left( \eps + \dlt_{(t_{B})}^{2} \right) \right)  \quad \hbox{ in } \calD_{(t_{B})}.
\end{equation}
Moreover, for every $R \geq 20$, we have
\begin{equation} \label{eq:btstrp-pf:gmm-int}
	\int_{u_{R, i}}^{u_{R, f}} \Abs{\log \gmm - \log \gmmbg}(u, v_{R}(u)) \, \ud u
	\leq C \left( \dlt_{(t_{B})}^{2} + (\dlt + \eps) \eps + A (\log \Lmb) \dlt^{2} \left( \eps + \dlt_{(t_{B})}^{2} \right) \right),
\end{equation}
where $u_{R, i}$ and $u_{R, f}$ are the $u$-coordinates of the points at the intersection $\gmm_{R} \cap (C_{out} \cup \uC_{in})$ and $\gmm_{R} \cap (C^{f}_{out} \cup \uC^{f}_{in})$, respectively.
\end{lemma}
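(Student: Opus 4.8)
\textbf{Proof sketch for Lemma~\ref{lem:btstrp-pf:gmm}.} The plan is to mirror the proof of Lemma~\ref{lem:btstrp-pf:kpp}, but now integrating the difference of the $\rd_{v} \log (-\gmm)$ equation from the \emph{future null infinity side} rather than from $C_{(t_{B}) out}^{f}$, exploiting that $\gmm = \gmmbg = -1$ on $\NI$ (which corresponds to $\set{v = v_{(t_{B}) f}}$ in the bootstrap setting). Recalling the Raychaudhuri equation \eqref{eq:EMSF-ray}, $\rd_{v} \log(-\gmm) = \frac{r}{\dvr}(\rd_{v} \phi)^{2}$, we take the difference to get
\begin{equation*}
	\rd_{v}(\log(-\gmm) - \log(-\gmmbg)) = \frac{r}{\dvr}(\rd_{v} \phidf)^{2} + 2 \frac{r}{\dvr} \rd_{v} \phibg \, \rd_{v} \phidf + \left(1 - \frac{\rbg}{r} \frac{\dvr}{\dvrbg}\right) \frac{r}{\dvr}(\rd_{v} \phibg)^{2}.
\end{equation*}
First I would integrate this identity in $v$ from $(u, v)$ up to $(u, v_{R_{0}}(u))$ when $r \leq R_{0}$, or all the way to $\NI$ (i.e.\ $v = v_{(t_{B}) f}$) when $r \geq R_{0}$, picking up the boundary term $\log(-\gmm) - \log(-\gmmbg)$ there. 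Since $\gmm = \gmmbg = -1$ on $\NI$, the boundary term at $v_{(t_{B}) f}$ vanishes for the outgoing normalization; when integrating only to $\gmm_{R_{0}}$, the boundary term is controlled by \eqref{eq:df-const-r:rdf}-type estimates via the bounds already obtained on $\gmm_{R_{0}}$ (alternatively, one can reach $\NI$ directly in all cases and then use the improved integrated local energy decay for $\rd_v\phidf$ in a large-$r$ region). The three source terms are then controlled exactly as in Lemma~\ref{lem:btstrp-pf:kpp}: the first using Proposition~\ref{prop:energy-goal} (plus \eqref{eq:en-dvphi-goal} to gain the needed $r$-decay near null infinity), the second by Cauchy--Schwarz combined with \eqref{eq:en-bg-dvphi} (with $\calE = 1$) and the first bound, and the third using \eqref{eq:en-bg-dvphi} with $\calE = |1 - \frac{\rbg}{r}\frac{\dvr}{\dvrbg}|$, which is bounded by $C A \log \Lmb (\eps + \dlt_{(t_{B})}^{2})$ in $\set{r \geq 20}$ via \eqref{eq:kppbg-bnd}, \eqref{eq:kpp-bnd}, \eqref{eq:en-btstrp-gmm}, \eqref{eq:r-alp-df}, \eqref{eq:df-large-r:rdf} and \eqref{eq:df-large-r:dvrdf}.

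The key point, which is why this lemma gives the improved $r^{-1}$ decay (in contrast to the anomalous loss in $\log \kpp - \log \kppbg$; cf.\ Remark~\ref{rem:kpp-anomaly}), is that here we integrate \emph{towards decreasing} $r$ from $\NI$, so the weight $\frac{r}{\dvr} \sim r$ in the source is integrated over an interval of $v$-length $\sim r$ but the relevant source terms decay like $r^{-2-\eta_0}$ or faster (using \eqref{eq:dlt-adm:dvphi} and the energy bounds with the improved decay \eqref{eq:en-dvphi-goal} for $\rd_v\phidf$), which yields the gain. Concretely, after the $v$-integration one is left, schematically, with $r^{-1}$ times a quantity bounded by $C(\dlt_{(t_{B})}^{2} + (\dlt + \eps)\eps + A \log \Lmb \dlt^{2}(\eps + \dlt_{(t_{B})}^{2}))$; in the region $\set{r \leq 20}$ this $r^{-1}$ is harmless since $r \geq 1/2$ by \eqref{eq:r-bnd}, so the pointwise bound \eqref{eq:btstrp-pf:gmm} holds there trivially from the $\set{r \geq 20}$ estimate on $\gmm_{20}$ and continuity across $\gmm_{20}$. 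For the integrated bound \eqref{eq:btstrp-pf:gmm-int} along a constant-$r$ curve $\gmm_R$, I would integrate $\rd_v(\log(-\gmm) - \log(-\gmmbg))$ in $v$ and then in $u$ along $\gmm_R$, using $u - v_R(u) \sim $ const along $\gmm_R$ (from \eqref{eq:v-u-r}) to pass between line elements, exactly as in the proof of \eqref{eq:btstrp-pf:kpp-int-R}; the $r$-weight $\frac{r}{\dvr}$ combined with the decay of the sources gives a convergent $u$-integral. The boundary term at $\NI$ vanishes, so there is no $\dlt_{(t_{B})}^{2}$ contribution from that side other than what propagates from the energy bounds.

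The main obstacle — though it is really a bookkeeping issue rather than a conceptual one — will be keeping the $r$-weights sharp when $r$ is large: the naive estimate $\int \frac{r}{\dvr}(\rd_v\phidf)^2\,\ud v$ over a $v$-interval of length $\sim r$ does not obviously produce an $r^{-1}$ gain, and one must instead feed in the improved integrated local energy decay \eqref{eq:en-dvphi-goal} (equivalently, \eqref{eq:energy-dvphi} of the weak stability statement) together with Lemma~\ref{lem:hardy-type} to trade $\rd_v\phidf$ for $\rd_v(r\phidf)$ and recover the extra power. This is the same mechanism used to control $\dvrdf_g$ versus $\dvrdf_b$ in Lemma~\ref{lem:df-large-r}, and since $\gmm$ is normalized at $\NI$ (unlike $\kpp$, normalized at $\EH$) there is no ``bad'' piece to separate out — all of $\log(-\gmm) - \log(-\gmmbg)$ enjoys the $r^{-1}$ decay. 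Once \eqref{eq:btstrp-pf:gmm} and \eqref{eq:btstrp-pf:gmm-int} are in hand, together with Lemmas~\ref{lem:btstrp-pf:m} and \ref{lem:btstrp-pf:kpp}, all the bootstrap assumptions \eqref{eq:en-btstrp-m}--\eqref{eq:en-btstrp-int-20} and the geometric difference bounds \eqref{eq:en-btstrp:rdf}--\eqref{eq:en-btstrp:logdurdf} follow by collecting the estimates and choosing $A$ large, then $\dlt_0$ (hence $\dlt$, $\dlt_{(t_{B})}$) and $\eps_0$ (hence $\eps$) small, which completes the proof of Proposition~\ref{prop:en-btstrp}.
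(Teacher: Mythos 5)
Your approach is essentially the paper's: take the $\rd_{v}$ difference of the Raychaudhuri equation for $\log(-\gmm)$, integrate from the future incoming null boundary towards decreasing $r$, and control the three source terms via Proposition~\ref{prop:energy-goal} (the $R$-weighted flux bound on $C_{u}\cap\{r\geq R\}$), Lemma~\ref{lem:en-bg} for the background terms, and the pointwise estimate for $\abs{1-\frac{\rbg}{r}\frac{\dvr}{\dvrbg}}$. You also correctly isolate the reason for the clean $r^{-1}$ decay — unlike $\log\kpp$, there is no ``bad'' bounded piece to separate out because $\gmm$ is normalized towards $\NI$.

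One substantive slip: you assert ``Since $\gmm=\gmmbg=-1$ on $\NI$, the boundary term at $v_{(t_{B})f}$ vanishes,'' and later ``there is no $\dlt_{(t_{B})}^{2}$ contribution from that side other than what propagates from the energy bounds.'' In the bootstrap setting this is false for the background: only the \emph{perturbed} solution satisfies $\gmm_{(t_{B})}\equiv -1$ on $\uC^{f}_{(t_{B})in}$ (that is the normalization \eqref{eq:u-tB-cond}), whereas $\gmmbg$ is normalized at the \emph{actual} null infinity, so $\log(-\gmmbg)\restriction_{\uC^{f}_{(t_{B})in}}\neq 0$. The boundary term therefore contributes $\abs{\log(-\gmmbg)}\restriction_{\uC^{f}_{(t_{B})in}}\leq \rbg^{-1}\sup_{\uC^{f}_{(t_{B})in}}\rbg\abs{\log(-\gmmbg)}\leq \rbg^{-1}\dlt_{(t_{B})}^{2}$ by \eqref{eq:dlt-tB}, and this is precisely where the $\dlt_{(t_{B})}^{2}$ on the right-hand side of \eqref{eq:btstrp-pf:gmm} comes from; it is not a propagated-error effect. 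The term only vanishes in the limit $t_{B}\to\infty$, which is the content of \eqref{eq:dlt-tB-decay}, not something available mid-bootstrap. Correcting this does not change your conclusion (you already include $\dlt_{(t_{B})}^{2}$ in the claimed bound) but the accounting should go through the boundary rather than the energy estimate.
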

\begin{proof}
We take the difference of the $\rd_{v} \log (-\gmm)$ equation, which we write as
\begin{equation*}
	\rd_{v} (\log (-\gmm) - \log (-\gmmbg)) 
	= \frac{r}{\dvr} (\rd_{v} \phidf)^{2} + 2 \frac{r}{\dvr} \rd_{v} \phibg \rd_{v} \phidf + \left(1 - \frac{\rbg}{r} \frac{\dvr}{\dvrbg} \right) \frac{r}{\dvr} (\rd_{v} \phibg)^{2}.
\end{equation*}
For any $R \geq 20$, we claim that
\begin{align}
	\sup_{u \in [1, u_{f}]} R \int_{C_{u} \cap \calD_{(t_{B})} \cap \set{r \geq R}} \Abs{\frac{r}{\dvr} (\rd_{v} \phidf)^{2}}  \, \ud v
	+ \iint_{\calD_{(t_{B})} \cap \set{r \geq 20}} \Abs{\frac{r}{\dvr} (\rd_{v} \phidf)^{2} } \, \ud u \ud v
	\leq & C \left( \eps + \dlt_{(t_{B})}^{2} \right)^{2} ,	 \label{eq:btstrp-pf:gmm1}\\
	\sup_{u \in [1, u_{f}]} R \int_{C_{u} \cap \calD_{(t_{B})} \cap \set{r \geq R}} \Abs{\frac{r}{\dvr} \rd_{v} \phibg \rd_{v} \phidf}  \, \ud v
	+ \iint_{\calD_{(t_{B})} \cap \set{r \geq 20}} \Abs{\frac{r}{\dvr} \rd_{v} \phibg \rd_{v} \phidf } \, \ud u \ud v
	\leq & C \dlt \left( \eps + \dlt_{(t_{B})}^{2} \right) ,		 \label{eq:btstrp-pf:gmm2} \\
	\sup_{u \in [1, u_{f}]} R \int_{C_{u} \cap \calD_{(t_{B})} \cap \set{r \geq R}} 
					\Abs{\underline{\calE} \frac{r}{\dvr}(\rd_{v} \phibg)^{2}} \, \ud v
	+ \iint_{\calD_{(t_{B})} \cap \set{r \geq 20}} 
					\Abs{\underline{\calE} \frac{r}{\dvr}(\rd_{v} \phibg)^{2}} \, \ud u \ud v
	\leq & C A (\log \Lmb) \dlt^{2} \left( \eps + \dlt_{(t_{B})}^{2} \right), \label{eq:btstrp-pf:gmm3}
\end{align}
where
\begin{equation*}
	\underline{\calE} = \Abs{1 - \frac{\rbg}{r} \frac{\dvr}{\dvrbg}}.
\end{equation*}
As in the proof of Lemma~\ref{lem:btstrp-pf:kpp}, the desired bounds \eqref{eq:btstrp-pf:gmm} and \eqref{eq:btstrp-pf:gmm-int} follow from the claimed estimates. 

We now prove the claimed estimates. The estimate \eqref{eq:btstrp-pf:gmm1} follows from Proposition~\ref{prop:en-dvphi-goal}. Next, \eqref{eq:btstrp-pf:gmm2} can be proven by first applying Schwarz inequality and then using \eqref{eq:btstrp-pf:gmm1} and Lemma~\ref{lem:en-bg}. Finally, \eqref{eq:btstrp-pf:gmm3} follows from Lemma~\ref{lem:en-bg} and the pointwise bound
\begin{equation*}
	\Abs{1 - \frac{\rbg}{r} \frac{\dvr}{\dvrbg}}
	\leq  C A \log \Lmb \left( \eps + \dlt_{(t_{B})}^{2} \right),
\end{equation*}
in $\calD_{(t_{B})} \cap \set{r \geq 20}$, which in turn follows from \eqref{eq:kppbg-bnd}, \eqref{eq:mubg-bnd}, \eqref{eq:kpp-bnd}, \eqref{eq:mu-large-r}, \eqref{eq:r-alp-df}, \eqref{eq:df-large-r:rdf} and \eqref{eq:df-large-r:dvrdf}. \qedhere
\end{proof}

We are now ready to complete the proof of Proposition~\ref{prop:en-btstrp}.
\begin{proof}[Proof of Proposition~\ref{prop:en-btstrp}]
The geometric bounds \eqref{eq:en-btstrp:rdf}--\eqref{eq:en-btstrp:logdurdf} follow from Lemma~\ref{lem:df-large-r} and Lemma~\ref{lem:df-small-r}. On the other hand, 
the improved bootstrap assumptions \eqref{eq:en-btstrp-m:imp}--\eqref{eq:en-btstrp-R-Lmb:imp} follow from Lemma~\ref{lem:btstrp-pf:m-ini}--Lemma~\ref{lem:btstrp-pf:gmm} and Corollary~\ref{cor:rbg-uf-vf-Lmb}, once we fix a sufficiently large number $A > 0$, a small enough number $\dlt_{0} > 0$ (depending on $A, \abs{\ebg}$, $\eta_{0}$, $\omg$), and assume that $\eps_{0} > 0$ is sufficiently small (depending on $A, \dlt_0, \abs{\ebg}, \eta_{0}, \omg, \Lmb$). In particular, to handle the contribution of $\abs{\log \kpp - \log \kppbg}$ in \eqref{eq:en-btstrp-int:imp} and \eqref{eq:en-btstrp-int-20:imp}, we use \eqref{eq:btstrp-pf:kpp-int-R} along with the observation that
\begin{equation*}
	\int_{v_{R, i}}^{v_{R, f}} \abs{\log \kpp - \log \kppbg}(u_{R}(v), v) \, \ud v
%	= \int_{\gmm_{R} \cap \calD_{(t_{B})} \cap  \set{v \leq v_{f}}} \abs{\log \kpp - \log \kppbg} T(v) (t)  \, \ud t
	\geq \frac{1}{2} \int_{\gmm_{R} \cap \calD_{(t_{B})} \cap  \set{v \leq v_{f}}} \abs{\log \kpp - \log \kppbg} (t)  \, \ud t
\end{equation*}
for any $R \geq 20$. Indeed, note that $\dot{v}(t) = T(v)(t) = \kpp^{-1}(\gmm_{R}(t))$ is bounded below by $\frac{1}{2}$ thanks to \eqref{eq:kpp-bnd}. Similarly, for the contribution of $\abs{\log \gmm - \log \gmmbg}$ in \eqref{eq:en-btstrp-int:imp} and \eqref{eq:en-btstrp-int-20:imp}, we use the lower bound $\dot{u}(t) = T(u)(t) = - \gmm^{-1}(\gmm_{R}(t)) \geq 1$, which follows from \eqref{eq:gmm-large-r}, to replace the line element $\ud u$ in \eqref{eq:btstrp-pf:gmm-int} by $\ud t$. \qedhere
\end{proof}

\subsection{Estimates for the coordinate transformation: Proof of Proposition~\ref{prop:g-btstrp}} \label{subsec:gauge}
Here we establish Proposition~\ref{prop:g-btstrp}, which relates the initial difference $\eps_{(t_{B})}^{2}$ in the coordinates $(u_{(t_{B})}, v_{(t_{B})})$ and $(\ubg, \vbg)$ with $\eps_{0}^{2}$ in the original $(U, V)$ coordinates.
This proposition improves the remaining bootstrap assumption \eqref{eq:en-btstrp-eps}.

\begin{proof}[Proof of Proposition~\ref{prop:g-btstrp}]
In this proof, to ease the notation, we will omit the subscript $(t_{B})$ in $u_{(t_{B})}$, $v_{(t_{B})}$ and $\eps_{(t_{B})}$. We also remind the reader that we work under the normalization $\rbg_{\EH} = 1$.

\pfstep{Step~1} We consider the data on the incoming initial null curve $\uC_{in}$, on which $V = v = \vbg = 1$. Our goal in this step is to prove that, when $\eps_{0}, \dlt_{0}$ are sufficiently small, 
\begin{equation} \label{eq:g-btstrp:du}
	\int_{\uC_{in} \cap \calD_{(t_{B})}} (-\dur)^{-1} (\rd_{u} \phidf)^{2} r^{2} \, \ud u
	\leq C \eps_{0}^{2} + C A^{2} \dlt_{0}^{2} \left( \eps + \dlt_{(t_{B})}^{2} \right)^{2}.
\end{equation}

We claim that 
\begin{equation} \label{eq:g-btstrp:du-key}
	\Abs{\frac{1}{\dur} \rd_{u} \phi - \frac{1}{\durbg} \rd_{u} \phibg}(u, 1) \leq C \Lmb^{-3/2} \left( (1 + (\log \Lmb) \dlt) \eps_{0} + A (\log^{2} \Lmb) \dlt \left( \eps + \dlt_{(t_{B})}^{2} \right) \right).
\end{equation}
This will be proved in Step~2; here, we assume \eqref{eq:g-btstrp:du-key} and prove \eqref{eq:g-btstrp:du}. We begin by writing
\begin{align*}
	\int_{\uC_{in} \cap \calD_{(t_{B})}} (- \dur)^{-1} (\rd_{u} \phidf)^{2} r^{2} \, \ud u
	& =  \int_{\uC_{in} \cap \calD_{(t_{B})}} \left( \frac{1}{\dur} \rd_{u} \phidf \right)^{2} r^{2} (- \dur) \, \ud u \\
	& \leq  2\int_{\uC_{in} \cap \calD_{(t_{B})}} \Abs{\frac{1}{\dur} \rd_{u} \phi - \frac{1}{\durbg} \rd_{u} \phibg}^{2} r^{2}(- \dur) \, \ud u \\ 
	& \phantom{\leq}
		 + 2\int_{\uC_{in} \cap \calD_{(t_{B})}} \Abs{ 1 - \frac{\durbg}{\dur}}^{2} \Abs{\frac{1}{\durbg} \rd_{u} \phibg}^{2} r^{2}(- \dur) \, \ud u \\
	& = :  I + II.
\end{align*}

The first term is bounded using \eqref{eq:g-btstrp:du-key} and $\de \Lambda^{100\eta_0}\leq \de_0$ as 
\begin{align*}
	\abs{I} 
	\leq & C \left( \Lmb^{-3/2} (1 + (\log \Lmb) \dlt) \eps_{0} + A \Lmb^{-3/2} (\log \Lmb) \dlt \left( \eps + \dlt_{(t_{B})}^{2} \right) \right)^{2} \int_{\uC_{in} \cap \calD_{(t_{B})}} r^{2} (- \rd_{u} r) \, \ud u \\
	\leq & C (R_{0} / \Lmb)^{3} \left( (1 + \dlt_{0}) \eps_{0}^{2} + A^{2} \dlt_{0}^{2} \left( \eps + \dlt_{(t_{B})}^{2} \right)^{2} \right),
\end{align*}
which is acceptable by \eqref{eq:R0-Lmb}.

To treat the second term, we split the integral into $\int_{\uC_{in} \cap \calD_{(t_{B})} \cap \set{r \leq 20}} (\cdots)$ and $\int_{\uC_{in} \cap \calD_{(t_{B})} \cap \set{r \geq 20}} (\cdots)$. For the former integral, we first use \eqref{eq:dlt-adm:duphi} to estimate
\begin{align*}
	\Abs{\int_{\uC_{in} \cap \calD_{(t_{B})} \cap \set{r \leq 20}} \Abs{ 1 - \frac{\durbg}{\dur}}^{2} \Abs{\frac{1}{\durbg} \rd_{u} \phibg}^{2} r^{2}(- \dur) \, \ud u }
	\leq & C \dlt^{2} \int_{1 + C_{\gmm_{20}}}^{u_{(t_{B}) f}} \Abs{1 - \frac{\durbg}{\dur}}^{2} (- \dur) \, \ud u.
\end{align*}
By the simple inequality $\abs{1 - e^{\vtht}} \leq \vtht e^{\vtht}$ $(\vtht > 0)$, \eqref{eq:df-small-r:logdurdf} and \eqref{eq:df-small-r:dur}, we have
\begin{align} 
	\int_{1 + C_{\gmm_{20}}}^{u_{(t_{B}) f}} \Abs{1 - \frac{\durbg}{\dur}}^{2} (- \dur) \, \ud u
	\leq & \int_{1 + C_{\gmm_{20}}}^{u_{(t_{B}) f}} \abs{\log \dur - \log \durbg}^{2} e^{2 \abs{\log \durbg - \log \dur}} (- \dur) \, \ud u \notag \\
	\leq & C A^{2} \log^{2} \Lmb \left( \eps + \dlt_{(t_{B})}^{2} \right)^{2} \int_{1 + C_{\gmm_{20}}}^{\infty} (1 + \abs{u - 1 - C_{\gmm_{20}}})^{2} e^{\frac{c_{(\dur)}}{2} (u - 1 - C_{\gmm_{20}})} \, \ud u	\notag \\
	\leq & C A^{2} \log^{2} \Lmb \left( \eps + \dlt_{(t_{B})}^{2} \right)^{2}, \label{eq:g-btstrp:du-1}
\end{align}
which is acceptable since $\de\log\Lmb\leq C\de_0$. On the other hand, for $\int_{\uC_{in} \cap \calD_{(t_{B})} \cap \set{r \geq 20}}(\cdots)$ we use \eqref{eq:dlt-adm:duphi}, \eqref{eq:dur-large-r} and \eqref{eq:df-large-r:durdf} and $\de\log\Lmb\leq C\de_0$ to bound
\begin{align*}
	\Abs{\int_{\uC_{in} \cap \calD_{(t_{B})} \cap \set{r \geq 20}} \Abs{ 1 - \frac{\durbg}{\dur}}^{2} \Abs{\frac{1}{\durbg} \rd_{u} \phibg}^{2} r^{2}(- \dur) \, \ud u }
	\leq & C A \Lmb^{-1} (\log^{2} \Lmb) \dlt^{2} \left( \eps + \dlt_{(t_{B})}^{2} \right)^{2} 
			\int_{1}^{1 + C_{\gmm_{20}}} ( - \rd_{u} r) \, \ud u \\
	\leq & C A (R_{0} / \Lmb) \dlt_{0}^{2} \left( \eps + \dlt_{(t_{B})}^{2} \right)^{2},
\end{align*}
which is acceptable by \eqref{eq:R0-Lmb}.

\pfstep{Step~2}
We now turn to the proof of \eqref{eq:g-btstrp:du-key}. It is convenient introduce a new coordinate $\bfU$ so that $\rd_{\bfU} \rbg = -1$ on $\uC_{in}$. More precisely, we set
\begin{equation*}
	\frac{\ud \bfU}{\ud U} (U) = - \rd_{U} \rbg(U, 1), \quad \bfU(1) = 1.
\end{equation*}
By the chain rule, we have
\begin{equation*}
	(\rd_{\bfU} r) (\bfU(U), 1) = \left( \frac{\ud \bfU}{\ud U} (U) \right)^{-1} \rd_{U} r(U, 1) = - \frac{\rd_{U} r}{\rd_{U} \rbg} (U, 1).
\end{equation*}
Let $U(u)$ be the inverse of the function $U \mapsto u(U)$. Abusing the notation a bit, we denote by $\Ubg(u)$ the inverse of the function $U \mapsto \overline{u}(U)$. We define
\begin{equation*}
	\bfU(u) := \bfU(U(u)), \quad 
	\bfUbg(u) := \bfU(\Ubg(u)).
\end{equation*}
Given some functions $f, \overline{f}$, let $g (u) := f(\bfU(u))$ and $\overline{g}(u) := \overline{f}(\bfUbg(u))$. Then
\begin{align*}
	(g - \overline{g})(u) 
= & 	f(\bfU(u)) - \overline{f}(\bfUbg(u)) \\
= &	(f - \overline{f})(\bfU(u)) + (\bfU - \bfUbg)(u) \int_{0}^{1} \rd_{\bfU} \overline{f}(\sgm \bfU(u) + (1-\sgm) \bfUbg(u)) \, \ud \sgm.
\end{align*}
We apply the above formula with $f (\bfU)= \rd_{\bfU} \phi(\bfU, 1)$ and $\overline{f} (\bfU)= \rd_{\bfU} \phibg(\bfU, 1)$. Using coordinate invariance of $\frac{1}{\rd_{u} r} \rd_{u}$, we see that 
\begin{align*}
g (u) =& f(\bfU(u)) = \frac{1}{\rd_{u} r} \rd_{u} \phi (u) = \frac{1}{\rd_{U} r} \rd_{U} \phi(U(u)), \\
\overline{g} (u) =& \frac{1}{\rd_{u} \rbg} \rd_{u} \phibg (u), \quad
\overline{f}(\bfU(u)) = \frac{1}{\rd_{U} \rbg} \rd_{U} \phibg(U(u)).
\end{align*}
Moreover, we have
\begin{equation*}
	\rd_{\bfU} \overline{f} (\bfU') = \rd_{\bfU}^{2} \phibg (\bfU') = \left( \frac{1}{-\rd_{U} \rbg} \rd_{U} \right)^{2} \phibg (\bfU').
\end{equation*}
Therefore,
\begin{align*}
	\Abs{ \frac{1}{\rd_{u} r} \rd_{u} \phi - \frac{1}{\rd_{u} \rbg} \rd_{u} \phibg }(u,1)
	\leq &\Abs{ \frac{1}{\rd_{U} r} \rd_{U} \phi - \frac{1}{\rd_{U} \rbg} \rd_{U} \phibg } (U(u),1) \\
		&+ \abs{\bfU - \bfUbg}(u) \, \sup_{\bfU' \hbox{ between } \bfU(u), \bfUbg(u)} \Abs{\left( \frac{1}{-\rd_{U} \rbg} \rd_{U} \right)^{2} \phibg}(\bfU',1).
\end{align*}
For $1 \leq u \leq u_{(t_{B}) f}$, the first term is at most of size $\Lmb^{-3/2} \eps_{0}$ by the hypothesis \eqref{eq:L-st-ch:ini-duphi}. By the definition of $(\omg, \dlt, \Lmb)$-admissibility, the second term is bounded by
\begin{equation*}
	\Lmb^{-3/2} \dlt \sup_{u : 1 \leq u \leq u_{(t_{B}) f}} \abs{\bfU - \bfUbg}(u),
\end{equation*}
as long as we can guarantee that $(\bfU(u), 1) \in \uC_{in}$.
Therefore, to conclude the proof of \eqref{eq:g-btstrp:du-key}, it only remains to prove that 
\begin{equation} \label{eq:g-btstrp:bfU}
\abs{\bfU - \bfUbg}(u) \leq C A \log^2 \Lmb \left( \eps + \dlt_{(t_{B})}^{2} \right)  + C \eps_{0} .
\end{equation}
Indeed, this estimate also guarantees that $(\bfU(u), 1) \in \uC_{in}$. To see this, note that by Definition~\ref{def:dlt-adm}, it suffices to show that $\rbg(\bfU(u),1)\geq (1-2\de) \rbg_{\EH}$. On the one hand, by Definition~\ref{def:dlt-adm} and the monotonicity of $\rbg$, $\rbg\geq (1-\de)\rbg_{\EH}$ in the exterior region. Hence, $\rbg(\overline{\bfU}(u),1)\geq (1-\de)\rbg_{\EH}$. On the other hand, by choosing $\dlt_0$ small and then $\eps_{0}$ small (depending on $\dlt_0$) and then using \eqref{eq:en-btstrp-eps} and Proposition~\ref{prop:dlt-tB-bnd}, the RHS of \eqref{eq:g-btstrp:bfU} is $\leq \f{\de}{2}$. Hence, since $\rd_{\bfU} \rbg=-1$ and $(\overline{\bfU}(u),1)\in \uC_{in} \cap \calD$, we have 
$$\rbg(\bfU(u),1)= \rbg(\overline{\bfU}(u),1)-(\bfU(u)-\overline{\bfU}(u)) \geq (1-\de-\f{\de}{2})\rbg_{\EH} \geq (1-2\de)\rbg_{\EH},$$
as desired.

To prove \eqref{eq:g-btstrp:bfU}, note that 
\begin{align*}
	\frac{\ud}{\ud u} (\bfU - \overline{\bfU})(u)
	 = & \frac{\rd_{u} r(u, 1)}{\rd_{\bfU} r(\bfU(u), 1)} - \frac{\rd_{u} \rbg(u, 1)}{-1} \\
	 = & \rd_{u} r(u, 1) \left[ \left(\frac{\rd_{u} \rbg}{\rd_{u} r} (u,1) - 1 \right) 
	 + \left( 1 - \frac{\rd_{U} \rbg}{\rd_{U} r}(U(u), 1) \right) \right],
\end{align*}
and $(\bfU - \overline{\bfU})(1) = 0$. Therefore, the proof of \eqref{eq:g-btstrp:bfU} is reduced to showing the bounds
\begin{align}
	\int_{1}^{u_{(t_{B}) f}} (- \rd_{u} r)(u, 1) \Abs{1 - \frac{\rd_{U} \rbg}{\rd_{U} r}}(U(u), 1)  \, \ud u \leq & C \eps_{0},
	\label{eq:g-btstrp:bfU-1} \\
	\int_{1}^{u_{(t_{B}) f}} (- \dur) \Abs{\frac{\durbg}{\dur} - 1}(u, 1)  \, \ud u \leq & C A \log^2 \Lmb \left( \eps + \dlt_{(t_{B})}^{2} \right). \label{eq:g-btstrp:bfU-2} 
\end{align}
To prove \eqref{eq:g-btstrp:bfU-1}, we use \eqref{eq:L-st-ch:ini-dur}, the bound $\int_{1}^{u_{(t_{B}) f}} (-\rd_{u} r) \, \ud u \leq R_{0}$ and \eqref{eq:R0-Lmb}. For \eqref{eq:g-btstrp:bfU-2}, we first split the integral into $\int_{1}^{1 + C_{\gmm_{20}}} (\cdots)$ (where $r \geq 20$) and $\int_{1 + C_{\gmm_{20}}}^{u_{(t_{B}) f}} (\cdot)$ (where $r \leq 20$). The first integral is handled as in the proof of \eqref{eq:g-btstrp:bfU-1}, where we use \eqref{eq:dur-large-r} and \eqref{eq:df-large-r:durdf} in place of \eqref{eq:L-st-ch:ini-dur}. Finally, the second integral is treated as in \eqref{eq:g-btstrp:du-1} using \eqref{eq:df-small-r:logdurdf} and \eqref{eq:df-small-r:dur}; we omit the similar details.

\pfstep{Step~3} Next, we consider the data on the outgoing initial null curve $C_{out}$, on which $U = u = \ubg = 1$. Throughout this step (and the next one), it is useful to recall that $\inf_{C_{out} \cap \calD_{(t_{B})}} r = R_{0} > 20$ by \eqref{eq:R0-Lmb}. Moreover, $C^{-1} \leq r / \rbg \leq C$ by \eqref{eq:df-large-r:rdf}; therefore, the weights $r^{\omg}$ and $\rbg^{\omg}$ are comparable.

Our goal is to prove the following bounds:
\begin{align} 
	\sup_{C_{out} \cap \calD_{(t_{B})}} r^{\omg} \Abs{\dvr^{-1} \rd_{v} (r \phi - \rbg \phibg)}
	\leq & C \eps_{0} + C A \dlt_{0} \left( \eps + \dlt_{(t_{B})}^{2} \right), 		\label{eq:g-btstrp:dvrphi-pt} \\
	\int_{C_{out} \cap \calD_{(t_{B})}} \kpp^{-1} (\rd_{v} (r \phidf))^{2} \log (1 + r) \, \ud v
	\leq & C \eps_{0}^{2} + C A^{2} \dlt_{0}^{2} \left( \eps + \dlt_{(t_{B})}^{2} \right)^{2}, \label{eq:g-btstrp:dvrphi-int} \\
	\int_{C_{out} \cap \calD_{(t_{B})}} \kpp^{-1} (\rd_{v} \phidf)^{2} r^{2} \, \ud v
	\leq & C \eps_{0}^{2} + C A^{2} \dlt_{0}^{2} \left( \eps + \dlt_{(t_{B})}^{2} \right)^{2}. \label{eq:g-btstrp:dvphi} 
\end{align}
Note that \eqref{eq:g-btstrp:dvrphi-pt} coincides with \eqref{eq:g-btstrp-dvrphi}. Moreover, combined with \eqref{eq:L-st-ch:ini-r}--\eqref{eq:L-st-ch:ini-e} and \eqref{eq:g-btstrp:du}, it is straightforward to prove \eqref{eq:g-btstrp} from \eqref{eq:g-btstrp:dvrphi-int} and \eqref{eq:g-btstrp:dvphi}. 

The key claim for proving these bounds is:
\begin{equation} \label{eq:g-btstrp:dv-key}
	\rbg^{\omg}\Abs{\frac{1}{\kpp} \rd_{v} (r \phi) - \frac{1}{\kppbg} \rd_{v} (\rbg \phibg)}(1, v) \leq C \left((1 + \dlt) \eps_{0} + A \dlt \left( \eps + \dlt_{(t_{B})}^{2} \right)\right).
\end{equation}
We will prove this claim in Step 4; here, we assume \eqref{eq:g-btstrp:dv-key} and establish \eqref{eq:g-btstrp:dvrphi-pt}--\eqref{eq:g-btstrp:dvphi}.

For \eqref{eq:g-btstrp:dvrphi-pt}, we first use \eqref{eq:mu-large-r} to estimate
\begin{align*}
	\abs{\dvr^{-1} \rd_{v} (r \phi - \rbg \phibg)}
	\leq & 2 \Abs{\kpp^{-1} \rd_{v} (r \phi - \rbg \phibg)} \\
	\leq & 2 \Abs{\kpp^{-1} \rd_{v} (r \phi) - \kppbg^{-1} \rd_{v}(\rbg \phibg)} + 2 \Abs{\frac{\kppbg}{\kpp} - 1} \Abs{\kppbg^{-1} \rd_{v} (\rbg \phibg)}.
\end{align*}
Then the first term is bounded by \eqref{eq:g-btstrp:dv-key}, whereas the second term is bounded by \eqref{eq:dlt-adm:dvrphi} and \eqref{eq:en-btstrp-kpp}. 

Next, for \eqref{eq:g-btstrp:dvrphi-int}, we again use \eqref{eq:mu-large-r} to estimate
\begin{align*}
\int_{C_{out} \cap \calD_{(t_{B})}} \kpp^{-1} (\rd_{v} (r \phidf))^{2} \log (1 + r) \, \ud v
\leq 2 \left(\sup_{C_{out} \cap \calD_{(t_{B})}} r^{2} \Abs{\kpp^{-1} (\rd_{v} (r \phidf))}^{2} \right) \int_{1}^{v_{(t_{B})f}} \log (1+r) r^{-2} \rd_{v} r (1, v) \, \ud v .
\end{align*}
The integral is bounded by $\Lmb^{-1}\log \Lmb$, so it only remains to prove 
\begin{equation*}
	r \abs{\kpp^{-1} \rd_{v} (r \phidf)} \leq C \eps_0+ C A \log \Lmb \left( \eps + \dlt_{(t_{B})}^{2} \right).
\end{equation*}
To prove this bound, we first use \eqref{eq:kpp-bnd} and \eqref{eq:mu-large-r} to estimate
\begin{equation*}
	r \kpp^{-1} \abs{\rd_{v} (r \phidf)} 
	\leq r \dvr^{-1} \abs{\rd_{v} (r \phi - \rbg \phibg)} 
		+ r \abs{\dvrdf} \abs{\phibg} 
		+ r \abs{\rdf} \abs{\rd_{v} \phibg}
\end{equation*}
and then invoke \eqref{eq:dlt-adm:phi}, \eqref{eq:dlt-adm:dvphi}, \eqref{eq:df-large-r:rdf}, \eqref{eq:df-large-r:dvrdf} and \eqref{eq:g-btstrp:dvrphi-pt}.

Finally, for \eqref{eq:g-btstrp:dvphi}, we apply \eqref{eq:mu-large-r} and Lemma~\ref{lem:hardy-type} with $\alp = 0$ to estimate
\begin{align*}
	\int_{C_{out} \cap \calD_{(t_{B})}} \kpp^{-1} (\rd_{v} \phidf)^{2} r^{2} \, \ud v
	\leq \int_{1}^{v_{(t_{B}) f}} \frac{1}{\dvr} (\rd_{v} \phidf)^{2} r^{2} \, \ud v 
	\leq \int_{1}^{v_{(t_{B}) f}} \frac{1}{\dvr} (\rd_{v} (r \phidf))^{2}  \, \ud v 
		+ R_{0} \phidf^{2}(1, 1).
\end{align*}
On the far RHS, the second term is bounded by $C \eps_{0}^{2}$ thanks to \eqref{eq:L-st-ch:ini-phi} and \eqref{eq:R0-Lmb}, whereas the first term can be handled using \eqref{eq:g-btstrp:dvrphi-int} and the bound $\lmb^{-1} (1, v)\leq 2 \kpp^{-1}(1,v)$, which follows from \eqref{eq:mu-large-r}.

\pfstep{Step~4} 
It remains to establish \eqref{eq:g-btstrp:dv-key}; the argument here is similar to Step~2. We introduce a new coordinate $\bfV$ so that $(1-\mubg)^{-1} \rd_{\bfV} \rbg = 1$ on $C_{out}$, by setting
\begin{equation*}
	\frac{\ud \bfV}{\ud V}(V) = \frac{\rd_{V} \rbg}{1-\mubg} (1, V), \quad \bfV(1) = 1.
\end{equation*}
By the chain rule, we have
\begin{equation*}
	\frac{\rd_{\bfV} r}{1-\mu} (1, \bfV(V)) = \left(\frac{\rd_{V} \rbg}{1-\mubg}\right)^{-1} \frac{\rd_{V} r}{1-\mu} (1, V).
\end{equation*}
As before, we denote by $V(v)$ and $\Vbg(v)$ the inverses of the functions $V \mapsto v(V)$ and $V \mapsto \vbg(V)$, respectively. We furthermore define
\begin{equation*}
	\bfV(v) := \bfV(V(v)), \quad
	\bfVbg(v) := \bfV(\Vbg(v)).
\end{equation*}
Proceeding as in Step~2, we obtain the pointwise estimate
\begin{align*}
	\Abs{ \frac{1}{\kpp} \rd_{v} (r \phi) - \frac{1}{\kppbg} \rd_{v} (\rbg \phibg) }(1, v)
	\leq &\Abs{ \frac{1-\mu}{\rd_{V} r} \rd_{V} (r \phi) - \frac{1-\mubg}{\rd_{V} \rbg} \rd_{V} (\rbg\phibg) } (1, V(v)) \\
		&+ \abs{\bfV - \bfVbg}(v) \, \sup_{\bfV' \hbox{ between } \bfV(v), \bfVbg(v)} \Abs{\rd_{\bfV}^{2} (\rbg \phibg)}(1, \bfV')
\end{align*}
The first term is bounded by $\rbg^{-\omg} \eps_{0}$ by \eqref{eq:L-st-ch:ini-dvrphi}
To bound the second term, we need to use 
\begin{equation} \label{eq:g-btstrp:bfV}
\abs{\bfV - \bfVbg}(v) \leq C \rbg(1, v) \left( \eps_{0} + A \left(\eps + \dlt_{(t_{B})}^{2} \right) \right),
\end{equation}
which is proved below. Taking $C (\eps_{0} + A (\eps + \dlt_{(t_{B})}^{2} ) ) < \frac{1}{2}$ and using \eqref{eq:dlt-adm-dvdvrphi}, it follows that
\begin{align*}
\abs{\bfV - \bfVbg}(v) \sup_{\bfV' \hbox{ between } \bfV(v), \bfVbg(v)} \Abs{\rd_{\bfV}^{2} (\rbg \phibg)}(\bfV')
\leq & C \abs{\bfV - \bfVbg}(v) \, \rbg(1, v)^{-(\omg+1)} \dlt \\
\leq & C \rbg(1, v)^{-\omg} \dlt \left(\eps_{0} + A \left(\eps + \dlt_{(t_{B})}^{2} \right) \right),
\end{align*}
which is acceptable.

To conclude the proof, we need to verify \eqref{eq:g-btstrp:bfV}. By definition, we have
\begin{align*}
	\frac{\ud}{\ud v} (\bfV - \bfVbg)(v)
	 = & \frac{\kpp(1, v)}{(1-\mu)^{-1} \rd_{\bfV} r(1, \bfV(v))} - \frac{\kppbg(1, v)}{-1} \\
	 = & \kpp(1, v) \left(1 - \frac{\kppbg}{\kpp}(1, v)\right) + \kpp(1, v) \left( \left(\frac{\rd_{V} r}{1-\mu}\right)^{-1} \frac{\rd_{V} \rbg}{1-\mubg} (1, V(v)) - 1\right),
\end{align*}
and $(\bfV - \overline{\bfV})(1) = 0$. Then \eqref{eq:g-btstrp:bfV} can be proved by first integrating the above identity in $v$ from $v=1$, then applying \eqref{eq:L-st-ch:ini-dvr}, \eqref{eq:en-btstrp-kpp} and the bound $\kpp \leq 2 \rd_{v} r$ (which follows from \eqref{eq:mu-large-r}). We omit the straightforward details. \qedhere
\end{proof}

\subsection{Proof of weak stability (Theorem~\ref{thm:weak-st})} \label{subsec:weak-st-pf}
With Propositions~\ref{prop:en-btstrp} and \ref{prop:g-btstrp} in hand, we may now complete the proof of Theorem~\ref{thm:weak-st}.

\begin{proof}[Proof of Theorem~\ref{thm:weak-st}]
We first outline how to initiate and continue the bootstrap argument described in Section~\ref{subsec:btstrp}. As $t_{B} \to 0$, note that 
\begin{equation} \label{eq:weak-st-pf:ini-1}
\limsup_{t_{B} \to 0} \eps_{(t_{B})} \leq \eps_{0}, \quad
\liminf_{t_{B} \to 0} \eps_{(t_{B})} \geq \abs{\varpi - \varpibg}(1,1), \quad
\liminf_{t_{B} \to 0} \dlt_{(t_{B})} \geq \dlt.
\end{equation}
by \eqref{eq:eps-tB} and \eqref{eq:dlt-tB}. On the other hand, by \eqref{eq:EMSF-ray}--\eqref{eq:EMSF-r-phi-m}, we have
\begin{align*}
	\sup_{\calD_{(t_{B})}} \abs{\varpi - \varpi(1,1)} \leq & C_{\calD_{(t_{B})}} \left( \sup_{\calD_{(t_{B})}} r^{3} - \inf_{\calD_{(t_{B})}} r^{3} \right), \\
	\sup_{\calD_{(t_{B})}} \abs{\log \kpp_{(t_{B})} - \log (\kpp_{(t_{B})})(u_{(t_{B}) f}, \cdot)} 
	\leq & C_{\calD_{(t_{B})}} \left( \sup_{\calD_{(t_{B})}} r^{2} - \inf_{\calD_{(t_{B})}} r^{2} \right), \\
	\sup_{\calD_{(t_{B})}} \abs{\log (-\gmm_{(t_{B})}) - \log (-\gmm_{(t_{B})})(\cdot, v_{(t_{B}) f})} 
	\leq & C_{\calD_{(t_{B})}} \left( \sup_{\calD_{(t_{B})}} r^{2} - \inf_{\calD_{(t_{B})}} r^{2} \right),
\end{align*}
for sufficiently small $t_{B} > 0$, where $C_{\calD_{(t_{B})}}$ depends on the coordinate-independent quantities $\frac{1}{\dvr_{(t_{B})}} \rd_{v_{(t_{B})}} \phi$ and $\frac{1}{\dur_{(t_{B})}} \rd_{u_{(t_{B})}} \phi$ on $\calD_{(t_{B})}$ (moreover, it is non-decreasing in $t_{B}$). Indeed, since these inequalities are coordinate-independent, they can be easily proved in the $(U, V)$ coordinates. 

Note that the RHSs of the preceding inequalities vanish as $t_{B} \to 0$. Furthermore, the normalization conditions \eqref{eq:u-tB-cond}--\eqref{eq:v-tB-cond} imply that $\log \kpp_{(t_{B})} (u_{(t_{B}) f}, \cdot) = \log -\gmm_{(t_{B})} (\cdot, v_{(t_{B})f}) = 0$. It follows that
\begin{equation} \label{eq:weak-st-pf:ini-2}
	\limsup_{t_{B} \to 0}  \sup_{\calD_{(t_{B})}} \left( \abs{\varpi - \varpi(1,1)} + \abs{\log \kpp_{(t_{B})}} + \abs{\log -\gmm_{(t_{B})}} \right) = 0.
\end{equation}
By a similar argument for the background solution, we obtain
\begin{equation} \label{eq:weak-st-pf:ini-3}
	\limsup_{t_{B} \to 0} \sup_{\calD_{(t_{B})}} \abs{\varpibg - \varpibg(1,1)} = 0, \quad
	\limsup_{t_{B} \to 0}  \sup_{\calD_{(t_{B})}} \left(\abs{\log \kppbg} + \abs{\log (-\gmmbg)} \right) \leq C \dlt^{2},
\end{equation}
where we used Proposition~\ref{prop:bg-geom} to bound $\log \kppbg(u_{(t_{B}) f}, \cdot)$ and $\log (-\gmmbg)(\cdot, v_{(t_{B}) f})$ by $C \dlt^{2}$ (here, $C$ is as in \eqref{eq:dlt-tB-decay}). Next, notice that by Definition~\ref{def:dlt-adm} and \eqref{eq:L-st-ch:ini-r}, for $\ep_0\leq \f 14$,
\begin{equation} \label{eq:weak-st-pf:ini-4}
 \limsup_{t_{B} \to 0} |r(u_{(t_B)f}, v_{(t_B)f})-\Lambda| \leq \f 14.
\end{equation}

Putting together \eqref{eq:weak-st-pf:ini-1}, \eqref{eq:weak-st-pf:ini-2}, \eqref{eq:weak-st-pf:ini-3} and \eqref{eq:weak-st-pf:ini-4}, and using \eqref{eq:L-st-ch:ini-m}, we see that the bootstrap assumptions \eqref{eq:en-btstrp-eps}--\eqref{eq:en-btstrp-R-Lmb} hold for small enough $t_{B}$ (where $A \geq 1$).

Taking $A$ sufficiently large and $\eps_{0}$, $\dlt_{0}$ sufficiently small, we may apply Propositions~\ref{prop:en-btstrp} and \ref{prop:g-btstrp}, so that the improved bootstrap assumptions \eqref{eq:en-btstrp-m:imp}--\eqref{eq:en-btstrp-R-Lmb:imp} and \eqref{eq:g-btstrp} hold. Since both sides of \eqref{eq:en-btstrp-eps}--\eqref{eq:en-btstrp-R-Lmb} are continuous in $t_{B}$, we may continue the bootstrap argument for arbitrarily large $t_{B}$.

To conclude the proof, observe that as $t_{B} \to \infty$, we have the convergences (uniform on compact subsets) $u_{(t_{B})} (U) \to u(U)$ and $v_{(t_{B})}(V) \to v(V)$ (these are clear from the decay bounds for $(r, \phi, \varpi)$ in Section~\ref{sec:bg}). Note also that $\dlt_{(t_{B})} \to 0$ and $r \phidf^{2}(u, v_{(t_{B}) f}) \leq 2 (r \phi^{2} + r \phibg^{2})(u, v_{(t_{B}) f}) \to 0$ as $t_{B} \to \infty$ (the former by \eqref{eq:dlt-tB-decay} and the latter by the uniform boundedness of $r \phi$ and $\rbg \phibg$ in Section~\ref{sec:bg}). Finally, by the equations \eqref{eq:EMSF-ray}--\eqref{eq:EMSF-r-phi-m} and the normalization conditions \eqref{eq:u-tB-cond}--\eqref{eq:v-tB-cond}, we can also pass to the limit $t_{B} \to \infty$ of Propositions~\ref{prop:en-btstrp}, \ref{prop:g-btstrp}, \ref{prop:en-dvphi-goal}, \ref{prop:energy-goal} and Corollary~\ref{cor:zeroth} (as well as the bounds in Section~\ref{subsec:geom}), which proves Theorem~\ref{thm:weak-st}.  \qedhere
\end{proof}

\subsection{Energy decay in the $\Gmm_{\tau}$ foliation: $r^{p}$-weighted energy estimates} \label{subsec:rp-weight}
Starting from the weak stability result (Theorem~\ref{thm:weak-st}), we now prove decay of nondegenerate energy on $\Gmm_{\tau}$. We work under the same assumptions and conventions as in Theorem~\ref{thm:weak-st}. In particular, we normalize $\rbg_{\EH} = 1$ by scaling. Moreover, from now on till the end of the section, unless otherwise stated, we will \textbf{use $C$ to denote a constant depending} on $\omg$, $\eta_{0}$, $|\ebg|$, $\Lmb$ (consistent with the conventions in Theorem~\ref{thm:weak-st}). Note that this is different from the convention in the proof of Theorem~\ref{thm:weak-st}, where we only allowed $C$ to depend on $\omg$, $\eta_{0}$, $|\ebg|$, but not $\Lmb$. \textbf{We will also take $\dlt_{0}$ and $\eps_{0}$ sufficiently small depending on ($|\ebg|$, $\eta_0$, $\omg$) and ($\dlt_{0}$, $|\ebg|$, $\eta_0$, $\omg$, $\Lmb$) respectively as necessary without further comment.}

The main result of this subsection is as follows:
\begin{proposition} \label{prop:en-decay}
The following estimate holds for $\tau \geq 1$:
\begin{equation} \label{eq:en-decay}
	\int_{\Gmm_{\tau}^{(in)}} \frac{1}{(-\dur)} (\rd_{u} \phidf)^{2} r^{2} \, \ud u
	+ \int_{\Gmm_{\tau}^{(out)}}  \frac{1}{\dvr} (\rd_{v} \phidf)^{2} r^{2} \, \ud v
	\leq C  \tau^{-3 + \eta_{0}} \eps_{0}^{2}.
\end{equation}
\end{proposition}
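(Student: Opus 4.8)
The goal is to upgrade the uniform boundedness of nondegenerate energy from Theorem~\ref{thm:weak-st} into the polynomial decay rate $\tau^{-3+\eta_{0}}$. The strategy is the now-standard $r^{p}$-weighted energy hierarchy of Dafermos--Rodnianski \cite{DRNM}, applied to $\phidf$ in the future-normalized coordinates $(u,v)$ and adapted to the foliation $\Gmm_{\tau} = \Gmm_{\tau}^{(in)} \cup \Gmm_{\tau}^{(out)}$. The key point, which distinguishes this from the linear setting, is that the wave equation \eqref{eq:wave4phidf} for $\phidf$ carries an inhomogeneity $F = -(\frac{\dvr}{r} - \frac{\dvrbg}{\rbg}) \rd_{u}\phibg - (\frac{\dur}{r} - \frac{\durbg}{\rbg}) \rd_{v}\phibg$, built out of the geometric difference quantities; by the difference bounds \eqref{eq:weak-st:rdf}--\eqref{eq:weak-st:durdf} together with the decay of $\rd\phibg$ from Definition~\ref{def:dlt-adm}, each of $F$'s terms decays at least like $r^{-1}u^{-\omg}$ in the far region, and exponentially (in $u-v-C_{\gmm_{20}}$) in the red-shift region. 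This is more than enough to be treated as a lower-order contribution at every level of the $p$-hierarchy.

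\textbf{Steps.} First I would run the $p=1$ and $p=2$ weighted estimates: multiply \eqref{eq:wave-inhom} by $r^{p} (1-\mu)\dvr^{-1} \rd_{v}(r\phidf)$ (so $Z$ of Lemma~\ref{lem:DR} is the $p=1$ case with a $\log$ rather than a power) and integrate over $\calD(\tau_{1},\tau_{2})$, obtaining control of $\int_{\Gmm^{(out)}_{\tau_{2}}} r^{p-2}\dvr^{-1}(\rd_{v}(r\phidf))^{2}$ plus the bulk term $\iint r^{p-1}\dvr^{-1}(\rd_{v}(r\phidf))^{2}$ in terms of the same flux on $\Gmm_{\tau_{1}}$, the initial data on $C_{out}$ (which is $O(\eps_{0}^{2})$ by \eqref{eq:gauge:dvrphi} and \eqref{eq:eps-def}, noting the $\log(1+r)$ weight there covers the $p=1$ level), the nondegenerate energy flux through $\Gmm^{(in)}_{\tau_{1}}$ (controlled by Theorem~\ref{thm:weak-st}), the zeroth-order term $r\phidf^{2}$ at the corner (controlled by \eqref{eq:energy-phidf-pt}), and the inhomogeneity contribution $\iint |F| r^{p}\dvr^{-1}|\rd_{v}(r\phidf)|$, which by Cauchy--Schwarz and \eqref{eq:en-err-far:pf2}-type bounds is $O(\tau_{1}^{-2\omg+1+\eta_{0}}\eps_{0}^{2})$ — harmless since $2\omg-1-\eta_{0} \geq 3-\eta_{0}$ once $\omg > 2$. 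The interior region $r \leq 20$ contributes only lower-order terms via the red-shift estimate (Lemma~\ref{lem:red-shift}) and the improved $\rd_{v}\phidf$ bound \eqref{eq:iled-small-r} already built into $E[\phidf]$. Then the mean-value/pigeonhole argument along the dyadic sequence $\tau_{j} = 2^{j}$ converts the $p=2$ bulk term (which bounds the $p=1$ flux) into decay: the $p=1$ flux through $\Gmm_{\tau}$ decays like $\tau^{-1}$, the $p=0$ energy flux then decays like $\tau^{-2}$, and feeding the $p=1$ estimate back with the improved $\tau^{-2}$ decay of the flux at the previous dyadic step gives the extra power, yielding $\tau^{-3+\eta_{0}}$ (the $\eta_{0}$ loss coming from the $r^{\eta_{0}}$ slack we must leave in the weights, exactly as in the $\calE_{far}$ estimates). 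Finally, I would combine the decay of $\int_{\Gmm_{\tau}^{(out)}}\dvr^{-1}(\rd_{v}(r\phidf))^{2}$ with Lemma~\ref{lem:hardy-type} (with $\alp = 0$, on the portion of $\Gmm_{\tau}^{(out)}$ where $r \geq R_{0}$) and \eqref{eq:energy-phidf-pt} to convert back to $\int_{\Gmm_{\tau}^{(out)}}\dvr^{-1}(\rd_{v}\phidf)^{2} r^{2}$; the $\Gmm_{\tau}^{(in)}$ piece is obtained by integrating the energy identity of Lemma~\ref{lem:consv-en} (the $T$-estimate) and the red-shift identity over $\calD(\tau,\infty)\cap\{r\leq R_{0}\}$ and using the already-established decay of the outgoing flux together with the decaying spacetime error.

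\textbf{Main obstacle.} The routine parts are the vector-field computations (all done in Section~\ref{subsec:en-id}) and the dyadic pigeonholing. The delicate point is bookkeeping the inhomogeneity $F$ and the nonlinear self-interaction terms (the cubic and quartic terms in $\phidf$ appearing in the $T$- and Morawetz identities) \emph{at each level of the $p$-hierarchy} without losing the gain: one must verify that multiplying by the $r^{p}$ weight with $p$ up to $2$ does not overwhelm the $r$-decay of $F$ (it does not, because $F$ already has an $r^{-1}$ and the nonlinear terms are controlled by the interaction Morawetz term \eqref{eq:int-morawetz} and $\dlt$-smallness, as in Lemma~\ref{lem:en-err-nonlin}), and that the $\log \Lmb$ factors in \eqref{eq:weak-st:rdf}--\eqref{eq:weak-st:durdf} are absorbed into the constant $C$ (which is now permitted to depend on $\Lmb$). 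The only genuinely new estimate is the $p=2$ boundary-flux-to-bulk conversion in the presence of the corner term $r\phidf^{2}(\tau_{1},v_{f})$, which I would control by applying Lemma~\ref{lem:hardy-opt} on $\Gmm_{\tau_{1}}^{(out)}$ and $\Gmm_{\tau_{1}}^{(in)}$ exactly as in Step~2 of the proof of Lemma~\ref{lem:en-ctrl}.
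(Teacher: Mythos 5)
Your overall strategy is the same as the paper's: run a Dafermos--Rodnianski $r^{p}$-weighted hierarchy for (an $r$-scaled version of) $\phidf$ in the future-normalized coordinates, use Theorem~\ref{thm:weak-st} for the bottom rung, treat the inhomogeneity $F$ and the nonlinear self-interaction terms as lower order using the difference bounds \eqref{eq:weak-st:rdf}--\eqref{eq:weak-st:durdf} and the $\dlt$-smallness of $\rd\phibg$, and pigeonhole. The paper works with $\psidf = r\phi - \rbg\phibg$ rather than $r\phidf$ directly (they differ by $\rdf \phibg$, which Lemma~\ref{lem:psidf-phidf} shows is negligible), multiplies by $\frac{r^p}{\dvr}\rd_v\psidf$ (Lemma~\ref{lem:en-id-rp}), and controls the inhomogeneity $G$ in the wave equation for $\psidf$ via Lemma~\ref{lem:en-rp-G} — all essentially what you describe.

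However, there is a genuine gap in your choice of $p$-range and in the final step. You propose running only $p=1$ and $p=2$. The standard two-level hierarchy $p\in\{1,2\}$ yields: $E_2\lesssim\eps_0^2$ uniformly, then $E_1(\tau)\lesssim\tau^{-1}\eps_0^2$ by pigeonholing the $p=2$ bulk term, then $E_0(\tau)\lesssim\tau^{-2}\eps_0^2$ by pigeonholing the $p=1$ bulk. That is only $\tau^{-2}$ decay of the nondegenerate energy, not the asserted $\tau^{-3+\eta_0}$. Your phrase ``feeding the $p=1$ estimate back with the improved $\tau^{-2}$ decay gives the extra power'' does not close: the $p=1$ estimate outputs $\int E_0 \lesssim E_1(\tau_1)\lesssim\tau_1^{-1}$, and pigeonholing this again returns the same $\tau^{-2}$. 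To get the third power you must take $p$ up to nearly $3$ — the paper uses $p\in\{3-\eta_0,\,2-\eta_0,\,1-\eta_0\}$, the cap at $3-\eta_0$ being forced by the degenerating coefficient $(3-p)$ in Lemma~\ref{lem:en-id-rp} — and additionally you need a final interpolation step: the paper observes that $E_0(\tau)\leq(1+\tau_1^{\eta_0})E_{-\eta_0}(\tau)+\tau_1^{-1+\eta_0}E_{1-\eta_0}(\tau)$, integrates this against the bulk bounds $\tau_1^{2}\int E_{-\eta_0}+\tau_1\int E_{1-\eta_0}\lesssim\eps_0^2$, and pigeonholes once more to obtain $\dot{E}[\phidf](\tau)\lesssim\tau^{-3+\eta_0}\eps_0^2$. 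This interpolation between $E_{-\eta_0}$ and $E_{1-\eta_0}$ is absent from your plan, and without it (and without extending $p$ to near $3$, which also requires the full $r^{\omg}$ pointwise control of $\dvr^{-1}\rd_v(r\phidf)$ on $C_{out}$ from \eqref{eq:gauge:dvrphi}, not merely the $\log(1+r)$-weighted integrated bound in \eqref{eq:eps-def}) the argument produces the wrong decay rate.
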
 
As a quick corollary of Proposition~\ref{prop:en-decay}, Hardy's inequality (Lemma~\ref{lem:hardy-opt} with $\alp = 0$) and the (qualitative) fact that $r \phidf^{2}(u, v) \to 0$ as $v \to \infty$, we obtain the following pointwise decay estimate for $\phidf$:
\begin{corollary} \label{cor:phidf-decay-pre}
The following estimate holds for $(u, v) \in \set{u \geq 1, \ r(u, v) \geq R_{0}}$:
\begin{equation} \label{eq:phidf-decay-pre}
	r^{\frac{1}{2}} \abs{\phidf} (u, v) \leq C u^{-\frac{3}{2}+\frac{\eta_{0}}{2}} \eps_{0} .
\end{equation}
\end{corollary}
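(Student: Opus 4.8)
The plan is to convert the $L^2$ flux decay on the foliation $\Gmm_\tau$ (Proposition~\ref{prop:en-decay}) into a pointwise statement via a standard fundamental-theorem-of-calculus argument along outgoing null curves, combined with the $L^2$-Hardy inequality (Lemma~\ref{lem:hardy-opt} with $\alp = 0$) to handle the zeroth-order term. The key point is that the weighted energy on $\Gmm_\tau^{(out)}$ controls $\int \frac{1}{\dvr}(\rd_v \phidf)^2 r^2 \ud v$, which is precisely the integral appearing in the Hardy inequality, so we get control of $r \phidf^2$ up to a boundary term at future null infinity.

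First I would fix $(u, v)$ with $u \geq 1$ and $r(u, v) \geq R_0$, and apply \eqref{eq:hardy-v} (Lemma~\ref{lem:hardy-opt}) with $\alp = 0$ on the outgoing curve $C_u$ from $v' = v$ to $v' = v_f$ (or to infinity, after passing to the limit $t_B \to \infty$ established in Theorem~\ref{thm:weak-st}). This yields
\begin{equation*}
	\frac{1}{2} r^{1} \phidf^2(u, v) \leq \int_{v}^{\infty} \frac{r^2}{\dvr}(\rd_v \phidf)^2(u, v') \, \ud v' + \frac{1}{2} \limsup_{v' \to \infty} r \phidf^2(u, v'),
\end{equation*}
where the last term vanishes because both $r \phi$ and $\rbg \phibg$ converge to finite radiation fields and in fact $r\phidf^2 \to 0$ along $C_u$ (a qualitative fact following from the uniform boundedness of $r\phi$, $\rbg\phibg$ in Section~\ref{sec:bg} together with the monotonicity/decay already established — one can also simply invoke that the limit exists and is finite, then use the integrability of $\frac{r^2}{\dvr}(\rd_v\phidf)^2$ in $v$ to upgrade to decay). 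Since $C_u$ lies to the future of $\Gmm_u^{(out)}$ (here $\tau = u$ because $(u, v)$ has $r \geq R_0 > R_0$... more precisely $\Gmm_u^{(out)} = \{u\} \times [v_{R_0}(u), \infty)$ and $v \geq v_{R_0}(u)$), the integral $\int_v^\infty \frac{r^2}{\dvr}(\rd_v \phidf)^2(u, v') \ud v'$ is bounded by the $\Gmm_u^{(out)}$-flux, hence by $\int_{\Gmm_u^{(out)}} \frac{1}{\dvr}(\rd_v \phidf)^2 r^2 \ud v$.

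Then I would invoke Proposition~\ref{prop:en-decay} with $\tau = u$, which bounds this flux by $C u^{-3 + \eta_0} \eps_0^2$. Combining, we obtain $r \phidf^2(u, v) \leq C u^{-3+\eta_0} \eps_0^2$, i.e., $r^{1/2}|\phidf|(u, v) \leq C u^{-3/2 + \eta_0/2} \eps_0$, which is \eqref{eq:phidf-decay-pre}. The main (and essentially only) obstacle is the rigorous justification that the boundary term $\limsup_{v' \to \infty} r \phidf^2(u, v')$ vanishes; this is where one uses that $r\phi$ and $\rbg\phibg$ have well-defined finite limits along $C_u$ (given by the radiation fields $\Phi(u)$ and $\Phibg(u)$ in the respective future-normalized coordinates), so $r\phidf$ has a finite limit, and then the finiteness of $\int_v^\infty \frac{r^2}{\dvr}(\rd_v\phidf)^2 \ud v < \infty$ (from Proposition~\ref{prop:en-decay}, noting $\dvr \sim 1$ for large $r$ by \eqref{eq:weak-st:dvr-large-r}) forces that limit to be zero — otherwise $(\rd_v\phidf)^2$ would not be integrable against the $r^2 \gtrsim r^2$ weight near infinity since $|\rd_v(r\phidf)|$ would be bounded below. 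Everything else is a direct citation of the Hardy inequality and the energy decay proposition, so this should be a short argument.
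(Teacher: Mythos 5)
Your proposal is correct and follows the same route as the paper, which states explicitly that Corollary~\ref{cor:phidf-decay-pre} is ``a quick corollary of Proposition~\ref{prop:en-decay}, Hardy's inequality (Lemma~\ref{lem:hardy-opt} with $\alp=0$) and the (qualitative) fact that $r\phidf^2(u,v)\to 0$ as $v\to\infty$.'' One small note: your parenthetical alternative for the boundary term is unnecessary and a bit garbled — once $r\phidf$ is known to have a finite limit along $C_u$ it follows immediately that $\phidf\to 0$ (since $r\to\infty$), hence $r\phidf^2 = (r\phidf)\cdot\phidf\to 0$ with no appeal to integrability.
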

Note that the decay rate in $u$ of the RHS is \emph{integrable}. In the next subsection, we propagate this decay rate to the radiation field difference by integration along characteristics.

As in Section~\ref{subsec:extr-st}, we work with $\psi = r \phi$ and its difference $\psidf = r \phi - \rbg \phibg$. Recall that $\psidf$ solves
\begin{equation*}
	\rd_{u} \rd_{v} \psidf - \frac{\rd_{u} \rd_{v} r}{r} \psidf = G,
\end{equation*}
where
\begin{equation*}
	G = \left(\frac{\rd_{u} \rd_{v} r}{r} -\frac{\rd_{u} \rd_{v} \rbg}{\rbg} \right) \psibg.
\end{equation*}

We begin with some preliminary estimates, which essentially identifies $\psidf$ with $r \phidf$.
\begin{lemma} \label{lem:psidf-phidf}
The following bounds hold for $(u, v) \in \calD \cap \set{r \geq 20}$:
\begin{equation} \label{eq:psidf-phidf}
	\abs{\psidf - r \phidf} \leq C u^{-\omg+1}  \min \set{u^{-1 + \frac{\eta_{0}}{2}}, r^{-1+\frac{\eta_{0}}{2}}} \dlt \eps_{0} ,
\end{equation}
\begin{equation} \label{eq:psidf-phidf-dv}
	\abs{\rd_{v} \psidf - \rd_{v} (r \phidf)} \leq C u^{-\omg+1} r^{-1} \min \set{u^{-1 + \frac{\eta_{0}}{2}}, r^{-1+\frac{\eta_{0}}{2}}} \dlt \eps_{0} .
\end{equation}
Moreover, for $(u, v) \in \calD$, we have 
\begin{equation} \label{eq:psidf-phidf-du}
\Abs{\frac{1}{(-\dur)}\rd_{u} \psidf - \frac{1}{(-\dur)} \rd_{u} (r \phidf)} \leq 
\left\{
\begin{array}{ll}
C v^{-\omg} e^{\frac{1}{2} c_{(\dur)} (u - v - C_{\gmm_{20}}) } \dlt \eps_{0} & \hbox{ in } \calD \cap \set{r \leq 20}, \\
C u^{-\omg} r^{-1+\frac{\eta_{0}}{2}} \dlt \eps_{0} & \hbox{ in } \calD \cap \set{r \geq 20}.
\end{array}
\right. 
\end{equation}
\end{lemma}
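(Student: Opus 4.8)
\textbf{Proof proposal for Lemma~\ref{lem:psidf-phidf}.} The key algebraic identity is $\psidf - r\phidf = (r-\rbg)\phibg = \rdf \, \phibg$, which reduces all three bounds to controlling $\rdf$ (together with $\dvrdf$, $\durdf$) against the decay bounds on $\phibg$ from Definition~\ref{def:dlt-adm}. The plan is therefore: first establish \eqref{eq:psidf-phidf} by combining the pointwise estimate \eqref{eq:en-btstrp:rdf} (equivalently \eqref{eq:weak-st:rdf}) for $\abs{\rdf} \leq C\max\set{\log\Lmb, \log r}\eps$ with the bound on $\abs{\phibg}$; here one must use that in the region $\set{r \geq 20}$ the relations \eqref{eq:weak-st:v-u-r} and \eqref{eq:C-gmm-Lmb} give $r \sim v - u + \Lmb$, and that (after passing to the limit $t_{B}\to\infty$) $\eps \leq C\eps_{0}$. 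The $\phibg$ decay in the range $\rbg \geq 10$ from \eqref{eq:dlt-adm:phi} reads $\abs{\phibg} \leq \brk{\ubg}^{-\omg+1}_{\rbg_{\EH}}\min\set{\brk{\ubg}^{-1}_{\rbg_{\EH}}, (\rbg/\rbg_{\EH})^{-1}}\dlt$, so with $\rbg_{\EH}=1$ and $r\sim\rbg$ we get $\abs{\psidf - r\phidf} = \abs{\rdf}\abs{\phibg} \leq C\log r\, \eps_{0}\cdot u^{-\omg+1}\min\set{u^{-1}, r^{-1}}\dlt$, and the $\log r$ is absorbed into a harmless loss $r^{\eta_{0}/2}$ (or $u^{\eta_{0}/2}$), yielding \eqref{eq:psidf-phidf}.

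For \eqref{eq:psidf-phidf-dv}, I would differentiate the identity: $\rd_{v}\psidf - \rd_{v}(r\phidf) = \dvrdf\,\phibg + \rdf\,\rd_{v}\phibg$. The first term is estimated using \eqref{eq:en-btstrp:dvrdf} ($\abs{\dvrdf}\leq C\log\Lmb\,\eps$) and the $\abs{\phibg}$ bound above; the second term uses \eqref{eq:en-btstrp:rdf} for $\abs{\rdf}$ and the bound on $\abs{\rd_{v}\phibg}$ — more precisely, since we want a factor $r^{-1}$, it is cleanest to write $\rd_{v}\phibg = \dvrbg\cdot\frac{1-\mubg}{\dvrbg}\rd_{v}\phibg$ and use \eqref{eq:dlt-adm:dvphi} in the range $\rbg \geq 10$, which supplies the $(\rbg/\rbg_{\EH})^{-1}$ factor, together with $\dvrbg \leq C$ from \eqref{eq:kppbg-bnd}, \eqref{eq:mubg-bnd}. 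Both contributions then carry the needed $u^{-\omg+1}r^{-1}\min\set{u^{-1+\eta_{0}/2}, r^{-1+\eta_{0}/2}}\dlt\eps_{0}$ weight, with the logarithmic factors again absorbed.

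For \eqref{eq:psidf-phidf-du}, which is valid in all of $\calD$ (not just $r\geq 20$), I would again differentiate: $\frac{1}{(-\dur)}\rd_{u}\psidf - \frac{1}{(-\dur)}\rd_{u}(r\phidf) = \frac{\durdf}{(-\dur)}\phibg + \rdf\,\frac{1}{(-\dur)}\rd_{u}\phibg$. In the region $\set{r\leq 20}$, the red-shift-type bounds are essential: use \eqref{eq:en-btstrp:durdf} (the small-$r$ case, giving $\abs{\durdf/(-\dur)} \leq C\log\Lmb\, e^{-\frac12 c_{(\dur)}(u-v-C_{\gmm_{20}})}\eps / (-\dur)$ — but one must combine this with \eqref{eq:weak-st:dur-small-r} to see $\abs{\durdf/(-\dur)}\leq C\log\Lmb(1+\abs{u-v-C_{\gmm_{20}}})\eps$, then absorb the polynomial into a slightly worse exponential), and \eqref{eq:en-btstrp:rdf} for $\rdf$, against the $\phibg$ and $\frac{1}{\rd_{U}\rbg}\rd_{U}\phibg$ bounds \eqref{eq:dlt-adm:phi}, \eqref{eq:dlt-adm:duphi} in the range $\rbg\leq 30\rbg_{\EH}$ (which give the $\brk{\vbg}^{-\omg}_{\rbg_{\EH}}\dlt = v^{-\omg}\dlt$ decay, using that $\vbg\sim v$ near $\EH$); this produces the stated $v^{-\omg}e^{\frac12 c_{(\dur)}(u-v-C_{\gmm_{20}})}\dlt\eps_{0}$ bound. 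In the region $\set{r\geq 20}$, use instead the large-$r$ cases of \eqref{eq:en-btstrp:durdf} ($\abs{\durdf}\leq C\log\Lmb\, r^{-1}\eps$), \eqref{eq:weak-st:dur-large-r} ($-\dur \geq 1/4$), and the $\rbg\geq 10$ decay bounds on $\phibg$, $\frac{1}{\rd_{U}\rbg}\rd_{U}\phibg$, to get the $u^{-\omg}r^{-1+\eta_{0}/2}\dlt\eps_{0}$ bound.

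The main obstacle is bookkeeping rather than conceptual: one must carefully match the various $\brk{\ubg}_{\rbg_{\EH}}$, $\brk{\vbg}_{\rbg_{\EH}}$ weights in Definition~\ref{def:dlt-adm} (stated in the background future-normalized coordinates) with $u$, $v$, $r$ in the perturbed future-normalized coordinates, using the comparisons $r\sim\rbg$ (from \eqref{eq:weak-st:rdf}), $\vbg\sim v$, $\ubg\sim u$ near $\EH$ and near $\NI$ respectively, and the relation $r\sim v-u+\Lmb$ in $\set{r\geq 20}$; and one must verify in each region that the residual logarithmic losses in $\Lmb$ and $r$ are dominated by the $\eta_{0}/2$ room that has been built in. There is no genuine difficulty once the right combination of the previously established bounds is assembled.
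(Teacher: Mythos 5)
Your overall structure --- the identity $\psidf - r\phidf = \rdf\,\phibg$, the product-rule expansions $\rd_{v}(\rdf\phibg)=\dvrdf\phibg+\rdf\rd_{v}\phibg$ and $\frac{1}{-\dur}\rd_{u}(\rdf\phibg)=\frac{\durdf}{-\dur}\phibg+\rdf\frac{1}{-\dur}\rd_{u}\phibg$, and the combination of the $(\omg,\dlt,\Lmb)$-admissibility decay of $\phibg,\rd_{u}\phibg,\rd_{v}\phibg$ with the weak stability bounds for $\rdf,\dvrdf,\durdf$ and the crude absorption $\log r\leq Cr^{\eta_{0}/2}$ --- is exactly what the paper's (very terse) proof intends, and your verification of \eqref{eq:psidf-phidf} and of both regions in \eqref{eq:psidf-phidf-du} is fine.

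However, your bound for the term $\dvrdf\,\phibg$ in \eqref{eq:psidf-phidf-dv} does not close. You pair $\abs{\dvrdf}\leq C\log\Lmb\,\eps$ (which has no decay in $r$) with $\abs{\phibg}\leq C\,u^{-\omg+1}\min\set{u^{-1},r^{-1}}\dlt$ and then assert that this product ``carries the needed $u^{-\omg+1}r^{-1}\min\set{u^{-1+\eta_{0}/2},r^{-1+\eta_{0}/2}}\dlt\eps_{0}$ weight.'' It does not: the required inequality $\min\set{u^{-1},r^{-1}}\leq C\,r^{-1}\min\set{u^{-1+\eta_{0}/2},r^{-1+\eta_{0}/2}}$ is equivalent to $r\leq C\,u^{\eta_{0}/2}$ when $r\leq u$ and to $r\leq C\,r^{\eta_{0}/2}$ when $r\geq u$, both of which fail for $r$ large; e.g.\ at $u=r\to\infty$ the left side exceeds the right by the unbounded factor $r^{1-\eta_{0}/2}$. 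This region is reachable inside $\calD\cap\set{r\geq 20}$ (since $r\sim v-u+\Lmb$ there by \eqref{eq:weak-st:v-u-r}, take $v\approx 2u$ with $u$ large). The contrast with \eqref{eq:psidf-phidf-du} is the crux: there the term $\frac{\durdf}{-\dur}\phibg$ is harmless because $\durdf$ enjoys a genuine pointwise $r^{-1}$ decay in $\set{r\geq 20}$ (see \eqref{eq:weak-st:durdf}, tied to the $\NI$-normalization $\gmm=\gmmbg=-1$), whereas $\dvrdf$ does not, since $\log\kpp-\log\kppbg$ is only known to decay towards $\EH$, not towards $\NI$ --- this is precisely the ``anomalous loss of $r$'' flagged in Remark~\ref{rem:kpp-anomaly}. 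Your second term $\rdf\,\rd_{v}\phibg$ is correct (the $r^{-1}$ there comes from $\rd_{v}\phibg$ via \eqref{eq:dlt-adm:dvphi}), but the first term needs either a sharper pointwise estimate for $\dvrdf$ than the one you cite, or a different route to \eqref{eq:psidf-phidf-dv} (for instance through the good/bad splitting $\dvrdf=\dvrdf_{g}+\dvrdf_{b}$ of Lemma~\ref{lem:df-large-r} and a formulation that does not require a pointwise bound on the bad part). As written, your claim that ``both contributions carry the needed weight'' is unsupported for $\dvrdf\phibg$.
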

\begin{proof}
Note that $\psidf - r \phidf = \rdf \phibg$. Therefore, \eqref{eq:psidf-phidf}, \eqref{eq:psidf-phidf-dv} and \eqref{eq:psidf-phidf-du} are straightforwad consequences of \eqref{eq:dlt-adm:phi}, \eqref{eq:dlt-adm:duphi}, \eqref{eq:dlt-adm:dvphi}, \eqref{eq:gauge:eps}, \eqref{eq:weak-st:rdf}, \eqref{eq:weak-st:dvrdf} and \eqref{eq:weak-st:durdf}, combined with the crude bound $\log r \leq C r^{\eta_{0}/2}$. We omit the straightforward details. \qedhere
\end{proof}

To prove Proposition~\ref{prop:en-decay}, we adapt the method of $r^{p}$-weighted energy estimates of Dafermos--Rodnianski \cite{DRNM} to our problem. We rely on the following energy identity concerning the multiplier $\frac{r^{p}}{\dvr} \rd_{v}$:
\begin{lemma} \label{lem:en-id-rp}
For any $p \in \bbR$, we have
\begin{equation} \label{eq:en-id-rp}
\begin{aligned}
& \hskip-2em
	\rd_{u} \left( \frac{1}{2} \frac{r^{p}}{\dvr} (\rd_{v} \psidf)^{2} \right) 
	+ \rd_{v} \left(\frac{1}{2} \frac{(-\rd_{u} \rd_{v} r)}{r} \frac{r^{p}}{\dvr} \psidf^{2} \right) \\
& \hskip-2em
	+ \frac{p}{2} r^{p-1} \frac{(-\dur)}{\dvr} (\rd_{v} \psidf)^{2} 
	+  \frac{3-p}{2} r^{p-4} r^{2} (-\rd_{u} \rd_{v} r)  \psidf^{2} \\
= &	G \frac{r^{p}}{\dvr} \rd_{v} \psidf 
	+ \frac{1}{2} r^{p-2}  r^{2} \frac{(-\rd_{u} \rd_{v} r)}{\dvr^{2}} (\rd_{v} \psidf)^{2} 
	+ \frac{1}{2} r^{p-3} \rd_{v} \left(r^{2} \frac{(-\rd_{u} \rd_{v} r)}{\dvr} \right) \psidf^{2}.
\end{aligned}
\end{equation}
\end{lemma}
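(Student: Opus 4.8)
The plan is to prove \eqref{eq:en-id-rp} by the same vector-field-multiplier computation used for Lemmas~\ref{lem:consv-en}--\ref{lem:DR}, and in particular closely following the proof of Lemma~\ref{lem:DR} for the multiplier $\frac{g(r)}{\dvr} \rd_{v}$. First I would multiply the wave equation for $\psidf$, namely $\rd_{u} \rd_{v} \psidf - \frac{\rd_{u} \rd_{v} r}{r} \psidf = G$, by $\frac{r^{p}}{\dvr} \rd_{v} \psidf$, and then reorganize both sides into exact $u$- and $v$-derivatives plus bulk terms. The resulting identity is purely algebraic and valid pointwise wherever $\psidf \in C^{2}$ and $\dvr \neq 0$; the positivity of the bulk terms $\frac{p}{2} r^{p-1} \frac{(-\dur)}{\dvr} (\rd_{v} \psidf)^{2}$ and $\frac{3-p}{2} r^{p-4} r^{2} (-\rd_{u} \rd_{v} r) \psidf^{2}$, as well as the favorable sign of the $\psidf^{2}$ flux term (which hold in the region $\set{r \geq 20}$ where the identity will be applied, since there $\rd_{u} \rd_{v} r < 0$ by \eqref{eq:EMSF-r-phi-m} together with the bounds of Lemma~\ref{lem:geom-prelim}), is \emph{not} needed for the identity itself and enters only in its later use.

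For the principal term I would write
\[
\rd_{u} \rd_{v} \psidf \cdot \frac{r^{p}}{\dvr} \rd_{v} \psidf
= \rd_{u} \left( \frac{1}{2} \frac{r^{p}}{\dvr} (\rd_{v} \psidf)^{2} \right) - \frac{1}{2} \rd_{u} \left( \frac{r^{p}}{\dvr} \right) (\rd_{v} \psidf)^{2},
\]
and expand $\rd_{u} \big( \frac{r^{p}}{\dvr} \big) = p r^{p-1} \frac{\dur}{\dvr} - r^{p} \frac{\rd_{u} \rd_{v} r}{\dvr^{2}}$, using $\rd_{u} \dvr = \rd_{u} \rd_{v} r$. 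Since $\dur < 0$, this produces the first exact-derivative term of \eqref{eq:en-id-rp}, the bulk term $\frac{p}{2} r^{p-1} \frac{(-\dur)}{\dvr} (\rd_{v} \psidf)^{2}$ on the left, and the term $\frac{1}{2} r^{p-2} r^{2} \frac{(-\rd_{u} \rd_{v} r)}{\dvr^{2}} (\rd_{v} \psidf)^{2}$, which I move to the right-hand side. The forcing term simply contributes $G \frac{r^{p}}{\dvr} \rd_{v} \psidf$ to the right-hand side.

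For the zeroth-order term I would write $-\frac{\rd_{u} \rd_{v} r}{r} \psidf \cdot \frac{r^{p}}{\dvr} \rd_{v} \psidf = - \frac{1}{2} \frac{\rd_{u} \rd_{v} r}{r} \frac{r^{p}}{\dvr} \rd_{v}(\psidf^{2})$ and integrate by parts in $v$, obtaining $\rd_{v} \big( \frac{1}{2} \frac{(-\rd_{u} \rd_{v} r)}{r} \frac{r^{p}}{\dvr} \psidf^{2} \big) + \frac{1}{2} \rd_{v}\big( r^{p-1} \frac{\rd_{u} \rd_{v} r}{\dvr} \big) \psidf^{2}$. Expanding the last derivative with $\rd_{v} r = \dvr$, i.e. $\rd_{v}\big( r^{p-1} \frac{\rd_{u} \rd_{v} r}{\dvr} \big) = (p-1) r^{p-2} \rd_{u} \rd_{v} r + r^{p-1} \rd_{v}\big( \frac{\rd_{u} \rd_{v} r}{\dvr} \big)$, and matching against the corresponding expansion of $\frac{1}{2} r^{p-3} \rd_{v}\big( r^{2} \frac{(-\rd_{u} \rd_{v} r)}{\dvr} \big) \psidf^{2} = r^{p-2} (-\rd_{u}\rd_{v} r) \psidf^{2} + \frac{1}{2} r^{p-1} \rd_{v}\big( \frac{-\rd_{u}\rd_{v} r}{\dvr}\big)\psidf^{2}$, one sees that the difference is exactly $\frac{3-p}{2} r^{p-4} r^{2} (-\rd_{u}\rd_{v} r) \psidf^{2}$, which stays on the left, while $\frac{1}{2} r^{p-3} \rd_{v}\big( r^{2} \frac{(-\rd_{u} \rd_{v} r)}{\dvr} \big) \psidf^{2}$ goes to the right. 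Collecting all terms yields \eqref{eq:en-id-rp}. The only non-routine aspect is the sign bookkeeping of the $\psidf^{2}$ terms; there is no genuine obstacle, as this is an algebraic identity of exactly the same type as the energy identities already established in Section~\ref{subsec:en-id}, and it is the exact-derivative structure rather than any delicate estimate that matters here, in the spirit of the $r^{p}$-weighted method of Dafermos--Rodnianski \cite{DRNM}.
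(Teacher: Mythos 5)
Your computation is correct and follows the paper's own argument exactly: both multiply the wave equation for $\psidf$ by $\frac{r^{p}}{\dvr}\rd_{v}\psidf$, integrate by parts once in $u$ and once in $v$, and then reorganize the $\psidf^{2}$ bulk term via the identity $\frac{1}{2}\rd_{v}\big(r^{p-1}\frac{\rd_{u}\rd_{v}r}{\dvr}\big) = \frac{3-p}{2}r^{p-2}(-\rd_{u}\rd_{v}r) - \frac{1}{2}r^{p-3}\rd_{v}\big(r^{2}\frac{(-\rd_{u}\rd_{v}r)}{\dvr}\big)$, which you verify explicitly. The paper merely presents the same steps more compactly as a two-line displayed computation.
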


\begin{proof}
We compute
\begin{align*}
\left( \rd_{u} \rd_{v} \psidf - \frac{\rd_{u} \rd_{v} r}{r} \psidf \right) \frac{r^{p}}{\dvr} \rd_{v} \psidf
= & \rd_{u} \left( \frac{1}{2} \frac{r^{p}}{\dvr} (\rd_{v} \psidf)^{2} \right) 
	+ \rd_{v} \left(\frac{1}{2} \frac{(-\rd_{u} \rd_{v} r)}{r} \frac{r^{p}}{\dvr} \psidf^{2} \right) \\
& - \frac{1}{2} \rd_{u} \left(\frac{r^{p}}{\dvr} \right) (\rd_{v} \psidf)^{2} 
	- \frac{1}{2} \rd_{v} \left(r^{p-1} \frac{(-\rd_{u} \rd_{v} r)}{\dvr} \right) \psidf^{2} \\
= & \rd_{u} \left( \frac{1}{2} \frac{r^{p}}{\dvr} (\rd_{v} \psidf)^{2} \right) 
	+ \rd_{v} \left(\frac{1}{2} \frac{(-\rd_{u} \rd_{v} r)}{r} \frac{r^{p}}{\dvr} \psidf^{2} \right) \\
& + \frac{p}{2} r^{p-1} \frac{(-\dur)}{\dvr} (\rd_{v} \psidf)^{2} 
 + \frac{1}{2} r^{p} \frac{\rd_{u} \rd_{v} r}{\dvr^{2}} (\rd_{v} \psidf)^{2} \\
& +  \frac{3-p}{2} r^{p-4} r^{2} (-\rd_{u} \rd_{v} r)  \psidf^{2} 
- \frac{1}{2} r^{p-3} \rd_{v} \left(r^{2} \frac{(-\rd_{u} \rd_{v} r)}{\dvr} \right) \psidf^{2}. \qedhere
\end{align*}
\end{proof}

In view of the terms on the LHS of \eqref{eq:en-id-rp}, we introduce the $r^{p}$-weighted energy flux on $\Gmm_{\tau}$ as follows:
\begin{equation}\label{eq:Ep-def}
	E_{p}[\psidf](\tau) = \int_{\Gmm^{(in)}_{\tau}} \frac{1}{(-\dur)} (\rd_{u} \psidf)^{2} \, \ud u
					+ \int_{\Gmm^{(out)}_{\tau}} \frac{r^{p}}{\dvr} (\rd_{v} \psidf)^{2} \, \ud v
					+ \sup_{\Gmm^{(in)}_{\tau}} r^{-1} \psidf^{2}.
\end{equation}
For any $R \geq R_{0}+1$, we also introduce the nondegenerate energy flux on $\Gmm_{\tau} \cap \set{r \leq R}$ (augmented with a zeroth order term) as follows:
\begin{equation}\label{eq:dotE-R-def}
	\dot{E}_{\leq R}[\phidf](\tau) = \int_{\Gmm^{(in)}_{\tau}} \frac{1}{(-\dur)} (\rd_{u} \phidf)^{2} r^{2} \, \ud u + \int_{\Gmm^{(out)}_{\tau}  \cap \set{r \leq R}} \frac{1}{\kpp} (\rd_{v} \phidf)^{2} r^{2} \, \ud v + \int_{\Gmm_{\tau}^{(out)} \cap \set{R-1 \leq r \leq R}} \dvr r \phidf^{2} \, \ud v.
\end{equation}
We write $\dot{E}[\phidf](\tau) = \lim_{R \to \infty} \dot{E}_{\leq R}[\phidf](\tau)$ for the nondegenerate energy flux on the whole leaf $\Gmm_{\tau}$, i.e.,
\begin{equation}\label{eq:dotE-def}
	\dot{E}[\phidf](\tau) = \int_{\Gmm^{(in)}_{\tau}} \frac{1}{(-\dur)} (\rd_{u} \phidf)^{2} r^{2} \, \ud u 
					+ \int_{\Gmm^{(out)}_{\tau}} \frac{1}{\kpp} (\rd_{v} \phidf)^{2} r^{2} \, \ud v .
\end{equation}

In the following lemma, we record some basic properties of $E_{p}[\psidf]$, $\dot{E}_{\leq R}[\phidf]$ and $\dot{E}[\phidf]$.
\begin{lemma} \label{lem:Ep}
Let $R \geq R_{0} + 1$ and $\tau \geq 1$. For $p \geq -1$, we have
\begin{equation} 	
	E_{p}[\psidf](\tau) 
	\leq  \int_{C_{\tau} \cap \set{r \geq R}} \frac{r^{p}}{\dvr} (\rd_{v} \psidf)^{2} \, \ud v + C_{R, p} \left( \dot{E}_{\leq R}[\phidf](\tau) + \tau^{-2 \omg + \eta_{0}} \dlt^{2} \eps_{0}^{2} \right). 				\label{eq:Ep-dotE}  
\end{equation}
For $p \geq 0$, we have
\begin{equation} 
	\dot{E}[\phidf](\tau) \leq C E_{p}[\psidf](\tau) + C \tau^{-2 \omg + \eta_{0}} \dlt^{2} \eps_{0}^{2}. \label{eq:dotE-Ep}
\end{equation}
Finally, let $p < 2 \omg - 1$. Then on $\Gmm_{1}$ we have
\begin{equation} \label{eq:Ep-eps0}
	E_{p}[\psidf](1) \leq C \eps_{0}^{2}.
\end{equation}
\end{lemma}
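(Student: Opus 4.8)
\textbf{Proof plan for Lemma~\ref{lem:Ep}.} The three assertions are essentially bookkeeping exercises that relate the $r^{p}$-weighted fluxes $E_{p}[\psidf]$, the nondegenerate fluxes $\dot{E}_{\leq R}[\phidf]$, $\dot{E}[\phidf]$, and the initial difference size $\eps_{0}$. The unifying tool throughout is Lemma~\ref{lem:psidf-phidf}, which identifies $\psidf$ with $r\phidf$ up to errors that are integrable and controlled by $\dlt\eps_{0}$ (with the right powers of $u$, $v$, and $r$), together with the one-dimensional Hardy-type inequalities of Lemma~\ref{lem:hardy-opt} (and the variant Lemma~\ref{lem:hardy-type}) and the geometric bounds collected in Theorem~\ref{thm:weak-st} (in particular \eqref{eq:weak-st:kpp-bnd}, \eqref{eq:weak-st:dvr-large-r}, \eqref{eq:weak-st:dur-large-r}, \eqref{eq:weak-st:mu-large-r}, \eqref{eq:weak-st:v-u-r}).

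First I would prove \eqref{eq:Ep-dotE}. On $\Gmm_{\tau}^{(in)}$ and on $\Gmm_{\tau}^{(out)}\cap\set{r\le R}$ the weight $r^{p}$ is comparable to $1$ (with constant depending on $R$ and $p$), and the two factors $\frac{1}{(-\dur)}$ and $\frac{1}{\dvr}$ are comparable to $\frac{1}{(-\dur)}$ and $\kpp^{-1}$ respectively by \eqref{eq:weak-st:kpp-bnd} and \eqref{eq:weak-st:mu-large-r}. Hence the $\Gmm_\tau^{(in)}$ and $\Gmm_\tau^{(out)}\cap\{r\le R\}$ parts of $E_{p}[\psidf](\tau)$ are bounded by $C_{R,p}$ times the corresponding parts of $\dot E_{\le R}[\phidf](\tau)$ after replacing $\rd_{u}\psidf$ by $\rd_u(r\phidf)$, $\rd_v\psidf$ by $\rd_v(r\phidf)$, and using \eqref{eq:psidf-phidf-du}, \eqref{eq:psidf-phidf-dv} to absorb the difference into the $\tau^{-2\omg+\eta_0}\dlt^{2}\eps_{0}^{2}$ error term (the $v$-decay of those error estimates on $\{r\le 20\}$ and the $u\sim v$ relation there give the power $\tau^{-2\omg+\eta_0}$; on the transitional region $20\le r\le R$ one uses \eqref{eq:weak-st:v-u-r} so that $u$, $v$, $r$ are all comparable to $\tau$). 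The $\Gmm_\tau^{(out)}\cap\{r\ge R\}$ part of the $r^{p}$-weighted flux is exactly the first term on the right of \eqref{eq:Ep-dotE}, up to again trading $\psidf$ for $r\phidf$ via \eqref{eq:psidf-phidf-dv} and bounding the resulting error (which for $r\ge R\ge R_0+1$ and $p\ge -1$ decays in $r$ fast enough to be integrable) by $\tau^{-2\omg+\eta_0}\dlt^2\eps_0^2$. Finally the supremum term $\sup_{\Gmm_\tau^{(in)}} r^{-1}\psidf^2$ is controlled by applying the Hardy inequality \eqref{eq:hardy-u} with $\alp=0$ on $\Gmm_{\tau}^{(in)}$, which bounds $r\phidf^2$ at the past endpoint (and everywhere on $\Gmm_\tau^{(in)}$, after also using \eqref{eq:hardy-v} with $\alp=0$ on $\Gmm_\tau^{(out)}\cap\{R-1\le r\le R\}$ to feed in the zeroth-order term of $\dot E_{\le R}[\phidf]$) in terms of $\int_{\Gmm_\tau^{(in)}}(-\dur)^{-1}(\rd_u\phidf)^2 r^2$ plus that zeroth-order term; converting back to $\psidf$ via \eqref{eq:psidf-phidf} gives the claim.

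Next, \eqref{eq:dotE-Ep} for $p\ge 0$: here we need to recover the nondegenerate $\rd_v\phidf$-flux on the \emph{large}-$r$ part of $\Gmm_\tau^{(out)}$ from the $r^{p}$-weighted $\rd_v\psidf$-flux with $p\ge 0\ge -1$; but since $r^{p}\ge 1$ for $r$ large (recall $R_0>20$), the weight only helps, and the conversion $\psidf\leftrightarrow r\phidf$ on $\{r\ge 20\}$ costs only the error term $\tau^{-2\omg+\eta_0}\dlt^2\eps_0^2$ by \eqref{eq:psidf-phidf-dv}. On $\Gmm_\tau^{(in)}$ and on $\Gmm_\tau^{(out)}\cap\{R_0\le r\le 2R_0\}$ (say) the two energies are comparable up to constants and the $\psidf\leftrightarrow r\phidf$ error. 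One subtlety is that $\dot E[\phidf]$ has no zeroth-order term while $E_p[\psidf]$ does not obviously control it; this is why the statement is an inequality in this direction only, and no Hardy step is needed.

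Finally, \eqref{eq:Ep-eps0}: on $\Gmm_1 = \Gmm_1^{(in)}\cup\Gmm_1^{(out)}$, the outgoing part $\Gmm_1^{(out)}$ lies on $C_{out}$ and the incoming part $\Gmm_1^{(in)}$ lies on $\uC_{in}$, so $E_p[\psidf](1)$ is bounded by the initial difference quantities. Concretely, $\int_{\Gmm_1^{(in)}}(-\dur)^{-1}(\rd_u\psidf)^2$ is handled by converting to $\rd_u(r\phidf)$ via \eqref{eq:psidf-phidf-du}, using \eqref{eq:weak-st:dur-large-r} on $C_{out}$ (recall $\Gmm_1^{(in)}$ starts at $r=R_0>20$) and the definition \eqref{eq:eps-def} of $\eps^2$ together with $\eps^2\le C\eps_0^2$ from \eqref{eq:gauge:eps}; alternatively one integrates along $\uC_{in}$ using the term $\int_{\uC_{in}}(-\dur)^{-1}(\rd_u(\phi-\phibg))^2 r^2$ in \eqref{eq:eps-def}. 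The $\int_{\Gmm_1^{(out)}} r^{p}\dvr^{-1}(\rd_v\psidf)^2$ term: for $r\le R_0$ bounded, use \eqref{eq:weak-st:dvr-large-r} and \eqref{eq:gauge:eps}; for $r$ large, the restriction $p<2\omg-1$ is exactly what makes $r^p\dvr^{-1}(\rd_v\psidf)^2$ integrable, since on $C_{out}$ one has $\dvr^{-1}(\rd_v(r\phidf))^2\sim \kpp^{-1}(\rd_v(r\phidf))^2$ decaying like $r^{-2\omg}$ by \eqref{eq:gauge:dvrphi}, giving an $r$-integrable integrand of size $r^{p-2\omg}$ times $r^{-?}$; more precisely one uses \eqref{eq:gauge:dvrphi} (which gives $\kpp^{-1}\rd_v(r\phi-\rbg\phibg)=O(r^{-\omg}\eps_0)$ pointwise on $C_{out}$) to see the integrand is $O(r^{p-2\omg}\eps_0^2)$ and $p-2\omg<-1$. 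The supremum term $\sup_{\Gmm_1^{(in)}} r^{-1}\psidf^2$ is $\le C\eps^2\le C\eps_0^2$ by \eqref{eq:energy-phidf-pt} (or by \eqref{eq:eps-def} and a Hardy inequality on $\uC_{in}$) and \eqref{eq:psidf-phidf}.

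The main obstacle, and the only place requiring genuine care rather than routine estimation, is keeping track of the error terms in the conversion $\psidf = r\phidf + \rdf\,\phibg$ across the three regimes $\{r\le 20\}$, $\{20\le r\le R_0\}$, $\{r\ge R_0\}$: near $\EH$ the difference estimates in Lemma~\ref{lem:psidf-phidf} carry exponential weights $e^{\frac12 c_{(\dur)}(u-v-C_{\gmm_{20}})}$ which must be matched against the exponential decay of $-\dur$ from \eqref{eq:weak-st:dur-small-r} when integrating, and the bookkeeping must produce exactly the advertised $\tau^{-2\omg+\eta_0}\dlt^2\eps_0^2$ (for \eqref{eq:Ep-dotE}, \eqref{eq:dotE-Ep}) and $\eps_0^2$ (for \eqref{eq:Ep-eps0}) without any spurious growth in $\Lmb$ — but since at this stage of the paper $\Lmb$-dependence is allowed in $C$, this last point is not actually a constraint here.
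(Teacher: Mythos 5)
Your overall route is the same as the paper's: on the incoming leaf and the bounded-$r$ part of the outgoing leaf, trade $\psidf$ for $r\phidf$ using Lemma~\ref{lem:psidf-phidf} (absorbing the trade-off into the $\tau^{-2\omg+\eta_0}\dlt^2\eps_0^2$ error via the exponential decay of $-\dur$ near $\EH$ and the comparability of $u,v,r,\tau$ provided by \eqref{eq:weak-st:v-u-r} and \eqref{eq:weak-st:v-u-R0}), leave the $\{r\ge R\}$ tail of the outgoing flux as is, and control the sup term via Hardy; then for \eqref{eq:Ep-eps0} use the initial difference bounds \eqref{eq:eps-def}, \eqref{eq:gauge:eps}, \eqref{eq:gauge:dvrphi} together with the condition $p<2\omg-1$.

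There is, however, one concrete error in your discussion of \eqref{eq:dotE-Ep}: you assert that ``no Hardy step is needed.'' That is not correct. The outgoing part of $\dot E[\phidf](\tau)$ is $\int_{\Gmm_\tau^{(out)}}\kpp^{-1}(\rd_v\phidf)^2 r^2\,\ud v$, while $E_p[\psidf](\tau)$ only gives you $\int_{\Gmm_\tau^{(out)}}\dvr^{-1}(\rd_v\psidf)^2 r^p\,\ud v$ (which, after the $\psidf\leftrightarrow r\phidf$ trade-off and $p\ge 0$, controls $\int_{\Gmm_\tau^{(out)}}\dvr^{-1}(\rd_v(r\phidf))^2$). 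Since $\rd_v(r\phidf)=\dvr\,\phidf + r\rd_v\phidf$, passing from $\int\dvr^{-1}(\rd_v(r\phidf))^2$ to $\int\dvr^{-1}\,r^2(\rd_v\phidf)^2$ is exactly the content of Lemma~\ref{lem:hardy-type} with $\alp=0$; this identity produces the boundary term $r\phidf^2$ at the past endpoint of $\Gmm_\tau^{(out)}$, and it is precisely to absorb this boundary term that the supremum $\sup_{\Gmm_\tau^{(in)}}r^{-1}\psidf^2$ is included in the definition \eqref{eq:Ep-def} of $E_p[\psidf]$. The paper's proof invokes Lemma~\ref{lem:hardy-type} explicitly for this step. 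Without it your chain of inequalities does not close, because the cross term and the $\dvr^2\phidf^2$ term in $(\rd_v(r\phidf))^2$ have no sign and must be dealt with. (A smaller, non-fatal slip: in your Step 1 the $\Gmm_\tau^{(out)}\cap\{r\ge R\}$ piece of the $r^p$-flux is \emph{literally} the first term on the right of \eqref{eq:Ep-dotE}, with $\psidf$ as written, so no trading $\psidf\leftrightarrow r\phidf$ is needed there at all.)
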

\begin{proof}
We prove \eqref{eq:Ep-dotE}, \eqref{eq:dotE-Ep} and \eqref{eq:Ep-eps0} in order.

\pfstep{Step~1: Proof of \eqref{eq:Ep-dotE}}
Since $C_{\tau} \cap \set{r \geq R_{0}} = \Gmm^{(out)}_{\tau}$, the proof of \eqref{eq:Ep-dotE} reduces to showing
\begin{align} 
	\int_{\Gmm^{(in)}_{\tau}} \frac{1}{(-\dur)} (\rd_{u} \psidf)^{2} \, \ud u 
\leq & C \left( \dot{E}_{\leq R}[\phidf](\tau) + \tau^{-2 \omg} \dlt^{2} \eps_{0}^{2} \right), \label{eq:Ep-dotE:pf1} \\
	\int_{\Gmm^{(out)}_{\tau} \cap \set{r \leq R}} \frac{r^{p}}{\dvr} (\rd_{v} \psidf)^{2} \, \ud v
\leq & C R^{p} \left( \dot{E}_{\leq R}[\phidf](\tau) + \tau^{-2 \omg + \eta_{0}} \dlt^{2} \eps_{0}^{2} \right), \label{eq:Ep-dotE:pf2}  \\
	\sup_{\Gmm^{(in)}_{\tau}} r^{-1} \psidf^{2}
	\leq & C \left( \dot{E}_{\leq R}[\phidf](\tau) + \tau^{-2 \omg + \eta_{0}} \dlt^{2} \eps_{0}^{2} \right). \label{eq:Ep-dotE:pf3} 
\end{align}
To establish \eqref{eq:Ep-dotE:pf1}--\eqref{eq:Ep-dotE:pf3}, we first note that
\begin{equation} \label{eq:Ep-dotE:pf4} 
	\int_{\Gmm^{(in)}_{\tau}} (-\dur) \phidf^{2} \, \ud u 
	+ \int_{\Gmm^{(out)}_{\tau}} \dvr \phidf^{2} \, \ud v
	+ \sup_{\Gmm^{(in)}_{\tau}} r \phidf^{2}
	\leq C \dot{E}_{\leq R}[\phidf](\tau).
\end{equation}
Indeed, by Hardy's inequality (Lemma~\ref{lem:hardy-opt} with $\alp = 0$), for any $R' \in [R-1, R]$ we have
\begin{equation*}
\frac{1}{4} \int_{\Gmm^{(out)}_{\tau} \cap \set{r \leq R'}} \dvr \phidf^{2} \, \ud v + \frac{1}{2} r \phidf^{2}(\tau, v_{R_{0}}(\tau))
\leq \int_{\Gmm^{(out)}_{\tau} \cap \set{r \leq R'}} \frac{1}{\dvr} (\rd_{v} \phidf)^{2} r^{2} \, \ud v + \frac{1}{2} r \phidf^{2}(\tau, v_{R'}(\tau)).
\end{equation*}
Averaging in $R' \in [R-1, R]$, the last term on the RHS may be bounded by $\int_{\Gmm_{\tau}^{(out)} \cap \set{R_{1} \leq r \leq R}} \dvr r \phidf^{2} \, \ud v$. Then applying Hardy's inequality on $\Gmm^{(in)}_{\tau}$ as well, \eqref{eq:Ep-dotE:pf4} follows.

To prove \eqref{eq:Ep-dotE:pf1}, we first estimate
\begin{align*}
	\frac{1}{(-\dur)}(\rd_{u} \psidf)^{2} 
	\leq & \frac{C}{(-\dur)} (\rd_{u} \phidf)^{2} r^{2} + C (-\dur) \phidf^{2} +  \frac{C}{(-\dur)} (\rd_{u} (\psidf - r \phidf))^{2}  .
\end{align*}
By \eqref{eq:weak-st:dur-small-r}, \eqref{eq:weak-st:dur-large-r}, \eqref{eq:weak-st:v-u} and \eqref{eq:psidf-phidf-du}, the integral of the last term over $\Gmm^{(in)}_{\tau}$ is bounded by $C v_{R_{0}}(\tau)^{-2 \omg} \dlt^{2} \eps_{0}^{2}$; by \eqref{eq:weak-st:v-u-R0}, we also have $v_{R_{0}}(\tau)^{- 2\omg} \leq C \tau^{-2 \omg}$, so the contribution of the last term is acceptable. On the other hand, the integral of the remaining two terms can be bounded by $\dot{E}_{\leq R}[\phidf](\tau)$ thanks to its definition and \eqref{eq:Ep-dotE:pf4}.

Next, to prove \eqref{eq:Ep-dotE:pf2}, we start with the pointwise bound
\begin{align*}
	\frac{r^{p}}{\dvr} (\rd_{v} \psidf)^{2}
	\leq \frac{C R^{p}}{\kpp} (\rd_{v} \phidf)^{2} r^{2} + C R^{p} \dvr \phidf^{2} + C R^{p} (\rd_{v} (\psidf - r \phidf))^{2} 
\end{align*}
in $\Gmm_{\tau}^{(out)} \cap \set{r \leq R}$. The integral of the last term on the RHS is bounded by $C R^{p} \tau^{-2\omg + \eta_{0}} \dlt^{2} \eps_{0}^{2}$ by \eqref{eq:weak-st:dvr-large-r} and \eqref{eq:psidf-phidf-dv}. On the other hand, the integral of the first two terms is bounded by $C R^{p} \dot{E}_{\leq R} [\phidf](\tau)$ by its definition and \eqref{eq:Ep-dotE:pf4}.

Finally, \eqref{eq:Ep-dotE:pf3} follows from \eqref{eq:psidf-phidf} and \eqref{eq:Ep-dotE:pf4}. This completes the proof of \eqref{eq:Ep-dotE}.

\pfstep{Step~2: Proof of \eqref{eq:dotE-Ep}}
Clearly, $E_{0}[\psidf](\tau) \leq E_{p}[\psidf](\tau)$ for any $p \geq 0$; therefore, we may just consider the case $p = 0$. By Lemma~\ref{lem:psidf-phidf}, \eqref{eq:weak-st:dvr-large-r}, \eqref{eq:weak-st:dur-small-r}, \eqref{eq:weak-st:dur-large-r} and \eqref{eq:weak-st:mu-large-r}, we have
\begin{equation*}
	\int_{\Gmm^{(in)}_{\tau}} \frac{1}{(-\dur)} (\rd_{u} (r \phidf))^{2} \, \ud u
	+ \int_{\Gmm^{(out)}_{\tau}} \frac{1}{\kpp} (\rd_{v} (r \phidf))^{2} \, \ud v
	+ \sup_{\Gmm^{(in)}_{\tau}} r \phidf^{2}
	\leq C E_{0}[\psidf](\tau) + C \tau^{-2 \omg + \eta_{0}} \dlt^{2} \eps_{0}^{2}. 
\end{equation*}
Then applying Lemma~\ref{lem:hardy-type} with $\alp = 0$, \eqref{eq:dotE-Ep} follows.

\pfstep{Step~3: Proof of \eqref{eq:Ep-eps0}}
We apply \eqref{eq:Ep-dotE} with $\tau = 1$ and $R = R_{0} + 1$. Note that $\dot{E}_{\leq R_{0} + 1}[\phidf](1) \leq C \eps^{2} \leq C \eps_{0}^{2}$ by \eqref{eq:eps-def}, \eqref{eq:gauge:eps} and \eqref{eq:energy-phidf-pt}. Moreover, $\int_{C_{\tau} \cap \set{r \geq R_{0}+1}} r^{p} \dvr^{-1} (\rd_{v} \psidf)^{2} \, \ud v \leq C \eps_{0}^{2}$ for $p < 2 \omg - 1$ by \eqref{eq:gauge:dvrphi}. \qedhere
\end{proof}

In order to prove $r^{p}$-weighted energy estimates (see Lemma~\ref{lem:en-rp} below), we need to bound the RHS of \eqref{eq:en-id-rp}. We formulate here a technical estimate, which treats the contribution of $G$:
\begin{lemma} \label{lem:en-rp-G}
For $p \leq 3$, we have
\begin{equation} \label{eq:en-rp-G}
\iint_{\calD(\tau_{1}, \tau_{2}) \cap \set{r \geq 20}} r^{p+1} \abs{G}^{2} \, \ud u \ud v \leq C \tau_{1}^{-2 \omg + \max \set{1, p}} \dlt^{2}\eps_{0}^{2}.
\end{equation}
\end{lemma}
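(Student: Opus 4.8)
\textbf{Proof proposal for Lemma~\ref{lem:en-rp-G}.}

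The plan is to first unpack the definition of $G$. Recalling that
\begin{equation*}
	G = \left(\frac{\rd_{u} \rd_{v} r}{r} - \frac{\rd_{u} \rd_{v} \rbg}{\rbg} \right) \psibg,
\end{equation*}
I would use the wave equation \eqref{eq:EMSF-r-phi-m} for $\rd_{u} \rd_{v} r$ (and its analogue for $\rbg$) to write $\frac{\rd_{u} \rd_{v} r}{r} = \frac{2 (\varpi - \frac{\e^{2}}{r})}{r^{3}} \frac{\dur \dvr}{1-\mu}$, and similarly with bars. Taking the difference, the factor $\frac{\rd_{u} \rd_{v} r}{r} - \frac{\rd_{u} \rd_{v} \rbg}{\rbg}$ splits into terms involving $\varpidf = \varpi - \varpibg$, $\e^{2} - \ebg^{2}$, $\rdf = r - \rbg$, $\durdf = \dur - \durbg$, $\dvrdf = \dvr - \dvrbg$, and $\mu - \mubg$; each of these is multiplied by factors of $\frac{1}{r^{3}}$ and the geometric quantities (which are comparable to their background counterparts, and bounded by universal constants for $r \geq 20$, thanks to \eqref{eq:weak-st:me-bnd}, \eqref{eq:weak-st:dvr-large-r}, \eqref{eq:weak-st:dur-large-r}, \eqref{eq:weak-st:mu-large-r} and \eqref{eq:weak-st:v-u-r}). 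Applying the difference bounds \eqref{eq:weak-st:mdf}, \eqref{eq:weak-st:rdf}, \eqref{eq:weak-st:dvrdf}, \eqref{eq:weak-st:durdf}, together with $\abs{\e - \ebg} \leq C \eps_{0}$ and $\eps \leq C\eps_{0}$ (from \eqref{eq:gauge:eps}), and noting that the worst of these is an $r^{-1}$-loss in $\durdf$ and the $\log r$ (hence $r^{\eta_{0}/2}$) loss in $\rdf$, yields a pointwise bound of the shape
\begin{equation*}
	\abs{G} \leq C r^{-3+\frac{\eta_{0}}{2}} \abs{\psibg} \, \eps_{0}.
\end{equation*}

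Next I would insert the decay bound for $\psibg = \rbg \phibg$ coming from \eqref{eq:dlt-adm:phi}: in the region $\rbg \geq 10 \rbg_{\EH}$ (recall $\rbg_{\EH} = 1$, and $r, \rbg$ are comparable by \eqref{eq:weak-st:rdf}), we have $\abs{\psibg} = \rbg \abs{\phibg} \leq C \brk{\ubg}^{-\omg+1} \dlt$ (using the second branch of \eqref{eq:dlt-adm:phi} with the crude bound $\min\set{\brk{\ubg}^{-1}, (\rbg/\rbg_{\EH})^{-1}} \leq 1$, but keeping the sharper $r^{-1}$ when convenient). Since $\ubg(U) = u$ in our notation after the coordinate identification, and the background future-normalized coordinate $u$ is comparable to the perturbed one on the region of integration, this gives $\abs{G} \leq C r^{-3+\frac{\eta_{0}}{2}} u^{-\omg+1} \dlt \eps_{0}$, and hence $\abs{G}^{2} \leq C r^{-6+\eta_{0}} u^{-2\omg+2} \dlt^{2} \eps_{0}^{2}$. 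Therefore
\begin{equation*}
	\iint_{\calD(\tau_{1}, \tau_{2}) \cap \set{r \geq 20}} r^{p+1} \abs{G}^{2} \, \ud u \ud v
	\leq C \dlt^{2} \eps_{0}^{2} \iint_{\calD(\tau_{1}, \tau_{2}) \cap \set{r \geq 20}} r^{p - 5 + \eta_{0}} u^{-2 \omg + 2} \, \ud u \ud v.
\end{equation*}
One then changes variables in the inner ($v$-)integral using $\dvr \sim 1$ (so $\ud v \sim \ud r$), and integrates $r^{p-5+\eta_{0}}$ over $r \geq \max\set{20, \text{the value of } r \text{ on } \Gmm_{\tau}^{(in)}}$. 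The key point is the interplay with the relation \eqref{eq:weak-st:v-u-r}: along $\Gmm_{\tau}$ the minimal value of $r$ is $\sim R_{0}$, a constant, so for $p < 4 - \eta_{0}$ the $r$-integral converges and is $O(1)$, giving $\int u^{-2\omg+2} \, \ud u \sim \tau_{1}^{-2\omg+3}$; for $p \geq 4 - \eta_{0}$ (only $p = 3$ is relevant if $\eta_{0}$ is small, but I want the full range $p \le 3$) one must be more careful — but in fact for $p \le 3$ and $\eta_0 < 1$ we have $p - 5 + \eta_{0} < -1$, so the $r$-integral always converges. Thus the bound reduces to $C \tau_{1}^{-2\omg+3} \dlt^{2} \eps_{0}^{2}$, which matches the claimed $\tau_{1}^{-2\omg + \max\set{1,p}}$ when $p \ge 1$ (since $\max\set{1,p} = p$ and $3 \ge p$ makes $\tau_{1}^{-2\omg+3} \le \tau_{1}^{-2\omg+p}$ only if $p \ge 3$ — here one needs to track the $r$-weight more sharply).

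The main obstacle, and the place to be careful, is precisely obtaining the sharp $\tau_{1}$-power $\tau_{1}^{-2\omg+\max\set{1,p}}$ rather than a crude one: for $p \le 1$ one genuinely gains by integrating $r^{p-5+\eta_0}$ down to $r \sim R_0 \sim 1$ and the $u$-integral alone gives $\tau_1^{-2\omg+3}$, which is stronger than needed; for $1 < p \le 3$ one should instead use the sharper form of \eqref{eq:dlt-adm:phi}, namely $\abs{\phibg} \le C u^{-\omg+1} r^{-1} \dlt$ for $r \ge 10$, trading the extra $r$-decay against the larger $r^{p+1}$ weight, and then split the $v$-integral at $r = u$ using \eqref{eq:weak-st:v-u-r} exactly as in the proof of Lemma~\ref{lem:en-err-far} — in the region $r \le u$ one extracts $\tau_1^{-\eta_0}$-type gains from $(v + C_{\gmm_{20}})^{-\eta_0}$ factors, and in the region $r \ge u$ one simply has enough $r$-decay. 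I expect this bookkeeping to be entirely parallel to the error-term estimates already carried out in Section~\ref{subsec:en-pf} (compare \eqref{eq:en-err-far:pf2} and \eqref{eq:en-err-dvr-b}), so no new idea is required — only care in matching exponents. I would organize the final write-up as: (i) the pointwise bound on $\abs{G}$; (ii) insertion of \eqref{eq:dlt-adm:phi} in both forms; (iii) the two-region split and the two elementary integrations; (iv) collecting powers to read off $\tau_{1}^{-2\omg + \max\set{1,p}}$.
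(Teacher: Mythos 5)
Your overall plan coincides with the paper's: derive a pointwise bound $\abs{G} \lesssim r^{-3}\abs{\psibg}\eps_{0}$ in $\set{r\ge 20}$ from the difference estimates of Theorem~\ref{thm:weak-st} and Lemma~\ref{lem:r-alp-df}, insert the decay of $\psibg$ from \eqref{eq:dlt-adm:phi}, and integrate. Two remarks and one genuine gap.

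First, a cosmetic point: your pointwise bound carries an extra $r^{\eta_{0}/2}$ loss because you difference $\frac{\rd_{u}\rd_{v} r}{r}$ directly and absorb the $\log r$ growth of $\rdf$ from \eqref{eq:weak-st:rdf}. The paper instead rewrites $G$ in terms of $r^{2}\rd_{u}\rd_{v}r - \rbg^{2}\rd_{u}\rd_{v}\rbg = 2(\varpi - \tfrac{\e^{2}}{r})\kpp\dur - 2(\varpibg - \tfrac{\ebg^{2}}{\rbg})\kppbg\durbg$, which contains no leftover $r$-power and thus no loss from $\rdf$; the remaining $\left(r^{-3}-\rbg^{-3}\right)\rbg^{2}\rd_{u}\rd_{v}\rbg$ piece decays like $r^{-4}$ and is harmless. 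Your version still works because $r^{-2+\eta_{0}}$ is integrable for $\eta_{0}<1$, but the paper's split is tidier.

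The genuine gap is in your handling of the range $p\le 1$. You write that the crude bound $\abs{\psibg}\le C u^{-\omg+1}\dlt$ leads to a $u$-integral of size $\tau_{1}^{-2\omg+3}$, ``which is stronger than needed.'' This is backwards: since $\tau_{1}\ge 1$, $\tau_{1}^{-2\omg+3}\ge\tau_{1}^{-2\omg+1}=\tau_{1}^{-2\omg+\max\set{1,p}}$, so it is a \emph{weaker} conclusion and does not establish \eqref{eq:en-rp-G} for $p\le 1$. The fix is precisely the one you attribute only to $1<p\le 3$: keep the $\min\set{\brk{\ubg}^{-1},(\rbg/\rbg_{\EH})^{-1}}$ factor from \eqref{eq:dlt-adm:phi} and, for $p$ near $1$, take the $u^{-1}$ branch. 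Then $\abs{\psibg}\le C u^{-\omg}r\dlt$, giving $r^{p+1}\abs{G}^{2}\lesssim r^{p-3+\eta_{0}}u^{-2\omg}\dlt^{2}\eps_{0}^{2}$; the $r$-integral is $O(1)$ (for $p<4-\eta_{0}$) and the $u$-integral is $\tau_{1}^{-2\omg+1}$, as required. Conversely, the ``crude'' bound you used (which, despite the labelling, is really the $r^{-1}$ branch of the min applied to $\rbg\abs{\phibg}$) is the right choice near $p=3$. The paper packages this as: use $u^{-2}$ inside the min when $p=1$, $r^{-2}$ when $p=3$, and reduce general $p$ to these two endpoint cases by splitting the domain at $\set{r=\tau_{1}}$. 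Once you use the $u^{-1}$ branch for the small-$p$ regime, your argument closes; the rest of your sketch (the split at $r\sim u$, the elementary integrations) is sound and you do not need the $\tau_{1}^{-\eta_{0}}$ gains from \eqref{eq:en-err-dvr-b}, which were tailored to a different loss structure.
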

We remark that \eqref{eq:en-rp-G} is far from optimal (with respect to other estimates we have), but nevertheless sufficient for our purposes.
\begin{proof}
To prove \eqref{eq:en-rp-G}, it suffices to verify it in the cases $p = 1$ and $p = 3$. Indeed, the case $p \leq 1$ obviously follows from the case $p = 1$, and the case $1 \leq p \leq 3$ follows by interpolation; in this context, this simply means that we split the integration domain into $\set{r \leq \tau_{1}}$ and $\set{r \geq \tau_{1}}$, then use the case $p = 1$ for the former and $p = 3$ for the latter.

We begin with the identity
\begin{align*}
	r^{p+1} \abs{G}^{2}
	= r^{p+1} \left(\frac{1}{r^{3}} (r^{2} \rd_{u} \rd_{v} r - \rbg^{2} \rd_{u} \rd_{v} \rbg) + \left(\frac{1}{r^{3}} - \frac{1}{\rbg^{3}}\right) \rbg^{2} \rd_{u} \rd_{v} \rbg \right)^{2} \psibg^{2}.
\end{align*}
Recall that $r^{2} \rd_{u} \rd_{v} r = 2 (\varpi - \frac{\e^{2}}{r}) \kpp \dur$. Hence by \eqref{eq:gauge:eps}, \eqref{eq:weak-st:mdf}, \eqref{eq:weak-st:kppdf}, \eqref{eq:weak-st:rdf}, \eqref{eq:weak-st:durdf}, \eqref{eq:r-alp-df} and Proposition~\ref{prop:bg-geom}, we have
\begin{equation*}
	r^{-3} \abs{r^{2} \rd_{u} \rd_{v} r - \rbg^{2} \rd_{u} \rd_{v} \rbg} \leq C r^{-3} \eps_{0}, \quad
	(r^{-3} - \rbg^{-3}) \abs{\rbg^{2} \rd_{u} \rd_{v} \rbg} \leq C r^{-4} \eps_{0}.
\end{equation*}
Combined with \eqref{eq:dlt-adm:phi}, we arrive at
\begin{equation*}
	r^{p+1} \abs{G}^{2} \leq C r^{p-3} u^{-2 \omg + 2} \min\set{u^{-2}, r^{-2}}\dlt^{2} \eps_{0}^{2} \quad \hbox{ in } \calD \cap \set{r \geq 20}.
\end{equation*}
When $p = 1$, we use the decay $u^{-2}$ inside the minimum, whereas in the case $p = 3$ we use $r^{-2}$. Integrating this bound over $\calD(\tau_{1}, \tau_{2}) \cap \set{r \geq 20}$ and using \eqref{eq:weak-st:dvr-large-r}, the desired estimate \eqref{eq:en-rp-G} follows. \qedhere
\end{proof}

We are now ready to state and prove the main $r^{p}$-weighted energy estimates for $\psidf$.
\begin{lemma} \label{lem:en-rp}
Let $0 < p < 3$. For any $1 \leq \tau_{1} \leq \tau_{2}$, we have
\begin{equation} \label{eq:en-rp}
	E_{p}[\psidf](\tau_{2}) + \int_{\tau_{1}}^{\tau_{2}} E_{p-1}[\psidf] (\tau) \, \ud \tau
	\leq C_{p} \left( E_{p}[\psidf](\tau_{1}) + \tau_{1}^{-2\omg + \max\set{1+\eta_{0}, p}} \dlt^{2} \eps_{0}^{2} \right).
\end{equation}
\end{lemma}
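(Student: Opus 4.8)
\textbf{Proof plan for Lemma~\ref{lem:en-rp}.} The plan is to run the Dafermos--Rodnianski $r^{p}$-weighted energy method, using the vector field multiplier identity \eqref{eq:en-id-rp} of Lemma~\ref{lem:en-id-rp}, on the characteristic region $\calD(\tau_{1}, \tau_{2})$, and to absorb all the error terms using the weak stability bounds of Theorem~\ref{thm:weak-st} together with the auxiliary estimates in Lemmas~\ref{lem:psidf-phidf}--\ref{lem:en-rp-G}. First I would fix a large constant $R_{\ast} \geq R_{0}+1$ (depending on $\Lmb$, $|\ebg|$, $\eta_{0}$, $\omg$) such that all the bounds \eqref{eq:weak-st:me-bnd}--\eqref{eq:weak-st:mu-large-r} hold quantitatively in $\set{r \geq R_{\ast}}$ and integrate the identity \eqref{eq:en-id-rp} over $\calD(\tau_{1}, \tau_{2}) \cap \set{r \geq R_{\ast}}$, obtaining boundary terms on $\Gmm_{\tau_{2}}^{(out)} \cap \set{r \geq R_{\ast}}$, on $\Gmm_{\tau_{1}}^{(out)} \cap \set{r \geq R_{\ast}}$, on $\NI(\tau_{1}, \tau_{2})$ (which has a good sign for $0 < p < 3$, using $-\rd_u\rd_v r \geq 0$ in $\set{r\geq R_\ast}$), and on the interior timelike-type boundary $\set{r = R_{\ast}}$; plus the two bulk terms $\frac{p}{2} r^{p-1} \frac{(-\dur)}{\dvr}(\rd_{v} \psidf)^{2}$ and $\frac{3-p}{2} r^{p-4} r^{2}(-\rd_u \rd_v r) \psidf^{2}$, which are both nonnegative for $0 < p < 3$ and, after combining with the Hardy-type inequality Lemma~\ref{lem:hardy-type}, control $\int_{\tau_1}^{\tau_2} (\int_{C_\tau \cap \set{r\geq R_\ast}} r^{p-1}\dvr^{-1}(\rd_v\psidf)^2 \, \ud v ) \, \ud\tau$ after integrating in $\tau$ along the foliation (using the coarea-type change of variables with the vector field $T$, and $\frac{1}{2} \leq -\dur, \dvr \leq 2$ in $\set{r\geq 20}$ to identify $\ud u \ud v$ with $\ud v \, \ud\tau$ up to constants). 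The term on $\set{r = R_{\ast}}$ and the portion of the nondegenerate energy in $\set{r \leq R_{\ast}}$ are then handled by feeding in \eqref{eq:energy} of Theorem~\ref{thm:weak-st} (integrated decay of local energy), Lemma~\ref{lem:Ep}, and \eqref{eq:en-dvphi-goal}, all of which give at least $C\tau_1^{-2\omg+1+\eta_0}\dlt^2\eps^2 \leq C\tau_1^{-2\omg+1+\eta_0}\dlt^2\eps_0^2$ after using \eqref{eq:gauge:eps}.

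The main work is bounding the three error terms on the RHS of \eqref{eq:en-id-rp}. The term $G \frac{r^{p}}{\dvr}\rd_v\psidf$ is handled by Cauchy--Schwarz: $\abs{G \frac{r^p}{\dvr}\rd_v\psidf} \leq \veps r^{p-1}\dvr^{-1}(\rd_v\psidf)^2 + C_\veps r^{p+1} \dvr^{-1} G^2$, where the first term (for small $\veps$, and after noting $-\dur$ is bounded above and below in $\set{r\geq 20}$) is absorbed into the good bulk term $\frac{p}{2} r^{p-1}\frac{(-\dur)}{\dvr}(\rd_v\psidf)^2$, and the second is controlled by Lemma~\ref{lem:en-rp-G} (whose RHS is $C\tau_1^{-2\omg+\max\set{1,p}}\dlt^2\eps_0^2$, acceptable since $p < 3 < 2\omg-1$). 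The term $\frac{1}{2} r^{p-2}r^2 \frac{(-\rd_u\rd_v r)}{\dvr^2}(\rd_v\psidf)^2$: since $r^2(-\rd_u\rd_v r) = 2(\varpi - \frac{\e^2}{r})(-\dur)\kpp$ and $\varpi - \frac{\e^2}{r}$, $\kpp$, $-\dur$ are all bounded, this is $\leq C r^{p-2}\dvr^{-1}(\rd_v\psidf)^2$, which decays faster (in $r$) than the good bulk term by one power of $r$, so for $R_\ast$ large it is absorbed. The last term $\frac{1}{2} r^{p-3}\rd_v(r^2 \frac{(-\rd_u\rd_v r)}{\dvr})\psidf^2$ is expanded using the Raychaudhuri and $\rd_v\varpi$ equations (exactly as in the proof of \eqref{eq:en-dvrphi-Z3} in Lemma~\ref{lem:en-dvrphi}): it produces terms of the form $r^{p-4}\psidf^2$ (faster-decaying, absorbed by the good zeroth-order bulk term $\frac{3-p}{2} r^{p-4}r^2(-\rd_u\rd_v r)\psidf^2$ for $R_\ast$ large), plus terms $r^{p-1}(\rd_v\phi)^2\psidf^2$. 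For this last contribution I would split $\rd_v\phi = \rd_v\phibg + \rd_v\phidf$; the $\rd_v\phibg$ piece is bounded by $C\dlt^2 r^{p-3}\brk{\vbg}^{-2\omg}$ using \eqref{eq:dlt-adm:dvphi} (again faster-decaying for $p<3$, or absorbable), and the $\rd_v\phidf$ piece is bounded using the pointwise bound \eqref{eq:energy-phidf-pt} $r\phidf^2 \leq C\eps^2$ combined with $r\psidf^2 \leq Cr^2\phidf^2 + C(\psidf - r\phidf)^2 \leq C\eps^2 r + \cdots$, reducing it to $C\eps^2 \iint r^{p-2}\dvr^{-1}(\rd_v\psidf)^2$, absorbed for small $\eps_0$.

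The hardest step, I expect, will be organizing the bookkeeping so that the good bulk term $\frac{p}{2} r^{p-1}\frac{(-\dur)}{\dvr}(\rd_v\psidf)^2$ is strong enough to absorb everything it needs to, while simultaneously producing, after integration in $\tau$ along the $\Gmm_\tau$ foliation, the quantity $\int_{\tau_1}^{\tau_2} E_{p-1}[\psidf](\tau)\, \ud\tau$ on the LHS of \eqref{eq:en-rp} --- this requires matching the $r$-weight $r^{p-1}$ in the bulk with the $r^{p-1}$ weight in the definition \eqref{eq:Ep-def} of $E_{p-1}$, applying Hardy's inequality Lemma~\ref{lem:hardy-type} on each $\Gmm_\tau^{(out)}$ to recover the zeroth-order term $\sup r^{-1}\psidf^2$ and the $\Gmm_\tau^{(in)}$ part of $E_{p-1}$, and keeping careful track that the change of variables from $\ud u\,\ud v$ to $\ud v\,\ud\tau$ (legitimate since $T\tau = 1$ and $-\dur$ is comparable to $1$ on $\set{r\geq 20}$) does not lose powers. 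Once \eqref{eq:en-rp} is established, Proposition~\ref{prop:en-decay} follows by the standard dyadic pigeonhole argument of Dafermos--Rodnianski: iterating \eqref{eq:en-rp} for $p$ ranging over a finite ladder of values in $(0,3)$, using \eqref{eq:Ep-eps0} as the base case and \eqref{eq:dotE-Ep}, \eqref{eq:Ep-dotE} to pass between $E_p$ and $\dot{E}$, yields $\dot{E}[\phidf](\tau) \leq C\tau^{-3+\eta_0}\eps_0^2$.
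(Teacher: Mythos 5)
Your plan is essentially the paper's proof: multiply \eqref{eq:en-id-rp} by a large-$r$ cutoff, integrate over $\calD(\tau_1,\tau_2)$, exploit the sign of the two bulk terms for $0<p<3$, bound the $G$-error via Cauchy--Schwarz and Lemma~\ref{lem:en-rp-G}, absorb the remaining bulk errors by taking the cutoff radius large and $\eps_0$ small, and feed the small-$r$ portion through \eqref{eq:energy} and Lemma~\ref{lem:Ep}. One small simplification you can make: on $\Gmm_\tau^{(out)}$ the foliation parameter $\tau$ just equals $u$, so no co-area argument with $T$ is needed to convert the good bulk term into $\int_{\tau_1}^{\tau_2} E_{p-1}[\psidf](\tau)\,\ud\tau$ (the paper handles the $\Gmm_\tau^{(in)}$ part of $E_{p-1}$ through \eqref{eq:Ep-dotE} plus \eqref{eq:energy}, exactly as you indicate).
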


\begin{proof}
In this proof, we omit the line and area elements $\ud u, \ud v, \ud u \ud v$ to simplify the notation.

\pfstep{Step~1}
Let $R_{2} > 20 R_{0}$ (to be specified below), and let $\chi_{R_{2}}(r)$ be a smooth non-negative function such that $\chi_{R_{2}}(r) = 0$ in $\set{r \leq R_{2}/2}$, $\chi_{R_{2}}(r) = 1$ in $\set{r \geq R_{2}}$ and $\abs{\chi'_{R_{2}} (r)} \leq \frac{10}{r}$. We multiply \eqref{eq:en-id-rp} by $\chi_{R_{2}}$, and integrate by parts over $\calD(\tau_{1}, \tau_{2})$.

For the contribution of the LHS of \eqref{eq:en-id-rp}, we claim that
\begin{equation} \label{eq:en-rp0}
\begin{aligned}
& \hskip-2em
	\frac{1}{2} \int_{\Gmm^{(out)}_{\tau_{2}} \cap \set{r \geq R_{2}}} \frac{r^{p}}{\dvr} (\rd_{v} \psidf)^{2} 
	+ \frac{p}{8} \iint_{\calD(\tau_{1}, \tau_{2})} \chi_{R_{2}} \frac{r^{p-1}}{\dvr} (\rd_{v} \psidf)^{2} 
	+ \frac{3-p}{2} c_{(\varpi)} \iint_{\calD(\tau_{1}, \tau_{2})} \chi_{R_{2}} r^{p-4} \psidf ^{2} \\
\leq & \iint_{\calD(\tau_{1}, \tau_{2})} \chi_{R_{2}} (\hbox{LHS of \eqref{eq:en-id-rp}})  
	+  \frac{1}{2} \int_{\Gmm^{(out)}_{\tau_{1}} \cap \set{r \geq R_{2}/2}} \frac{r^{p}}{\dvr} (\rd_{v} \psidf)^{2} + C R_{2}^{p+\eta_{0}} \left( \dot{E}[\phidf](\tau_{1}) + \tau_{1}^{-2 \omg + 1 + \eta_{0}} \dlt^{2} \eps_{0}^{2}\right),
\end{aligned}
\end{equation}
where $c_{(\varpi)}$ is a positive \emph{universal} constant specified in \eqref{eq:c-varpi} below. On the other hand, for the contribution of the RHS of \eqref{eq:en-id-rp}, we claim that
\begin{align} 
	\iint_{\calD(\tau_{1}, \tau_{2})} \chi_{R_{2}} \Abs{G \frac{r^{p}}{\dvr} \rd_{v} \psidf} 
%\leq & C_{p} \iint_{\calD(\tau_{1}, \tau_{2})} \chi_{R_{2}} r^{p+1} \abs{G}^{2} 
%	+ \frac{p}{16} \iint_{\calD(\tau_{1}, \tau_{2})} \chi_{R_{2}} \frac{r^{p-1}}{\dvr} (\rd_{v} \psidf)^{2} 
\leq & C_{p} \tau_{1}^{-2 \omg + \max \set{1, p}} \dlt^{2} \eps_{0}^{2}
	+ \frac{p}{16} \iint_{\calD(\tau_{1}, \tau_{2})} \chi_{R_{2}} \frac{r^{p-1}}{\dvr} (\rd_{v} \psidf)^{2} , \label{eq:en-rp1} \\
	\iint_{\calD(\tau_{1}, \tau_{2})} \chi_{R_{2}} \Abs{r^{p-2} r^{2} \frac{(- \rd_{u} \rd_{v} r)}{\dvr^{2}} (\rd_{v} \psidf)^{2} }
	\leq & \frac{C}{R_{2}} \iint_{\calD(\tau_{1}, \tau_{2})} \chi_{R_{2}} \frac{r^{p-1}}{\dvr} (\rd_{v} \psidf)^{2} , \label{eq:en-rp2}  \\
	\iint_{\calD(\tau_{1}, \tau_{2})} \chi_{R_{2}} \Abs{r^{p-3} \rd_{v} \left( r^{2} \frac{(- \rd_{u} \rd_{v} r)}{\dvr} \right) \psidf^{2} }
	\leq & C \left(\frac{1}{R_{2}} + \eps_{0}^{2} \right) \iint_{\calD(\tau_{1}, \tau_{2})} \chi_{R_{2}} r^{p-4} \psidf^{2} \notag \\
	& + \frac{C \eps_{0}^{2}}{R_{2}} \iint_{\calD(\tau_{1}, \tau_{2})} \chi_{R_{2}} \frac{r^{p-1}}{\dvr} (\rd_{v} \psidf)^{2} . \label{eq:en-rp3} 
\end{align}
In the remainder of this step, we prove \eqref{eq:en-rp} assuming these claims. Putting together \eqref{eq:en-rp0}--\eqref{eq:en-rp3}, we obtain
\begin{align*}
	(\hbox{LHS of \eqref{eq:en-rp0}}) 
	\leq & C \int_{\Gmm^{(out)}_{\tau_{1}} \cap \set{r \geq R_{2}/2}} \frac{r^{p}}{\dvr} (\rd_{v} \psidf)^{2} + C R_{2}^{p+\eta_0} \left(\dot{E}[\phidf](\tau_{1}) + \tau_{1}^{-2 \omg + \max \set{1, p}} \dlt^{2} \eps_{0}^{2} \right) \\
	& + \left( \frac{1}{2} + \frac{C}{R_{2}} + C \eps_{0}^{2} \right) \times (\hbox{LHS of \eqref{eq:en-rp0}}).
\end{align*}
Choosing $R_{2}$ sufficiently large and $\eps_{0}^{2}$ sufficiently small (depending on $p$, $\eta_{0}$, $\ebg$ and $\Lmb$), the last line can be absorbed into the LHS. At this point, we fix $R_{2}$ and from now on suppress the dependence of constants on $R_{2} = R_{2}(p, \eta_{0}, \ebg, \Lmb)$. By \eqref{eq:gauge:eps}, \eqref{eq:energy}, \eqref{eq:Ep-dotE} and the preceding estimate, we furthermore have
\begin{align*}
	E_{p}[\psidf](\tau_{2}) + \int_{\tau_{1}}^{\tau_{2}} E_{p-1}[\psidf](\tau) \, \ud \tau
	\leq & C_{p} (\hbox{LHS of \eqref{eq:en-rp0}}) 
		+ C_{p} \left( \dot{E}_{\leq R_{2}}[\phidf](\tau_{2}) + \int_{\tau_{1}}^{\tau_{2}} \dot{E}_{\leq R_{2}}[\phidf](\tau) \, \ud \tau \right) \\
		& + C_{p} \tau^{-2 \omg + \eta_{0} + 1} \dlt^{2} \eps_{0}^{2} \\
	\leq & C_{p} \left( \int_{\Gmm_{\tau_{1}}^{(out)}} \frac{r^{p}}{\dvr} (\rd_{v} \psidf)^{2}  + \dot{E}[\phidf](\tau_{1}) 
		+ \tau_{1}^{-2 \omg + \max \set{1+\eta_{0}, p}} \dlt^{2} \eps_{0}^{2} \right).
\end{align*}
Then \eqref{eq:en-rp} follows from an application of \eqref{eq:dotE-Ep} to the last line.

\pfstep{Step~2} It remains to verify the claims \eqref{eq:en-rp0}--\eqref{eq:en-rp3}. 
For \eqref{eq:en-rp0}, we start by noting that there exist universal constants $0 < c_{(\varpi)} < C$ such that
\begin{equation} \label{eq:c-varpi}
	c_{(\varpi)} \leq r^{2} \frac{(-\rd_{u} \rd_{v} r)}{\dvr} \leq C \quad \hbox{ in } \calD \cap \set{r \geq 20},
\end{equation}
by \eqref{eq:weak-st:me-bnd}, \eqref{eq:weak-st:dvr-large-r} and \eqref{eq:weak-st:dur-large-r}. Therefore, using \eqref{eq:weak-st:dvr-large-r} and \eqref{eq:weak-st:dur-large-r}, we have
\begin{align*}
(\hbox{LHS of \eqref{eq:en-rp0}})
\leq & \iint_{\calD(\tau_{1}, \tau_{2})} \chi_{R_{2}} (\hbox{LHS of \eqref{eq:en-id-rp}}) 
+ \frac{1}{2} \int_{\Gmm^{(out)}_{\tau_{1}}} \frac{r^{p}}{\dvr} (\rd_{v} \psidf)^{2} \\
& +\frac{1}{2} \iint_{\calD(\tau_{1}, \tau_{2})} \Abs{\chi_{R_{2}}' (-\dur) \frac{r^{p}}{\dvr} (\rd_{v} \psidf)^{2} }
 + \frac{1}{2} \iint_{\calD(\tau_{1}, \tau_{2})} \Abs{\chi_{R_{2}}' r^{p} \frac{\rd_{u} \rd_{v} r}{r} \psidf^{2}}.
\end{align*}
Using \eqref{eq:weak-st:me-bnd}, \eqref{eq:weak-st:kpp-bnd}, \eqref{eq:weak-st:dur-large-r}, \eqref{eq:weak-st:mu-large-r}, \eqref{eq:psidf-phidf}, \eqref{eq:psidf-phidf-dv} and the fact that $\mathrm{supp} \, \chi'_{R_{2}} \subseteq \set{R_{2}/2 \leq r \leq R_{2}}$, we may show that 
\begin{align*}
& \hskip-2em
 \iint_{\calD(\tau_{1}, \tau_{2})} \Abs{\chi_{R_{2}}' (-\dur) \frac{r^{p}}{\dvr} (\rd_{v} \psidf)^{2} }
 + \iint_{\calD(\tau_{1}, \tau_{2})} \Abs{\chi_{R_{2}}' r^{p} \frac{\rd_{u} \rd_{v} r}{r} \psidf^{2}} \\
 \leq & C R_{2}^{p+\eta_{0}} \left( \tau_{1}^{-2 \omg + \eta_{0} + 1} \dlt^{2} \eps_{0}^{2} 
 					+ \iint_{\calD(\tau_{1}, \tau_{2}) \cap \set{R_{2} /2 \leq r \leq R_{2}}} r^{-1-\eta_{0}} (-\dur) \left( \kpp^{-1} (\rd_{v} \phidf)^{2} + r^{-2} \phidf^{2} \right) r^{2} \right).
\end{align*}
The spacetime integral on the last line is bounded by $C \dot{E}[\phidf](\tau_{1})+C\tau_{1}^{-2 \omg + \eta_{0} + 1} \dlt^{2} \eps_{0}^{2}$ thanks to \eqref{eq:gauge:eps} and \eqref{eq:energy}, which proves \eqref{eq:en-rp0}.

Next, \eqref{eq:en-rp1} is a clear consequence of the Cauchy--Schwarz inequality, \eqref{eq:weak-st:dvr-large-r} and \eqref{eq:en-rp-G}.

For \eqref{eq:en-rp2}, we use \eqref{eq:c-varpi} and the fact that $\mathrm{supp} \, \chi_{R_{2}} \subseteq \set{r \geq R_{2} / 2}$.

Finally, we prove \eqref{eq:en-rp3}. We start with the pointwise estimate
\begin{equation}  \label{eq:en-rp3-pf}
\Abs{\rd_{v} \left( r^{2} \frac{\rd_{u} \rd_{v} r}{\dvr} \right)} 
\leq C r^{-2}  + C r^{-1} \eps_{0}^{2} + C (\rd_{v} \psidf)^{2}.
\end{equation}
Indeed, we compute
\begin{align*}
\rd_{v} \left( r^{2} \frac{\rd_{u} \rd_{v} r}{\dvr} \right)
= 2 \rd_{v} \left( \left( \varpi - \frac{\e^{2}}{r} \right) \gmm \right)
=  \gmm \kpp^{-1} r^{2} (\rd_{v} \phi)^{2} 
+ 2 \frac{\e^{2}}{r^{2}} \dvr \gmm + 2 \left( \varpi - \frac{\e^{2}}{r} \right) \gmm \dvr^{-1} r (\rd_{v} \phi)^{2}.
\end{align*}
Hence, using \eqref{eq:weak-st:me-bnd}, \eqref{eq:weak-st:kpp-bnd}, \eqref{eq:weak-st:dvr-large-r}, \eqref{eq:weak-st:dur-large-r} and \eqref{eq:weak-st:mu-large-r}, we may estimate
\begin{align*}
	\Abs{\rd_{v} \left( r^{2} \frac{\rd_{u} \rd_{v} r}{\dvr} \right)} 
	\leq & C r^{-2} + C r^{2} (\rd_{v} \phibg)^{2} + C r^{2} (\rd_{v} \phidf)^{2} \\
	\leq & C r^{-2} + C r^{2} (\rd_{v} \phibg)^{2} + C \phidf^{2} + C (\rd_{v} (r \phidf - \psidf) )^{2} + C (\rd_{v} \psidf )^{2} .
%\\	\leq & C r^{-2} + C r^{-2} \dlt^{2} + C r^{-1} \eps_{0}^{2} + C r^{-2} \dlt^{2} \eps_{0}^{2} + C (\rd_{v} \psidf )^{2} \\
%	\leq & C r^{-2}  + C r^{-1} \eps_{0}^{2} + C (\rd_{v} \psidf )^{2} .
\end{align*}
Applying \eqref{eq:dlt-adm:dvphi}, \eqref{eq:gauge:eps}, \eqref{eq:energy-phidf-pt} and \eqref{eq:psidf-phidf-dv}, we obtain \eqref{eq:en-rp3-pf}.

The contribution of the first two terms on the RHS of \eqref{eq:en-rp3-pf} in \eqref{eq:en-rp3} is bounded by the first term on the RHS of \eqref{eq:en-rp3}. For the contribution of the last term, we use \eqref{eq:energy-phidf-pt} and \eqref{eq:psidf-phidf} to estimate
\begin{equation*}
	\iint_{\calD (\tau_{1}, \tau_{2})} \chi_{R_{2}} \Abs{r^{p-3} (\rd_{v} \psidf)^{2} \psidf^{2} } 
\leq 	C \eps_{0}^{2} \iint_{\calD (\tau_{1}, \tau_{2})} \chi_{R_{2}} r^{p-2} (\rd_{v} \psidf)^{2}
\leq 	\frac{C \eps_{0}^{2}}{R_{2}} \iint_{\calD (\tau_{1}, \tau_{2})} \chi_{R_{2}} r^{p-1} (\rd_{v} \psidf)^{2},
\end{equation*}
which concludes the proof. \qedhere
\end{proof}

Using the results established so far, we may now establish \eqref{eq:en-decay} by a pigeonhole argument of Dafermos--Rodnianski \cite{DRNM}.
\begin{proof}[Proof of Proposition~\ref{prop:en-decay}]
We use a hierarchy of $r^{p}$-weighted energy estimates \eqref{eq:en-rp}, where $p = 3-\eta_{0}, 2-\eta_{0}$ and $1-\eta_{0}$. For these values of $p$, observe that
\begin{equation*}
	\tau_{1}^{-2 \omg + \max \set{1+\eta_{0}, p}} \leq \tau_{1}^{-(3 - \eta_{0}) + p}.
\end{equation*}
Therefore, for any $1 \leq \tau_{1} \leq \tau_{2}$ and $p = 3 - \eta_{0}, 2 - \eta_{0}, 1-\eta_{0}$, we have
\begin{equation} \label{eq:en-rp-final}
	E_{p}[\psidf](\tau_{2}) + \int_{\tau_{1}}^{\tau_{2}} E_{p-1}[\psidf](\tau) \, \ud \tau
	\leq C E_{p}[\psidf](\tau_{1}) + C \tau_{1}^{-(3 - \eta_{0}) + p} \dlt^{2} \eps_{0}^{2}.
\end{equation}
Note that we have suppressed the dependence of $C$ on $p$, which due to our choice of $p$ depends only on $\eta_0$.
In the case $p = 3 - \eta_{0}$, by \eqref{eq:Ep-eps0} and \eqref{eq:en-rp-final}, we have $E_{3-\eta_{0}}[\psidf](\tau) \leq C \eps_{0}^{2}$. Moreover, by the pigeonhole principle, for every $\tau \geq 2$ there exists $\overline{\tau} \in [\tau/2, \tau]$ such that
\begin{equation*}
	E_{p-1}[\psidf](\overline{\tau}) \leq \frac{2}{\tau} \int_{\tau/2}^{\tau} E_{p-1}[\psidf](\tau') \, \ud \tau'.
\end{equation*}
Applying \eqref{eq:en-rp-final} with $p-1$ and $p$, we have
\begin{align*}
	E_{p-1}[\psidf](\tau)
	\leq & C E_{p-1}[\psidf](\overline{\tau}) + C \overline{\tau}^{- (3 + \eta_{0}) + (p-1)} \dlt^{2} \eps_{0}^{2} \\
	\leq & \frac{C}{\tau} E_{p}[\psidf](\tau/2) + C \tau^{- (3 + \eta_{0}) + (p-1)} \dlt^{2} \eps_{0}^{2}.
\end{align*}
Plugging in the bound $E_{3-\eta_{0}}[\psidf](\tau) \leq C \eps_{0}^{2}$ with $p = 3 - \eta_{0}$, we obtain $E_{2-\eta_{0}}[\psidf](\tau) \leq C \tau^{-1} \eps_{0}^{2}$. Then plugging in the preceding bound with $p = 2$, we obtain $E_{1-\eta_{0}}[\psidf](\tau) \leq C \tau^{-2} \eps_{0}^{2}$. 

Applying \eqref{eq:en-rp-final} again with $p = 2-\eta_{0}$ and $1-\eta_{0}$, but now using the decay of $E_{p}[\psidf](\tau_{1})$ derived just now, we obtain
\begin{equation} \label{eq:en-decay:pf1}
	\tau_{1}^{2} \int_{\tau_{1}}^{\tau_{2}} E_{-\eta_{0}}[\psidf](\tau) \, \ud \tau 
	+ \tau_{1} \int_{\tau_{1}}^{\tau_{2}} E_{1-\eta_{0}}[\psidf](\tau) \, \ud \tau \leq C \eps_{0}^{2}. 
\end{equation}
Observe that
\begin{align*}
	E_{0}[\psidf](\tau)
	\leq & \int_{\Gmm^{(in)}_{\tau}} \frac{1}{(-\dur)} (\rd_{u} \psidf)^{2} \, \ud u
	+ \sup_{\Gmm^{(in)}_{\tau}} r^{-1} \psidf^{2} \\
	& + \tau_{1}^{\eta_{0}} \int_{\Gmm^{(out)}_{\tau} \cap \set{r \leq \tau_{1}}} \frac{r^{-\eta_{0}}}{\dvr} (\rd_{v} \psidf)^{2} \, \ud v
	+ \tau_{1}^{-1+\eta_{0}}\int_{\Gmm^{(out)}_{\tau} \cap \set{r \geq \tau_{1}}} \frac{r^{1-\eta_{0}}}{\dvr} (\rd_{v} \psidf)^{2} \, \ud v \\
	\leq & (1 + \tau_{1}^{\eta_{0}}) E_{-\eta_{0}}[\psidf](\tau) + \tau_{1}^{-1+\eta_{0}} E_{1-\eta_{0}}[\psidf](\tau).
\end{align*}
Therefore, by \eqref{eq:dotE-Ep} and \eqref{eq:en-decay:pf1}, we arrive at
\begin{align*}
	\int_{\tau_{1}}^{\tau_{2}} \dot{E}[\phidf](\tau) \, \ud \tau
	\leq & C \int_{\tau_{1}}^{\tau_{2}} E_{0}[\psidf](\tau) \, \ud \tau + C \tau_{1}^{-2 \omg + 1 + \eta_{0}} \dlt^{2} \eps_{0}^{2} \\
	\leq & C \tau_{1}^{\eta_{0}} \int_{\tau_{1}}^{\tau_{2}} E_{-\eta_{0}}[\psidf](\tau) \, \ud \tau 
	+ C \tau_{1}^{-1+\eta_{0}} \int_{\tau_{1}}^{\tau_{2}} E_{1-\eta_{0}}[\psidf](\tau) \, \ud \tau 
	+ C \tau_{1}^{-2 + \eta_{0}} \dlt^{2} \eps_{0}^{2} \\	
	\leq & C \tau_{1}^{-2+\eta_{0}} \eps_{0}^{2}.
\end{align*}
By another pigeonhole argument, for every $\tau \geq 2$ we may find $\overline{\tau} \in [\tau/2, \tau]$ such that
\begin{equation*}
	\dot{E}[\phidf](\overline{\tau})
	\leq \frac{2}{\tau} \int_{\tau/2}^{\tau} \dot{E} [\phidf] (\tau') \, \ud \tau'
	\leq C \tau^{-3 + \eta_{0}} \eps_{0}^{2}.
\end{equation*}
Then by the energy inequality \eqref{eq:energy}, we obtain $\dot{E}[\phidf] (\tau) \leq C \tau^{-3+\eta_{0}} \eps_{0}^{2}$, which is the desired decay. \qedhere
\end{proof}

\subsection{Decay of the radiation field difference} \label{subsec:int-char}
Here we use an integration along characteristic argument (similar to \cite{LO1} and Section~\ref{sec.contra.unif}) to upgrade \eqref{eq:phidf-decay-pre} to decay of the radiation field difference $\Phidf = \Phi - \Phibg$. We work under the same assumptions and conventions as in the previous subsection.

We first state the main result of this subsection.
\begin{proposition} \label{prop:psidf-decay}
There exists $R_{3} > R_{0}$ such that for $(u, v) \in \calD \cap \set{r \geq R_{3}}$, we have
\begin{equation} \label{eq:psidf-decay}
	\abs{\psidf}(u, v) \leq C u^{-\bt} \eps_{0},
\end{equation}
for some $\bt = \bt(\omg, \eta_{0}) > 1$ specified in \eqref{eq:psidf-decay-bt}.
In particular, $\Phidf = \Phi - \Phibg$ obeys the decay estimate
\begin{equation} \label{eq:Phidf-decay}
	\abs{\Phidf}(u) \leq C u^{-\bt} \eps_{0}.
\end{equation}
\end{proposition}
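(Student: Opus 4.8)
\textbf{Proof proposal for Proposition~\ref{prop:psidf-decay}.}
The plan is to integrate the wave equation for $\psidf$ along incoming null curves from $\NI$ inward, mirroring the argument in Section~\ref{sec.contra.unif} (and \cite{LO1}), using as input the decay of $\phidf$ along constant-$r$ curves already established in Corollary~\ref{cor:phidf-decay-pre} (namely $r^{1/2}|\phidf|(u,v) \leq C u^{-3/2+\eta_{0}/2}\eps_{0}$ for $r \geq R_{0}$). First I would fix a large $R_{3} > R_{0}$ and work in the region $\calD \cap \set{r \geq R_{3}}$, where by Theorem~\ref{thm:weak-st} the geometry is close to Minkowskian: $\tfrac12 \leq \dvr \leq 2$, $\tfrac14 \leq -\dur \leq 1$, $\tfrac12 \leq 1-\mu \leq 1$, and $r$ is comparable to $v - u + C_{\gmm_{20}} + 20$. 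Recall $\psidf$ solves
\begin{equation*}
	\rd_{u}\left(\frac{\rd_{v}(r\phi)}{\dvr} - \frac{\rd_{v}(\rbg\phibg)}{\dvrbg}\right) = - \left(\frac{\rd_{v}(r\phi)}{\dvr}\right)\frac{2(\varpi - \frac{\e^{2}}{r})}{r^{2}}\frac{\dur\dvr}{1-\mu}\frac{1}{\dvr} + \frac{2(\varpi - \frac{\e^{2}}{r})}{r^{2}}\frac{\dur}{1-\mu}\phi + (\hbox{background terms}),
\end{equation*}
which upon taking the difference becomes an ODE in $u$ along $\uC_{v}$ for $\dvr^{-1}\rd_{v}\psidf$ (or equivalently for $\psidf$ after a further integration in $v$), with forcing controlled by the difference quantities $\varpidf$, $\rdf$, $\log\kpp - \log\kppbg$, $\log(-\gmm)-\log(-\gmmbg)$ (all bounded by $C\eps_{0}$ via Theorem~\ref{thm:weak-st}) multiplied against $\phi$ and $\phibg$, plus a linear-in-$\psidf$ term whose coefficient $\tfrac{2(\varpi-\e^{2}/r)}{r^{2}}$ is integrable in $u$ so it can be absorbed by Gr\"onwall's inequality (exactly as in \eqref{bd.exp.factor}).

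The key steps, in order, are: (1) integrate the $\rd_{u}$-difference equation from $\NI$ (where the radiation field of the perturbed solution is finite by the weak stability bounds, and the limit defining $\Gmm(u)$, $\Phi(u)$, $M(u)$ exists) down to $(u,v)$, converting the problem into controlling a forcing integral $\int_{\underline{u}(v)}^{u}(\cdots)\,\ud u'$ along $\uC_{v}$; (2) split the forcing into a genuinely ``linear'' piece proportional to $\phi - \phibg = \phidf$ and ``source'' pieces proportional to $(\varpidf,\rdf,\ldots)\cdot\phibg$; (3) bound the $\phidf$-piece by inserting \eqref{eq:phidf-decay-pre} together with $r^{-2}(\varpi - \e^{2}/r) \leq C r^{-2}$, obtaining (after using $r \geq R_{3}$ and $u \leq Cv$, $u \leq Cr$ from Corollary~\ref{cor:v-u-r}) a contribution $\lesssim u^{-3/2 + \eta_{0}/2}\eps_{0}$ modulo the Gr\"onwall absorption of the linear term in $\psidf$ itself; (4) bound the source pieces using the pointwise decay of $\phibg$ from \eqref{eq:dlt-adm:phi} together with the $C\eps_{0}$ bounds on the difference quantities from Theorem~\ref{thm:weak-st}; here one gains extra $u$-decay from $\phibg$, so these terms are strictly better than the $\phidf$-term. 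Then I would set
\begin{equation} \label{eq:psidf-decay-bt}
	\bt = \tfrac{3}{2} - \tfrac{\eta_{0}}{2} > 1
\end{equation}
(using $\eta_{0} < 1$), which bounds $|\psidf|(u,v) \leq C u^{-\bt}\eps_{0}$ on $\calD \cap \set{r \geq R_{3}}$; (5) finally, take $v \to \infty$ along $C_{u}$ and use continuity of $\psidf$ up to $\NI$ (together with the fact that $r\phi \to \Phi$, $\rbg\phibg \to \Phibg$, which holds by the weak stability bounds and the corresponding background bounds) to deduce \eqref{eq:Phidf-decay}.

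The main obstacle I anticipate is bookkeeping the weights carefully in step (3): the decay $r^{1/2}|\phidf| \lesssim u^{-3/2+\eta_{0}/2}\eps_{0}$ is only stated for $r \geq R_{0}$ along constant-$r$ curves, and in the forcing integral along $\uC_{v}$ one ranges over $r$ from $R_{3}$ up to $r(u,v)$, so one must check that $\int (\varpi - \e^{2}/r) r^{-2}\cdot r^{-1/2} u'^{-3/2+\eta_{0}/2}(-\dur)\,\ud u'$ — after changing variables $u' \mapsto r'$ using $-\dur \sim 1$ — converges and produces the claimed $u$-power; this is where the relation $u \sim v \sim r$ near $\NI$ and the strict inequality $\bt > 1$ both matter, and where one has to be slightly delicate about the region $r(u,v) \gg u$ versus $r(u,v) \sim u$ (as in the $II_{1}$/$II_{2}$ split of Proposition~\ref{prop.rdvrphi.precise}). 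A secondary point is ensuring the Gr\"onwall factor $\exp\big(\int \tfrac{2(\varpi - \e^{2}/r)}{r^{2}}\tfrac{|\dur|}{1-\mu}\,\ud u'\big)$ is bounded by a universal constant once $R_{3}$ is chosen large enough, which is precisely the estimate \eqref{bd.exp.factor} already available, so this should be routine. No new analytic ideas beyond those of Sections~\ref{subsec:blowup-const-r} and \ref{subsec:rp-weight} are needed; the argument is a contraction/Gr\"onwall estimate fed by the decay already proven.
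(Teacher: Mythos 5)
Your overall plan—integrate the wave equation for $\psidf$ along $\uC_{v}$ in the $u$-direction, split the integrand into a piece proportional to $\phidf$ (fed by Corollary~\ref{cor:phidf-decay-pre}) and a source piece proportional to the geometric differences times $\phibg$, then integrate in $v$ and close a self-improving estimate $\grave{\rm a}$ la Section~\ref{sec.contra.unif}—is the same as the paper's, and the paper indeed runs it as a bootstrap on the quantity $\calB(\tau) = \sup u^{\bt}|\psidf|$ with the smallness coming from $\int_{R_{3}}^{\infty} r^{-2}\ud r \lesssim R_{3}^{-1}$, as you anticipate.

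There is, however, a genuine gap. When you integrate $\rd_{u}(\rd_{v}\psidf) = (\rd_{u}\rd_{v}r)\phidf + G$ in $u$ starting from the initial outgoing curve $C_{out} = \set{u=1}$, you do not only pick up the forcing integral $\int_{1}^{u}(\cdots)\,\ud u'$ — you also pick up the boundary term $\rd_{v}\psidf(1,v)$. Your step~(1) speaks of ``converting the problem into controlling a forcing integral,'' and steps~(2)--(4) analyze only the forcing, so this boundary contribution is never accounted for. By \eqref{eq:gauge:dvrphi} it satisfies $\abs{\dvr^{-1}\rd_{v}\psidf}(1,v) \leq C r^{-\omg}(1,v)\eps_{0}$, and since $r(1,v) \gtrsim \max\set{u, r(u,v)}$, integration in $v$ from $v_{R_{3}}(u)$ to $v$ yields a contribution to $\psidf$ of size roughly $u^{-(\omg-1-\eta_{0}/2)}\eps_{0}$. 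This is an independent constraint on the achievable rate, and for $\omg \in (2, 5/2)$ it is weaker than the $u^{-(3-\eta_{0})/2}$ rate coming from Corollary~\ref{cor:phidf-decay-pre}. Your proposed $\bt = \tfrac{3}{2} - \tfrac{\eta_{0}}{2}$ is therefore too optimistic; in fact, it does not even depend on $\omg$, whereas the proposition explicitly states $\bt = \bt(\omg, \eta_{0})$. The correct choice is $\bt = \min\set{\tfrac{1}{2}(3-\eta_{0}), \tfrac{1}{2}(2\omg-2-\eta_{0})}$, which is still $> 1$ because $\eta_{0} < 2\omg - 4$, but coincides with your value only when $\omg \geq 5/2$.

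Two smaller remarks. First, your claim ``$u \leq Cr$ from Corollary~\ref{cor:v-u-r}'' is false: in $\calD \cap \set{r \geq R_{3}}$, near $i^{+}$ one has $r$ pinned near $R_{3}$ while $u \to \infty$; what Corollary~\ref{cor:v-u-r} gives is only $u \lesssim v + C_{\gmm_{20}}$. You do need the split of the $u$-integral around $u' \sim \tfrac{1}{2}(v + C_{\gmm_{20}} + 20)$, and for the near-$i^{+}$ part you must feed in the bootstrap quantity, not Corollary~\ref{cor:phidf-decay-pre} directly (the latter only gives $\abs{\psidf} \lesssim r^{1/2} u^{-(3-\eta_{0})/2}\eps_{0}$, which grows in $r$ and so is not integrable against $r^{-3}$ with the required gain). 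Second, the phrase ``absorbed by Gr\"onwall'' is misleading as written: the linear term is in $\psidf$, not in $\rd_{v}\psidf$, so a one-dimensional Gr\"onwall for $\rd_{v}\psidf$ alone does not apply; the closure mechanism is precisely the $\calB(\tau)$-bootstrap you gesture at, with the smallness supplied by $R_{3}^{-1}$.
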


As in the previous section, we work with $\psi = r \phi$ and the corresponding difference $\psidf$. We start with the following preliminary decay estimate for $\psidf$, which is a quick consequence of \eqref{eq:phidf-decay-pre}.
\begin{lemma} \label{lem:psidf-decay-pre}
The following estimate holds for $(u, v) \in \set{u \geq 1, \ r(u, v) \geq R_{0}}$:
\begin{equation} \label{eq:psidf-decay-pre}
	r^{-1/2} \abs{\psidf} \leq C u^{-\frac{1}{2}(3-\eta_{0})}  \eps_{0}.
\end{equation}
\end{lemma}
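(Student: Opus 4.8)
\textbf{Proof plan for Lemma~\ref{lem:psidf-decay-pre}.}

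The statement to be proved is simply a rephrasing of Corollary~\ref{cor:phidf-decay-pre} combined with the identification of $\psidf$ with $r\phidf$ from Lemma~\ref{lem:psidf-phidf}. The plan is to write $\psidf = r \phidf + (\psidf - r\phidf)$ and estimate each piece on the region $\set{r \geq R_{0}}$. For the first piece, recall that Corollary~\ref{cor:phidf-decay-pre} gives $r^{1/2} \abs{\phidf}(u,v) \leq C u^{-\frac{3}{2} + \frac{\eta_{0}}{2}} \eps_{0}$ for all $(u,v)$ with $u \geq 1$ and $r(u,v) \geq R_{0}$; this is precisely $r^{-1/2} \abs{r \phidf} \leq C u^{-\frac{1}{2}(3 - \eta_{0})} \eps_{0}$, which is the desired bound (with the correct exponent $-\frac12(3-\eta_0)$). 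For the second piece, I would use \eqref{eq:psidf-phidf} from Lemma~\ref{lem:psidf-phidf}, which gives $\abs{\psidf - r\phidf} \leq C u^{-\omg + 1} \min\set{u^{-1+\frac{\eta_0}{2}}, r^{-1+\frac{\eta_0}{2}}} \dlt \eps_{0}$ on $\calD \cap \set{r \geq 20}$, so that $r^{-1/2}\abs{\psidf - r\phidf} \leq C u^{-\omg + 1} r^{-1/2 - 1 + \frac{\eta_0}{2}} \dlt \eps_0 \leq C u^{-\omg} \dlt \eps_0$ using $r \geq 20 \geq 1$ and the crude bound $r^{-3/2 + \eta_0/2} \leq 1$; since $\omg > 2 > \frac12(3-\eta_0)$ for $\eta_0 < 1$, and since $\dlt \leq 1$, this contribution is dominated by the main term. (On the region $R_0 \leq r \leq 20$ one can either enlarge $R_0$ so that $R_0 > 20$ — consistent with the bookkeeping in the rest of the section — or observe that $\eqref{eq:psidf-phidf}$ is stated with the threshold $r \geq 20$ precisely so that this region does not arise; in any case $R_0 = r(1,1)$ can be taken large.)

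The one point requiring a small amount of care is reconciling the threshold $r \geq R_0$ appearing in the statement with the threshold $r \geq 20$ appearing in Lemma~\ref{lem:psidf-phidf}; since $R_0 = r(1,1)$ and the whole argument of Section~\ref{sec:L-stability} operates in the regime where $R_0$ is comparable to $\Lmb > 100 \gg 20$ (cf. \eqref{eq:R0-Lmb}), we have $R_0 > 20$ automatically, so $\set{r \geq R_0} \subseteq \calD \cap \set{r \geq 20}$ and Lemma~\ref{lem:psidf-phidf} applies throughout. Combining the two bounds gives
\[
	r^{-1/2}\abs{\psidf}(u,v) \leq r^{-1/2}\abs{r\phidf}(u,v) + r^{-1/2}\abs{\psidf - r\phidf}(u,v) \leq C u^{-\frac12(3-\eta_0)}\eps_0 + C u^{-\omg}\dlt\eps_0 \leq C u^{-\frac12(3-\eta_0)}\eps_0,
\]
which is \eqref{eq:psidf-decay-pre}.

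There is no genuine obstacle here: the lemma is purely a consolidation of Corollary~\ref{cor:phidf-decay-pre} and Lemma~\ref{lem:psidf-phidf}, both of which are already available. The only mild subtlety is the one noted above about which $r$-threshold to use, and possibly making sure the exponents match ($\frac{3}{2} - \frac{\eta_0}{2} = \frac12(3-\eta_0)$, which is immediate). The real work of this subsection — the integration-along-characteristics argument promoting \eqref{eq:psidf-decay-pre} to the genuinely integrable decay \eqref{eq:psidf-decay} for $\psidf$ and hence for the radiation field difference $\Phidf$ in Proposition~\ref{prop:psidf-decay} — comes afterward and is where the analogue of the argument in Section~\ref{sec.contra.unif} (and \cite{LO1}) will be deployed.
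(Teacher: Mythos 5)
Your proof is correct and is exactly the argument the paper indicates: combine Corollary~\ref{cor:phidf-decay-pre} with Lemma~\ref{lem:psidf-phidf}, using $\psidf = r\phidf + (\psidf - r\phidf)$. The paper omits the details, and your accounting of the exponents and the $r \geq R_0 > 20$ threshold fills them in correctly.
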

To prove \eqref{eq:psidf-decay-pre}, we simply apply \eqref{eq:phidf-decay-pre} and \eqref{eq:psidf-phidf}; we omit the straightforward details.

Another ingredient we need is an estimate for the contribution of the term $G$ in the wave equation for $\psidf$.
\begin{lemma} \label{lem:int-char-G}
For $(u, v) \in \calD \cap \set{r \geq 20}$, we have
\begin{equation} \label{eq:int-char-G}
	\int_{1}^{u} \abs{G} (u', v) \, \ud u' \leq C u^{-\frac{1}{2} (4-\eta_{0})} r^{-\frac{1}{2}(2+\eta_{0})}  (u, v) \dlt \eps_{0}.
\end{equation}
\end{lemma}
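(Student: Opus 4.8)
\textbf{Proof strategy for Lemma~\ref{lem:int-char-G}.} The plan is to expand $G$ using its definition, $G = \left(\frac{\rd_{u}\rd_{v} r}{r} - \frac{\rd_{u}\rd_{v}\rbg}{\rbg}\right)\psibg$, and observe that the factor $\psibg$ already decays with an integrable rate in $u$ (in the large-$r$ region, by \eqref{eq:dlt-adm:phi} one has $r\abs{\phibg} \le C u^{-\omg+1}\min\{u^{-1},r^{-1}\}\dlt$, so $\abs{\psibg}\le C u^{-\omg+1}\min\{u^{-1},r^{-1}\}\dlt$). The remaining factor is the difference of the $\frac{\rd_u\rd_v r}{r}$-quantities, which we should bound in $L^1_{u}$ along a constant-$v$ curve. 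First I would rewrite $\frac{\rd_u\rd_v r}{r} = \frac{2(\varpi - \frac{\e^2}{r})}{r^3}\kpp\,\dur$, and similarly for the background, then decompose the difference into: a difference of the mass-type factors $\frac{\varpi}{r^3} - \frac{\varpibg}{\rbg^3}$ (and the $\e^2/r^4$ analogue), a difference $\kpp - \kppbg$, and a difference $\dur - \durbg$. Each of these differences is controlled by $\eps_0$ (with possibly a $\log r$ or $\log\Lmb$ loss) via the weak-stability bounds \eqref{eq:weak-st:mdf}, \eqref{eq:weak-st:kppdf}, \eqref{eq:weak-st:rdf}, \eqref{eq:weak-st:durdf}, together with Lemma~\ref{lem:r-alp-df} for the powers of $r$; meanwhile the ``leftover'' factors ($\kpp$, $\dur$, $\varpibg$, etc.) are uniformly bounded by the geometric bounds \eqref{eq:weak-st:me-bnd}--\eqref{eq:weak-st:mu-large-r}. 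The net effect is a pointwise bound of the schematic form
\[
	\abs{G}(u',v) \le C\, r^{-3}(\log r)\,\eps_0 \cdot u'^{-\omg+1}\min\{u'^{-1}, r^{-1}\}\,\dlt ,
\]
in $\calD\cap\set{r\ge 20}$, where the $r^{-3}$ comes from the $\frac{1}{r^3}$ prefactor in $\rd_u\rd_v r / r$ (note $\rd_u\rd_v\rbg/\rbg$ is comparable by \eqref{eq:weak-st:rdf}), and the small extra logarithmic factors are absorbed into a harmless $r^{\eta_0/2}$ using $\log r\le C r^{\eta_0/2}$.

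The second step is to integrate this bound in $u'$ from $1$ to $u$ along the fixed-$v$ curve. Here I would change variables $u'\mapsto r$ using $\rd_{u'}r = \dur$ with $-\dur$ bounded above and below by universal constants in $\set{r\ge 20}$ (by \eqref{eq:weak-st:dur-large-r}), so $\ud u' \sim \ud r$. On this curve, $r$ ranges from $r(u,v)$ up to $r(1,v)$, and $r(1,v)$ is comparable to $v$ (indeed to $v+\Lmb$), while $u'$ and $r(1,v)-r(u',v)$ are also comparable (up to the usual $C_{\gmm_{20}}$ shift) by \eqref{eq:weak-st:v-u-r} and Corollary~\ref{cor:v-u-r}. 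A clean way to extract the stated rate is to split the minimum: for the part of the integral where $r' \le u'$ use $\min\{u'^{-1},r'^{-1}\} = u'^{-1}$ and the decay $u'^{-\omg}$ which integrates to $u^{-\omg+1}$ (an acceptable, faster rate since $\omg>2$); for the part $r'\ge u'$ use $\min\{\cdots\} = r'^{-1}$ and integrate $r'^{-4+\eta_0/2}$, which gives a factor $r^{-3+\eta_0/2}(u,v)$ times the value of $u'^{-\omg+1}$ frozen near $r'\sim u'\sim v$. Balancing the two contributions — and using $u \le Cv$ and $r(u,v)\le Cv$ in the relevant region — produces an upper bound of the form $C u^{-\frac12(4-\eta_0)} r^{-\frac12(2+\eta_0)}(u,v)\,\dlt\eps_0$, which is \eqref{eq:int-char-G}; the geometric-mean-type exponents $\tfrac12(4-\eta_0)$ and $\tfrac12(2+\eta_0)$ are exactly what one gets by interpolating the ``$r$-decay'' estimate against the ``$u$-decay'' estimate (compare the analogous split in Lemma~\ref{lem:en-rp-G}).

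The main obstacle, as usual in this circle of estimates, is bookkeeping rather than any genuine difficulty: one must make sure the logarithmic losses in \eqref{eq:weak-st:rdf}--\eqref{eq:weak-st:durdf} (powers of $\log\Lmb$ and $\log r$) are truly harmless — they are, because we are in the regime $\dlt \le \Lmb^{-100\eta_0}\dlt_0$ and $\log r \le C r^{\eta_0/2}$, so every such factor is swallowed by an arbitrarily small power of $r$ or by $\dlt^{1/2}$ — and that the change of variables $u'\mapsto r$ is legitimate on the whole integration range, which requires staying in $\set{r\ge 20}$ where $-\dur$ is bounded below (the integrand is supported there by hypothesis, and the portion of $[1,u]$ with $r(u',v)\le 20$ simply does not contribute to $G$ restricted to $\set{r\ge 20}$, or can be absorbed using \eqref{eq:weak-st:dur-small-r} if one prefers to integrate over all of $[1,u]$). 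Since the target estimate \eqref{eq:int-char-G} is far from sharp — it is used only as an input to the integration-along-characteristics argument for $\Phidf$ in Proposition~\ref{prop:psidf-decay} — there is ample room, and I would not attempt to optimize the exponents beyond what the statement demands.
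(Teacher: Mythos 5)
Your overall strategy is the same as the paper's: first establish the pointwise bound $\bigl\lvert\tfrac{\rd_u\rd_v r}{r} - \tfrac{\rd_u\rd_v\rbg}{\rbg}\bigr\rvert \le C r^{-3}\eps_0$ in $\calD\cap\set{r\ge 20}$ from the weak-stability bounds, multiply by the $\psibg$-decay from \eqref{eq:dlt-adm:phi} to get $\abs{G}\lesssim \dlt\eps_0\, r^{-2}(u')^{-\omg+1}\min\set{(u')^{-1},r^{-1}}$, and then split the $u'$-integral according to whether $u'\lessgtr r(u',v)$. (Incidentally, you do not need the $\log r$ loss in the pointwise bound — the loss from \eqref{eq:weak-st:rdf} enters only multiplied by an extra $r^{-1}$ via Lemma~\ref{lem:r-alp-df}, so it is harmless without any $r^{\eta_0/2}$ cushioning; and the portion of $[1,u]$ with $r\le 20$ is empty by monotonicity of $r$, so there is nothing to ``absorb''.)

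The gap is in the final integration step, which as written does not give the claimed exponents. In the regime $r'\le u'$ (i.e.\ $u'\ge\tfrac12(v+C_{\gmm_{20}}+20)$) you take $\min=(u')^{-1}$ and claim the contribution is $u^{-\omg+1}$. That drops the factor $r'^{-2}$, and $u^{-\omg+1}$ alone is \emph{not} $\le C u^{-(4-\eta_0)/2}r^{-(2+\eta_0)/2}(u,v)$ once $\omg<3-\eta_0/2$ (for example when $r(u,v)$ stays near $20$ while $u\to\infty$). Keeping the $r'^{-2}$, freezing $(u')^{-\omg}$ at the lower endpoint $\sim u$, changing variables $u'\mapsto r'$ and integrating $r'^{-2}$ from $r(u,v)$ gives $u^{-\omg}r^{-1}(u,v)$, which \emph{does} suffice (using $r(u,v)\le C u$ in this regime) — or, as the paper does, interpolate $\min\le(u')^{-1+\eta_0/2}(r')^{-\eta_0/2}$ here to make the $r'$-integral manifestly convergent. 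Your interpolation $(u')^{-\omg+1}\le(r')^{-1+\eta_0/2}$ is applied in the \emph{other} regime $r'\ge u'$, where it throws away all the $u$-decay; after integrating $r'^{-4+\eta_0/2}$ you are left with only an $r$-power, and the phrase ``times the value of $u'^{-\omg+1}$ frozen near $r'\sim u'\sim v$'' double-counts a factor that was already converted into $r'$-decay. The clean version of this regime is the one in the paper: keep $r'^{-3}(u')^{-\omg+1}$, note that $r'^{-3}$ is maximized at $u'=\min\set{\tfrac12(v+C_{\gmm_{20}}+20),u}$ and that $\int_1^\infty(u')^{-\omg+1}\,\ud u'<\infty$; this gives $C\,r^{-3}\bigl(\min\set{\tfrac12(v+\cdots),u},v\bigr)$, and then the \emph{two} lower bounds $r(\min\set{\cdots,u},v)\ge r(u,v)$ and $r(\min\set{\cdots,u},v)\ge C^{-1}u$ (both from \eqref{eq:weak-st:v-u-r}) are what produce the geometric-mean exponents $\tfrac12(4-\eta_0)$ and $\tfrac12(2+\eta_0)$. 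So the interpolation you correctly anticipate as necessary should be applied in the opposite part of the integral from where you put it.
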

\begin{proof}
By \eqref{eq:gauge:eps}, \eqref{eq:weak-st:me-bnd}, \eqref{eq:weak-st:kpp-bnd}, \eqref{eq:weak-st:dur-large-r}, \eqref{eq:weak-st:mdf}, \eqref{eq:weak-st:kppdf}, \eqref{eq:weak-st:rdf}, \eqref{eq:weak-st:durdf} and \eqref{eq:r-alp-df}, we have
\begin{equation*}
	\Abs{\frac{\rd_{u} \rd_{v} r}{r} -  \frac{\rd_{u} \rd_{v} \rbg}{\rbg}} \leq C r^{-3} \eps_{0} \quad \hbox{ in } \calD \cap \set{r \geq 20}.
\end{equation*}
Combined with \eqref{eq:dlt-adm:phi} and \eqref{eq:weak-st:rdf} (which ensures that $\rbg$ and $r$ are comparable), we have
\begin{equation*}
	\int_{1}^{u} \abs{G}(u', v) \, \ud u'
	\leq C \eps_{0} \int_{1}^{u} r^{-3} \abs{\psibg} (u', v) \, \ud u' 
	\leq C \dlt \eps_{0} \int_{1}^{u} r^{-2} (u')^{-\omg + 1} \min \set{(u')^{-1}, r^{-1}} \, \ud u'.
\end{equation*}
We split the integral on the RHS and estimate it using \eqref{eq:weak-st:dur-large-r} (for the second integral) as follows:
\begin{align*}
& \hskip-2em
\int_{1}^{u} r^{-2} (u')^{-\omg + 1} \min \set{(u')^{-1}, r^{-1}} \, \ud u' \\
\leq & \int_{1}^{\min \set{\frac{1}{2}(v+C_{\gmm_{20}}+20), u}} r^{-3} (u')^{-\omg + 1} \, \ud u' 
 + \int_{\frac{1}{2}(v+C_{\gmm_{20}}+20)}^{\max \set{\frac{1}{2}(v+C_{\gmm_{20}}+20), u}} r^{-2 - \frac{1}{2} \eta_{0}} (u')^{-\omg + \frac{1}{2}\eta_{0}}  \, \ud u' \\
\leq & C r^{-3}(\min\set{\tfrac{1}{2}(v+C_{\gmm_{20}}+20), u}, v) + C (v+C_{\gmm_{20}}+20)^{-\frac{1}{2} (2\omg -\eta_{0})} r^{-\frac{1}{2}(2+\eta_{0})} (u,v).
\end{align*}
On the one hand, $r(\min\set{\tfrac{1}{2}(v+C_{\gmm_{20}}+20), u}, v) \geq r(u, v)$ by monotonicity of $r$. On the other hand, by \eqref{eq:weak-st:v-u-r}, we have $r(\min\set{\tfrac{1}{2}(v+C_{\gmm_{20}}+20), u}, v) \geq C^{-1} (v+C_{\gmm_{20}} + 20)$ and $(v+C_{\gmm_{20}} + 20) \geq C^{-1} u$ in $\calD \cap \set{r \geq 20}$. Since $\omg > 2$, we have
\begin{equation*}
\int_{1}^{u} r^{-2} (u')^{-\omg + 1} \min \set{(u')^{-1}, r^{-1}} \, \ud u' 
\leq C u^{-\frac{1}{2} (4-\eta_{0})} r^{-\frac{1}{2} (2+\eta_{0})}(u, v),
\end{equation*}
which completes the proof of \eqref{eq:int-char-G}. \qedhere
\end{proof}

We are now prepared to give a proof of Proposition~\ref{prop:psidf-decay}.
\begin{proof}[Proof of Proposition~\ref{prop:psidf-decay}]
Let
\begin{equation} \label{eq:psidf-decay-bt}
	\bt = \min \set{\tfrac{1}{2}(3 - \eta_{0}), \tfrac{1}{2}(2 \omg - 2 -  \eta_{0})}.
\end{equation}
By the hypothesis $0 < \eta_{0} < \min \set{2 \omg - 4, 1}$, we have $\bt > 1$.

For any $\tau > 1$, we introduce the quantity
\begin{equation*}
	\calB(\tau) = \sup_{\calD(1, \tau) \cap \set{r \geq R_{3}}} u^{\bt} \abs{\psidf},
\end{equation*}
where $R_{3} \geq R_{0}$ is specified below. Our goal is to show that $\calB(\tau) \leq C \eps_{0}$ with a constant $C$ that depends only on $\omg, \eta_{0}, \ebg, \Lmb$ (in particular, independent of $\tau$).

For $(u,v)\in \calD(1, \tau) \cap \set{r \geq R_{3}}$, we claim that
\begin{equation} \label{eq:psidf-decay:pf1}
	\abs{\rd_{v} \psidf}(u, v) \leq C u^{-\bt} r^{-\frac{1}{2}(2 + \eta_{0})}(u, v) + C u^{-\bt} r^{-2} \calB(\tau).
\end{equation}
To prove \eqref{eq:psidf-decay:pf1}, we first use the wave equation for $\psi$ to estimate
\begin{align*}
	\abs{\rd_{v} \psidf}(u,v)
	\leq & \abs{\rd_{v} \psidf}(1,v) + \int_{1}^{u} \abs{G} (u', v) \, \ud u' \\
	& + \int_{1}^{\min\set{\frac{1}{2}(v + C_{\gmm_{20}}+20), u}} \frac{2 (\varpi - \frac{\e^{2}}{r})}{r^{3}} \kpp (-\dur) \abs{\psidf} (u', v) \, \ud u' \\
	& + \int_{\frac{1}{2}(v + C_{\gmm_{20}}+20)}^{\max\set{\frac{1}{2}(v + C_{\gmm_{20}}+20), u}} \frac{2 (\varpi - \frac{\e^{2}}{r})}{r^{3}} \kpp (-\dur) \abs{\psidf} (u', v) \, \ud u' .
\end{align*}
We bound each term on the RHS. For the first term, we use \eqref{eq:gauge:dvrphi} and the simple fact that $r(1, v) \geq C^{-1} \max\set{u, r(u, v)}$ (which follows from \eqref{eq:weak-st:v-u-r}) to estimate
\begin{equation*}
	\abs{\rd_{v} \psidf}(1, v) \leq C r^{-\omg}(1, v) \eps_{0} \leq C u^{-\omg+1+\frac{1}{2}\eta_{0}} r^{-\frac{1}{2}(2+\eta_{0})}(u, v) \eps_{0},
\end{equation*}
which is acceptable. For the second term, we apply Lemma~\ref{lem:int-char-G}. For the third term, we use Lemma~\ref{lem:psidf-decay-pre} together with \eqref{eq:weak-st:me-bnd}, \eqref{eq:weak-st:kpp-bnd}, \eqref{eq:weak-st:dur-large-r} to estimate
\begin{align*}
& \hskip-2em
\int_{1}^{\min\set{\frac{1}{2}(v + C_{\gmm_{20}}+20), u}} \frac{2 (\varpi - \frac{\e^{2}}{r})}{r^{3}} \kpp (-\dur) \abs{\psidf} (u', v) \, \ud u' \\
\leq & C \eps_{0} \int_{1}^{\min\set{\frac{1}{2}(v + C_{\gmm_{20}}+20), u}} \frac{1}{(u')^{-\frac{1}{2}(3-\eta_{0})}}\frac{(\varpi - \frac{\e^{2}}{r})}{r^{5/2}} \kpp (-\dur) (u', v) \, \ud u' \\
\leq & C r^{-\frac{5}{2}}(\min\set{\tfrac{1}{2}(v + C_{\gmm_{20}}+20), u}, v) \eps_{0}
\leq C u^{-(3-\eta_{0})/2} r^{-(2+\eta_{0})/2}(u,v) \eps_{0},
\end{align*}
where we used \eqref{eq:weak-st:v-u-r} and the monotonicity of $r$ to get $r(\min\set{\tfrac{1}{2}(v + C_{\gmm_{20}}+20), u}, v)\geq C^{-1}\max\{u, r(u,v)\}$ in the final inequality.
For the last term, we estimate
\begin{align*}
& \hskip-2em
\int_{\frac{1}{2}(v + C_{\gmm_{20}}+20)}^{\max\set{\frac{1}{2}(v + C_{\gmm_{20}}+20), u}} \frac{2 (\varpi - \frac{\e^{2}}{r})}{r^{3}} \kpp (-\dur) \abs{\psidf} (u', v) \, \ud u' \\
\leq & C (v + C_{\gmm_{20}}+20)^{-\bt} \calB(\tau) \int_{\frac{1}{2}(v + C_{\gmm_{20}}+ 20)}^{\max\set{\frac{1}{2}(v + C_{\gmm_{20}}+20), u}} \frac{(\varpi - \frac{\e^{2}}{r})}{r^{3}} \kpp (-\dur)  (u', v) \, \ud u' \\
\leq & C u^{-\bt} r^{-2}(u, v) \calB(\tau) .
\end{align*}
In the last inequality, we used the fact that $v+C_{\gmm_{20}} + 20 \geq u$ in $\calD \cap \set{r \geq 20}$ by \eqref{eq:weak-st:v-u-r}.

For any $(u, v) \in \calD(0, \tau) \cap \set{r \geq R_{3}}$, let $v_{R_{3}}(u)$ be defined by $r(u, v_{R_{3}}(u)) = R_{3}$. Since $R_{3} \geq R_{0}$, $(u, v_{R_{3}}(u)) \in \calD(0, \tau) \cap \set{r \geq R_{3}}$. Integrating \eqref{eq:psidf-decay:pf1} from $v_{R_{3}}(u)$ to $v$, and using \eqref{eq:weak-st:dvr-large-r}, we obtain
\begin{align*}
	\abs{\psidf} (u,v) 
\leq & \abs{\psidf}(u, v_{R_{3}}(u)) + \int_{v_{R_{3}}(u)}^{v} \abs{\rd_{v} \psidf}(u, v') \, \ud v' \\
\leq & C u^{-\frac{1}{2}(3-\eta_{0})} R_{3}^{1/2} \eps_{0} + C u^{-\bt} R_{3}^{-\eta_{0}/2} \eps_{0} + C u^{-\bt} R_{3}^{-1} \calB(\tau).
\end{align*}
In the last inequality, we used Lemma~\ref{lem:psidf-decay-pre} for the first term. By the definition of $\calB(\tau)$, we have
\begin{equation*}
	\calB(\tau) \leq C (R_{3}^{1/2} + R_{3}^{-\eta_{0}/2}) \eps_{0} + C R_{3}^{-1} \calB(\tau).
\end{equation*}
Choosing $R_{3}$ large enough (possibly depending only on $\omg, \eta_{0}, \ebg, \Lmb$ by our conventions), we may absorb the last term into the LHS. Since $\tau > 1$ is arbitrary, \eqref{eq:psidf-decay} follows. \qedhere

\end{proof}

\subsection{Proof of Theorem~\ref{thm:L-st-ch}} \label{subsec:main-st-pf} 
We may finally complete the proof of Theorem~\ref{thm:L-st-ch}.
\begin{proof}[Proof of Theorem~\ref{thm:L-st-ch}]
Note that \eqref{eq:L-st-ch:r} is an immediate consequence of \eqref{eq:weak-st:rdf}; therefore, it only remains to establish \eqref{eq:L-st-ch:L}.
In this proof, we abbreviate $\mathfrak{L} = \mathfrak{L} \restriction_{\calD}$ and $\overline{\mathfrak{L}} = \overline{\mathfrak{L}} \restriction_{\calD}$.
In the future-normalized coordinates $(u, v)$, the quantities $\mathfrak{L}$ and $\overline{\mathfrak{L}}$ can be expressed as
\begin{equation*}
	\mathfrak{L} = - \int_{1}^{\infty} 2 M(u) \Phi(u) \, \ud u, \quad
	\overline{\mathfrak{L}} = - \int_{1}^{\infty} 2 \overline{M}(u) \overline{\Phi}(u) \, \ud u,
\end{equation*}
and thus
\begin{align*}
	\mathfrak{L} - \overline{\mathfrak{L}} 
	=& - \int_{1}^{\infty} 2 M(u) \Phi(u) - 2 \overline{M}(u) \Phibg(u) \, \ud u \\
	=& - \int_{1}^{\infty} 2 M(u) (\Phi - \Phibg)(u)  \, \ud u - \int_{1}^{\infty} 2 (M - \overline{M})(u) \Phibg (u)  \, \ud u.
\end{align*}
By $\varpibg_{f} \leq \rbg_{\EH} = 1$, \eqref{eq:dlt-adm:m}, \eqref{eq:gauge:eps} and \eqref{eq:weak-st:mdf}, we have
\begin{equation*}
\abs{M - \overline{M}}(u) \leq C \eps_{0}, \quad \abs{M}(u) \leq C.
\end{equation*}
On the other hand, by \eqref{eq:dlt-adm:phi} and \eqref{eq:Phidf-decay}, we have
\begin{equation*}
\abs{\Phi - \Phibg}(u) \leq C u^{-\bt} \eps_{0}, \quad
\abs{\Phibg}(u) \leq C u^{-\omg + 1} \dlt
\end{equation*}
where $\bt, \omg-1 > 1$. As a consequence, $M(u) (\Phi - \Phibg)(u)$ and $(M - \overline{M})(u) \Phibg (u)$ are integrable over $u \in [1, \infty)$, with integrals bounded above by $C\ep_0$, and the desired conclusion follows. \qedhere
\end{proof}

\section{Instability of no backscattering: Proof of Theorem~\ref{thm:instability}} \label{sec:instability}
The goal of this section is to establish Theorem~\ref{thm:instability} concerning instability of the condition $\mathfrak{L}_{(\omg_{0}) \infty} = 0$ when $\omg_0\geq 3$. The section is structured as follows:
\begin{itemize}
\item \textbf{Section~\ref{sec:instability.ideas}.} We briefly discuss the ideas of the proof.
\item \textbf{Section~\ref{subsec:const-eq}.} We discuss some properties of the Einstein--Maxwell--(real)--scalar--field constraint equations in spherical symmetry, and solve the constraint equations near the asymptotically flat end. 
\item \textbf{Section~\ref{subsec:inst-key}.} We carry out the proof of Theorem~\ref{thm:instability}.
\end{itemize}

\subsection{Ideas of the proof}\label{sec:instability.ideas}
Given an initial data set $\overline{\Tht}$ with $\mathfrak{L}_{(\omg_{0}) \infty} = 0$ for $\omg_0\geq 3$, the goal is to construct perturbations $\Tht_{\eps}$ (for small $\eps$) of the initial data $\overline{\Tht}$, such that $\mathfrak{L}[\Tht_\ep] -\mathfrak{L}[\overline{\Tht}] \geq c \eps$ for some $c > 0$. The main idea of the proof is based on \cite[Theorem~1.5]{LO.instab}, except that now we also need to control all the contributions coming from both the background and the perturbed geometry. More precisely, the perturbations that we construct have the following features:
\begin{enumerate}
\item We choose the perturbation to be small, smooth and essentially compactly supported (more precisely, only the $f, h$ components have a small polynomial tail, which is forced by the constraint equation); therefore it is small in every topology $d_{k, \omg}^{+}$ (as long as the background initial data set is also $C^k$). 
\item Moreover, we take the $\phi, \dot{\phi}$ components of the perturbation to be outgoing and place it in a region where $r$ is sufficiently large; as a consequence, the leading order effect of the perturbation is localized in a large-$r$ region, which is close to the Minkowski spacetime (by asymptotic flatness) and thus explicit computation can be performed. 
\item Finally, we use the stability theorems established in Sections~\ref{sec:extr} and \ref{sec:L-stability} to show that the contribution of the remainder is small. 
\end{enumerate}

A conceptual difference from \cite[Proof of Theorem~1.5]{LO.instab}, however, is that here we \emph{directly} perturb the Cauchy data, whereas in \cite{LO.instab} the perturbation was constructed from a characteristic data set; see Section~\ref{subsec:const-eq}. When applied to the linear setting of \cite{LO.instab}, we remark that this feature trivializes the ``proof'' of compact support of the perturbation on $\Sgm_{0}$. On the other hand, in order to be able to perturb the Cauchy data, we will need to solve the constraint equations. This will be achieved in Section~\ref{subsec:const-eq}.

\subsection{Constraint equations on a Cauchy hypersurface} \label{subsec:const-eq}
In general, prescription of a Cauchy data set for the Einstein equation is not a straightforward task due to the presence of the constraint equations, which by themselves form a nontrivial system of underdetermined PDEs. Fortunately, the constraint equations \eqref{ham.con} and \eqref{mom.con} reduce to a system of ODEs under spherical symmetry, and at least in the regime we are interested in, it is not difficult to identify the part (which we call the \emph{reduced initial data set}) of the Cauchy data set which can (essentially) be freely prescribed. What we present below relies on using the modified mass and requires $(1-\f{2\varpi}{r}+\f{\e^2}{r^2})$ to be non-zero. It works near an asymptotically flat end, which is sufficient for our purposes in Section~\ref{subsec:inst-key}.

Let $\Tht = (r, f, h, \ell, \phi, \dot{\phi}, \e)$ be an admissible Cauchy data set on $\Sgm_{0}$. We begin by introducing a double null coordinate system $(U, V)$ on a (local) globally hyperbolic future development of $\Sgm_{0}$ normalized by the condition $- \frac{\ud U}{\ud \rho} = \frac{\ud V}{\ud \rho} = 1$ on $\Sgm_{0}$. Consider now the following subset of $\Sgm_{0}$:
\begin{equation*}
\Sgm_{0}^{+} = \set{\rho \in \Sgm_{0} : \forall \rho' \geq \rho, \ r(\rho') \geq 10 \max\set{\varpi_{i}, \abs{\e}}, \ \rd_{U} r \restriction_{\Sgm_{0}}(\rho') < 0, \  \rd_{V} r \restriction_{\Sgm_{0}}(\rho') > 0}.
\end{equation*}
Clearly, $\Sgm_{0}^{+}$ is a half-open interval with a finite endpoint on the left, which we denote by $\rho_{\ast}^{+}$. 

A \emph{reduced initial data set} on $\Sgm_{0}^{+}$ corresponding to $\Tht$ consists of four numbers $r_{\ast}, \psi_{\ast}, \varpi_{\ast}, \e \in \bbR$ and four functions $\dvrd, \durd, \alpd, \btd : \Sgm_{0}^{+} \to \bbR$ given by
\begin{equation} \label{eq:reduced-data-set}
\begin{gathered}
	r_{\ast} = r (\rho^{+}_{\ast}), \quad
	\psi_{\ast} = \psi (\rho^{+}_{\ast}), \quad
	\varpi_{\ast} = \varpi(\rho^{+}_{\ast}), \\
	\dvrd (\rho) = \rd_{V} r \restriction_{\Sgm_{0}^{+}} (\rho), \quad 
	\durd (\rho) = \rd_{U} r \restriction_{\Sgm_{0}^{+}} (\rho), \quad 
	\alpd (\rho) = \rd_{V} \psi \restriction_{\Sgm_{0}^{+}} (\rho), \quad 
	\btd (\rho) = \rd_{U} \psi \restriction_{\Sgm_{0}^{+}} (\rho).
\end{gathered}
\end{equation}
Observe that these numbers and functions obey the following conditions:
\begin{equation} \label{eq:reduced-data-cond}
	r_{\ast} \geq 10 \max \set{\lim_{\rho \to \infty} \varpi(\rho), \abs{\e}}, \quad
	r_{\ast} > 0, \quad \varpi_{\ast} > 0, \quad
	\durd < 0, \quad
	\dvrd > 0.
\end{equation}

It turns out that the conditions in \eqref{eq:reduced-data-cond} are enough for the converse direction to hold. More precisely, let us call a collection $(r_{\ast}, \psi_{\ast}, \varpi_{\ast}, \e, \dvrd, \durd, \alpd, \btd)$ of four numbers and four functions on $\Sgm_{0}^{+}$ a \emph{reduced initial data set} on $\Sgm_{0}^{+}$ if they satisfy \eqref{eq:reduced-data-cond}. Then we have the following statement:
\begin{proposition} \label{prop:reduced-data}
Given a reduced initial data set $(r_{\ast}, \psi_{\ast}, \varpi_{\ast}, \e, \dvrd, \durd, \alpd, \btd)$ on a half-open interval $\Sgm_{0}^{+} = [\rho_{\ast}^{+}, \infty)$ satisfying \eqref{eq:reduced-data-cond} (with $\varpi$ as in \eqref{eq:reduced-data:varpi} below) and
\begin{equation*}
	\dvrd, \durd \in C^{k+1}(\Sgm_{0}^{+}), \quad
	\alpd, \btd \in C^{k}(\Sgm_{0}^{+}) \qquad \hbox{ for some nonnegative } k \in \bbZ,
\end{equation*}
there uniquely exists a corresponding Cauchy data set $(r, f, h, \ell, \phi, \dot{\phi}, \e)$ on $\Sgm_{0}^{+}$ such that
\begin{equation*}
	r \in C^{k+2}(\Sgm_{0}^{+}), \quad
	\ell, f, \phi \in C^{k+1}(\Sgm_{0}^{+}), \quad
	h, \dot{\phi} \in C^{k}(\Sgm_{0}^{+}).
\end{equation*}
\end{proposition}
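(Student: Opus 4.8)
The plan is to reconstruct the Cauchy data $(r, f, h, \ell, \phi, \dot\phi)$ from the reduced data $(r_{\ast}, \psi_{\ast}, \varpi_{\ast}, \e, \dvrd, \durd, \alpd, \btd)$ by integrating a system of ODEs along $\Sgm_{0}^{+}$, then verifying that the resulting data satisfy Definition~\ref{def.Cauchy.data}, in particular the constraint equations \eqref{ham.con}--\eqref{mom.con}. The key observation is that once we know $r$ and $\psi = r\phi$ on $\Sgm_{0}^{+}$ together with their $\rd_U$- and $\rd_V$-derivatives, Lemma~\ref{lem:cauchy-to-char} (run in reverse) determines all of $(f, h, \ell, \phi, \dot\phi)$ algebraically, \emph{provided} we also know $\Omg$ on $\Sgm_{0}^{+}$; and $\Omg$ is in turn determined (via $\Omg^{2} = -4 \rd_U r \rd_V r / (1-\mu)$, i.e. the second identity displayed after Corollary~\ref{cor:cauchy-st:psidf-phidf}) once we know $1-\mu = 1 - \frac{2\varpi}{r} + \frac{\e^{2}}{r^{2}}$, hence once we know $\varpi$ and $r$. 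So the whole problem reduces to recovering $r$ and $\varpi$ on $\Sgm_{0}^{+}$ from the reduced data.

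\textbf{Key steps.} First I would recover $r$: the function $\rho \mapsto r(\rho)$ on $\Sgm_{0}^{+}$ solves $\frac{\ud r}{\ud\rho} = \rd_V r + \rd_U r = \dvrd + \durd$ with $r(\rho_{\ast}^{+}) = r_{\ast}$, so $r \in C^{k+2}(\Sgm_{0}^{+})$ is obtained by a single quadrature (and $r > 0$ follows from $r_{\ast} > 0$ together with monotonicity considerations; note $\dvrd > 0$). Similarly $\psi(\rho) = \psi_{\ast} + \int_{\rho_{\ast}^{+}}^{\rho} (\alpd + \btd)$, giving $\psi \in C^{k+1}$ and $\phi = \psi/r \in C^{k+1}$. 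Next I would recover $\varpi$. Recall from \eqref{eq:EMSF-r-phi-m} (or rather its restriction to $\Sgm_{0}$) that
\begin{equation} \label{eq:reduced-data:varpi}
	\varpi(\rho) = \varpi_{\ast} + \int_{\rho_{\ast}^{+}}^{\rho} \left( \rd_{V} \varpi + \rd_{U} \varpi \right) \, \ud \rho',
\end{equation}
where $\rd_{V} \varpi = \frac{1}{2} \frac{1-\mu}{\rd_V r} r^{2} (\rd_V \phi)^{2}$ and $\rd_U \varpi = \frac{1}{2} \frac{1-\mu}{\rd_U r} r^{2} (\rd_U \phi)^{2}$, and where $\rd_{V}\phi = (\alpd - \frac{\dvrd}{r}\psi)/r$, $\rd_{U}\phi = (\btd - \frac{\durd}{r}\psi)/r$ are already known functions of $\rho$ (built from $r, \psi, \dvrd, \durd, \alpd, \btd$) and $1-\mu = 1 - \frac{2\varpi}{r} + \frac{\e^{2}}{r^{2}}$. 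Substituting, \eqref{eq:reduced-data:varpi} becomes a scalar first-order ODE for $\varpi$ of the form $\frac{\ud\varpi}{\ud\rho} = G(\rho, \varpi)$ with $G$ affine in $\varpi$ (since $1-\mu$ is affine in $\varpi$ and appears linearly), hence globally Lipschitz in $\varpi$ on compact $\rho$-intervals; so it has a unique solution on all of $\Sgm_{0}^{+}$ once we check non-degeneracy of the coefficient, which I discuss below. The regularity $\varpi \in C^{k+1}$ follows from $G$ being $C^{k}$ in $\rho$. With $r$ and $\varpi$ in hand, set $\Omg^{2} := \frac{-4 \durd \dvrd}{1-\mu}$ (positive because $\durd < 0$, $\dvrd > 0$, and $1-\mu > 0$ — see below), so $\Omg = f \in C^{k+1}$. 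Finally Lemma~\ref{lem:cauchy-to-char}, read as a system of linear equations for $(\rd_\rho r, \ell, \rd_\rho\phi, \dot\phi, \rd_\rho f, h)$ in terms of $(\rd_U r, \rd_V r, \rd_U\phi, \rd_V\phi, \rd_U\log\Omg, \rd_V\log\Omg, \Omg)$, yields $\ell = \frac{r}{f}(\rd_V r + \rd_U r \cdot (-1)) \cdots$ — more precisely $\ell = \frac{r}{f}(\dvrd - (- \durd)) / $ wait, from $\rd_V r = \frac12 \rd_\rho r + \frac{f}{2r}\ell$ and $\rd_U r = -\frac12\rd_\rho r + \frac{f}{2r}\ell$ one gets $\frac{f}{r}\ell = \dvrd + \durd$ and $\rd_\rho r = \dvrd - \durd$, hence $\ell = \frac{r}{f}(\dvrd + \durd) \in C^{k+1}$; similarly $\dot\phi = \frac{1}{f}(\rd_V\phi + \rd_U\phi) \in C^{k}$, and $h = \rd_\rho f + 2f \rd_V\log\Omg - \cdots$ giving $h \in C^{k}$ once we differentiate the defining formula for $\Omg^{2}$ (one derivative of $\varpi \in C^{k+1}$ and of the already-$C^{k+1}$ quantities, landing in $C^{k}$). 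One then verifies directly that this $(r,f,h,\ell,\phi,\dot\phi,\e)$ satisfies \eqref{ham.con}--\eqref{mom.con}: by construction the Raychaudhuri-type and mass identities hold along $\Sgm_{0}^{+}$, and the constraint equations \eqref{ham.con}, \eqref{mom.con} are algebraically equivalent to these (this is the content of the fact, stated in Section~\ref{sec.SS}, that the double-null system \eqref{eq:EMSF-ray}, \eqref{eq:EMSF-r-phi-m} is equivalent to \eqref{eq:EMSF-wave}, \eqref{eq:EMSF-ray-orig} when $1-\mu \ne 0$, combined with Lemma~\ref{lem:cauchy-to-char}). Uniqueness is clear since every step above was forced.

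\textbf{Main obstacle.} The one genuine point to be careful about is that the construction requires $1 - \mu = 1 - \frac{2\varpi}{r} + \frac{\e^{2}}{r^{2}} > 0$ throughout $\Sgm_{0}^{+}$, both to make $\Omg^{2} = -4\durd\dvrd/(1-\mu)$ positive and to keep the ODE \eqref{eq:reduced-data:varpi} for $\varpi$ well-posed (no blow-up). This is where the hypotheses \eqref{eq:reduced-data-cond} enter decisively: from $r_{\ast} \geq 10\max\{\lim\varpi, |\e|\}$ and the fact that $\rd_U\varpi, \rd_V\varpi \geq 0$ (so $\varpi$ is non-decreasing in $\rho$ and bounded above by its limit, which is finite — this needs a short a priori argument, e.g. a bootstrap on the interval where $\varpi \leq 2\varpi_{\ast}$ say, using that $r$ is large so $\frac{2\varpi}{r} + \frac{\e^{2}}{r^{2}}$ stays $\leq \frac{1}{2}$, whence $\frac12 \leq 1-\mu \leq 1$), one propagates $r \geq 10\max\{\varpi, |\e|\}$ and hence $1 - \mu \geq \frac12$ on all of $\Sgm_{0}^{+}$. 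I expect this bootstrap — closing the bound $r \geq 10\max\{\varpi,|\e|\}$ simultaneously with the solvability of \eqref{eq:reduced-data:varpi} — to be the main (though still routine) technical point; everything else is quadrature and bookkeeping of regularity. Once $1-\mu \geq \frac12$ is established, the ODE for $\varpi$ has globally Lipschitz right-hand side on $\Sgm_{0}^{+}$, the Picard--Lindelöf theorem gives a unique global $C^{k+1}$ solution, and the remaining verifications are immediate.
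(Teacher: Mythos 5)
Your proposal follows essentially the same route as the paper: recover $r$ and $\psi = r\phi$ by quadrature, recover $\varpi$ by solving a scalar ODE that is affine in $\varpi$ once $r,\psi$ are known, then read off $f$, $\ell$, $\phi$, $\dot\phi$, $h$ algebraically from Lemma~\ref{lem:cauchy-to-char} and the definitions, and finally check that the resulting data satisfy the constraints \eqref{ham.con}--\eqref{mom.con}. You also correctly flag the key non-degeneracy $1-\mu>0$ (needed both for $\Omg^{2}>0$ and for the ODE for $\varpi$ to make sense) as the one place the hypotheses on $r_{\ast}$ are used, which is where the paper spends its attention as well.

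Two concrete issues. First, there is a persistent sign confusion between $\rd_{\rho}$ and $f\mathbf{n}$: under the normalization $\ud U/\ud\rho=-1$, $\ud V/\ud\rho=1$ of Lemma~\ref{lem:cauchy-to-char}, one has $\rd_{\rho}=\rd_{V}-\rd_{U}$ and $f\mathbf{n}=\rd_{V}+\rd_{U}$. You initially write $\frac{\ud r}{\ud\rho}=\dvrd+\durd$ and $\psi=\psi_{\ast}+\int(\alpd+\btd)$, which is wrong; the correct quadratures are $\frac{\ud r}{\ud\rho}=\dvrd-\durd$ (which you do correct yourself mid-argument when deriving $\ell$) and $\frac{\ud\psi}{\ud\rho}=\alpd-\btd$ (which you do not correct). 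Note that $\dvrd+\durd$ has no definite sign, so your original quadrature does not even guarantee $r>0$, whereas $\dvrd-\durd>0$ by \eqref{eq:reduced-data-cond} is what makes $r$ increasing.

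Second, the verification of the constraints is sketched rather than carried out. Your claim that \eqref{ham.con}--\eqref{mom.con} follow ``by construction'' from the Raychaudhuri/mass identities is morally right, but it does require a short direct computation; the paper's Step~2 writes the constraints in terms of $(r,f,h,\ell,\phi,\dot\phi)$ and checks them algebraically — in particular, the paper's formula \eqref{eq:reduced-data:h} is chosen precisely so that the momentum constraint \eqref{mom.con} holds, and the Hamiltonian constraint is then verified by re-expressing $\varpi'$ in two ways. Your alternative route (defining $h$ by differentiating $\Omg^{2}=-4\dur\dvr/(1-\mu)$ along $\Sgm_{0}$) would need the $V$- and $U$-derivatives of $\dur$, $\dvr$, and $\mu$, which in turn require invoking the evolution equations \eqref{eq:EMSF-r-phi-m} restricted to $\Sgm_{0}$; this is fine, but should be stated explicitly since at this stage one only has data, not a spacetime. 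Neither issue reflects a wrong approach — the route is the right one — but the sign error in the $\psi$ quadrature and the missing explicit verification of \eqref{ham.con}--\eqref{mom.con} would both need to be fixed for the proof to stand.
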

\begin{remark} \label{rem:reduced-data}
To have $\rd_{\rho} f, \ell \in C^{k}(\Sgm_{0}^{+})$, we need continuity of one higher order derivatives of $\dvrd, \durd$, which are at the level of $\rd_{\rho} r, \ell$. In our application below, however, this discrepancy does not arise thanks to the particular form of our perturbation; see Lemma~\ref{lem:inst-cont}.
\end{remark}
\begin{proof}
The idea is to show that a reduced initial data set gives rise to an initial data set of the system \eqref{eq:EMSF-r-phi-m} for $r, \phi$ and $\varpi$. Then by the equivalence of \eqref{eq:EMSF-r-phi-m} and \eqref{eq:EMSF-wave}--\eqref{eq:EMSF-ray-orig}, we may define the desired Cauchy data set $(r, f, h, \ell, \phi, \dot{\phi}, \e)$ on $\Sgm_{0}^{+}$.

\pfstep{Step~1: Prescribing $(r, f, h, \ell, \phi, \dot{\phi}, \e)$} We begin by prescribing $\e$ as the given real number.

Next, in view of \eqref{eq:reduced-data-set}, the coordinate normalization conditions $- \frac{\ud U}{\ud \rho} = \frac{\ud V}{\ud \rho} = 1$ on $\Sgm_{0}$ and Lemma~\ref{lem:cauchy-to-char}, we define $r$ and $\psi = r \phi$ on $\Sgm_{0}^{+}$ by the equations
\begin{align}
	\rd_{\rho} r
	= & \rd_{V} r - \rd_{U} r = \underline{\dvr} - \underline{\dur} , \label{eq:reduced-data:r}\\
	\rd_{\rho} \psi
	= & \rd_{V} \psi - \rd_{U} \psi = \underline{\alp} - \underline{\bt} \label{eq:reduced-data:psi}, 
\end{align}
and the initial conditions $r(\rho^{+}_{\ast}) = r_{\ast}$, $\psi(\rho^{+}_{\ast}) = \psi_{\ast}$. Note that the signs conditions in \eqref{eq:reduced-data-cond} guarantee that $r$ is increasing. In particular, $r>0$, as is required. Then we define $\varpi$ by solving the ODE
\begin{equation} \label{eq:reduced-data:varpi}
\left\{
\begin{aligned}
	\rd_{\rho} \varpi 
	= & \rd_{V} \varpi - \rd_{U} \varpi 
%	=  \frac{1}{2} (1-\mu) \left( \rd_{V} r \left( (\rd_{V} r)^{-1} \rd_{V} \psi - r^{-1} \psi \right)^{2} + \left(-\rd_{U} r) ((-\rd_{U}r )^{-1} \rd_{U} \psi + r^{-1} \psi \right)^{2} \right) 
	=  \frac{1}{2} \left( 1 - \frac{2 \varpi}{r} + \frac{\e^{2}}{r^{2}} \right) \left( \underline{\dvr} \left( \underline{\dvr}^{-1} \underline{\alp} - r^{-1} \psi \right)^{2} + \left(- \underline{\dur}) ((-\underline{\dur} )^{-1} \underline{\bt} + r^{-1} \psi \right)^{2} \right), \\
	\varpi(\rho^{+}_{\ast}) = & \varpi_{\ast}.
\end{aligned}
\right.
\end{equation}
Clearly, after $r$ and $\psi$ on $\Sgm_{0}$ are determined, the ODE for $\varpi$ is linear and hence a unique solution exists on $\Sgm_{0}^{+}$. Moreover, it is easy to see that with the sign conditions in \eqref{eq:reduced-data-cond}, $\varpi$ is increasing. Hence, the first condition in \eqref{eq:reduced-data-cond}, together with the monotonicity of $r$ and $\varpi$ that we just established, guarantees that $1-\frac{2 \varpi}{r} + \frac{\e^{2}}{r^{2}}>0$. Define $f$ by $f> 0$ and
\begin{equation} \label{eq:reduced-data:f}
	f^{2} = \frac{4 \rd_{V} r (-\rd_{U} r)}{1-\frac{2 \varpi}{r} + \frac{\e^{2}}{r^{2}}}=\frac{4 \underline{\dvr} (-\underline{\dur})}{1-\frac{2 \varpi}{r} + \frac{\e^{2}}{r^{2}}}.
\end{equation}
Notice that by $\underline{\dvr}>0$, $\underline{\dur}<0$ (see \eqref{eq:reduced-data-cond}) and $1-\frac{2 \varpi}{r} + \frac{\e^{2}}{r^{2}}>0$, the RHS is positive. Hence, $f$ is well-defined and is positive as required. In a manner consistent with Lemma~\ref{lem:cauchy-to-char}, we then define $\ell$ and $\dot{\phi}$ by 
\begin{equation}\label{eq:reduced-data:ell-phidot}
\ell= \f{r}{f}(\underline{\dvr}+\underline{\dur}),\quad \dot{\phi}=\f{1}{rf}(\underline{\alp}-r^{-1}\underline{\dvr}\psi+\underline{\bt}-r^{-1}\underline{\dur}\psi).
\end{equation}
These are well-defined since $r>0$, $f>0$.
Finally, define $h$ by
\begin{equation} \label{eq:reduced-data:h}
h=\f{rf^2}{\underline{\dvr} - \underline{\dur}}\left(\f{\rd_\rho \ell}{r^2}-\f{\ell (\underline{\dvr} - \underline{\dur})}{r^3}+r^{-1}\dot{\phi}\left((\underline{\alp}-r^{-1}\underline{\dvr}\psi)-(\underline{\bt}-r^{-1}\underline{\dur}\psi)\right)\right).
\end{equation}
This is well-defined since $\underline{\dvr} - \underline{\dur}>0$ by \eqref{eq:reduced-data-cond}.

\pfstep{Step~2: Checking the constraint equations} We need to check that $(r, f, h, \ell, \phi, \dot{\phi}, \e)$ defined as above indeed satisfies the constraint equations \eqref{ham.con} and \eqref{mom.con}. For convenience, we introduce for the rest of this proof the notation $'=\f{d}{d\rho}$. One computes that
$$R_{\hat{g}}=-\f{4r''}{f^2 r}+ \f{4f'r'}{f^3 r}-\f{2(r')^2}{f^2r^2}+\f 2{r^2},\quad -|\hat{k}|_{\hat{g}}+(\mbox{tr}_{\hat{g}}\hat{k})^2=\f{4h\ell}{f^2 r^2}+\f{2\ell^2}{r^4}.$$
Therefore, \eqref{ham.con} can be given as follows
\begin{equation}\label{ham.con.2}
-\f{2r''}{f^2 r}+ \f{2f'r'}{f^3 r}-\f{(r')^2}{f^2r^2}+\f {1}{r^2}+\f{2h\ell}{f^2 r^2}+\f{\ell^2}{r^4}=\dot{\phi}^2+\f {1}{f^2} (\phi')^2-\f{\e^2}{r^4}.
\end{equation}
One also computes that \eqref{mom.con} is given by
\begin{equation}\label{mom.con.2}
\f{hr'}{rf^2}-\f{\ell'}{r^2}+\f{\ell r'}{r^3}=\dot{\phi}\phi'.
\end{equation}

By \eqref{eq:reduced-data:h}, and expressing $\underline{\alp}$, $\underline{\bt}$, $\underline{\dvr}$ and $\underline{\dur}$ in terms of $r$, $\ell$, $\phi$ and $\dot{\phi}$ using \eqref{eq:reduced-data:r}, \eqref{eq:reduced-data:psi} and \eqref{eq:reduced-data:ell-phidot}, one sees that \eqref{mom.con} is satisfied.

To check \eqref{ham.con.2}, we first express $\varpi'$ using $\varpi=\f r2+\f{\e^2}{2r}+\f 12\f{\ell^2}{r}-\f{r(r')^2}{2f^2}$ (which follows from \eqref{eq:reduced-data:f}) and apply \eqref{mom.con.2}:
\begin{equation*}
\begin{split}
\varpi'=&\f{r'}{2}-\f{\e^2 r'}{2r^2}-\f 12\f{\ell^2 r'}{r^2}+ \f{\ell \ell'}{r}-\left(\f{r(r')^2}{2f^2}\right)'\\
=&\f{r'}{2}-\f{\e^2 r'}{2r^2}-\f 12\f{\ell^2 r'}{r^2}+ \ell r\left(\f{hr'}{rf^2}+\f{\ell r'}{r^3}-\dot{\phi}\phi'\right)-\left(\f{(r')^3}{2f^2}+\f{rr'r''}{f^2}-\f{r(r')^2f'}{f^3}\right)\\
=&\f{r'}{2}-\f{\e^2 r'}{2r^2}+\f 12\f{\ell^2 r'}{r^2}+ \f{h\ell r'}{f^2}-\ell r\dot{\phi}\phi'-\f{(r')^3}{2f^2}-\f{rr'r''}{f^2}+\f{r(r')^2f'}{f^3}.
\end{split}
\end{equation*}
Combining this with \eqref{eq:reduced-data:varpi}, we thus obtain
\begin{equation*}
\begin{split}
0=&\f{r'}{2}-\f{\e^2 r'}{2r^2}+\f 12\f{\ell^2 r'}{r^2}+ \f{h\ell r'}{f^2}-\ell r\dot{\phi}\phi'-\f{(r')^3}{2f^2}-\f{rr'r''}{f^2}+\f{r(r')^2f'}{f^3}\\
&-\f{(r'-\f{f\ell}{r})(r'+\f{f\ell}{r})}{2 f^2} \left( \f {r^2}{2( r'+\f{f\ell}{r})} \left( -\phi'+f\dot{\phi} \right)^{2} - \f{r^2}{2(- r'+\f{f\ell}{r})} \left( \phi'+f\dot{\phi} \right)^{2} \right)\\
=&\f{r'}{2}-\f{\e^2 r'}{2r^2}+\f 12\f{\ell^2 r'}{r^2}+ \f{h\ell r'}{f^2}-\f{(r')^3}{2f^2}-\f{rr'r''}{f^2}+\f{r(r')^2f'}{f^3}-\f{r^2r'}{2}\left(\dot{\phi}^2+\f{1}{f^2}(\phi')^2\right),
\end{split}
\end{equation*}
which is clearly equivalent to \eqref{ham.con.2} since $r,\, r'>0$.

\pfstep{Step~3: Regularity and uniqueness of the Cauchy data set}
The regularity assertions follow directly from the definitions \eqref{eq:reduced-data:r}, \eqref{eq:reduced-data:psi}, \eqref{eq:reduced-data:varpi}, \eqref{eq:reduced-data:f}, \eqref{eq:reduced-data:ell-phidot} and \eqref{eq:reduced-data:h}. Finally, it is easy to see that in order to satisfy the constraint equations, the definition as in Step~1 is the only choice, and thus the prescription of the Cauchy data set is uniquely determined by reduced initial data set.
\qedhere
\end{proof}

\subsection{Construction of Cauchy data perturbation} \label{subsec:inst-key}
Let $\overline{\Tht} = (\rbg, \fbg, \hbg, \ellbg, \phibg, \dphibg, \ebg)$ be an $\omg_{0}$-admissible initial data set on $\Sgm_{0}$ for some $\omg_{0} \geq 3$ such that the corresponding maximal globally hyperbolic future development $(\gbg, \phibg, \Fbg)$ obeys
\begin{equation*}
\overline{\mathfrak{L}}_{(\omg_{0}) \infty} = \overline{\mathfrak{L}}_{(\omg_{0}) 0} + \overline{\mathfrak{L}} = 0.
\end{equation*}
Here, we have used the shorthand $\overline{\mathfrak{L}}_{(\omg_{0}) \infty}=\mathfrak{L}_{(\omg_{0}) \infty}[\overline{\Tht}]$, etc. We will use this convention for the rest of this section. In what follows, we refer to $(\gbg, \phibg, \Fbg)$ as the \emph{background solution}.

Let $\rho_*>1$ be a sufficiently large number such that $\rbg_*=\rbg(\rho_*)> 20\max \set{\varpibg_{i}, \ebg}$. Define $\Sgm_{0}^{+}\subset \Sgm_0$ to be the set
\begin{equation*}
\Sgm_{0}^{+} = \set{\rho \in \Sgm_{0} : \forall \rho' \geq \rho, \ r(\rho') \geq r_*, \ \rd_{U} r \restriction_{\Sgm_{0}}(\rho') < 0, \  \rd_{V} r \restriction_{\Sgm_{0}}(\rho') > 0}.
\end{equation*}

For a large number $R_{pert} > \rbg_*$ to be specified below, let $\rho_{pert}$ be the unique $\rho \in \Sgm_{0}^{+}$ such that $\rbg(\rho_{pert}) = R_{pert}$. We divide
\begin{equation*}
	\Sgm_{0}^{+} = \Sgm_{0, pert}^{+} \cup (\Sgm_{0}^{+} \setminus \Sgm_{0, pert}^{+}), \quad \hbox{where }
	\Sgm_{0, pert}^{+} = \Sgm_{0}^{+} \cap \set{\rho \geq \rho_{pert}}.
\end{equation*}
Let $\chi$ be a smooth function that satisfies the following properties:
\begin{equation} \label{eq:pert-chi}
	\mathrm{supp} \, \chi \subseteq [0, 1], \quad
	\sup \abs{\chi} \leq 1, \quad
	\int_{0}^{1} \chi(\rho) \, \ud \rho = 0 , \quad
	\int_{0}^{1} \int_{0}^{\rho} \chi(\rho') \, \ud \rho' \, \ud \rho\geq 2 c_{0}.
\end{equation}
for some fixed universal constant $c_{0} > 0$ (say $c_{0} = \frac{1}{20}$). We also introduce the notation 
\begin{equation*}
\tilde{\chi}(\rho) = \int_{-\infty}^{\rho} \chi(\rho') \, \ud \rho'.
\end{equation*}
We define a one-parameter family $\Tht_{\eps}$ of admissible initial data sets in the following fashion:
\begin{itemize}
\item On $\Sgm_{0} \setminus \Sgm_{0}^{+}$, let $\Tht_{\eps} = \overline{\Tht}$ for all $\eps$.
\item On $\Sgm_{0}^{+}$, let $\Tht_{\eps}$ be determined by the following reduced data set:
\begin{gather*}
	r_{(\eps)}(\rho^{+}_{\ast}) = \rbg(\rho^{+}_{\ast}), \quad
	\psi_{(\eps)}(\rho^{+}_{\ast}) = \psibg(\rho^{+}_{\ast}), \quad
	\varpi_{(\eps)}(\rho^{+}_{\ast}) = \varpibg(\rho^{+}_{\ast}), \quad
	\e_{(\eps)} = \ebg, \\
	\rd_{U} r_{(\eps)} \restriction_{\Sgm_{0}^{+}}(\rho) = \rd_{U} \rbg \restriction_{\Sgm_{0}^{+}}(\rho), \quad 
	\rd_{V} r_{(\eps)} \restriction_{\Sgm_{0}^{+}}(\rho) = \rd_{V} \rbg \restriction_{\Sgm_{0}^{+}}(\rho), \\
	\rd_{U} \psi_{(\eps)} \restriction_{\Sgm_{0}^{+}}(\rho) = - \eps \chi(\rho - \rho_{pert}) + \rd_{U} \psibg \restriction_{\Sgm_{0}^{+}}(\rho), \quad
	\rd_{V} \psi_{(\eps)} \restriction_{\Sgm_{0}^{+}}(\rho) = \rd_{V} \psibg \restriction_{\Sgm_{0}^{+}}(\rho),
\end{gather*}
where $(\rbg (\rho^{+}_{\ast}), \psibg (\rho^{+}_{\ast}), \varpibg(\rho^{+}_{\ast}), \ebg, \rd_{U} \rbg \restriction_{\Sgm_{0}^{+}}, \rd_{V} \rbg \restriction_{\Sgm_{0}^{+}}, \rd_{U} \psibg \restriction_{\Sgm_{0}^{+}}, \rd_{V} \psibg \restriction_{\Sgm_{0}^{+}})$ is the reduced data set corresponding to $\overline{\Tht}$ on $\Sgm_{0}^{+}$. 
\end{itemize}
By definition, the reduced data set for $\Tht_{\eps}$ only differs with that for $\overline{\Tht}$ in $\Sgm_{0, pert}^{+}$. Moreover, they only differ in the $\rd_{U} \psi_{(\eps)}$ component, by $\eps \chi(\rho - \rho_{pert})$. Since $\rd_{\rho} \psi_{(\eps)}(\rho) = (\rd_{V} - \rd_{U} ) \psi_{(\eps)} \restriction_{\Sgm_{0}^{+}} (\rho)$, we have
\begin{equation} \label{eq:inst:psi-eps}
	(\psi_{(\eps)} - \psibg )(\rho) = \eps \tilde{\chi}(\rho-\rho_{pert}).
\end{equation}
For $\eps>0$ sufficiently small, \eqref{eq:reduced-data-cond} indeed holds: $r(\rho_*) > \max \set{\varpibg_{i}, \ebg}$ holds for sufficiently small $\eps>0$ by standard stability results for ODEs, and the other inequalities in \eqref{eq:reduced-data-cond} are easily verified. We can therefore apply Proposition~\ref{prop:reduced-data} to obtain a unique Cauchy initial data set $\Tht_\eps$ on $\Sgm_0$.

Let us collect a few obvious facts about $\Tht_\eps$:
\begin{lemma}\label{lem:Thtep.facts}
The one-parameter family $\Tht_{\eps}$ satisfies the following properties:
\begin{itemize}
\item $\Tht_{0} = \overline{\Tht}$, 
\item $\mathfrak{L}_{(\omg_{0}) 0}[\Tht_{\eps}] = \mathfrak{L}_{(\omg_{0}) 0}[\overline{\Tht}]$ and $\mathfrak{L}'_{(\omg_{0}) 0}[\Tht_{\eps}] = \mathfrak{L}'_{(\omg_{0}) 0}[\overline{\Tht}]$
for each $\eps \in \bbR$, 
\item $\Tht_{\eps} = \overline{\Tht}$ in $\set{\rho \in \Sgm_{0} : \rho < \rho_{pert}}$, and
\item $\phi_{\eps}=\phibg$, $f_{\eps}\dot{\phi}_{\eps}=\fbg \overline{\dot{\phi}}$ in $\set{\rho \in \Sgm_{0} : \rho > \rho_{pert}+1}$.
\end{itemize}
\end{lemma}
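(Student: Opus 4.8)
\textbf{Proof proposal for Lemma~\ref{lem:Thtep.facts}.} The plan is to verify the four bullet points one at a time, each of them being a direct consequence of the construction of $\Tht_{\eps}$ via the reduced data set. For the first bullet, observe that when $\eps = 0$ the reduced data set prescribed for $\Tht_{0}$ on $\Sgm_{0}^{+}$ coincides \emph{verbatim} with the reduced data set $(\rbg(\rho_{\ast}^{+}), \psibg(\rho_{\ast}^{+}), \varpibg(\rho_{\ast}^{+}), \ebg, \rd_{U} \rbg, \rd_{V} \rbg, \rd_{U} \psibg, \rd_{V} \psibg)$ corresponding to $\overline{\Tht}$; since $\Tht_{0} = \overline{\Tht}$ on $\Sgm_{0} \setminus \Sgm_{0}^{+}$ by definition and the reconstruction map in Proposition~\ref{prop:reduced-data} is \emph{unique} given the reduced data, we conclude $\Tht_{0} = \overline{\Tht}$ everywhere on $\Sgm_{0}$ (using that $\overline{\Tht}$ itself is the Cauchy data reconstructed from its own reduced data, which holds by the correspondence in Section~\ref{subsec:const-eq}).

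For the third and fourth bullets, I would exploit the fact that the reduced data for $\Tht_{\eps}$ and $\overline{\Tht}$ agree everywhere except on $\Sgm_{0, pert}^{+} = \Sgm_{0}^{+} \cap \set{\rho \geq \rho_{pert}}$, and even there they differ only in the $\rd_{U} \psi_{(\eps)}$ component by the term $\eps \chi(\rho - \rho_{pert})$, whose support is contained in $[\rho_{pert}, \rho_{pert}+1]$ by \eqref{eq:pert-chi}. Since the reconstruction formulae \eqref{eq:reduced-data:r}, \eqref{eq:reduced-data:psi}, \eqref{eq:reduced-data:varpi} are solved by integration from the left endpoint $\rho_{\ast}^{+}$ (which lies to the left of $\rho_{pert}$), and the data $r_{\ast}, \psi_{\ast}, \varpi_{\ast}$ and the functions $\dvrd, \durd$ are unchanged, the quantities $r_{(\eps)}, \varpi_{(\eps)}$ agree with $\rbg, \varpibg$ on $[\rho_{\ast}^{+}, \rho_{pert}]$, and $\psi_{(\eps)}$ agrees with $\psibg$ there as well (the integrand $\underline{\alp} - \underline{\bt}$ differs only on $\set{\rho \geq \rho_{pert}}$). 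Hence all of $f, h, \ell, \phi, \dot\phi$ — which are algebraic expressions in $r, \psi, \varpi, \dvrd, \durd, \alpd, \btd$ and their derivatives via \eqref{eq:reduced-data:f}, \eqref{eq:reduced-data:ell-phidot}, \eqref{eq:reduced-data:h} — agree with the barred quantities on $\set{\rho < \rho_{pert}} \cap \Sgm_{0}^{+}$; combined with $\Tht_{\eps} = \overline{\Tht}$ on $\Sgm_{0} \setminus \Sgm_{0}^{+}$ this gives the third bullet. For the fourth bullet, on $\set{\rho > \rho_{pert}+1}$ the perturbation $\eps\chi(\rho - \rho_{pert})$ vanishes, so $\rd_{U}\psi_{(\eps)}$ coincides with $\rd_{U}\psibg$ there, but $\psi_{(\eps)}$ itself differs from $\psibg$ by the \emph{constant} $\eps \tilde\chi(\rho - \rho_{pert}) = \eps \int_{0}^{1}\chi = 0$ — wait, one must be careful: by \eqref{eq:inst:psi-eps} and the third condition in \eqref{eq:pert-chi}, $(\psi_{(\eps)} - \psibg)(\rho) = \eps \tilde\chi(\rho - \rho_{pert}) = 0$ for $\rho > \rho_{pert}+1$, so in fact $\psi_{(\eps)} = \psibg$ there and hence $\phi_{\eps} = \psi_{(\eps)}/r_{(\eps)}$; I then need that $r_{(\eps)} = \rbg$ on $\set{\rho > \rho_{pert}+1}$, which follows since $\dvrd, \durd$ are unchanged and $r_{(\eps)} - \rbg$ has vanishing $\rho$-derivative everywhere on $\Sgm_{0}^{+}$ with matching left endpoint, giving $\phi_{\eps} = \phibg$. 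For $f_{\eps}\dot\phi_{\eps}$, using \eqref{eq:reduced-data:f} and \eqref{eq:reduced-data:ell-phidot} one computes $f\dot\phi = \frac{1}{r}(\alpd + \btd - r^{-1}(\dvrd + \durd)\psi)$, times a factor involving only $r, \varpi$; tracking which of $\alpd, \btd, r, \varpi, \psi$ have changed on $\set{\rho > \rho_{pert}+1}$ (none of them) yields $f_{\eps}\dot\phi_{\eps} = \fbg\overline{\dot\phi}$ there.

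The second bullet, invariance of $\mathfrak{L}_{(\omg_{0}) 0}$ and $\mathfrak{L}'_{(\omg_{0}) 0}$, I would handle via the explicit formulae \eqref{eq:adm-id-limits}: $\mathfrak{L}'_{(\omg_{0}) 0}$ is a limit as $\rho \to -\infty$, hence depends only on the data near the left end $\Sgm_{0} \setminus \Sgm_{0}^{+}$ where $\Tht_{\eps} = \overline{\Tht}$, so it is trivially unchanged. For $\mathfrak{L}_{(\omg_{0}) 0}$, the relevant limit is as $\rho \to +\infty$; since $\chi(\rho - \rho_{pert})$ is compactly supported in $[\rho_{pert}, \rho_{pert}+1]$, the data $\Tht_{\eps}$ and $\overline{\Tht}$ agree for all sufficiently large $\rho$ (both in the $\phi, f\dot\phi$ components by the fourth bullet, and in $r, \rd_\rho r, f\ell$ by the reconstruction, using that $\dvrd, \durd$ are unchanged and $r_{(\eps)} = \rbg, \varpi_{(\eps)}=\varpibg$ for $\rho$ large), so the limit defining $\mathfrak{L}_{(\omg_{0}) 0}$ is the same.

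The main obstacle I anticipate is purely bookkeeping: one has to check carefully that \emph{all} the reconstructed fields $f_{\eps}, h_{\eps}, \ell_{\eps}$ — not just $r, \psi, \varpi$ — inherit the ``agreement away from $[\rho_{pert}, \rho_{pert}+1]$'' property, which requires noting that each is an explicit algebraic function (via \eqref{eq:reduced-data:f}--\eqref{eq:reduced-data:h}) of quantities that have already been shown to agree, \emph{including their $\rho$-derivatives} — and for $h_{\eps}$ this involves $\rd_\rho \ell_{\eps}$, hence one more derivative. But since the perturbation is in $\rd_{U}\psi$ only and is smooth and compactly supported, no regularity is lost (cf. Remark~\ref{rem:reduced-data}), and on $\set{\rho < \rho_{pert}}$ and $\set{\rho > \rho_{pert}+1}$ these derivatives vanish identically in the difference. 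None of the four bullets requires any estimate; they are all exact identities that follow from the construction, so I do not expect any genuine analytic difficulty.
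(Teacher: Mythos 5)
Your proof is correct, and it is the only natural argument: the paper itself regards these facts as ``obvious'' consequences of the construction and gives no proof, so there is nothing to compare against. Your verification tracks exactly what must be checked: (i) the reconstruction map of Proposition~\ref{prop:reduced-data} is inverse to the map $\Tht \mapsto$ reduced data, which handles the first bullet; (ii) the ODE/algebraic formulae \eqref{eq:reduced-data:r}--\eqref{eq:reduced-data:h} are solved from the left endpoint $\rho_\ast^+ < \rho_{pert}$, and since $\chi$ is smooth with support in $[0,1]$ the reduced data (together with all $\rho$-derivatives) agree on $[\rho_\ast^+, \rho_{pert}]$, which gives the third bullet after combining with the defined agreement on $\Sgm_0 \setminus \Sgm_0^+$; (iii) $\underline{\dvr}, \underline{\dur}$ are globally unchanged, so $r_{(\eps)} = \rbg$ on all of $\Sgm_0^+$, and the condition $\int_0^1 \chi = 0$ together with \eqref{eq:inst:psi-eps} gives $\psi_{(\eps)} = \psibg$ on $\set{\rho > \rho_{pert}+1}$, from which the fourth bullet follows by the algebraic formulae for $\phi$ and $f\dot{\phi}$; (iv) the second bullet then follows from \eqref{eq:adm-id-limits}, since $\mathfrak{L}'_{(\omg_0)0}$ sees only $\rho \to -\infty$ and $\mathfrak{L}_{(\omg_0)0}$ sees only $\rho \to +\infty$, and in both regimes the data coincide eventually (noting also that $f\ell = r(\underline{\dvr}+\underline{\dur})$ is unchanged). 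One tiny inaccuracy: the phrase ``times a factor involving only $r, \varpi$'' when discussing $f\dot{\phi}$ is spurious --- from \eqref{eq:reduced-data:ell-phidot}, $f\dot{\phi} = \tfrac{1}{r}\bigl(\underline{\alp} + \underline{\bt} - r^{-1}(\underline{\dvr}+\underline{\dur})\psi\bigr)$ exactly, with no additional factor --- but this does not affect the conclusion, as all the quantities appearing are unchanged on $\set{\rho > \rho_{pert}+1}$.
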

Moreover, the following lemma holds:
\begin{lemma} \label{lem:inst-cont}
For each $\eps \in \bbR$ sufficiently small, $\Tht_{\eps}$ is an $\omg_{0}$-admissible Cauchy data set. Moreover, the mapping $\eps \mapsto \Tht_{\eps}$ is continuous with respect to the metric $d^{+}_{2, \omg}$ for any $\omg > 2$.
\end{lemma}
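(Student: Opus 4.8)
The claim has two parts: (1) that $\Tht_{\eps}$ is $\omg_{0}$-admissible for small $\eps$, and (2) that $\eps \mapsto \Tht_{\eps}$ is continuous in $d^{+}_{2, \omg}$ for all $\omg > 2$. For (1), the strategy is to check each clause of Definition~\ref{def:adm-data} in turn. The regularity clause (1) follows from Proposition~\ref{prop:reduced-data}: since the perturbation only affects $\rd_{U}\psi_{(\eps)}$ via the smooth compactly supported bump $-\eps\chi(\rho-\rho_{pert})$, while $\rd_{U}r_{(\eps)}, \rd_{V}r_{(\eps)}, \rd_{V}\psi_{(\eps)}$ are unchanged from the background, the would-be loss of one derivative noted in Remark~\ref{rem:reduced-data} does not occur; one checks directly from formulas \eqref{eq:reduced-data:r}--\eqref{eq:reduced-data:h} that $f_{\eps}, \phi_{\eps} \in C^{2}$ and $h_{\eps}, \dot{\phi}_{\eps} \in C^{1}$, matching the required regularity in Definition~\ref{def:adm-data}(1). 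The asymptotic flatness clauses (2) and (3): on $\Sgm_{0} \setminus \Sgm_{0}^{+}$ and on $\set{\rho > \rho_{pert}+1}$ the data either equals $\overline{\Tht}$ exactly or has $\phi_{\eps} = \phibg$, $f_{\eps}\dot\phi_{\eps} = \fbg\overline{\dot\phi}$ (Lemma~\ref{lem:Thtep.facts}); on the compact transition region $[\rho_{pert}, \rho_{pert}+1]$ the perturbation of $\rd_\rho r_{(\eps)}$ vanishes and the perturbation of $\psi_{(\eps)}$ is $O(\eps)$ with all derivatives $O(\eps)$, so that the geometric quantities $f_{\eps}-1$, $h_{\eps}$, $r_{\eps}-|\rho|$, $\ell_{\eps}$ pick up only $O(\eps)$, smooth, compactly supported corrections plus the polynomial tail forced on $f,h$ by the constraint — which one checks decays at the required $O_{2}(|\rho|^{-1})$, $O_{1}(|\rho|^{-2})$ rates by examining \eqref{eq:reduced-data:f}, \eqref{eq:reduced-data:h} and using $\psi_{(\eps)} - \psibg = \eps\tilde\chi(\rho-\rho_{pert})$ which is \emph{constant} (not decaying) for $\rho > \rho_{pert}+1$ but enters only through terms already carrying decay from $r^{-1}$ factors. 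The limits \eqref{eq:adm-id-limits} exist and equal those of $\overline{\Tht}$ by Lemma~\ref{lem:Thtep.facts} (this is the $\mathfrak{L}_{(\omg_{0})0}$ invariance). Clause (4), $\e_{\eps} = \ebg \neq 0$, is immediate. Clause (5), the future admissibility condition \eqref{eq:adm-id-adm}: since $\rd_{U} r_{(\eps)}$ and $\rd_{V} r_{(\eps)}$ on $\Sgm_{0}^{+}$ agree with the background, the sign conditions $-\rd_{\rho}r + \frac{f\ell}{r} < 0$ and $\rd_{\rho}r + \frac{f\ell}{r} < 0$ (which are $2\rd_{U}r_{(\eps)} < 0$ and $-2\rd_{V}r_{(\eps)} < 0$ in the normalized coordinates, by Lemma~\ref{lem:cauchy-to-char}) hold on $\Sgm_{0}^{+}$ exactly as for $\overline{\Tht}$; outside $\Sgm_{0}^{+}$ the data equals $\overline{\Tht}$, so $\rho_{1}, \rho_{2}$ can be taken the same as for $\overline{\Tht}$.

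For part (2), continuity in $d^{+}_{2, \omg}$: by Definition~\ref{def:adm-top}, $d^{+}_{2, \omg}(\Tht_{\eps}, \Tht_{\eps'})$ is a finite sum of weighted $C^{0}$-norms of differences of $(\log(f_{\eps}/f_{\eps'})$ and its derivatives, $h_{\eps}-h_{\eps'}$, $r_{\eps}-r_{\eps'}$ and derivatives, $f_{\eps}\ell_{\eps} - f_{\eps'}\ell_{\eps'}$ and derivatives, $\phi_{\eps}-\phi_{\eps'}$ and derivatives, $f_{\eps}\dot\phi_{\eps} - f_{\eps'}\dot\phi_{\eps'}$ and derivatives, plus $|\e_{\eps}-\e_{\eps'}| = 0$). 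All these differences vanish identically for $\rho < \rho_{pert}$ and are supported (except for the forced polynomial tails in $f,h$) in a fixed compact set; so the weights $\brk{\rho}^{k}$ etc.\ are harmless. It then suffices to show each of the reduced-data-to-Cauchy-data maps \eqref{eq:reduced-data:r}--\eqref{eq:reduced-data:h} is continuous (indeed Lipschitz, even real-analytic) in $\eps$ in the relevant $C^{2}$ or $C^{1}$ norm on compact sets, together with appropriate decay for the tail pieces. Since $\eps \mapsto \rd_{U}\psi_{(\eps)}$ is affine in $\eps$ with values in $C^{2}$ (the bump is smooth and $\eps$-independent up to scaling), and since $\psi_{(\eps)}, r_{(\eps)}$ are obtained by integration (linear in $\eps$ and background, respectively), and $\varpi_{(\eps)}$ solves the linear-in-$\varpi$ ODE \eqref{eq:reduced-data:varpi} with coefficients depending continuously on $(r_{(\eps)}, \psi_{(\eps)}, \rd\psi_{(\eps)})$ — hence continuously on $\eps$ — standard continuous dependence for ODEs gives $\eps \mapsto \varpi_{(\eps)} \in C^{1}$ continuously, and since $1 - \frac{2\varpi_{(\eps)}}{r_{(\eps)}} + \frac{\ebg^{2}}{r_{(\eps)}^{2}}$ stays bounded away from $0$ uniformly for small $\eps$ (by the first condition in \eqref{eq:reduced-data-cond} and monotonicity), the algebraic formulas \eqref{eq:reduced-data:f}, \eqref{eq:reduced-data:ell-phidot}, \eqref{eq:reduced-data:h} produce $f_{\eps}, \ell_{\eps}, \dot\phi_{\eps}, h_{\eps}$ depending continuously on $\eps$ in the stated norms. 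Finally, the forced tail in $f_{\eps}, h_{\eps}$ for $\rho > \rho_{pert}+1$ comes from the extra constant $\eps\tilde\chi(1)$ in $\psi_{(\eps)}$ entering $\varpi_{(\eps)}$: this changes $\varpi_{(\eps)}$ by an amount that is $C^{1}$ and $O(\eps)$, hence $f_{\eps}-\fbg$ and $h_{\eps}-\hbg$ by $O(\eps)$ with the same weighted decay as in the admissibility verification, and this is continuous in $\eps$ with values in the weighted $C^{2}/C^{1}$ space.

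\textbf{Main obstacle.} The routine parts are the compactly-supported corrections on the transition interval. The one point requiring genuine care is the \emph{behavior of the data for $\rho > \rho_{pert}+1$}: there $\psi_{(\eps)} - \psibg = \eps\tilde\chi(1)$ is a nonzero \emph{constant}, not a decaying perturbation, so one must verify that this constant shift in $\psi = r\phi$ is consistent with $\omg_{0}$-admissibility — i.e.\ that $\phi_{\eps} = \psi_{(\eps)}/r_{(\eps)}$ still decays like $O_{2}(|\rho|^{-\omg_{0}})$. This is where Lemma~\ref{lem:Thtep.facts} is crucial: in fact $\phi_{\eps} = \phibg$ for $\rho > \rho_{pert}+1$ (the integration constants are chosen so that $\psi_{(\eps)}$, not $\phi_{\eps}$, receives the bump's integral, but this is reconciled by the fact that $\rd_{U}\psi_{(\eps)}$ and $\rd_{V}\psi_{(\eps)}$ — hence $\rd_\rho\psi_{(\eps)}$ — return to their background values past the bump, while $r_{(\eps)} = \rbg$ there; one must double-check this cancellation carefully from the definitions). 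The only genuinely non-background quantities past the bump are then $f_{\eps}$ and $h_{\eps}$, picking up the constraint-forced tail through $\varpi_{(\eps)}$, and one must confirm via \eqref{eq:reduced-data:f}, \eqref{eq:reduced-data:h} that these tails are $O_{2}(|\rho|^{-1})$ and $O_{1}(|\rho|^{-2})$ respectively — this follows because $\varpi_{(\eps)} - \varpibg = O(\eps)$ is bounded, $r_{(\eps)}^{-1} = O(|\rho|^{-1})$, and the relevant formulas always pair the mass difference with at least one power of $r^{-1}$. Establishing this decay bookkeeping cleanly, and the corresponding weighted continuity estimate, is the crux; everything else is standard ODE stability and the structure already recorded in Lemma~\ref{lem:Thtep.facts}.
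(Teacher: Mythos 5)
Your overall plan matches the paper's in its main structural observations: the reduced data for $\Tht_{\eps}$ differ from $\overline{\Tht}$ only in $\rd_{U}\psi_{(\eps)}$ by a smooth compactly supported bump (so the derivative loss of Remark~\ref{rem:reduced-data} does not occur); $\phi_{\eps}-\phibg$ and $f_{\eps}\dot\phi_{\eps}-\fbg\dphibg$ are compactly supported by Lemma~\ref{lem:Thtep.facts}; and the only tail beyond the bump enters through the mass and lives in $f_{\eps}, h_{\eps}$. But there are two issues.

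\textbf{A factual error.} You assert that for $\rho > \rho_{pert}+1$, $\psi_{(\eps)}-\psibg = \eps\tilde{\chi}(1)$ is a \emph{nonzero} constant and that reconciling this with $\phi_{\eps}=\phibg$ is a cancellation that ``one must double-check carefully.'' In fact the zero-mean condition $\int_{0}^{1}\chi = 0$ in \eqref{eq:pert-chi} gives $\tilde{\chi}(\rho)=0$ for all $\rho \geq 1$, so $\psi_{(\eps)}=\psibg$ identically past the bump and there is no tension to resolve; the cancellation is built into the construction. Consequently, your statement that the $f_{\eps}, h_{\eps}$ tail ``comes from the extra constant $\eps\tilde{\chi}(1)$ in $\psi_{(\eps)}$'' misidentifies the mechanism. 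The tail comes from the $O(\eps)$ difference $\varpi_{(\eps)}-\varpibg$ accumulated by the mass ODE \eqref{eq:reduced-data:varpi} through the bump interval $[\rho_{pert}, \rho_{pert}+1]$, which persists (and does not tend to zero as $\rho\to\infty$) even though $\psi_{(\eps)}$ has already returned to $\psibg$.

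\textbf{A genuine gap.} The crux of the lemma is obtaining the $\brk{\rho}$-\emph{weighted} bounds on $\log(f_{\eps}/\fbg)$, $h_{\eps}-\hbg$ and their $\rho$-derivatives. Your claim that ``the relevant formulas always pair the mass difference with at least one power of $r^{-1}$'' is precisely what must be verified, and invoking continuous dependence for ODEs gives unweighted $C^{1}$-closeness on compacta but not the decay as $|\rho|\to\infty$ that $d^{+}_{2,\omg}$ requires. The paper bridges this by passing to the local developments and using \eqref{eq:EMSF-r-phi-m} together with the crucial cancellation $(r,\rd_{U}r,\rd_{V}r,\e) = (\rbg,\rd_{U}\rbg,\rd_{V}\rbg,\ebg)$ on all of $\Sgm_{0}$. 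This yields the closed-form identities \eqref{eq:inst-cont:key-1} and \eqref{eq:inst-cont:key-2}, in which the pairing of $\varpi-\varpibg$ with $\rbg^{-1}$ (or $\rbg^{-2}$) is manifest, and -- essential for the derivative estimates -- replaces the $\rho$-derivative $(\rd_{V}-\rd_{U})$ acting on $\log(\rd_{V}r/\rd_{V}\rbg)$ by zero on $\Sgm_{0}$, so that $(\rd_{U}+\rd_{V})$ can be computed via the Einstein equation for $\rd_{U}\log\rd_{V}r$. Your proposal needs an equivalent of these algebraic identities rather than ``one checks directly''; that derivation is the real content of the proof.
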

\begin{proof}
For concreteness, we prove continuity of $\eps \mapsto \Tht_{\eps}$ at $\eps = 0$; as we will see, the general case can be handled by essentially the same proof. Our goal is to show that for $\eps$ sufficiently small, $d_{2, \omg} (\Tht_{\eps}, \overline{\Tht})\leq C_{R_{pert}, \overline{\Tht}} \eps$, where $C_{R_{pert}, \overline{\Tht}}$ may depend both on the background data set $\overline{\Tht}$ and the choice of $R_{pert}$. This clearly implies continuity of $\eps \mapsto \Tht_{\eps}$ at $\eps = 0$. Using moreover that $\phi_{\eps}-\phibg$ and $f_{\eps}\dot{\phi}_{\eps}-\fbg \overline{\dot{\phi}}$ are compactly supported (Lemma~\ref{lem:Thtep.facts}), this implies that $\Tht_{\eps}$ is an $\omg_{0}$-admissible Cauchy data set.

Let $\eps \neq 0$ be fixed, and denote by $(f, r, h, \ell, \phi, \dot{\phi}, \e)$ the components of the Cauchy data set $\Tht_{\eps}$. Note that $\Tht_{\eps} = \overline{\Tht}$ on $\Sgm_{0} \setminus \Sgm_{0, pert}^{+}$ and $(r, f \ell, \e) = (\rbg, \fbg \ellbg, \ebg)$ on $\Sgm_{0}$.
Hence, we have
\begin{align*}
	d_{2, \omg} (\Tht_{\eps}, \overline{\Tht})
	= & \nrm{\rho \log(f/ \fbg)}_{C^{0}(\Sgm_{0, pert}^{+})}
	 + \sum_{k=1}^{2} \left( \nrm{\rho^{1+k} \rd_{\rho}^{k} \log(f/ \fbg)}_{C^{0}(\Sgm_{0, pert}^{+})}
					+ \nrm{\rho^{1+k} \rd_{\rho}^{k-1} (h - \hbg)}_{C^{0}(\Sgm_{0, pert}^{+})} \right) \\
	& + \nrm{\rho^{\omg} (\phi - \phibg)}_{C^{0}(\Sgm_{0, pert}^{+})} 
	+ \sum_{k=1}^{2} \left( \nrm{\rho^{\omg+k} \rd^{k}_{\rho} (\phi - \phibg)}_{C^{0}(\Sgm_{0, pert}^{+})}
					+ \nrm{\rho^{\omg+k} \rd^{k-1}_{\rho}(f \dot{\phi} - \overline{f}\overline{\dot{\phi}})}_{C^{0}(\Sgm_{0, pert}^{+})} \right).
\end{align*}

Note that by \eqref{eq:pert-chi}, $\phi - \phibg, f \dot{\phi} - \fbg \overline{\dot{\phi}}$ are supported in $\set{\rho \in \Sgm_{0, pert}^{+} : \rho_{pert} \leq \rho \leq \rho_{pert} + 1}$; therefore, we need not worry about the weights and clearly have
\begin{equation*}
\nrm{\rho^{\omg} (\phi - \phibg)}_{C^{0}(\Sgm_{0, pert}^{+})} 
+ \sum_{k=1}^{2} \left( \nrm{\rho^{\omg+k} \rd^{k}_{\rho} (\phi - \phibg)}_{C^{0}(\Sgm_{0, pert}^{+})}
					+ \nrm{\rho^{\omg+k} \rd^{k-1}_{\rho}(f \dot{\phi} - \fbg \overline{\dot{\phi}})}_{C^{0}(\Sgm_{0, pert}^{+})} \right)
\leq C_{\omg, R_{pert}} \eps,
\end{equation*}
which is acceptable. We emphasize that this holds for all $\omg>2$, regardless of the value of $\omg_0$ associated to the decay of the scalar field for the background initial data set.

To bound $\log(f/\fbg)$, $h - \hbg$ and their derivatives, it is more convenient to not only consider the definitions in Proposition~\ref{prop:reduced-data}, but also to work with (the restriction of) the local developments $(\Omg, r, \phi)$ and $(\Omgbg, \rbg, \phibg)$ arising from $\Tht_{\eps}$ and $\overline{\Tht}$, respectively, in a neighborhood of $\Sgm_{0, pert}^{+}$. This has the advantage that we can now apply \eqref{eq:EMSF-r-phi-m}. On $\Sgm_{0, pert}^{+}$, we claim that
\begin{align}
	2 \log (f / \fbg)
	%= & \log \Omg^{2} - \log \Omgbg^{2} 
	= & - \log \left( 1 - 2 \frac{\varpi - \varpibg}{\rbg (1-\mubg)} \right) , \label{eq:inst-cont:key-1} \\
	2 \left(\f{h}{f} - \f{\hbg}{\fbg}\right)
	= & (\rd_{U} + \rd_{V}) \left(\log \Omg^{2} - \log \Omgbg^{2}\right) \notag \\
%	= &\left( \frac{4 (\varpi - \frac{\ebg^{2}}{\rbg})}{1-\mu} - \frac{4 (\varpibg - \frac{\ebg^{2}}{\rbg})}{1-\mubg} \right) \frac{(\rd_{U} + \rd_{V}) \rbg}{\rbg^{2}} 
%	-  (\rd_{U} + \rd_{V}) \log \left( 1 - 2 \frac{\varpi - \varpibg}{\rbg (1-\mubg)} \right) \label{eq:inst-cont:key-2} .
	= & \left( \frac{4 (\rd_{U} + \rd_{V}) \rbg}{\rbg^{2} (1-\mu)} + \frac{8 (\varpibg - \frac{\ebg^{2}}{\rbg}) (\rd_{U} + \rd_{V}) \rbg}{(1-\mu) (1-\mubg) \rbg^{3}} \right) (\varpi - \varpibg)
	-  (\rd_{U} + \rd_{V}) \log \left( 1 - 2 \frac{\varpi - \varpibg}{\rbg (1-\mubg)} \right). \label{eq:inst-cont:key-2} %\\
%	\ell-\ellbg = & \rbg\left(\rd_u \rbg+\rd_v\rbg\right)\left(\f{1}{f}-\f{1}{\fbg}\right) \label{eq:inst-cont:key-3}.
\end{align}
To prove \eqref{eq:inst-cont:key-1}, recall that $(r, \rd_{U} r, \rd_{V} r, \e) = (\rbg, \rd_{U} \rbg, \rd_{V} \rbg, \ebg)$ on $\Sgm_{0}$. Therefore, by \eqref{eq:reduced-data:f}, we have on $\Sgm_{0}$
\begin{align*}
	2 \log(f / \fbg)
	%= & \log \Omg^{2} - \log \Omgbg^{2} 
	= &\log \left(4 \frac{\rd_{V} r (- \rd_{U} r)}{1-\mu} \right) - \log \left(4 \frac{\rd_{V} \rbg (- \rd_{U} \rbg)}{1-\mubg} \right) \\
	= & - \log \frac{1 - \mu}{1 - \mubg} 
%	=- \log \left(1 - \frac{\mu - \mubg}{1 - \mubg}  \right)
	=- \log \left(1 - 2 \frac{\varpi - \varpibg}{\rbg (1 - \mubg)}  \right) .
\end{align*}
For \eqref{eq:inst-cont:key-2}, we first note that
\begin{equation*}
	\f{h}{f} - \f{\hbg}{\fbg}
	= \Omg^{-1} (\rd_{U} + \rd_{V}) \Omg - \Omgbg^{-1} (\rd_{U} + \rd_{V}) \Omgbg
	= \frac{1}{2} (\rd_{U} + \rd_{V}) \left(\log \Omg^{2} - \log \Omgbg^{2} \right) \quad \hbox{ on } \Sgm_{0}
\end{equation*}
 by Lemma~\ref{lem:cauchy-to-char}. Moreover, we have
 \begin{align*}
(\rd_{U} + \rd_{V}) \left(\log \Omg^{2} - \log \Omgbg^{2} \right)
= & (\rd_{U} + \rd_{V}) \log \frac{\rd_{V} r}{\rd_{V} \rbg} + (\rd_{U} + \rd_{V}) \log \frac{\rd_{U} r}{\rd_{U} \rbg}
- (\rd_{U} + \rd_{V}) \log \frac{1-\mu}{1-\mubg}.
\end{align*}
Note that $(\rd_{V} - \rd_{U}) \log \frac{\rd_{V} r}{\rd_{V} \rbg} = \rd_{\rho} \log \frac{\rd_{V} r}{\rd_{V} \rbg} = 0$ on $\Sgm_{0}$; similarly, $(\rd_{V} - \rd_{U}) \log \frac{\rd_{U} r}{\rd_{U} \rbg} = 0$ on $\Sgm_{0}$. Therefore, by \eqref{eq:EMSF-r-phi-m}, on $\Sgm_{0}$ we have
\begin{align*}
(\rd_{U} + \rd_{V}) \left(\log \Omg^{2} - \log \Omgbg^{2} \right)
= & 2 \rd_{U} \log \frac{\rd_{V} r}{\rd_{V} \rbg} + 2 \rd_{V} \log \frac{\rd_{U} r}{\rd_{U} \rbg} - (\rd_{U} + \rd_{V}) \log \frac{1-\mu}{1-\mubg} \\
= & \frac{4 (\varpi - \frac{\e^{2}}{r})}{r^{2}} \frac{\rd_{U} r}{1-\mu} - \frac{4 (\varpibg - \frac{\ebg^{2}}{\rbg})}{\rbg^{2}} \frac{\rd_{U} \rbg}{1-\mubg} \\
& + \frac{4 (\varpi - \frac{\e^{2}}{r})}{r^{2}} \frac{\rd_{V} r}{1-\mu} - \frac{4 (\varpibg - \frac{\ebg^{2}}{\rbg})}{\rbg^{2}} \frac{\rd_{V} \rbg}{1-\mubg} \\
& - (\rd_{U} + \rd_{V}) \log \left(1 - 2 \frac{\varpi - \varpibg}{\rbg (1-\mubg)}\right).
\end{align*}
Rearranging terms, while recalling that $(r, \rd_{U} r, \rd_{V} r, \e) = (\rbg, \rd_{U} \rbg, \rd_{V} \rbg, \ebg)$ on $\Sgm_{0}$, we obtain \eqref{eq:inst-cont:key-2}. 

After writing the differences as in \eqref{eq:inst-cont:key-1} and \eqref{eq:inst-cont:key-2}, the key is now to estimate the mass difference. By \eqref{eq:reduced-data:varpi}, it can be shown that
\begin{equation*}
	\Abs{(\varpi - \varpibg) \restriction_{\Sgm_{0, pert}^{+}}}
	\leq C_{\overline{\Tht}} \eps,
\end{equation*}
where $C_{\overline{\Tht}}$ is independent of $R_{pert}$. Moreover, in view of \eqref{eq:EMSF-r-phi-m} and the fact that $\rd_{\rho} = \rd_{V} - \rd_{U}$ on $\Sgm_{0}$, it can be seen that
\begin{equation*}
	\Abs{\rd_{\rho}^{k}  (\varpi - \varpibg) \restriction_{\Sgm_{0, pert}^{+}}} 
	+ \Abs{\rd_{\rho}^{k-1}  \left( (\rd_{U} + \rd_{V}) (\varpi - \varpibg) \right)\restriction_{\Sgm_{0, pert}^{+}}}
	\leq C_{k, R_{pert}, \overline{\Tht}} \, \rbg^{- 2 \omg - (k - 1)} \eps
\end{equation*}
for any $1 \leq k \leq 2$. Then by the formulae \eqref{eq:inst-cont:key-1} and \eqref{eq:inst-cont:key-2}, it follows that
\begin{equation*}
\begin{split}
\nrm{\rho \log(f/ \fbg)}_{C^{0}(\Sgm_{0, pert}^{+})}  + \sum_{k=1}^{2} \left( \nrm{\rho^{1+k} \rd_{\rho}^{k} \log(f/ \fbg)}_{C^{0}(\Sgm_{0, pert}^{+})}
					+ \nrm{\rho^{1+k} \rd_{\rho}^{k-1} (h - \hbg)}_{C^{0}(\Sgm_{0, pert}^{+})} \right) 
			 \leq C_{R_{pert}, \overline{\Tht}} \eps,
\end{split}
\end{equation*}
which completes the proof. \qedhere
\end{proof}
\begin{remark}\label{rem:inst-high-reg}
Suppose that $\overline{\Tht}$ is moreover $C^{k}_{\omg_0}$-regular for some $k \geq 2$. Then proceeding as in the previous proof, it can be shown that $\eps \mapsto \Tht_{\eps}$ is continuous with respect to $d^{+}_{k, \omg}$ for any $\omg > 2$. We omit the straightforward details.
\end{remark}
In view of Lemmas~\ref{lem:Thtep.facts}, \ref{lem:inst-cont} and Remark~\ref{rem:inst-high-reg}, in order to prove Theorem~\ref{thm:instability}, it only remains to establish the following statement, which in particular implies that $\mathfrak{L}_{(\omg_0)\infty}[\Tht_\eps]\neq 0$ when $\eps \neq 0$.
\begin{proposition} \label{prop:inst-key}
Let $\mathfrak{L}_{(\eps)}=\mathfrak{L}[\Tht_\eps]$ be defined as in \eqref{eq:L-def} from the maximal globally hyperbolic future development of the Cauchy data set $\Tht_{\eps}$. For $R_{pert}$ sufficiently large and $\eps$ small enough (both depending on the background solution), we have
\begin{equation} \label{eq:inst-key}
\abs{\mathfrak{L}_{(\eps)} - \overline{\mathfrak{L}}} \geq c_{0} \varpibg_{i} \eps ,
\end{equation}
where $c_{0}$ is the universal constant in \eqref{eq:pert-chi}.
\end{proposition}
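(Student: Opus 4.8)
\textbf{Proof strategy for Proposition~\ref{prop:inst-key}.} The plan is to decompose $\mathfrak{L}_{(\eps)} - \overline{\mathfrak{L}}$ into a main term, carried by the perturbation near $\rho_{pert}$, and an error term, controlled by the stability theorems. Let $U_{pert}$ be the $U$-coordinate of $\Sgm_0 \cap \{\rbg = R_{pert}\}$, and let $C_{pert} = C_{U_{pert}}$ be the outgoing null ray emanating from $(\rho_{pert}, \rho_{pert})$ in the $(U,V)$-coordinates of the \emph{background} solution. The key geometric observation (cf. item (2) in Section~\ref{sec:ingredients}) is that the perturbation of the reduced data is supported in $\Sgm_{0,pert}^{+}$ and only in the $\rd_U\psi$ component; since this component only affects the solution to the future of $C_{pert}$ via the incoming direction, the perturbed and background solutions agree on the domain of dependence of $\Sgm_0 \setminus \Sgm_{0,pert}^{+}$, and on $C_{pert}$ the difference of the solutions is of size $\ll\eps$ provided $R_{pert}$ is large — this last point is where asymptotic flatness enters, as the region near $\rho_{pert}$ is close to Minkowski and the relevant ODEs (the Raychaudhuri and modified-mass equations along $\Sgm_0^+$, established in Proposition~\ref{prop:reduced-data}) can be integrated with explicit, quantitative control. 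Thus I split
\begin{equation*}
	\mathfrak{L}_{(\eps)} - \overline{\mathfrak{L}} = (\mathfrak{L}_{(\eps)} - \overline{\mathfrak{L}})(-\infty, U_{pert}) + (\mathfrak{L}_{(\eps)} - \overline{\mathfrak{L}})(U_{pert}, \infty),
\end{equation*}
where the notation $(\cdot)(U_1, U_2)$ denotes $\int_{U_1}^{U_2} 2M(U)\Phi(U)\Gmm(U)\,\ud U$ along $\NI$ as in Section~\ref{sec:L-stability}. The first piece is the main term; the second piece is the error.

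\textbf{The main term.} In the region to the past of $C_{pert}$, I would use the explicit structure near spatial/null infinity. Using Proposition~\ref{prop:bg-large-r} and the closeness to Minkowski for $R_{pert}$ large, the wave equation for $\psi$ integrated along incoming null curves from $\Sgm_0$ to $\NI$ shows that $\Phi_{(\eps)}(U) - \Phibg(U)$ is, to leading order, $-\frac12 \psidf\restriction_{\Sgm_0}$ transported to $\NI$ (the geometric corrections are $O(R_{pert}^{-1})$ times the leading term, by the decay of $\varpi/r$ and $\e^2/r^2$). By \eqref{eq:inst:psi-eps}, $\psi_{(\eps)} - \psibg = \eps\tilde\chi(\rho - \rho_{pert})$ on $\Sgm_0^+$. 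Using $M(U) \to \varpibg_i + O(R_{pert}^{-1})$ and $\Gmm(U) \to -1 + O(R_{pert}^{-1})$ as the relevant limits at $\NI$ for $U$ near $U_{pert}$, together with the change of variables between $U$ and $\rho$ (whose Jacobian is close to $1$ near $\rho_{pert}$ by Proposition~\ref{prop:bg-large-r}), the leading contribution becomes a multiple of $\varpibg_i \eps \int_0^1 \tilde\chi(\rho)\,\ud\rho = \varpibg_i \eps \int_0^1\int_0^\rho \chi(\rho')\,\ud\rho'\,\ud\rho \geq 2c_0 \varpibg_i \eps$ by \eqref{eq:pert-chi}. (One must be careful that $\Phibg$ itself appears with the \emph{perturbed} mass $M_{(\eps)}$; but $M_{(\eps)} - \overline{M} = O(\eps)$ and $\Phibg = O(u^{-\omg_0+1})$ along $\NI$, so the cross term is $O(\eps)$ times a quantity that is small for $\rho_{pert}$ large, hence negligible.) After absorbing all $O(R_{pert}^{-1})$-relative errors and the cross terms by choosing $R_{pert}$ large, I obtain $|(\mathfrak{L}_{(\eps)} - \overline{\mathfrak{L}})(-\infty, U_{pert})| \geq \frac{3}{2}c_0 \varpibg_i \eps$.

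\textbf{The error term.} For $(\mathfrak{L}_{(\eps)} - \overline{\mathfrak{L}})(U_{pert}, \infty)$, I would invoke the machinery of Sections~\ref{sec:extr} and \ref{sec:L-stability}. The point is that on the outgoing curve $C_{pert}$ the difference of the two solutions — measured in the appropriate weighted norms entering $\eps_0$ of Theorem~\ref{thm:L-st-ch} (or $\eps_2, \eps_3$ of Propositions~\ref{prop:extr-st-cauchy}, \ref{prop:extr-st}) — is of size $o_{R_{pert}}(\eps)$, i.e. $\eps$ times a factor tending to $0$ as $R_{pert} \to \infty$. This is exactly the content of Remark~\ref{rem:extr-st-cauchy}: the relevant initial-difference norm involves the \emph{integrated} quantity $\rbg_0^{-2}\int |\psi - \psibg|(\rd_\rho r)\,\ud\rho$, which for the perturbation \eqref{eq:inst:psi-eps} is bounded by $C\eps/R_{pert}$ (since $\rbg_0 \sim R_{pert}$ and $\psi-\psibg$ is $O(\eps)$ supported on a unit $\rho$-interval), uniformly in the choice of $\rho_{pert}$. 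Feeding this through the chain of stability estimates — Proposition~\ref{prop:cauchy-st} in a compact piece, Proposition~\ref{prop:extr-st} in the characteristic rectangle up to $i^+$, and Theorem~\ref{thm:L-st-ch} in the neighborhood of $i^+$ (using Lemma~\ref{lem:dlt-adm-exist} to see the background is $(\omg,\dlt,\Lmb)$-admissible there) — yields $|(\mathfrak{L}_{(\eps)} - \overline{\mathfrak{L}})(U_{pert}, \infty)| \leq C_{\overline{\Tht}} \cdot (\eps/R_{pert}) \cdot (\text{a power of } \rbg_0) $. The subtlety here, and the main obstacle, is bookkeeping the $R_{pert}$-weights: the stability estimates come with factors of $\rbg_0^3$ (see \eqref{eq:extr-st-cauchy:L}, \eqref{eq:extr-st:L}, \eqref{eq:L-st-ch:L}) which grow like $R_{pert}^3$, so one needs the initial-difference $\eps_0$ to beat this, i.e. to decay faster than $R_{pert}^{-3}$. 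This forces a careful choice of the \emph{dimensionless} normalization: working with $\rbg_0 = R_{pert}$ and the dimensionless norms of Section~\ref{sec:extr}, the integrated term gives $\rbg_0^{-2}\int|\psidf|(\rd_\rho r) \sim R_{pert}^{-2}\eps$, and combined with the $\rbg_0^{3}$ in the $\mathfrak{L}$-bound this is $R_{pert}\eps$ — \emph{not} small. The resolution is that $\mathfrak{L}(U_{pert},\infty)$ itself only samples $U \geq U_{pert}$, where the background contributions carry extra $u$-decay (by Price's law, $\Phibg = O(u^{-\omg_0+1})$ with $\omg_0 \geq 3$, hence integrable with room to spare), so the relevant bound should be rerun tracking the decay in $u$ rather than just overall size; alternatively one rescales so that the natural length scale is $\rbg_{\EH} \sim 1$ rather than $R_{pert}$, and the factor $R_{pert}$ reappears only through $\Lmb$, whose growth is controlled by the $\Lmb^{-100\eta_0}$ slack in \eqref{eq:L-st-ch:dlt}. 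Once this is arranged, $|(\mathfrak{L}_{(\eps)} - \overline{\mathfrak{L}})(U_{pert}, \infty)| \leq \frac12 c_0 \varpibg_i \eps$ for $R_{pert}$ large, and combining with the main-term lower bound via the triangle inequality gives $|\mathfrak{L}_{(\eps)} - \overline{\mathfrak{L}}| \geq c_0 \varpibg_i \eps$, as claimed. Finally, fixing $R_{pert}$ (depending on $\overline{\Tht}$) and then taking $\eps$ small enough that all the smallness hypotheses of the stability theorems are met completes the proof.
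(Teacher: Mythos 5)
Your high-level plan — split $\mathfrak{L}_{(\eps)} - \overline{\mathfrak{L}}$ into a main term over $(-\infty, U_{pert})$ and an error term over $(U_{pert}, \infty)$, the first handled by explicit computation near spatial infinity and the second by the stability machinery — matches the paper, and your main-term argument is essentially right. The gap is in the error term, and your own analysis already detects the symptom: working with $\rbg_0 = R_{pert}$, the $\rbg_0^3$ factors in \eqref{eq:extr-st:L} and \eqref{eq:L-st-ch:L} grow like $R_{pert}^3$ and ruin the estimate. Neither of your two proposed fixes works: in particular, $\Lmb$ in the $(\omg, \dlt, \Lmb)$-admissible framework is a \emph{fixed} constant depending only on the background solution (it is $\rbg/\rbg_{\EH}$ at a prescribed anchor point near $i^+$, chosen once via Lemma~\ref{lem:dlt-adm-exist}), so the $\Lmb^{-100\eta_0}$ slack in \eqref{eq:L-st-ch:dlt} has nothing to do with $R_{pert}$ and cannot absorb an $R_{pert}$-dependent factor.

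The actual mechanism you are missing is twofold. First, $\rbg_0$ in Propositions~\ref{prop:extr-st-cauchy} and \ref{prop:extr-st} must be chosen as a \emph{fixed} constant depending only on $\overline{\Tht}$ (in the paper, $\rbg_0 = \underline{R}_*$ in $\calR_2$, and a similarly fixed $\rbg_0$ in $\calR_3$); $R_{pert}$ is taken much larger than $\rbg_0$ afterwards. Second, and crucially, one exploits the freedom to take $\omg \in (2,3)$ \emph{strictly} less than $3$ when applying Proposition~\ref{prop:extr-st} and Theorem~\ref{thm:L-st-ch}. On the curve $C_{out} = \set{U = U_{pert}}$, the difference bounds from Lemma~\ref{lem:inst-s} carry extra decay in $\rbg$ (namely $\rbg^{-1}$ for \eqref{eq:inst-s:logkppdf} and $\rbg^{-3}$ for \eqref{eq:inst-s:dvpsidf}), whereas the definition of $\eps_3$ in Proposition~\ref{prop:extr-st} only weights by $(\rbg/\rbg_0)^\omg$. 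Since $\rbg \geq R_{pert}$ on $C_{out}$, this produces $\eps_3 \lesssim R_{pert}^{\omg-3}\eps$ with $\omg-3<0$, and since $\rbg_0$, $\rbg_{\EH}$, $\Lmb$ are all $R_{pert}$-independent, the error contributions to $\mathfrak{L}$ from $\calR_3$ and $\calR_4$ are both $O(R_{pert}^{\omg-3}\eps)$, vanishing relative to $c_0\varpibg_i\eps$ as $R_{pert} \to \infty$. Your appeal to Price's-law late-time decay of $\Phibg$, or to a rescaling placing the factor of $R_{pert}$ inside $\Lmb$, is not what closes this gap.
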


For the proof of Proposition~\ref{prop:inst-key}, we split the background spacetime into four regions $\calR_{1}, \ldots, \calR_{4}$ in a similar manner as in the proof of Theorem~\ref{thm:L-stability} in Section~\ref{subsec:main-st}. We introduce a parameter $R_{pert}$ to be specified below and define the regions $\calR_{1}, \ldots, \calR_{4}$ as follows:

\begin{figure}[h]
\begin{center}
\def\svgwidth{300px}
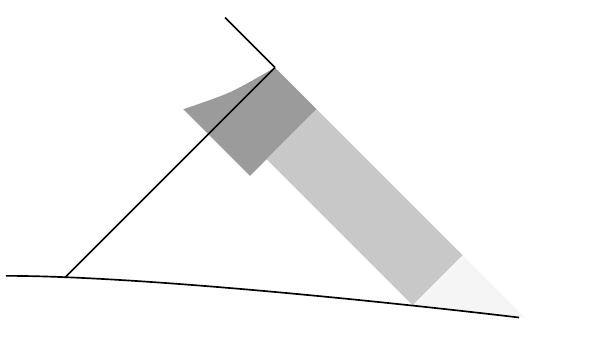 
\caption{Regions $\calR_{1}, \ldots, \calR_{4}$. The support of main term $\eps \tilde{\chi}_{R_{pert}}$ of the perturbation is shaded by horizontal wavy lines.} \label{fig:L-inst-key}
\end{center}
\end{figure}

\begin{itemize}
\item We define $\calR_{4}$ using Lemma~\ref{lem:dlt-adm-exist}. More precisely, let $2 < \omg < \min \set{\omg_{0}, 3}$ and $\eta_{0} < \frac{1}{200} (\min \set{\omg_{0}, 3} - \omg)$ be fixed. By Lemma~\ref{lem:dlt-adm-exist}, for any $\dlt_{0} > 0$ there exist $\Lmb \geq 100$ and $(U_{i^+}, V_{i^+})$ such that the restriction of the background solution to 
\begin{equation*}
\calR_{4} := \set{(U, V) \in \PD: U \geq U_{i^+}, \ V \geq V_{i^+}, \ \rbg(U, V) \geq (1 - 2 \dlt) \rbg_{\EH}}
\end{equation*}
is $(\omg, \dlt, \Lmb)$-admissible with $\dlt = \Lmb^{-100 \eta_{0}} \dlt_{0}$. We fix $\dlt_{0}$ sufficiently small (depending on the background solution) so that Theorem~\ref{thm:L-st-ch} is applicable in $\calR_{4}$.
\item Let $(U_{pert}, V_{pert})$ be the $(U, V)$-coordinates of the point $\rho_{pert}$ on $\Sgm_{0, pert}^{+}$; hence $\rbg(U_{pert}, V_{pert}) = R_{pert}$. Since $R_{pert}$ will later be taken sufficiently large, we may safely assume that $U_{pert} < U_{i^+}$ and $V_{pert} > V_{i^+}$. We define
\begin{equation*}
	\calR_{3} := \set{(U, V) \in \PD : U_{pert} \leq U \leq U_{i^+}, \ V \geq V_{pert}}.
\end{equation*}
\item We also introduce 
\begin{equation*}
\calR_{2} := \set{(U, V) \in \PD : U \leq U_{pert}}.
\end{equation*}
Note that $\calR_{2}$ is precisely the domain of dependence of $\Sgm_{0, pert}^{+}$.
\item Finally, we define
\begin{equation*}
\calR_{1} := \set{(U, V) \in \PD : U \leq U_{int}, \ V \leq V_{pert}},
\end{equation*}
where $(U_{int}, V_{i^{+}})$ is the point in $\uC_{V_{i^{+}}} \cap \set{\rbg =( 1 - 2 \dlt )\rbg_{\EH}}$. Note that we can argue as in Section~\ref{subsec:main-st} to see that there exists $\de>0$ sufficiently small such that we can simultaneously guarantee that such a point $(U_{int}, V_{i^{+}})$ is in the maximal globally hyperbolic development of the initial data and that Theorem~\ref{thm:L-st-ch} is applicable in $\calR_{4}$. 
\end{itemize}

Note that $\calR_{1}$ slightly overlaps with $\calR_{4}$. Note that, since $\Tht_{\eps}$ and $\overline{\Tht}$ coincide in $\set{\rho \in \Sgm_{0} : \rho < \rho_{pert}}$, their developments in the $(U, V)$-coordinates also coincide in $\calR_{1}$. To prove Proposition~\ref{prop:inst-key}, we estimate $\mathfrak{L}_{(\eps)} - \overline{\mathfrak{L}}$ restricted to each of $\calR_{2}$, $\calR_{3}$ and $\calR_{4}$, using the tools developed in Sections~\ref{sec:extr} and \ref{sec:L-stability}.

 We start with some preliminary bounds on the background solution.
\begin{lemma} \label{lem:inst-bg}
Let $\underline{\calN}$ be as in Proposition~\ref{prop:bg-large-r} for the background solution $(\gbg, \phibg, \Fbg)$. 
Then there exists $\underline{B}_{\ast} > 0$, which depends on the solution $(\gbg, \phibg, \Fbg)$ on $\underline{\calN}$, such that the following bounds hold for $(U, V) \in \underline{\calN}$:
\begin{align}
	\underline{B}_{\ast}^{-1} \leq \rd_{V} \rbg(U, V) \leq & \underline{B}_{\ast}, \label{eq:inst-bg:dvr} \\
	\underline{B}_{\ast}^{-1} \leq -\rd_{U} \rbg(U, V) \leq & \underline{B}_{\ast}, \label{eq:inst-bg:dur} \\
	\underline{B}_{\ast}^{-1} \leq \frac{-\rd_{U} \rbg(U, V)}{-\rd_{U} \underline{\ups}(U)} \leq & \underline{B}_{\ast}, \label{eq:inst-bg:duups} \\
	\Abs{\overline{\psi}(U, V)} \leq & \underline{B}_{\ast} \, \underline{\ups}^{-2}(U) \label{eq:inst-bg:psi} \\
	\Abs{\frac{1-\mubg}{\rd_{V} \rbg}\rd_{V} \overline{\psi}(U, V)} \leq & \underline{B}_{\ast} \, \rbg^{-3}(U, V) \label{eq:inst-bg:dvpsi}\\	
	\Abs{\overline{M}(U) - \varpibg_{i}} \leq & \underline{B}_{\ast} \, \underline{\ups}^{-3} (U) , \label{eq:inst-bg:M} \\
	\Abs{\overline{\Gmm}(U) + \frac{1}{2}} \leq & \underline{B}_{\ast} \, \underline{\ups}^{-1} (U) .	\label{eq:inst-bg:Gmm}
\end{align}
\end{lemma}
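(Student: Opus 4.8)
\textbf{Proof plan for Lemma~\ref{lem:inst-bg}.}
The plan is to derive all eight bounds for the background solution $(\gbg, \phibg, \Fbg)$ on $\underline{\calN}$ from the preliminary analysis near spacelike infinity carried out in Proposition~\ref{prop:bg-large-r} (with $\omg_{0}$ there replaced by $\min\{\omg_0, 3\}$, which since we deal here with the asymptotically flat end and $\overline{\Tht}$ is $\omg_0$-admissible with $\omg_0 \geq 3$, effectively gives $r^{-3}$-type decay for $\psibg$ and $r^{-2}$ for $\psibg$ itself along $\Sgm_0$; note $\Sgm_{0,i^0}$ can be arranged to contain $\Sgm_{0,pert}^{+}$ for $\rho_*$ large). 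The inequalities \eqref{eq:inst-bg:dvr}, \eqref{eq:inst-bg:dur} are immediate consequences of \eqref{eq:bg-large-r:dvr}, \eqref{eq:bg-large-r:dur} combined with the asymptotic flatness bounds \eqref{eq:adm-id-af} and Lemma~\ref{lem:cauchy-to-char}, which give $\rd_V \rbg, -\rd_U \rbg$ bounded above and below on $\Sgm_0 \cap \underline{\calN}$; the ratio bound \eqref{eq:inst-bg:duups} is precisely \eqref{eq:bg-large-r:duups}. The bounds \eqref{eq:inst-bg:psi} and \eqref{eq:inst-bg:dvpsi} follow from \eqref{eq:bg-large-r:rphi} and \eqref{eq:bg-large-r:dvrphi} respectively (the latter after using \eqref{eq:bg-large-r:pf-mu} from the proof of Proposition~\ref{prop:bg-large-r} to replace $\rd_V \rbg$ by $\frac{\rd_V \rbg}{1-\mubg}$, which costs only a universal constant), again using $\min\{\omg_0,3\}=3$.

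For the radiation-field-type quantities \eqref{eq:inst-bg:M} and \eqref{eq:inst-bg:Gmm}, one integrates the evolution equations \eqref{eq:EMSF-r-phi-m} along outgoing null rays $C_U$ towards $\NI$, starting from $\Sgm_0 \cap \underline{\calN}$. For $\overline{M}(U) - \overline{M}\restriction_{\Sgm_0}(U)$ one uses $\rd_V \varpibg = \frac{1}{2}\frac{1-\mubg}{\rd_V \rbg}\rbg^2 (\rd_V \phibg)^2$, bounds $\rbg^2 (\rd_V \phibg)^2 = (\rd_V \psibg - \rbg^{-1}\rd_V \rbg\, \psibg)^2$ using \eqref{eq:inst-bg:psi}, \eqref{eq:inst-bg:dvpsi} and \eqref{eq:bg-large-r:dvr}, integrates in $V$ from $\underline{V}(U)$ to $\infty$ (making the change of variables $V \mapsto \rbg$), and uses that $\rbg(U,\underline V(U)) = \underline\ups(U)$ to obtain a gain $\underline\ups^{-3}(U)$; finally $\overline{M}\restriction_{\Sgm_0}(U) \to \varpibg_i$ with rate $\underline\ups^{-3}$, again by \eqref{eq:adm-id-af}, \eqref{eq:adm-id-phi} and the ODE for $\varpibg$ along $\Sgm_0$. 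For $\overline{\Gmm}(U) + \tfrac12$, one integrates $\rd_V \log(-\gmmbg) = \frac{\rbg}{\dvrbg}(\rd_V \phibg)^2$ from $\underline V(U)$ to $\infty$, using $\rd_V \log(-\gmmbg) \leq C \rbg^{-3+1}(\rd_V \rbg)^{-1}(\rd_V \psibg - \cdots)^2 \leq C\rbg^{-2}\rd_V\rbg$ after the bounds above, giving a total variation $O(\underline\ups^{-1}(U))$; one then checks $\overline{\Gmm}\restriction_{\Sgm_0}(U) = \frac{\rd_U \rbg}{1-\mubg}\restriction_{\Sgm_0}(U) = -\frac12 + O(\underline\ups^{-1})$ from Lemma~\ref{lem:cauchy-to-char} and \eqref{eq:adm-id-af} (since $\rd_U \rbg\restriction_{\Sgm_0} = -\frac12 \rd_\rho \rbg + \frac{f\ell}{2\rbg} \to -\frac12$ and $1-\mubg \to 1$). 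All implicit constants can be collected into a single $\underline{B}_\ast$ depending only on the background solution restricted to $\underline{\calN}$ (which is compact towards $i^0$ in the relevant sense).

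The main obstacle, and the only part requiring genuine care rather than bookkeeping, is establishing the \emph{sharp} $\underline\ups^{-3}(U)$ decay in \eqref{eq:inst-bg:M} and $\underline\ups^{-1}(U)$ in \eqref{eq:inst-bg:Gmm}: the naive estimate $\rbg^2(\rd_V \phibg)^2 \leq C\rbg^{-2\cdot 2}\cdot$ (something) from \eqref{eq:bg-large-r:dvrphi} alone would be borderline, and one must carefully track the cancellation in $\rbg^2 (\rd_V \phibg)^2 = (\rd_V \psibg - \rbg^{-1}(\rd_V \rbg)\psibg)^2$ and integrate $\rbg^{-2\omg_0}\rd_V\rbg\,dV \sim \rbg^{-2\omg_0+1}$ with $\omg_0 \geq 3$ so that the exponent $-2\omg_0+1 \leq -5 < -3$, which comfortably beats the required rate; the subtlety is only that one must use the correct combination of \eqref{eq:bg-large-r:rphi} and \eqref{eq:bg-large-r:dvrphi} and not lose powers in the change of variables $V \mapsto \rbg$, where \eqref{eq:bg-large-r:dvr} guarantees $dV \sim d\rbg$ up to the solution-dependent constant. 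Once this is done, the remaining bounds are routine consequences of Proposition~\ref{prop:bg-large-r} and the asymptotic flatness conditions, and the lemma follows by collecting constants.
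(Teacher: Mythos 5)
The proposal is correct and follows essentially the same route as the paper: bounds \eqref{eq:inst-bg:dvr}--\eqref{eq:inst-bg:dvpsi} are read off from Proposition~\ref{prop:bg-large-r} (with $\omg_{0}$ effectively taken as $3$), and \eqref{eq:inst-bg:M}, \eqref{eq:inst-bg:Gmm} are obtained by integrating the equations for $\rd_{V}\varpibg$ and $\rd_{V}\log(-\gmmbg)$ along $C_{U}$ from $\Sgm_{0}$ to $\NI$ and then controlling the initial values on $\Sgm_{0}$ via the asymptotic flatness conditions and Lemma~\ref{lem:cauchy-to-char}. Your last paragraph is somewhat over-cautious: there is no borderline issue here, because both pieces in $\rbg\,\rd_{V}\phibg = \rd_{V}\psibg - \rbg^{-1}(\rd_{V}\rbg)\psibg$ are directly bounded by Proposition~\ref{prop:bg-large-r} and the resulting decay ($\underline\ups^{-5}$ for the $\varpibg$-integral, better still for the $\log(-\gmmbg)$-integral) comfortably exceeds what is needed; the paper simply records a lossy $\underline\ups^{-3}$ (resp.\ $\underline\ups^{-4}$) since the sharp rate is not required.
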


\begin{proof}
The bounds \eqref{eq:inst-bg:dvr}--\eqref{eq:inst-bg:dvpsi} follow from Proposition~\ref{prop:bg-large-r} (where $\omg_{0} = 3$). To prove \eqref{eq:inst-bg:M} and \eqref{eq:inst-bg:Gmm}, observe that
\begin{align*}
	\Abs{\overline{M} (U) - \varpibg (U, \underline{V}(U))} 
	\leq & \int_{\underline{V}(U)}^{\infty} (1-\mubg) \left( \frac{1}{\rd_{V} r} \rd_{V} \psibg - \frac{1}{\rbg} \psibg \right)^{2} \rd_{V} \rbg (U, V) \, \ud V 
	\leq C \underline{B}_{\ast}^{2} \underline{\ups}^{-3}(U), \\
	\Abs{\log \overline{\Gmm}(U) - \log \frac{\rd_{U} \rbg}{1-\mubg}(U, \underline{V}(U))} 
	\leq & \int_{\underline{V}(U)} \frac{1}{\rbg} \left( \frac{1}{\rd_{V} r} \rd_{V} \psibg - \frac{1}{\rbg} \psibg \right)^{2} \rd_{V} \rbg  (U, V) \, \ud V 
	\leq C \underline{B}_{\ast}^{2} \underline{\ups}^{-4}(U),
\end{align*}
where we used \eqref{eq:inst-bg:psi}, \eqref{eq:inst-bg:dvpsi} and the bound $C^{-1} \leq 1-\overline{\mu} \leq C$ from \eqref{eq:bg-large-r:pf-mu}. By Lemma~\ref{lem:cauchy-to-char}, \eqref{eq:adm-id-af}, \eqref{eq:adm-id-phi} and \eqref{eq:reduced-data:varpi}, we have
\begin{equation*}
	\Abs{\varpibg(\rho) - \varpibg_{i}} \leq C_{\overline{\Tht}} r^{-5}(\rho), \quad
	\Abs{\frac{\rd_{U} \rbg}{1-\mubg}(\rho) + \frac{1}{2}} \leq C_{\overline{\Tht}} r^{-1}(\rho)
\end{equation*}
on $\Sgm_{0} \cap \underline{\calN}$. Combined with the preceding two bounds,  \eqref{eq:inst-bg:M} and \eqref{eq:inst-bg:Gmm} follow. \qedhere
\end{proof}

In $\calR_{2}$, we claim the following bounds.
\begin{lemma} \label{lem:inst-s}
Let $R_{pert} \geq \underline{R}_{\ast}$. For $\underline{R}_{\ast}$ sufficiently large and $\eps$ small enough (both depending on $\underline{B}_{\ast}, \overline{\Tht}$), there exists $C = C(\underline{B}_{\ast}, \underline{R}_{\ast}, \overline{\Tht}) > 0$ such that the following bounds hold in $\calR_{2}$:
\begin{align} 
	\Abs{\varpi - \varpibg }(U, V) \leq & C \eps,  \label{eq:inst-s:mdf} \\
	\rbg^{-1} \Abs{r - \rbg }(U, V) \leq & C \eps,  \label{eq:inst-s:rdf} \\
	\Abs{\log \frac{(-\rd_{U} r)}{1-\mu} - \log \frac{(-\rd_{U} \rbg)}{1-\mubg}} (U, V) \leq & C \eps, \label{eq:inst-s:loggmmdf}\\	
	\Abs{\log \rd_{V} r - \log \rd_{V} \rbg} (U, V) \leq & C \rbg^{-1} (U, V) \eps, \label{eq:inst-s:logdvrdf} \\
	\Abs{\log \frac{\rd_{V} r}{1-\mu} - \log \frac{\rd_{V} \rbg}{1-\mubg}} (U, V) \leq & C \rbg^{-1} (U, V) \eps, \label{eq:inst-s:logkppdf} \\
	\Abs{\frac{1-\mu}{\rd_{V} r} \rd_{V} \psi - \frac{1-\mubg}{\rd_{V} \rbg} \rd_{V} \psibg} (U, V) \leq & C \rbg^{-3} (U, V) \eps, \label{eq:inst-s:dvpsidf} \\
	\Abs{\psi (U, V) - \psibg (U, V) - \eps \tilde{\chi}_{R_{pert}} (U)} \leq & C \underline{\ups}^{-2}(U) \eps , \label{eq:inst-s:psidf} 
\end{align}
where $\tilde{\chi}_{R_{pert}} (U) = \tilde{\chi}(- U + U_{pert})$.
\end{lemma}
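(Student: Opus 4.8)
\textbf{Proof proposal for Lemma~\ref{lem:inst-s}.}

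The plan is to recognize that $\calR_{2}$ is the domain of dependence of $\Sgm_{0, pert}^{+}$, and on this initial slice the perturbed data $\Tht_{\eps}$ and the background $\overline{\Tht}$ are given in terms of \emph{reduced data sets} that differ only in the $\rd_{U} \psi$ component, by the term $-\eps \chi(\rho - \rho_{pert})$. This is exactly the setting of the large-$r$ stability result in the Cauchy case, Proposition~\ref{prop:extr-st-cauchy}, once we take $R_{pert}$ large enough that $\calR_{2} \subseteq \underline{\calN}$ (so that Proposition~\ref{prop:bg-large-r} applies) and the area-radius on $\Sgm_{0} \cap \calR_{2}$ is bounded below by $\rbg_{0} := \rbg(U_{pert}, \underline{V}(U_{pert}))$, which is comparable to $R_{pert}$ and hence large. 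First I would verify the hypotheses \eqref{eq:extr-st-cauchy:hyp:r-m}--\eqref{eq:extr-st-cauchy:hyp:dvpsi} of Proposition~\ref{prop:extr-st-cauchy}: the monotonicity $\rd_U \rbg < 0$, $\rd_V \rbg > 0$ and all the quantitative bounds on $\rbg, \varpibg, \psibg, \dvrbg$ and $\durbg$ follow directly from Lemma~\ref{lem:inst-bg} (indeed they are essentially a restatement of it), with $\underline{B}$ replaced by $\underline{B}_{\ast}$; the smallness of $\varpibg_{0}/\rbg_{0}$ and $\abs{\ebg}/\rbg_{0}$ holds by choosing $\underline{R}_{\ast}$ large. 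Then I would compute the initial difference size $\eps_{2}$ on $\Sgm_{0} \cap \calR_{2}$: since $\Tht_{\eps}$ and $\overline{\Tht}$ are constructed from reduced data sets that agree except in $\rd_{U}\psi$, the formulas \eqref{eq:reduced-data:r}--\eqref{eq:reduced-data:h} show that $r, \rd_U r, \rd_V r, \e$ agree identically, $\varpi - \varpibg$ is obtained by integrating the linear ODE \eqref{eq:reduced-data:varpi} whose forcing differs by an $O(\eps)$ amount supported on a unit interval, so $\abs{\varpi - \varpibg}\restriction_{\Sgm_0} = O(\eps)$, and $\psi - \psibg = \eps\tilde\chi(\rho - \rho_{pert})$ is exactly given by \eqref{eq:inst:psi-eps}. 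The crucial observation — and the reason Proposition~\ref{prop:extr-st-cauchy} is stated with an \emph{integrated} norm of $\psi - \psibg$ rather than a pointwise weighted norm (cf.\ Remark~\ref{rem:extr-st-cauchy}) — is that the relevant integral $\rbg_{0}^{-2} \int_{\Sgm_{0} \cap \underline{\calR}} \abs{\psi - \psibg} (\rd_{\rho} \rbg) \, \ud \rho$ is $O(\rbg_0^{-1}\eps)$ \emph{uniformly in the choice of $\rho_{pert}$} (since $\abs{\psi - \psibg} \leq \eps$ and the weight $\rd_\rho\rbg$ integrates to a bounded amount over a region of size comparable to $\rbg_0$), whereas a pointwise weighted norm would pick up an unwanted factor growing like $\rbg_0^{\omg_0 - 1}$. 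Hence $\eps_{2} \leq C\eps$ with $C$ depending only on $\underline{B}_{\ast}, \underline{R}_{\ast}, \overline{\Tht}$.

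With the hypotheses verified, Proposition~\ref{prop:extr-st-cauchy} delivers, via \eqref{eq:extr-st-cauchy}, precisely the bounds \eqref{eq:inst-s:mdf}, \eqref{eq:inst-s:rdf}, and — after combining $\abs{\log(-\rd_U r) - \log(-\rd_U \rbg)}$ with $\abs{\log(1-\mu) - \log(1-\mubg)}$ (the latter controlled by $\rbg_0^{-1}\abs{\varpi - \varpibg} \leq C\eps$ using \eqref{r.varpi.poly}-type expansions and $\abs{\varpi-\varpibg} \leq C\eps$, $\abs{r - \rbg}\leq C\rbg\eps$) — the bounds \eqref{eq:inst-s:loggmmdf} and \eqref{eq:inst-s:logkppdf}; the $\rbg^{-1}$ gain in \eqref{eq:inst-s:logdvrdf} and \eqref{eq:inst-s:logkppdf} comes from the extra $r$-weights already present in the $\dvr$- and $\psi$-difference estimates in \eqref{eq:extr-st-cauchy} together with the structure of the $\rd_U \log(\rd_V r / (1-\mu))$ equation, integrated from $\uC_{in}$. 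Similarly \eqref{eq:inst-s:dvpsidf} is \eqref{eq:extr-st-cauchy} with $\omg_0 = 3$. The only bound not immediately of the form in \eqref{eq:extr-st-cauchy} is \eqref{eq:inst-s:psidf}: here the fifth line of \eqref{eq:extr-st-cauchy} gives $\rbg_0^{-1}\abs{\psi - \psibg}(U,V) \leq \rbg_0^{-1}\abs{\psi - \psibg}(U, \underline{V}(U)) + C(\underline{\ups}/\rbg_0)^{-\omg_0 + 1}(U)\eps_2$, and on $\Sgm_0$ we have $\abs{\psi - \psibg}(U, \underline{V}(U)) = \eps\tilde\chi(\rho(U) - \rho_{pert}) = \eps\tilde\chi_{R_{pert}}(U)$ exactly (using $\rd_\rho = \rd_V - \rd_U$ on $\Sgm_0$ and that the perturbation only affects $\rd_U\psi$); since $\omg_0 \geq 3$ the error term is $O(\underline{\ups}^{-2}\eps)$, giving \eqref{eq:inst-s:psidf}. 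I would also note that $\tilde\chi_{R_{pert}}(U)$ is constant ($=0$ or $\eps\int_0^1\chi$, but $\int\chi = 0$ so $=0$) outside a compact $U$-interval, consistent with the propagated estimate.

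The main obstacle — or rather, the point requiring care — is the bookkeeping of $r$-weights: tracking which of the six estimates in \eqref{eq:extr-st-cauchy} gain a factor $\rbg^{-1}$ versus $\rbg^{-3}$ versus no gain, and correctly combining the $\log(-\dur)$, $\log(1-\mu)$, and $\log\dvr$ differences into the $\log\gmm$ and $\log\kpp$ differences asked for in the statement, since these require using the algebraic relations $\gmm = \dur/(1-\mu)$, $\kpp = \dvr/(1-\mu)$ and the expansion of $1 - \mu = 1 - 2\varpi/r + \e^2/r^2$ in the differences. I do not expect any genuinely new analytic difficulty beyond Proposition~\ref{prop:extr-st-cauchy}; the lemma is essentially a specialization of that proposition to the particular perturbation, and the one slightly delicate input — that the integrated initial-data norm $\eps_2$ is $O(\eps)$ with constant independent of $R_{pert}$ — is exactly what the integrated formulation of $\eps_2$ was designed to accommodate.
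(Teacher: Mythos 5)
Your overall plan is right and matches the paper's: apply Proposition~\ref{prop:extr-st-cauchy} with $\underline{\calR}=\calR_{2}$, check that $\eps_{2}\leq C\eps$ with $C$ independent of $R_{pert}$ thanks to the integrated form of the $\psi-\psibg$ norm, and then upgrade the estimates where needed. However, two of the upgrade steps as you have written them contain genuine gaps. First, for \eqref{eq:inst-s:psidf}: the fifth line of \eqref{eq:extr-st-cauchy} together with the identity $(\psi-\psibg)\restriction_{\Sgm_{0}}=\eps\tilde\chi_{R_{pert}}$ only yields, by the triangle inequality, a bound of the form $\abs{\psi-\psibg}(U,V)\leq \eps\abs{\tilde\chi_{R_{pert}}(U)}+C\underline{\ups}^{-2}\eps$. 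That does \emph{not} control $\abs{(\psi-\psibg)(U,V)-\eps\tilde\chi_{R_{pert}}(U)}$, which is what \eqref{eq:inst-s:psidf} asks for. To get the refined estimate you must first show $\abs{\rd_{V}(\psi-\psibg)}\leq C\rbg^{-3}\eps$ and then integrate that in $V$ from $\Sgm_{0}$ (where $\psi-\psibg=\eps\tilde\chi_{R_{pert}}$ is known exactly). The pointwise bound on $\rd_V(\psi-\psibg)$ is itself obtained by reiterating the wave equation for $\psi-\psibg$ and integrating $\rd_U\rd_V(\psi-\psibg)$ in $U$ from $\Sgm_{0}$, crucially using that $\rd_V\psi=\rd_V\psibg$ there so the integration starts from zero.

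Second, for \eqref{eq:inst-s:logdvrdf} and \eqref{eq:inst-s:logkppdf}: the $\rbg^{-1}$ gain does not come from ``extra $r$-weights already present in the $\dvr$-difference estimate'' in \eqref{eq:extr-st-cauchy}---that estimate has no $r$-weight at all. Nor is the integration from $\uC_{in}$, which doesn't exist in this Cauchy setting. The correct mechanism is to take the difference of the equation $\rd_U\log\rd_V r=\frac{2(\varpi-\e^2/r)}{r^2}\frac{\rd_U r}{1-\mu}$, estimate the right-hand side difference by $C\rbg^{-2}(-\rd_U\rbg)\eps$ using the bounds already obtained from Proposition~\ref{prop:extr-st-cauchy}, and integrate in $U$ from $\Sgm_{0}$: since $\rd_V r=\rd_V\rbg$ on $\Sgm_{0,pert}^{+}$ the initial value vanishes, and the $U$-integral of $\rbg^{-2}(-\rd_U\rbg)$ gives the $\rbg^{-1}$ factor. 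So the reiteration step---which you correctly identify as necessary---depends essentially on the vanishing of several of the initial data differences on $\Sgm_{0}$ (namely of $r-\rbg$, $\rd_U r - \rd_U\rbg$, $\rd_V r-\rd_V\rbg$, $\rd_V\psi-\rd_V\psibg$), a structural feature of the perturbation that your sketch does not make use of.
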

This lemma is essentially a corollary of Proposition~\ref{prop:extr-st-cauchy}. In order to obtain the refined bounds stated above, we simply reiterate the wave equations for $r - \rbg$ and $\psi - \psibg$ using the bounds provided by Proposition~\ref{prop:extr-st-cauchy}. We emphasize that the constant $C$ in the lemma is \emph{independent} of $R_{pert}$; it is for this reason that we use the integrated norm of $\psi - \psibg$ in the definition of $\eps_{2}$ in Proposition~\ref{prop:extr-st-cauchy} (see Remark~\ref{rem:extr-st-cauchy}).

\begin{proof}
In this proof, we denote by $C$ a positive constant that depends on $\underline{B}_{\ast}$, $\underline{R}_{\ast}
$ and $\overline{\Tht}$, which may change from line to line. Again, $C$ is independent of $R_{pert}$.

To apply Proposition~\ref{prop:extr-st-cauchy}, we choose $\underline{\calR} = \calR_{2}$, $\omg = 3$, $\varpibg_{0} = \varpibg_{i}$, $\rbg_{0} = \underline{R}_{\ast}$ and $\underline{B} = \underline{B}_{\ast}$; taking $\underline{R}_{\ast}$ large enough, note that the assumptions \eqref{eq:extr-st-cauchy:hyp:r-m}--\eqref{eq:extr-st-cauchy:hyp:dvpsi} are satisfied. By construction, observe that
\begin{align*}
	\eps_{2} = & \sup_{\Sgm_{0, pert}^{+}} \left( \rbg_{0}^{-1} \abs{\varpi - \varpibg} + \abs{\log (1-\mu) - \log (1-\mubg)} \right) \\
	& + \sup_{\Sgm_{0, pert}^{+}} \rbg_{0}^{-1} \abs{\psi - \psibg}  + \rbg_{0}^{-2} \int_{\Sgm_{0, pert}^{+}} \abs{\psi - \psibg} \rd_{\rho} \rbg \, \ud \rho 
	\leq C_{\overline{\Tht}} \eps
\end{align*}
for some $C_{\overline{\Tht}} > 0$ independent of $R_{pert}$. Then the bounds \eqref{eq:inst-s:mdf} and \eqref{eq:inst-s:rdf} follow directly from Proposition~\ref{prop:extr-st-cauchy}. Moreover, making $\sup_{\calR_{2}} \abs{\mubg}$ sufficiently small by choosing $\underline{R}_{\ast}$ large enough, we also obtain
\begin{equation} \label{eq:inst-s:mu}
	\frac{1}{2} \leq 1 - \mubg \leq 2, \quad \frac{1}{2} \leq 1 - \mu \leq 2, \quad \Abs{\log(1 - \mu) - \log(1 - \mubg)} \leq C \rbg^{-1} \eps,
\end{equation}
for sufficiently small $\eps$. Combined with the bound for $\log (-\rd_{U} r) - \log (-\rd_{U} \rbg)$ in Proposition~\ref{prop:extr-st-cauchy}, we obtain \eqref{eq:inst-s:loggmmdf}. Moreover, by \eqref{eq:inst-s:mdf}, \eqref{eq:inst-s:rdf}, \eqref{eq:inst-s:loggmmdf} and \eqref{eq:inst-s:mu}, we have
\begin{align*}
	\Abs{\rd_{U} (\log \rd_{V} r - \log \rd_{V} \rbg)} (U,V)
	\leq & \Abs{\frac{2 (\varpi - \frac{\e^{2}}{r})}{r^{2}} \frac{\rd_{U} r}{1-\mu} - \frac{2 (\varpibg - \frac{\ebg^{2}}{\rbg})}{\rbg^{2}} \frac{\rd_{U} \rbg}{1-\mubg}} (U,V) \\
	\leq & C \rbg^{-2} (- \rd_{U} \rbg) (U, V) \eps.
\end{align*}
Since $\rd_{V} r = \rd_{V} \rbg$ on $\Sgm_{0, pert}^{+}$, \eqref{eq:inst-s:logdvrdf} follows after integrating in $U$ from $(\underline{U}(V), V)$ to $(U, V)$. Furthermore, in combination with \eqref{eq:inst-s:mu}, we obtain \eqref{eq:inst-s:logkppdf} as well. Next, by \eqref{eq:inst-bg:dur}, \eqref{eq:inst-bg:duups}, \eqref{eq:inst-bg:psi}, \eqref{eq:inst-s:mdf}, \eqref{eq:inst-s:rdf}, \eqref{eq:inst-s:loggmmdf}, \eqref{eq:inst-s:logdvrdf}, \eqref{eq:inst-s:mu} and the bound for $\abs{\psi - \psibg}$ from Proposition~\ref{prop:extr-st-cauchy}, we have
\begin{align}
	\Abs{\rd_{U} \rd_{V} (\psi - \psibg)(U, V)} 
	\leq & \Abs{\frac{2 (\varpi - \frac{\e^{2}}{r})}{r^{3}} \frac{\rd_{V} r \rd_{U} r}{1-\mu} \psi - \frac{2 (\varpibg - \frac{\ebg^{2}}{r})}{\rbg^{3}} \frac{\rd_{V} \rbg \rd_{U} \rbg}{1-\mubg} \psibg}(U,V) \notag\\
	\leq & C \rbg^{-3} (U, V) \left( \abs{\psi - \psibg}(U, \underline{V}(U)) + \underline{\ups}^{-2} (- \rd_{U} \underline{\ups}) (U) \eps \right) .\label{eq:inst-s:dudvpsidf}
\end{align}
Since $\rd_{V} \psi = \rd_{V} \psibg$ on $\Sgm_{0, pert}^{+}$, integration in $U$ from $(\underline{U}(V), V)$ to $(U, V)$ yields
\begin{equation}\label{eq:inst-s:dvpsidf.pl}
	\Abs{\rd_{V} (\psi - \psibg)(U, V)} \leq C \rbg^{-3} (U, V) \eps.
\end{equation}
Combined with \eqref{eq:inst-bg:dvpsi} and \eqref{eq:inst-s:logkppdf}, the preceding bound implies \eqref{eq:inst-s:dvpsidf}. Moreover, 
using \eqref{eq:inst-bg:dvr} and integrating \eqref{eq:inst-s:dvpsidf.pl} in $V$ from $(U, \underline{V}(U))$ to $(U, V)$, then noting that \eqref{eq:inst:psi-eps} holds on $\Sgm_{0}$, we obtain \eqref{eq:inst-s:psidf}. 

\qedhere

\end{proof}

We are now ready to complete the proof of Proposition~\ref{prop:inst-key}.
\begin{proof}[Proof of Proposition~\ref{prop:inst-key}]
To ease the notation, we drop the subscripts $\eps$ and $(\eps)$; hence we write $\Tht = \Tht_{\eps}$, $\mathfrak{L} = \mathfrak{L}_{(\eps)}$ etc.

We divide the maximal globally hyperbolic future development of $\overline{\Tht}$ into $\calR_{1}, \ldots, \calR_{4}$ as in Figure~\ref{fig:L-inst-key}. 
The parameter $R_{pert}$ will be chosen at the end of the proof, so that the error in establishing the lower bound \eqref{eq:inst-key} is small, and also Propositions~\ref{prop:extr-st-cauchy} and \ref{prop:extr-st} are applicable in $\calR_{2}$ and $\calR_{3}$, respectively.

In the $(U, V)$ coordinate system, we introduce the notation (cf. the proof of Theorem~\ref{thm:L-stability} in Section~\ref{subsec:main-st})
\begin{equation*}
\mathfrak{L}(U_{1}, U_{2}) = \int_{U_{1}}^{U_{2}} 2 M(U) \Phi(U) \Gmm(U) \, \ud U, \quad
\overline{\mathfrak{L}}(U_{1}, U_{2}) = \int_{U_{1}}^{U_{2}} 2 \overline{M}(U) \Phibg(U) \overline{\Gmm}(U) \, \ud U.
\end{equation*}
In order to estimate the LHS of \eqref{eq:inst-key}, 
we analyze the contribution to $\mathfrak{L} - \overline{\mathfrak{L}}$ from the regions $\calR_{2}$, $\calR_{3}$ and $\calR_{4}$ separately.

\pfstep{Step~1: Region $\calR_{2}$}
This is where we obtain the main lower bound. We begin by writing
\begin{align*}
	(\mathfrak{L} - \overline{\mathfrak{L}})(-\infty, U_{pert})
	= & \int_{-\infty}^{U_{pert}} 2 \overline{M} (\Phi - \Phibg) \overline{\Gmm}(U)\, \ud U \\
	& + \int_{-\infty}^{U_{pert}}  2 ( (M - \overline{M}) \Gmm + \overline{M} (\Gmm - \overline{\Gmm}) ) \Phibg (U)\, \ud U \\
	& + \int_{-\infty}^{U_{pert}} 2 ( (M - \overline{M}) \Gmm + \overline{M} (\Gmm - \overline{\Gmm}) ) (\Phi - \Phibg) (U)\, \ud U .
\end{align*}
Clearly, we have the lower bound
\begin{equation} \label{eq:inst-key:R2-main} 
\Abs{\int_{-\infty}^{U_{pert}} \varpibg_{i} \eps \tilde{\chi}_{R_{pert}} (U)\, \ud U} \geq 2 c_{0} \varpibg_{i} \eps,
\end{equation}
where $c_{0}$ is as in \eqref{eq:pert-chi}. We claim that, for a sufficiently large $R_{pert}$, the following bounds for the error hold:
\begin{align}
\Abs{\int_{-\infty}^{U_{pert}} \left( 2 \overline{M} (\Phi - \Phibg) \overline{\Gmm} + \varpibg_{i} \eps \tilde{\chi}_{R_{pert}} \right) (U) \, \ud U} \leq & C_{1} R_{pert}^{-1} \eps, \label{eq:inst-key:R2-0} \\
\Abs{\int_{-\infty}^{U_{pert}}   ( (M - \overline{M}) \Gmm + \overline{M} (\Gmm - \overline{\Gmm})) \Phibg (U)\, \ud U} \leq & C_{1} R_{pert}^{-1} \eps, \label{eq:inst-key:R2-1} \\
\Abs{\int_{-\infty}^{U_{pert}}   ( (M - \overline{M}) \Gmm + \overline{M} (\Gmm - \overline{\Gmm})) (\Phi - \Phibg) (U)\, \ud U} \leq & C_{1} \eps^{2}. \label{eq:inst-key:R2-2}
\end{align}
for some $C_{1} = C_{1}(\underline{B}_{\ast},\underline{R}_*, \overline{\Tht})$. 

We begin with \eqref{eq:inst-key:R2-0}. We take $R_{pert} \geq \underline{R}_{\ast}$ so that Lemma~\ref{lem:inst-s} is applicable. Moreover, by \eqref{eq:inst-bg:M} and \eqref{eq:inst-bg:Gmm}, for $R_{pert}$ sufficiently large depending on $\underline{B}_{\ast}$ and $\varpibg_{i}$, we have 
\begin{equation} \label{eq:inst-key:M-Gmm-bg}
	\overline{M}(U) \leq 2 \varpibg_{i}, \quad
	\abs{\overline{\Gmm}(U)} \leq 1,
\end{equation}
for $(U, V) \in \calR_{2}$. In combination with \eqref{eq:inst-bg:M}, \eqref{eq:inst-bg:Gmm} and \eqref{eq:inst-s:psidf}, we obtain
\begin{align*}
\Abs{\overline{M} (\Phi - \Phibg) \overline{\Gmm} - \varpibg_{i} (\eps \tilde{\chi}_{R_{pert}}) \left( -\frac{1}{2} \right)}
\leq & \abs{\overline{M} \overline{\Gmm}} \Abs{\Phi - \Phibg - \eps \tilde{\chi}_{R_{pert}}} + \eps \overline{M} \Abs{\overline{\Gmm} + \frac{1}{2}} |\tilde{\chi}_{R_{pert}|}
+ \frac{1}{2} \eps \abs{\overline{M} - \overline{\varpi_{i}}} =|\tilde{\chi}_{R_{pert}}| \\
%\leq & C_{\underline{B}_{\ast}, \overline{\Tht}} \varpibg_{i} \underline{\ups}^{-2} \eps 
%+ C_{\underline{B}_{\ast}, \overline{\Tht}} \varpibg_{i} \underline{\ups}^{-1} \tilde{\chi}_{R_{pert}} \eps 
%+ C_{\underline{B}_{\ast}, \overline{\Tht}} \, \underline{\ups}^{-5} \tilde{\chi}_{R_{pert}} \eps 
\leq & C_{\underline{B}_{\ast},\underline{R}_*, \overline{\Tht}} \underline{\ups}^{-2} (-\rd_{U} \underline{\ups}) \eps
+ C_{\underline{B}_{\ast},\underline{R}_*, \overline{\Tht}} \underline{\ups}^{-1} |\tilde{\chi}_{R_{pert}}| \eps.
\end{align*}
Integrating in $U$ from $-\infty$ to $U_{pert}$ and using \eqref{eq:pert-chi}, we arrive at \eqref{eq:inst-key:R2-0}.

Next, we turn to \eqref{eq:inst-key:R2-1}. Taking $\eps$ sufficiently small (depending on $\underline{B}_{\ast}, \overline{\Tht}$), \eqref{eq:inst-s:loggmmdf} and \eqref{eq:inst-key:M-Gmm-bg} imply
\begin{equation} \label{eq:inst-key:Gmm}
	\abs{\Gmm(U)} \leq 2.
\end{equation}
Therefore, in combination with \eqref{eq:inst-bg:dur}, \eqref{eq:inst-bg:duups}, \eqref{eq:inst-bg:psi}, \eqref{eq:inst-s:mdf} and \eqref{eq:inst-key:M-Gmm-bg}, we obtain
\begin{equation*}
	\abs{(M - \overline{M}) \Gmm \overline{\Phi}} + \abs{\overline{M} (\Gmm - \overline{\Gmm}) \overline{\Phi}}
	\leq C_{\underline{B}_{\ast},\underline{R}_*, \overline{\Tht}} \underline{\ups}^{-2}(U) (- \rd_{U} \underline{\ups}) \eps.
\end{equation*}
Integrating in $U$ from $-\infty$ to $U_{pert}$, \eqref{eq:inst-key:R2-1} follows.

Finally, we prove \eqref{eq:inst-key:R2-2}. By \eqref{eq:inst-bg:dur}, \eqref{eq:inst-bg:duups}, \eqref{eq:inst-s:mdf}, \eqref{eq:inst-s:loggmmdf}, \eqref{eq:inst-s:psidf}, \eqref{eq:inst-key:M-Gmm-bg} and \eqref{eq:inst-key:Gmm}, we have
\begin{align*}
	\left(\abs{(M - \overline{M}) \Gmm} + \abs{\overline{M} (\Gmm - \overline{\Gmm})}\right) \abs{\Phi - \overline{\Phi}}
	\leq & C_{\underline{B}_{\ast},\underline{R}_*, \overline{\Tht}} \underline{\ups}^{-2} (-\rd_{U} \underline{\ups}) \eps^{2} + C_{\underline{B}_{\ast},\underline{R}_*, \overline{\Tht}} \abs{\tilde{\chi}_{R_{pert}}} \eps^{2}.
\end{align*}
Integrating in $U$ from $-\infty$ to $U_{pert}$, we obtain \eqref{eq:inst-key:R2-2}.

\pfstep{Step~2: Region $\calR_{3}$}
We fix $U_{\ast}$ so that $C_{U_{\ast}} \subseteq \underline{\calN}$. 
By Theorem~\ref{thm:DR-full}, Corollary~\ref{cor:DR-large-r}, Remark~\ref{rem:DR-large-r:dUr} and Lemma~\ref{lem:inst-bg}, there exists $B_{\ast} \geq 1$ (depending on $B_{\eta_{\NI}}, B_{cpt}, \underline{B}_{\ast}$) so that:
\begin{itemize}
\item \eqref{eq:extr-st:hyp:dur}--\eqref{eq:extr-st:hyp:duups} hold on any $\uC_{V} \cap \set{U \leq U_{i^{+}}} \subseteq \calN$ with $B = B_{\ast}$,
\item \eqref{eq:extr-st:hyp:psi}--\eqref{eq:extr-st:hyp:dvpsi} hold on $\calN$ with $B = B_{\ast}$.
\end{itemize}
Fix $\rbg_{0}$ sufficiently large such that \eqref{eq:extr-st:hyp:r-m}--\eqref{eq:extr-st:hyp:m-e} hold with $\rbg_{0}$, $\varpibg_{0} = \varpibg_{i}$ and a small enough $\dlt_{\calR}$. By choosing $R_{pert}$ sufficiently large, we can guarantee that $\rbg(U_{i^{+}}, V_{pert})\geq \rbg_{0}$. Therefore, Proposition~\ref{prop:extr-st} is applicable in $\calR_{3}$. \textbf{From now on, we will allow our constants to depend on the fixed $\rbg_{0}$.}

Our goal is now to apply Proposition~\ref{prop:extr-st} to show that the contribution to $\Abs{\mathfrak{L} - \overline{\mathfrak{L}}}$from this region is (much) smaller than $c_0\eps$. The source of smallness will come from choosing $R_{pert}$ large, together with exploiting the freedom of choosing $\omg$ in Proposition~\ref{prop:extr-st}. For this purpose, we fix some $\omg\in (2,3)$ and estimate $\eps_{3}$ (as in the statement of Proposition~\ref{prop:extr-st}). Note that the difference of the data on the initial incoming curve $\uC_{in}$ is trivial (since the two solutions coincide on $\calR_{1}$ by the finite speed of propagation), and on the initial outgoing curve $C_{out}$ it is given by Lemma~\ref{lem:inst-s}. Hence
$$\eps_{3} \leq C_{\underline{B}_{\ast}, \underline{R}_{\ast}, \overline{\Tht}, \rbg_{0}, \omg} \eps\left(R_{pert}^{-1} + R_{pert}^{\omg - 3}\right)\leq C_{\underline{B}_{\ast}, \underline{R}_{\ast}, \overline{\Tht}, \rbg_{0}, \omg} R_{pert}^{\omg - 3} \eps.$$
Here, we have the smallness factors $R_{pert}^{-1}$ and $R_{pert}^{\omg - 3}$ since in \eqref{eq:inst-s:logkppdf}, \eqref{eq:inst-s:dvpsidf} (which are the only terms we need to bound on $C_{out}$), there are extra factors of $\inf_{C_{out}}\rbg^{-1}$ or $\inf_{C_{out}}\rbg^{-3+\omg}\rbg_0^{-\omg}$ compared to what is needed in the definition of $\eps_{3}$.

By Proposition~\ref{prop:extr-st}, we obtain 
\begin{equation} \label{eq:inst-key:R3}
\Abs{(\mathfrak{L} - \overline{\mathfrak{L}})(U_{pert}, U_{i^+})}
 \leq C_{2}' R_{pert}^{\omg-3} \eps,
\end{equation}
where $C_{2}' = C_{2}' (\underline{B}_{\ast}, \underline{R}_{\ast}, \overline{\Tht}, \rbg_{0}, \omg, U_{i^{+}})$. Moreover, in $\calR_{3}$ we have
\begin{align} 
	\Abs{\left(\frac{\rd_{V} r}{1-\mu}\right)^{-1}\frac{\rd_{V} \rbg}{1-\mubg}-1} (U, V) \leq & C_{2}'' R_{pert}^{\omg-3}  \eps, \label{eq:inst-key:R3-1} \\
	\Abs{\frac{1-\mu}{\rd_{V} r} \rd_{V} \psi - \frac{1-\mubg}{\rd_{V} \rbg} \rd_{V} \psibg} (U, V) \leq & C_{2}'' R_{pert}^{\omg-3} (\rbg / \rbg_{\EH})^{-\omg} (U, V) \eps,	\label{eq:inst-key:R3-2}
\end{align}
for some $C_{2}'' = C_{2}'' (\underline{B}_{\ast}, \underline{R}_{\ast}, \overline{\Tht}, \rbg_{0}, \rbg_{\EH}, \omg, U_{i^{+}})$.

\pfstep{Step~3: Region $\calR_{4}$}
By definition, Theorem~\ref{thm:L-st-ch} is applicable in $\calR_{4}$. As in the previous step, the difference of the data on the initial incoming curve (which belongs to $\calR_{1}$) is trivial, and on the initial outgoing curve $C_{out}$
its size is controlled by \eqref{eq:inst-key:R3-1}--\eqref{eq:inst-key:R3-2}. Thus we may apply Theorem~\ref{thm:L-st-ch} with $\eps_{0}=C_{2}'' R_{pert}^{\omg-3}  \eps$. By Theorem~\ref{thm:L-st-ch} we obtain\footnote{Recall from Section~\ref{sec:L-stability} that $\mathfrak{L} \restriction_{\calD}$ and $\overline{\mathfrak{L}}\restriction_{\calD}$ exactly denote the contributions to $\mathfrak{L}$ and $\overline{\mathfrak{L}}$ from $\calR_{4}$. Unlike the previous regions, regions $\calR_{4}$ in the solutions arising from $\Tht$ and $\overline{\Tht}$ do not have the same $U$-range.}
\begin{equation} \label{eq:inst-key:R4}
\Abs{\mathfrak{L} \restriction_{\calD}- \overline{\mathfrak{L}}\restriction_{\calD}}
\leq C_{3} R_{pert}^{\omg-3} \eps.
\end{equation}
where $C_{3} = C_{3}(\eta_{0}, \varpibg_{f}^{-1} \abs{\ebg}, \omg, \Lmb, \rbg_{\EH}, C_{2}'')$. 

\pfstep{Step~4: Putting everything together} Choosing $R_{pert}$ large enough (depending on $c_0 \varpibg_{i}$, $C_{1}$, $C_{2}'$, $C_{3}$) and $\eps$ sufficiently small (depending on $c_0 \varpibg_{i}$, $C_{1}$), we obtain \eqref{eq:inst-key} from \eqref{eq:inst-key:R2-main}, \eqref{eq:inst-key:R2-0}, \eqref{eq:inst-key:R2-1}, \eqref{eq:inst-key:R2-2}, \eqref{eq:inst-key:R3} and \eqref{eq:inst-key:R4}. \qedhere
\end{proof}

\clearpage
\appendix
\section{List of symbols} \label{sec:symbols}
For the convenience of the reader, we provide a list of some frequently used symbols.
\vskip.5em
\begin{center}
\small
\begin{tabular}{c  c p{25em}}
{\bfseries Symbol} & {\bfseries Ref.} & {\bfseries Description}  \\
$\Omg, r, \phi, \e$ & \S \ref{sec.SS} & Metric components and matter fields for spherically symmetric solutions to Einstein--Maxwell--(real)--scalar--field system \\
$C_u$, $\uC_{v}$ & \S \ref{sec.SS} & Constant $u$- and $v$-hypersurface \\
$m, \mu, \varpi$ & \S \ref{sec.SS} & $m$ is the Hawking mass; see \eqref{eq:mass.def}. $\mu = \frac{m}{r}$, $\varpi = m - \frac{\e^{2}}{r}$. \\
$r, f, h, \ell, \phi, \dot{\phi}, \e$ & \S \ref{sec:Cauchy} & Initial data set \\
$\omg, \omg_0$ & \S \ref{sec.review.basic} & Parameter for initial decay of scalar field \\
$\Sgm_{0}, \EH, \NI$ & \S \ref{sec.review.basic} & Initial curve, event horizon, null infinity \\
$\varpi_i$, $\varpi_{\EH}$, $\varpi_{f}$ & \S \ref{sec:MGHFD.prelim} and \ref{sec:bg} & Various limits of $\varpi$, see Definition~\ref{def.ADM.psi}, \eqref{r.varpi.limits} and Corollary~\ref{varpi.same.limit}\\
$\Phi(U), M(U), \Gmm(U)$ & \S \ref{sec:L.Li} & Limiting quantities at future null infinity in definition of $\mathfrak L$ \\
$\mathfrak{L}, \mathfrak{L}_{(\omg_{0})0}, \mathfrak{L}_{(\omg_{0})\infty}$ & \S \ref{sec:L.Li} & Quantities capturing lower bound of decay near asymptotically flat end, see Definition~\ref{def.L} \\
$(U, V)$ & \S \ref{subsec:bg-coords} & Initial-data-normalized double null coordinates \\
$(u, v)$ & \S \ref{subsec:bg-coords} & Future-normalized double null coordinates, also used as notation for a general double null coordinate system \\
$\underline{\ups}(U)$ & \S \ref{subsec:bg-coords} & $\underline{\ups}(U)$ is the value of $r$ at $\Sgm_{0} \cap C_{U}$. \\
$\psi$ & \S \ref{subsec:bg-large-r} & $\psi = r \phi$ \\
$\eta_{i^{0}}$, $\underline{\calN}$ & \S \ref{subsec:bg-large-r} & $\underline{\calN}$ is a region near spatial infinity in Proposition~\ref{prop:bg-large-r} defined by the parameter $\eta_{i^{0}}$\\
$\calX_{0}, u_{\calX_{0}}, V_{\calX_{0}}$ & \S \ref{subsec:DR-full} & $\calX_{0}$ is a region in a neighborhood of timelike infinity for the Dafermos--Rodnianski Price's law theorem (Theorem~\ref{thm:DR-full}). $(u_{\calX_{0}}, V_{\calX_{0}}) \in \calX_{0}$ is a point on the past boundary of $\calX_{0}$, usually normalized to $(1, 1)$. \\
$\eta_{\NI}$, $\calN$ & \S \ref{subsec:DR-full} & $\calN$ is a region near null infinity in Corollary~\ref{cor:DR-large-r} defined by the parameter $\eta_{\NI}$\\
$\dur, \dvr, \gmm, \kpp$ & \S \ref{subsec:bg-uv} & $\dur = \rd_{u} r$, $\dvr = \rd_{v} r$, $\gmm = \frac{\rd_{v} r}{1-\mu}$ and $\kpp = \frac{\rd_{v} r}{1-\mu}$ in the future-normalized coordinates \\
$\calX_{1}$ & \S \ref{subsec:bg-uv} & $\calX_{1}\subseteq \calX_{0}$ region in Proposition~\ref{prop:bg-uv} \\
$b_{1}, b_{1}', B_{1}, C_{30}$ & \S \ref{subsec:bg-uv} & Constants in Proposition~\ref{prop:bg-uv} \\
$A_{contra}$ & \S \ref{sec:blowup:ideas} & Constant for the contradiction argument in Section~\ref{sec:blowup}\\
$B_{contra}$, $B_{contra}'$, $C_{contra}$ & \S \ref{subsec:blowup-pf} and \ref{sec.contra.unif} & Large constants in contradiction argument \\
$\de_r, \Lambda_{\de_r}$ & \S \ref{subsec:blowup-pf} & Constants defining the red-shift region for contradiction argument\\
$R_{0, \NI}$, $\ub_{R_{\NI, 0}}, \Vb_{R_{\NI, 0}}$ & \S \ref{sec.contra.unif} & $R_{0, \NI}$ is a large parameter in Lemma~\ref{lem.uv-est}. $(\ub_{R_{\NI, 0}}, \Vb_{R_{\NI, 0}})$ is the point in $\Sgm_{0} \cap \set{r = R_{\NI, 0}}$. \\
$R_{\NI}$, $V_{\NI}$, $u_{\NI, 0}$, $u_{\NI}$ & \S \ref{sec.contra.unif} and \ref{subsec:blowup-const-r} & Various largeness parameters defining regions near null infinity for the contradiction argument\\
$\mathfrak L_{approx}$, $\de_{\mathfrak L}$ & \S \ref{subsec:blowup-const-r} & $\mathfrak L_{approx}$ is an approximation of $\mathfrak L_{(\omg_0)\infty}$, with $\de_{\mathfrak L}$ quantifying the difference\\
$\eps_{1}, \eps_{2}, \eps_{3}$ & \S \ref{subsec:cauchy-st}, \ref{subsec:extr-st} & Initial data difference for Cauchy and large-$r$ stability in Propositions~\ref{prop:cauchy-st}, \ref{prop:extr-st-cauchy}, \ref{prop:extr-st} \\
$\underline{\calR}, \calR$ & \S \ref{subsec:extr-st} & Regions for large-$r$ stability in Propositions~\ref{prop:extr-st-cauchy}, \ref{prop:extr-st} \\
$\underline{B}, B$ & \S \ref{subsec:extr-st} & Size of the background solution in Propositions~\ref{prop:extr-st-cauchy}, \ref{prop:extr-st} \\
$\dlt_{\underline{\calR}}, \dlt_{\calR}$ & \S \ref{subsec:extr-st} & Upper bounds for $\f{\varpibg_0}{\rbg_0}$, $\f{\ebg_0}{\rbg_0}$ in in Propositions~\ref{prop:extr-st-cauchy}, \ref{prop:extr-st}\\
$A_{0}, A_{1}$ & \S \ref{subsec:extr-st} & Large bootstrap constants, chosen in the order $A_{0} \to A_{1}$ \\
$\widetilde{\calD}, \calD$ & \S \ref{subsec:main-st} & $\widetilde{\calD}$ is the domain of an $(\omg, \dlt, \Lmb)$-admissible solution, and $\calD \subseteq \widetilde{\calD}$ is the black hole exterior \\
$\dlt, \dlt_{0}$ & \S \ref{subsec:main-st} & $\dlt_{0}\geq \Lmb^{100\eta_0}\dlt$ is a smallness parameter for Theorem~\ref{thm:L-st-ch} \\
$\eps_{0}, \eps, \eps_{(t_B)}$ & \S \ref{subsec:main-st}, \ref{subsec:weak-st}, \ref{subsec:btstrp} & Initial data differences under difference coordinate normalizations, see Theorem~\ref{thm:L-st-ch}, \eqref{eq:eps-def}, \eqref{eq:eps-tB} 
\end{tabular}
\end{center}

\vskip.5em
\begin{center}
\small
\begin{tabular}{c  c p{25em}}
{\bfseries Symbol} & {\bfseries Ref.} & {\bfseries Description}  \\
$\calR_{1}, \ldots, \calR_{4}$ & \S \ref{subsec:main-st}, \ref{subsec:inst-key} & Regions in the whole exterior region in the proofs of Theorems~\ref{thm:L-stability} and \ref{thm:instability} \\
$\Lmb, R_{0}$ & \S \ref{subsec:main-st}, \ref{subsec:weak-st} & $\Lmb=\rbg(1,1)$, $R_{0}=r(1,1)$ \\
$\Gmm_{\tau}$, $\Gmm_{\tau}^{(in)}$, $\Gmm_{\tau}^{(out)}$, $\calD(\tau_{1}, \tau_{2})$ & \S \ref{subsec:weak-st} & Foliations and region of spacetime in Theorem~\ref{thm:weak-st} \\
$\calD_{(t_{B})}$ & \S \ref{subsec:btstrp} & Bootstrap domain for the proof of the weak stability theorem (Theorem~\ref{thm:weak-st}) \\
$(u_{(t_{B})}, v_{(t_{B})})$ & \S \ref{subsec:btstrp} & Future-normalized double null coordinates during the bootstrap argument in \S \ref{subsec:btstrp}--\S {subsec:weak-st-pf}. We often abbreviate $(u, v) = (u_{(t_{B})}, v_{(t_{B})})$ \\
$\dlt_{(t_{B})}$ & \S \ref{subsec:btstrp} & Constant associated to $\calD_{(t_B)}$ in \eqref{eq:dlt-tB} \\
$A$ & \S \ref{subsec:btstrp} & A large bootstrap constant in the bootstrap argument in \S \ref{subsec:btstrp}--\S \ref{subsec:weak-st-pf} \\
$c_{(\durbg)}, c_{(\durbg)}', \overline{C}_{\gmm_{30}}$ & \S \ref{subsec:bg-geom} & Constants in Proposition~\ref{prop:bg-geom} \\
$c_{(\dur)}, C_{\gmm_{20}}$ & \S \ref{subsec:geom} & Constants in Lemma~\ref{lem:df-large-r} and Lemma~\ref{lem:df-small-r} \\
$\Gmm_{(t_{B}) \tau}$, $\Gmm_{(t_{B}) \tau}^{(in)}$, $\Gmm_{(t_{B}) \tau}^{(out)}$, $\calD_{(t_{B})} (\tau_{1}, \tau_{2})$ & \S \ref{subsec:en-pf} & Foliation in the bootstrap argument in \S \ref{subsec:en-pf} for the energy estimates. We often omit the subscript $(t_{B})$.\\
$R_1$, $R_2$ & \S \ref{subsec:en-pf}, \ref{subsec:rp-weight} & Large $r$ parameters for cutoff functions in energy estimates in Lemma~\ref{lem:en-ctrl} and \ref{lem:en-rp} respectively \\
$G$ & \S \ref{subsec:rp-weight} & Inhomogeneous term in the wave equation for $\psibg$\\
$R_3$ & \S \ref{subsec:int-char} & Large $r$ parameter for proof of decay of difference of radiation field in Proposition~\ref{prop:psidf-decay}\\
$\Sgm_{0}^{+}, \rho_{\ast}^{+}$ & \S \ref{subsec:const-eq} & $\Sgm_{0}^+\subset \Sgm_{0}$ on which we solve the constraint equations, $\rho=\rho_{\ast}^{+}$ denotes the end-point of $\Sgm_{0}^+$ \\
$r_{\ast}, \psi_{\ast}, \varpi_{\ast}, \e, \underline{\dvr}, \underline{\dur}, \underline{\alp}, \underline{\bt}$ & \S \ref{subsec:const-eq} & Reduced initial data set \\
$\chi, \tilde{\chi}, c_{0}$ & \S \ref{subsec:inst-key} & Functions and parameter used to define perturbation \\ 
$R_{pert}, \rho_{pert}, U_{pert}$ & \S \ref{subsec:inst-key} & Parameters for the support of the perturbation, which has $\rho$ value $\approx \rho_{pert}$, $r$ value $\approx R_{pert}$ and $U$ value $\approx U_{pert}$\\
$\tilde{\chi}_{R_{pert}}$ & \S \ref{subsec:inst-key} & Translated version of $\tilde{\chi}$
\end{tabular}
\end{center}

\bibliographystyle{hplain}
\bibliography{SCCExterior}
\end{document}